\documentclass[conference]{IEEEtran}
\pdfoutput=1
\usepackage{macros}

\usepackage[leftmargin=1em,rightmargin=1ex,vskip=1ex]{quoting}
\setlength{\belowcaptionskip}{-10pt}

\usepackage{lscape}
\usepackage{framed}
\usepackage{subcaption}
\usepackage{scalerel}
\usepackage{mathrsfs}
\usepackage{newtxmath}
\usepackage{newtxtext}
\usepackage{hyphenat}
\usepackage{cuted}
\usepackage[pass]{geometry}
\usepackage{xparse}
\newcommand{\biobj}[2]{{\protect\substack{ #1 \\ #2 }}}

\newcommand{\nhphantom}[1]{\sbox0{#1}\hspace{-\the\wd0}}

\usepackage{unicode-mario}

\allowdisplaybreaks
\setlength{\intextsep}{15pt}

\title{The Produoidal Algebra of Process Decomposition}
\author{Anonymous author(s)}
\author{Matt Earnshaw, James Hefford and Mario Rom\'an}

\makeatletter
\def\@copyrightspace{\relax}
\makeatother

\begin{document}

\maketitle

\begin{abstract}
  We introduce the normal produoidal category of monoidal contexts over an arbitrary monoidal category.
  In the same sense that a monoidal morphism represents a process, a monoidal context represents an incomplete process: a piece of a decomposition, possibly containing missing parts.
  We characterize monoidal contexts in terms of universal properties.
  In particular, symmetric monoidal contexts coincide with monoidal lenses, endowing them with a novel universal property.
  We apply this algebraic structure to the analysis of multi-party interaction protocols in arbitrary theories of processes.
\end{abstract}

\section{Introduction}

Theories of processes, such as \emph{stochastic}, \emph{partial} or \emph{linear} functions, are a foundational tool in computer science. They help us model how systems interact in terms of a solid mathematical foundation. Any theory of processes involving operations for \emph{sequential composition} and \emph{parallel composition}, satisfying reasonable axioms, forms a \emph{monoidal category}.

\MonoidalCategories{} are versatile: they can be used in the description of quantum circuits \cite{abramsky2009categorical}, stochastic processes \cite{cho:jacobs:disintegration2019,fritz:markov2020}, relational queries \cite{bonchi18} and non-terminating processes \cite{cockett02}, among many other applications \cite{coeckeFS16}.

At the same time, monoidal categories have two intuitive, sound and complete calculi: the first in terms of \emph{string diagrams} \cite{joyal91}, and the second in terms of their \emph{linear type theory} \cite{shulman2016categorical}. String diagrams are a 2\hyp{}dimensional syntax in which processes are represented by boxes, and their inputs and outputs are connected by wires. The type theory of symmetric monoidal categories is the basis of the more specialized \emph{arrow do-notation} used in functional programming languages \cite{hughes00,paterson01:arrows}, which becomes \emph{do-notation} for Kleisli categories of commutative monads \cite{moggi91,guitart1980tenseurs}.
Let us showcase \monoidalCategories{}, their string diagrams and the use of do-notation in the description of a protocol.

\subsection{Protocol Description}

The Transmission Control Protocol (TCP) is a connection-based communication protocol. Every connection begins with a \emph{three-way handshake}: an exchange of messages that synchronizes the state of both parties. This handshake is defined in RFC793 to have three steps: \SYN{}, \SYN{}-\ACK{} and \ACK{} \cite{rfc793}.

The client initiates the communication by sending a synchronization packet (\SYN{}) to the server. The synchronization packet contains a pseudorandom number associated to the session, the Initial Sequence Number of the client (\CLI{}).

The server acknowledges this packet and sends a message (\ACK{}) containing its own sequence number (\SRV{}) together with the client's sequence number plus one (\CLI{}$+1$). These two form the \SYN{}-\ACK{} message.
Finally, the client sends a final \ACK{} message with the server's sequence number plus one, $\SRV{}+1$. When the protocol works correctly, both client and server end up with the pair $(\CLI{}+1, \SRV{}+1)$.
\vspace{-1.5em}
\begin{figure}[ht]
  \centering

\tikzset{every picture/.style={line width=0.75pt}} %

\begin{tikzpicture}[x=0.75pt,y=0.75pt,yscale=-1,xscale=1]
\draw  [color={rgb, 255:red, 0; green, 0; blue, 0 }  ,draw opacity=1 ][fill={rgb, 255:red, 255; green, 255; blue, 255 }  ,fill opacity=1 ] (45,70) -- (85,70) -- (85,85) -- (45,85) -- cycle ;
\draw [color={rgb, 255:red, 0; green, 0; blue, 0 }  ,draw opacity=1 ]   (55,50) -- (55,70) ;
\draw  [color={rgb, 255:red, 0; green, 0; blue, 0 }  ,draw opacity=1 ][fill={rgb, 255:red, 255; green, 255; blue, 255 }  ,fill opacity=1 ] (95,115) -- (135,115) -- (135,130) -- (95,130) -- cycle ;
\draw    (75,85) .. controls (75.33,104.83) and (105,95.5) .. (105,115) ;
\draw  [color={rgb, 255:red, 0; green, 0; blue, 0 }  ,draw opacity=1 ][fill={rgb, 255:red, 255; green, 255; blue, 255 }  ,fill opacity=1 ] (145.5,160) -- (185.5,160) -- (185.5,175) -- (145.5,175) -- cycle ;
\draw    (125,130) .. controls (125.33,149.83) and (155,140.5) .. (155,160) ;
\draw [color={rgb, 255:red, 0; green, 0; blue, 0 }  ,draw opacity=1 ]   (175.5,50) -- (175.5,160) ;
\draw  [color={rgb, 255:red, 0; green, 0; blue, 0 }  ,draw opacity=1 ][fill={rgb, 255:red, 255; green, 255; blue, 255 }  ,fill opacity=1 ] (95,205) -- (135,205) -- (135,220) -- (95,220) -- cycle ;
\draw    (105,220) .. controls (105.33,239.83) and (75,225.5) .. (75,245) ;
\draw    (155,175) .. controls (155.33,194.83) and (125,185.5) .. (125,205) ;
\draw [color={rgb, 255:red, 0; green, 0; blue, 0 }  ,draw opacity=1 ]   (55,85) -- (55,245) ;
\draw  [color={rgb, 255:red, 0; green, 0; blue, 0 }  ,draw opacity=1 ][fill={rgb, 255:red, 255; green, 255; blue, 255 }  ,fill opacity=1 ] (45,245) -- (85,245) -- (85,260) -- (45,260) -- cycle ;
\draw [color={rgb, 255:red, 0; green, 0; blue, 0 }  ,draw opacity=1 ]   (175.5,175) -- (175.5,335) ;
\draw  [color={rgb, 255:red, 0; green, 0; blue, 0 }  ,draw opacity=1 ][fill={rgb, 255:red, 255; green, 255; blue, 255 }  ,fill opacity=1 ] (95,290) -- (135,290) -- (135,305) -- (95,305) -- cycle ;
\draw    (75,260) .. controls (75.33,279.83) and (105,270.5) .. (105,290) ;
\draw    (125,305) .. controls (125.33,324.83) and (155,315.5) .. (155,335) ;
\draw  [color={rgb, 255:red, 0; green, 0; blue, 0 }  ,draw opacity=1 ][fill={rgb, 255:red, 255; green, 255; blue, 255 }  ,fill opacity=1 ] (145.5,335) -- (185.5,335) -- (185.5,350) -- (145.5,350) -- cycle ;
\draw [color={rgb, 255:red, 0; green, 0; blue, 0 }  ,draw opacity=1 ]   (55,260) -- (55,385) ;
\draw [color={rgb, 255:red, 0; green, 0; blue, 0 }  ,draw opacity=1 ]   (175,350) -- (175,385) ;
\draw (55,20) node  {\includegraphics[width=22.5pt,height=22.5pt]{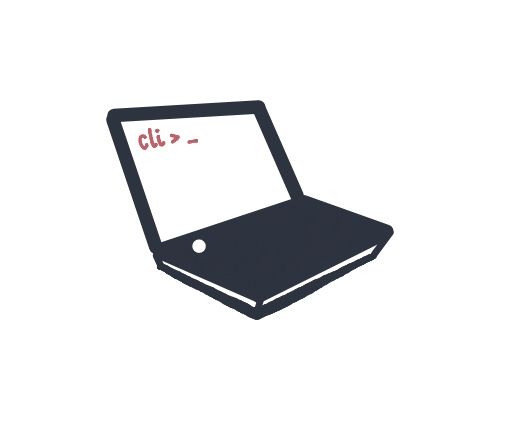}};
\draw (175,20) node  {\includegraphics[width=22.5pt,height=22.5pt]{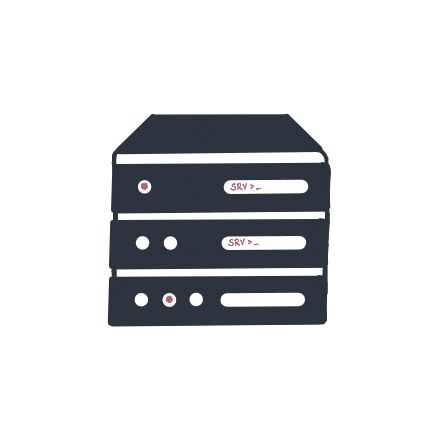}};
\draw  [draw opacity=0][fill={rgb, 255:red, 255; green, 255; blue, 255 }  ,fill opacity=1 ] (35,30) -- (75,30) -- (75,50) -- (35,50) -- cycle ;
\draw  [draw opacity=0][fill={rgb, 255:red, 255; green, 255; blue, 255 }  ,fill opacity=1 ] (155,30) -- (195,30) -- (195,50) -- (155,50) -- cycle ;
\draw  [draw opacity=0][fill={rgb, 255:red, 255; green, 255; blue, 255 }  ,fill opacity=1 ] (100,70) -- (145,70) -- (145,85) -- (100,85) -- cycle ;
\draw  [draw opacity=0][fill={rgb, 255:red, 255; green, 255; blue, 255 }  ,fill opacity=1 ] (100,85) -- (145,85) -- (145,100) -- (100,100) -- cycle ;
\draw  [draw opacity=0][fill={rgb, 255:red, 255; green, 255; blue, 255 }  ,fill opacity=1 ] (90,159) -- (135,159) -- (135,174) -- (90,174) -- cycle ;
\draw  [draw opacity=0][fill={rgb, 255:red, 255; green, 255; blue, 255 }  ,fill opacity=1 ] (90,174) -- (135,174) -- (135,189) -- (90,189) -- cycle ;
\draw  [draw opacity=0][fill={rgb, 255:red, 255; green, 255; blue, 255 }  ,fill opacity=1 ] (105,249) -- (150,249) -- (150,264) -- (105,264) -- cycle ;
\draw  [draw opacity=0][fill={rgb, 255:red, 255; green, 255; blue, 255 }  ,fill opacity=1 ] (105,264) -- (150,264) -- (150,279) -- (105,279) -- cycle ;
\draw  [draw opacity=0][fill={rgb, 255:red, 255; green, 255; blue, 255 }  ,fill opacity=1 ] (5,266) -- (50,266) -- (50,281) -- (5,281) -- cycle ;
\draw  [draw opacity=0][fill={rgb, 255:red, 255; green, 255; blue, 255 }  ,fill opacity=1 ] (5,281) -- (50,281) -- (50,296) -- (5,296) -- cycle ;
\draw  [draw opacity=0][fill={rgb, 255:red, 255; green, 255; blue, 255 }  ,fill opacity=1 ] (180,356) -- (225,356) -- (225,371) -- (180,371) -- cycle ;
\draw  [draw opacity=0][fill={rgb, 255:red, 255; green, 255; blue, 255 }  ,fill opacity=1 ] (180,371) -- (225,371) -- (225,386) -- (180,386) -- cycle ;
\draw  [draw opacity=0][fill={rgb, 255:red, 255; green, 255; blue, 255 }  ,fill opacity=1 ] (180,179) -- (225,179) -- (225,194) -- (180,194) -- cycle ;
\draw  [draw opacity=0][fill={rgb, 255:red, 255; green, 255; blue, 255 }  ,fill opacity=1 ] (180,194) -- (225,194) -- (225,209) -- (180,209) -- cycle ;
\draw  [draw opacity=0][fill={rgb, 255:red, 255; green, 255; blue, 255 }  ,fill opacity=1 ] (5,91) -- (50,91) -- (50,106) -- (5,106) -- cycle ;
\draw  [draw opacity=0][fill={rgb, 255:red, 255; green, 255; blue, 255 }  ,fill opacity=1 ] (5,106) -- (50,106) -- (50,121) -- (5,121) -- cycle ;

\draw (55.5,42.5) node  [font=\footnotesize]  {\textbf{Client}};
\draw (175,42.5) node  [font=\footnotesize]  {\textbf{Server}};
\draw (65,77.5) node  [font=\footnotesize] [align=left] {{\SYN}};
\draw (115,122.5) node  [font=\footnotesize] [align=left] {{\NOISE}};
\draw (122.5,77.5) node  [font=\footnotesize] [align=left] {{\fontfamily{pcr}\selectfont SYN:10}};
\draw (122.5,92.5) node  [font=\footnotesize] [align=left] {{\fontfamily{pcr}\selectfont ACK:00}};
\draw (112.5,166.5) node  [font=\footnotesize] [align=left] {{\fontfamily{pcr}\selectfont SYN:11}};
\draw (112.5,181.5) node  [font=\footnotesize] [align=left] {{\fontfamily{pcr}\selectfont ACK:20}};
\draw (127.5,256.5) node  [font=\footnotesize] [align=left] {{\fontfamily{pcr}\selectfont SYN:11}};
\draw (127.5,271.5) node  [font=\footnotesize] [align=left] {{\fontfamily{pcr}\selectfont ACK:21}};
\draw (165.5,167.5) node  [font=\footnotesize] [align=left] {{\SYN-\ACK}};
\draw (27.5,273.5) node  [font=\footnotesize] [align=left] {{\fontfamily{pcr}\selectfont CLI:11}};
\draw (27.5,288.5) node  [font=\footnotesize] [align=left] {{\fontfamily{pcr}\selectfont SRV:21}};
\draw (202.5,363.5) node  [font=\footnotesize] [align=left] {{\fontfamily{pcr}\selectfont CLI:11}};
\draw (202.5,378.5) node  [font=\footnotesize] [align=left] {{\fontfamily{pcr}\selectfont SRV:21}};
\draw (202.5,186.5) node  [font=\footnotesize] [align=left] {{\fontfamily{pcr}\selectfont CLI:11}};
\draw (202.5,201.5) node  [font=\footnotesize] [align=left] {{\fontfamily{pcr}\selectfont SRV:20}};
\draw (27.5,98.5) node  [font=\footnotesize] [align=left] {{\fontfamily{pcr}\selectfont CLI:10}};
\draw (27.5,113.5) node  [font=\footnotesize] [align=left] {{\fontfamily{pcr}\selectfont SRV:00}};
\draw (115,212.5) node  [font=\footnotesize] [align=left] {{\NOISE}};
\draw (115,297.5) node  [font=\footnotesize] [align=left] {{\NOISE}};
\draw (65,252.5) node  [font=\footnotesize] [align=left] {{\ACK}};
\draw (165.5,343.5) node  [font=\footnotesize] [align=left] {{\RCV}};

\end{tikzpicture}
   \caption{TCP Three way handshake.}
  \label{diagram:tcp}
\end{figure}

This protocol is traditionally described in terms of a communication diagram (\Cref{diagram:tcp}). This diagram can be taken seriously as a formal mathematical object: it is a string diagram describing a \emph{morphism} in a \monoidalCategory{}.
\vspace{-1em}
\begin{figure}[ht]
  \centering
  \begin{verbatim}
    syn :: Client ~> (Client, Syn, Ack)
    syn(client) = do
      client <- random
      return (client, client, 0)\end{verbatim}
  \caption{Implementation of the \SYN{} component.}
  \label{diagram:syn}
\end{figure}

The implementation of each component of the protocol is traditionally written as pseudocode. This pseudocode can also be taken seriously as the expression of a morphism in the same \monoidalCategory{}, possibly with extra structure: in this case, a commutative \emph{Freyd category} (\Cref{diagram:syn}, see Appendix \Cref{sec:threewayhandshake} \cite{moggi91}). %
That is, \symmetricMonoidalCategories{} admit two different internal languages, and we can use both to interpret formally the traditional description of a protocol in terms of string diagrams and pseudocode.

\subsection{Types for Message Passing}
The last part in formalizing a multi-party protocol in terms of \monoidalCategories{} is to actually separate its component parties.
For instance, the three-way handshake can be split into the client, the server and a channel.
Here is where the existing literature in \monoidalCategories{} seems to fall short: the parts resulting from the decomposition of a monoidal morphism are not necessarily monoidal morphisms themselves (see \Cref{diagram:tcpsplit} for the diagrammatic representation). We say that these are only \emph{monoidal contexts}.
\vspace{-0.2em}
\begin{figure}[ht]
  \centering

\tikzset{every picture/.style={line width=0.75pt}} %

\begin{tikzpicture}[x=0.75pt,y=0.75pt,yscale=-1,xscale=1]
\draw    (165,305) .. controls (165.33,324.83) and (195,315.5) .. (195,335) ;
\draw  [draw opacity=0][fill={rgb, 255:red, 255; green, 255; blue, 255 }  ,fill opacity=1 ] (165,335) .. controls (165,314.63) and (194.75,324.88) .. (195,305) -- (200,305) -- (200,340) -- (165.13,340) .. controls (165.04,338.38) and (165,336.71) .. (165,335) -- cycle ;
\draw    (115,260) .. controls (115.33,279.83) and (145,270.5) .. (145,290) ;
\draw  [draw opacity=0][fill={rgb, 255:red, 255; green, 255; blue, 255 }  ,fill opacity=1 ] (145,260) .. controls (145,280.38) and (115.25,270.13) .. (115,290) -- (110,290) -- (110,255) -- (144.87,255) .. controls (144.96,256.62) and (145,258.29) .. (145,260) -- cycle ;
\draw    (115,85) .. controls (115.33,104.83) and (145,95.5) .. (145,115) ;
\draw  [draw opacity=0][fill={rgb, 255:red, 255; green, 255; blue, 255 }  ,fill opacity=1 ] (145,85) .. controls (145,105.38) and (115.25,95.13) .. (115,115) -- (110,115) -- (110,80) -- (144.87,80) .. controls (144.96,81.62) and (145,83.29) .. (145,85) -- cycle ;
\draw    (165,130) .. controls (165.33,149.83) and (195,140.5) .. (195,160) ;
\draw  [draw opacity=0][fill={rgb, 255:red, 255; green, 255; blue, 255 }  ,fill opacity=1 ] (165,160) .. controls (165,139.63) and (194.75,149.88) .. (195,130) -- (200,130) -- (200,165) -- (165.13,165) .. controls (165.04,163.38) and (165,161.71) .. (165,160) -- cycle ;
\draw    (195,175) .. controls (195.33,194.83) and (165,185.5) .. (165,205) ;
\draw  [draw opacity=0][fill={rgb, 255:red, 255; green, 255; blue, 255 }  ,fill opacity=1 ] (165,175) .. controls (165,195.38) and (194.75,185.13) .. (195,205) -- (200,205) -- (200,170) -- (165.13,170) .. controls (165.04,171.62) and (165,173.29) .. (165,175) -- cycle ;
\draw    (210,305) .. controls (210.33,324.83) and (240,315.5) .. (240,335) ;
\draw  [draw opacity=0][fill={rgb, 255:red, 255; green, 255; blue, 255 }  ,fill opacity=1 ] (240,305) .. controls (240,325.38) and (210.25,315.13) .. (210,335) -- (205,335) -- (205,300) -- (239.87,300) .. controls (239.96,301.62) and (240,303.29) .. (240,305) -- cycle ;
\draw    (240,175) .. controls (240.33,194.83) and (210,185.5) .. (210,205) ;
\draw  [draw opacity=0][fill={rgb, 255:red, 255; green, 255; blue, 255 }  ,fill opacity=1 ] (240,205) .. controls (240,184.63) and (210.25,194.88) .. (210,175) -- (205,175) -- (205,210) -- (239.87,210) .. controls (239.96,208.38) and (240,206.71) .. (240,205) -- cycle ;
\draw    (210,130) .. controls (210.33,149.83) and (240,140.5) .. (240,160) ;
\draw  [draw opacity=0][fill={rgb, 255:red, 255; green, 255; blue, 255 }  ,fill opacity=1 ] (240,130) .. controls (240,150.38) and (210.25,140.13) .. (210,160) -- (205,160) -- (205,125) -- (239.87,125) .. controls (239.96,126.62) and (240,128.29) .. (240,130) -- cycle ;
\draw  [color={rgb, 255:red, 0; green, 0; blue, 0 }  ,draw opacity=1 ][fill={rgb, 255:red, 255; green, 255; blue, 255 }  ,fill opacity=1 ] (45,70) -- (85,70) -- (85,85) -- (45,85) -- cycle ;
\draw [color={rgb, 255:red, 0; green, 0; blue, 0 }  ,draw opacity=1 ]   (55,50) -- (55,70) ;
\draw  [color={rgb, 255:red, 0; green, 0; blue, 0 }  ,draw opacity=1 ][fill={rgb, 255:red, 255; green, 255; blue, 255 }  ,fill opacity=1 ] (135,115) -- (175,115) -- (175,130) -- (135,130) -- cycle ;
\draw    (75,85) .. controls (75.33,104.83) and (105,95.5) .. (105,115) ;
\draw  [color={rgb, 255:red, 0; green, 0; blue, 0 }  ,draw opacity=1 ][fill={rgb, 255:red, 255; green, 255; blue, 255 }  ,fill opacity=1 ] (230.5,160) -- (270.5,160) -- (270.5,175) -- (230.5,175) -- cycle ;
\draw [color={rgb, 255:red, 0; green, 0; blue, 0 }  ,draw opacity=1 ]   (260.5,50) -- (260.5,160) ;
\draw  [color={rgb, 255:red, 0; green, 0; blue, 0 }  ,draw opacity=1 ][fill={rgb, 255:red, 255; green, 255; blue, 255 }  ,fill opacity=1 ] (135,205) -- (175,205) -- (175,220) -- (135,220) -- cycle ;
\draw    (105,220) .. controls (105.33,239.83) and (75,225.5) .. (75,245) ;
\draw [color={rgb, 255:red, 0; green, 0; blue, 0 }  ,draw opacity=1 ]   (55,85) -- (55,245) ;
\draw  [color={rgb, 255:red, 0; green, 0; blue, 0 }  ,draw opacity=1 ][fill={rgb, 255:red, 255; green, 255; blue, 255 }  ,fill opacity=1 ] (45,245) -- (85,245) -- (85,260) -- (45,260) -- cycle ;
\draw [color={rgb, 255:red, 0; green, 0; blue, 0 }  ,draw opacity=1 ]   (260.5,175) -- (260.5,335) ;
\draw  [color={rgb, 255:red, 0; green, 0; blue, 0 }  ,draw opacity=1 ][fill={rgb, 255:red, 255; green, 255; blue, 255 }  ,fill opacity=1 ] (135,290) -- (175,290) -- (175,305) -- (135,305) -- cycle ;
\draw    (75,260) .. controls (75.33,279.83) and (105,270.5) .. (105,290) ;
\draw  [color={rgb, 255:red, 0; green, 0; blue, 0 }  ,draw opacity=1 ][fill={rgb, 255:red, 255; green, 255; blue, 255 }  ,fill opacity=1 ] (230.5,335) -- (270.5,335) -- (270.5,350) -- (230.5,350) -- cycle ;
\draw [color={rgb, 255:red, 0; green, 0; blue, 0 }  ,draw opacity=1 ]   (55,260) -- (55,385) ;
\draw [color={rgb, 255:red, 0; green, 0; blue, 0 }  ,draw opacity=1 ]   (260,350) -- (260,385) ;
\draw (55,20) node  {\includegraphics[width=22.5pt,height=22.5pt]{laptop-logo.jpg}};
\draw (260,20) node  {\includegraphics[width=22.5pt,height=22.5pt]{server-logo.jpg}};
\draw  [draw opacity=0][fill={rgb, 255:red, 255; green, 255; blue, 255 }  ,fill opacity=1 ] (35,30) -- (75,30) -- (75,50) -- (35,50) -- cycle ;
\draw  [draw opacity=0][fill={rgb, 255:red, 255; green, 255; blue, 255 }  ,fill opacity=1 ] (240,30) -- (280,30) -- (280,50) -- (240,50) -- cycle ;
\draw [color={rgb, 255:red, 191; green, 97; blue, 106 }  ,draw opacity=1 ] [dash pattern={on 4.5pt off 4.5pt}]  (105,5) -- (105,85) ;
\draw [color={rgb, 255:red, 191; green, 97; blue, 106 }  ,draw opacity=1 ] [dash pattern={on 4.5pt off 4.5pt}]  (195,5) -- (195,130) ;
\draw [color={rgb, 255:red, 191; green, 97; blue, 106 }  ,draw opacity=1 ] [dash pattern={on 4.5pt off 4.5pt}]  (75,115) -- (75,220) ;
\draw [color={rgb, 255:red, 191; green, 97; blue, 106 }  ,draw opacity=1 ] [dash pattern={on 4.5pt off 4.5pt}]  (195,130) .. controls (195.33,149.83) and (165,140.5) .. (165,160) ;
\draw [color={rgb, 255:red, 191; green, 97; blue, 106 }  ,draw opacity=1 ] [dash pattern={on 4.5pt off 4.5pt}]  (165,175) .. controls (165.33,194.83) and (195,185.5) .. (195,205) ;
\draw [color={rgb, 255:red, 191; green, 97; blue, 106 }  ,draw opacity=1 ] [dash pattern={on 4.5pt off 4.5pt}]  (165,160) -- (165,175) ;
\draw [color={rgb, 255:red, 191; green, 97; blue, 106 }  ,draw opacity=1 ] [dash pattern={on 4.5pt off 4.5pt}]  (105,245) -- (105,260) ;
\draw [color={rgb, 255:red, 191; green, 97; blue, 106 }  ,draw opacity=1 ] [dash pattern={on 4.5pt off 4.5pt}]  (75,290) -- (75,385) ;
\draw [color={rgb, 255:red, 191; green, 97; blue, 106 }  ,draw opacity=1 ] [dash pattern={on 4.5pt off 4.5pt}]  (195,305) .. controls (195.33,324.83) and (165,315.5) .. (165,335) ;
\draw [color={rgb, 255:red, 191; green, 97; blue, 106 }  ,draw opacity=1 ] [dash pattern={on 4.5pt off 4.5pt}]  (195,200) -- (195,305) ;
\draw [color={rgb, 255:red, 191; green, 97; blue, 106 }  ,draw opacity=1 ] [dash pattern={on 4.5pt off 4.5pt}]  (165,330) -- (165,380) ;
\draw (170,20) node  {\includegraphics[width=22.5pt,height=22.5pt]{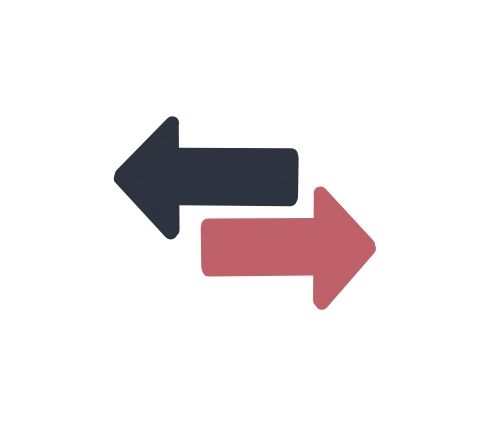}};
\draw  [draw opacity=0][fill={rgb, 255:red, 255; green, 255; blue, 255 }  ,fill opacity=1 ] (150,30) -- (190,30) -- (190,50) -- (150,50) -- cycle ;
\draw    (145,220) .. controls (145.33,239.83) and (115,225.5) .. (115,245) ;
\draw [color={rgb, 255:red, 191; green, 97; blue, 106 }  ,draw opacity=1 ] [dash pattern={on 4.5pt off 4.5pt}]  (145,5) -- (145,85) ;
\draw [color={rgb, 255:red, 191; green, 97; blue, 106 }  ,draw opacity=1 ] [dash pattern={on 4.5pt off 4.5pt}]  (145,85) .. controls (145.33,104.83) and (115,95.5) .. (115,115) ;
\draw [color={rgb, 255:red, 191; green, 97; blue, 106 }  ,draw opacity=1 ] [dash pattern={on 4.5pt off 4.5pt}]  (145,260) .. controls (145.33,279.83) and (115,270.5) .. (115,290) ;
\draw [color={rgb, 255:red, 191; green, 97; blue, 106 }  ,draw opacity=1 ] [dash pattern={on 4.5pt off 4.5pt}]  (115,115) -- (115,220) ;
\draw [color={rgb, 255:red, 191; green, 97; blue, 106 }  ,draw opacity=1 ] [dash pattern={on 4.5pt off 4.5pt}]  (115,290) -- (115,385) ;
\draw [color={rgb, 255:red, 191; green, 97; blue, 106 }  ,draw opacity=1 ] [dash pattern={on 4.5pt off 4.5pt}]  (240,5) -- (240,130) ;
\draw [color={rgb, 255:red, 191; green, 97; blue, 106 }  ,draw opacity=1 ] [dash pattern={on 4.5pt off 4.5pt}]  (240,130) .. controls (240.33,149.83) and (210,140.5) .. (210,160) ;
\draw [color={rgb, 255:red, 191; green, 97; blue, 106 }  ,draw opacity=1 ] [dash pattern={on 4.5pt off 4.5pt}]  (210,160) -- (210,175) ;
\draw [color={rgb, 255:red, 191; green, 97; blue, 106 }  ,draw opacity=1 ] [dash pattern={on 4.5pt off 4.5pt}]  (210,175) .. controls (210.33,194.83) and (240,185.5) .. (240,205) ;
\draw [color={rgb, 255:red, 191; green, 97; blue, 106 }  ,draw opacity=1 ] [dash pattern={on 4.5pt off 4.5pt}]  (240,305) .. controls (240.33,324.83) and (210,315.5) .. (210,335) ;
\draw [color={rgb, 255:red, 191; green, 97; blue, 106 }  ,draw opacity=1 ] [dash pattern={on 4.5pt off 4.5pt}]  (240,200) -- (240,305) ;
\draw [color={rgb, 255:red, 191; green, 97; blue, 106 }  ,draw opacity=1 ] [dash pattern={on 4.5pt off 4.5pt}]  (210,330) -- (210,380) ;
\draw  [draw opacity=0][fill={rgb, 255:red, 255; green, 255; blue, 255 }  ,fill opacity=1 ] (145,250) .. controls (145,229.63) and (115.25,239.88) .. (115,220) -- (110,220) -- (110,255) -- (144.87,255) .. controls (144.96,253.38) and (145,251.71) .. (145,250) -- cycle ;
\draw [color={rgb, 255:red, 191; green, 97; blue, 106 }  ,draw opacity=1 ] [dash pattern={on 4.5pt off 4.5pt}]  (115,220) .. controls (115.33,239.83) and (145,230.5) .. (145,250) ;
\draw [color={rgb, 255:red, 191; green, 97; blue, 106 }  ,draw opacity=1 ] [dash pattern={on 4.5pt off 4.5pt}]  (145,250) -- (145,260) ;
\draw  [draw opacity=0][fill={rgb, 255:red, 255; green, 255; blue, 255 }  ,fill opacity=1 ] (75,290) .. controls (75,269.63) and (104.75,279.88) .. (105,260) -- (110,260) -- (110,295) -- (75.13,295) .. controls (75.04,293.38) and (75,291.71) .. (75,290) -- cycle ;
\draw [color={rgb, 255:red, 191; green, 97; blue, 106 }  ,draw opacity=1 ] [dash pattern={on 4.5pt off 4.5pt}]  (105,260) .. controls (105.33,279.83) and (75,270.5) .. (75,290) ;
\draw  [draw opacity=0][fill={rgb, 255:red, 255; green, 255; blue, 255 }  ,fill opacity=1 ] (75,115) .. controls (75,94.63) and (104.75,104.88) .. (105,85) -- (110,85) -- (110,120) -- (75.13,120) .. controls (75.04,118.38) and (75,116.71) .. (75,115) -- cycle ;
\draw [color={rgb, 255:red, 191; green, 97; blue, 106 }  ,draw opacity=1 ] [dash pattern={on 4.5pt off 4.5pt}]  (105,85) .. controls (105.33,104.83) and (75,95.5) .. (75,115) ;
\draw  [draw opacity=0][fill={rgb, 255:red, 255; green, 255; blue, 255 }  ,fill opacity=1 ] (75,220) .. controls (75,240.38) and (104.75,230.13) .. (105,250) -- (110,250) -- (110,215) -- (75.13,215) .. controls (75.04,216.62) and (75,218.29) .. (75,220) -- cycle ;
\draw [color={rgb, 255:red, 191; green, 97; blue, 106 }  ,draw opacity=1 ] [dash pattern={on 4.5pt off 4.5pt}]  (75,220) .. controls (75.33,239.83) and (105,230.5) .. (105,250) ;

\draw (55.5,42.5) node  [font=\footnotesize]  {\textbf{Client}};
\draw (260,42.5) node  [font=\footnotesize]  {\textbf{Server}};
\draw (65,77.5) node  [font=\footnotesize] [align=left] {{\SYN}};
\draw (155,122.5) node  [font=\footnotesize] [align=left] {\NOISE};
\draw (250.5,167.5) node  [font=\footnotesize] [align=left] {{\SYN-\ACK}};
\draw (155,212.5) node  [font=\footnotesize] [align=left] {{\NOISE}};
\draw (155,297.5) node  [font=\footnotesize] [align=left] {{\NOISE}};
\draw (65,252.5) node  [font=\footnotesize] [align=left] {{\ACK}};
\draw (250.5,343.5) node  [font=\footnotesize] [align=left] {{\RCV}};
\draw (170.5,42.5) node  [font=\footnotesize]  {\textbf{Channel}};

\end{tikzpicture}
   \caption{Parties in the TCP Three-way handshake.}
  \label{diagram:tcpsplit}
\end{figure}

Contrary to monoidal morphisms, which only need to declare their input and output types, monoidal contexts need \emph{behavioural types} \cite{pierce93:subtyping,huttel16:behaviouraltypes} that specify the order and type of the exchange of information along their boundary.

A monoidal context may declare intermediate \emph{send} ($\Send{A}$) and \emph{receive} ($\Get{A}$) types, separated by a sequencing operator $(◁)$.
For instance, the channel is a monoidal morphism just declaring that it takes an input message (\textbf{Msg}) and produces another output message;
but the client is a monoidal context that transforms its memory type $\mathbf{Client} \to \mathbf{Client}$ at the same time it \emph{sends}, \emph{receives} and then \emph{sends} a message; and the server transforms its type $\mathbf{Server} \to \mathbf{Server}$ while, dually to the client, it \emph{receives}, \emph{sends} and then \emph{receives} a message.
\begin{align*}
\ClientLogo{} &\in
𝓛ℂ\left( \biobj{\mathbf{Client}}{\mathbf{Client}} \mathbin{;}
  \Send{\mathbf{Msg}} \triangleleft
  \Get{\mathbf{Msg}} \triangleleft
  \Send{\mathbf{Msg}} \right); \\[-0.5em]
\ServerLogo{} &\in
  𝓛ℂ\left( \biobj{\mathbf{Server}}{\mathbf{Server}} \mathbin{;}
    \Get{\mathbf{Msg}} \triangleleft
    \Send{\mathbf{Msg}} \triangleleft
    \Get{\mathbf{Msg}} \right); \\[-0.4em]
\mbox{\texttt{NOISE}} &\in ℂ\left(\mathbf{Msg}; \mathbf{Msg} \right);
\end{align*}

Session types \cite{honda08:sessionTypes}, including the send $(\Send{A})$ and receive $(\Get{A})$ polarized types, have been commonplace in logics of message passing. Cockett and Pastro \cite{cockettPastro09:messagepassing} already proposed a categorical semantics for message-passing which, however, needs to go beyond monoidal categories, into \emph{linear actegories} and \emph{polyactegories}.

Our claim is that, perhaps surprisingly, \monoidalCategories{} already have the necessary algebraic structure to define \emph{monoidal contexts} and their send-receive polarized types. Latent to any monoidal category, there exists a universal category of contexts with polarized types $(\Send{}/\Get{})$ and parallel/sequence operators $(\otimes/◁)$.

\subsection{Reasoning with Contexts}

This manuscript introduces the notion of \monoidalContext{} and \symmetricMonoidalContext{}; and it explains how \dinaturality{} allows us to reason with them.
In the same way that we reason with monoidal morphisms using string diagrams, we can reason about \monoidalContexts{} using \emph{incomplete string diagrams} \cite{bartlettvicary15:modularcategories,roman21:opendiagrams}.

For instance, consider the following fact about the TCP three-way handshake: the client does not need to store a starting $\SRV{}$ number for the server, as it will be overwritten as soon as the real one arrives. This fact only concerns the actions of the client, and it is independent of the server and the channel. We would like to reason about it preserving this modularity, and this is what the incomplete diagrams in \Cref{diagram:tcpclient} achieve.
\vspace{-1.4em}
\begin{figure}[ht]
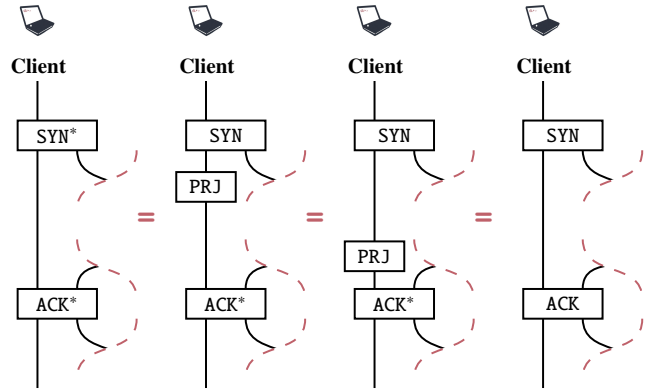

  \centering

\tikzset{every picture/.style={line width=0.75pt}} %

\begin{tikzpicture}[x=0.75pt,y=0.75pt,yscale=-1,xscale=1]
\draw  [color={rgb, 255:red, 0; green, 0; blue, 0 }  ,draw opacity=1 ][fill={rgb, 255:red, 255; green, 255; blue, 255 }  ,fill opacity=1 ] (105,70) -- (145,70) -- (145,85) -- (105,85) -- cycle ;
\draw [color={rgb, 255:red, 0; green, 0; blue, 0 }  ,draw opacity=1 ]   (115,50) -- (115,70) ;
\draw [color={rgb, 255:red, 0; green, 0; blue, 0 }  ,draw opacity=1 ]   (115,85) -- (115,155) ;
\draw  [color={rgb, 255:red, 0; green, 0; blue, 0 }  ,draw opacity=1 ][fill={rgb, 255:red, 255; green, 255; blue, 255 }  ,fill opacity=1 ] (105,155) -- (145,155) -- (145,170) -- (105,170) -- cycle ;
\draw [color={rgb, 255:red, 0; green, 0; blue, 0 }  ,draw opacity=1 ]   (115,170) -- (115,205) ;
\draw (115,20) node  {\includegraphics[width=22.5pt,height=22.5pt]{laptop-logo.jpg}};
\draw  [draw opacity=0][fill={rgb, 255:red, 255; green, 255; blue, 255 }  ,fill opacity=1 ] (95,30) -- (135,30) -- (135,50) -- (95,50) -- cycle ;
\draw  [color={rgb, 255:red, 0; green, 0; blue, 0 }  ,draw opacity=1 ][fill={rgb, 255:red, 255; green, 255; blue, 255 }  ,fill opacity=1 ] (190,70) -- (230,70) -- (230,85) -- (190,85) -- cycle ;
\draw [color={rgb, 255:red, 0; green, 0; blue, 0 }  ,draw opacity=1 ]   (200,50) -- (200,70) ;
\draw  [color={rgb, 255:red, 0; green, 0; blue, 0 }  ,draw opacity=1 ][fill={rgb, 255:red, 255; green, 255; blue, 255 }  ,fill opacity=1 ] (190,155) -- (230,155) -- (230,170) -- (190,170) -- cycle ;
\draw (200,20) node  {\includegraphics[width=22.5pt,height=22.5pt]{laptop-logo.jpg}};
\draw  [draw opacity=0][fill={rgb, 255:red, 255; green, 255; blue, 255 }  ,fill opacity=1 ] (180,30) -- (220,30) -- (220,50) -- (180,50) -- cycle ;
\draw [color={rgb, 255:red, 0; green, 0; blue, 0 }  ,draw opacity=1 ]   (200,85) -- (200,155) ;
\draw [color={rgb, 255:red, 0; green, 0; blue, 0 }  ,draw opacity=1 ]   (200,170) -- (200,205) ;
\draw  [color={rgb, 255:red, 0; green, 0; blue, 0 }  ,draw opacity=1 ][fill={rgb, 255:red, 255; green, 255; blue, 255 }  ,fill opacity=1 ] (185,96) -- (215,96) -- (215,111) -- (185,111) -- cycle ;
\draw  [color={rgb, 255:red, 0; green, 0; blue, 0 }  ,draw opacity=1 ][fill={rgb, 255:red, 255; green, 255; blue, 255 }  ,fill opacity=1 ] (275,70) -- (315,70) -- (315,85) -- (275,85) -- cycle ;
\draw [color={rgb, 255:red, 0; green, 0; blue, 0 }  ,draw opacity=1 ]   (285,50) -- (285,70) ;
\draw  [color={rgb, 255:red, 0; green, 0; blue, 0 }  ,draw opacity=1 ][fill={rgb, 255:red, 255; green, 255; blue, 255 }  ,fill opacity=1 ] (275,155) -- (315,155) -- (315,170) -- (275,170) -- cycle ;
\draw (285,20) node  {\includegraphics[width=22.5pt,height=22.5pt]{laptop-logo.jpg}};
\draw  [draw opacity=0][fill={rgb, 255:red, 255; green, 255; blue, 255 }  ,fill opacity=1 ] (265,30) -- (305,30) -- (305,50) -- (265,50) -- cycle ;
\draw [color={rgb, 255:red, 0; green, 0; blue, 0 }  ,draw opacity=1 ]   (285,85) -- (285,155) ;
\draw [color={rgb, 255:red, 0; green, 0; blue, 0 }  ,draw opacity=1 ]   (285,170) -- (285,205) ;
\draw  [color={rgb, 255:red, 0; green, 0; blue, 0 }  ,draw opacity=1 ][fill={rgb, 255:red, 255; green, 255; blue, 255 }  ,fill opacity=1 ] (360,70) -- (400,70) -- (400,85) -- (360,85) -- cycle ;
\draw [color={rgb, 255:red, 0; green, 0; blue, 0 }  ,draw opacity=1 ]   (370,50) -- (370,70) ;
\draw [color={rgb, 255:red, 0; green, 0; blue, 0 }  ,draw opacity=1 ]   (370,85) -- (370,155) ;
\draw  [color={rgb, 255:red, 0; green, 0; blue, 0 }  ,draw opacity=1 ][fill={rgb, 255:red, 255; green, 255; blue, 255 }  ,fill opacity=1 ] (360,155) -- (400,155) -- (400,170) -- (360,170) -- cycle ;
\draw [color={rgb, 255:red, 0; green, 0; blue, 0 }  ,draw opacity=1 ]   (370,170) -- (370,205) ;
\draw (370,20) node  {\includegraphics[width=22.5pt,height=22.5pt]{laptop-logo.jpg}};
\draw  [draw opacity=0][fill={rgb, 255:red, 255; green, 255; blue, 255 }  ,fill opacity=1 ] (350,30) -- (390,30) -- (390,50) -- (350,50) -- cycle ;
\draw    (135,85) .. controls (135.33,104.83) and (165,95.5) .. (165,115) ;
\draw  [draw opacity=0][fill={rgb, 255:red, 255; green, 255; blue, 255 }  ,fill opacity=1 ] (135,115) .. controls (135,94.63) and (164.75,104.88) .. (165,85) -- (170,85) -- (170,120) -- (135.13,120) .. controls (135.04,118.38) and (135,116.71) .. (135,115) -- cycle ;
\draw [color={rgb, 255:red, 191; green, 97; blue, 106 }  ,draw opacity=1 ] [dash pattern={on 4.5pt off 4.5pt}]  (165,85) .. controls (165.33,104.83) and (135,95.5) .. (135,115) ;
\draw    (165,130) .. controls (165.33,149.83) and (135,135.5) .. (135,155) ;
\draw  [draw opacity=0][fill={rgb, 255:red, 255; green, 255; blue, 255 }  ,fill opacity=1 ] (135,130) .. controls (135,150.38) and (164.75,140.13) .. (165,160) -- (170,160) -- (170,125) -- (135.13,125) .. controls (135.04,126.62) and (135,128.29) .. (135,130) -- cycle ;
\draw [color={rgb, 255:red, 191; green, 97; blue, 106 }  ,draw opacity=1 ] [dash pattern={on 4.5pt off 4.5pt}]  (135,130) .. controls (135.33,149.83) and (165,140.5) .. (165,160) ;
\draw  [draw opacity=0] (160,105) -- (180,105) -- (180,135) -- (160,135) -- cycle ;
\draw    (135,170) .. controls (135.33,189.83) and (165,180.5) .. (165,200) ;
\draw  [draw opacity=0][fill={rgb, 255:red, 255; green, 255; blue, 255 }  ,fill opacity=1 ] (135,200) .. controls (135,179.63) and (164.75,189.88) .. (165,170) -- (170,170) -- (170,205) -- (135.13,205) .. controls (135.04,203.38) and (135,201.71) .. (135,200) -- cycle ;
\draw [color={rgb, 255:red, 191; green, 97; blue, 106 }  ,draw opacity=1 ] [dash pattern={on 4.5pt off 4.5pt}]  (165,170) .. controls (165.33,189.83) and (135,180.5) .. (135,200) ;
\draw    (220,85) .. controls (220.33,104.83) and (250,95.5) .. (250,115) ;
\draw  [draw opacity=0][fill={rgb, 255:red, 255; green, 255; blue, 255 }  ,fill opacity=1 ] (220,115) .. controls (220,94.63) and (249.75,104.88) .. (250,85) -- (255,85) -- (255,120) -- (220.13,120) .. controls (220.04,118.38) and (220,116.71) .. (220,115) -- cycle ;
\draw [color={rgb, 255:red, 191; green, 97; blue, 106 }  ,draw opacity=1 ] [dash pattern={on 4.5pt off 4.5pt}]  (250,85) .. controls (250.33,104.83) and (220,95.5) .. (220,115) ;
\draw    (250,130) .. controls (250.33,149.83) and (220,135.5) .. (220,155) ;
\draw  [draw opacity=0][fill={rgb, 255:red, 255; green, 255; blue, 255 }  ,fill opacity=1 ] (220,130) .. controls (220,150.38) and (249.75,140.13) .. (250,160) -- (255,160) -- (255,125) -- (220.13,125) .. controls (220.04,126.62) and (220,128.29) .. (220,130) -- cycle ;
\draw [color={rgb, 255:red, 191; green, 97; blue, 106 }  ,draw opacity=1 ] [dash pattern={on 4.5pt off 4.5pt}]  (220,130) .. controls (220.33,149.83) and (250,140.5) .. (250,160) ;
\draw    (220,170) .. controls (220.33,189.83) and (250,180.5) .. (250,200) ;
\draw  [draw opacity=0][fill={rgb, 255:red, 255; green, 255; blue, 255 }  ,fill opacity=1 ] (220,200) .. controls (220,179.63) and (249.75,189.88) .. (250,170) -- (255,170) -- (255,205) -- (220.13,205) .. controls (220.04,203.38) and (220,201.71) .. (220,200) -- cycle ;
\draw [color={rgb, 255:red, 191; green, 97; blue, 106 }  ,draw opacity=1 ] [dash pattern={on 4.5pt off 4.5pt}]  (250,170) .. controls (250.33,189.83) and (220,180.5) .. (220,200) ;
\draw  [draw opacity=0] (245,105) -- (265,105) -- (265,135) -- (245,135) -- cycle ;
\draw  [color={rgb, 255:red, 0; green, 0; blue, 0 }  ,draw opacity=1 ][fill={rgb, 255:red, 255; green, 255; blue, 255 }  ,fill opacity=1 ] (270,131) -- (300,131) -- (300,146) -- (270,146) -- cycle ;
\draw    (305,85) .. controls (305.33,104.83) and (335,95.5) .. (335,115) ;
\draw  [draw opacity=0][fill={rgb, 255:red, 255; green, 255; blue, 255 }  ,fill opacity=1 ] (305,115) .. controls (305,94.63) and (334.75,104.88) .. (335,85) -- (340,85) -- (340,120) -- (305.13,120) .. controls (305.04,118.38) and (305,116.71) .. (305,115) -- cycle ;
\draw [color={rgb, 255:red, 191; green, 97; blue, 106 }  ,draw opacity=1 ] [dash pattern={on 4.5pt off 4.5pt}]  (335,85) .. controls (335.33,104.83) and (305,95.5) .. (305,115) ;
\draw    (335,130) .. controls (335.33,149.83) and (305,135.5) .. (305,155) ;
\draw  [draw opacity=0][fill={rgb, 255:red, 255; green, 255; blue, 255 }  ,fill opacity=1 ] (305,130) .. controls (305,150.38) and (334.75,140.13) .. (335,160) -- (340,160) -- (340,125) -- (305.13,125) .. controls (305.04,126.62) and (305,128.29) .. (305,130) -- cycle ;
\draw [color={rgb, 255:red, 191; green, 97; blue, 106 }  ,draw opacity=1 ] [dash pattern={on 4.5pt off 4.5pt}]  (305,130) .. controls (305.33,149.83) and (335,140.5) .. (335,160) ;
\draw    (305,170) .. controls (305.33,189.83) and (335,180.5) .. (335,200) ;
\draw  [draw opacity=0][fill={rgb, 255:red, 255; green, 255; blue, 255 }  ,fill opacity=1 ] (305,200) .. controls (305,179.63) and (334.75,189.88) .. (335,170) -- (340,170) -- (340,205) -- (305.13,205) .. controls (305.04,203.38) and (305,201.71) .. (305,200) -- cycle ;
\draw [color={rgb, 255:red, 191; green, 97; blue, 106 }  ,draw opacity=1 ] [dash pattern={on 4.5pt off 4.5pt}]  (335,170) .. controls (335.33,189.83) and (305,180.5) .. (305,200) ;
\draw  [draw opacity=0] (330,105) -- (350,105) -- (350,135) -- (330,135) -- cycle ;
\draw    (390,85) .. controls (390.33,104.83) and (420,95.5) .. (420,115) ;
\draw  [draw opacity=0][fill={rgb, 255:red, 255; green, 255; blue, 255 }  ,fill opacity=1 ] (390,115) .. controls (390,94.63) and (419.75,104.88) .. (420,85) -- (425,85) -- (425,120) -- (390.13,120) .. controls (390.04,118.38) and (390,116.71) .. (390,115) -- cycle ;
\draw [color={rgb, 255:red, 191; green, 97; blue, 106 }  ,draw opacity=1 ] [dash pattern={on 4.5pt off 4.5pt}]  (420,85) .. controls (420.33,104.83) and (390,95.5) .. (390,115) ;
\draw    (420,130) .. controls (420.33,149.83) and (390,135.5) .. (390,155) ;
\draw  [draw opacity=0][fill={rgb, 255:red, 255; green, 255; blue, 255 }  ,fill opacity=1 ] (390,130) .. controls (390,150.38) and (419.75,140.13) .. (420,160) -- (425,160) -- (425,125) -- (390.13,125) .. controls (390.04,126.62) and (390,128.29) .. (390,130) -- cycle ;
\draw [color={rgb, 255:red, 191; green, 97; blue, 106 }  ,draw opacity=1 ] [dash pattern={on 4.5pt off 4.5pt}]  (390,130) .. controls (390.33,149.83) and (420,140.5) .. (420,160) ;
\draw    (390,170) .. controls (390.33,189.83) and (420,180.5) .. (420,200) ;
\draw  [draw opacity=0][fill={rgb, 255:red, 255; green, 255; blue, 255 }  ,fill opacity=1 ] (390,200) .. controls (390,179.63) and (419.75,189.88) .. (420,170) -- (425,170) -- (425,205) -- (390.13,205) .. controls (390.04,203.38) and (390,201.71) .. (390,200) -- cycle ;
\draw [color={rgb, 255:red, 191; green, 97; blue, 106 }  ,draw opacity=1 ] [dash pattern={on 4.5pt off 4.5pt}]  (420,170) .. controls (420.33,189.83) and (390,180.5) .. (390,200) ;

\draw (115.5,42.5) node  [font=\footnotesize]  {\textbf{Client}};
\draw (125,77.5) node  [font=\footnotesize] [align=left] {{$\mbox{\SYN}^{\ast}$}};
\draw (125,162.5) node  [font=\footnotesize] [align=left] {{$\mbox{\ACK}^{\ast}$}};
\draw (200.5,42.5) node  [font=\footnotesize]  {\textbf{Client}};
\draw (210,77.5) node  [font=\footnotesize] [align=left] {{\SYN}};
\draw (210,162.5) node  [font=\footnotesize] [align=left] {{$\mbox{\ACK}^{\ast}$}};
\draw (200,103.5) node  [font=\footnotesize] [align=left] {{\PRJ}};
\draw (170,120) node  [font=\large,color={rgb, 255:red, 191; green, 97; blue, 106 }  ,opacity=1 ] [align=left] {{\fontfamily{pcr}\selectfont \textbf{=}}};
\draw (285.5,42.5) node  [font=\footnotesize]  {\textbf{Client}};
\draw (295,77.5) node  [font=\footnotesize] [align=left] {{\SYN}};
\draw (295,162.5) node  [font=\footnotesize] [align=left] {{$\mbox{\ACK}^{\ast}$}};
\draw (370.5,42.5) node  [font=\footnotesize]  {\textbf{Client}};
\draw (380,77.5) node  [font=\footnotesize] [align=left] {{\SYN}};
\draw (380,162.5) node  [font=\footnotesize] [align=left] {{\ACK}};
\draw (255,120) node  [font=\large,color={rgb, 255:red, 191; green, 97; blue, 106 }  ,opacity=1 ] [align=left] {{\fontfamily{pcr}\selectfont \textbf{=}}};
\draw (285,138.5) node  [font=\footnotesize] [align=left] {{\PRJ}};
\draw (340,120) node  [font=\large,color={rgb, 255:red, 191; green, 97; blue, 106 }  ,opacity=1 ] [align=left] {{\fontfamily{pcr}\selectfont \textbf{=}}};

\end{tikzpicture}
   \caption{Reasoning only with the Client.}
  \label{diagram:tcpclient}
\end{figure}

Here, we define $\SYN^{*} = \SYN ⨾ \PRJ$ to be the same as the $\SYN{}$ process but projecting out only the client $\CLI{}$ number. We also define a new $\ACK^{*}$ that ignores the server $\SRV{}$ number, so that $\ACK = \PRJ ⨾ \ACK^{*}$. These two equations are enough to complete our reasoning.

Monoidal contexts and their incomplete diagrams are defined to be convenient tuples of morphisms, e.g. $(\SYN{} | \ACK{})$ in our example; what makes them interesting is the equivalence relation we impose on them: this equivalence relation makes the pair $(\SYN{} ⨾ \PRJ{} | \ACK^{*})$ equal to $(\SYN{} | \PRJ{} ⨾ \ACK^{*})$. \emph{Dinaturality} is the name we give to this relation, and we will see how it arises canonically from the algebra of \profunctors{}.

\subsection{The Produoidal Algebra of Monoidal Context}
Despite the relative popularity of string diagrams and other forms of formal 2-dimensional syntax, the algebra of incomplete monoidal morphisms has remained obscure.
This manuscript elucidates this algebra: we show that, as monoidal morphisms together with their string diagrams form \emph{\monoidalCategories{}}, monoidal contexts together with their incomplete string diagrams form \emph{normal \produoidalCategories{}}.  Normal produoidal categories were a poorly understood categorical structure, for which we provide examples. Let us motivate ``normal produoidal categories'' by parts.

First, the \emph{``duoidal''} part. \MonoidalContexts{} can be composed sequentially and in parallel, but also nested together to fill the missing parts.
Nesting is captured by categorical composition, so we need specific tensors for both sequential $(◁)$ and parallel $(\otimes)$ composition. This is what \duoidalCategories{} provide. \DuoidalCategories{} are categories with two monoidal structures, e.g. $(◁,N)$ and $(\otimes,I)$.
These two monoidal structures are in principle independent but, whenever they share the same unit $(I ≅ N)$, they become well-suited to express process dependence \cite{shapiro22:duoidal}: they become \emph{``normal''}.

Finally, the \emph{``pro-''} prefix. It is not that we want to impose this structure on top of the monoidal one, but we want to capture the structure morphisms already form. The two tensors $(◁,\otimes)$ do not necessarily exist in the original category; in technical terms, they are not \emph{representable} or \emph{functorial}, but \emph{virtual} or \emph{profunctorial}. This makes us turn to the \produoidalCategories{} of Booker and Street \cite{bookerstreet13}.

Not only is all of this algebra present in \monoidalContexts{}.
\MonoidalContexts{} are the \emph{canonical} such algebra; in a precise sense given by universal properties. The slogan for the main result of this manuscript (\Cref{th:monoidalContextsAreANormalization}) is that
\begingroup %
\addtolength\leftmargini{-0.05in}
\begin{leftbar} %
  \noindent \MonoidalContexts{} are the \emph{free} normalization of the \emph{cofree} \produoidalCategory{} over a \monoidalCategory{}.
\end{leftbar}
\endgroup

\subsection{Related Work}
Far from being the proposal of yet another paradigm, monoidal contexts form a novel algebraic formalization of a widespread paradigm. We argue that the idea of monoidal contexts has been recurrent in the literature, just never appearing explicitly and formally.
Our main contribution is to formalize an algebra of \monoidalContexts{}, in the form of a \emph{normal produoidal} category.

In fact, the Symposium on Logic in Computer Science has recently seen multiple implicit applications of monoidal contexts. Kissinger and Uijlen \cite{kissinger:uijlen:causalstructure:lics2017} describe higher order quantum processes using contexts with holes in compact closed monoidal categories. Ghani, Hedges, Winschel and Zahn \cite{ghani:compositionalgametheory2018} describe economic game theory in terms of \emph{lenses} and incomplete processes in cartesian monoidal categories. Bonchi, Piedeleu, \Sobocinski{} and Zanasi \cite{bonchi:graphicalaffinealgebra2019} study contextual equivalence in their monoidal category of affine signal flow graphs. Di Lavore, de Felice and Román \cite{monoidalstreams} define \emph{monoidal streams} by iterating monoidal context coalgebraically.

\emph{Language theory.}
Motivated by language theory and the Chomsky-Schützenberger theorem, Melliès and Zeilberger \cite{mellies2022parsing} were the first to present the multicategorical \emph{splice-contour} adjunction. We are indebted to their exposition, which we extend to the promonoidal and produoidal cases. Earnshaw and \Sobocinski{}~\cite{earnshaw22} described a %
congruence on formal languages of string diagrams using \monoidalContexts{}. We prove how \monoidalContexts{} arise from an extended \produoidal{} splice-contour adjunction; unifying these two threads.

\textit{Session types.}
Session types \cite{honda93,honda08:sessionTypes}
are the mainstay type formalism for communication protocols, and they have been extensively applied to the π-calculus \cite{sangiorgi01:picalculus}.
Our approach is not set up to capture all of the features of a fully fledged session type theory \cite{kobayashi96:linearitypicalculus}.
Arguably, this makes it more general in what it does: it always provides a universal way of implementing send $(\Send{A})$ and receive $(\Get{A})$ operations in an arbitrary theory of processes represented by a \monoidalCategory{}.
For instance, recursion and the internal/external choice duality \cite{gay99,pierce93:subtyping} are not discussed, although they could be considered as extensions in the same way they are to monoidal categories: via trace \cite{hasegawa97} and linear distributivity \cite{cockett1997}.

\textit{Lenses and incomplete diagrams.}
Lenses are a notion of bidirectional transformation~\cite{foster07:bidirectional} that can be cast in arbitrary monoidal categories.
The first mention of monoidal lenses separate from their classical database counterparts \cite{johnson2012lenses} is due to Pastro and Street \cite{pastro07}, who identify them as an example of a \promonoidalCategory{}. However, it was with a different monoidal structure \cite{riley2018categories} that they became popular in recent years, spawning applications not only in bidirectional transformations \cite{foster07:bidirectional} but also in functional programming \cite{pickering17:profunctoroptics,ClarkeRoman20:ProfunctorOptics}, open games \cite{ghani:compositionalgametheory2018}, polynomial functors \cite{niuspivak:polynomial} and quantum combs \cite{hefford_combs}.
Relating this monoidal category of lenses with the previous \promonoidalCategory{} of lenses was an open problem; and the \promonoidal{} structure was mostly ignored in applications.

We solve this problem, proving that lenses are a universal normal \symmetricProduoidal{} category (the symmetric \monoidalContexts{}), which endows them with a novel algebra and a novel universal property.
This also extends work on the relation between \emph{incomplete diagrams}, \emph{comb-shaped diagrams}, and \emph{lenses} \cite{roman2020,roman21:opendiagrams}.

Finally, Nester et al. have recently proposed a syntax for lenses and message-passing~\cite{nester23:processhistories,boisseaunester:corneringoptics} and lenses themselves have been applied to protocol specification~\cite{videlacapucci22}.
Spivak \cite{spivak13} also discusses the \multicategory{} of \emph{wiring diagrams}, later used for incomplete diagrams \cite{patterson21:wiringdiagrams} and related to lenses \cite{schultz20:dynamical}.
The \promonoidalCategories{} we use can be seen as \multicategories{} with an extra coherence property. In this sense, we contribute the missing algebraic structure of the universal multicategory of \emph{wiring diagrams relative to a monoidal category}.

\subsection{Contributions}
Our main contribution is the original definition of a \produoidalCategory{} of \emph{\monoidalContexts{}} over a \monoidalCategory{} (\Cref{def:monoidalcontext}) and its characterization in terms of universal properties (\Cref{th:monoidalContextsAreANormalization}).

\Cref{sec:profunctors} presents expository material on \profunctors{}, \dinaturality{} and \promonoidalCategories{}; the rest are novel contributions.
\Cref{sec:sequentialContext} constructs \splicedArrows{} as the cofree \promonoidal{} over a category (\Cref{th:catpromadj}).
\Cref{sec:parallelContext}, on top of this, constructs \splicedMonoidalArrows{} as the cofree \produoidal{} over a \monoidalCategory{} (\Cref{prop:produoidalSpliceContour}).
\Cref{sec:monoidalContexts} explicitly constructs a \produoidal{} algebra of \monoidalContexts{} (\Cref{prop:MonoidalContextProtensor}) as a free normalization. \Cref{sec:monoidallenses} constructs a symmetric \produoidal{} algebra of \monoidalLenses{} (\Cref{prop:monoidalLensesProduoidal}), universally characterizing them (\Cref{th:lensesuniversal}), and an interpretation of send/receive types ($\Send{}/\Get{}$) (\Cref{prop:sessionNotation}). \Cref{sec:normalization} introduces a novel free normalization procedure (\Cref{th:normalizationIdempotent,th:freeNormalProduoidal}) as an idempotent monad on \produoidalCategories{}, employed in \Cref{sec:monoidalContexts,sec:monoidallenses}.

\section{Profunctors and Virtual Structures}
\label{sec:profunctors}
\Profunctors{} describe families of processes indexed by the input and output types of a category.
Profunctors provide canonical notions for \emph{composition}, \emph{dinaturality} and \emph{virtual structure}. These notions are not only canonical, but also %
easy to reason with thanks to \emph{coend calculus} \cite{loregian2021}.

\begin{definition}
  A \defining{linkprofunctor}{\emph{profunctor}} $P \colon 𝔹₀ × \mydots × 𝔹ₘ ⧑ 𝔸₀ × \mydots × 𝔸ₙ$ is a functor $P \colon 𝔸₀^{op} \mydots × 𝔸ₙ^{op} × 𝔹₀ × \mydots × 𝔹ₘ \to \mathbf{Set}$.
\end{definition}

For our purposes, a profunctor $P(A₀, \mydots, Aₙ; B₀, \mydots, Bₘ)$ is a family of processes indexed by contravariant inputs $A₀, \mydots, Aₙ$ and covariant outputs $B₀, \mydots, Bₘ$. The profunctor is endowed with jointly functorial left $(≻_0, \mydots, ≻_m)$ and right $(≺_0, \mydots, ≺_n)$ actions of the morphisms of $𝔸₀, \mydots, 𝔸ₙ$ and $𝔹₀, \mydots, 𝔹ₘ$, respectively \cite{benabou00,loregian2021}.\footnote{We simply use $(≺/≻)$ without any subscript whenever the input/output is unique. See Appendix, \Cref{ax:sec:profunctors} for more details on profunctors.}

\subsection{Dinaturality}

Composing \profunctors{} is subtle: the same processes could arise as the composite of different pairs of processes and so, we need to impose a careful equivalence relation.
Fortunately, \profunctors{} come with a canonical notion of \dinaturalEquivalence{} which achieves precisely this.

Imagine we try to connect two different processes:
  $p \in P(A_0,\mydots,A_n;B_0,\dots,B_m)$, and
  $q \in Q(C_0,\mydots,C_k;D_0,\dots,D_h)$;
and we have some morphism $f \colon B_i \to C_j$ that translates the i-th output port of $p$ to the j-th input port of $q$. Let us write $(ᵢ|ⱼ)$ for this connection operation. Note that we could connect them in two different ways:
\begin{itemize}
  \item we could use $f$ to change \emph{the output of the first process} $p ≺_i f$ before connecting both, $(p ≺ᵢ f)\, {}_i|_j\, q$;
  \item and we could use $f$ to change
  \emph{the input of the second process} $f ≻_j q$ before connecting both,  $p\, {}_i|_j\, (f ≻_j q)$.
\end{itemize}
These are different descriptions, made up of two different components. However, they essentially describe the same process: they are \emph{dinaturally equal} \cite{monoidalstreams}.
Indeed, \profunctors{} are canonically endowed with a notion of \emph{dinatural equivalence} \cite{benabou00,loregian2021}, which precisely equates these two descriptions.
\Profunctors{}, and their elements, are thus composed \emph{up to dinatural equivalence}.

\begin{definition}[Dinatural equivalence]
  \defining{linkdinaturality}
  Consider two \profunctors{} $P \colon 𝔹₀×\mydots × 𝔹ₘ ⧑ 𝔸₀× \mydots ×𝔸ₙ$ and $Q \colon ℂ₀× \mydots ×ℂ_k ⧑ 𝔻₀ × \mydots × 𝔻_h$ such that $𝔹ᵢ = ℂⱼ$; and let $\mathbf{S}_{P,Q}^{i,j}(A;C)$ be the set
  $$\sum_{X \in 𝔹_i} P(A₀\mydots Aₙ; B₀ \mydots X \mydots Bₘ) × Q(C₀\mydots X \mydots C_k ; D₀\mydots D_h).$$
  \emph{Dinatural equivalence}, $(\sim)$, on the set $\mathbf{S}_{P,Q}^{i,j}(A;C)$ is the smallest equivalence relation satisfying $(p ≺ᵢ f\, {}_i|_j\, q) \sim (p\, {}_i|_j\, f ≻_j q)$.
  The \emph{coend} is defined as this coproduct quotiented by dinaturality, $\mathbf{S}_{P,Q}^{i,j}(A;C) / (\sim)$, and written as an integral.
  $$\int^{X \in ℂ} P(A₀\mydots Aₙ; B₀ \mydots X \mydots Bₘ) × Q(C₀\mydots X \mydots C_k ; D₀\mydots D_h).$$
\end{definition}

\begin{definition}[Profunctor composition]
  Consider two \profunctors{} $P \colon 𝔹₀×\mydots × 𝔹ₘ ⧑ 𝔸₀× \mydots ×𝔸ₙ$ and $Q \colon ℂ₀× \mydots ×ℂ_k ⧑ 𝔻₀ × \mydots × 𝔻_h$ such that $𝔹ᵢ = ℂⱼ$;
  their \emph{composition} along ports $i$ and $j$ is a profunctor; we write it marking this connection
  $$P(A_0\mydots A_n ; B_0 \mydots •_x \mydots B_n) \diamond
  Q(C₀\mydots \bullet_x \mydots C_k ; D₀\mydots D_h),$$
  and it is defined as the coproduct of the product of both profunctors, indexed by the common variable, and quotiented by \dinaturalEquivalence{},
  $$\int^{X \in ℂ} P(A₀\mydots Aₙ; B₀ \mydots X \mydots Bₘ) × Q(C₀\mydots X \mydots C_k ; D₀\mydots D_h).$$
\end{definition}

\begin{remark}[Representability]
  \label{rem:representability}
  \defining{linkrepresentable}{}
  Every functor $F \colon 𝔸 \to B$ gives rise to two different \profunctors{}: its representable \profunctor{} $𝔸(F•, •) \colon 𝔸 ⧒ 𝔹$, and its corepresentable \profunctor{} $𝔸(•, F•) \colon 𝔸 ⧑ 𝔹$. We say that a profunctor is \emph{representable} or \emph{corepresentable} if it arises in this way.
  Under this interpretation, functors are \profunctors{} that happen to be representable.
  This suggests that we can repeat structures based on functors, such as \monoidalCategories{}, now in terms of \profunctors{}.
\end{remark}

We justified in the introduction the importance of \monoidalCategories{}: they are the algebra of processes composing sequentially and in parallel, joining and splitting resources.
However, there exist some theories that can deal only with splitting without being necessarily full theories of processes: that is, we may be able to talk about splitting without being able to talk about joining. Such ``\monoidalCategories{} on one side'' are \emph{promonoidal categories}.

The difference between \monoidalCategories{} and \promonoidalCategories{} is that the tensor is no longer a functor but is instead a profunctor.\footnote{In more technical terms, monoidal categories are pseudomonoids in the monoidal bicategory of categories \emph{and functors}; while promonoidal categories are pseudomonoids in the monoidal bicategory of categories \emph{and profunctors}.} In other words, the tensor is no longer \representable{} – such a structure is called \emph{virtual}, as in \emph{virtual double} and \emph{virtual duoidal} categories \cite{cruttwell2010,nlab:duoidal}.

\subsection{Promonoidal Categories}
\defining{linkPromonoidalComponents}{}

\PromonoidalCategories{} are the algebra of \emph{coherent decomposition}. A category $ℂ$ contains sets of \emph{morphisms}, $ℂ(X ; Y)$. In the same way, a promonoidal category $𝕍$ contains sets of \emph{splits}, $𝕍(X; Y_0 ◁ Y_1)$, \emph{morphisms}, $𝕍(X; Y)$, and \emph{units}, $𝕍(X; N)$, where $N$ is the virtual tensor unit.
\Splits{}, $𝕍(X; Y_0 ◁ Y_1)$, represent a way of decomposing objects of type $X$ into objects of type $Y_0$ and $Y_1$. Morphisms, $𝕍(X; Y)$, as in any category, are transformations of $X$ into $Y$. \Units{}, $𝕍(X;N)$, are the atomic pieces of type $X$.

These decompositions must be coherent. For instance, imagine we want to split $X$ into $Y_0$, $Y_1$ and $Y_2$. Splitting $X$ into $Y_0$ and something $(•)$, and then splitting that something into $Y_1$ and $Y_2$ \emph{should be doable in essentially the same ways} as splitting $X$ into something $(•)$ and $Y_2$, and then splitting that something into $Y_0$ and $Y_1$. Formally, we are saying that,
$$𝕍(X ; Y_0 ◁ •) \diamond 𝕍(• ; Y_1 ◁ Y_2) \cong
𝕍(X ; • ◁ Y_2) \diamond 𝕍(• ; Y_0 ◁ Y_1),
$$
and, in fact, we just write $𝕍(X; Y_0◁ Y_1 ◁ Y_2)$ for the set of such decompositions.

\defining{linkpromonoidal}{} \label{def:promonoidal}

  \begin{definition}[Promonoidal category]
  A \emph{promonoidal category} is a category $𝕍(• ; •)$ endowed with \profunctors{}
  $$𝕍(• ;• ◁ •) \colon 𝕍 × 𝕍 ⧑ 𝕍, \mbox{ and } 𝕍(•; N) \colon 1 ⧑ 𝕍.$$
  Equivalently, these are functors
  $$𝕍(•; • ◁ •) \colon 𝕍\op × 𝕍 × 𝕍 \to \mathbf{Set}, \mbox{ and }
  𝕍(•; N) \colon 𝕍\op \to \mathbf{Set}.$$
  Moreover, \promonoidalCategories{} must be endowed with the following natural isomorphisms,
  \begin{align*}
    𝕍(X ;• ◁ Y_2) ⋄ 𝕍(• ; Y_0 ◁ Y_1) &≅
    𝕍(X ; • ◁ Y_2) \diamond 𝕍(• ; Y_0 ◁ Y_1),\\
    𝕍(X ; •◁ Y) ⋄ 𝕍(•;N) &\cong 𝕍(X; Y), \\
    𝕍(X ;Y ◁ •) ⋄ 𝕍(•;N) &\cong 𝕍(X; Y),
  \end{align*}
  called $α, \lambda, \rho$, respectively, and asked to satisfy the pentagon and triangle coherence equations, $α ⨾ α = (α⋄1)⨾α⨾(1⋄α)$, and $(ρ ⋄ 1) = α ⨾ (λ ⋄ 1)$.
  \end{definition}

  \begin{definition}[Promonoidal functor]
    \label{def:promonoidalfunctor}
    \defining{linkpromonoidalfunctor}{}
    Let $𝕍$ and $𝕎$ be \promonoidalCategories{}. A \emph{promonoidal functor} $F﹕ 𝕍(•,•) → 𝕎(•,•)$ is a functor between the two categories, together with natural transformations:
    $$\begin{aligned}
        & F_{◁}﹕ 𝕍(A;B ◁ C) → 𝕎(FA; FB ◁ FC), \mbox{ and } \\
        & F_{N} ﹕ 𝕍(A;N) → 𝕎(FA;N),
    \end{aligned}$$
    that satisfy $λ ⨾ F_\map = (F_{◁} × F_{N}) ⨾ λ$, $ρ ⨾ F_\map = (F_{◁} × F_{N}) ⨾ ρ$, and $α ⨾ (F_{◁} × F_{◁}) ⨾ i = (F_{◁} × F_{◁}) ⨾ i ⨾ α$. We denote by $\pMon$ the category of \promonoidalCategories{} and \promonoidalFunctors{}.
  \end{definition}

  \begin{remark}[Promonoidal coherence]
      As with \monoidalCategories{}, the pentagon and triangle equations imply that every formal equation written out of coherence isomorphisms holds. This means we can write $𝕍(•; • ◁ • ◁ •)$ without specifying which one of the two sides of the associator we are describing.
  \end{remark}

    \begin{remark}[Multicategories]
      \label{remark:multicategories}
    The reader may be more familiar with the algebra of not-necessarily-coherent decomposition: \emph{multicategories}.
    Every \promonoidalCategory{} $\mathbb{V}$ induces a co-\multicategory{} with morphisms given by elements of the following sets $𝕍(•; • ◁ \overset{n}\dots ◁ •)$.
    Similarly, $\mathbb{V}^\text{op}$ is a co-\promonoidalCategory{} and thus induces a \multicategory{}.
    These are special kinds of (co-)\multicategories{}, they are \emph{coherent} so that every $n$-to-1 morphism splits, in any possible shape, as 2-to-1 and 0-to-1 morphisms; moreover, they do so \emph{uniquely up to dinaturality}. Appendix \ref{ax:sec:multicategories} spells out this relation.
    \end{remark}

The next section studies how to coherently decompose morphisms of a category. Categories are an algebraic structure for sequential composition: they contain a ``sequencing'' operator $(⨾)$ and a neutral element, $\mathrm{id}$.  We present an algebra for decomposing sequential compositions in terms of \promonoidalCategories{}.

\section{Sequential context}
\label{sec:sequentialContext}

Assume a morphism factors as follows,
$$f_0 ⨾ g_0 ⨾ h ⨾ g_1 ⨾ f_1 ⨾ k ⨾ f_2.$$
We can say that this morphism came from the context $\trispliced{f_0}{f_1}{f_2}$, filled on its left side with the context $\bispliced{g_0}{g_1}$, then filled with $h$, and finally completed on its right side with the morphism $k$. \Cref{fig:contourExample} expresses this decomposition.
\begin{figure}[ht]
  \centering

\tikzset{every picture/.style={line width=0.75pt}} %

\begin{tikzpicture}[x=0.75pt,y=0.75pt,yscale=-1,xscale=1]
\draw   (135,35) -- (139.5,20) -- (200.5,20) -- (205,35) -- cycle ;
\draw    (170,20) -- (170,10) ;
\draw [color={rgb, 255:red, 191; green, 97; blue, 106 }  ,draw opacity=1 ]   (133.13,39.28) .. controls (131.96,38.82) and (130.98,38.09) .. (130,35) .. controls (128.54,30.42) and (128.57,25.58) .. (130,20) .. controls (131.43,14.42) and (139.67,15.06) .. (145,15) ;
\draw [shift={(135,40)}, rotate = 205.96] [color={rgb, 255:red, 191; green, 97; blue, 106 }  ,draw opacity=1 ][line width=0.75]    (4.37,-1.32) .. controls (2.78,-0.56) and (1.32,-0.12) .. (0,0) .. controls (1.32,0.12) and (2.78,0.56) .. (4.37,1.32)   ;
\draw [color={rgb, 255:red, 191; green, 97; blue, 106 }  ,draw opacity=1 ]   (155,40) -- (183,40) ;
\draw [shift={(185,40)}, rotate = 180] [color={rgb, 255:red, 191; green, 97; blue, 106 }  ,draw opacity=1 ][line width=0.75]    (4.37,-1.32) .. controls (2.78,-0.56) and (1.32,-0.12) .. (0,0) .. controls (1.32,0.12) and (2.78,0.56) .. (4.37,1.32)   ;
\draw [color={rgb, 255:red, 191; green, 97; blue, 106 }  ,draw opacity=1 ]   (205,40) .. controls (207.43,39.87) and (208.54,39.58) .. (210,35) .. controls (211.46,30.42) and (211.43,25.58) .. (210,20) .. controls (208.73,15.06) and (202.13,15) .. (196.94,15) ;
\draw [shift={(195,15)}, rotate = 0.66] [color={rgb, 255:red, 191; green, 97; blue, 106 }  ,draw opacity=1 ][line width=0.75]    (4.37,-1.32) .. controls (2.78,-0.56) and (1.32,-0.12) .. (0,0) .. controls (1.32,0.12) and (2.78,0.56) .. (4.37,1.32)   ;
\draw   (115,75) -- (119.5,60) -- (160.5,60) -- (165,75) -- cycle ;
\draw    (150,35) .. controls (150.17,44.98) and (140.17,49.98) .. (140,60) ;
\draw    (140,85) -- (140,75) ;
\draw [color={rgb, 255:red, 191; green, 97; blue, 106 }  ,draw opacity=1 ]   (113.13,79.28) .. controls (111.96,78.82) and (110.98,78.09) .. (110,75) .. controls (108.54,70.42) and (108.57,65.58) .. (110,60) .. controls (111.43,54.42) and (119.67,55.06) .. (125,55) ;
\draw [shift={(115,80)}, rotate = 205.96] [color={rgb, 255:red, 191; green, 97; blue, 106 }  ,draw opacity=1 ][line width=0.75]    (4.37,-1.32) .. controls (2.78,-0.56) and (1.32,-0.12) .. (0,0) .. controls (1.32,0.12) and (2.78,0.56) .. (4.37,1.32)   ;
\draw   (130.98,100.5) -- (135.48,85.5) -- (146.48,85.5) -- (150.98,100.5) -- cycle ;
\draw [color={rgb, 255:red, 191; green, 97; blue, 106 }  ,draw opacity=1 ]   (130.98,85.5) .. controls (127.31,94.29) and (125.6,105.47) .. (130.98,105.5) .. controls (136.37,105.53) and (145.6,105.47) .. (150.98,105.5) .. controls (156.05,105.52) and (156.29,96.19) .. (151.89,87.21) ;
\draw [shift={(150.98,85.5)}, rotate = 60.63] [color={rgb, 255:red, 191; green, 97; blue, 106 }  ,draw opacity=1 ][line width=0.75]    (4.37,-1.32) .. controls (2.78,-0.56) and (1.32,-0.12) .. (0,0) .. controls (1.32,0.12) and (2.78,0.56) .. (4.37,1.32)   ;
\draw    (190,35) .. controls (190.03,49.73) and (200.03,44.73) .. (200,60) ;
\draw  [draw opacity=0] (110,15) -- (130,15) -- (130,35) -- (110,35) -- cycle ;
\draw  [draw opacity=0] (160,35) -- (180,35) -- (180,55) -- (160,55) -- cycle ;
\draw  [draw opacity=0] (210,15) -- (230,15) -- (230,35) -- (210,35) -- cycle ;
\draw [color={rgb, 255:red, 191; green, 97; blue, 106 }  ,draw opacity=1 ]   (165,80) .. controls (167.43,79.87) and (168.54,79.58) .. (170,75) .. controls (171.46,70.42) and (171.43,65.58) .. (170,60) .. controls (168.73,55.06) and (162.13,55) .. (156.94,55) ;
\draw [shift={(155,55)}, rotate = 0.66] [color={rgb, 255:red, 191; green, 97; blue, 106 }  ,draw opacity=1 ][line width=0.75]    (4.37,-1.32) .. controls (2.78,-0.56) and (1.32,-0.12) .. (0,0) .. controls (1.32,0.12) and (2.78,0.56) .. (4.37,1.32)   ;
\draw  [draw opacity=0] (120.98,85.5) -- (140.98,85.5) -- (140.98,105.5) -- (120.98,105.5) -- cycle ;
\draw  [draw opacity=0] (90,50) -- (110,50) -- (110,70) -- (90,70) -- cycle ;
\draw  [draw opacity=0] (170,55) -- (190,55) -- (190,75) -- (170,75) -- cycle ;
\draw   (190,75.5) -- (194.5,60.5) -- (205.5,60.5) -- (210,75.5) -- cycle ;
\draw [color={rgb, 255:red, 191; green, 97; blue, 106 }  ,draw opacity=1 ]   (190,60.5) .. controls (186.32,69.29) and (184.61,80.47) .. (190,80.5) .. controls (195.39,80.53) and (204.61,80.47) .. (210,80.5) .. controls (215.06,80.52) and (215.31,71.19) .. (210.9,62.21) ;
\draw [shift={(210,60.5)}, rotate = 60.63] [color={rgb, 255:red, 191; green, 97; blue, 106 }  ,draw opacity=1 ][line width=0.75]    (4.37,-1.32) .. controls (2.78,-0.56) and (1.32,-0.12) .. (0,0) .. controls (1.32,0.12) and (2.78,0.56) .. (4.37,1.32)   ;
\draw  [draw opacity=0] (180,60.5) -- (200,60.5) -- (200,80.5) -- (180,80.5) -- cycle ;

\draw (170,27.5) node  [font=\tiny]  {$f_{0} ⨾ \square ⨾ f_{1} ⨾ \square ⨾ f_{2}$};
\draw (120,25) node  [font=\tiny,color={rgb, 255:red, 191; green, 97; blue, 106 }  ,opacity=1 ]  {$f_{0}$};
\draw (140,67.5) node  [font=\tiny]  {$g_{0} ⨾ \square ⨾ g_{1}$};
\draw (140.98,93) node  [font=\tiny]  {$h$};
\draw (170,45) node  [font=\tiny,color={rgb, 255:red, 191; green, 97; blue, 106 }  ,opacity=1 ]  {$f_{1}$};
\draw (220,25) node  [font=\tiny,color={rgb, 255:red, 191; green, 97; blue, 106 }  ,opacity=1 ]  {$f_{2}$};
\draw (140.98,110.5) node  [font=\tiny,color={rgb, 255:red, 191; green, 97; blue, 106 }  ,opacity=1 ]  {$h$};
\draw (100,65) node  [font=\tiny,color={rgb, 255:red, 191; green, 97; blue, 106 }  ,opacity=1 ]  {$g_{0}$};
\draw (180,65) node  [font=\tiny,color={rgb, 255:red, 191; green, 97; blue, 106 }  ,opacity=1 ]  {$g_{1}$};
\draw (200,68) node  [font=\tiny]  {$k$};
\draw (200,85.5) node  [font=\tiny,color={rgb, 255:red, 191; green, 97; blue, 106 }  ,opacity=1 ]  {$k$};

\end{tikzpicture}
   \caption{Decomposition of $f_0 ⨾ g_0 ⨾ h ⨾ g_1 ⨾ f_1 ⨾ k ⨾ f_2$.}
  \label{fig:contourExample}
\end{figure}

Contexts compose in a tree-like structure, and their resulting morphism is extracted by \emph{contouring} that tree. This section presents the algebra of \emph{context} and \emph{decomposition}. We then prove that they are two sides of the same coin: the two sides of an adjunction of categories.

\subsection{Contour of a Promonoidal Category}

Any \promonoidalCategory{} freely generates another category, its \emph{contour}. This can be interpreted as the category that tracks the processes of decomposition that the \promonoidalCategory{} describes. The construction is particularly pleasant from the geometric point of view: it takes its name from the fact that it can be constructed by following the contour of the shape of the decomposition.

\begin{definition}[Contour] 
  \label{defn:contour}
  \defining{linkContour}{}
    The \emph{contour} of a \promonoidalCategory{} $𝕍$ is a category $\mathcal{C}𝕍$ that has two objects, $X^L$ (left-handed) and $X^R$ (right-handed), for each object $X \in \obj{𝕍}$; and has as arrows those that arise from \emph{contouring} the decompositions of the promonoidal category.
    \begin{figure}[ht]
        \centering

\tikzset{every picture/.style={line width=0.75pt}} %

\begin{tikzpicture}[x=0.75pt,y=0.75pt,yscale=-1,xscale=1]
\draw   (55,60) -- (59.5,45) -- (70.5,45) -- (75,60) -- cycle ;
\draw    (65,45) -- (65,35) ;
\draw [color={rgb, 255:red, 191; green, 97; blue, 106 }  ,draw opacity=1 ]   (55,45) .. controls (51.32,53.79) and (49.61,64.97) .. (55,65) .. controls (60.39,65.03) and (69.61,64.97) .. (75,65) .. controls (80.06,65.02) and (80.31,55.69) .. (75.9,46.71) ;
\draw [shift={(75,45)}, rotate = 60.63] [color={rgb, 255:red, 191; green, 97; blue, 106 }  ,draw opacity=1 ][line width=0.75]    (4.37,-1.32) .. controls (2.78,-0.56) and (1.32,-0.12) .. (0,0) .. controls (1.32,0.12) and (2.78,0.56) .. (4.37,1.32)   ;
\draw  [draw opacity=0] (55,60) -- (75,60) -- (75,80) -- (55,80) -- cycle ;
\draw    (125,45) -- (125,35) ;
\draw   (115,45) -- (135,45) -- (135,60) -- (115,60) -- cycle ;
\draw    (125,70) -- (125,60) ;
\draw [color={rgb, 255:red, 191; green, 97; blue, 106 }  ,draw opacity=1 ]   (117.96,64.93) .. controls (112.06,64.68) and (111.32,64.14) .. (110,60) .. controls (108.54,55.42) and (108.57,50.58) .. (110,45) .. controls (111.43,39.42) and (114.67,40.06) .. (120,40) ;
\draw [shift={(120,65)}, rotate = 181.8] [color={rgb, 255:red, 191; green, 97; blue, 106 }  ,draw opacity=1 ][line width=0.75]    (4.37,-1.32) .. controls (2.78,-0.56) and (1.32,-0.12) .. (0,0) .. controls (1.32,0.12) and (2.78,0.56) .. (4.37,1.32)   ;
\draw [color={rgb, 255:red, 191; green, 97; blue, 106 }  ,draw opacity=1 ]   (132.04,39.86) .. controls (136.32,39.74) and (138.73,41.02) .. (140,45) .. controls (141.46,49.58) and (141.43,54.42) .. (140,60) .. controls (138.57,65.58) and (135.33,64.94) .. (130,65) ;
\draw [shift={(130,40)}, rotate = 354.07] [color={rgb, 255:red, 191; green, 97; blue, 106 }  ,draw opacity=1 ][line width=0.75]    (4.37,-1.32) .. controls (2.78,-0.56) and (1.32,-0.12) .. (0,0) .. controls (1.32,0.12) and (2.78,0.56) .. (4.37,1.32)   ;
\draw   (180,60) -- (184.5,45) -- (205.5,45) -- (210,60) -- cycle ;
\draw    (185,70) -- (185,60) ;
\draw    (205,70) -- (205,60) ;
\draw    (195,45) -- (195,35) ;
\draw [color={rgb, 255:red, 191; green, 97; blue, 106 }  ,draw opacity=1 ]   (178.13,64.28) .. controls (176.96,63.82) and (175.98,63.09) .. (175,60) .. controls (173.54,55.42) and (173.57,50.58) .. (175,45) .. controls (176.43,39.42) and (184.67,40.06) .. (190,40) ;
\draw [shift={(180,65)}, rotate = 205.96] [color={rgb, 255:red, 191; green, 97; blue, 106 }  ,draw opacity=1 ][line width=0.75]    (4.37,-1.32) .. controls (2.78,-0.56) and (1.32,-0.12) .. (0,0) .. controls (1.32,0.12) and (2.78,0.56) .. (4.37,1.32)   ;
\draw [color={rgb, 255:red, 191; green, 97; blue, 106 }  ,draw opacity=1 ]   (190,65) -- (198,65) ;
\draw [shift={(200,65)}, rotate = 180] [color={rgb, 255:red, 191; green, 97; blue, 106 }  ,draw opacity=1 ][line width=0.75]    (4.37,-1.32) .. controls (2.78,-0.56) and (1.32,-0.12) .. (0,0) .. controls (1.32,0.12) and (2.78,0.56) .. (4.37,1.32)   ;
\draw [color={rgb, 255:red, 191; green, 97; blue, 106 }  ,draw opacity=1 ]   (210,65) .. controls (212.43,64.87) and (213.54,64.58) .. (215,60) .. controls (216.46,55.42) and (216.43,50.58) .. (215,45) .. controls (213.73,40.06) and (207.13,40) .. (201.94,40) ;
\draw [shift={(200,40)}, rotate = 0.66] [color={rgb, 255:red, 191; green, 97; blue, 106 }  ,draw opacity=1 ][line width=0.75]    (4.37,-1.32) .. controls (2.78,-0.56) and (1.32,-0.12) .. (0,0) .. controls (1.32,0.12) and (2.78,0.56) .. (4.37,1.32)   ;

\draw (65,52.5) node  [font=\footnotesize]  {$a$};
\draw (65,70) node  [font=\tiny,color={rgb, 255:red, 191; green, 97; blue, 106 }  ,opacity=1 ]  {$a_{0}$};
\draw (125,52.5) node  [font=\footnotesize]  {$b$};
\draw (102.5,52.5) node  [font=\tiny,color={rgb, 255:red, 191; green, 97; blue, 106 }  ,opacity=1 ]  {$b_{0}$};
\draw (147.5,52.5) node  [font=\tiny,color={rgb, 255:red, 191; green, 97; blue, 106 }  ,opacity=1 ]  {$b_{1}$};
\draw (195,52.5) node  [font=\footnotesize]  {$c$};
\draw (167.5,52.5) node  [font=\tiny,color={rgb, 255:red, 191; green, 97; blue, 106 }  ,opacity=1 ]  {$c_{0}$};
\draw (195,72.5) node  [font=\tiny,color={rgb, 255:red, 191; green, 97; blue, 106 }  ,opacity=1 ]  {$c_{1}$};
\draw (222.5,52.5) node  [font=\tiny,color={rgb, 255:red, 191; green, 97; blue, 106 }  ,opacity=1 ]  {$c_{2}$};
\draw (65,31.6) node [anchor=south] [inner sep=0.75pt]  [font=\scriptsize]  {$A$};
\draw (125,31.6) node [anchor=south] [inner sep=0.75pt]  [font=\scriptsize]  {$B$};
\draw (195,31.6) node [anchor=south] [inner sep=0.75pt]  [font=\scriptsize]  {$C$};
\draw (125,73.4) node [anchor=north] [inner sep=0.75pt]  [font=\scriptsize]  {$X$};
\draw (185,73.4) node [anchor=north] [inner sep=0.75pt]  [font=\scriptsize]  {$Y$};
\draw (205,73.4) node [anchor=north] [inner sep=0.75pt]  [font=\scriptsize]  {$Z$};

\end{tikzpicture}
         \caption{Contour of a promonoidal.}
        \label{fig:promonoidalContour}
    \end{figure}
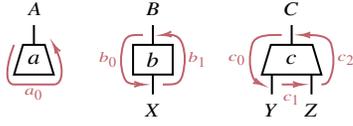

    Specifically, it is freely presented by \emph{(i)} a morphism $a_0 \in 𝓒𝕍(A^L; A^R)$, for each \unitP{} $a \in 𝕍(A;N)$; \emph{(ii)}  a pair of morphisms $b_0 \in 𝓒𝕍(B^L; X^L)$, $b_1 \in 𝓒𝕍(X^R;B^R)$, for each morphism $b \in 𝕍(B;X)$; and \emph{(iii)} a triple of morphisms $c_0 \in 𝓒𝕍(C^L; Y^L)$, $c_1 \in 𝓒𝕍(Y^R; Z^L)$, $c_2 \in 𝓒𝕍(Z^R; C^R)$ for each \splitP{} $c \in 𝕍(C;Y ◁ Z)$, see \Cref{fig:promonoidalContour}.

    For each equality $α(a \mathbin{|} b) = (c \mathbin{|} d)$, we impose the equations $a_0 = c_0 ⨾ d_0$; $a_1 ⨾ b_0 = d_1$ and $b_1 = d_2 ⨾ c_1$; $a_2 ⨾ b_2 = c_2$. For each equality $ρ(a \mathbin{|} b) = c = \lambda(d \mathbin{|} e)$, we impose $a_0 = c_0 = d_0 ⨾ e_0 ⨾ d_1$ and $a_1 ⨾ b_0 ⨾ a_2 = c_1 = d_2$.
    Graphically, these follow \Cref{fig:promonoidalContourCompose}.
    \begin{figure}[ht]
        \centering 

\tikzset{every picture/.style={line width=0.75pt}} %

\begin{tikzpicture}[x=0.75pt,y=0.75pt,yscale=-1,xscale=1]
\draw   (155,35) -- (159.5,20) -- (180.5,20) -- (185,35) -- cycle ;
\draw    (170,20) -- (170,10) ;
\draw [color={rgb, 255:red, 191; green, 97; blue, 106 }  ,draw opacity=1 ]   (153.13,39.28) .. controls (151.96,38.82) and (150.98,38.09) .. (150,35) .. controls (148.54,30.42) and (148.57,25.58) .. (150,20) .. controls (151.43,14.42) and (159.67,15.06) .. (165,15) ;
\draw [shift={(155,40)}, rotate = 205.96] [color={rgb, 255:red, 191; green, 97; blue, 106 }  ,draw opacity=1 ][line width=0.75]    (4.37,-1.32) .. controls (2.78,-0.56) and (1.32,-0.12) .. (0,0) .. controls (1.32,0.12) and (2.78,0.56) .. (4.37,1.32)   ;
\draw [color={rgb, 255:red, 191; green, 97; blue, 106 }  ,draw opacity=1 ]   (165,40) -- (173,40) ;
\draw [shift={(175,40)}, rotate = 180] [color={rgb, 255:red, 191; green, 97; blue, 106 }  ,draw opacity=1 ][line width=0.75]    (4.37,-1.32) .. controls (2.78,-0.56) and (1.32,-0.12) .. (0,0) .. controls (1.32,0.12) and (2.78,0.56) .. (4.37,1.32)   ;
\draw [color={rgb, 255:red, 191; green, 97; blue, 106 }  ,draw opacity=1 ]   (185,40) .. controls (187.43,39.87) and (188.54,39.58) .. (190,35) .. controls (191.46,30.42) and (191.43,25.58) .. (190,20) .. controls (188.73,15.06) and (182.13,15) .. (176.94,15) ;
\draw [shift={(175,15)}, rotate = 0.66] [color={rgb, 255:red, 191; green, 97; blue, 106 }  ,draw opacity=1 ][line width=0.75]    (4.37,-1.32) .. controls (2.78,-0.56) and (1.32,-0.12) .. (0,0) .. controls (1.32,0.12) and (2.78,0.56) .. (4.37,1.32)   ;
\draw   (170,75) -- (174.5,60) -- (195.5,60) -- (200,75) -- cycle ;
\draw    (175,85) -- (175,75) ;
\draw    (195,85) -- (195,75) ;
\draw [color={rgb, 255:red, 191; green, 97; blue, 106 }  ,draw opacity=1 ]   (168.13,79.28) .. controls (166.96,78.82) and (165.98,78.09) .. (165,75) .. controls (163.54,70.42) and (163.57,65.58) .. (165,60) .. controls (166.43,54.42) and (174.67,55.06) .. (180,55) ;
\draw [shift={(170,80)}, rotate = 205.96] [color={rgb, 255:red, 191; green, 97; blue, 106 }  ,draw opacity=1 ][line width=0.75]    (4.37,-1.32) .. controls (2.78,-0.56) and (1.32,-0.12) .. (0,0) .. controls (1.32,0.12) and (2.78,0.56) .. (4.37,1.32)   ;
\draw [color={rgb, 255:red, 191; green, 97; blue, 106 }  ,draw opacity=1 ]   (180,80) -- (188,80) ;
\draw [shift={(190,80)}, rotate = 180] [color={rgb, 255:red, 191; green, 97; blue, 106 }  ,draw opacity=1 ][line width=0.75]    (4.37,-1.32) .. controls (2.78,-0.56) and (1.32,-0.12) .. (0,0) .. controls (1.32,0.12) and (2.78,0.56) .. (4.37,1.32)   ;
\draw [color={rgb, 255:red, 191; green, 97; blue, 106 }  ,draw opacity=1 ]   (200,80) .. controls (202.43,79.87) and (203.54,79.58) .. (205,75) .. controls (206.46,70.42) and (206.43,65.58) .. (205,60) .. controls (203.73,55.06) and (197.13,55) .. (191.94,55) ;
\draw [shift={(190,55)}, rotate = 0.66] [color={rgb, 255:red, 191; green, 97; blue, 106 }  ,draw opacity=1 ][line width=0.75]    (4.37,-1.32) .. controls (2.78,-0.56) and (1.32,-0.12) .. (0,0) .. controls (1.32,0.12) and (2.78,0.56) .. (4.37,1.32)   ;
\draw    (180,35) .. controls (180.03,49.73) and (185.03,44.73) .. (185,60) ;
\draw    (160,35) .. controls (160.03,49.73) and (150.03,44.73) .. (150,60) ;
\draw    (150,85) -- (150,60) ;
\draw   (250,35) -- (254.5,20) -- (275.5,20) -- (280,35) -- cycle ;
\draw    (265,20) -- (265,10) ;
\draw [color={rgb, 255:red, 191; green, 97; blue, 106 }  ,draw opacity=1 ]   (248.13,39.28) .. controls (246.96,38.82) and (245.98,38.09) .. (245,35) .. controls (243.54,30.42) and (243.57,25.58) .. (245,20) .. controls (246.43,14.42) and (254.67,15.06) .. (260,15) ;
\draw [shift={(250,40)}, rotate = 205.96] [color={rgb, 255:red, 191; green, 97; blue, 106 }  ,draw opacity=1 ][line width=0.75]    (4.37,-1.32) .. controls (2.78,-0.56) and (1.32,-0.12) .. (0,0) .. controls (1.32,0.12) and (2.78,0.56) .. (4.37,1.32)   ;
\draw [color={rgb, 255:red, 191; green, 97; blue, 106 }  ,draw opacity=1 ]   (260,40) -- (268,40) ;
\draw [shift={(270,40)}, rotate = 180] [color={rgb, 255:red, 191; green, 97; blue, 106 }  ,draw opacity=1 ][line width=0.75]    (4.37,-1.32) .. controls (2.78,-0.56) and (1.32,-0.12) .. (0,0) .. controls (1.32,0.12) and (2.78,0.56) .. (4.37,1.32)   ;
\draw [color={rgb, 255:red, 191; green, 97; blue, 106 }  ,draw opacity=1 ]   (280,40) .. controls (282.43,39.87) and (283.54,39.58) .. (285,35) .. controls (286.46,30.42) and (286.43,25.58) .. (285,20) .. controls (283.73,15.06) and (277.13,15) .. (271.94,15) ;
\draw [shift={(270,15)}, rotate = 0.66] [color={rgb, 255:red, 191; green, 97; blue, 106 }  ,draw opacity=1 ][line width=0.75]    (4.37,-1.32) .. controls (2.78,-0.56) and (1.32,-0.12) .. (0,0) .. controls (1.32,0.12) and (2.78,0.56) .. (4.37,1.32)   ;
\draw   (235,75) -- (239.5,60) -- (260.5,60) -- (265,75) -- cycle ;
\draw    (240,85) -- (240,75) ;
\draw    (260,85) -- (260,75) ;
\draw [color={rgb, 255:red, 191; green, 97; blue, 106 }  ,draw opacity=1 ]   (233.13,79.28) .. controls (231.96,78.82) and (230.98,78.09) .. (230,75) .. controls (228.54,70.42) and (228.57,65.58) .. (230,60) .. controls (231.43,54.42) and (239.67,55.06) .. (245,55) ;
\draw [shift={(235,80)}, rotate = 205.96] [color={rgb, 255:red, 191; green, 97; blue, 106 }  ,draw opacity=1 ][line width=0.75]    (4.37,-1.32) .. controls (2.78,-0.56) and (1.32,-0.12) .. (0,0) .. controls (1.32,0.12) and (2.78,0.56) .. (4.37,1.32)   ;
\draw [color={rgb, 255:red, 191; green, 97; blue, 106 }  ,draw opacity=1 ]   (245,80) -- (253,80) ;
\draw [shift={(255,80)}, rotate = 180] [color={rgb, 255:red, 191; green, 97; blue, 106 }  ,draw opacity=1 ][line width=0.75]    (4.37,-1.32) .. controls (2.78,-0.56) and (1.32,-0.12) .. (0,0) .. controls (1.32,0.12) and (2.78,0.56) .. (4.37,1.32)   ;
\draw [color={rgb, 255:red, 191; green, 97; blue, 106 }  ,draw opacity=1 ]   (265,80) .. controls (267.43,79.87) and (268.54,79.58) .. (270,75) .. controls (271.46,70.42) and (271.43,65.58) .. (270,60) .. controls (268.73,55.06) and (262.13,55) .. (256.94,55) ;
\draw [shift={(255,55)}, rotate = 0.66] [color={rgb, 255:red, 191; green, 97; blue, 106 }  ,draw opacity=1 ][line width=0.75]    (4.37,-1.32) .. controls (2.78,-0.56) and (1.32,-0.12) .. (0,0) .. controls (1.32,0.12) and (2.78,0.56) .. (4.37,1.32)   ;
\draw    (255,35) .. controls (255.03,49.73) and (250.03,44.73) .. (250,60) ;
\draw    (275,35) .. controls (275.03,49.73) and (285.03,44.73) .. (285,60) ;
\draw    (285,85) -- (285,60) ;
\draw  [draw opacity=0] (205,40) -- (230,40) -- (230,60) -- (205,60) -- cycle ;
\draw  [draw opacity=0] (280,40) -- (305,40) -- (305,60) -- (280,60) -- cycle ;
\draw   (135,115) -- (139.5,100) -- (160.5,100) -- (165,115) -- cycle ;
\draw    (150,100) -- (150,90) ;
\draw [color={rgb, 255:red, 191; green, 97; blue, 106 }  ,draw opacity=1 ]   (133.13,119.28) .. controls (131.96,118.82) and (130.98,118.09) .. (130,115) .. controls (128.54,110.42) and (128.57,105.58) .. (130,100) .. controls (131.43,94.42) and (139.67,95.06) .. (145,95) ;
\draw [shift={(135,120)}, rotate = 205.96] [color={rgb, 255:red, 191; green, 97; blue, 106 }  ,draw opacity=1 ][line width=0.75]    (4.37,-1.32) .. controls (2.78,-0.56) and (1.32,-0.12) .. (0,0) .. controls (1.32,0.12) and (2.78,0.56) .. (4.37,1.32)   ;
\draw [color={rgb, 255:red, 191; green, 97; blue, 106 }  ,draw opacity=1 ]   (145,120) -- (153,120) ;
\draw [shift={(155,120)}, rotate = 180] [color={rgb, 255:red, 191; green, 97; blue, 106 }  ,draw opacity=1 ][line width=0.75]    (4.37,-1.32) .. controls (2.78,-0.56) and (1.32,-0.12) .. (0,0) .. controls (1.32,0.12) and (2.78,0.56) .. (4.37,1.32)   ;
\draw [color={rgb, 255:red, 191; green, 97; blue, 106 }  ,draw opacity=1 ]   (165,120) .. controls (167.43,119.87) and (168.54,119.58) .. (170,115) .. controls (171.46,110.42) and (171.43,105.58) .. (170,100) .. controls (168.73,95.06) and (162.13,95) .. (156.94,95) ;
\draw [shift={(155,95)}, rotate = 0.66] [color={rgb, 255:red, 191; green, 97; blue, 106 }  ,draw opacity=1 ][line width=0.75]    (4.37,-1.32) .. controls (2.78,-0.56) and (1.32,-0.12) .. (0,0) .. controls (1.32,0.12) and (2.78,0.56) .. (4.37,1.32)   ;
\draw    (160,115) .. controls (160.03,129.73) and (165.03,124.73) .. (165,140) ;
\draw    (140,115) .. controls (140.03,129.73) and (130.03,124.73) .. (130,140) ;
\draw    (130,165) -- (130,140) ;
\draw   (155,155) -- (159.5,140) -- (170.5,140) -- (175,155) -- cycle ;
\draw [color={rgb, 255:red, 191; green, 97; blue, 106 }  ,draw opacity=1 ]   (155,140) .. controls (151.32,148.79) and (149.61,159.97) .. (155,160) .. controls (160.39,160.03) and (169.61,159.97) .. (175,160) .. controls (180.06,160.02) and (180.31,150.69) .. (175.9,141.71) ;
\draw [shift={(175,140)}, rotate = 60.63] [color={rgb, 255:red, 191; green, 97; blue, 106 }  ,draw opacity=1 ][line width=0.75]    (4.37,-1.32) .. controls (2.78,-0.56) and (1.32,-0.12) .. (0,0) .. controls (1.32,0.12) and (2.78,0.56) .. (4.37,1.32)   ;
\draw  [draw opacity=0] (170,120) -- (195,120) -- (195,140) -- (170,140) -- cycle ;
\draw    (217,120) -- (217,105) ;
\draw   (207,120) -- (227,120) -- (227,135) -- (207,135) -- cycle ;
\draw    (217,150) -- (217,135) ;
\draw [color={rgb, 255:red, 191; green, 97; blue, 106 }  ,draw opacity=1 ]   (209.96,139.93) .. controls (204.06,139.68) and (203.32,139.14) .. (202,135) .. controls (200.54,130.42) and (200.57,125.58) .. (202,120) .. controls (203.43,114.42) and (206.67,115.06) .. (212,115) ;
\draw [shift={(212,140)}, rotate = 181.8] [color={rgb, 255:red, 191; green, 97; blue, 106 }  ,draw opacity=1 ][line width=0.75]    (4.37,-1.32) .. controls (2.78,-0.56) and (1.32,-0.12) .. (0,0) .. controls (1.32,0.12) and (2.78,0.56) .. (4.37,1.32)   ;
\draw [color={rgb, 255:red, 191; green, 97; blue, 106 }  ,draw opacity=1 ]   (224.04,114.86) .. controls (228.32,114.74) and (230.73,116.02) .. (232,120) .. controls (233.46,124.58) and (233.43,129.42) .. (232,135) .. controls (230.57,140.58) and (227.33,139.94) .. (222,140) ;
\draw [shift={(222,115)}, rotate = 354.07] [color={rgb, 255:red, 191; green, 97; blue, 106 }  ,draw opacity=1 ][line width=0.75]    (4.37,-1.32) .. controls (2.78,-0.56) and (1.32,-0.12) .. (0,0) .. controls (1.32,0.12) and (2.78,0.56) .. (4.37,1.32)   ;
\draw   (270,115) -- (274.5,100) -- (295.5,100) -- (300,115) -- cycle ;
\draw    (285,100) -- (285,90) ;
\draw [color={rgb, 255:red, 191; green, 97; blue, 106 }  ,draw opacity=1 ]   (268.13,119.28) .. controls (266.96,118.82) and (265.98,118.09) .. (265,115) .. controls (263.54,110.42) and (263.57,105.58) .. (265,100) .. controls (266.43,94.42) and (274.67,95.06) .. (280,95) ;
\draw [shift={(270,120)}, rotate = 205.96] [color={rgb, 255:red, 191; green, 97; blue, 106 }  ,draw opacity=1 ][line width=0.75]    (4.37,-1.32) .. controls (2.78,-0.56) and (1.32,-0.12) .. (0,0) .. controls (1.32,0.12) and (2.78,0.56) .. (4.37,1.32)   ;
\draw [color={rgb, 255:red, 191; green, 97; blue, 106 }  ,draw opacity=1 ]   (280,120) -- (288,120) ;
\draw [shift={(290,120)}, rotate = 180] [color={rgb, 255:red, 191; green, 97; blue, 106 }  ,draw opacity=1 ][line width=0.75]    (4.37,-1.32) .. controls (2.78,-0.56) and (1.32,-0.12) .. (0,0) .. controls (1.32,0.12) and (2.78,0.56) .. (4.37,1.32)   ;
\draw [color={rgb, 255:red, 191; green, 97; blue, 106 }  ,draw opacity=1 ]   (300,120) .. controls (302.43,119.87) and (303.54,119.58) .. (305,115) .. controls (306.46,110.42) and (306.43,105.58) .. (305,100) .. controls (303.73,95.06) and (297.13,95) .. (291.94,95) ;
\draw [shift={(290,95)}, rotate = 0.66] [color={rgb, 255:red, 191; green, 97; blue, 106 }  ,draw opacity=1 ][line width=0.75]    (4.37,-1.32) .. controls (2.78,-0.56) and (1.32,-0.12) .. (0,0) .. controls (1.32,0.12) and (2.78,0.56) .. (4.37,1.32)   ;
\draw    (275,115) .. controls (275.03,129.73) and (270.03,124.73) .. (270,140) ;
\draw    (295,115) .. controls (295.03,129.73) and (305.03,124.73) .. (305,140) ;
\draw    (305,165) -- (305,140) ;
\draw  [draw opacity=0] (305,120) -- (330,120) -- (330,140) -- (305,140) -- cycle ;
\draw  [draw opacity=0] (240,120) -- (260,120) -- (260,140) -- (240,140) -- cycle ;
\draw   (260,155) -- (264.5,140) -- (275.5,140) -- (280,155) -- cycle ;
\draw [color={rgb, 255:red, 191; green, 97; blue, 106 }  ,draw opacity=1 ]   (260,139.5) .. controls (256.32,148.29) and (254.61,159.47) .. (260,159.5) .. controls (265.39,159.53) and (274.61,159.47) .. (280,159.5) .. controls (285.06,159.52) and (285.31,150.19) .. (280.9,141.21) ;
\draw [shift={(280,139.5)}, rotate = 60.63] [color={rgb, 255:red, 191; green, 97; blue, 106 }  ,draw opacity=1 ][line width=0.75]    (4.37,-1.32) .. controls (2.78,-0.56) and (1.32,-0.12) .. (0,0) .. controls (1.32,0.12) and (2.78,0.56) .. (4.37,1.32)   ;

\draw (170,27.5) node  [font=\footnotesize]  {$a$};
\draw (142.5,27.5) node  [font=\tiny,color={rgb, 255:red, 191; green, 97; blue, 106 }  ,opacity=1 ]  {$a_{0}$};
\draw (170,47.5) node  [font=\tiny,color={rgb, 255:red, 191; green, 97; blue, 106 }  ,opacity=1 ]  {$a_{1}$};
\draw (197.5,27.5) node  [font=\tiny,color={rgb, 255:red, 191; green, 97; blue, 106 }  ,opacity=1 ]  {$a_{2}$};
\draw (185,67.5) node  [font=\footnotesize]  {$b$};
\draw (157.5,67.5) node  [font=\tiny,color={rgb, 255:red, 191; green, 97; blue, 106 }  ,opacity=1 ]  {$b_{0}$};
\draw (185,87.5) node  [font=\tiny,color={rgb, 255:red, 191; green, 97; blue, 106 }  ,opacity=1 ]  {$b_{1}$};
\draw (212.5,67.5) node  [font=\tiny,color={rgb, 255:red, 191; green, 97; blue, 106 }  ,opacity=1 ]  {$b_{2}$};
\draw (265,27.5) node  [font=\footnotesize]  {$c$};
\draw (237.5,27.5) node  [font=\tiny,color={rgb, 255:red, 191; green, 97; blue, 106 }  ,opacity=1 ]  {$c_{0}$};
\draw (265,47.5) node  [font=\tiny,color={rgb, 255:red, 191; green, 97; blue, 106 }  ,opacity=1 ]  {$c_{1}$};
\draw (292.5,27.5) node  [font=\tiny,color={rgb, 255:red, 191; green, 97; blue, 106 }  ,opacity=1 ]  {$c_{2}$};
\draw (250,67.5) node  [font=\footnotesize]  {$d$};
\draw (223.5,67.5) node  [font=\tiny,color={rgb, 255:red, 191; green, 97; blue, 106 }  ,opacity=1 ]  {$d_{0}$};
\draw (250,87.5) node  [font=\tiny,color={rgb, 255:red, 191; green, 97; blue, 106 }  ,opacity=1 ]  {$d_{1}$};
\draw (277.5,67.5) node  [font=\tiny,color={rgb, 255:red, 191; green, 97; blue, 106 }  ,opacity=1 ]  {$d_{2}$};
\draw (217.5,50) node  [font=\footnotesize]  {$=$};
\draw (297.5,50) node  [font=\footnotesize]  {$;$};
\draw (150,107.5) node  [font=\footnotesize]  {$a$};
\draw (122.5,107.5) node  [font=\tiny,color={rgb, 255:red, 191; green, 97; blue, 106 }  ,opacity=1 ]  {$a_{0}$};
\draw (150,127.5) node  [font=\tiny,color={rgb, 255:red, 191; green, 97; blue, 106 }  ,opacity=1 ]  {$a_{1}$};
\draw (177.5,107.5) node  [font=\tiny,color={rgb, 255:red, 191; green, 97; blue, 106 }  ,opacity=1 ]  {$a_{2}$};
\draw (165.98,165) node  [font=\tiny,color={rgb, 255:red, 191; green, 97; blue, 106 }  ,opacity=1 ]  {$b_{0}$};
\draw (165,147.5) node  [font=\footnotesize]  {$b$};
\draw (182.5,130) node  [font=\footnotesize]  {$=$};
\draw (217,127.5) node  [font=\footnotesize]  {$c$};
\draw (194.5,127.5) node  [font=\tiny,color={rgb, 255:red, 191; green, 97; blue, 106 }  ,opacity=1 ]  {$c_{0}$};
\draw (239.5,130) node  [font=\tiny,color={rgb, 255:red, 191; green, 97; blue, 106 }  ,opacity=1 ]  {$c_{1}$};
\draw (285,107.5) node  [font=\footnotesize]  {$d$};
\draw (257.5,107.5) node  [font=\tiny,color={rgb, 255:red, 191; green, 97; blue, 106 }  ,opacity=1 ]  {$d_{0}$};
\draw (285,127.5) node  [font=\tiny,color={rgb, 255:red, 191; green, 97; blue, 106 }  ,opacity=1 ]  {$d_{1}$};
\draw (312.5,107.5) node  [font=\tiny,color={rgb, 255:red, 191; green, 97; blue, 106 }  ,opacity=1 ]  {$d_{2}$};
\draw (317.5,130) node  [font=\footnotesize]  {$;$};
\draw (250,130) node  [font=\footnotesize]  {$=$};
\draw (270.98,164.5) node  [font=\tiny,color={rgb, 255:red, 191; green, 97; blue, 106 }  ,opacity=1 ]  {$e_{0}$};
\draw (270,147.5) node  [font=\footnotesize]  {$e$};

\end{tikzpicture}
         \caption{Equations between contours from $α, ρ,$ and $λ$ in $𝕍$.}
        \label{fig:promonoidalContourCompose}
      \end{figure}
    \end{definition}
  
  \begin{proposition} \label{prop:contourFunctor}
    Contour gives a functor $𝓒 : \pMon → \Cat$.
  \end{proposition}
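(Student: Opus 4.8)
The plan is to exhibit $\mathcal{C}$ as a functor by exploiting that $\mathcal{C}\mathbb{V}$ comes as a \emph{presentation}: by \Cref{defn:contour} it is the free category on the generators $(\text{i})$--$(\text{iii})$, quotiented by the relations induced by $\alpha,\lambda,\rho$. By the universal property of a presented category, to give a functor out of $\mathcal{C}\mathbb{V}$ it suffices to specify images of objects and of generating morphisms and to check that the imposed relations are respected. So, given a promonoidal functor $F\colon \mathbb{V}\to\mathbb{W}$, I would define $\mathcal{C}F$ on objects by $X^L\mapsto (FX)^L$ and $X^R\mapsto (FX)^R$, and on generators by transporting them along the three components of $F$: a unit generator $a_0$ coming from $a\in\mathbb{V}(A;N)$ goes to $(F_N a)_0$; the pair $b_0,b_1$ coming from a morphism $b\in\mathbb{V}(B;X)$ goes to $(Fb)_0,(Fb)_1$ (using the underlying functor $F$); and the triple $c_0,c_1,c_2$ coming from a split $c\in\mathbb{V}(C;Y\triangleleft Z)$ goes to $(F_\triangleleft c)_0,(F_\triangleleft c)_1,(F_\triangleleft c)_2$. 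The source and target objects match because $F$ acts compatibly on the objects indexing each generator, and the assignment is well defined because $F$, $F_N$ and $F_\triangleleft$ are honest (dinaturally respecting) functions on the relevant sets of units, morphisms and splits.

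The step I expect to be the real obstacle is checking that this generator-level assignment respects the relations of \Cref{defn:contour}. These relations are indexed by instances of the structure isomorphisms $\alpha,\lambda,\rho$ of $\mathbb{V}$, and the crucial point is that the coherence axioms of a promonoidal functor (\Cref{def:promonoidalfunctor}) say precisely that $F_\triangleleft$ and $F_N$ intertwine these isomorphisms. Concretely, an instance $\alpha(a\mathbin{|}b)=(c\mathbin{|}d)$ in $\mathbb{V}$ is carried, by the axiom relating $\alpha$ with $(F_\triangleleft\times F_\triangleleft)$, to an instance $\alpha(F_\triangleleft a\mathbin{|}F_\triangleleft b)=(F_\triangleleft c\mathbin{|}F_\triangleleft d)$ in $\mathbb{W}$; similarly the two unitor axioms handle the $\rho$ and $\lambda$ instances. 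I would then observe that the contour relation $\mathcal{C}\mathbb{W}$ imposes on the images $(F_\triangleleft a)_i,(F_\triangleleft b)_i,\dots$ (for example, that $(F_\triangleleft a)_0$ equals the composite of $(F_\triangleleft c)_0$ and $(F_\triangleleft d)_0$) is word-for-word the translation under $(\cdot)_i\mapsto(F_\bullet\,\cdot)_i$ of the corresponding relation in $\mathcal{C}\mathbb{V}$. Hence the assignment sends the two sides of every defining relation of $\mathcal{C}\mathbb{V}$ to the same morphism of $\mathcal{C}\mathbb{W}$, and so descends to a well-defined functor $\mathcal{C}F$.

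Finally, to establish functoriality of $\mathcal{C}$ itself, I would compare the relevant functors on generators, which suffices since generators determine a functor. The identity promonoidal functor has identity components, so $\mathcal{C}(\mathrm{id}_{\mathbb{V}})$ fixes every generator and equals $\mathrm{id}_{\mathcal{C}\mathbb{V}}$. For a composite $G\circ F$ of promonoidal functors, its components are the composites $G_N\circ F_N$ and $G_\triangleleft\circ F_\triangleleft$ together with the composite underlying functor, so evaluating both $\mathcal{C}(G\circ F)$ and $\mathcal{C}G\circ\mathcal{C}F$ on each generator returns the same morphism (e.g. $a_0\mapsto (G_N(F_N a))_0$ either way). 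I expect this last part to be entirely routine, leaving the relation-preservation of the middle paragraph as the only substantive content.
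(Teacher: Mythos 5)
Your proposal is correct and follows essentially the same route as the paper's proof: define $\mathcal{C}F$ as a morphism of presentations sending generators to the corresponding components of $F$, $F_N$, $F_{\triangleleft}$, observe that the contour relations are preserved because the promonoidal functor axioms intertwine $\alpha,\lambda,\rho$, and deduce functoriality of $\mathcal{C}$ from the componentwise composition of promonoidal functors. If anything, your middle paragraph spells out the relation-preservation step in more detail than the paper, which simply asserts it.
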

  \begin{proof}
    See Appendix, \Cref{ax:prop:contourFunctor}.
  \end{proof}

  \begin{remark} \label{rem:mellies}
    The \contour{} of a multicategory was first introduced by Melliès and Zeilberger \cite{mellies2022parsing}. \Cref{defn:contour} and the following \Cref{th:catpromadj} closely follow their work; although the promonoidal version we introduce does involve fewer equations due to the extra coherence (\Cref{remark:multicategories}).
  \end{remark}

\subsection{The Promonoidal Category of Spliced Arrows}

  We described a category tracking the process of decomposing in a given \promonoidalCategory{}.
  However, we want to go the other way around: given a category, what is the \promonoidalCategory{} describing decomposition in that category?
  This subsection finds a right adjoint to the \contour{} construction: the \splicedArrows{} \promonoidalCategory{}.
  \SplicedArrows{} have already been used to describe context in parsing
  \cite{mellies2022parsing}.

  \begin{definition}[Spliced arrows] \label{defn:splicedArrows}
    \defining{linkSplice}{}
    Let $ℂ$ be a category. The \promonoidalCategory{} of \emph{spliced arrows}, $\Splice{ℂ}$, has as objects pairs of objects of $ℂ$. It uses the following \profunctors{} to define morphisms, \splits{} and \units{}.
    \begin{align*}
      \Splice{ℂ} \left(
      \biobj{A}{B};
      \biobj{X}{Y} \right) &= ℂ(A;X) × ℂ(Y,B); \\
      \Splice{ℂ}(\biobj{A}{B} ; \biobj{X}{Y} ◁ \biobj{X'}{Y'}) &= 
      ℂ(A;X) × ℂ(Y;X') × ℂ(Y';B); \\      
      \Splice{ℂ}(\biobj{A}{B} ; N) &= ℂ(A;B).
    \end{align*}
  \end{definition}
    In other words, morphisms are \emph{pairs of arrows} $f \colon A \to X$ and $g \colon Y \to B$. 
    \Splits{} are \emph{triples of arrows} $f \colon A \to X$, $g \colon Y \to X'$ and $h \colon Y' \to B$. \Units{} are simply \emph{arrows} $f \colon A \to B$. We use the following notation for
    \begin{align*}
      & \mbox{morphisms,}&f& ⨾ \square ⨾ g &&\in \Splice{ℂ} \left(
        \biobj{A}{B};
        \biobj{X}{Y} \right); \\
      & \mbox{\splits{},}&f& ⨾ \square ⨾ g ⨾ \square ⨾ h &&\in \Splice{ℂ} \left( \biobj{A}{B}; \biobj{X}{Y} \triangleleft \biobj{X'}{Y'} \right); \\
      & \mbox{and \units{},}&f& &&\in \Splice{ℂ} \left(
        \biobj{A}{B}; N \right).
    \end{align*}
    The profunctor actions, associativity and unitality of the promonoidal category are defined in a straightforward way by \emph{filling the holes}. For instance,
    $$\begin{aligned}(f ⨾ \square ⨾ g ⨾ \square ⨾ h) ≺_1 (u ⨾ \square ⨾ v) =
    (f ⨾ u ⨾ \square ⨾ v ⨾ g ⨾ \square ⨾ h), \\
    (f ⨾ \square ⨾ g ⨾ \square ⨾ h) ≺_2 (u ⨾ \square ⨾ v) =
    (f ⨾ \square ⨾ g ⨾ u ⨾ \square ⨾ v ⨾ h).\end{aligned}$$
    See the Appendix, \Cref{sec:ax:sequentialContext} for details.

  \begin{proposition}
    \label{prop:spliceIsPromonoidal}
    Spliced arrows form a \promonoidalCategory{} with their splits, units, and suitable coherence morphisms.
  \end{proposition}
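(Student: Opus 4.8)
The plan is to verify each clause of \Cref{def:promonoidal} directly for the data of \Cref{defn:splicedArrows}, producing the coherence isomorphisms by coend calculus. First I would confirm the underlying category: the composite of $f ⨾ \square ⨾ g \colon \biobj{A}{B} \to \biobj{X}{Y}$ with $f' ⨾ \square ⨾ g' \colon \biobj{X}{Y} \to \biobj{X'}{Y'}$ fills the hole to give $(f ⨾ f') ⨾ \square ⨾ (g' ⨾ g)$, with identity $\square$ the pair $(\mathrm{id}_A, \mathrm{id}_B)$; associativity and unitality are inherited from $\mathbb{C}$. Next I would check that the two profunctors are well defined. Every slot of the defining formulas is a hom-set of $\mathbb{C}$, so the actions are simply pre- and post-composition "into the holes" (exactly the displayed $\prec_1, \prec_2$ examples), and joint functoriality follows from associativity of composition in $\mathbb{C}$; the unit profunctor is the hom-profunctor $\mathbb{C}(A;B)$.

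The heart of the argument is constructing $\alpha$, $\lambda$, $\rho$, which is pure coend calculus. For the associator I would expand each composite along its coend variable $\biobj{M}{N}$. The left-hand side is
\[
\int^{\biobj{M}{N}} \big(\mathbb{C}(A;X_0) \times \mathbb{C}(Y_0;M) \times \mathbb{C}(N;B)\big) \times \big(\mathbb{C}(M;X_1) \times \mathbb{C}(Y_1;X_2) \times \mathbb{C}(Y_2;N)\big),
\]
and two applications of the co-Yoneda lemma, eliminating $M$ via $\int^M \mathbb{C}(Y_0;M) \times \mathbb{C}(M;X_1) \cong \mathbb{C}(Y_0;X_1)$ and $N$ via $\int^N \mathbb{C}(Y_2;N) \times \mathbb{C}(N;B) \cong \mathbb{C}(Y_2;B)$, collapse it to $\mathbb{C}(A;X_0) \times \mathbb{C}(Y_0;X_1) \times \mathbb{C}(Y_1;X_2) \times \mathbb{C}(Y_2;B)$. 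Expanded the same way, the right-hand composite collapses to the identical set of four arrows. I would record this common value as the $3$-ary split $\Splice{\mathbb{C}}(\biobj{A}{B}; \biobj{X_0}{Y_0} \triangleleft \biobj{X_1}{Y_1} \triangleleft \biobj{X_2}{Y_2})$, so that $\alpha$ is the canonical isomorphism between the two bracketings; it is natural in every boundary variable because each Yoneda reduction is. The unitors are the same computation one port shorter: the composite $\Splice{\mathbb{C}}(\biobj{A}{B}; \bullet \triangleleft \biobj{X}{Y}) \diamond \Splice{\mathbb{C}}(\bullet; N)$ has integrand $\mathbb{C}(A;M) \times \mathbb{C}(N;X) \times \mathbb{C}(Y;B) \times \mathbb{C}(M;N)$, and reducing first in $M$ then in $N$ yields $\mathbb{C}(A;X) \times \mathbb{C}(Y;B) = \Splice{\mathbb{C}}(\biobj{A}{B}; \biobj{X}{Y})$, giving $\lambda$; the mirror-image reduction gives $\rho$.

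Finally I would verify the pentagon and triangle. The decisive simplification is that, by the reductions above, every $n$-fold split is canonically the set of $(n+1)$-tuples of composable arrows of $\mathbb{C}$, and each of $\alpha$, $\lambda$, $\rho$ acts as the identity on these canonical tuple representatives. Hence both legs of the pentagon, and both legs of the triangle, are the identity on $\Splice{\mathbb{C}}(\biobj{A}{B}; \biobj{X_0}{Y_0} \triangleleft \dots \triangleleft \biobj{X_n}{Y_n})$, so they agree; unwinding the isomorphisms reduces each coherence law to associativity and unitality of composition in $\mathbb{C}$.

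I expect the main obstacle to be bookkeeping rather than mathematics: keeping track of the variance of each hom-set through the coend and confirming that the co-Yoneda eliminations are natural simultaneously in every external object, so that $\alpha$, $\lambda$, $\rho$ are genuine natural isomorphisms of profunctors and not merely pointwise bijections. Once the ``$n$-ary split $=$ tuple of composable arrows'' normal form is in place, the coherence equations are immediate.
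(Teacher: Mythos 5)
Your proposal is correct and follows essentially the same route as the paper: the associator and both unitors are obtained by co-Yoneda reduction of each composite to the canonical set of tuples of composable arrows of $\mathbb{C}$ (the paper's Lemmas \ref{lemma:promonoidalAssociator}, \ref{lemma:spliceLeftUnitor}, \ref{lemma:spliceRightUnitor}), and the pentagon and triangle hold because every structure map is built purely from Yoneda isomorphisms and so acts as the identity on the canonical normal form. The only cosmetic difference is that you spell out the underlying category and profunctor actions explicitly, which the paper delegates to the surrounding definitions.
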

  \begin{proof}
    See Appendix, \Cref{ax:prop:spliceIsPromonoidal}.
  \end{proof}

  As a consequence, we can talk about spliced arrows with an arbitrary number of holes: for instance, a three-way split arises as a split filled by another split, in either position. For instance,
  $$\quasplice{f_0}{f_1}{f_2}{f_3}$$
  can be written in two different ways,
  $$\begin{aligned}
    & \trisplice{f_0}{f_2}{f_3} \mathbin{≺_1} \trisplice{id}{f_1}{id} & \mbox{ or } \\
    & \trisplice{f_0}{f_1}{f_3} \mathbin{≺_2} \trisplice{id}{f_2}{id}.
  \end{aligned}$$

  \begin{proposition} \label{prop:spliceFunctor}
    Splice gives a functor $\Splice{} : \Cat → \pMon$.
  \end{proposition}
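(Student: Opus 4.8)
The object part is already handled by \Cref{prop:spliceIsPromonoidal}, so the plan is to extend the assignment $\mathbb{C} \mapsto \Splice{\mathbb{C}}$ to a morphism assignment $F \mapsto \Splice{F}$ and then check that each $\Splice{F}$ is a promonoidal functor in the sense of \Cref{def:promonoidalfunctor}, and that the assignment preserves identities and composition. Given a functor $F \colon \mathbb{C} \to \mathbb{D}$, I will define $\Splice{F}$ on objects by $\biobj{A}{B} \mapsto \biobj{FA}{FB}$ and on all the generating data by applying $F$ componentwise to the underlying arrows: a morphism $f ⨾ \square ⨾ g$ is sent to $Ff ⨾ \square ⨾ Fg$, a split $f ⨾ \square ⨾ g ⨾ \square ⨾ h$ to $Ff ⨾ \square ⨾ Fg ⨾ \square ⨾ Fh$, and a unit $f$ to $Ff$.

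First I would confirm that the object-and-morphism part is a genuine functor of the underlying categories. Composition in the underlying category of $\Splice{\mathbb{C}}$ is hole-filling, which composes the first components covariantly and the second components contravariantly, while identities are pairs of identity arrows; both are preserved because $F$ preserves composition and identities. Next, taking $\Splice{F}_{\triangleleft}$ and $\Splice{F}_{N}$ to be the componentwise actions on splits and units defined above, I would verify naturality with respect to the profunctor actions $(≺,≻)$. This is immediate: every action is itself a hole-filling, hence a composite of arrows of $\mathbb{C}$, so applying $F$ componentwise before or after an action agrees, again by functoriality of $F$.

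It remains to verify the three coherence equations for $\lambda$, $\rho$, and $\alpha$. Here I would reuse the description from the proof of \Cref{prop:spliceIsPromonoidal}: all the coherence isomorphisms of spliced arrows are realized by reassociating tuples of arrows and inserting identities into the holes. Since $F$ sends identities to identities and preserves the bracketing of composites, componentwise application of $F$ commutes with each coherence isomorphism on the nose; this is exactly the content of the three required equations, so $\Splice{F}$ is in fact a \emph{strict} promonoidal functor. Finally, functoriality of $F \mapsto \Splice{F}$ is a componentwise computation: $\Splice{\mathrm{id}_{\mathbb{C}}}$ applies identities in each slot and so is the identity promonoidal functor, while $\Splice{G \circ F}$ applies $G \circ F$ componentwise, which is precisely the composite $\Splice{G} \circ \Splice{F}$.

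The hard part will not be a genuine obstacle but rather the bookkeeping: all of naturality and every coherence equation must be seen to reduce to the single fact that $F$ preserves composition and identities. The one point deserving care is to confirm that the explicit hole-filling recipe for the coherence maps used in \Cref{prop:spliceIsPromonoidal} is preserved \emph{verbatim} under componentwise application of $F$; once that is checked, every equation follows formally.
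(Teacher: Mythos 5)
Your proposal matches the paper's own proof: both define $\Splice{F}$ by applying $F$ componentwise to the pairs and triples of arrows, observe that preservation of the coherence maps reduces to $F$ preserving composition and identities (via the explicit Yoneda/hole-filling construction in \Cref{prop:spliceIsPromonoidal}), and conclude functoriality of $F \mapsto \Splice{F}$ by a componentwise computation. The argument is correct and takes essentially the same route.
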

  \begin{proof}
    See Appendix, \Cref{ax:prop:spliceFunctor}.
  \end{proof}

\begin{theorem} \label{th:catpromadj}
  There exists an adjunction between categories and \promonoidalCategories{}, where the contour of a \promonoidal{} is the left adjoint, and the splice category is the right adjoint.
\end{theorem}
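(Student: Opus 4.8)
The plan is to exhibit a bijection, natural in the promonoidal category $\mathbb{V}$ and the category $\mathbb{D}$,
$$\Cat(\mathcal{C}\mathbb{V}, \mathbb{D}) \;\cong\; \pMon(\mathbb{V}, \Splice{\mathbb{D}}),$$
and to do so by exploiting the generators-and-relations presentation of the contour (\Cref{defn:contour}): a functor out of $\mathcal{C}\mathbb{V}$ is nothing but an assignment on the objects $X^L, X^R$ together with an assignment on the generating arrows $a_0$; $b_0,b_1$; $c_0,c_1,c_2$ that respects the imposed relations. I would show that this package of data is exactly a promonoidal functor into the spliced arrows.

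First I would unfold the right-hand side. A promonoidal functor $G \colon \mathbb{V} \to \Splice{\mathbb{D}}$ sends each object $X$ to a pair $(X_0; X_1)$ of objects of $\mathbb{D}$, and by \Cref{defn:splicedArrows} its three structure components land in precisely the hom-sets computed by the splice profunctors. A unit $a \in \mathbb{V}(A;N)$ is sent by $G_N$ into $\Splice{\mathbb{D}}(GA;N) = \mathbb{D}(A_0;A_1)$, a single arrow; a morphism $b \in \mathbb{V}(B;X)$ into $\mathbb{D}(B_0;X_0)\times\mathbb{D}(X_1;B_1)$, a pair; and a split $c \in \mathbb{V}(C;Y\triangleleft Z)$ via $G_\triangleleft$ into $\mathbb{D}(C_0;Y_0)\times\mathbb{D}(Y_1;Z_0)\times\mathbb{D}(Z_1;C_1)$, a triple. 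This is exactly the generating data of a functor $\widehat{G}\colon\mathcal{C}\mathbb{V}\to\mathbb{D}$: setting $\widehat{G}(X^L)=X_0$ and $\widehat{G}(X^R)=X_1$, the components of $G_N$, $G_{\map}$ and $G_\triangleleft$ define $\widehat{G}$ on the generators $a_0$; $b_0,b_1$; $c_0,c_1,c_2$ respectively, with the handedness $L/R$ of objects matching the contravariant/covariant slots of the splice profunctors.

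Second, I would check that $\widehat{G}$ respects the relations presenting $\mathcal{C}\mathbb{V}$. These relations come one family per instance of the associator $\alpha$ and unitors $\lambda,\rho$ of $\mathbb{V}$ (\Cref{fig:promonoidalContourCompose}); the equations they demand between composites of generators, such as $a_0 = c_0 ⨾ d_0$, become under $\widehat{G}$ exactly the coherence conditions that \Cref{def:promonoidalfunctor} imposes on $G$ (the compatibilities of $F_\triangleleft, F_N$ with $\lambda,\rho,\alpha$). Thus $\widehat{G}$ is a well-defined functor precisely when $G$ is a promonoidal functor. Conversely, any functor $\mathcal{C}\mathbb{V}\to\mathbb{D}$ reads off the same object- and generator-data, giving a strict inverse assignment; the two constructions are mutually inverse by construction. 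For naturality, using that $\mathcal{C}$ (\Cref{prop:contourFunctor}) and $\Splice{}$ (\Cref{prop:spliceFunctor}) are functorial, compatibility with post-composition by $H$/$\Splice{H}$ and pre-composition by $K$/$\mathcal{C}K$ is immediate, as the bijection is defined generator-wise; equivalently one may package it as unit $\eta_\mathbb{V}\colon\mathbb{V}\to\Splice{\mathcal{C}\mathbb{V}}$ and counit $\varepsilon_\mathbb{D}\colon\mathcal{C}\Splice{\mathbb{D}}\to\mathbb{D}$ and verify the triangle identities on generators.

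The main obstacle I anticipate is the bookkeeping in the second step. The contour relations are phrased through representatives of the coend classes for $\alpha,\lambda,\rho$, so I must confirm that they are well posed independently of the chosen representative, and that they stand in exact correspondence — neither too many nor too few — with the three coherence axioms for a promonoidal functor. Tracking which of $c_0,c_1,c_2$ is affected by each side of the associativity relation, and keeping the $L/R$ labelling aligned with the profunctor variances, is where the argument could silently go wrong; \Cref{fig:promonoidalContour,fig:promonoidalContourCompose} are the guide to getting these indices right.
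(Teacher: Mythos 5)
Your proposal is correct and follows essentially the same route as the paper's own proof: both unfold a functor out of the freely presented contour into its generator/relation data and match it point-by-point against the data of a promonoidal functor into the spliced-arrow category. Your treatment is slightly more explicit about naturality and the unit/counit packaging, which the paper leaves implicit, but the core argument is identical.
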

\begin{proof}
  See Appendix, \Cref{ax:th:catpromadj}.
\end{proof}

\SplicedArrows{} can be computed for \emph{any} category, including monoidal categories. However, we expect the \splicedArrows{} of a monoidal category to have a richer algebraic structure. This extra structure is the subject of the next section.

\section{Parallel-Sequential Context}
\label{sec:parallelContext}

Monoidal categories are an algebraic structure for sequential and parallel composition: they contain a ``tensoring'' operator on morphisms, $(⊗)$, apart from the usual sequencing, $(⨾)$, and identities $(\mathrm{id})$.

Assume a monoidal morphism factors as follows,
$$f_0 ⨾ (g ⊗ (h ⨾ (k ⊗ (l_0 ⨾ l_1)))) ⨾ f_1.$$
We can say that this morphism came from dividing everything between $f_0$ and $f_1$ by a tensor. That is, from a context $f_0 ⨾ (\square ⊗ \square) ⨾ f_1$. We filled the first hole of this context with a $g$, and then proceeded to split the second part as $h ⨾ (\square ⊗ \square) ⨾ \im$. Finally, we filled the first part with $k$ and the second one we left disconnected by filling it with $l_0$, $\im_I$, and $l_1$.

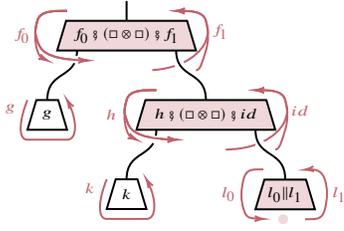
\begin{figure}[ht]
  \centering

\tikzset{every picture/.style={line width=0.75pt}} %

\begin{tikzpicture}[x=0.75pt,y=0.75pt,yscale=-1,xscale=1]
\draw [color={rgb, 255:red, 191; green, 97; blue, 106 }  ,draw opacity=1 ]   (200,45) .. controls (220.36,44.69) and (228.54,34.58) .. (230,30) .. controls (231.46,25.42) and (231.43,25.58) .. (230,20) .. controls (228.73,15.06) and (222.13,15) .. (216.94,15) ;
\draw [shift={(215,15)}, rotate = 0.66] [color={rgb, 255:red, 191; green, 97; blue, 106 }  ,draw opacity=1 ][line width=0.75]    (4.37,-1.32) .. controls (2.78,-0.56) and (1.32,-0.12) .. (0,0) .. controls (1.32,0.12) and (2.78,0.56) .. (4.37,1.32)   ;
\draw [color={rgb, 255:red, 191; green, 97; blue, 106 }  ,draw opacity=1 ]   (167.82,39.94) .. controls (148.45,39.26) and (146.4,34.42) .. (145,30) .. controls (143.54,25.42) and (143.57,25.58) .. (145,20) .. controls (146.43,14.42) and (154.67,15.06) .. (160,15) ;
\draw [shift={(170,40)}, rotate = 181.03] [color={rgb, 255:red, 191; green, 97; blue, 106 }  ,draw opacity=1 ][line width=0.75]    (4.37,-1.32) .. controls (2.78,-0.56) and (1.32,-0.12) .. (0,0) .. controls (1.32,0.12) and (2.78,0.56) .. (4.37,1.32)   ;
\draw    (165,35) .. controls (165.03,49.73) and (150.03,44.73) .. (150,60) ;
\draw  [color={rgb, 255:red, 0; green, 0; blue, 0 }  ,draw opacity=1 ][fill={rgb, 255:red, 191; green, 97; blue, 106 }  ,fill opacity=0.25 ] (155,35) -- (159.5,20) -- (220.5,20) -- (225,35) -- cycle ;
\draw [color={rgb, 255:red, 191; green, 97; blue, 106 }  ,draw opacity=1 ]   (149.17,39.65) .. controls (147.5,39.04) and (146.26,38.95) .. (145,35) .. controls (143.54,30.42) and (143.57,25.58) .. (145,20) .. controls (146.43,14.42) and (154.67,15.06) .. (160,15) ;
\draw [shift={(150,40)}, rotate = 205.96] [color={rgb, 255:red, 191; green, 97; blue, 106 }  ,draw opacity=1 ][line width=0.75]    (4.37,-1.32) .. controls (2.78,-0.56) and (1.32,-0.12) .. (0,0) .. controls (1.32,0.12) and (2.78,0.56) .. (4.37,1.32)   ;
\draw [color={rgb, 255:red, 191; green, 97; blue, 106 }  ,draw opacity=1 ]   (225,45) .. controls (227.43,44.87) and (228.54,39.58) .. (230,35) .. controls (231.46,30.42) and (231.43,25.58) .. (230,20) .. controls (228.73,15.06) and (222.13,15) .. (216.94,15) ;
\draw [shift={(215,15)}, rotate = 0.66] [color={rgb, 255:red, 191; green, 97; blue, 106 }  ,draw opacity=1 ][line width=0.75]    (4.37,-1.32) .. controls (2.78,-0.56) and (1.32,-0.12) .. (0,0) .. controls (1.32,0.12) and (2.78,0.56) .. (4.37,1.32)   ;
\draw  [draw opacity=0][fill={rgb, 255:red, 255; green, 255; blue, 255 }  ,fill opacity=1 ] (214.13,41.88) .. controls (214.13,40.49) and (215.24,39.38) .. (216.63,39.38) .. controls (218.01,39.38) and (219.13,40.49) .. (219.13,41.88) .. controls (219.13,43.26) and (218.01,44.38) .. (216.63,44.38) .. controls (215.24,44.38) and (214.13,43.26) .. (214.13,41.88) -- cycle ;
\draw    (190,20) -- (190,10) ;
\draw  [draw opacity=0][fill={rgb, 255:red, 255; green, 255; blue, 255 }  ,fill opacity=1 ] (162,39.64) .. controls (162,38.53) and (163.34,37.64) .. (165,37.64) .. controls (166.66,37.64) and (168,38.53) .. (168,39.64) .. controls (168,40.74) and (166.66,41.64) .. (165,41.64) .. controls (163.34,41.64) and (162,40.74) .. (162,39.64) -- cycle ;
\draw  [draw opacity=0][fill={rgb, 255:red, 255; green, 255; blue, 255 }  ,fill opacity=1 ] (197,45) .. controls (197,43.9) and (198.34,43) .. (200,43) .. controls (201.66,43) and (203,43.9) .. (203,45) .. controls (203,46.1) and (201.66,47) .. (200,47) .. controls (198.34,47) and (197,46.1) .. (197,45) -- cycle ;
\draw [color={rgb, 255:red, 191; green, 97; blue, 106 }  ,draw opacity=1 ]   (167.82,39.94) .. controls (148.45,39.26) and (146.4,34.42) .. (145,30) .. controls (143.54,25.42) and (143.57,25.58) .. (145,20) .. controls (146.43,14.42) and (154.67,15.06) .. (160,15) ;
\draw [shift={(170,40)}, rotate = 181.03] [color={rgb, 255:red, 191; green, 97; blue, 106 }  ,draw opacity=1 ][line width=0.75]    (4.37,-1.32) .. controls (2.78,-0.56) and (1.32,-0.12) .. (0,0) .. controls (1.32,0.12) and (2.78,0.56) .. (4.37,1.32)   ;
\draw   (140,75) -- (144.5,60) -- (155.5,60) -- (160,75) -- cycle ;
\draw [color={rgb, 255:red, 191; green, 97; blue, 106 }  ,draw opacity=1 ]   (140,60) .. controls (136.32,68.79) and (134.61,79.97) .. (140,80) .. controls (145.39,80.03) and (154.61,79.97) .. (160,80) .. controls (165.06,80.02) and (165.31,70.69) .. (160.9,61.71) ;
\draw [shift={(160,60)}, rotate = 60.63] [color={rgb, 255:red, 191; green, 97; blue, 106 }  ,draw opacity=1 ][line width=0.75]    (4.37,-1.32) .. controls (2.78,-0.56) and (1.32,-0.12) .. (0,0) .. controls (1.32,0.12) and (2.78,0.56) .. (4.37,1.32)   ;
\draw  [draw opacity=0] (130,60) -- (150,60) -- (150,80) -- (130,80) -- cycle ;
\draw  [color={rgb, 255:red, 0; green, 0; blue, 0 }  ,draw opacity=1 ][fill={rgb, 255:red, 191; green, 97; blue, 106 }  ,fill opacity=0.25 ] (195,75) -- (199.5,60) -- (260.5,60) -- (265,75) -- cycle ;
\draw [color={rgb, 255:red, 191; green, 97; blue, 106 }  ,draw opacity=1 ]   (212.82,79.94) .. controls (193.45,79.26) and (191.4,74.42) .. (190,70) .. controls (188.54,65.42) and (188.57,65.58) .. (190,60) .. controls (191.43,54.42) and (199.67,55.06) .. (205,55) ;
\draw [shift={(215,80)}, rotate = 181.03] [color={rgb, 255:red, 191; green, 97; blue, 106 }  ,draw opacity=1 ][line width=0.75]    (4.37,-1.32) .. controls (2.78,-0.56) and (1.32,-0.12) .. (0,0) .. controls (1.32,0.12) and (2.78,0.56) .. (4.37,1.32)   ;
\draw [color={rgb, 255:red, 191; green, 97; blue, 106 }  ,draw opacity=1 ]   (194.17,79.65) .. controls (192.5,79.04) and (191.26,78.95) .. (190,75) .. controls (188.54,70.42) and (188.57,65.58) .. (190,60) .. controls (191.43,54.42) and (199.67,55.06) .. (205,55) ;
\draw [shift={(195,80)}, rotate = 205.96] [color={rgb, 255:red, 191; green, 97; blue, 106 }  ,draw opacity=1 ][line width=0.75]    (4.37,-1.32) .. controls (2.78,-0.56) and (1.32,-0.12) .. (0,0) .. controls (1.32,0.12) and (2.78,0.56) .. (4.37,1.32)   ;
\draw [color={rgb, 255:red, 191; green, 97; blue, 106 }  ,draw opacity=1 ]   (240,85) .. controls (260.36,84.69) and (268.54,74.58) .. (270,70) .. controls (271.46,65.42) and (271.43,65.58) .. (270,60) .. controls (268.73,55.06) and (262.13,55) .. (256.94,55) ;
\draw [shift={(255,55)}, rotate = 0.66] [color={rgb, 255:red, 191; green, 97; blue, 106 }  ,draw opacity=1 ][line width=0.75]    (4.37,-1.32) .. controls (2.78,-0.56) and (1.32,-0.12) .. (0,0) .. controls (1.32,0.12) and (2.78,0.56) .. (4.37,1.32)   ;
\draw [color={rgb, 255:red, 191; green, 97; blue, 106 }  ,draw opacity=1 ]   (265,85) .. controls (267.43,84.87) and (268.54,79.58) .. (270,75) .. controls (271.46,70.42) and (271.43,65.58) .. (270,60) .. controls (268.73,55.06) and (262.13,55) .. (256.94,55) ;
\draw [shift={(255,55)}, rotate = 0.66] [color={rgb, 255:red, 191; green, 97; blue, 106 }  ,draw opacity=1 ][line width=0.75]    (4.37,-1.32) .. controls (2.78,-0.56) and (1.32,-0.12) .. (0,0) .. controls (1.32,0.12) and (2.78,0.56) .. (4.37,1.32)   ;
\draw    (205,74.99) .. controls (205.03,89.72) and (190.03,84.72) .. (190,99.99) ;
\draw   (180,114.99) -- (184.5,99.99) -- (195.5,99.99) -- (200,114.99) -- cycle ;
\draw [color={rgb, 255:red, 191; green, 97; blue, 106 }  ,draw opacity=1 ]   (180,99.99) .. controls (176.32,108.78) and (174.61,119.97) .. (180,119.99) .. controls (185.39,120.02) and (194.61,119.97) .. (200,119.99) .. controls (205.06,120.02) and (205.31,110.68) .. (200.9,101.7) ;
\draw [shift={(200,99.99)}, rotate = 60.63] [color={rgb, 255:red, 191; green, 97; blue, 106 }  ,draw opacity=1 ][line width=0.75]    (4.37,-1.32) .. controls (2.78,-0.56) and (1.32,-0.12) .. (0,0) .. controls (1.32,0.12) and (2.78,0.56) .. (4.37,1.32)   ;
\draw  [draw opacity=0] (170,99.99) -- (190,99.99) -- (190,119.99) -- (170,119.99) -- cycle ;
\draw  [draw opacity=0][fill={rgb, 255:red, 255; green, 255; blue, 255 }  ,fill opacity=1 ] (202,80) .. controls (202,78.9) and (203.12,78) .. (204.5,78) .. controls (205.88,78) and (207,78.9) .. (207,80) .. controls (207,81.1) and (205.88,82) .. (204.5,82) .. controls (203.12,82) and (202,81.1) .. (202,80) -- cycle ;
\draw [color={rgb, 255:red, 191; green, 97; blue, 106 }  ,draw opacity=1 ]   (212.82,79.94) .. controls (193.45,79.26) and (191.4,74.42) .. (190,70) .. controls (188.54,65.42) and (188.57,65.58) .. (190,60) .. controls (191.43,54.42) and (199.67,55.06) .. (205,55) ;
\draw [shift={(215,80)}, rotate = 181.03] [color={rgb, 255:red, 191; green, 97; blue, 106 }  ,draw opacity=1 ][line width=0.75]    (4.37,-1.32) .. controls (2.78,-0.56) and (1.32,-0.12) .. (0,0) .. controls (1.32,0.12) and (2.78,0.56) .. (4.37,1.32)   ;
\draw  [fill={rgb, 255:red, 191; green, 97; blue, 106 }  ,fill opacity=0.25 ] (255,115.15) -- (259.5,100.15) -- (280.5,100.15) -- (285,115.15) -- cycle ;
\draw  [draw opacity=0] (259,115.15) -- (279,115.15) -- (279,135.15) -- (259,135.15) -- cycle ;
\draw [color={rgb, 255:red, 191; green, 97; blue, 106 }  ,draw opacity=1 ]   (256.96,119.93) .. controls (251.06,119.68) and (250.32,119.14) .. (249,115) .. controls (247.54,110.42) and (247.57,105.58) .. (249,100) .. controls (250.43,94.42) and (253.67,95.06) .. (259,95) ;
\draw [shift={(259,120)}, rotate = 181.8] [color={rgb, 255:red, 191; green, 97; blue, 106 }  ,draw opacity=1 ][line width=0.75]    (4.37,-1.32) .. controls (2.78,-0.56) and (1.32,-0.12) .. (0,0) .. controls (1.32,0.12) and (2.78,0.56) .. (4.37,1.32)   ;
\draw [color={rgb, 255:red, 191; green, 97; blue, 106 }  ,draw opacity=1 ]   (282.04,95.01) .. controls (286.32,94.89) and (288.73,96.17) .. (290,100.15) .. controls (291.46,104.73) and (291.43,109.57) .. (290,115.15) .. controls (288.57,120.73) and (285.33,120.09) .. (280,120.15) ;
\draw [shift={(280,95.15)}, rotate = 354.07] [color={rgb, 255:red, 191; green, 97; blue, 106 }  ,draw opacity=1 ][line width=0.75]    (4.37,-1.32) .. controls (2.78,-0.56) and (1.32,-0.12) .. (0,0) .. controls (1.32,0.12) and (2.78,0.56) .. (4.37,1.32)   ;
\draw  [draw opacity=0][fill={rgb, 255:red, 191; green, 97; blue, 106 }  ,fill opacity=0.25 ] (266.5,120.15) .. controls (266.5,118.77) and (267.62,117.65) .. (269,117.65) .. controls (270.38,117.65) and (271.5,118.77) .. (271.5,120.15) .. controls (271.5,121.53) and (270.38,122.65) .. (269,122.65) .. controls (267.62,122.65) and (266.5,121.53) .. (266.5,120.15) -- cycle ;
\draw  [draw opacity=0][fill={rgb, 255:red, 255; green, 255; blue, 255 }  ,fill opacity=1 ] (255,82.5) .. controls (255,81.12) and (256.12,80) .. (257.5,80) .. controls (258.88,80) and (260,81.12) .. (260,82.5) .. controls (260,83.88) and (258.88,85) .. (257.5,85) .. controls (256.12,85) and (255,83.88) .. (255,82.5) -- cycle ;
\draw    (255,75) .. controls (255.03,89.73) and (270.03,84.73) .. (270,100) ;
\draw    (215,35) .. controls (215.03,49.73) and (230.03,44.73) .. (230,60) ;

\draw (190,27.5) node  [font=\tiny]  {$f_{0} ⨾ ( \square \otimes \square ) ⨾ f_{1}$};
\draw (137.5,27.5) node  [font=\tiny,color={rgb, 255:red, 191; green, 97; blue, 106 }  ,opacity=1 ]  {$f_{0}$};
\draw (237.5,25.5) node  [font=\tiny,color={rgb, 255:red, 191; green, 97; blue, 106 }  ,opacity=1 ]  {$f_{1}$};
\draw (131.5,64.5) node  [font=\tiny,color={rgb, 255:red, 191; green, 97; blue, 106 }  ,opacity=1 ]  {$g$};
\draw (150,67.5) node  [font=\tiny]  {$g$};
\draw (230,67.5) node  [font=\tiny]  {$h ⨾ ( \square \otimes \square ) ⨾ id$};
\draw (182.5,67.5) node  [font=\tiny,color={rgb, 255:red, 191; green, 97; blue, 106 }  ,opacity=1 ]  {$h$};
\draw (277.5,65.5) node  [font=\tiny,color={rgb, 255:red, 191; green, 97; blue, 106 }  ,opacity=1 ]  {$id$};
\draw (171.5,104.49) node  [font=\tiny,color={rgb, 255:red, 191; green, 97; blue, 106 }  ,opacity=1 ]  {$k$};
\draw (190,107.49) node  [font=\tiny]  {$k$};
\draw (297.5,107.65) node  [font=\tiny,color={rgb, 255:red, 191; green, 97; blue, 106 }  ,opacity=1 ]  {$l_{1}$};
\draw (241.5,107.5) node  [font=\tiny,color={rgb, 255:red, 191; green, 97; blue, 106 }  ,opacity=1 ]  {$l_{0}$};
\draw (270,107.65) node  [font=\tiny]  {$l_{0} \| l_{1}$};

\end{tikzpicture}

   \caption{Decomposition of $f_0 ⨾ (g ⊗ (h ⨾ (k ⊗ (l_0 ⨾ l_1)))) ⨾ f_1$.}
  \label{fig:monoidalContourExample}
\end{figure}

This section studies decomposition of morphisms in a \emph{monoidal} category, in the same way we studied decomposition of morphisms in a category before.  We present an algebraic structure for decomposing both sequential and parallel compositions: \emph{produoidal categories}.

\subsection{Produoidal Categories}
\defining{linkProduoidalComponents}
Produoidal categories, first defined by Booker and Street \cite{bookerstreet13}, provide an algebraic structure for the interaction of sequential and parallel decomposition. A \produoidalCategory{} $𝕍$ not only contains \emph{morphisms}, $𝕍(X; Y)$, \emph{sequential splits}, $𝕍(X; Y_0 ◁ Y_1)$,  and \emph{sequential units}, $𝕍(X; N)$, as a \promonoidalCategory{} does; it also contains \emph{parallel splits}, $𝕍(X; Y_0 ⊗ Y_1)$ and \emph{parallel units}, $𝕍(X; I)$.

\begin{remark}[Nesting virtual structures]
  Notation for nesting functorial structures, say $(◁)$ and $(\otimes)$, is straightforward: we use expressions like $(X_1 \otimes Y_1) ◁ (X_2 \otimes Y_2)$ without a second thought. Nesting the virtual structures $(◁)$ and $(\otimes)$ is more subtle: defining $𝕍(•; X \otimes Y)$ and $𝕍(•; X ◁ Y)$ for each pair of objects $X$ and $Y$ does not itself define what something like $𝕍(•; (X_1 \otimes Y_1) ◁ (X_2 \otimes Y_2))$ means. Recall that, in the virtual case, $X_1 ◁ Y_1$ and $X_1 \otimes Y_1$ are not objects themselves: they are just names for the profunctors $𝕍(•; X_1 ◁ Y_1)$ and $𝕍(•; X_1 \otimes Y_1)$.

  Instead, when we write $𝕍(•; (X_1 \otimes Y_1) ◁ (X_2 \otimes Y_2))$, we formally mean
  $𝕍(•; •_1 ◁ •_2) \diamond 𝕍(•_1; X_1 \otimes Y_1) \diamond 𝕍(•_2; X_2 \otimes Y_2)$. By convention, nesting virtual structures means profunctor composition in this text.
\end{remark}

\begin{definition}[Produoidal category]
  \defining{linkproduoidal}{}
  \label{def:produoidal}
  A \emph{produoidal category} is a category $𝕍$ endowed with two \promonoidal{} structures,
  $$\begin{gathered}
    𝕍(• ; • ⊗ •) \colon 𝕍 × 𝕍 ⧑ 𝕍, \mbox{ and } 𝕍(•; I) \colon 1 ⧑ 𝕍, \\
    𝕍(• ;• ◁ •) \colon 𝕍 × 𝕍 ⧑ 𝕍, \mbox{ and } 𝕍(•; N) \colon 1 ⧑ 𝕍,
  \end{gathered}$$
  such that one laxly distributes over the other.
  This is to say that it is endowed with the following natural \emph{laxators},
  \begin{align*}
  ψ_2 \colon 𝕍(•;(X◁Y)⊗(Z◁W)) &→ 𝕍(•;(X⊗Z)◁(Y⊗W)),\\
  ψ_0 \colon 𝕍(•;I) &→ 𝕍(•;I◁I),\\
  φ_2 \colon 𝕍(•;N⊗N) &→ 𝕍(•;N),\\
  φ_0 \colon 𝕍(•;I) &→ 𝕍(•;N).
  \end{align*}
  Laxators, together with unitors and associators, must satisfy coherence conditions (see Appendix, \Cref{ax:def:produoidal}).
\end{definition}

\begin{definition}[Produoidal functor]
  \defining{linkproduoidalfunctor}{}\label{def:produoidalfunctor}
  Let $𝕍_{⊗,I,◁,N}$ and $𝕎_{⊘,J,◀,M}$ be produoidal categories. A \emph{produoidal functor} $F$ is a functor between the two categories $F : 𝕍(•,•) \to 𝕎(•,•)$ together with natural transformations
  \begin{align*}
    & F_{⊗} : 𝕍(A; B ⊗ C) \to 𝕎(FA; FB ⊘ FC), \\
    & F_{I} : 𝕍(A; I) \to 𝕎(FA; J), \\
    & F_{◁} : 𝕍(A; B ◁ C) \to 𝕎(FA; FB ◀ FC), \mbox{ and } \\
    & F_{N} : 𝕍(A; N) \to 𝕎(FA; M),
  \end{align*}
preserving coherence isomorphisms for each \promonoidal{} structure, and the laxators. Denote by $\pDuo$ the category of \produoidalCategories{} and \produoidalFunctors{}.
\end{definition}

\subsection{Monoidal Contour of a Produoidal Category}

Any \produoidalCategory{} freely generates a monoidal category, its \emph{monoidal contour}. Again, this is interpreted as a monoidal category tracking the processes of parallel and sequential decomposition described by the \produoidalCategory{}.
And again, the construction follows a pleasant geometric pattern, where we follow the shape of the decomposition, now in both the parallel and sequential dimensions.

\begin{definition}[Monoidal contour]
  \label{def:monoidalContour}
  \defining{linkMonoidalContour}{}
    The \emph{contour} of a \produoidalCategory{} $𝔹$ is the monoidal category $𝓓𝔹$ that has two objects, $X^L$ (left-handed) and $X^R$ (right-handed), for each object $X \in 𝔹_{\text{obj}}$; and has arrows those that arise from \emph{contouring} both sequential and parallel decompositions of the promonoidal category.
    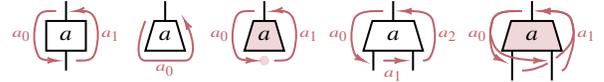
\begin{figure}[ht]
      \centering

\tikzset{every picture/.style={line width=0.75pt}} %

\begin{tikzpicture}[x=0.75pt,y=0.75pt,yscale=-1,xscale=1]
\draw  [fill={rgb, 255:red, 191; green, 97; blue, 106 }  ,fill opacity=0.25 ] (190,35) -- (194.5,20) -- (205.5,20) -- (210,35) -- cycle ;
\draw    (200,20) -- (200,10) ;
\draw  [draw opacity=0] (190,35) -- (210,35) -- (210,55) -- (190,55) -- cycle ;
\draw    (100,20) -- (100,10) ;
\draw   (90,20) -- (110,20) -- (110,35) -- (90,35) -- cycle ;
\draw    (100,45) -- (100,35) ;
\draw [color={rgb, 255:red, 191; green, 97; blue, 106 }  ,draw opacity=1 ]   (92.96,39.93) .. controls (87.06,39.68) and (86.32,39.14) .. (85,35) .. controls (83.54,30.42) and (83.57,25.58) .. (85,20) .. controls (86.43,14.42) and (89.67,15.06) .. (95,15) ;
\draw [shift={(95,40)}, rotate = 181.8] [color={rgb, 255:red, 191; green, 97; blue, 106 }  ,draw opacity=1 ][line width=0.75]    (4.37,-1.32) .. controls (2.78,-0.56) and (1.32,-0.12) .. (0,0) .. controls (1.32,0.12) and (2.78,0.56) .. (4.37,1.32)   ;
\draw [color={rgb, 255:red, 191; green, 97; blue, 106 }  ,draw opacity=1 ]   (107.04,14.86) .. controls (111.32,14.74) and (113.73,16.02) .. (115,20) .. controls (116.46,24.58) and (116.43,29.42) .. (115,35) .. controls (113.57,40.58) and (110.33,39.94) .. (105,40) ;
\draw [shift={(105,15)}, rotate = 354.07] [color={rgb, 255:red, 191; green, 97; blue, 106 }  ,draw opacity=1 ][line width=0.75]    (4.37,-1.32) .. controls (2.78,-0.56) and (1.32,-0.12) .. (0,0) .. controls (1.32,0.12) and (2.78,0.56) .. (4.37,1.32)   ;
\draw  [color={rgb, 255:red, 0; green, 0; blue, 0 }  ,draw opacity=1 ][fill={rgb, 255:red, 191; green, 97; blue, 106 }  ,fill opacity=0.25 ] (320,35) -- (324.5,20) -- (345.5,20) -- (350,35) -- cycle ;
\draw    (325,50) -- (325,35) ;
\draw    (335,20) -- (335,10) ;
\draw [color={rgb, 255:red, 191; green, 97; blue, 106 }  ,draw opacity=1 ]   (319.17,39.65) .. controls (317.5,39.04) and (316.26,38.95) .. (315,35) .. controls (313.54,30.42) and (313.57,25.58) .. (315,20) .. controls (316.43,14.42) and (324.67,15.06) .. (330,15) ;
\draw [shift={(320,40)}, rotate = 205.96] [color={rgb, 255:red, 191; green, 97; blue, 106 }  ,draw opacity=1 ][line width=0.75]    (4.37,-1.32) .. controls (2.78,-0.56) and (1.32,-0.12) .. (0,0) .. controls (1.32,0.12) and (2.78,0.56) .. (4.37,1.32)   ;
\draw [color={rgb, 255:red, 191; green, 97; blue, 106 }  ,draw opacity=1 ]   (350,45) .. controls (352.43,44.87) and (353.54,39.58) .. (355,35) .. controls (356.46,30.42) and (356.43,25.58) .. (355,20) .. controls (353.73,15.06) and (347.13,15) .. (341.94,15) ;
\draw [shift={(340,15)}, rotate = 0.66] [color={rgb, 255:red, 191; green, 97; blue, 106 }  ,draw opacity=1 ][line width=0.75]    (4.37,-1.32) .. controls (2.78,-0.56) and (1.32,-0.12) .. (0,0) .. controls (1.32,0.12) and (2.78,0.56) .. (4.37,1.32)   ;
\draw [color={rgb, 255:red, 191; green, 97; blue, 106 }  ,draw opacity=1 ]   (337.82,39.94) .. controls (318.45,39.26) and (316.4,34.42) .. (315,30) .. controls (313.54,25.42) and (313.57,25.58) .. (315,20) .. controls (316.43,14.42) and (324.67,15.06) .. (330,15) ;
\draw [shift={(340,40)}, rotate = 181.03] [color={rgb, 255:red, 191; green, 97; blue, 106 }  ,draw opacity=1 ][line width=0.75]    (4.37,-1.32) .. controls (2.78,-0.56) and (1.32,-0.12) .. (0,0) .. controls (1.32,0.12) and (2.78,0.56) .. (4.37,1.32)   ;
\draw [color={rgb, 255:red, 191; green, 97; blue, 106 }  ,draw opacity=1 ]   (330,45) .. controls (341.21,45.37) and (353.54,34.58) .. (355,30) .. controls (356.46,25.42) and (356.43,25.58) .. (355,20) .. controls (353.73,15.06) and (347.13,15) .. (341.94,15) ;
\draw [shift={(340,15)}, rotate = 0.66] [color={rgb, 255:red, 191; green, 97; blue, 106 }  ,draw opacity=1 ][line width=0.75]    (4.37,-1.32) .. controls (2.78,-0.56) and (1.32,-0.12) .. (0,0) .. controls (1.32,0.12) and (2.78,0.56) .. (4.37,1.32)   ;
\draw  [draw opacity=0][fill={rgb, 255:red, 255; green, 255; blue, 255 }  ,fill opacity=1 ] (342.5,40) .. controls (342.5,38.62) and (343.62,37.5) .. (345,37.5) .. controls (346.38,37.5) and (347.5,38.62) .. (347.5,40) .. controls (347.5,41.38) and (346.38,42.5) .. (345,42.5) .. controls (343.62,42.5) and (342.5,41.38) .. (342.5,40) -- cycle ;
\draw [color={rgb, 255:red, 191; green, 97; blue, 106 }  ,draw opacity=1 ]   (192.96,39.93) .. controls (187.06,39.68) and (186.32,39.14) .. (185,35) .. controls (183.54,30.42) and (183.57,25.58) .. (185,20) .. controls (186.43,14.42) and (189.67,15.06) .. (195,15) ;
\draw [shift={(195,40)}, rotate = 181.8] [color={rgb, 255:red, 191; green, 97; blue, 106 }  ,draw opacity=1 ][line width=0.75]    (4.37,-1.32) .. controls (2.78,-0.56) and (1.32,-0.12) .. (0,0) .. controls (1.32,0.12) and (2.78,0.56) .. (4.37,1.32)   ;
\draw [color={rgb, 255:red, 191; green, 97; blue, 106 }  ,draw opacity=1 ]   (207.04,14.86) .. controls (211.32,14.74) and (213.73,16.02) .. (215,20) .. controls (216.46,24.58) and (216.43,29.42) .. (215,35) .. controls (213.57,40.58) and (210.33,39.94) .. (205,40) ;
\draw [shift={(205,15)}, rotate = 354.07] [color={rgb, 255:red, 191; green, 97; blue, 106 }  ,draw opacity=1 ][line width=0.75]    (4.37,-1.32) .. controls (2.78,-0.56) and (1.32,-0.12) .. (0,0) .. controls (1.32,0.12) and (2.78,0.56) .. (4.37,1.32)   ;
\draw  [draw opacity=0][fill={rgb, 255:red, 191; green, 97; blue, 106 }  ,fill opacity=0.25 ] (197.5,40) .. controls (197.5,38.62) and (198.62,37.5) .. (200,37.5) .. controls (201.38,37.5) and (202.5,38.62) .. (202.5,40) .. controls (202.5,41.38) and (201.38,42.5) .. (200,42.5) .. controls (198.62,42.5) and (197.5,41.38) .. (197.5,40) -- cycle ;
\draw    (345,50) -- (345,35) ;
\draw   (140,35) -- (144.5,20) -- (155.5,20) -- (160,35) -- cycle ;
\draw    (150,20) -- (150,10) ;
\draw [color={rgb, 255:red, 191; green, 97; blue, 106 }  ,draw opacity=1 ]   (140,20) .. controls (136.32,28.79) and (134.61,39.97) .. (140,40) .. controls (145.39,40.03) and (154.61,39.97) .. (160,40) .. controls (165.06,40.02) and (165.31,30.69) .. (160.9,21.71) ;
\draw [shift={(160,20)}, rotate = 60.63] [color={rgb, 255:red, 191; green, 97; blue, 106 }  ,draw opacity=1 ][line width=0.75]    (4.37,-1.32) .. controls (2.78,-0.56) and (1.32,-0.12) .. (0,0) .. controls (1.32,0.12) and (2.78,0.56) .. (4.37,1.32)   ;
\draw  [draw opacity=0] (140,35) -- (160,35) -- (160,55) -- (140,55) -- cycle ;
\draw   (250,35) -- (254.5,20) -- (275.5,20) -- (280,35) -- cycle ;
\draw    (255,45) -- (255,35) ;
\draw    (275,45) -- (275,35) ;
\draw    (265,20) -- (265,10) ;
\draw [color={rgb, 255:red, 191; green, 97; blue, 106 }  ,draw opacity=1 ]   (248.13,39.28) .. controls (246.96,38.82) and (245.98,38.09) .. (245,35) .. controls (243.54,30.42) and (243.57,25.58) .. (245,20) .. controls (246.43,14.42) and (254.67,15.06) .. (260,15) ;
\draw [shift={(250,40)}, rotate = 205.96] [color={rgb, 255:red, 191; green, 97; blue, 106 }  ,draw opacity=1 ][line width=0.75]    (4.37,-1.32) .. controls (2.78,-0.56) and (1.32,-0.12) .. (0,0) .. controls (1.32,0.12) and (2.78,0.56) .. (4.37,1.32)   ;
\draw [color={rgb, 255:red, 191; green, 97; blue, 106 }  ,draw opacity=1 ]   (260,40) -- (268,40) ;
\draw [shift={(270,40)}, rotate = 180] [color={rgb, 255:red, 191; green, 97; blue, 106 }  ,draw opacity=1 ][line width=0.75]    (4.37,-1.32) .. controls (2.78,-0.56) and (1.32,-0.12) .. (0,0) .. controls (1.32,0.12) and (2.78,0.56) .. (4.37,1.32)   ;
\draw [color={rgb, 255:red, 191; green, 97; blue, 106 }  ,draw opacity=1 ]   (280,40) .. controls (282.43,39.87) and (283.54,39.58) .. (285,35) .. controls (286.46,30.42) and (286.43,25.58) .. (285,20) .. controls (283.73,15.06) and (277.13,15) .. (271.94,15) ;
\draw [shift={(270,15)}, rotate = 0.66] [color={rgb, 255:red, 191; green, 97; blue, 106 }  ,draw opacity=1 ][line width=0.75]    (4.37,-1.32) .. controls (2.78,-0.56) and (1.32,-0.12) .. (0,0) .. controls (1.32,0.12) and (2.78,0.56) .. (4.37,1.32)   ;

\draw (200,27.5) node  [font=\footnotesize]  {$a$};
\draw (100,27.5) node  [font=\footnotesize]  {$a$};
\draw (77.5,27.5) node  [font=\tiny,color={rgb, 255:red, 191; green, 97; blue, 106 }  ,opacity=1 ]  {$a_{0}$};
\draw (122.5,27.5) node  [font=\tiny,color={rgb, 255:red, 191; green, 97; blue, 106 }  ,opacity=1 ]  {$a_{1}$};
\draw (335,27.5) node  [font=\footnotesize]  {$a$};
\draw (307.5,27.5) node  [font=\tiny,color={rgb, 255:red, 191; green, 97; blue, 106 }  ,opacity=1 ]  {$a_{0}$};
\draw (362.5,27.5) node  [font=\tiny,color={rgb, 255:red, 191; green, 97; blue, 106 }  ,opacity=1 ]  {$a_{1}$};
\draw (222.5,27.5) node  [font=\tiny,color={rgb, 255:red, 191; green, 97; blue, 106 }  ,opacity=1 ]  {$a_{1}$};
\draw (177.5,27.5) node  [font=\tiny,color={rgb, 255:red, 191; green, 97; blue, 106 }  ,opacity=1 ]  {$a_{0}$};
\draw (150,27.5) node  [font=\footnotesize]  {$a$};
\draw (150,45) node  [font=\tiny,color={rgb, 255:red, 191; green, 97; blue, 106 }  ,opacity=1 ]  {$a_{0}$};
\draw (265,27.5) node  [font=\footnotesize]  {$a$};
\draw (237.5,27.5) node  [font=\tiny,color={rgb, 255:red, 191; green, 97; blue, 106 }  ,opacity=1 ]  {$a_{0}$};
\draw (265,47.5) node  [font=\tiny,color={rgb, 255:red, 191; green, 97; blue, 106 }  ,opacity=1 ]  {$a_{1}$};
\draw (292.5,27.5) node  [font=\tiny,color={rgb, 255:red, 191; green, 97; blue, 106 }  ,opacity=1 ]  {$a_{2}$};

\end{tikzpicture}
       \caption{Generators of the monoidal category of contours.}
      \label{fig:monoidal-contour}
    \end{figure}

    Specifically, it is freely presented by
    \emph{(i)} a pair of morphisms $a_0 \in 𝓓𝔹(A^L; X^L)$, $a_1 \in 𝓓𝔹(X^R;A^R)$ for each morphism $a \in 𝔹(A;X)$;
    \emph{(ii)} a morphism $a_0 \in 𝓓𝔹(A^L; A^R)$, for each \sequentialUnit{} $a \in ℂ(A;N)$;
    \emph{(iii)} a pair of morphisms $a_0 \in 𝓓𝔹(A^L; I)$ and $a_0 \in 𝓓𝔹(I; A^R)$, for each \parallelUnit{} $a \in 𝔹(A;I)$;
    \emph{(iv)} a triple of morphisms
    $a_0 \in 𝓓𝔹(A^L; X^L)$, $a_1 \in 𝓓𝔹(X^R; Y^L)$, $a_2 \in 𝓓𝔹(Y^R; A^R)$ for each \sequentialSplit{} $a \in 𝔹(A;X ◁ Y)$; and
    \emph{(v)} a pair of morphisms $a_0 \in 𝓓𝔹(A^L; X^L ⊗ Y^L)$ and $a_1 \in 𝓓𝔹(X^R ⊗ Y^R; A^R)$ for each \parallelSplit{} $a \in 𝔹(A; X ⊗ Y)$, see \Cref{fig:monoidal-contour}.

    We impose the same equations as in the categorical contour coming from the associator and unitor of the $◁$ structure; but moreover, we impose the following new equations, coming from the $⊗$ structure:
    For each application of associativity, $α(a ⨾_1 b) = c ⨾_2 d$, we impose the equations $a_0 ⨾ (b_0 ⊗ \im) = c_0 ⨾ (\im ⊗ d_0)$ and $(b_1 ⊗ \im) ⨾ a_1 = (\im ⊗ d_1) ⨾ c_1$. These follow from \Cref{fig:monoidalContourAssociativity}.

    \begin{figure}[ht]
      \centering

\tikzset{every picture/.style={line width=0.75pt}} %

\begin{tikzpicture}[x=0.75pt,y=0.75pt,yscale=-1,xscale=1]
\draw  [color={rgb, 255:red, 0; green, 0; blue, 0 }  ,draw opacity=1 ][fill={rgb, 255:red, 191; green, 97; blue, 106 }  ,fill opacity=0.25 ] (35,75) -- (39.5,60) -- (60.5,60) -- (65,75) -- cycle ;
\draw    (40,90) -- (40,75) ;
\draw [color={rgb, 255:red, 191; green, 97; blue, 106 }  ,draw opacity=1 ]   (34.17,79.65) .. controls (32.5,79.04) and (31.26,78.95) .. (30,75) .. controls (28.54,70.42) and (28.57,65.58) .. (30,60) .. controls (31.43,54.42) and (39.67,55.06) .. (45,55) ;
\draw [shift={(35,80)}, rotate = 205.96] [color={rgb, 255:red, 191; green, 97; blue, 106 }  ,draw opacity=1 ][line width=0.75]    (4.37,-1.32) .. controls (2.78,-0.56) and (1.32,-0.12) .. (0,0) .. controls (1.32,0.12) and (2.78,0.56) .. (4.37,1.32)   ;
\draw [color={rgb, 255:red, 191; green, 97; blue, 106 }  ,draw opacity=1 ]   (65,85) .. controls (67.43,84.87) and (68.54,79.58) .. (70,75) .. controls (71.46,70.42) and (71.43,65.58) .. (70,60) .. controls (68.73,55.06) and (62.13,55) .. (56.94,55) ;
\draw [shift={(55,55)}, rotate = 0.66] [color={rgb, 255:red, 191; green, 97; blue, 106 }  ,draw opacity=1 ][line width=0.75]    (4.37,-1.32) .. controls (2.78,-0.56) and (1.32,-0.12) .. (0,0) .. controls (1.32,0.12) and (2.78,0.56) .. (4.37,1.32)   ;
\draw [color={rgb, 255:red, 191; green, 97; blue, 106 }  ,draw opacity=1 ]   (45,85) .. controls (56.21,85.37) and (68.54,74.58) .. (70,70) .. controls (71.46,65.42) and (71.43,65.58) .. (70,60) .. controls (68.73,55.06) and (62.13,55) .. (56.94,55) ;
\draw [shift={(55,55)}, rotate = 0.66] [color={rgb, 255:red, 191; green, 97; blue, 106 }  ,draw opacity=1 ][line width=0.75]    (4.37,-1.32) .. controls (2.78,-0.56) and (1.32,-0.12) .. (0,0) .. controls (1.32,0.12) and (2.78,0.56) .. (4.37,1.32)   ;
\draw  [draw opacity=0][fill={rgb, 255:red, 255; green, 255; blue, 255 }  ,fill opacity=1 ] (57.5,80) .. controls (57.5,78.62) and (58.62,77.5) .. (60,77.5) .. controls (61.38,77.5) and (62.5,78.62) .. (62.5,80) .. controls (62.5,81.38) and (61.38,82.5) .. (60,82.5) .. controls (58.62,82.5) and (57.5,81.38) .. (57.5,80) -- cycle ;
\draw    (60,90) -- (60,75) ;
\draw  [draw opacity=0][fill={rgb, 255:red, 255; green, 255; blue, 255 }  ,fill opacity=1 ] (37,78) .. controls (37,76.9) and (38.34,76) .. (40,76) .. controls (41.66,76) and (43,76.9) .. (43,78) .. controls (43,79.1) and (41.66,80) .. (40,80) .. controls (38.34,80) and (37,79.1) .. (37,78) -- cycle ;
\draw [color={rgb, 255:red, 191; green, 97; blue, 106 }  ,draw opacity=1 ]   (52.82,79.94) .. controls (33.45,79.26) and (31.4,74.42) .. (30,70) .. controls (28.54,65.42) and (28.57,65.58) .. (30,60) .. controls (31.43,54.42) and (39.67,55.06) .. (45,55) ;
\draw [shift={(55,80)}, rotate = 181.03] [color={rgb, 255:red, 191; green, 97; blue, 106 }  ,draw opacity=1 ][line width=0.75]    (4.37,-1.32) .. controls (2.78,-0.56) and (1.32,-0.12) .. (0,0) .. controls (1.32,0.12) and (2.78,0.56) .. (4.37,1.32)   ;
\draw [color={rgb, 255:red, 191; green, 97; blue, 106 }  ,draw opacity=1 ]   (67.82,39.94) .. controls (48.45,39.26) and (46.4,34.42) .. (45,30) .. controls (43.54,25.42) and (43.57,25.58) .. (45,20) .. controls (46.43,14.42) and (54.67,15.06) .. (60,15) ;
\draw [shift={(70,40)}, rotate = 181.03] [color={rgb, 255:red, 191; green, 97; blue, 106 }  ,draw opacity=1 ][line width=0.75]    (4.37,-1.32) .. controls (2.78,-0.56) and (1.32,-0.12) .. (0,0) .. controls (1.32,0.12) and (2.78,0.56) .. (4.37,1.32)   ;
\draw    (55,36) .. controls (55.03,50.73) and (50.03,44.73) .. (50,60) ;
\draw  [color={rgb, 255:red, 0; green, 0; blue, 0 }  ,draw opacity=1 ][fill={rgb, 255:red, 191; green, 97; blue, 106 }  ,fill opacity=0.25 ] (50,35) -- (54.5,20) -- (75.5,20) -- (80,35) -- cycle ;
\draw [color={rgb, 255:red, 191; green, 97; blue, 106 }  ,draw opacity=1 ]   (49.17,39.65) .. controls (47.5,39.04) and (46.26,38.95) .. (45,35) .. controls (43.54,30.42) and (43.57,25.58) .. (45,20) .. controls (46.43,14.42) and (54.67,15.06) .. (60,15) ;
\draw [shift={(50,40)}, rotate = 205.96] [color={rgb, 255:red, 191; green, 97; blue, 106 }  ,draw opacity=1 ][line width=0.75]    (4.37,-1.32) .. controls (2.78,-0.56) and (1.32,-0.12) .. (0,0) .. controls (1.32,0.12) and (2.78,0.56) .. (4.37,1.32)   ;
\draw [color={rgb, 255:red, 191; green, 97; blue, 106 }  ,draw opacity=1 ]   (80,45) .. controls (82.43,44.87) and (83.54,39.58) .. (85,35) .. controls (86.46,30.42) and (86.43,25.58) .. (85,20) .. controls (83.73,15.06) and (77.13,15) .. (71.94,15) ;
\draw [shift={(70,15)}, rotate = 0.66] [color={rgb, 255:red, 191; green, 97; blue, 106 }  ,draw opacity=1 ][line width=0.75]    (4.37,-1.32) .. controls (2.78,-0.56) and (1.32,-0.12) .. (0,0) .. controls (1.32,0.12) and (2.78,0.56) .. (4.37,1.32)   ;
\draw [color={rgb, 255:red, 191; green, 97; blue, 106 }  ,draw opacity=1 ]   (55,45) .. controls (75.36,44.69) and (83.54,34.58) .. (85,30) .. controls (86.46,25.42) and (86.43,25.58) .. (85,20) .. controls (83.73,15.06) and (77.13,15) .. (71.94,15) ;
\draw [shift={(70,15)}, rotate = 0.66] [color={rgb, 255:red, 191; green, 97; blue, 106 }  ,draw opacity=1 ][line width=0.75]    (4.37,-1.32) .. controls (2.78,-0.56) and (1.32,-0.12) .. (0,0) .. controls (1.32,0.12) and (2.78,0.56) .. (4.37,1.32)   ;
\draw    (65,20) -- (65,10) ;
\draw  [draw opacity=0][fill={rgb, 255:red, 255; green, 255; blue, 255 }  ,fill opacity=1 ] (52,38) .. controls (52,36.9) and (53.34,36) .. (55,36) .. controls (56.66,36) and (58,36.9) .. (58,38) .. controls (58,39.1) and (56.66,40) .. (55,40) .. controls (53.34,40) and (52,39.1) .. (52,38) -- cycle ;
\draw [color={rgb, 255:red, 191; green, 97; blue, 106 }  ,draw opacity=1 ]   (67.82,39.94) .. controls (48.45,39.26) and (46.4,34.42) .. (45,30) .. controls (43.54,25.42) and (43.57,25.58) .. (45,20) .. controls (46.43,14.42) and (54.67,15.06) .. (60,15) ;
\draw [shift={(70,40)}, rotate = 181.03] [color={rgb, 255:red, 191; green, 97; blue, 106 }  ,draw opacity=1 ][line width=0.75]    (4.37,-1.32) .. controls (2.78,-0.56) and (1.32,-0.12) .. (0,0) .. controls (1.32,0.12) and (2.78,0.56) .. (4.37,1.32)   ;
\draw    (75,35) .. controls (75.03,49.73) and (80.03,44.73) .. (80,60) ;
\draw    (80,90) -- (80,60) ;
\draw  [color={rgb, 255:red, 0; green, 0; blue, 0 }  ,draw opacity=1 ][fill={rgb, 255:red, 191; green, 97; blue, 106 }  ,fill opacity=0.25 ] (140,75) -- (144.5,60) -- (165.5,60) -- (170,75) -- cycle ;
\draw    (145,90) -- (145,75) ;
\draw [color={rgb, 255:red, 191; green, 97; blue, 106 }  ,draw opacity=1 ]   (139.17,79.65) .. controls (137.5,79.04) and (136.26,78.95) .. (135,75) .. controls (133.54,70.42) and (133.57,65.58) .. (135,60) .. controls (136.43,54.42) and (144.67,55.06) .. (150,55) ;
\draw [shift={(140,80)}, rotate = 205.96] [color={rgb, 255:red, 191; green, 97; blue, 106 }  ,draw opacity=1 ][line width=0.75]    (4.37,-1.32) .. controls (2.78,-0.56) and (1.32,-0.12) .. (0,0) .. controls (1.32,0.12) and (2.78,0.56) .. (4.37,1.32)   ;
\draw [color={rgb, 255:red, 191; green, 97; blue, 106 }  ,draw opacity=1 ]   (170,85) .. controls (172.43,84.87) and (173.54,79.58) .. (175,75) .. controls (176.46,70.42) and (176.43,65.58) .. (175,60) .. controls (173.73,55.06) and (167.13,55) .. (161.94,55) ;
\draw [shift={(160,55)}, rotate = 0.66] [color={rgb, 255:red, 191; green, 97; blue, 106 }  ,draw opacity=1 ][line width=0.75]    (4.37,-1.32) .. controls (2.78,-0.56) and (1.32,-0.12) .. (0,0) .. controls (1.32,0.12) and (2.78,0.56) .. (4.37,1.32)   ;
\draw [color={rgb, 255:red, 191; green, 97; blue, 106 }  ,draw opacity=1 ]   (150,85) .. controls (161.21,85.37) and (173.54,74.58) .. (175,70) .. controls (176.46,65.42) and (176.43,65.58) .. (175,60) .. controls (173.73,55.06) and (167.13,55) .. (161.94,55) ;
\draw [shift={(160,55)}, rotate = 0.66] [color={rgb, 255:red, 191; green, 97; blue, 106 }  ,draw opacity=1 ][line width=0.75]    (4.37,-1.32) .. controls (2.78,-0.56) and (1.32,-0.12) .. (0,0) .. controls (1.32,0.12) and (2.78,0.56) .. (4.37,1.32)   ;
\draw  [draw opacity=0][fill={rgb, 255:red, 255; green, 255; blue, 255 }  ,fill opacity=1 ] (162.5,80) .. controls (162.5,78.62) and (163.62,77.5) .. (165,77.5) .. controls (166.38,77.5) and (167.5,78.62) .. (167.5,80) .. controls (167.5,81.38) and (166.38,82.5) .. (165,82.5) .. controls (163.62,82.5) and (162.5,81.38) .. (162.5,80) -- cycle ;
\draw    (165,90) -- (165,75) ;
\draw  [draw opacity=0][fill={rgb, 255:red, 255; green, 255; blue, 255 }  ,fill opacity=1 ] (142,78) .. controls (142,76.9) and (143.34,76) .. (145,76) .. controls (146.66,76) and (148,76.9) .. (148,78) .. controls (148,79.1) and (146.66,80) .. (145,80) .. controls (143.34,80) and (142,79.1) .. (142,78) -- cycle ;
\draw [color={rgb, 255:red, 191; green, 97; blue, 106 }  ,draw opacity=1 ]   (157.82,79.94) .. controls (138.45,79.26) and (136.4,74.42) .. (135,70) .. controls (133.54,65.42) and (133.57,65.58) .. (135,60) .. controls (136.43,54.42) and (144.67,55.06) .. (150,55) ;
\draw [shift={(160,80)}, rotate = 181.03] [color={rgb, 255:red, 191; green, 97; blue, 106 }  ,draw opacity=1 ][line width=0.75]    (4.37,-1.32) .. controls (2.78,-0.56) and (1.32,-0.12) .. (0,0) .. controls (1.32,0.12) and (2.78,0.56) .. (4.37,1.32)   ;
\draw [color={rgb, 255:red, 191; green, 97; blue, 106 }  ,draw opacity=1 ]   (142.82,39.94) .. controls (123.45,39.26) and (121.4,34.42) .. (120,30) .. controls (118.54,25.42) and (118.57,25.58) .. (120,20) .. controls (121.43,14.42) and (129.67,15.06) .. (135,15) ;
\draw [shift={(145,40)}, rotate = 181.03] [color={rgb, 255:red, 191; green, 97; blue, 106 }  ,draw opacity=1 ][line width=0.75]    (4.37,-1.32) .. controls (2.78,-0.56) and (1.32,-0.12) .. (0,0) .. controls (1.32,0.12) and (2.78,0.56) .. (4.37,1.32)   ;
\draw    (130,36) .. controls (130.03,50.73) and (125.03,44.73) .. (125,60) ;
\draw  [color={rgb, 255:red, 0; green, 0; blue, 0 }  ,draw opacity=1 ][fill={rgb, 255:red, 191; green, 97; blue, 106 }  ,fill opacity=0.25 ] (125,35) -- (129.5,20) -- (150.5,20) -- (155,35) -- cycle ;
\draw [color={rgb, 255:red, 191; green, 97; blue, 106 }  ,draw opacity=1 ]   (124.17,39.65) .. controls (122.5,39.04) and (121.26,38.95) .. (120,35) .. controls (118.54,30.42) and (118.57,25.58) .. (120,20) .. controls (121.43,14.42) and (129.67,15.06) .. (135,15) ;
\draw [shift={(125,40)}, rotate = 205.96] [color={rgb, 255:red, 191; green, 97; blue, 106 }  ,draw opacity=1 ][line width=0.75]    (4.37,-1.32) .. controls (2.78,-0.56) and (1.32,-0.12) .. (0,0) .. controls (1.32,0.12) and (2.78,0.56) .. (4.37,1.32)   ;
\draw [color={rgb, 255:red, 191; green, 97; blue, 106 }  ,draw opacity=1 ]   (155,45) .. controls (157.43,44.87) and (158.54,39.58) .. (160,35) .. controls (161.46,30.42) and (161.43,25.58) .. (160,20) .. controls (158.73,15.06) and (152.13,15) .. (146.94,15) ;
\draw [shift={(145,15)}, rotate = 0.66] [color={rgb, 255:red, 191; green, 97; blue, 106 }  ,draw opacity=1 ][line width=0.75]    (4.37,-1.32) .. controls (2.78,-0.56) and (1.32,-0.12) .. (0,0) .. controls (1.32,0.12) and (2.78,0.56) .. (4.37,1.32)   ;
\draw [color={rgb, 255:red, 191; green, 97; blue, 106 }  ,draw opacity=1 ]   (130,45) .. controls (150.36,44.69) and (158.54,34.58) .. (160,30) .. controls (161.46,25.42) and (161.43,25.58) .. (160,20) .. controls (158.73,15.06) and (152.13,15) .. (146.94,15) ;
\draw [shift={(145,15)}, rotate = 0.66] [color={rgb, 255:red, 191; green, 97; blue, 106 }  ,draw opacity=1 ][line width=0.75]    (4.37,-1.32) .. controls (2.78,-0.56) and (1.32,-0.12) .. (0,0) .. controls (1.32,0.12) and (2.78,0.56) .. (4.37,1.32)   ;
\draw    (140,20) -- (140,10) ;
\draw  [draw opacity=0][fill={rgb, 255:red, 255; green, 255; blue, 255 }  ,fill opacity=1 ] (127,38) .. controls (127,36.9) and (128.34,36) .. (130,36) .. controls (131.66,36) and (133,36.9) .. (133,38) .. controls (133,39.1) and (131.66,40) .. (130,40) .. controls (128.34,40) and (127,39.1) .. (127,38) -- cycle ;
\draw [color={rgb, 255:red, 191; green, 97; blue, 106 }  ,draw opacity=1 ]   (142.82,39.94) .. controls (123.45,39.26) and (121.4,34.42) .. (120,30) .. controls (118.54,25.42) and (118.57,25.58) .. (120,20) .. controls (121.43,14.42) and (129.67,15.06) .. (135,15) ;
\draw [shift={(145,40)}, rotate = 181.03] [color={rgb, 255:red, 191; green, 97; blue, 106 }  ,draw opacity=1 ][line width=0.75]    (4.37,-1.32) .. controls (2.78,-0.56) and (1.32,-0.12) .. (0,0) .. controls (1.32,0.12) and (2.78,0.56) .. (4.37,1.32)   ;
\draw    (150,35) .. controls (150.03,49.73) and (155.03,44.73) .. (155,60) ;
\draw    (125,90) -- (125,60) ;
\draw  [draw opacity=0] (90,40) -- (115,40) -- (115,60) -- (90,60) -- cycle ;

\draw (50,67.5) node  [font=\footnotesize]  {$b$};
\draw (22.5,67.5) node  [font=\tiny,color={rgb, 255:red, 191; green, 97; blue, 106 }  ,opacity=1 ]  {$b_{0}$};
\draw (72.5,54.5) node  [font=\tiny,color={rgb, 255:red, 191; green, 97; blue, 106 }  ,opacity=1 ]  {$b_{1}$};
\draw (65,27.5) node  [font=\footnotesize]  {$a$};
\draw (37.5,27.5) node  [font=\tiny,color={rgb, 255:red, 191; green, 97; blue, 106 }  ,opacity=1 ]  {$a_{0}$};
\draw (92.5,25.5) node  [font=\tiny,color={rgb, 255:red, 191; green, 97; blue, 106 }  ,opacity=1 ]  {$a_{1}$};
\draw (155,67.5) node  [font=\footnotesize]  {$d$};
\draw (130.5,55.5) node  [font=\tiny,color={rgb, 255:red, 191; green, 97; blue, 106 }  ,opacity=1 ]  {$d_{0}$};
\draw (177.5,54.5) node  [font=\tiny,color={rgb, 255:red, 191; green, 97; blue, 106 }  ,opacity=1 ]  {$d_{1}$};
\draw (140,27.5) node  [font=\footnotesize]  {$c$};
\draw (112.5,27.5) node  [font=\tiny,color={rgb, 255:red, 191; green, 97; blue, 106 }  ,opacity=1 ]  {$c_{0}$};
\draw (167.5,25.5) node  [font=\tiny,color={rgb, 255:red, 191; green, 97; blue, 106 }  ,opacity=1 ]  {$c_{1}$};
\draw (102.5,50) node  [font=\footnotesize]  {$=$};

\end{tikzpicture}
       \caption{Equation between contours from $⊗$ associator.}
      \label{fig:monoidalContourAssociativity}
    \end{figure}
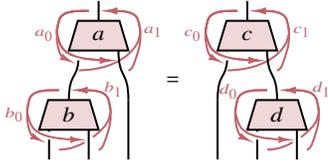

    For each application of unitality, $λ(a ⨾_1 b) = c = \rho(d ⨾_2 e)$, we impose the equations $a_0 ⨾ (b_0 ⊗ \im) = c_0 = d_0 ⨾ (\im ⊗ e_0)$ and $(b_1 ⊗ \im) ⨾ a_1 = c_1 = (\im ⊗ e_1) ⨾ d_1$. These follow from \Cref{fig:monoidalContourUnitality}.

    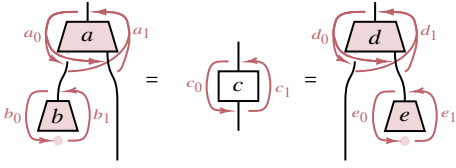
\begin{figure}[ht]
      \centering

\tikzset{every picture/.style={line width=0.75pt}} %

\begin{tikzpicture}[x=0.75pt,y=0.75pt,yscale=-1,xscale=1]
\draw [color={rgb, 255:red, 191; green, 97; blue, 106 }  ,draw opacity=1 ]   (67.82,39.94) .. controls (48.45,39.26) and (46.4,34.42) .. (45,30) .. controls (43.54,25.42) and (43.57,25.58) .. (45,20) .. controls (46.43,14.42) and (54.67,15.06) .. (60,15) ;
\draw [shift={(70,40)}, rotate = 181.03] [color={rgb, 255:red, 191; green, 97; blue, 106 }  ,draw opacity=1 ][line width=0.75]    (4.37,-1.32) .. controls (2.78,-0.56) and (1.32,-0.12) .. (0,0) .. controls (1.32,0.12) and (2.78,0.56) .. (4.37,1.32)   ;
\draw    (55,36) .. controls (55.03,50.73) and (50.03,44.73) .. (50,60) ;
\draw  [color={rgb, 255:red, 0; green, 0; blue, 0 }  ,draw opacity=1 ][fill={rgb, 255:red, 191; green, 97; blue, 106 }  ,fill opacity=0.25 ] (50,35) -- (54.5,20) -- (75.5,20) -- (80,35) -- cycle ;
\draw [color={rgb, 255:red, 191; green, 97; blue, 106 }  ,draw opacity=1 ]   (49.17,39.65) .. controls (47.5,39.04) and (46.26,38.95) .. (45,35) .. controls (43.54,30.42) and (43.57,25.58) .. (45,20) .. controls (46.43,14.42) and (54.67,15.06) .. (60,15) ;
\draw [shift={(50,40)}, rotate = 205.96] [color={rgb, 255:red, 191; green, 97; blue, 106 }  ,draw opacity=1 ][line width=0.75]    (4.37,-1.32) .. controls (2.78,-0.56) and (1.32,-0.12) .. (0,0) .. controls (1.32,0.12) and (2.78,0.56) .. (4.37,1.32)   ;
\draw [color={rgb, 255:red, 191; green, 97; blue, 106 }  ,draw opacity=1 ]   (80,45) .. controls (82.43,44.87) and (83.54,39.58) .. (85,35) .. controls (86.46,30.42) and (86.43,25.58) .. (85,20) .. controls (83.73,15.06) and (77.13,15) .. (71.94,15) ;
\draw [shift={(70,15)}, rotate = 0.66] [color={rgb, 255:red, 191; green, 97; blue, 106 }  ,draw opacity=1 ][line width=0.75]    (4.37,-1.32) .. controls (2.78,-0.56) and (1.32,-0.12) .. (0,0) .. controls (1.32,0.12) and (2.78,0.56) .. (4.37,1.32)   ;
\draw [color={rgb, 255:red, 191; green, 97; blue, 106 }  ,draw opacity=1 ]   (55,45) .. controls (75.36,44.69) and (83.54,34.58) .. (85,30) .. controls (86.46,25.42) and (86.43,25.58) .. (85,20) .. controls (83.73,15.06) and (77.13,15) .. (71.94,15) ;
\draw [shift={(70,15)}, rotate = 0.66] [color={rgb, 255:red, 191; green, 97; blue, 106 }  ,draw opacity=1 ][line width=0.75]    (4.37,-1.32) .. controls (2.78,-0.56) and (1.32,-0.12) .. (0,0) .. controls (1.32,0.12) and (2.78,0.56) .. (4.37,1.32)   ;
\draw    (65,20) -- (65,10) ;
\draw  [draw opacity=0][fill={rgb, 255:red, 255; green, 255; blue, 255 }  ,fill opacity=1 ] (52,38) .. controls (52,36.9) and (53.34,36) .. (55,36) .. controls (56.66,36) and (58,36.9) .. (58,38) .. controls (58,39.1) and (56.66,40) .. (55,40) .. controls (53.34,40) and (52,39.1) .. (52,38) -- cycle ;
\draw [color={rgb, 255:red, 191; green, 97; blue, 106 }  ,draw opacity=1 ]   (67.82,39.94) .. controls (48.45,39.26) and (46.4,34.42) .. (45,30) .. controls (43.54,25.42) and (43.57,25.58) .. (45,20) .. controls (46.43,14.42) and (54.67,15.06) .. (60,15) ;
\draw [shift={(70,40)}, rotate = 181.03] [color={rgb, 255:red, 191; green, 97; blue, 106 }  ,draw opacity=1 ][line width=0.75]    (4.37,-1.32) .. controls (2.78,-0.56) and (1.32,-0.12) .. (0,0) .. controls (1.32,0.12) and (2.78,0.56) .. (4.37,1.32)   ;
\draw    (75,35) .. controls (75.03,49.73) and (80.03,44.73) .. (80,60) ;
\draw    (80,90) -- (80,60) ;
\draw [color={rgb, 255:red, 191; green, 97; blue, 106 }  ,draw opacity=1 ]   (212.82,39.94) .. controls (193.45,39.26) and (191.4,34.42) .. (190,30) .. controls (188.54,25.42) and (188.57,25.58) .. (190,20) .. controls (191.43,14.42) and (199.67,15.06) .. (205,15) ;
\draw [shift={(215,40)}, rotate = 181.03] [color={rgb, 255:red, 191; green, 97; blue, 106 }  ,draw opacity=1 ][line width=0.75]    (4.37,-1.32) .. controls (2.78,-0.56) and (1.32,-0.12) .. (0,0) .. controls (1.32,0.12) and (2.78,0.56) .. (4.37,1.32)   ;
\draw    (200,36) .. controls (200.03,50.73) and (195.03,44.73) .. (195,60) ;
\draw  [color={rgb, 255:red, 0; green, 0; blue, 0 }  ,draw opacity=1 ][fill={rgb, 255:red, 191; green, 97; blue, 106 }  ,fill opacity=0.25 ] (195,35) -- (199.5,20) -- (220.5,20) -- (225,35) -- cycle ;
\draw [color={rgb, 255:red, 191; green, 97; blue, 106 }  ,draw opacity=1 ]   (194.17,39.65) .. controls (192.5,39.04) and (191.26,38.95) .. (190,35) .. controls (188.54,30.42) and (188.57,25.58) .. (190,20) .. controls (191.43,14.42) and (199.67,15.06) .. (205,15) ;
\draw [shift={(195,40)}, rotate = 205.96] [color={rgb, 255:red, 191; green, 97; blue, 106 }  ,draw opacity=1 ][line width=0.75]    (4.37,-1.32) .. controls (2.78,-0.56) and (1.32,-0.12) .. (0,0) .. controls (1.32,0.12) and (2.78,0.56) .. (4.37,1.32)   ;
\draw [color={rgb, 255:red, 191; green, 97; blue, 106 }  ,draw opacity=1 ]   (225,45) .. controls (227.43,44.87) and (228.54,39.58) .. (230,35) .. controls (231.46,30.42) and (231.43,25.58) .. (230,20) .. controls (228.73,15.06) and (222.13,15) .. (216.94,15) ;
\draw [shift={(215,15)}, rotate = 0.66] [color={rgb, 255:red, 191; green, 97; blue, 106 }  ,draw opacity=1 ][line width=0.75]    (4.37,-1.32) .. controls (2.78,-0.56) and (1.32,-0.12) .. (0,0) .. controls (1.32,0.12) and (2.78,0.56) .. (4.37,1.32)   ;
\draw [color={rgb, 255:red, 191; green, 97; blue, 106 }  ,draw opacity=1 ]   (200,45) .. controls (220.36,44.69) and (228.54,34.58) .. (230,30) .. controls (231.46,25.42) and (231.43,25.58) .. (230,20) .. controls (228.73,15.06) and (222.13,15) .. (216.94,15) ;
\draw [shift={(215,15)}, rotate = 0.66] [color={rgb, 255:red, 191; green, 97; blue, 106 }  ,draw opacity=1 ][line width=0.75]    (4.37,-1.32) .. controls (2.78,-0.56) and (1.32,-0.12) .. (0,0) .. controls (1.32,0.12) and (2.78,0.56) .. (4.37,1.32)   ;
\draw    (210,20) -- (210,10) ;
\draw  [draw opacity=0][fill={rgb, 255:red, 255; green, 255; blue, 255 }  ,fill opacity=1 ] (197,38) .. controls (197,36.9) and (198.34,36) .. (200,36) .. controls (201.66,36) and (203,36.9) .. (203,38) .. controls (203,39.1) and (201.66,40) .. (200,40) .. controls (198.34,40) and (197,39.1) .. (197,38) -- cycle ;
\draw [color={rgb, 255:red, 191; green, 97; blue, 106 }  ,draw opacity=1 ]   (212.82,39.94) .. controls (193.45,39.26) and (191.4,34.42) .. (190,30) .. controls (188.54,25.42) and (188.57,25.58) .. (190,20) .. controls (191.43,14.42) and (199.67,15.06) .. (205,15) ;
\draw [shift={(215,40)}, rotate = 181.03] [color={rgb, 255:red, 191; green, 97; blue, 106 }  ,draw opacity=1 ][line width=0.75]    (4.37,-1.32) .. controls (2.78,-0.56) and (1.32,-0.12) .. (0,0) .. controls (1.32,0.12) and (2.78,0.56) .. (4.37,1.32)   ;
\draw    (220,35) .. controls (220.03,49.73) and (225.03,44.73) .. (225,60) ;
\draw    (195,90) -- (195,60) ;
\draw  [draw opacity=0] (165,40) -- (190,40) -- (190,60) -- (165,60) -- cycle ;
\draw  [fill={rgb, 255:red, 191; green, 97; blue, 106 }  ,fill opacity=0.25 ] (40,75) -- (44.5,60) -- (55.5,60) -- (60,75) -- cycle ;
\draw  [draw opacity=0] (40,75) -- (60,75) -- (60,95) -- (40,95) -- cycle ;
\draw [color={rgb, 255:red, 191; green, 97; blue, 106 }  ,draw opacity=1 ]   (42.96,79.93) .. controls (37.06,79.68) and (36.32,79.14) .. (35,75) .. controls (33.54,70.42) and (33.57,65.58) .. (35,60) .. controls (36.43,54.42) and (39.67,55.06) .. (45,55) ;
\draw [shift={(45,80)}, rotate = 181.8] [color={rgb, 255:red, 191; green, 97; blue, 106 }  ,draw opacity=1 ][line width=0.75]    (4.37,-1.32) .. controls (2.78,-0.56) and (1.32,-0.12) .. (0,0) .. controls (1.32,0.12) and (2.78,0.56) .. (4.37,1.32)   ;
\draw [color={rgb, 255:red, 191; green, 97; blue, 106 }  ,draw opacity=1 ]   (57.04,54.86) .. controls (61.32,54.74) and (63.73,56.02) .. (65,60) .. controls (66.46,64.58) and (66.43,69.42) .. (65,75) .. controls (63.57,80.58) and (60.33,79.94) .. (55,80) ;
\draw [shift={(55,55)}, rotate = 354.07] [color={rgb, 255:red, 191; green, 97; blue, 106 }  ,draw opacity=1 ][line width=0.75]    (4.37,-1.32) .. controls (2.78,-0.56) and (1.32,-0.12) .. (0,0) .. controls (1.32,0.12) and (2.78,0.56) .. (4.37,1.32)   ;
\draw  [draw opacity=0][fill={rgb, 255:red, 191; green, 97; blue, 106 }  ,fill opacity=0.25 ] (47.5,80) .. controls (47.5,78.62) and (48.62,77.5) .. (50,77.5) .. controls (51.38,77.5) and (52.5,78.62) .. (52.5,80) .. controls (52.5,81.38) and (51.38,82.5) .. (50,82.5) .. controls (48.62,82.5) and (47.5,81.38) .. (47.5,80) -- cycle ;
\draw  [fill={rgb, 255:red, 191; green, 97; blue, 106 }  ,fill opacity=0.25 ] (215,75.15) -- (219.5,60.15) -- (230.5,60.15) -- (235,75.15) -- cycle ;
\draw  [draw opacity=0] (215,75.15) -- (235,75.15) -- (235,95.15) -- (215,95.15) -- cycle ;
\draw [color={rgb, 255:red, 191; green, 97; blue, 106 }  ,draw opacity=1 ]   (217.96,80.08) .. controls (212.06,79.83) and (211.32,79.29) .. (210,75.15) .. controls (208.54,70.57) and (208.57,65.73) .. (210,60.15) .. controls (211.43,54.57) and (214.67,55.21) .. (220,55.15) ;
\draw [shift={(220,80.15)}, rotate = 181.8] [color={rgb, 255:red, 191; green, 97; blue, 106 }  ,draw opacity=1 ][line width=0.75]    (4.37,-1.32) .. controls (2.78,-0.56) and (1.32,-0.12) .. (0,0) .. controls (1.32,0.12) and (2.78,0.56) .. (4.37,1.32)   ;
\draw [color={rgb, 255:red, 191; green, 97; blue, 106 }  ,draw opacity=1 ]   (232.04,55.01) .. controls (236.32,54.89) and (238.73,56.17) .. (240,60.15) .. controls (241.46,64.73) and (241.43,69.57) .. (240,75.15) .. controls (238.57,80.73) and (235.33,80.09) .. (230,80.15) ;
\draw [shift={(230,55.15)}, rotate = 354.07] [color={rgb, 255:red, 191; green, 97; blue, 106 }  ,draw opacity=1 ][line width=0.75]    (4.37,-1.32) .. controls (2.78,-0.56) and (1.32,-0.12) .. (0,0) .. controls (1.32,0.12) and (2.78,0.56) .. (4.37,1.32)   ;
\draw  [draw opacity=0][fill={rgb, 255:red, 191; green, 97; blue, 106 }  ,fill opacity=0.25 ] (222.5,80.15) .. controls (222.5,78.77) and (223.62,77.65) .. (225,77.65) .. controls (226.38,77.65) and (227.5,78.77) .. (227.5,80.15) .. controls (227.5,81.53) and (226.38,82.65) .. (225,82.65) .. controls (223.62,82.65) and (222.5,81.53) .. (222.5,80.15) -- cycle ;
\draw    (141.08,45) -- (141.08,30) ;
\draw   (131.08,45) -- (151.08,45) -- (151.08,60) -- (131.08,60) -- cycle ;
\draw    (141.08,75) -- (141.08,60) ;
\draw [color={rgb, 255:red, 191; green, 97; blue, 106 }  ,draw opacity=1 ]   (134.05,64.93) .. controls (128.14,64.68) and (127.4,64.14) .. (126.08,60) .. controls (124.63,55.42) and (124.65,50.58) .. (126.08,45) .. controls (127.52,39.42) and (130.76,40.06) .. (136.08,40) ;
\draw [shift={(136.08,65)}, rotate = 181.8] [color={rgb, 255:red, 191; green, 97; blue, 106 }  ,draw opacity=1 ][line width=0.75]    (4.37,-1.32) .. controls (2.78,-0.56) and (1.32,-0.12) .. (0,0) .. controls (1.32,0.12) and (2.78,0.56) .. (4.37,1.32)   ;
\draw [color={rgb, 255:red, 191; green, 97; blue, 106 }  ,draw opacity=1 ]   (148.13,39.86) .. controls (152.4,39.74) and (154.82,41.02) .. (156.08,45) .. controls (157.54,49.58) and (157.52,54.42) .. (156.08,60) .. controls (154.65,65.58) and (151.41,64.94) .. (146.08,65) ;
\draw [shift={(146.08,40)}, rotate = 354.07] [color={rgb, 255:red, 191; green, 97; blue, 106 }  ,draw opacity=1 ][line width=0.75]    (4.37,-1.32) .. controls (2.78,-0.56) and (1.32,-0.12) .. (0,0) .. controls (1.32,0.12) and (2.78,0.56) .. (4.37,1.32)   ;
\draw  [draw opacity=0] (85,40) -- (110,40) -- (110,60) -- (85,60) -- cycle ;

\draw (65,27.5) node  [font=\footnotesize]  {$a$};
\draw (37.5,27.5) node  [font=\tiny,color={rgb, 255:red, 191; green, 97; blue, 106 }  ,opacity=1 ]  {$a_{0}$};
\draw (92.5,25.5) node  [font=\tiny,color={rgb, 255:red, 191; green, 97; blue, 106 }  ,opacity=1 ]  {$a_{1}$};
\draw (210,27.5) node  [font=\footnotesize]  {$d$};
\draw (182.5,27.5) node  [font=\tiny,color={rgb, 255:red, 191; green, 97; blue, 106 }  ,opacity=1 ]  {$d_{0}$};
\draw (237.5,25.5) node  [font=\tiny,color={rgb, 255:red, 191; green, 97; blue, 106 }  ,opacity=1 ]  {$d_{1}$};
\draw (177.5,50) node  [font=\footnotesize]  {$=$};
\draw (50,67.5) node  [font=\footnotesize]  {$b$};
\draw (72.5,67.5) node  [font=\tiny,color={rgb, 255:red, 191; green, 97; blue, 106 }  ,opacity=1 ]  {$b_{1}$};
\draw (27.5,67.5) node  [font=\tiny,color={rgb, 255:red, 191; green, 97; blue, 106 }  ,opacity=1 ]  {$b_{0}$};
\draw (225,67.65) node  [font=\footnotesize]  {$e$};
\draw (247.5,67.65) node  [font=\tiny,color={rgb, 255:red, 191; green, 97; blue, 106 }  ,opacity=1 ]  {$e_{1}$};
\draw (202.5,67.65) node  [font=\tiny,color={rgb, 255:red, 191; green, 97; blue, 106 }  ,opacity=1 ]  {$e_{0}$};
\draw (141.08,52.5) node  [font=\footnotesize]  {$c$};
\draw (119,52.5) node  [font=\tiny,color={rgb, 255:red, 191; green, 97; blue, 106 }  ,opacity=1 ]  {$c_{0}$};
\draw (164,55) node  [font=\tiny,color={rgb, 255:red, 191; green, 97; blue, 106 }  ,opacity=1 ]  {$c_{1}$};
\draw (97.5,50) node  [font=\footnotesize]  {$=$};

\end{tikzpicture}
       \caption{Equations from $⊗$ unitor.}
      \label{fig:monoidalContourUnitality}
    \end{figure}

    For each application of the laxator, $ψ_2(a \mathbin{|} b \mathbin{|} c) = (d \mathbin{|} e \mathbin{|} f)$, we impose the equation $a_0 ⨾ (b_0 ⊗ c_0) = d_0 ⨾ e_0$, the middle equation $b_1 ⊗ c_1 = e_1 ⨾ d_1 ⨾ f_0$, and $(b_2 ⊗ c_2) ⨾ a_1 = f_1 ⨾ d_2$. These follow \Cref{fig:monoidal-contour-equation1}. We finally impose similar equations for the rest of the laxators, see \Cref{ax:def:monoidalContour} for details.
    \begin{figure}[ht!]
      \centering

\tikzset{every picture/.style={line width=0.75pt}} %

\begin{tikzpicture}[x=0.75pt,y=0.75pt,yscale=-1,xscale=1]
\draw [color={rgb, 255:red, 191; green, 97; blue, 106 }  ,draw opacity=1 ]   (187.82,39.94) .. controls (168.45,39.26) and (166.4,34.42) .. (165,30) .. controls (163.54,25.42) and (163.57,25.58) .. (165,20) .. controls (166.43,14.42) and (174.67,15.06) .. (180,15) ;
\draw [shift={(190,40)}, rotate = 181.03] [color={rgb, 255:red, 191; green, 97; blue, 106 }  ,draw opacity=1 ][line width=0.75]    (4.37,-1.32) .. controls (2.78,-0.56) and (1.32,-0.12) .. (0,0) .. controls (1.32,0.12) and (2.78,0.56) .. (4.37,1.32)   ;
\draw    (175,35) .. controls (175.03,49.73) and (160.03,44.73) .. (160,60) ;
\draw  [color={rgb, 255:red, 0; green, 0; blue, 0 }  ,draw opacity=1 ][fill={rgb, 255:red, 191; green, 97; blue, 106 }  ,fill opacity=0.25 ] (170,35) -- (174.5,20) -- (195.5,20) -- (200,35) -- cycle ;
\draw [color={rgb, 255:red, 191; green, 97; blue, 106 }  ,draw opacity=1 ]   (169.17,39.65) .. controls (167.5,39.04) and (166.26,38.95) .. (165,35) .. controls (163.54,30.42) and (163.57,25.58) .. (165,20) .. controls (166.43,14.42) and (174.67,15.06) .. (180,15) ;
\draw [shift={(170,40)}, rotate = 205.96] [color={rgb, 255:red, 191; green, 97; blue, 106 }  ,draw opacity=1 ][line width=0.75]    (4.37,-1.32) .. controls (2.78,-0.56) and (1.32,-0.12) .. (0,0) .. controls (1.32,0.12) and (2.78,0.56) .. (4.37,1.32)   ;
\draw [color={rgb, 255:red, 191; green, 97; blue, 106 }  ,draw opacity=1 ]   (200,45) .. controls (202.43,44.87) and (203.54,39.58) .. (205,35) .. controls (206.46,30.42) and (206.43,25.58) .. (205,20) .. controls (203.73,15.06) and (197.13,15) .. (191.94,15) ;
\draw [shift={(190,15)}, rotate = 0.66] [color={rgb, 255:red, 191; green, 97; blue, 106 }  ,draw opacity=1 ][line width=0.75]    (4.37,-1.32) .. controls (2.78,-0.56) and (1.32,-0.12) .. (0,0) .. controls (1.32,0.12) and (2.78,0.56) .. (4.37,1.32)   ;
\draw [color={rgb, 255:red, 191; green, 97; blue, 106 }  ,draw opacity=1 ]   (175,45) .. controls (195.36,44.69) and (203.54,34.58) .. (205,30) .. controls (206.46,25.42) and (206.43,25.58) .. (205,20) .. controls (203.73,15.06) and (197.13,15) .. (191.94,15) ;
\draw [shift={(190,15)}, rotate = 0.66] [color={rgb, 255:red, 191; green, 97; blue, 106 }  ,draw opacity=1 ][line width=0.75]    (4.37,-1.32) .. controls (2.78,-0.56) and (1.32,-0.12) .. (0,0) .. controls (1.32,0.12) and (2.78,0.56) .. (4.37,1.32)   ;
\draw  [draw opacity=0][fill={rgb, 255:red, 255; green, 255; blue, 255 }  ,fill opacity=1 ] (192.5,40) .. controls (192.5,38.62) and (193.62,37.5) .. (195,37.5) .. controls (196.38,37.5) and (197.5,38.62) .. (197.5,40) .. controls (197.5,41.38) and (196.38,42.5) .. (195,42.5) .. controls (193.62,42.5) and (192.5,41.38) .. (192.5,40) -- cycle ;
\draw    (185,20) -- (185,10) ;
\draw   (195,75) -- (199.5,60) -- (220.5,60) -- (225,75) -- cycle ;
\draw [color={rgb, 255:red, 191; green, 97; blue, 106 }  ,draw opacity=1 ]   (193.13,79.28) .. controls (191.96,78.82) and (190.98,78.09) .. (190,75) .. controls (188.54,70.42) and (188.57,65.58) .. (190,60) .. controls (191.43,54.42) and (199.67,55.06) .. (205,55) ;
\draw [shift={(195,80)}, rotate = 205.96] [color={rgb, 255:red, 191; green, 97; blue, 106 }  ,draw opacity=1 ][line width=0.75]    (4.37,-1.32) .. controls (2.78,-0.56) and (1.32,-0.12) .. (0,0) .. controls (1.32,0.12) and (2.78,0.56) .. (4.37,1.32)   ;
\draw [color={rgb, 255:red, 191; green, 97; blue, 106 }  ,draw opacity=1 ]   (205,80) -- (213,80) ;
\draw [shift={(215,80)}, rotate = 180] [color={rgb, 255:red, 191; green, 97; blue, 106 }  ,draw opacity=1 ][line width=0.75]    (4.37,-1.32) .. controls (2.78,-0.56) and (1.32,-0.12) .. (0,0) .. controls (1.32,0.12) and (2.78,0.56) .. (4.37,1.32)   ;
\draw [color={rgb, 255:red, 191; green, 97; blue, 106 }  ,draw opacity=1 ]   (225,80) .. controls (227.43,79.87) and (228.54,79.58) .. (230,75) .. controls (231.46,70.42) and (231.43,65.58) .. (230,60) .. controls (228.73,55.06) and (222.13,55) .. (216.94,55) ;
\draw [shift={(215,55)}, rotate = 0.66] [color={rgb, 255:red, 191; green, 97; blue, 106 }  ,draw opacity=1 ][line width=0.75]    (4.37,-1.32) .. controls (2.78,-0.56) and (1.32,-0.12) .. (0,0) .. controls (1.32,0.12) and (2.78,0.56) .. (4.37,1.32)   ;
\draw    (220,90) -- (220,75) ;
\draw   (145,75) -- (149.5,60) -- (170.5,60) -- (175,75) -- cycle ;
\draw [color={rgb, 255:red, 191; green, 97; blue, 106 }  ,draw opacity=1 ]   (143.13,79.28) .. controls (141.96,78.82) and (140.98,78.09) .. (140,75) .. controls (138.54,70.42) and (138.57,65.58) .. (140,60) .. controls (141.43,54.42) and (149.67,55.06) .. (155,55) ;
\draw [shift={(145,80)}, rotate = 205.96] [color={rgb, 255:red, 191; green, 97; blue, 106 }  ,draw opacity=1 ][line width=0.75]    (4.37,-1.32) .. controls (2.78,-0.56) and (1.32,-0.12) .. (0,0) .. controls (1.32,0.12) and (2.78,0.56) .. (4.37,1.32)   ;
\draw [color={rgb, 255:red, 191; green, 97; blue, 106 }  ,draw opacity=1 ]   (155,80) -- (163,80) ;
\draw [shift={(165,80)}, rotate = 180] [color={rgb, 255:red, 191; green, 97; blue, 106 }  ,draw opacity=1 ][line width=0.75]    (4.37,-1.32) .. controls (2.78,-0.56) and (1.32,-0.12) .. (0,0) .. controls (1.32,0.12) and (2.78,0.56) .. (4.37,1.32)   ;
\draw [color={rgb, 255:red, 191; green, 97; blue, 106 }  ,draw opacity=1 ]   (180,75) .. controls (181.46,70.42) and (181.43,65.58) .. (180,60) .. controls (178.73,55.06) and (172.13,55) .. (166.94,55) ;
\draw [shift={(165,55)}, rotate = 0.66] [color={rgb, 255:red, 191; green, 97; blue, 106 }  ,draw opacity=1 ][line width=0.75]    (4.37,-1.32) .. controls (2.78,-0.56) and (1.32,-0.12) .. (0,0) .. controls (1.32,0.12) and (2.78,0.56) .. (4.37,1.32)   ;
\draw    (150,90) -- (150,75) ;
\draw  [draw opacity=0][fill={rgb, 255:red, 255; green, 255; blue, 255 }  ,fill opacity=1 ] (172,38) .. controls (172,36.9) and (173.34,36) .. (175,36) .. controls (176.66,36) and (178,36.9) .. (178,38) .. controls (178,39.1) and (176.66,40) .. (175,40) .. controls (173.34,40) and (172,39.1) .. (172,38) -- cycle ;
\draw  [draw opacity=0][fill={rgb, 255:red, 255; green, 255; blue, 255 }  ,fill opacity=1 ] (197,45) .. controls (197,43.9) and (198.34,43) .. (200,43) .. controls (201.66,43) and (203,43.9) .. (203,45) .. controls (203,46.1) and (201.66,47) .. (200,47) .. controls (198.34,47) and (197,46.1) .. (197,45) -- cycle ;
\draw [color={rgb, 255:red, 191; green, 97; blue, 106 }  ,draw opacity=1 ]   (187.82,39.94) .. controls (168.45,39.26) and (166.4,34.42) .. (165,30) .. controls (163.54,25.42) and (163.57,25.58) .. (165,20) .. controls (166.43,14.42) and (174.67,15.06) .. (180,15) ;
\draw [shift={(190,40)}, rotate = 181.03] [color={rgb, 255:red, 191; green, 97; blue, 106 }  ,draw opacity=1 ][line width=0.75]    (4.37,-1.32) .. controls (2.78,-0.56) and (1.32,-0.12) .. (0,0) .. controls (1.32,0.12) and (2.78,0.56) .. (4.37,1.32)   ;
\draw    (200,75) .. controls (199.96,89.32) and (170.14,79.77) .. (170,90) ;
\draw    (195,35) .. controls (195.03,49.73) and (210.03,44.73) .. (210,60) ;
\draw  [draw opacity=0][fill={rgb, 255:red, 255; green, 255; blue, 255 }  ,fill opacity=1 ] (178,84) .. controls (178,81.79) and (181.13,80) .. (185,80) .. controls (188.87,80) and (192,81.79) .. (192,84) .. controls (192,86.21) and (188.87,88) .. (185,88) .. controls (181.13,88) and (178,86.21) .. (178,84) -- cycle ;
\draw    (170,75) .. controls (170.14,89.16) and (200.14,80.08) .. (200,90) ;
\draw [color={rgb, 255:red, 0; green, 0; blue, 0 }  ,draw opacity=1 ][line width=1.5]    (230,45) -- (252,45) ;
\draw [shift={(255,45)}, rotate = 180] [color={rgb, 255:red, 0; green, 0; blue, 0 }  ,draw opacity=1 ][line width=1.5]    (8.53,-2.57) .. controls (5.42,-1.09) and (2.58,-0.23) .. (0,0) .. controls (2.58,0.23) and (5.42,1.09) .. (8.53,2.57)   ;
\draw  [color={rgb, 255:red, 0; green, 0; blue, 0 }  ,draw opacity=1 ][fill={rgb, 255:red, 191; green, 97; blue, 106 }  ,fill opacity=0.25 ] (260,75) -- (264.5,60) -- (285.5,60) -- (290,75) -- cycle ;
\draw    (265,90) -- (265,75) ;
\draw [color={rgb, 255:red, 191; green, 97; blue, 106 }  ,draw opacity=1 ]   (259.17,79.65) .. controls (257.5,79.04) and (256.26,78.95) .. (255,75) .. controls (253.54,70.42) and (253.57,65.58) .. (255,60) .. controls (256.43,54.42) and (264.67,55.06) .. (270,55) ;
\draw [shift={(260,80)}, rotate = 205.96] [color={rgb, 255:red, 191; green, 97; blue, 106 }  ,draw opacity=1 ][line width=0.75]    (4.37,-1.32) .. controls (2.78,-0.56) and (1.32,-0.12) .. (0,0) .. controls (1.32,0.12) and (2.78,0.56) .. (4.37,1.32)   ;
\draw [color={rgb, 255:red, 191; green, 97; blue, 106 }  ,draw opacity=1 ]   (290,85) .. controls (292.43,84.87) and (293.54,79.58) .. (295,75) .. controls (296.46,70.42) and (296.43,65.58) .. (295,60) .. controls (293.73,55.06) and (287.13,55) .. (281.94,55) ;
\draw [shift={(280,55)}, rotate = 0.66] [color={rgb, 255:red, 191; green, 97; blue, 106 }  ,draw opacity=1 ][line width=0.75]    (4.37,-1.32) .. controls (2.78,-0.56) and (1.32,-0.12) .. (0,0) .. controls (1.32,0.12) and (2.78,0.56) .. (4.37,1.32)   ;
\draw [color={rgb, 255:red, 191; green, 97; blue, 106 }  ,draw opacity=1 ]   (270,85) .. controls (281.21,85.37) and (293.54,74.58) .. (295,70) .. controls (296.46,65.42) and (296.43,65.58) .. (295,60) .. controls (293.73,55.06) and (287.13,55) .. (281.94,55) ;
\draw [shift={(280,55)}, rotate = 0.66] [color={rgb, 255:red, 191; green, 97; blue, 106 }  ,draw opacity=1 ][line width=0.75]    (4.37,-1.32) .. controls (2.78,-0.56) and (1.32,-0.12) .. (0,0) .. controls (1.32,0.12) and (2.78,0.56) .. (4.37,1.32)   ;
\draw  [draw opacity=0][fill={rgb, 255:red, 255; green, 255; blue, 255 }  ,fill opacity=1 ] (282.5,80) .. controls (282.5,78.62) and (283.62,77.5) .. (285,77.5) .. controls (286.38,77.5) and (287.5,78.62) .. (287.5,80) .. controls (287.5,81.38) and (286.38,82.5) .. (285,82.5) .. controls (283.62,82.5) and (282.5,81.38) .. (282.5,80) -- cycle ;
\draw    (285,90) -- (285,75) ;
\draw   (285,35) -- (289.5,20) -- (310.5,20) -- (315,35) -- cycle ;
\draw    (300,20) -- (300,10) ;
\draw [color={rgb, 255:red, 191; green, 97; blue, 106 }  ,draw opacity=1 ]   (283.13,39.28) .. controls (281.96,38.82) and (280.98,38.09) .. (280,35) .. controls (278.54,30.42) and (278.57,25.58) .. (280,20) .. controls (281.43,14.42) and (289.67,15.06) .. (295,15) ;
\draw [shift={(285,40)}, rotate = 205.96] [color={rgb, 255:red, 191; green, 97; blue, 106 }  ,draw opacity=1 ][line width=0.75]    (4.37,-1.32) .. controls (2.78,-0.56) and (1.32,-0.12) .. (0,0) .. controls (1.32,0.12) and (2.78,0.56) .. (4.37,1.32)   ;
\draw [color={rgb, 255:red, 191; green, 97; blue, 106 }  ,draw opacity=1 ]   (295,40) -- (303,40) ;
\draw [shift={(305,40)}, rotate = 180] [color={rgb, 255:red, 191; green, 97; blue, 106 }  ,draw opacity=1 ][line width=0.75]    (4.37,-1.32) .. controls (2.78,-0.56) and (1.32,-0.12) .. (0,0) .. controls (1.32,0.12) and (2.78,0.56) .. (4.37,1.32)   ;
\draw [color={rgb, 255:red, 191; green, 97; blue, 106 }  ,draw opacity=1 ]   (315,40) .. controls (317.43,39.87) and (318.54,39.58) .. (320,35) .. controls (321.46,30.42) and (321.43,25.58) .. (320,20) .. controls (318.73,15.06) and (312.13,15) .. (306.94,15) ;
\draw [shift={(305,15)}, rotate = 0.66] [color={rgb, 255:red, 191; green, 97; blue, 106 }  ,draw opacity=1 ][line width=0.75]    (4.37,-1.32) .. controls (2.78,-0.56) and (1.32,-0.12) .. (0,0) .. controls (1.32,0.12) and (2.78,0.56) .. (4.37,1.32)   ;
\draw    (290,35) .. controls (290.03,49.73) and (275.03,44.73) .. (275,60) ;
\draw  [color={rgb, 255:red, 0; green, 0; blue, 0 }  ,draw opacity=1 ][fill={rgb, 255:red, 191; green, 97; blue, 106 }  ,fill opacity=0.25 ] (310,75) -- (314.5,60) -- (335.5,60) -- (340,75) -- cycle ;
\draw    (315,90) -- (315,75) ;
\draw [color={rgb, 255:red, 191; green, 97; blue, 106 }  ,draw opacity=1 ]   (309.17,79.65) .. controls (307.5,79.04) and (306.26,78.95) .. (305,75) .. controls (303.54,70.42) and (303.57,65.58) .. (305,60) .. controls (306.43,54.42) and (314.67,55.06) .. (320,55) ;
\draw [shift={(310,80)}, rotate = 205.96] [color={rgb, 255:red, 191; green, 97; blue, 106 }  ,draw opacity=1 ][line width=0.75]    (4.37,-1.32) .. controls (2.78,-0.56) and (1.32,-0.12) .. (0,0) .. controls (1.32,0.12) and (2.78,0.56) .. (4.37,1.32)   ;
\draw [color={rgb, 255:red, 191; green, 97; blue, 106 }  ,draw opacity=1 ]   (340,85) .. controls (342.43,84.87) and (343.54,79.58) .. (345,75) .. controls (346.46,70.42) and (346.43,65.58) .. (345,60) .. controls (343.73,55.06) and (337.13,55) .. (331.94,55) ;
\draw [shift={(330,55)}, rotate = 0.66] [color={rgb, 255:red, 191; green, 97; blue, 106 }  ,draw opacity=1 ][line width=0.75]    (4.37,-1.32) .. controls (2.78,-0.56) and (1.32,-0.12) .. (0,0) .. controls (1.32,0.12) and (2.78,0.56) .. (4.37,1.32)   ;
\draw [color={rgb, 255:red, 191; green, 97; blue, 106 }  ,draw opacity=1 ]   (320,85) .. controls (331.21,85.37) and (343.54,74.58) .. (345,70) .. controls (346.46,65.42) and (346.43,65.58) .. (345,60) .. controls (343.73,55.06) and (337.13,55) .. (331.94,55) ;
\draw [shift={(330,55)}, rotate = 0.66] [color={rgb, 255:red, 191; green, 97; blue, 106 }  ,draw opacity=1 ][line width=0.75]    (4.37,-1.32) .. controls (2.78,-0.56) and (1.32,-0.12) .. (0,0) .. controls (1.32,0.12) and (2.78,0.56) .. (4.37,1.32)   ;
\draw  [draw opacity=0][fill={rgb, 255:red, 255; green, 255; blue, 255 }  ,fill opacity=1 ] (332.5,80) .. controls (332.5,78.62) and (333.62,77.5) .. (335,77.5) .. controls (336.38,77.5) and (337.5,78.62) .. (337.5,80) .. controls (337.5,81.38) and (336.38,82.5) .. (335,82.5) .. controls (333.62,82.5) and (332.5,81.38) .. (332.5,80) -- cycle ;
\draw    (310,35) .. controls (310.03,49.73) and (325.03,44.73) .. (325,60) ;
\draw    (335,90) -- (335,75) ;
\draw  [draw opacity=0][fill={rgb, 255:red, 255; green, 255; blue, 255 }  ,fill opacity=1 ] (262,78) .. controls (262,76.9) and (263.34,76) .. (265,76) .. controls (266.66,76) and (268,76.9) .. (268,78) .. controls (268,79.1) and (266.66,80) .. (265,80) .. controls (263.34,80) and (262,79.1) .. (262,78) -- cycle ;
\draw  [draw opacity=0][fill={rgb, 255:red, 255; green, 255; blue, 255 }  ,fill opacity=1 ] (312,78) .. controls (312,76.9) and (313.34,76) .. (315,76) .. controls (316.66,76) and (318,76.9) .. (318,78) .. controls (318,79.1) and (316.66,80) .. (315,80) .. controls (313.34,80) and (312,79.1) .. (312,78) -- cycle ;
\draw [color={rgb, 255:red, 191; green, 97; blue, 106 }  ,draw opacity=1 ]   (277.82,79.94) .. controls (258.45,79.26) and (256.4,74.42) .. (255,70) .. controls (253.54,65.42) and (253.57,65.58) .. (255,60) .. controls (256.43,54.42) and (264.67,55.06) .. (270,55) ;
\draw [shift={(280,80)}, rotate = 181.03] [color={rgb, 255:red, 191; green, 97; blue, 106 }  ,draw opacity=1 ][line width=0.75]    (4.37,-1.32) .. controls (2.78,-0.56) and (1.32,-0.12) .. (0,0) .. controls (1.32,0.12) and (2.78,0.56) .. (4.37,1.32)   ;
\draw [color={rgb, 255:red, 191; green, 97; blue, 106 }  ,draw opacity=1 ]   (327.82,79.94) .. controls (308.45,79.26) and (306.4,74.42) .. (305,70) .. controls (303.54,65.42) and (303.57,65.58) .. (305,60) .. controls (306.43,54.42) and (314.67,55.06) .. (320,55) ;
\draw [shift={(330,80)}, rotate = 181.03] [color={rgb, 255:red, 191; green, 97; blue, 106 }  ,draw opacity=1 ][line width=0.75]    (4.37,-1.32) .. controls (2.78,-0.56) and (1.32,-0.12) .. (0,0) .. controls (1.32,0.12) and (2.78,0.56) .. (4.37,1.32)   ;

\draw (185,27.5) node  [font=\footnotesize]  {$a$};
\draw (157.5,27.5) node  [font=\tiny,color={rgb, 255:red, 191; green, 97; blue, 106 }  ,opacity=1 ]  {$a_{0}$};
\draw (212.5,25.5) node  [font=\tiny,color={rgb, 255:red, 191; green, 97; blue, 106 }  ,opacity=1 ]  {$a_{1}$};
\draw (210,67.5) node  [font=\footnotesize]  {$c$};
\draw (192.5,49.5) node  [font=\tiny,color={rgb, 255:red, 191; green, 97; blue, 106 }  ,opacity=1 ]  {$c_{0}$};
\draw (210,85) node  [font=\tiny,color={rgb, 255:red, 191; green, 97; blue, 106 }  ,opacity=1 ]  {$c_{1}$};
\draw (237.5,67.5) node  [font=\tiny,color={rgb, 255:red, 191; green, 97; blue, 106 }  ,opacity=1 ]  {$c_{2}$};
\draw (160,67.5) node  [font=\footnotesize]  {$b$};
\draw (132.5,67.5) node  [font=\tiny,color={rgb, 255:red, 191; green, 97; blue, 106 }  ,opacity=1 ]  {$b_{0}$};
\draw (160,85) node  [font=\tiny,color={rgb, 255:red, 191; green, 97; blue, 106 }  ,opacity=1 ]  {$b_{1}$};
\draw (177.5,49.5) node  [font=\tiny,color={rgb, 255:red, 191; green, 97; blue, 106 }  ,opacity=1 ]  {$b_{2}$};
\draw (275,67.5) node  [font=\footnotesize]  {$e$};
\draw (247.5,67.5) node  [font=\tiny,color={rgb, 255:red, 191; green, 97; blue, 106 }  ,opacity=1 ]  {$e_{0}$};
\draw (297.5,54.5) node  [font=\tiny,color={rgb, 255:red, 191; green, 97; blue, 106 }  ,opacity=1 ]  {$e_{1}$};
\draw (300,27.5) node  [font=\footnotesize]  {$d$};
\draw (272.5,27.5) node  [font=\tiny,color={rgb, 255:red, 191; green, 97; blue, 106 }  ,opacity=1 ]  {$d_{0}$};
\draw (300,45) node  [font=\tiny,color={rgb, 255:red, 191; green, 97; blue, 106 }  ,opacity=1 ]  {$d_{1}$};
\draw (327.5,27.5) node  [font=\tiny,color={rgb, 255:red, 191; green, 97; blue, 106 }  ,opacity=1 ]  {$d_{2}$};
\draw (325,67.5) node  [font=\footnotesize]  {$f$};
\draw (307.5,85.5) node  [font=\tiny,color={rgb, 255:red, 191; green, 97; blue, 106 }  ,opacity=1 ]  {$f_{0}$};
\draw (352.5,67.5) node  [font=\tiny,color={rgb, 255:red, 191; green, 97; blue, 106 }  ,opacity=1 ]  {$f_{1}$};

\end{tikzpicture}
       \caption{Equations from the laxator $ψ_2$.}
      \label{fig:monoidal-contour-equation1}
    \end{figure}
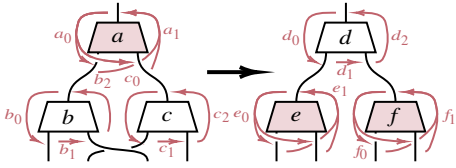
\end{definition}

\begin{proposition} \label{prop:monoidalContourFunctor}
  \MonoidalContour{} extends to a functor $𝓓 : \pDuo → \Mon$.
\end{proposition}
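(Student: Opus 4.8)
The object assignment is already provided by \Cref{def:monoidalContour}; what remains is to define the action of $𝓓$ on produoidal functors (\Cref{def:produoidalfunctor}) and to verify functoriality. The plan is to imitate the promonoidal contour functor of \Cref{prop:contourFunctor}, adding the clauses coming from the parallel tensor $(⊗)$ and the four laxators.

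First I would fix a produoidal functor $F \colon 𝔹 → 𝔹'$ and define $𝓓 F \colon 𝓓𝔹 → 𝓓𝔹'$. On objects, set $𝓓F(X^L) = (FX)^L$ and $𝓓F(X^R) = (FX)^R$, extended so as to preserve the tensor and unit of the contour on the nose; this makes $𝓓F$ a strict monoidal functor, which is what places it in $\Mon$. On the generators of $𝓓𝔹$ (\Cref{fig:monoidal-contour}) I would use the corresponding component of $F$: a morphism generator $a \in 𝔹(A;X)$ goes to the two generators of $Fa$; a \sequentialUnit{} $a \in 𝔹(A;N)$ goes to the generator of $F_{N} a$; a \parallelUnit{} $a \in 𝔹(A;I)$ goes to the two generators of $F_{I} a$; a \sequentialSplit{} $a \in 𝔹(A;X ◁ Y)$ goes to the three generators of $F_{◁} a$; and a \parallelSplit{} $a \in 𝔹(A; X ⊗ Y)$ goes to the two generators of $F_{⊗} a$. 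Because each component of $F$ is natural in its arguments, this assignment is compatible with the profunctor actions that build the composite (many-hole) generators.

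Next I would check that $𝓓F$ descends to the quotient, i.e. that it respects the defining equations of $𝓓𝔹$. There are three families of relations. The associator and unitor relations of the $(◁)$ structure are handled exactly as in \Cref{prop:contourFunctor}, using that $F_{◁}$ and $F_{N}$ preserve the first promonoidal coherence isomorphisms. The $(⊗)$ associator and unitor relations (\Cref{fig:monoidalContourAssociativity,fig:monoidalContourUnitality}) are their mirror image, using that $F_{⊗}$ and $F_{I}$ preserve the second promonoidal structure. The remaining relations come from the laxators $ψ_2, ψ_0, φ_2, φ_0$; for each, the clause in \Cref{def:produoidalfunctor} asking $F$ to preserve the laxators says precisely that applying $F$ to an instance $ψ_2(a \mathbin{|} b \mathbin{|} c) = (d \mathbin{|} e \mathbin{|} f)$ yields a valid instance of $ψ_2$ in $𝔹'$, whose contour equations (e.g. \Cref{fig:monoidal-contour-equation1}) are exactly the images under $𝓓F$ of the original ones. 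Hence both sides of every defining equation of $𝓓𝔹$ receive equal image, and $𝓓F$ is well defined.

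Finally, functoriality of $𝓓$ is immediate: the identity produoidal functor has identity components, so $𝓓(\mathrm{id}) = \mathrm{id}$; and $𝓓(G ∘ F) = 𝓓G ∘ 𝓓F$ because the generator assignment is built componentwise from $F, F_{◁}, F_{N}, F_{⊗}, F_{I}$, which compose componentwise. I expect the main obstacle to be purely the bookkeeping for the laxator relations: unlike the promonoidal case, these equations relate three generators at once and split a single arrow of $𝔹$ into several contour generators of differing left/right handedness, so matching the image of each side requires care, even though no new idea beyond preservation of laxators is involved. These details, together with the mirror clauses for $ψ_0, φ_2, φ_0$, would be relegated to the appendix, following the pattern of \Cref{prop:contourFunctor}.
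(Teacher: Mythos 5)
Your proposal is correct and follows essentially the same route as the paper's proof: define $𝓓F$ on the generators of the freely presented monoidal category $𝓓𝔹$ via the components $F, F_{◁}, F_{N}, F_{⊗}, F_{I}$, observe that the defining contour equations (including those from the laxators) are preserved because $F$ is a produoidal functor, and conclude functoriality from componentwise composition. No meaningful difference from the paper's argument.
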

\begin{proof}
  See Appendix, \Cref{ax:prop:monoidalContourFunctor}.
\end{proof}

\subsection{Produoidal Category of Spliced Monoidal Arrows}
\defining{linkProduoidalSplice}{}
Again, we want to go the other way around: given a \monoidalCategory{}, what is the \produoidalCategory{} that tracks decomposition of arrows in that \monoidalCategory{}?
This subsection finds a right adjoint to the \monoidalContour{} construction: the \produoidalCategory{} of \emph{spliced monoidal arrows}.

\begin{definition}
\label{def:monoidalSplice}
Let $(ℂ,⊗,I)$ be a monoidal category.\defining{linkMonoidalSplice}{}
The \produoidalCategory{} of \emph{spliced monoidal arrows}, $\mathcal{T}{ℂ}$, has as objects pairs of objects of $ℂ$. It uses the following \profunctors{} to define \sequentialSplits{}, \parallelSplits{}, \sequentialUnits{}, \parallelUnits{} and morphisms.
\begin{align*}
   𝓣{ℂ} \left(\biobj{A}{B}; \biobj{X}{Y} \right) &= ℂ(A;X) × ℂ(Y,B); \\
  𝓣{ℂ}(\biobj{A}{B};\biobj{X}{Y} ◁ \biobj{X'}{Y'}) &= ℂ(A;X) × ℂ(Y;X') × ℂ(Y';B); \\
  𝓣{ℂ}(\biobj{A}{B}; \biobj{X}{Y} ⊗ \biobj{X'}{Y'}) &= ℂ(A;X ⊗ X') × ℂ(Y ⊗ Y';B); \\
  𝓣{ℂ}(\biobj{A}{B};N) &= ℂ(A;B); \\
  𝓣{ℂ}(\biobj{A}{B};I) &= ℂ(A;I) × ℂ(I;B).
\end{align*}
In other words, morphisms are \emph{pairs of arrows} $f \colon A \to X$ and $g \colon Y \to B$. \SequentialSplits{} are \emph{triples of arrows} $f \colon A \to X$, $g \colon Y \to X'$ and $h \colon Y' \to B$. \ParallelSplits{} are \emph{pairs of arrows} $f \colon A \to X \otimes X'$ and $g \colon Y \otimes Y' \to B$. \SequentialUnits{} are \emph{arrows} $f \colon A \to B$. \ParallelUnits{} are pairs of arrows $f \colon A \to I$ and $g \colon I \to B$.
In summary, we have
\begin{align*}
  & \mbox{morphisms, } && f ⨾ \square ⨾ g
  && \in  𝓣{ℂ} \left(
    \biobj{A}{B};
    \biobj{X}{Y} \right); \\
  & \mbox{\sequentialSplits{}, }
  && f ⨾ \square ⨾ g ⨾ \square ⨾ h
  && \in 𝓣{ℂ} \left( \biobj{A}{B}; \biobj{X}{Y} \triangleleft \biobj{X'}{Y'} \right); \\
  & \mbox{\parallelSplits{}, } && f ⨾ (\square ⊗ \square) ⨾ h
  && \in 𝓣{ℂ} \left( \biobj{A}{B}; \biobj{X}{Y} ⊗ \biobj{X'}{Y'} \right); \\
  & \mbox{\sequentialUnits{}, } && f
  && \in 𝓣{ℂ} \left(\biobj{A}{B}; N \right). \\
  & \mbox{and \parallelUnits{}, } && f \mathbin{\Vert} g
  && \in 𝓣{ℂ} \left(\biobj{A}{B}; I \right).
\end{align*}

Finally, the laxators unite two different connections between two gaps into a single one. For instance, the last laxator takes parallel sequences of holes,
$$f_0 ⨾ ((h_0 ⨾ \square ⨾ h_1 ⨾ \square ⨾ h_2) ⊗ (k_0 ⨾ \square ⨾ k_1 ⨾ \square ⨾ k_2)) ⨾ f_1$$
into sequences of parallel holes,
$$f_0 ⨾ (h_0 ⊗ k_0) ⨾ (\square ⊗  \square) ⨾ (h_1 ⊗ k_1) ⨾ (\square ⊗ \square) ⨾ (h_2 ⊗ k_2) ⨾ f_1.$$
See Appendix, \Cref{ax:sec:produoidalSplice} for details.
\end{definition}

\begin{proposition}
  \label{prop:spliceIsProduoidal}
  \SplicedMonoidalArrows{} form a \produoidalCategory{} with their \hyperlink{linkProduoidalComponents}{sequential and parallel splits, units}, and suitable coherence morphisms and laxators.
\end{proposition}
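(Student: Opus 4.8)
The plan is to handle the two promonoidal structures of \Cref{def:produoidal} separately and then the four laxators connecting them, reducing every requirement to coend calculus in $\mathbb{C}$. First I would observe that the sequential data of $\mathcal{T}\mathbb{C}$—its morphisms, its sequential splits $\mathcal{T}\mathbb{C}(\biobj{A}{B}; \biobj{X}{Y} \triangleleft \biobj{X'}{Y'})$, and its sequential units $\mathcal{T}\mathbb{C}(\biobj{A}{B}; N)$—are defined by exactly the same profunctors as those of the spliced-arrow promonoidal category $\mathcal{S}\mathbb{C}$ of \Cref{defn:splicedArrows}. Hence the $(\triangleleft, N)$ structure, with its associator and unitors, is already promonoidal by \Cref{prop:spliceIsPromonoidal}, and nothing new is needed there.

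Second, I would establish the parallel $(\otimes, I)$ structure by unfolding the nested profunctors (per the nesting convention) and simplifying by co-Yoneda. Writing a generic object as $\biobj{X_s}{X_t}$, the key computation is that $\mathcal{T}\mathbb{C}(\biobj{A}{B}; \biobj{X_s}{X_t} \otimes (\biobj{X'_s}{X'_t} \otimes \biobj{X''_s}{X''_t}))$, expanded as a coend over the intermediate pair and collapsed by sliding, reduces to $\mathbb{C}(A; X_s \otimes (X'_s \otimes X''_s)) \times \mathbb{C}(X_t \otimes (X'_t \otimes X''_t); B)$, while the oppositely bracketed composite reduces to the analogous product with $(X_s \otimes X'_s)\otimes X''_s$ and $(X_t\otimes X'_t)\otimes X''_t$. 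The associator of $\otimes$ is thus induced by the associator of $\mathbb{C}$ on the covariant source component and its inverse on the contravariant target component, and the unitors are induced from those of $\mathbb{C}$ in the same way. Because these isomorphisms are transported verbatim from $\mathbb{C}$, the pentagon and triangle identities follow from the corresponding identities in $\mathbb{C}$.

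Third, I would define the four laxators through the same reductions. Unfolding $\mathcal{T}\mathbb{C}(\biobj{A}{B}; (\biobj{X_s}{X_t} \triangleleft \biobj{Y_s}{Y_t}) \otimes (\biobj{Z_s}{Z_t} \triangleleft \biobj{W_s}{W_t}))$ yields $\mathbb{C}(A; X_s\otimes Z_s) \times \mathbb{C}(X_t; Y_s) \times \mathbb{C}(Z_t; W_s) \times \mathbb{C}(Y_t \otimes W_t; B)$, whereas the codomain $\mathcal{T}\mathbb{C}(\biobj{A}{B}; (X \otimes Z) \triangleleft (Y \otimes W))$ yields $\mathbb{C}(A; X_s\otimes Z_s) \times \mathbb{C}(X_t\otimes Z_t; Y_s\otimes W_s) \times \mathbb{C}(Y_t\otimes W_t; B)$, so $\psi_2$ is exactly the map tensoring the two middle morphisms, $(f, u, v, g) \mapsto (f, u\otimes v, g)$. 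The remaining laxators are equally concrete: $\psi_0$ inserts $\mathrm{id}_I$ to split the parallel unit, while $\varphi_2$ and $\varphi_0$ are given by composition in $\mathbb{C}$. Each is natural in $\biobj{A}{B}$ and well-defined on the coends precisely because tensoring and composition respect dinaturality.

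Finally, I would verify the produoidal coherence conditions (spelled out in \Cref{ax:def:produoidal}). After the reductions above, every such diagram becomes an equation between arrows of $\mathbb{C}$ assembled from $\otimes$, composition, and the coherence isomorphisms of $\mathbb{C}$, and each holds by functoriality of $\otimes$, the interchange law, and monoidal coherence. The main obstacle is not conceptual but the sheer volume of this bookkeeping: there are several laxator-associativity and laxator-unit diagrams to discharge. I would therefore carry out the representative cases—$\psi_2$ against each associator and $\psi_2$ against $\varphi_2$—in the body and relegate the remaining routine verifications to the appendix.
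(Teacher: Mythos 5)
Your proposal is correct and follows essentially the same route as the paper's: the sequential promonoidal structure is inherited verbatim from $\mathcal{S}\mathbb{C}$ (so \Cref{prop:spliceIsPromonoidal} covers it), the parallel one is representable and borrows its coherence from $\mathbb{C}$, and your four laxators are exactly the maps the paper constructs in its appendix lemmas — tensor the middle legs for $\psi_2$, insert $\mathrm{id}_I$ for $\psi_0$, compose for $\varphi_2$ and $\varphi_0$ — obtained by the same Yoneda/coend reductions. The only difference is in how coherence is discharged: you propose to reduce each produoidal coherence diagram to monoidal coherence in $\mathbb{C}$ and check representative cases, whereas the paper invokes the blanket argument that maps assembled purely from Yoneda isomorphisms and composition automatically satisfy all formal coherence equations.
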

\begin{proof}
  See Appendix, \Cref{ax:prop:spliceIsProduoidal}.
\end{proof}

\begin{proposition} \label{prop:monoidalSpliceFunctor}
  \SplicedMonoidalArrows{} extends to a functor $𝓣{} : \Mon → \pDuo$. %
\end{proposition}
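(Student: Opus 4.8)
The plan is to follow the template of the categorical case (\Cref{prop:spliceFunctor}), now carrying along the comparison data of a monoidal functor, and then check that the resulting assignment respects the extra produoidal structure. Fix a strong monoidal functor $F \colon (ℂ, ⊗, I) \to (𝔻, ⊗, J)$ with comparison isomorphisms $\mu_{X,X'} \colon FX ⊗ FX' \cong F(X ⊗ X')$ and $\epsilon \colon J \cong FI$. On objects I set $\mathcal{T}F\,\biobj{A}{B} = \biobj{FA}{FB}$. On the five defining profunctors of \Cref{def:monoidalSplice}, the assignment applies $F$ componentwise and inserts the comparators wherever the tensor or the unit occurs: a morphism $f ⨾ \square ⨾ g$ goes to $Ff ⨾ \square ⨾ Fg$; a sequential split $f ⨾ \square ⨾ g ⨾ \square ⨾ h$ goes to $Ff ⨾ \square ⨾ Fg ⨾ \square ⨾ Fh$; a sequential unit $f$ goes to $Ff$; a parallel split $f ⨾ (\square ⊗ \square) ⨾ g$ with $f \colon A \to X ⊗ X'$ and $g \colon Y ⊗ Y' \to B$ goes to the pair $Ff ⨾ \mu^{-1}$ and $\mu ⨾ Fg$; and a parallel unit $f \mathbin{\Vert} g$ goes to $(Ff ⨾ \epsilon^{-1}) \mathbin{\Vert} (\epsilon ⨾ Fg)$. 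Here invertibility of the comparators, i.e. strength of $F$, is exactly what makes the parallel-split clause well defined in both directions, which is why the relevant $\Mon$ is the category of strong monoidal functors. These data are the would-be components $(\mathcal{T}F)_{◁}, (\mathcal{T}F)_{N}, (\mathcal{T}F)_{⊗}, (\mathcal{T}F)_{I}$ of a produoidal functor in the sense of \Cref{def:produoidalfunctor}.

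First I would check that each clause is a natural transformation of the relevant profunctor, i.e. that it commutes with the hole-filling actions $(≺, ≻)$. Because those actions are defined purely by composing morphisms into the holes, this reduces to functoriality of $F$; the only points needing care are the holes abutting a tensor in a parallel split, where naturality of $\mu$ supplies the missing step. These are short diagram chases.

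The substance of the proof is that $\mathcal{T}F$ preserves all coherence isomorphisms and the four laxators $ψ_2, ψ_0, φ_2, φ_0$ required by \Cref{def:produoidalfunctor}. Preservation of the sequential $(◁)$ associator and unitors is verbatim the promonoidal argument underlying \Cref{prop:spliceFunctor}, as it involves no tensors. Preservation of the parallel $(⊗)$ coherence and of the laxators is precisely where the monoidal-functor axioms are consumed: the $⊗$-associator equation reduces, after threading the comparators through the holes, to the associativity coherence of $\mu$; the $⊗$-unitor equations reduce to the unit coherence relating $\mu$ and $\epsilon$; and each laxator equation — for instance $ψ_2$, which rebrackets parallel sequences of holes into sequences of parallel holes — unwinds to a compatibility between $\mu$ and composition that follows directly from $\mu$ being a coherent monoidal isomorphism. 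I expect this to be the main obstacle: no single identity is deep, but there are several laxators, and each must be matched against the correct monoidal coherence after the bookkeeping of where each comparator has been inserted.

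Finally, functoriality of $\mathcal{T}$ itself is immediate once these clauses are in place. We have $\mathcal{T}(\mathrm{id}) = \mathrm{id}$ because the identity functor has identity comparators, and $\mathcal{T}(G \circ F) = \mathcal{T}G \circ \mathcal{T}F$ because the comparators of a composite strong monoidal functor are the pasted composites of the individual comparators, which is exactly how the inserted maps compose under the definitions above. As with the companion results, I would relegate the explicit laxator verifications to the appendix.
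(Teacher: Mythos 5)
Your proposal is correct and follows essentially the same route as the paper: define $\mathcal{T}F$ by applying $F$ componentwise to each of the five defining profunctors and then check that the coherence isomorphisms and laxators are preserved, deferring the routine verifications. The only divergence is that the paper's $\Mon$ consists of monoidal categories and \emph{strict} monoidal functors (made explicit in \Cref{ax:prop:produoidalSpliceContour}), so the comparators $\mu$ and $\epsilon$ you thread through the parallel splits and parallel units are identities there; your strong-monoidal version is a mild generalization of the same argument.
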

\begin{proof}
  See Appendix, \Cref{ax:prop:monoidalSpliceFunctor}.
\end{proof}

As in the categorical case, \splicedMonoidalArrows{} and \monoidalContour{} again form an adjunction. This adjunction characterizes spliced monoidal arrows as a cofree construction.

\begin{theorem}
  \label{prop:produoidalSpliceContour}
    There exists an adjunction between \monoidalCategories{} and \produoidalCategories{}, where the \monoidalContour{} is the left adjoint, and the \produoidalSplice{} category is the right adjoint.
\end{theorem}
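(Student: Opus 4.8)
The plan is to exhibit a bijection $\Mon(\mathcal{D}\mathbb{V}, \mathbb{C}) \cong \pDuo(\mathbb{V}, \mathcal{T}\mathbb{C})$, natural in the produoidal category $\mathbb{V}$ and the \monoidalCategory{} $\mathbb{C}$, following the template of the promonoidal splice--contour adjunction (\Cref{th:catpromadj}) and extending it to the second (parallel) tensor and the laxators. The backbone of the correspondence is that objects of $\mathcal{T}\mathbb{C}$ are \emph{pairs} of objects of $\mathbb{C}$, matching the two-objects-per-object structure $X \mapsto (X^L, X^R)$ of the contour $\mathcal{D}\mathbb{V}$. Thus a produoidal functor $G \colon \mathbb{V} \to \mathcal{T}\mathbb{C}$ sending $X \mapsto \biobj{G_L X}{G_R X}$ will correspond on objects to a monoidal functor $F \colon \mathcal{D}\mathbb{V} \to \mathbb{C}$ with $FX^L = G_L X$ and $FX^R = G_R X$.

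First I would set up both directions of the bijection on \emph{generating data}. Since $\mathcal{D}\mathbb{V}$ is freely presented (\Cref{def:monoidalContour}), a monoidal functor out of it is determined by the images of the five families of generators, subject to the imposed relations; I would match these one-to-one with the five components of a produoidal functor (\Cref{def:produoidalfunctor}). The morphism generators $(a_0,a_1)$ encode the underlying functor $G$; the \sequentialUnit{} generator encodes $G_N$; the \parallelUnit{} generators encode $G_I$; the \sequentialSplit{} triple encodes $G_\triangleleft$; and the \parallelSplit{} pair $a_0 \colon A^L \to X^L \otimes Y^L$, $a_1 \colon X^R \otimes Y^R \to A^R$ encodes $G_\otimes$, using the monoidal product of $\mathbb{C}$ to interpret the parallel tensor of $\mathcal{T}\mathbb{C}$. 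Under this dictionary, applying $F$ to a generator and applying the corresponding component of $G$ become literally the same assignment.

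The heart of the argument is to check that the contour relations correspond \emph{exactly} to the coherence axioms of a produoidal functor, with nothing left over on either side. The relations inherited from the $(\triangleleft, N)$ promonoidal structure give, verbatim as in \Cref{th:catpromadj}, the promonoidal-functor laws for $(G_\triangleleft, G_N)$. The new relations from the $\otimes$-associator (\Cref{fig:monoidalContourAssociativity}) and $\otimes$-unitor (\Cref{fig:monoidalContourUnitality}) translate into naturality and coherence of $(G_\otimes, G_I)$, and the relations from the laxators (\Cref{fig:monoidal-contour-equation1} and its companions) translate into preservation of $\psi_2, \psi_0, \varphi_2, \varphi_0$. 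Establishing that this is a bijection between relations and axioms shows that respecting all contour relations is equivalent to satisfying all produoidal-functor conditions, so functoriality and monoidality of $F$ are automatic from the free presentation; naturality of the resulting bijection in both arguments is then routine from the explicit formulas, yielding $\mathcal{D} \dashv \mathcal{T}$.

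The main obstacle I expect is the bookkeeping in this relation-to-axiom matching, specifically for the laxators: the contour equations $a_0 ⨾ (b_0 \otimes c_0) = d_0 ⨾ e_0$, $b_1 \otimes c_1 = e_1 ⨾ d_1 ⨾ f_0$, and $(b_2 \otimes c_2) ⨾ a_1 = f_1 ⨾ d_2$ of \Cref{fig:monoidal-contour-equation1} must be shown to be precisely the image under $F$ of the $\psi_2$-preservation condition, with the parallel tensor and unit of $\mathbb{C}$ used consistently throughout (and the structure isomorphisms of $F$ absorbed correctly if $F$ is taken strong rather than strict monoidal, which is exactly where $G_\otimes$ and $G_I$ reappear). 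Keeping the two tensors and their two units from colliding during this translation is where care is needed; the remaining verifications follow the promonoidal case closely.
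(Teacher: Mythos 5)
Your proposal is correct and follows essentially the same route as the paper: the appendix proof likewise unpacks a strict monoidal functor out of the freely presented contour $\mathcal{D}\mathbb{V}$ into choices of generators satisfying the contour relations, unpacks a produoidal functor into $\mathcal{T}\mathbb{C}$ into its five components and coherence conditions, and observes the two lists of data coincide point by point. The only detail to pin down is that the paper works with \emph{strict} monoidal functors on the left-hand side, which dissolves the strong-versus-strict bookkeeping you flag at the end.
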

\begin{proof}
  See Appendix, \Cref{ax:prop:produoidalSpliceContour}.
\end{proof}

\subsection{Representable Parallel Structure}

A \produoidalCategory{} has two tensors, and neither is, in principle, representable. However, the cofree \produoidalCategory{} over a category we have just constructed happens also to have a representable tensor, $(⊗)$. Spliced monoidal arrows form a \monoidalCategory{}.

\begin{proposition}
  \ParallelSplits{} and \parallelUnits{} of spliced monoidal arrows are representable \profunctors{}. Explicitly,
  $$𝓣{ℂ}\left(\biobj{A}{B} ; \biobj{X}{Y} ⊗ \biobj{X'}{Y'}\right) ≅ 𝓣{ℂ}\left(\biobj{A}{B} ; \biobj{X ⊗ X'}{Y ⊗ Y'}\right),
  \mbox{ and }
  𝓣{ℂ}\left(\biobj{A}{B} ; I\right) ≅ 𝓣{ℂ}\left(\biobj{A}{B} ; \biobj{I}{I} \right).$$
\end{proposition}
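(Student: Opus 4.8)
The plan is to exhibit both isomorphisms as reindexings of the defining sets of Definition \ref{def:monoidalSplice}, so that the only genuine content is naturality, i.e.\ that each is an isomorphism of \profunctors{} rather than merely of families of sets. First I would define the candidate representing functor $\boxtimes \colon \mathcal{T}\mathbb{C} \times \mathcal{T}\mathbb{C} \to \mathcal{T}\mathbb{C}$. On objects,
$$\left(\biobj{X}{Y}\right) \boxtimes \left(\biobj{X'}{Y'}\right) \;=\; \biobj{X \otimes X'}{Y \otimes Y'}.$$
On morphisms I would use that the underlying category of $\mathcal{T}\mathbb{C}$ is just $\mathbb{C} \times \mathbb{C}^{\mathrm{op}}$, so that a morphism $\biobj{X}{Y} \to \biobj{U}{V}$ is a pair $(p \colon X \to U,\, q \colon V \to Y)$; I set $\boxtimes$ to send two such pairs to $(p \otimes p',\, q \otimes q')$. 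Functoriality of $\boxtimes$ then reduces to the bifunctoriality of $\otimes$ on $\mathbb{C}$ (and on its opposite), which is immediate.

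Second, I would observe that by Definition \ref{def:monoidalSplice} the parallel-split set and the morphism set at the representing object are literally the same set, namely $\mathbb{C}(A; X \otimes X') \times \mathbb{C}(Y \otimes Y'; B)$. The candidate comparison map is therefore the identity on underlying sets, sending a parallel split $f ⨾ (\square \otimes \square) ⨾ g$ to the morphism $f ⨾ \square ⨾ g$ whose single hole is now typed at $\biobj{X \otimes X'}{Y \otimes Y'}$. The unit case is identical: the representing functor is the point $1 \to \mathcal{T}\mathbb{C}$ selecting $\biobj{I}{I}$, and both $\mathcal{T}\mathbb{C}\!\left(\biobj{A}{B}; I\right)$ and $\mathcal{T}\mathbb{C}\!\left(\biobj{A}{B}; \biobj{I}{I}\right)$ are by definition the set $\mathbb{C}(A; I) \times \mathbb{C}(I; B)$. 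Appealing to \Cref{rem:representability} then recasts these identities as the statement that the parallel split and parallel unit are corepresentable.

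The crux, and the only place a real verification is required, is naturality: I must check that this identity-on-sets map commutes with the \profunctor{} actions on both sides. In the contravariant input $\biobj{A}{B}$ this is immediate, since both profunctors act by precomposing $f$ and postcomposing $g$ with the same morphisms of $\mathbb{C}$. The substantive case is the two covariant output variables: the right action of the parallel split fills each hole by \emph{tensoring} the acting morphisms into $f$ and $g$, whereas the morphism-profunctor action at the $\boxtimes$-object composes $f$ and $g$ with $\boxtimes$ applied to the acting pair. These coincide exactly because $\boxtimes$ was defined to act as $\otimes$ componentwise, so both naturality squares commute by a single use of the functoriality of $\otimes$. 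I expect this matching of the hole-filling action with the representing functor to be the main obstacle --- not because it is deep, but because it demands careful bookkeeping of the two action conventions of Definition \ref{def:monoidalSplice}. Once those squares are seen to commute, both natural isomorphisms follow, and with them the fact that $\mathcal{T}\mathbb{C}$ carries a genuine, representable monoidal structure.
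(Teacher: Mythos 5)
Your proposal is correct and matches the paper's own (very terse) argument: the paper simply observes, immediately after the proposition, that ``these sets are equal by definition,'' which is exactly your second step. Your additional work --- defining the representing functor $\boxtimes$ as $\otimes$ taken componentwise on $\mathbb{C}\times\mathbb{C}^{\mathrm{op}}$ and checking that the hole-filling actions agree with the morphism-profunctor actions at the $\boxtimes$-object --- is the naturality bookkeeping the paper leaves implicit, and it goes through exactly as you describe.
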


In fact, these sets are equal by definition. However, there is a good reason to work in the full generality of \produoidalCategories{}: every produoidal category, representable or not, has an associated \emph{normal} \produoidalCategory{}, which may be again representable or not. Normalization is a canonical procedure to mix both tensors, $(⊗)$ and $(\triangleleft)$; and it will allow us to write \emph{\monoidalContexts} in \Cref{sec:monoidalContexts}, which form a \produoidalCategory{} without representable structure.

\begin{remark} %
  This means $𝓣{ℂ}$ has the structure of a \emph{virtual duoidal category} \cite{nlab:duoidal} or \emph{monoidal multicategory}, defined by Aguiar, Haim and López Franco \cite{aguiar18} as a pseudomonoid in the cartesian monoidal 2-category of multicategories.
\end{remark}

\section{Interlude: Normalization}
\label{sec:normalization}

\ProduoidalCategories{} seem to contain too much structure: of course, we want to split things in two different ways, sequentially $(◁)$ and in parallel $(⊗)$; but that does not necessarily mean that we want to keep track of two different types of units, parallel $(I)$ and sequential $(N)$. The atomic components of our decomposition algebra should be the same, without having to care if they are \emph{atomic for sequential composition} or \emph{atomic for parallel composition}.

Fortunately, there exists an abstract procedure that, starting from any \produoidalCategory{}, constructs a new produoidal category where both units are identified. This procedure is known as \emph{normalization}, and the resulting produoidal categories are called \emph{normal}.

\begin{definition}[Normal produoidal category]
  \defining{linkNormalProduoidalCategory}{}
  A \emph{normal produoidal category} is a \produoidalCategory{} where the laxator $\varphi_0 \colon 𝕍(•;I) \to 𝕍(•;N)$ is an isomorphism.

  \NormalProduoidalCategories{} form a category $\npDuo$ with produoidal functors between them and endowed with fully faithful forgetful functor $𝓤 \colon \npDuo \to \pDuo$.
\end{definition}

\begin{theorem}
  \label{th:normalizationProduoidal}
  \defining{linkNormalization}
    Let $𝕍_{⊗,I,◁,N}$ be a \produoidal{} category. The profunctor
    $𝓝𝕍(•; •) = 𝕍(•; N ⊗ • ⊗ N)$
    forms a \promonad{}. Moreover, the Kleisli category of this \promonad{} is a \normalProduoidalCategory{} with the following splits and units.
    $$\begin{aligned}
        𝓝𝕍(A;B) &= 𝕍(A ; N ⊗ B ⊗ N); \\
        𝓝𝕍(A;B ⊗_N C) &= 𝕍(A ; N ⊗ B ⊗ N ⊗ C ⊗ N); \\
        𝓝𝕍(A;B ◁_N C) &= 𝕍(A ; (N ⊗ B ⊗ N) ◁ (N ⊗ C ⊗ N)); \\
        𝓝𝕍(A;I_N) &= 𝕍(A ; N); \\ 
        𝓝𝕍(A;N_N) &= 𝕍(A ; N).
    \end{aligned}$$
\end{theorem}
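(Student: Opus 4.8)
The plan is to prove the statement in two movements: first that $\mathcal{N}\mathbb{V}(A;B) = \mathbb{V}(A; N \otimes B \otimes N)$ carries a promonad structure, and then that the induced Kleisli category, equipped with the displayed splits and units, is a normal produoidal category. The conceptual engine throughout is the standard fact that in any duoidal category the sequential unit $N$ is a monoid for the parallel tensor, with multiplication $\varphi_2 \colon \mathbb{V}(\bullet; N \otimes N) \to \mathbb{V}(\bullet; N)$ and unit $\varphi_0 \colon \mathbb{V}(\bullet; I) \to \mathbb{V}(\bullet; N)$; these two laxators, together with their coherence with the parallel associators and unitors of \Cref{def:produoidal}, supply all the data and equations we need.

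First I would build the promonad. The unit $\eta \colon \mathbb{V}(\bullet; \bullet) \to \mathcal{N}\mathbb{V}(\bullet; \bullet)$ is obtained by applying the inverse parallel unitors $\mathbb{V}(A; B) \cong \mathbb{V}(A; I \otimes B \otimes I)$ and then transporting each copy of $I$ to $N$ along $\varphi_0$ using the functoriality of the parallel split in each factor. For the multiplication I would first reduce the profunctor composite by co-Yoneda: since $\mathbb{V}(B; N \otimes C \otimes N)$ is corepresentable in $B$ and $\mathbb{V}(A; N \otimes \bullet \otimes N)$ is covariant,
\[
(\mathcal{N}\mathbb{V} \diamond \mathcal{N}\mathbb{V})(A;C) = \int^{B} \mathbb{V}(A; N \otimes B \otimes N) \times \mathbb{V}(B; N \otimes C \otimes N) \cong \mathbb{V}(A; N \otimes (N \otimes C \otimes N) \otimes N),
\]
the right-hand side being read through the nesting convention for virtual structures. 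Reassociating by the parallel associator and collapsing the two outer pairs of adjacent units with $\varphi_2$ yields the required element of $\mathbb{V}(A; N \otimes C \otimes N)$, defining $\mu$. The promonad associativity and unit laws then unfold into the associativity and unitality of the monoid $(N, \varphi_2, \varphi_0)$, which hold by the duoidal coherence conditions; dinaturality of $\eta$ and $\mu$ is inherited from the profunctor actions of $\mathbb{V}$.

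Next I would take the Kleisli category $\mathrm{Kl}(\mathcal{N}\mathbb{V})$, whose objects are those of $\mathbb{V}$ and whose hom-sets are $\mathcal{N}\mathbb{V}(A;B)$, and equip it with the two promonoidal structures indicated by the displayed formulas. Each structure profunctor is defined by inserting separating copies of $N$ into the corresponding structure of $\mathbb{V}$: interleaving $N$'s for $\otimes_N$, padding each sequential factor with $N$'s for $\triangleleft_N$, and the single profunctor $\mathbb{V}(\bullet; N)$ for both units. The associators, unitors, and laxators are defined by reusing those of $\mathbb{V}$ and merging redundant adjacent $N$'s via $\varphi_2$ and $\varphi_0$; the key check is that these maps are well-defined over the Kleisli category, i.e. dinatural with respect to Kleisli composition, which again reduces to the monoid equations for $N$ and to the compatibility of $\psi_2, \psi_0, \varphi_2, \varphi_0$ recorded in \Cref{def:produoidal}. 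Normality is then immediate: both $\mathcal{N}\mathbb{V}(\bullet; I_N)$ and $\mathcal{N}\mathbb{V}(\bullet; N_N)$ are literally $\mathbb{V}(\bullet; N)$, and the normalized laxator $\varphi_0$ between them is the identity, hence an isomorphism.

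The main obstacle I anticipate is not any single clever step but the sheer bookkeeping of coherence: verifying that the reassembled associators, unitors, and the two families of laxators satisfy the full list of produoidal coherence diagrams once the units have been fused. I would organize this by isolating the monoid laws of $N$ as lemmas and showing that every required diagram for $\mathcal{N}\mathbb{V}$ is the image of a corresponding diagram for $\mathbb{V}$ after collapsing units, so that coherence in $\mathbb{V}$ together with associativity and unitality of $\varphi_2, \varphi_0$ discharges each case; the genuinely new content is confined to the places where two separating copies of $N$ meet and must be merged.
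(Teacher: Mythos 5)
Your proposal is correct and follows essentially the same route as the paper's proof: the unit of the promonad via parallel unitors and $\varphi_0$, the multiplication via a Yoneda/coend reduction followed by $\varphi_2$, the promonoidal structure maps on the Kleisli category assembled from the coherence morphisms and laxators of $\mathbb{V}$ so that coherence is inherited, and normality read off from the fact that both units of $\mathcal{N}\mathbb{V}$ are definitionally $\mathbb{V}(\bullet;N)$ with an identity laxator between them. The only cosmetic difference is that you phrase the coend collapse as corepresentability in $B$ where the paper just invokes the Yoneda reduction directly, but the computation is the same.
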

\begin{proof}
  See Appendix, \Cref{ax:th:normalizationProduoidal}.
\end{proof}

A normalization procedure for duoidal categories was given by Garner and López Franco \cite{garner16}; our contribution is its produoidal counterpart. This novel produoidal normalization is better behaved than the duoidal one \cite{garner16}: the latter does not always exist, but we show produoidal normalization does.
Indeed, we prove that produoidal normalization forms an idempotent monad.
The technical reason for this improvement is that the original required the existence of certain coequalizers in $𝕍$; produoidal normalization uses coequalizers in $\Set$. Appendix \ref{ax:sec:garnernorm} outlines a relation between the two procedures.

\begin{theorem} \label{th:normalizationIdempotent}
  Normalization extends to an idempotent monad.%
\end{theorem}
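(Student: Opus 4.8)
The plan is to realise $\mathcal{N}$ as the reflector of $\pDuo$ onto its replete full subcategory $\npDuo$ of normal produoidal categories, and then invoke the standard fact that a reflection onto such a subcategory—equivalently, a left adjoint to a fully faithful functor—automatically induces an idempotent monad. Since $\mathcal{U}\colon \npDuo \to \pDuo$ is already given as fully faithful, it suffices to promote the object assignment $\mathbb{V} \mapsto \mathcal{N}\mathbb{V}$ of \Cref{th:normalizationProduoidal} to a left adjoint of $\mathcal{U}$.

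First I would extend $\mathcal{N}$ to a functor $\pDuo \to \npDuo$. On objects this is \Cref{th:normalizationProduoidal}. On a produoidal functor $F\colon \mathbb{V} \to \mathbb{W}$, I define $\mathcal{N}F$ to act on $\mathcal{N}\mathbb{V}(A;B) = \mathbb{V}(A; N \otimes B \otimes N)$ by applying $F$ and then using its structure maps $F_N$ and $F_\otimes$ to land in $\mathbb{W}(FA; N \otimes FB \otimes N) = \mathcal{N}\mathbb{W}(FA;FB)$. Checking that $\mathcal{N}F$ respects the Kleisli composition reduces to the compatibility of $F$ with the laxator $\varphi_2$, which drives the promonad multiplication; checking that it preserves the normal produoidal structure listed in \Cref{th:normalizationProduoidal} is naturality bookkeeping for $F$'s coherence data.

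Next I would construct the unit $\eta_\mathbb{V}\colon \mathbb{V} \to \mathcal{N}\mathbb{V}$, identity on objects and given on morphisms by the promonad unit $\mathbb{V}(A;B) \to \mathbb{V}(A; N \otimes B \otimes N)$ built from the unitors and $\varphi_0$; one verifies it is a produoidal functor using the laxators and coherence. The heart of the argument is the universal property: for every normal $\mathbb{W}$, precomposition with $\eta_\mathbb{V}$ should be a bijection $\npDuo(\mathcal{N}\mathbb{V}, \mathbb{W}) \cong \pDuo(\mathbb{V}, \mathcal{U}\mathbb{W})$. Given $G\colon \mathbb{V} \to \mathcal{U}\mathbb{W}$, I extend it to $\bar G\colon \mathcal{N}\mathbb{V} \to \mathbb{W}$ by applying $G$ to an element of $\mathbb{V}(A; N \otimes B \otimes N)$ and then collapsing $\mathbb{W}(GA; N \otimes GB \otimes N) \cong \mathbb{W}(GA; GB)$ using that $\mathbb{W}$ is normal—so $\varphi_0^{-1}$ turns each $N$ into $I$ and the parallel unitors absorb it. Well-definedness with respect to the dinatural (Kleisli) identifications, and uniqueness of $\bar G$, are then checked.

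Finally, the adjunction $\mathcal{N} \dashv \mathcal{U}$ with $\mathcal{U}$ fully faithful forces the counit $\mathcal{N}\mathcal{U} \to \mathrm{Id}$ to be invertible, which is exactly idempotency of the induced monad; concretely, since $\mathcal{N}\mathbb{V}$ is normal, $\eta_{\mathcal{N}\mathbb{V}}$ is an isomorphism, and the monad law $\mu \circ \eta_{\mathcal{N}\mathbb{V}} = \mathrm{id}$ exhibits $\mu$ as its inverse, so $\mu$ is invertible. I expect the main obstacle to be verifying that $\eta$, $\mu$, and the extensions $\bar G$ are genuine produoidal functors: the ``$N \cong I$ and slide units through $\otimes$'' collapse must be shown to interact coherently with both laxators $\psi_2, \psi_0, \varphi_2, \varphi_0$ and with the $\diamond$-composition that defines nested tensors. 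This coherence bookkeeping—rather than any single conceptual step—is where the normality hypothesis on $\mathbb{W}$ is genuinely used and where the care lies.
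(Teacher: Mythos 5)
Your proof is correct in outline, but it runs the logic in the opposite direction from the paper. The paper proves \Cref{th:normalizationIdempotent} \emph{directly}: it first extends $\mathcal{N}$ to an endofunctor on $\pDuo$, then writes down the unit $\eta_{\mathbb{V}}\colon \mathbb{V}\to\mathcal{N}\mathbb{V}$ (promonad unit plus the laxators $\varphi_0$) and an explicit multiplication $\mu_{\mathbb{V}}\colon \mathcal{N}\mathcal{N}\mathbb{V}\to\mathcal{N}\mathbb{V}$ built purely from unitors, observes that $\mu$ is by construction an \emph{isomorphism} (so idempotency is immediate), and checks the monad laws by coherence. The adjunction $\mathcal{N}\dashv\mathcal{U}$ is then a \emph{separate, later} result (\Cref{th:freeNormalProduoidal}), obtained from the monad via the characterization of its algebras as exactly the normal produoidal categories. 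You instead establish the adjunction first — promoting $\mathcal{N}$ to a left adjoint of the fully faithful $\mathcal{U}$ via the universal property of $\eta$, using normality of the target $\mathbb{W}$ to collapse $\mathbb{W}(GA; N\otimes GB\otimes N)\cong\mathbb{W}(GA;GB)$ — and then read off idempotency from full faithfulness of the right adjoint. This is a legitimate reorganization: it buys you \Cref{th:normalizationIdempotent} and \Cref{th:freeNormalProduoidal} in one stroke, and it makes the conceptual content (reflection onto a replete full subcategory) explicit. What it costs is that the heaviest verification migrates into the universal property: well-definedness of $\bar G$ on the coend (Kleisli) quotient, its compatibility with all of $\psi_2,\psi_0,\varphi_2,\varphi_0$, and especially \emph{uniqueness} of $\bar G$. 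For uniqueness you should say explicitly why $\eta$-images generate $\mathcal{N}\mathbb{V}$ under the produoidal operations — e.g.\ a representative $(f\mid n_1\mid n_2)$ of $\mathbb{V}(A;N\otimes B\otimes N)$ is recovered from $\eta(f)$, $\eta(n_1)$, $\eta(n_2)$ via the parallel splits and unitors of $\mathcal{N}\mathbb{V}$, so any produoidal $H$ with $\eta\fatsemi H=G$ is forced. This is a fillable gap rather than an error, and it is essentially the content the paper instead packages into its Lemma on algebra structures (\Cref{lemma:ax:exactlyOneAlgebra}).
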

\begin{proof}
  See Appendix, \Cref{ax:th:normalizationIdempotent}.
\end{proof}

\begin{theorem}[Free normal produoidal]
  \label{th:freeNormalProduoidal}
  Normalization determines an adjunction between produoidal categories and normal produoidal categories,
  $𝓝 \colon \Produo \rightleftharpoons \npDuo \colon 𝓤.$
  That is, $𝓝𝕍$ is the free produoidal category over $𝕍$.
\end{theorem}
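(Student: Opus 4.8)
The plan is to deduce this adjunction from the idempotency of normalization (\Cref{th:normalizationIdempotent}) together with the fact, established in \Cref{th:normalizationProduoidal}, that the normalized category $𝓝𝕍$ is always normal. The overarching principle is the standard one for idempotent monads: if $(𝓝, \eta, \mu)$ is an idempotent monad on $\pDuo$, then the full subcategory of $𝓝$-\emph{local} objects --- those $𝕍$ for which the unit $\eta_{𝕍} \colon 𝕍 \to 𝓝𝕍$ is invertible --- is reflective, with reflector given by $𝓝$ and inclusion given by the forgetful functor; moreover this reflection is precisely the Eilenberg--Moore adjunction of $𝓝$. Thus the entire statement reduces to identifying the subcategory of $𝓝$-local objects with the normal produoidal categories $\npDuo$.

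First I would make the unit $\eta_{𝕍} \colon 𝕍 \to 𝓝𝕍$ explicit. On morphisms it is the composite $𝕍(A;B) \cong 𝕍(A; I \otimes B \otimes I) \to 𝕍(A; N \otimes B \otimes N) = 𝓝𝕍(A;B)$, where the first isomorphism is built from the $\otimes$-unitors of the promonoidal structure and the second map applies the laxator $\varphi_0 \colon 𝕍(\bullet; I) \to 𝕍(\bullet; N)$ to the two outer factors. I would check that these components assemble into a produoidal functor and that this functor is indeed the unit of the monad of \Cref{th:normalizationIdempotent}, so that the abstract reflection above applies.

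Then I would establish the identification in both directions. For the easy direction, if $\eta_{𝕍}$ is an isomorphism then $𝕍 \cong 𝓝𝕍$, and since $𝓝𝕍$ is normal by \Cref{th:normalizationProduoidal} and normality (invertibility of $\varphi_0$) is preserved under produoidal isomorphism, $𝕍$ is normal. For the converse, assuming $𝕍$ normal so that $\varphi_0$ is invertible, each morphism component of $\eta_{𝕍}$ described above is a composite of isomorphisms (the $\otimes$-unitors and $\varphi_0$) and hence invertible; I would then verify the analogous statement for the split- and unit-profunctors, matching $𝓝𝕍(A; B \otimes_N C) = 𝕍(A; N \otimes B \otimes N \otimes C \otimes N)$ and $𝓝𝕍(A; B \triangleleft_N C) = 𝕍(A; (N \otimes B \otimes N) \triangleleft (N \otimes C \otimes N))$ to the corresponding structure of $𝕍$ through $\varphi_0$ and the unitors, so that $\eta_{𝕍}$ is a produoidal isomorphism. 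This exhibits $\npDuo$ as exactly the subcategory of $𝓝$-local objects; combined with the idempotent-monad reflection and the fully faithfulness of $𝓤$ already recorded in the definition of $\npDuo$, the adjunction $𝓝 \dashv 𝓤$ follows, together with its freeness universal property: every produoidal functor from $𝕍$ to a normal produoidal category factors uniquely through $\eta_{𝕍}$.

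The hard part will be the converse direction --- verifying that $\eta_{𝕍}$ is a produoidal isomorphism on the nose when $𝕍$ is normal. The obstacle is not any single isomorphism but the bookkeeping: one must confirm that the split components above correspond to the $𝕍$-structure compatibly with the associators, unitors and laxators, and that naturality and the produoidal coherence conditions are respected by the candidate inverse, rather than merely having invertible components hom-set by hom-set.
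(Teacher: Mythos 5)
Your proposal is correct and follows essentially the same route as the paper: both arguments hinge on the idempotency of $\mathcal{N}$ (\Cref{th:normalizationIdempotent}) and on identifying the produoidal categories fixed by normalization with the normal ones, via the invertibility of $\varphi_0$. The paper phrases the fixed points as Eilenberg--Moore algebras (showing a produoidal category carries exactly one algebra structure when normal and none otherwise) rather than as $\mathcal{N}$-local objects, but for an idempotent monad these coincide, and the underlying verification --- that the unit components, built from $\otimes$-unitors and $\varphi_0$, are isomorphisms precisely when $\varphi_0$ is --- is the same in both.
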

\begin{proof}
  See Appendix, \Cref{ax:th:freeNormalProduoidal}.
\end{proof}

In the previous \Cref{sec:parallelContext}, we constructed the \produoidalCategory{} of \splicedMonoidalArrows{}, which distinguishes between morphisms and morphisms with a hole in the monoidal unit.
This is because the latter hole splits the morphism in two parts.   
Normalization equates both; it sews these two parts.
In \Cref{sec:monoidalContexts}, we explicitly construct \monoidalContexts{}, the normalization of \splicedMonoidalArrows{}.

\subsection{Symmetric Normalization}

Normalization is a generic procedure that applies to any \produoidalCategory{}, it does not matter if the parallel split $(⊗)$ is symmetric or not. However, when $⊗$ happens to be symmetric, we can also apply a more specialized normalization procedure.

\begin{definition}[Symmetric produoidal category]
  \defining{linkSymmetricProduoidal}{}
  \label{def:symmetricProduoidal}
  A \emph{symmetric produoidal category} is a \produoidalCategory{} $𝕍_{◁,N,⊗,I}$ endowed with a natural isomorphism
  $σ \colon 𝕍(A;B ⊗ C) ≅ 𝕍(A;C ⊗ B)$ satisfying the symmetry and hexagon equations.
\end{definition}

\begin{theorem}
  \label{th:symNormalizationProduoidal}
    Let $𝕍_{⊗,I,◁,N}$ be a \symmetricProduoidal{} category. The profunctor
    $𝓝_{σ}𝕍(• ; •) =𝕍(• ; N ⊗ •)$
    forms a \promonad{}. Moreover, the Kleisli category of this \promonad{} is a normal \symmetricProduoidal{} category with the following splits and units.
    $$\begin{aligned}
        𝓝_{σ}𝕍(A;B) &= 𝕍(A ; N ⊗ B); \\
        𝓝_{σ}𝕍(A;B ⊗_N C) &= 𝕍(A ; N ⊗ B ⊗ C); \\
        𝓝_{σ}𝕍(A;B ◁_N C) &= 𝕍(A ; (N ⊗ B) ◁ (N ⊗ C)); \\
        𝓝_{σ}𝕍(A ; I_N) &= 𝕍(A ; N); \\
        𝓝_{σ}𝕍(A ; N_N) &= 𝕍(A ; N).
    \end{aligned}$$
\end{theorem}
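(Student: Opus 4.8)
The plan is to follow the blueprint of \Cref{th:normalizationProduoidal} almost verbatim, replacing the two-sided padding $N ⊗ • ⊗ N$ by the one-sided padding $N ⊗ •$ and invoking the symmetry $σ$ precisely at the points where the absent right-hand unit must be supplied. First I would exhibit the promonad structure on $P(•;•) := 𝕍(•; N ⊗ •)$. By the nesting convention and co-Yoneda, the composite endoprofunctor computes as $\int^{B} 𝕍(A; N ⊗ B) × 𝕍(B; N ⊗ C) ≅ 𝕍\bigl(A; N ⊗ (N ⊗ C)\bigr)$, so I can define the multiplication as the reassociation $α \colon 𝕍(A; N ⊗ (N ⊗ C)) ≅ 𝕍(A; (N ⊗ N) ⊗ C)$ followed by whiskering the laxator $φ_2 \colon 𝕍(•; N ⊗ N) → 𝕍(•; N)$ into the left factor. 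The unit $η \colon 𝕍(A;B) → 𝕍(A; N ⊗ B)$ is the inverse left unitor of $(⊗, I)$ followed by whiskering $φ_0 \colon 𝕍(•;I) → 𝕍(•;N)$. The associativity and unit laws then reduce to the coherence of $α$, $λ$, $ρ$ with the laxators $φ_0$ and $φ_2$ from \Cref{def:produoidal}, which are exactly the identities already verified in \Cref{th:normalizationProduoidal}. I would emphasize that this step uses no symmetry: $P$ is a promonad in any produoidal category.

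Next I would identify the Kleisli category, whose hom-sets are $𝕍(A; N ⊗ B)$, and transport the two promonoidal structures onto it, defining the sequential split $𝕍(A; (N ⊗ B) ◁ (N ⊗ C))$, the parallel split $𝕍(A; N ⊗ B ⊗ C)$, and the common unit $𝕍(A;N)$. The associators and unitors of $◁_N$ lift directly from those of $𝕍$, since the sequential padding stays confined to each slot. For $⊗_N$ the essential use of symmetry appears: the naive merge of two one-sided padded factors is $𝕍(A; N ⊗ B ⊗ N ⊗ C)$, and I would collapse the interior $N$ to the front using $σ$ to transpose it past $B$ and then $φ_2$ to fuse $N ⊗ N$, obtaining the single-padded $𝕍(A; N ⊗ B ⊗ C)$. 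The symmetry $σ_N$ of $⊗_N$ is then induced by $σ$ swapping $B$ and $C$ while fixing the front $N$, and its symmetry and hexagon equations follow from those of $σ$ in \Cref{def:symmetricProduoidal}. Normality is immediate: the induced laxator $φ_0 \colon 𝕍(•;N) → 𝕍(•;N)$ is the identity, hence an isomorphism.

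The main obstacle I anticipate is the distributivity laxator $ψ_2$ (together with $ψ_0$) and its coherence pentagons for the single-padded splits. Lifting $ψ_2$ naively misplaces padding units, and closing the gap requires threading $σ$ and $φ_2$ together so that the front-loaded $N$ plays the role the two-sided construction assigned to several copies of $N$. Checking that the resulting transformation still satisfies the produoidal coherence conditions of \Cref{def:produoidal}, and that it is compatible with $σ_N$, is the technical heart of the argument, and is where I expect the bulk of the appendix computation to lie.
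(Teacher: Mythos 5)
Your outline is essentially the paper's own argument: the paper proves this theorem by declaring that the promonad, the coherence isomorphisms and the laxators are "given in essentially the same way" as in the two-sided normalization of \Cref{th:normalizationProduoidal}, using symmetry where the missing right-hand padding must be supplied, and then explicitly constructing the symmetry of $\mathcal{N}_{\sigma}\mathbb{V}$ exactly as you describe (factor out $\mathbb{V}(A; N \otimes X)$, apply $\sigma$ to $\mathbb{V}(X; B \otimes C)$, reassemble). Your observations that the promonad itself needs no symmetry and that normality is immediate because $\varphi_0$ becomes an identity are both correct and are finer-grained than anything the paper records.

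One step in your sketch is justified by the wrong mechanism, and would fail if written out literally. You propose to identify $\mathbb{V}(A; N \otimes B \otimes N \otimes C)$ with $\mathbb{V}(A; N \otimes B \otimes C)$ by transposing the interior $N$ with $\sigma$ and then \emph{fusing} $N \otimes N$ with $\varphi_2$. But $\varphi_2 \colon \mathbb{V}(\bullet; N \otimes N) \to \mathbb{V}(\bullet; N)$ is only a laxator, not an isomorphism — normality makes $\varphi_0$ invertible, not $\varphi_2$ — so this composite is merely a map. The required isomorphism instead comes from co-Yoneda over the Kleisli category: after $\sigma$ repositions the block, the interior $N \otimes (-)$ is exhibited as a Kleisli hom $\mathcal{N}_{\sigma}\mathbb{V}(Q; X)$ sitting against another Kleisli hom or module under a coend, and the coend absorbs it, exactly as in the displayed "Yoneda reduction" steps of \Cref{ax:th:normalizationProduoidal}. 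The underlying function of that isomorphism does act by $\sigma$ followed by $\varphi_2$, so your description of \emph{what the map does} is accurate, but its invertibility must be argued through the coend, not through $\varphi_2$. The same remark applies to the unitors of $\otimes_N$ and to the laxator $\psi_2$ that you flag as the technical heart; once the Kleisli-co-Yoneda absorption is used systematically, that part is a routine transcription of the two-sided case rather than a new computation.
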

\begin{proof}
  See Appendix, \Cref{ax:th:symNormalizationProduoidal}.
\end{proof}

\begin{theorem}
  \label{th:sym:freeNormalProduoidal}
  Normalization determines an adjunction between \symmetricProduoidal{} and normal \symmetricProduoidal{} categories,
  $𝓝_σ \colon \symProduo \rightleftharpoons \nSymProduo \colon 𝓤.$
  That is, $𝓝_σ𝕍$ is the free \symmetricProduoidal{} category over $𝕍$.
\end{theorem}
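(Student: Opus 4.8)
The plan is to reproduce, in the symmetric setting, the strategy behind \Cref{th:normalizationIdempotent} and \Cref{th:freeNormalProduoidal}: exhibit $𝓝_σ$ as an \emph{idempotent monad} on $\symProduo$ and then invoke the standard fact that an idempotent monad determines a reflective adjunction onto its full subcategory of fixed points, which I will identify with $\nSymProduo$. Since \Cref{th:symNormalizationProduoidal} already establishes that $𝓝_σ𝕍$ is a normal \symmetricProduoidal{} category with the splits and units displayed there, most of the remaining work is organizational: promoting the object assignment $𝕍 \mapsto 𝓝_σ𝕍$ to a monad and checking idempotency. First I would make $𝓝_σ$ functorial on $\symProduo$: a \produoidal{} functor $F \colon 𝕍 \to 𝕎$ acts on hom-sets $𝕍(A; N ⊗ B) \to 𝕎(FA; N ⊗ FB)$ by combining $F_{⊗}$ and $F_{N}$, and the symmetry together with the laxator axioms of \Cref{def:produoidalfunctor} guarantee this respects the normal operations $(⊗_N, ◁_N, I_N, N_N)$ of \Cref{th:symNormalizationProduoidal}, so that $𝓝_σ F$ is again a symmetric produoidal functor.

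Second, the monad structure. The unit $η_𝕍 \colon 𝕍 \to 𝓝_σ𝕍$ is the natural family
$$𝕍(A;B) \xrightarrow{\ λ^{-1}\ } 𝕍(A; I ⊗ B) \xrightarrow{\ φ_0\, ⊗\, 1\ } 𝕍(A; N ⊗ B),$$
built from the left $⊗$-unitor and the laxator $φ_0$; the multiplication $μ_𝕍 \colon 𝓝_σ𝓝_σ𝕍 \to 𝓝_σ𝕍$ collapses the doubled unit,
$$𝕍(A; N ⊗ (N ⊗ B)) \xrightarrow{\ α\ } 𝕍(A; (N ⊗ N) ⊗ B) \xrightarrow{\ φ_2\, ⊗\, 1\ } 𝕍(A; N ⊗ B),$$
using the associator and $φ_2 \colon N ⊗ N \to N$. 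Naturality in $𝕍$ and the monad laws follow from the coherence conditions on unitors, associators and laxators (\Cref{def:produoidal}). This is precisely where symmetry earns its keep: a single factor $N$ on the left suffices because the symmetry $σ$ transports $N$ past $B$, which is what shortens the three-fold pattern $N ⊗ \bullet ⊗ N$ of \Cref{th:normalizationProduoidal} to the two-fold pattern $N ⊗ \bullet$ here.

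The crux is idempotency. Because \Cref{th:symNormalizationProduoidal} already guarantees that the output $𝓝_σ𝕍$ is \emph{normal}, its laxator $φ_0$ is an isomorphism, and so $η_{𝓝_σ𝕍}$ is invertible: concretely $𝓝_σ𝓝_σ𝕍(A;B) = 𝕍(A; N ⊗ N ⊗ B)$, and $μ$ is inverse to $η_{𝓝_σ𝕍}$ since $N ⊗ N \cong I ⊗ N \cong N$ once $φ_0$ is iso, the two directions matching by the unit coherence of $(φ_0,φ_2)$. Hence $𝓝_σ$ is idempotent, and its full subcategory of fixed points consists of exactly the objects on which $η$ is an isomorphism; these are precisely the normal \symmetricProduoidal{} categories, since normality of $𝕍$ makes $N ⊗ B \cong I ⊗ B \cong B$ (so $η_𝕍$ is iso), while conversely $η_𝕍$ iso forces $φ_0$ iso.

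I would then conclude exactly as in \Cref{th:freeNormalProduoidal}: an idempotent monad restricts to a reflection onto its category of algebras, here the fully faithful $𝓤 \colon \nSymProduo \to \symProduo$, yielding the adjunction $𝓝_σ \dashv 𝓤$ and exhibiting $𝓝_σ𝕍$ as the free normal symmetric produoidal category over $𝕍$. The main obstacle I anticipate is not the abstract reflection argument but the diagram-chasing needed to verify that $η$ and $μ$ are genuinely symmetric-produoidal natural transformations — that they commute with all four virtual operations and the symmetry $σ$ — since those checks, carried out via coend and dinaturality as in the appendices, are where the symmetry and hexagon coherence of \Cref{def:symmetricProduoidal} must be invoked.
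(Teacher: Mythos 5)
Your proposal is correct and follows essentially the same route as the paper: the paper likewise promotes $𝓝_σ$ to a functor on $\symProduo$, shows it is an idempotent monad (with the same unit via $λ^{-1}$ and $φ_0$ and a multiplication that collapses the doubled $N$), identifies its algebras/fixed points with the normal symmetric produoidal categories, and concludes via the standard reflection for idempotent monads. The only cosmetic difference is that the paper pins down the algebras by chasing how an algebra map must transport the laxator $ψ_0$, whereas you characterize them as the objects where the unit is invertible — equivalent for an idempotent monad.
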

\begin{proof}
  See Appendix, \Cref{ax:sym:th:freeNormalProduoidal}.
\end{proof}

\section{Monoidal Context: Mixing $◁$ and $⊗$ by normalization}
\label{sec:monoidalContexts}

\MonoidalContexts{} formalize the notion of an incomplete morphism in a monoidal category. The category of monoidal contexts will have a rich algebraic structure: we shall be able to still compose contexts sequentially and in parallel and, at the same time, we shall be able to fill a context using another monoidal context. Perhaps surprisingly, then, the category of monoidal contexts is not even monoidal.

We justify this apparent contradiction in terms of profunctorial structure: the category is not monoidal, but it does have two \promonoidal{} structures that precisely represent sequential and parallel composition. These structures form a \normalProduoidalCategory{}.
In fact, we show it to be the normalization of the \produoidal{} category of \splicedMonoidalArrows{}.

This section constructs explicitly the \normalProduoidalCategory{} of \monoidalContexts{}.

\subsection{The Category of Monoidal Contexts}

A \monoidalContext{}, $\MC{\biobj{A}{B}}{\biobj{X}{Y}}$, represents a process from $A$ to $B$ with a hole admitting a process from $X$ to $Y$. In this sense, \monoidalContexts{} are similar to \splicedMonoidalArrows{}. The difference with \splicedMonoidalArrows{} is that \monoidalContexts{} allow for communication to happen to the left and to the right of this hole.

\begin{definition}[Monoidal context]
  \defining{linkmonoidalcontext}{}
  \label{def:monoidalcontext}
  Let $(ℂ,⊗,I)$ be a monoidal category.
  \emph{Monoidal contexts} are the elements of the following profunctor,
  $$\MC{\biobj{A}{B}}{\biobj{X}{Y}} = ℂ(A;•_1⊗X⊗•_2)⋄ℂ(•_1⊗Y⊗•_2;B).$$
\end{definition}

In other words, a \emph{monoidal context} from $A$ to $B$, \emph{with a hole} from $X$ to $Y$, is an equivalence class consisting of a pair of objects $M, N ∈ \obj{ℂ}$ and a pair of morphisms $f ∈ ℂ(A; M⊗X⊗N)$  and $g ∈ ℂ(M⊗Y⊗N;B)$, quotiented by dinaturality of $M$ and $N$ (\Cref{fig:contextprotensors}).
We write \monoidalContexts{} as
$$(f ⨾ (\im_M ⊗ \blacksquare ⊗ \im_N) ⨾ g) \in
 \MC{\biobj{A}{B}}{\biobj{X}{Y}}.$$
In this notation, dinaturality explicitly means that
\begin{align*}
  & \nmc{f ⨾ (m ⊗ \im_X ⊗ n)}{W}{H}{g} & = \\
  & \nmc{f}{M}{N}{(m ⊗ \im_Y ⊗ n) ⨾ g}.
\end{align*}

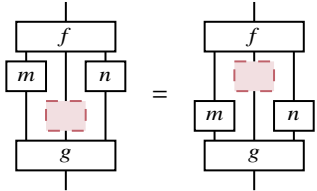
\begin{figure}[ht]
    \centering%

\tikzset{every picture/.style={line width=0.75pt}} %

\begin{tikzpicture}[x=0.75pt,y=0.75pt,yscale=-1,xscale=1]
\draw [color={rgb, 255:red, 0; green, 0; blue, 0 }  ,draw opacity=1 ]   (49.99,110) -- (49.99,120) ;
\draw  [color={rgb, 255:red, 191; green, 97; blue, 106 }  ,draw opacity=1 ][fill={rgb, 255:red, 191; green, 97; blue, 106 }  ,fill opacity=0.2 ][dash pattern={on 4.5pt off 4.5pt}] (40,75) -- (60,75) -- (60,90) -- (40,90) -- cycle ;
\draw  [color={rgb, 255:red, 0; green, 0; blue, 0 }  ,draw opacity=1 ][fill={rgb, 255:red, 255; green, 255; blue, 255 }  ,fill opacity=1 ] (25,35) -- (75,35) -- (75,50) -- (25,50) -- cycle ;
\draw [color={rgb, 255:red, 0; green, 0; blue, 0 }  ,draw opacity=1 ]   (50,25) -- (50,35) ;
\draw  [color={rgb, 255:red, 0; green, 0; blue, 0 }  ,draw opacity=1 ][fill={rgb, 255:red, 255; green, 255; blue, 255 }  ,fill opacity=1 ] (25,95) -- (75,95) -- (75,110) -- (25,110) -- cycle ;
\draw [color={rgb, 255:red, 0; green, 0; blue, 0 }  ,draw opacity=1 ]   (29.99,50) -- (30,55) ;
\draw [color={rgb, 255:red, 0; green, 0; blue, 0 }  ,draw opacity=1 ]   (50,50) -- (50,75) ;
\draw [color={rgb, 255:red, 0; green, 0; blue, 0 }  ,draw opacity=1 ]   (70,70) -- (70,95) ;
\draw  [draw opacity=0] (80,60) -- (115,60) -- (115,85) -- (80,85) -- cycle ;
\draw  [color={rgb, 255:red, 0; green, 0; blue, 0 }  ,draw opacity=1 ][fill={rgb, 255:red, 255; green, 255; blue, 255 }  ,fill opacity=1 ] (20,55) -- (40,55) -- (40,70) -- (20,70) -- cycle ;
\draw [color={rgb, 255:red, 0; green, 0; blue, 0 }  ,draw opacity=1 ]   (69.99,50) -- (70,55) ;
\draw  [color={rgb, 255:red, 0; green, 0; blue, 0 }  ,draw opacity=1 ][fill={rgb, 255:red, 255; green, 255; blue, 255 }  ,fill opacity=1 ] (60,55) -- (80,55) -- (80,70) -- (60,70) -- cycle ;
\draw [color={rgb, 255:red, 0; green, 0; blue, 0 }  ,draw opacity=1 ]   (30,70) -- (30,95) ;
\draw [color={rgb, 255:red, 0; green, 0; blue, 0 }  ,draw opacity=1 ]   (49.99,90) -- (50,95) ;
\draw [color={rgb, 255:red, 0; green, 0; blue, 0 }  ,draw opacity=1 ]   (144.99,110) -- (144.99,120) ;
\draw  [color={rgb, 255:red, 191; green, 97; blue, 106 }  ,draw opacity=1 ][fill={rgb, 255:red, 191; green, 97; blue, 106 }  ,fill opacity=0.2 ][dash pattern={on 4.5pt off 4.5pt}] (135,55) -- (155,55) -- (155,70) -- (135,70) -- cycle ;
\draw  [color={rgb, 255:red, 0; green, 0; blue, 0 }  ,draw opacity=1 ][fill={rgb, 255:red, 255; green, 255; blue, 255 }  ,fill opacity=1 ] (120,35) -- (170,35) -- (170,50) -- (120,50) -- cycle ;
\draw [color={rgb, 255:red, 0; green, 0; blue, 0 }  ,draw opacity=1 ]   (145,25) -- (145,35) ;
\draw  [color={rgb, 255:red, 0; green, 0; blue, 0 }  ,draw opacity=1 ][fill={rgb, 255:red, 255; green, 255; blue, 255 }  ,fill opacity=1 ] (120,95) -- (170,95) -- (170,110) -- (120,110) -- cycle ;
\draw [color={rgb, 255:red, 0; green, 0; blue, 0 }  ,draw opacity=1 ]   (124.99,90) -- (125,95) ;
\draw [color={rgb, 255:red, 0; green, 0; blue, 0 }  ,draw opacity=1 ]   (145,70) -- (145,95) ;
\draw [color={rgb, 255:red, 0; green, 0; blue, 0 }  ,draw opacity=1 ]   (165,50) -- (165,75) ;
\draw  [color={rgb, 255:red, 0; green, 0; blue, 0 }  ,draw opacity=1 ][fill={rgb, 255:red, 255; green, 255; blue, 255 }  ,fill opacity=1 ] (115,75) -- (135,75) -- (135,90) -- (115,90) -- cycle ;
\draw [color={rgb, 255:red, 0; green, 0; blue, 0 }  ,draw opacity=1 ]   (165,90) -- (165.01,95) ;
\draw  [color={rgb, 255:red, 0; green, 0; blue, 0 }  ,draw opacity=1 ][fill={rgb, 255:red, 255; green, 255; blue, 255 }  ,fill opacity=1 ] (155,75) -- (175,75) -- (175,90) -- (155,90) -- cycle ;
\draw [color={rgb, 255:red, 0; green, 0; blue, 0 }  ,draw opacity=1 ]   (125,50) -- (125,75) ;
\draw [color={rgb, 255:red, 0; green, 0; blue, 0 }  ,draw opacity=1 ]   (144.99,50) -- (145,55) ;

\draw (50,42.5) node  [font=\footnotesize]  {$f$};
\draw (50,102.5) node  [font=\footnotesize]  {$g$};
\draw (97.5,72.5) node    {$=$};
\draw (30,62.5) node  [font=\footnotesize]  {$m$};
\draw (70,62.5) node  [font=\footnotesize]  {$n$};
\draw (145,42.5) node  [font=\footnotesize]  {$f$};
\draw (145,102.5) node  [font=\footnotesize]  {$g$};
\draw (125,82.5) node  [font=\footnotesize]  {$m$};
\draw (165,82.5) node  [font=\footnotesize]  {$n$};

\end{tikzpicture}
     \caption{Dinaturality for monoidal contexts.}
    \label{fig:contextprotensors}
\end{figure}

\begin{proposition}
  \label{prop:contextCategory}
  \MonoidalContexts{} form a category.
\end{proposition}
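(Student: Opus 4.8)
The plan is to give identities and a composition rule directly on representatives, and then verify that both descend to the dinatural quotient and satisfy the category axioms. An element of $\MC{\biobj{A}{B}}{\biobj{X}{Y}}$ is a class $[M,N,f,g]$ with $f \in ℂ(A; M \otimes X \otimes N)$ and $g \in ℂ(M \otimes Y \otimes N; B)$, taken up to the dinaturality of \Cref{fig:contextprotensors}. I would take the identity on $\biobj{A}{B}$ to be the class with $M = N = I$ and $f,g$ the unitor isomorphisms, written $(\im_A ⨾ (\im_I \otimes \blacksquare \otimes \im_I) ⨾ \im_B)$; intuitively, the context that fills its hole directly.

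For composition, given $[M,N,f,g] \in \MC{\biobj{A}{B}}{\biobj{X}{Y}}$ and $[P,Q,f',g'] \in \MC{\biobj{X}{Y}}{\biobj{X'}{Y'}}$, I would nest the second into the hole of the first, taking the new side objects to be $M \otimes P$ and $Q \otimes N$. Explicitly, the composite is the class of
$$\tilde f = f ⨾ (\im_M \otimes f' \otimes \im_N), \qquad \tilde g = (\im_M \otimes g' \otimes \im_N) ⨾ g,$$
reassociated so that $\tilde f \in ℂ(A; (M \otimes P) \otimes X' \otimes (Q \otimes N))$ and $\tilde g \in ℂ((M \otimes P) \otimes Y' \otimes (Q \otimes N); B)$, landing in $\MC{\biobj{A}{B}}{\biobj{X'}{Y'}}$.

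The substantive step is well-definedness: the composite must be independent of the chosen representatives. A dinaturality move on the first context slides a morphism $m \otimes \im \otimes n$ from $f$ across to $g$; after nesting, this becomes the move sliding $(m \otimes \im_P) \otimes \im \otimes (\im_Q \otimes n)$ across the composite, which is exactly a dinaturality move in the new side objects $M \otimes P$ and $Q \otimes N$, so the composite class is unchanged. A dinaturality move on the second context slides a morphism out of $P,Q$ and is absorbed in the same way. This is precisely Fubini for coends, and I would phrase the argument in those terms. Associativity and the unit laws then reduce to the associators and unitors of $(ℂ,\otimes,I)$: nesting three contexts groups the side objects as $M \otimes (P \otimes R)$ versus $(M \otimes P) \otimes R$, with the associator witnessing equality up to dinaturality, while the identity context contributes side objects $I$, absorbed by the unitors.

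The cleanest way to organize all of this is to observe that, by definition, $\MC{\biobj{A}{B}}{\biobj{X}{Y}} = 𝓣ℂ(\biobj{A}{B}; N \otimes \biobj{X}{Y} \otimes N)$, i.e. the hom-profunctor of the normalization $𝓝(𝓣ℂ)$ of the produoidal category of spliced monoidal arrows (\Cref{def:monoidalSplice}, \Cref{prop:spliceIsProduoidal}). By \Cref{th:normalizationProduoidal} this is the Kleisli category of a promonad and is in particular a genuine category; the explicit identity and composition above are just the unit and Kleisli extension of that promonad unfolded. The main obstacle is therefore confined to the associator bookkeeping and to checking that composition respects dinaturality in both arguments simultaneously; once this is read through coend calculus, it is routine.
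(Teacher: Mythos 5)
Your construction is the paper's: the same nesting composition with side objects $M \otimes P$ and $Q \otimes N$, the same identity $\im_A ⨾ \blacksquare ⨾ \im_B$, and the same reduction of associativity and unitality to the monoidal structure of $ℂ$ (the paper carries out these two checks by direct computation on representatives in \Cref{ax:prop:contextCategory}). Two points of difference are worth recording. First, you make the well-definedness of composition under dinaturality explicit — sliding $m \otimes \im \otimes n$ across the first factor becomes sliding $(m \otimes \im_P) \otimes \im \otimes (\im_Q \otimes n)$ across the composite — which the paper leaves implicit in the coend formalism; your check is correct, though calling it ``Fubini'' is loose: what is really being used is that the composition map is dinatural in the nested hole, hence factors through the quotient by the universal property of the coend. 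Second, your ``cleanest way'' via the identification $\MC{\biobj{A}{B}}{\biobj{X}{Y}} \cong 𝓣ℂ(\biobj{A}{B}; N \otimes \biobj{X}{Y} \otimes N)$ and \Cref{th:normalizationProduoidal} is exactly the route the paper reserves for \Cref{th:monoidalContextsAreANormalization}, proved only after the explicit construction. It is a legitimate alternative, but you must phrase it carefully to avoid circularity: you cannot cite \Cref{th:monoidalContextsAreANormalization} itself, since as stated it is an isomorphism of normal produoidal categories and so presupposes that \monoidalContexts{} already form one. What you can do — and what your sketch amounts to — is establish the hom-set bijection by a Yoneda reduction (which needs no category structure on either side), transport the Kleisli category structure of the promonad $𝓣ℂ(•; N \otimes • \otimes N)$ across it, and then verify that the transported unit and extension unfold to your explicit identity and composition. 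That last unfolding is a genuine (if small) computation, so the promonad route does not entirely escape the bookkeeping; what it buys is that associativity and unitality come for free from \Cref{th:normalizationProduoidal} rather than from a hand computation.
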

\begin{proof}[Proof]
  We define composition of \monoidalContexts{} by the following formula (illustrated in \Cref{fig:monoidalcontexts}, iii).
  \begin{align*}
    & \nmc{f}{M}{N}{g} ≺
    \nmc{h}{M'}{N'}{k} & = \\
    & f ⨾ (\im_M ⊗ h ⊗ \im_N) ⨾ (\im_{M ⊗ M'} ⊗ \blacksquare ⊗ \im_{N ⊗ N'}) \\
    & \quad ⨾ (\im_M ⊗ k ⊗ \im_N) ⨾ g
  \end{align*}
  For each pair of objects, we define the identity \monoidalContext{} as $\im_A ⨾ \blacksquare ⨾ \im_B$ (illustrated in \Cref{fig:monoidalcontexts}, ii).
  We check that this composition is associative and unital in the Appendix, \Cref{ax:prop:contextCategory}.
\end{proof}

\begin{figure}[ht]
  \centering

\tikzset{every picture/.style={line width=0.75pt}} %

\begin{tikzpicture}[x=0.75pt,y=0.75pt,yscale=-1,xscale=1]
\draw  [color={rgb, 255:red, 0; green, 0; blue, 0 }  ,draw opacity=1 ][fill={rgb, 255:red, 255; green, 255; blue, 255 }  ,fill opacity=1 ] (100,45) -- (140,45) -- (140,60) -- (100,60) -- cycle ;
\draw  [draw opacity=0] (345,65) -- (365,65) -- (365,90) -- (345,90) -- cycle ;
\draw [color={rgb, 255:red, 0; green, 0; blue, 0 }  ,draw opacity=1 ]   (120,110) -- (120,120) ;
\draw  [color={rgb, 255:red, 191; green, 97; blue, 106 }  ,draw opacity=1 ][fill={rgb, 255:red, 191; green, 97; blue, 106 }  ,fill opacity=0.2 ][dash pattern={on 4.5pt off 4.5pt}] (110,70) -- (130,70) -- (130,85) -- (110,85) -- cycle ;
\draw [color={rgb, 255:red, 0; green, 0; blue, 0 }  ,draw opacity=1 ]   (120,35) -- (120,45) ;
\draw  [color={rgb, 255:red, 0; green, 0; blue, 0 }  ,draw opacity=1 ][fill={rgb, 255:red, 255; green, 255; blue, 255 }  ,fill opacity=1 ] (100,95) -- (140,95) -- (140,110) -- (100,110) -- cycle ;
\draw [color={rgb, 255:red, 0; green, 0; blue, 0 }  ,draw opacity=1 ]   (120,60) -- (120,70) ;
\draw [color={rgb, 255:red, 0; green, 0; blue, 0 }  ,draw opacity=1 ]   (120,85) -- (120,95) ;
\draw [color={rgb, 255:red, 0; green, 0; blue, 0 }  ,draw opacity=1 ]   (135,60) -- (135,95) ;
\draw [color={rgb, 255:red, 0; green, 0; blue, 0 }  ,draw opacity=1 ]   (105,60) -- (105,95) ;
\draw [color={rgb, 255:red, 0; green, 0; blue, 0 }  ,draw opacity=1 ]   (330,60) -- (330,70) ;
\draw [color={rgb, 255:red, 0; green, 0; blue, 0 }  ,draw opacity=1 ]   (330,85) -- (330,95) ;
\draw  [draw opacity=0] (140,65) -- (160,65) -- (160,90) -- (140,90) -- cycle ;
\draw  [color={rgb, 255:red, 191; green, 97; blue, 106 }  ,draw opacity=1 ][fill={rgb, 255:red, 191; green, 97; blue, 106 }  ,fill opacity=0.2 ][dash pattern={on 4.5pt off 4.5pt}] (175,50) -- (195,50) -- (195,65) -- (175,65) -- cycle ;
\draw  [color={rgb, 255:red, 0; green, 0; blue, 0 }  ,draw opacity=1 ][fill={rgb, 255:red, 255; green, 255; blue, 255 }  ,fill opacity=1 ] (165,30) -- (205,30) -- (205,45) -- (165,45) -- cycle ;
\draw [color={rgb, 255:red, 0; green, 0; blue, 0 }  ,draw opacity=1 ]   (185,20) -- (185,30) ;
\draw [color={rgb, 255:red, 0; green, 0; blue, 0 }  ,draw opacity=1 ]   (170,45) -- (170,70) ;
\draw  [color={rgb, 255:red, 0; green, 0; blue, 0 }  ,draw opacity=1 ][fill={rgb, 255:red, 255; green, 255; blue, 255 }  ,fill opacity=1 ] (165,70) -- (205,70) -- (205,85) -- (165,85) -- cycle ;
\draw [color={rgb, 255:red, 0; green, 0; blue, 0 }  ,draw opacity=1 ]   (185,45) -- (185,50) ;
\draw [color={rgb, 255:red, 0; green, 0; blue, 0 }  ,draw opacity=1 ]   (185,65) -- (185,70) ;
\draw  [color={rgb, 255:red, 0; green, 0; blue, 0 }  ,draw opacity=1 ][fill={rgb, 255:red, 255; green, 255; blue, 255 }  ,fill opacity=1 ] (165,109.5) -- (205,109.5) -- (205,124.5) -- (165,124.5) -- cycle ;
\draw [color={rgb, 255:red, 0; green, 0; blue, 0 }  ,draw opacity=1 ]   (185,125) -- (185,135) ;
\draw [color={rgb, 255:red, 0; green, 0; blue, 0 }  ,draw opacity=1 ]   (200,45) -- (200,70) ;
\draw  [color={rgb, 255:red, 191; green, 97; blue, 106 }  ,draw opacity=1 ][fill={rgb, 255:red, 191; green, 97; blue, 106 }  ,fill opacity=0.2 ][dash pattern={on 4.5pt off 4.5pt}] (175,90) -- (195,90) -- (195,105) -- (175,105) -- cycle ;
\draw [color={rgb, 255:red, 0; green, 0; blue, 0 }  ,draw opacity=1 ]   (170,85) -- (170,110) ;
\draw [color={rgb, 255:red, 0; green, 0; blue, 0 }  ,draw opacity=1 ]   (184.99,85) -- (185,90) ;
\draw [color={rgb, 255:red, 0; green, 0; blue, 0 }  ,draw opacity=1 ]   (185,105) -- (185,110) ;
\draw [color={rgb, 255:red, 0; green, 0; blue, 0 }  ,draw opacity=1 ]   (200,85) -- (200,110) ;
\draw  [draw opacity=0] (205,65) -- (225,65) -- (225,90) -- (205,90) -- cycle ;
\draw  [color={rgb, 255:red, 0; green, 0; blue, 0 }  ,draw opacity=1 ][fill={rgb, 255:red, 255; green, 255; blue, 255 }  ,fill opacity=1 ] (315,70) -- (345,70) -- (345,85) -- (315,85) -- cycle ;
\draw [color={rgb, 255:red, 0; green, 0; blue, 0 }  ,draw opacity=1 ]   (260.01,110) -- (260.01,120) ;
\draw  [color={rgb, 255:red, 191; green, 97; blue, 106 }  ,draw opacity=1 ][fill={rgb, 255:red, 191; green, 97; blue, 106 }  ,fill opacity=0.2 ][dash pattern={on 4.5pt off 4.5pt}] (235.01,70) -- (255.01,70) -- (255.01,85) -- (235.01,85) -- cycle ;
\draw  [color={rgb, 255:red, 0; green, 0; blue, 0 }  ,draw opacity=1 ][fill={rgb, 255:red, 255; green, 255; blue, 255 }  ,fill opacity=1 ] (230,45) -- (290.01,45) -- (290.01,60) -- (230,60) -- cycle ;
\draw [color={rgb, 255:red, 0; green, 0; blue, 0 }  ,draw opacity=1 ]   (260.01,35) -- (260.01,45) ;
\draw [color={rgb, 255:red, 0; green, 0; blue, 0 }  ,draw opacity=1 ]   (225,75) -- (225,80) ;
\draw  [color={rgb, 255:red, 0; green, 0; blue, 0 }  ,draw opacity=1 ][fill={rgb, 255:red, 255; green, 255; blue, 255 }  ,fill opacity=1 ] (230,95) -- (290.01,95) -- (290.01,110) -- (230,110) -- cycle ;
\draw [color={rgb, 255:red, 0; green, 0; blue, 0 }  ,draw opacity=1 ]   (245,60) -- (245.01,70) ;
\draw [color={rgb, 255:red, 0; green, 0; blue, 0 }  ,draw opacity=1 ]   (245.01,85) -- (245.01,95) ;
\draw [color={rgb, 255:red, 0; green, 0; blue, 0 }  ,draw opacity=1 ]   (235,60) .. controls (234.6,69.2) and (225.2,65.8) .. (225,75) ;
\draw [color={rgb, 255:red, 0; green, 0; blue, 0 }  ,draw opacity=1 ]   (285.01,60) .. controls (284.61,69.2) and (295.21,65.8) .. (295.01,75) ;
\draw [color={rgb, 255:red, 0; green, 0; blue, 0 }  ,draw opacity=1 ]   (285.01,95) .. controls (284.61,85.8) and (295.21,89.2) .. (295.01,80) ;
\draw [color={rgb, 255:red, 0; green, 0; blue, 0 }  ,draw opacity=1 ]   (235,95) .. controls (234.6,85.8) and (225.2,89.2) .. (225,80) ;
\draw [color={rgb, 255:red, 0; green, 0; blue, 0 }  ,draw opacity=1 ]   (260.01,60) -- (260.02,95) ;
\draw  [color={rgb, 255:red, 191; green, 97; blue, 106 }  ,draw opacity=1 ][fill={rgb, 255:red, 191; green, 97; blue, 106 }  ,fill opacity=0.2 ][dash pattern={on 4.5pt off 4.5pt}] (265.01,70) -- (285.01,70) -- (285.01,85) -- (265.01,85) -- cycle ;
\draw [color={rgb, 255:red, 0; green, 0; blue, 0 }  ,draw opacity=1 ]   (275,60) -- (275.01,70) ;
\draw [color={rgb, 255:red, 0; green, 0; blue, 0 }  ,draw opacity=1 ]   (275.01,85) -- (275.02,95) ;
\draw [color={rgb, 255:red, 0; green, 0; blue, 0 }  ,draw opacity=1 ]   (295,75) -- (295.01,80) ;
\draw  [draw opacity=0] (295,65) -- (315,65) -- (315,90) -- (295,90) -- cycle ;

\draw (355,77.5) node    {$;$};
\draw (120,52.5) node  [font=\footnotesize]  {$f$};
\draw (120,102.5) node  [font=\footnotesize]  {$g$};
\draw (150.01,77.5) node    {$;$};
\draw (185,37.5) node  [font=\footnotesize]  {$f$};
\draw (185,77.5) node  [font=\footnotesize]  {$g$};
\draw (185,117) node  [font=\footnotesize]  {$h$};
\draw (215,77.5) node    {$;$};
\draw (330,77.5) node  [font=\footnotesize]  {$f$};
\draw (260.01,52.5) node  [font=\footnotesize]  {$f$};
\draw (260.01,102.5) node  [font=\footnotesize]  {$g$};
\draw (305,77.5) node    {$;$};

\end{tikzpicture}
   \caption{Morphisms, sequential and parallel splits, and units of the splice monoidal arrow produoidal category.}
  \label{fig:monoidalsplicecomponents}
\end{figure}
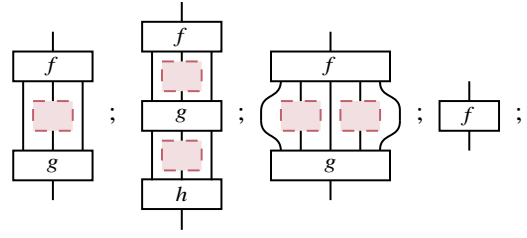

\begin{remark}
  Even when we introduce $(\im \otimes \blacksquare \otimes \im)$ as a piece of suggestive notation, we can still write $(g \otimes \blacksquare \otimes h)$ unambiguously, because of \dinaturality{},
  \begin{align*}
    (g \otimes \im \otimes h) ⨾
    (\im \otimes \blacksquare \otimes \im) & =
    (\im \otimes \blacksquare \otimes \im) ⨾
    (g \otimes \im \otimes h). & \\
  \end{align*}
\end{remark}

\subsection{The Normal Produoidal Algebra of Monoidal Contexts}

Let us endow \monoidalContexts{} with their normal produoidal structure.

\begin{definition}
  \label{defn:contextSeqProtensor}
  The category of \monoidalContexts{}, $\Mctx{ℂ}$, has as objects pairs of objects of $ℂ$.
  We use the following \profunctors{} to define \sequentialSplits{}, \parallelSplits{}, units and morphisms.
  \begin{align*}
   \MC{\biobj{A}{B}}{\biobj{X}{Y}} =\
  & ℂ(A;•_1⊗X⊗•_2) ⋄ ℂ(•_1⊗Y⊗•_2;B); \\
  \MC{\biobj{A}{B}}{\biobj{X}{Y} \triangleleft \biobj{X'}{Y'}} =\ &
      ℂ(A ; •_1 ⊗ X ⊗ •_2)\ \diamond \\ &
      ℂ(•_1 ⊗ Y ⊗ •_2; •_3 ⊗ X' ⊗ •_4)\ \diamond \\ &
      ℂ(•_3 ⊗ Y' ⊗ •_4 ; B); \\
  \MC{\biobj{A}{B}}{\biobj{X}{Y} ⊗ \biobj{X'}{Y'}} =\ &
      ℂ(A ; •_1 ⊗ X ⊗ •_2 ⊗ X' ⊗ •_3)\ \diamond \\ &
      ℂ(•_1 ⊗ Y ⊗ •_2 ⊗ Y' ⊗ •_3; B); \\
  \MC{\biobj{A}{B}}{N} =\ &ℂ(A ; B).
  \end{align*}
\end{definition}

In other words, \sequentialSplits{} are triples of arrows $f \colon A \to M \otimes X \otimes N$, $g \colon M \otimes Y \otimes N \to M' \otimes X' \otimes N'$ and $h \colon M' \otimes Y' \otimes N' \to B$, quotiented by \dinaturality{} of $M,M',N,N'$.
\ParallelSplits{} are pairs of arrows $f \colon A \to M \otimes X \otimes N \otimes X' \otimes O$ and $g \colon M \otimes Y \otimes N \otimes Y' \otimes O → B$, quotiented by \dinaturality{} of $M,N,O$. Units are simply arrows $f \colon A \to B$.
In summary, we have
\begin{align*}
  & \mbox{morphisms, } && f ⨾ (\im \otimes \blacksquare \otimes \im) ⨾ g \\
  & \mbox{\sequentialSplits{}, }
  && f ⨾ (\im \otimes \blacksquare \otimes \im) ⨾ g ⨾ (\im \otimes \blacksquare \otimes \im) ⨾ h; \\
  & \mbox{\parallelSplits{}, } && f ⨾ (\im \otimes \blacksquare \otimes \im ⊗ \blacksquare \otimes \im) ⨾ g; \\
  & \mbox{\sequentialUnits{}, } && f.
\end{align*}

  \Dinaturality{} for \sequentialSplits{} and \parallelSplits{} is depicted the Appendix, \Cref{fig:sequentialprotensor,fig:parallelprotensor}.

\begin{proposition}
  \label{prop:MonoidalContextProtensor}
  The category of monoidal contexts forms a normal \produoidalCategory{} with its units, sequential and parallel splits.
\end{proposition}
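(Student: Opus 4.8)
The plan is to identify the category of \monoidalContexts{} with the normalization of the \produoidalCategory{} of \splicedMonoidalArrows{}, i.e. to exhibit an isomorphism $\Mctx{ℂ} \cong 𝓝(𝓣ℂ)$, and then to appeal to the general theory. Indeed, $𝓣ℂ$ is a \produoidalCategory{} by \Cref{prop:spliceIsProduoidal}, and \Cref{th:normalizationProduoidal} shows that the normalization $𝓝𝕍$ of any \produoidalCategory{} is a \normalProduoidalCategory{}. Hence, once the identification is established, \monoidalContexts{} inherit \emph{for free} all of the required structure --- sequential and parallel splits, units, associators, unitors, laxators, and the isomorphism $\varphi_0$ witnessing normality --- since these all descend from the Kleisli construction of the normalization \promonad{}. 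The substance of the proof is therefore purely the verification that the explicit profunctors of \Cref{defn:contextSeqProtensor} coincide with the normalization formulas of \Cref{th:normalizationProduoidal} instantiated at $𝕍 = 𝓣ℂ$.

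I would carry out this verification one profunctor at a time, the morphism case being representative. By definition $𝓝(𝓣ℂ)(\biobj{A}{B};\biobj{X}{Y}) = 𝓣ℂ(\biobj{A}{B}; N ⊗ \biobj{X}{Y} ⊗ N)$. Expanding the nesting of virtual structures, using that the parallel split of spliced monoidal arrows is representable (as established at the end of \Cref{sec:parallelContext}) and reducing the central hole by coYoneda, this becomes the coend
$$\int^{L,L',R,R'} ℂ(A; L⊗X⊗R) × ℂ(L;L') × ℂ(L'⊗Y⊗R';B) × ℂ(R;R'),$$
where $ℂ(L;L')$ and $ℂ(R;R')$ are the two \sequentialUnits{} filling the outer slots. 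Two further applications of the density formula, integrating $L'$ against $ℂ(L;L')$ and $R'$ against $ℂ(R;R')$, collapse this to $ℂ(A;•_1⊗X⊗•_2)\diamond ℂ(•_1⊗Y⊗•_2;B)$, which is exactly $\MC{\biobj{A}{B}}{\biobj{X}{Y}}$ of \Cref{def:monoidalcontext}. The same pattern handles the rest: the five-fold parallel expression $𝓣ℂ(\,\cdot\,; N ⊗ \biobj{X}{Y} ⊗ N ⊗ \biobj{X'}{Y'} ⊗ N)$ reduces to the stated \parallelSplits{}, the mixed expression $𝓣ℂ(\,\cdot\,; (N ⊗ \biobj{X}{Y} ⊗ N)◁(N ⊗ \biobj{X'}{Y'} ⊗ N))$ reduces to the stated \sequentialSplits{}, and both $𝓝(𝓣ℂ)(\,\cdot\,;I_N)$ and $𝓝(𝓣ℂ)(\,\cdot\,;N_N)$ reduce to $ℂ(A;B)$, so that the two units are identified as normality demands.

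I expect the main obstacle to be bookkeeping rather than conceptual: one must track each dinaturality variable through the nested profunctor compositions and check that the chain of coYoneda reductions is natural and compatible with the profunctor actions, so that $\Mctx{ℂ} \cong 𝓝(𝓣ℂ)$ is an isomorphism of \produoidalCategories{} and not merely a levelwise bijection of sets. The most delicate point is the interaction between the sequential unit $N$ and the representable parallel tensor $(⊗)$ when a unit factor sits between two parallel slots, since this is precisely where the reduction of the padding slots must be shown to agree with the laxators transported along the normalization; verifying this compatibility, together with the coherence conditions already discharged in \Cref{th:normalizationProduoidal}, completes the argument.
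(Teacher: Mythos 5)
Your argument is sound, but it takes a genuinely different route from the paper's. The paper proves \Cref{prop:MonoidalContextProtensor} \emph{directly}: it constructs each sequential and parallel associator and unitor, and each laxator, by explicit coend-calculus derivations on the profunctors of \Cref{defn:contextSeqProtensor} (a sequence of lemmas, one per structure map), with coherence following because every map is assembled from Yoneda isomorphisms and compositions. You instead prove the identification $𝓝𝓣ℂ \cong \Mctx{ℂ}$ first and transport the structure along it from \Cref{th:normalizationProduoidal}; in the paper that identification is deferred to \Cref{th:monoidalContextsAreANormalization}, whose proof is exactly the coYoneda reduction you sketch (reduce the padded unit slots $ℂ(U;V)$, $ℂ(U';V')$ against the outer objects), so you are in effect establishing the later theorem and deriving the proposition as a corollary. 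There is no circularity: \Cref{prop:spliceIsProduoidal} and the normalization theorems of \Cref{sec:normalization} are proved independently of monoidal contexts. What your route buys is economy — all coherence is discharged once, in the general normalization theorem — at the cost of having to check that the transported composition, profunctor actions, and splits agree on the nose with the explicit formulas of \Cref{prop:contextCategory} and \Cref{defn:contextSeqProtensor}, which you correctly flag as the delicate point; what the paper's route buys is explicit closed-form descriptions of every structure map (used later, e.g.\ in the protocol-analysis computations), obtained as a byproduct of the direct construction.
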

\begin{proof}
  See Appendix, \Cref{ax:prop:MonoidalContextProtensor}.
\end{proof}

\begin{theorem}
  \label{th:monoidalContextsAreANormalization}
  \MonoidalContexts{} are the free normalization of the cofree \produoidalCategory{} over a category.
  In other words, \monoidalContexts{} are the normalization of \splicedMonoidalArrows{},
  $𝓝𝓣ℂ \cong \Mctx{ℂ}$.
\end{theorem}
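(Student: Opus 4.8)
The plan is to compute $\mathcal{N}\mathcal{T}\mathbb{C}$ explicitly through the normalization formulas of \Cref{th:normalizationProduoidal} and to match the result, profunctor by profunctor, against the defining profunctors of $\Mctx{\mathbb{C}}$ in \Cref{defn:contextSeqProtensor}. Applying \Cref{th:normalizationProduoidal} with $\mathbb{V} = \mathcal{T}\mathbb{C}$ gives, for morphisms, $\mathcal{N}\mathcal{T}\mathbb{C}(\biobj{A}{B}; \biobj{X}{Y}) = \mathcal{T}\mathbb{C}(\biobj{A}{B}; N \otimes \biobj{X}{Y} \otimes N)$, where $N$ is the sequential unit of $\mathcal{T}\mathbb{C}$ and the outer expression is a nested parallel profunctor composite. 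The crucial tool is that the parallel tensor of spliced monoidal arrows is representable, as established in the previous subsection: unfolding $N \otimes \biobj{X}{Y} \otimes N$ as a coend over two auxiliary objects $\biobj{P}{Q}$ and $\biobj{R}{S}$, and using $\mathcal{T}\mathbb{C}(\biobj{P}{Q}; N) = \mathbb{C}(P; Q)$, this becomes $\int^{P,Q,R,S} \mathbb{C}(A; P \otimes X \otimes R) \times \mathbb{C}(Q \otimes Y \otimes S; B) \times \mathbb{C}(P; Q) \times \mathbb{C}(R; S)$.

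First I would integrate out the unit variables by the co-Yoneda (density) lemma. The factor $\mathbb{C}(P; Q)$ lets us substitute $Q \mapsto P$ in the contravariant occurrence of $Q$, and $\mathbb{C}(R; S)$ lets us substitute $S \mapsto R$; geometrically this sews the two halves of the morphism back together along the erstwhile unit boundaries. The expression collapses to $\int^{P,R} \mathbb{C}(A; P \otimes X \otimes R) \times \mathbb{C}(P \otimes Y \otimes R; B)$, which is exactly $\MC{\biobj{A}{B}}{\biobj{X}{Y}}$ with $P, R$ playing the role of the dinaturally quotiented objects $M, N$. The identical calculation, now carrying one more respectively one fewer auxiliary unit, reproduces the sequential split profunctor $\MC{\biobj{A}{B}}{\biobj{X}{Y} \triangleleft \biobj{X'}{Y'}}$ out of $(N \otimes \biobj{X}{Y} \otimes N) \triangleleft (N \otimes \biobj{X'}{Y'} \otimes N)$ and the parallel split profunctor $\MC{\biobj{A}{B}}{\biobj{X}{Y} \otimes \biobj{X'}{Y'}}$ out of $N \otimes \biobj{X}{Y} \otimes N \otimes \biobj{X'}{Y'} \otimes N$; both normalized units reduce directly to $\mathbb{C}(A; B)$, matching $\MC{\biobj{A}{B}}{N}$ and simultaneously witnessing normality.

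Having matched the underlying profunctors, the remaining task is to upgrade this family of natural isomorphisms into an isomorphism of normal produoidal categories. I must check that the density reductions are compatible with Kleisli composition and with the associators, unitors, and laxators carried by \Cref{th:normalizationProduoidal}. Since every structural operation on both sides is defined by profunctor composition, i.e.\ by filling holes, and since the co-Yoneda isomorphisms are natural in all arguments, this compatibility reduces to routine coend bookkeeping: the unit of the normalization promonad corresponds to the identity context $\im_A \otimes \blacksquare \otimes \im_B$, and the promonad multiplication corresponds to the sewing used to define composition of monoidal contexts in \Cref{prop:contextCategory}.

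I expect the main obstacle to be precisely this last step, rather than any single profunctor identification: the individual isomorphisms are forced by representability and the density lemma, but keeping the many coend variables and their dinaturality bookkeeping aligned, so that all the structure-preservation diagrams commute across the five families at once, is where the genuine effort lies.
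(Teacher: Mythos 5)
Your proposal is correct and follows essentially the same route as the paper's own proof: unfold $𝓝𝓣ℂ(\biobj{A}{B};\biobj{X}{Y}) = 𝓣ℂ(\biobj{A}{B}; N \otimes \biobj{X}{Y} \otimes N)$ as a coend over two auxiliary objects, use $𝓣ℂ(\biobj{P}{Q};N) = ℂ(P;Q)$, and collapse the unit variables by co-Yoneda to recover the defining profunctors of $\Mctx{ℂ}$, treating the remaining profunctors analogously. If anything, you are more explicit than the paper about the final step of checking that these isomorphisms assemble into an isomorphism of normal produoidal categories, which the paper dispatches with ``the rest of the profunctors follow a similar reasoning.''
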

\begin{proof}
  See Appendix, \Cref{ax:th:monoidalContextsAreANormalization}.
\end{proof}

\section{Monoidal Lenses}
\defining{linksymmetricmonoidalcontext}{}
\label{sec:monoidallenses}

\MonoidalLenses{} are \emph{symmetric} \monoidalContexts{}.
Again, the category of monoidal lenses has a rich algebraic structure; and again, most of this structure exists only virtually in terms of profunctors. In this case, though, the monoidal tensor \emph{does} indeed exist: contrary to \monoidalContexts{}, \monoidalLenses{} form also a monoidal category.

This is perhaps why applications of \monoidalLenses{} have grown popular in recent years~\cite{riley2018categories}, with applications in decision theory \cite{ghani:compositionalgametheory2018}, supervised learning \cite{cruttwell22:learning,fong19:lenses} and most notably in functional data accessing \cite{kmett12:lenslibrary,pickering17:profunctoroptics,boisseau2018you,ClarkeRoman20:ProfunctorOptics}.
The \promonoidal{} structure of optics was ignored, even when, after now identifying for the first time its relation to the monoidal structure of optics, we argue that it could be potentially useful in these applications: e.g. in multi-stage decision problems, or in multi-stage data accessors.

This section explicitly constructs the normal symmetric \produoidalCategory{} of \emph{monoidal lenses}.
We describe it for the first time by a universal property: it is the free symmetric normalization of the cofree \produoidalCategory{}.

\subsection{The Category of Monoidal Lenses}

A \monoidalLens{} of type $𝓛ℂ(\biobj{A}{B}, \biobj{X}{Y})$ represents a process in a symmetric \monoidalCategory{} with a hole admitting a process from $X$ to $Y$.

\begin{figure}[ht]
  \centering

\tikzset{every picture/.style={line width=0.75pt}} %

\begin{tikzpicture}[x=0.75pt,y=0.75pt,yscale=-1,xscale=1]
\draw    (200,120) .. controls (200.33,100.17) and (225.43,103) .. (225,90) ;
\draw  [color={rgb, 255:red, 0; green, 0; blue, 0 }  ,draw opacity=1 ][fill={rgb, 255:red, 255; green, 255; blue, 255 }  ,fill opacity=1 ] (10,30) -- (50,30) -- (50,45) -- (10,45) -- cycle ;
\draw [color={rgb, 255:red, 0; green, 0; blue, 0 }  ,draw opacity=1 ]   (20,10) -- (20,30) ;
\draw [color={rgb, 255:red, 0; green, 0; blue, 0 }  ,draw opacity=1 ]   (20,45) -- (20,120) ;
\draw  [color={rgb, 255:red, 0; green, 0; blue, 0 }  ,draw opacity=1 ][fill={rgb, 255:red, 255; green, 255; blue, 255 }  ,fill opacity=1 ] (10,120) -- (50,120) -- (50,135) -- (10,135) -- cycle ;
\draw [color={rgb, 255:red, 0; green, 0; blue, 0 }  ,draw opacity=1 ]   (20,135) -- (20,155) ;
\draw    (40,45) .. controls (40.33,64.83) and (70,55.5) .. (70,75) ;
\draw  [draw opacity=0][fill={rgb, 255:red, 255; green, 255; blue, 255 }  ,fill opacity=1 ] (40,75) .. controls (40,54.63) and (69.75,64.88) .. (70,45) -- (75,45) -- (75,80) -- (40.13,80) .. controls (40.04,78.38) and (40,76.71) .. (40,75) -- cycle ;
\draw [color={rgb, 255:red, 208; green, 2; blue, 27 }  ,draw opacity=1 ] [dash pattern={on 4.5pt off 4.5pt}]  (70,45) .. controls (70.33,64.83) and (40,55.5) .. (40,75) ;
\draw    (70,95) .. controls (70.33,114.83) and (40,100.5) .. (40,120) ;
\draw  [draw opacity=0][fill={rgb, 255:red, 255; green, 255; blue, 255 }  ,fill opacity=1 ] (40,95) .. controls (40,115.38) and (69.75,105.13) .. (70,125) -- (75,125) -- (75,90) -- (40.13,90) .. controls (40.04,91.62) and (40,93.29) .. (40,95) -- cycle ;
\draw [color={rgb, 255:red, 208; green, 2; blue, 27 }  ,draw opacity=1 ] [dash pattern={on 4.5pt off 4.5pt}]  (40,95) .. controls (40.33,114.83) and (70,105.5) .. (70,125) ;
\draw  [draw opacity=0] (75,70) -- (95,70) -- (95,100) -- (75,100) -- cycle ;
\draw  [color={rgb, 255:red, 0; green, 0; blue, 0 }  ,draw opacity=1 ][fill={rgb, 255:red, 255; green, 255; blue, 255 }  ,fill opacity=1 ] (180,30) -- (220,30) -- (220,45) -- (180,45) -- cycle ;
\draw [color={rgb, 255:red, 0; green, 0; blue, 0 }  ,draw opacity=1 ]   (190,10) -- (190,30) ;
\draw [color={rgb, 255:red, 0; green, 0; blue, 0 }  ,draw opacity=1 ]   (190,45) -- (190,120) ;
\draw  [color={rgb, 255:red, 0; green, 0; blue, 0 }  ,draw opacity=1 ][fill={rgb, 255:red, 255; green, 255; blue, 255 }  ,fill opacity=1 ] (180,120) -- (220,120) -- (220,135) -- (180,135) -- cycle ;
\draw [color={rgb, 255:red, 0; green, 0; blue, 0 }  ,draw opacity=1 ]   (190,135) -- (190,155) ;
\draw  [draw opacity=0][fill={rgb, 255:red, 255; green, 255; blue, 255 }  ,fill opacity=1 ] (200,95) .. controls (200,115.38) and (229.75,105.13) .. (230,125) -- (235,125) -- (235,90) -- (200.13,90) .. controls (200.04,91.62) and (200,93.29) .. (200,95) -- cycle ;
\draw [color={rgb, 255:red, 208; green, 2; blue, 27 }  ,draw opacity=1 ] [dash pattern={on 4.5pt off 4.5pt}]  (200,95) .. controls (200.33,114.83) and (230,105.5) .. (230,125) ;
\draw  [color={rgb, 255:red, 0; green, 0; blue, 0 }  ,draw opacity=1 ][fill={rgb, 255:red, 255; green, 255; blue, 255 }  ,fill opacity=1 ] (95,20) -- (135,20) -- (135,35) -- (95,35) -- cycle ;
\draw [color={rgb, 255:red, 0; green, 0; blue, 0 }  ,draw opacity=1 ]   (105,10) -- (105,20) ;
\draw [color={rgb, 255:red, 0; green, 0; blue, 0 }  ,draw opacity=1 ]   (105,35) -- (105,130) ;
\draw  [color={rgb, 255:red, 0; green, 0; blue, 0 }  ,draw opacity=1 ][fill={rgb, 255:red, 255; green, 255; blue, 255 }  ,fill opacity=1 ] (95,130) -- (135,130) -- (135,145) -- (95,145) -- cycle ;
\draw [color={rgb, 255:red, 0; green, 0; blue, 0 }  ,draw opacity=1 ]   (105,145) -- (105,155) ;
\draw  [draw opacity=0] (250,70) -- (270,70) -- (270,100) -- (250,100) -- cycle ;
\draw  [color={rgb, 255:red, 0; green, 0; blue, 0 }  ,draw opacity=1 ][fill={rgb, 255:red, 255; green, 255; blue, 255 }  ,fill opacity=1 ] (95,75) -- (135,75) -- (135,90) -- (95,90) -- cycle ;
\draw    (125,35) .. controls (125.33,46.33) and (145,43.86) .. (145,55) ;
\draw  [draw opacity=0][fill={rgb, 255:red, 255; green, 255; blue, 255 }  ,fill opacity=1 ] (120,52.14) .. controls (120,40.5) and (149.75,46.36) .. (150,35) -- (155,35) -- (155,55) -- (120.13,55) .. controls (120.04,54.07) and (120,53.12) .. (120,52.14) -- cycle ;
\draw [color={rgb, 255:red, 208; green, 2; blue, 27 }  ,draw opacity=1 ] [dash pattern={on 4.5pt off 4.5pt}]  (150,35) .. controls (150.33,46.33) and (120,41) .. (120,52.14) ;
\draw    (125,75) .. controls (125.33,63.67) and (145,66.14) .. (145,55) ;
\draw  [draw opacity=0][fill={rgb, 255:red, 255; green, 255; blue, 255 }  ,fill opacity=1 ] (120,57.86) .. controls (120,69.5) and (149.75,63.64) .. (150,75) -- (155,75) -- (155,55) -- (120.13,55) .. controls (120.04,55.93) and (120,56.88) .. (120,57.86) -- cycle ;
\draw [color={rgb, 255:red, 208; green, 2; blue, 27 }  ,draw opacity=1 ] [dash pattern={on 4.5pt off 4.5pt}]  (150,75) .. controls (150.33,63.67) and (120,69) .. (120,57.86) ;
\draw    (125,90) .. controls (125.33,101.33) and (145,98.86) .. (145,110) ;
\draw  [draw opacity=0][fill={rgb, 255:red, 255; green, 255; blue, 255 }  ,fill opacity=1 ] (120,107.14) .. controls (120,95.5) and (149.75,101.36) .. (150,90) -- (155,90) -- (155,110) -- (120.13,110) .. controls (120.04,109.07) and (120,108.12) .. (120,107.14) -- cycle ;
\draw [color={rgb, 255:red, 208; green, 2; blue, 27 }  ,draw opacity=1 ] [dash pattern={on 4.5pt off 4.5pt}]  (150,90) .. controls (150.33,101.33) and (120,96) .. (120,107.14) ;
\draw    (125,130) .. controls (125.33,118.67) and (145,121.14) .. (145,110) ;
\draw  [draw opacity=0][fill={rgb, 255:red, 255; green, 255; blue, 255 }  ,fill opacity=1 ] (120,112.86) .. controls (120,124.5) and (149.75,118.64) .. (150,130) -- (155,130) -- (155,110) -- (120.13,110) .. controls (120.04,110.93) and (120,111.88) .. (120,112.86) -- cycle ;
\draw [color={rgb, 255:red, 208; green, 2; blue, 27 }  ,draw opacity=1 ] [dash pattern={on 4.5pt off 4.5pt}]  (150,130) .. controls (150.33,118.67) and (120,124) .. (120,112.86) ;
\draw    (240,95) .. controls (240.33,114.83) and (210,100.5) .. (210,120) ;
\draw  [draw opacity=0][fill={rgb, 255:red, 255; green, 255; blue, 255 }  ,fill opacity=1 ] (215,95) .. controls (215,115.38) and (244.75,105.13) .. (245,125) -- (250,125) -- (250,90) -- (215.13,90) .. controls (215.04,91.62) and (215,93.29) .. (215,95) -- cycle ;
\draw [color={rgb, 255:red, 208; green, 2; blue, 27 }  ,draw opacity=1 ] [dash pattern={on 4.5pt off 4.5pt}]  (215,95) .. controls (215.33,114.83) and (245,105.5) .. (245,125) ;
\draw    (200,45) .. controls (200.33,64.83) and (225.43,62) .. (225,75) ;
\draw  [draw opacity=0][fill={rgb, 255:red, 255; green, 255; blue, 255 }  ,fill opacity=1 ] (200,70) .. controls (200,49.63) and (229.75,59.88) .. (230,40) -- (235,40) -- (235,75) -- (200.13,75) .. controls (200.04,73.38) and (200,71.71) .. (200,70) -- cycle ;
\draw [color={rgb, 255:red, 208; green, 2; blue, 27 }  ,draw opacity=1 ] [dash pattern={on 4.5pt off 4.5pt}]  (200,70) .. controls (200.33,50.17) and (230,59.5) .. (230,40) ;
\draw    (240,70) .. controls (240.33,50.17) and (210,64.5) .. (210,45) ;
\draw  [draw opacity=0][fill={rgb, 255:red, 255; green, 255; blue, 255 }  ,fill opacity=1 ] (215,70) .. controls (215,49.63) and (244.75,59.88) .. (245,40) -- (250,40) -- (250,75) -- (215.13,75) .. controls (215.04,73.38) and (215,71.71) .. (215,70) -- cycle ;
\draw [color={rgb, 255:red, 208; green, 2; blue, 27 }  ,draw opacity=1 ] [dash pattern={on 4.5pt off 4.5pt}]  (215,70) .. controls (215.33,50.17) and (245,59.5) .. (245,40) ;

\draw (30,37.5) node  [font=\footnotesize]  {$f$};
\draw (30,127.5) node  [font=\footnotesize]  {$g$};
\draw (85,85) node  [font=\small,color={rgb, 255:red, 208; green, 2; blue, 27 }  ,opacity=1 ] [align=left] {$\displaystyle ;$};
\draw (200,37.5) node  [font=\footnotesize]  {$f$};
\draw (200,127.5) node  [font=\footnotesize]  {$g$};
\draw (170,85) node  [font=\small,color={rgb, 255:red, 208; green, 2; blue, 27 }  ,opacity=1 ] [align=left] {$\displaystyle ;$};
\draw (115,27.5) node  [font=\footnotesize]  {$f$};
\draw (115,137.5) node  [font=\footnotesize]  {$h$};
\draw (260,85) node  [font=\small,color={rgb, 255:red, 208; green, 2; blue, 27 }  ,opacity=1 ] [align=left] {$\displaystyle ;$};
\draw (115,82.5) node  [font=\footnotesize]  {$g$};

\end{tikzpicture}
   \caption{Generic monoidal lens, sequential and parallel split.}
  \label{fig:monoidallens}
\end{figure}
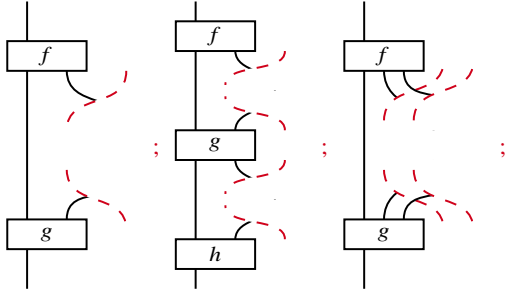

\begin{definition}[Monoidal Lens]
  \label{def:monoidallens}
  \defining{linkMonoidalLens}{}
  Let $(ℂ,⊗,I)$ be a symmetric \monoidalCategory{}.
  \emph{Monoidal lenses} are the elements of the following profunctor,
  $$\LC{\biobj{A}{B}}{\biobj{X}{Y}} = ℂ(A ; • ⊗ X) \diamond ℂ(• ⊗ Y ; B).$$
\end{definition}

In other words, a \emph{monoidal lens} from $A$ to $B$, \emph{with a hole} from $X$ to $Y$, is an equivalence class consisting of a pair of objects $M, N \in \obj{ℂ}$ and a pair of morphisms $f \in ℂ(A; M⊗X)$  and $g ∈ ℂ(M⊗Y;B)$, quotiented by \dinaturality{} of $M$.
We write \monoidalLenses{} as
$$f ⨾ (\im_M ⊗ \blacksquare) ⨾ g \in
 \LC{\biobj{A}{B}}{\biobj{X}{Y}}.$$

\begin{proposition}
  \label{prop:monoidalLensesProduoidal}
  \MonoidalLenses{} form a normal symmetric produoidal category with the following morphisms, units, sequential and parallel splits.
  $$\begin{aligned}
    \LC{\biobj{A}{B}}{\biobj{X}{Y}} =\ &ℂ(A ; • ⊗ X) \diamond ℂ(• ⊗ Y ; B); \\
    \LC{\biobj{A}{B}}{N} =\ &ℂ(A ; B); \\
    \LC{\biobj{A}{B}}{\biobj{X}{Y} \triangleleft \biobj{X'}{Y'}} =\ &
      ℂ(A ; •_1 ⊗ X)\ \diamond \\ &
      ℂ(•_1 ⊗ Y; •_2 ⊗ X') \diamond
      ℂ(•_2 ⊗ Y' ; B); \\
    \LC{\biobj{A}{B}}{\biobj{X}{Y} ⊗ \biobj{X'}{Y'}} =\ &
      ℂ(A ; •_1 ⊗ X ⊗ X')\diamond
      ℂ(•_1 ⊗ Y ⊗ Y'; B).
  \end{aligned}$$
\end{proposition}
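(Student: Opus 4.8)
The plan is to realize $𝓛ℂ$ as the \emph{symmetric normalization} of the cofree produoidal category of spliced monoidal arrows, exactly paralleling the identification $𝓝𝓣ℂ \cong \Mctx{ℂ}$ of \Cref{th:monoidalContextsAreANormalization} but using the one-sided normalization of \Cref{th:symNormalizationProduoidal} in place of the two-sided one. First I would observe that, when $ℂ$ is symmetric monoidal, the spliced monoidal arrows $𝓣ℂ$ (\Cref{def:monoidalSplice}, \Cref{prop:spliceIsProduoidal}) form a \emph{symmetric} produoidal category: the parallel tensor of $𝓣ℂ$ is representable, $𝓣ℂ(\biobj{A}{B}; \biobj{X}{Y} \otimes \biobj{X'}{Y'}) \cong ℂ(A; X \otimes X') \times ℂ(Y \otimes Y'; B)$, and the braiding of $ℂ$ (post-composed on the forward leg, pre-composed on the backward leg) induces the required symmetry $σ$. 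Consequently \Cref{th:symNormalizationProduoidal} applies, producing a normal symmetric produoidal category $𝓝_σ 𝓣ℂ$, and \Cref{th:sym:freeNormalProduoidal} certifies that all the produoidal coherence and symmetry data are automatically in place. It therefore suffices to identify the five profunctors of $𝓝_σ 𝓣ℂ$ with the lens profunctors in the statement.

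The heart of the argument is a coend computation. Substituting $𝕍 = 𝓣ℂ$ into the formulas of \Cref{th:symNormalizationProduoidal} and expanding each $N \otimes (-)$ and each nested $\triangleleft$/$\otimes$ by profunctor composition, one obtains iterated coends over the padding objects introduced by the splits and by the sequential unit $N$. Each such object is then eliminated by the co-Yoneda lemma: factors of the shape $ℂ(A; P) \diamond ℂ(P; \dots)$ collapse to $ℂ(A; \dots)$, factors of the shape $ℂ(\dots; Q) \diamond ℂ(Q; \dots)$ collapse likewise, and the single $N$-factor $ℂ(P_1; Q_1)$ glues the two sides of the remaining padding. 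For morphisms this yields
\begin{align*}
  𝓝_σ 𝓣ℂ\left(\biobj{A}{B}; \biobj{X}{Y}\right)
  &\cong \int^{M} ℂ(A; M \otimes X) \times ℂ(M \otimes Y; B) \\
  &= ℂ(A; • \otimes X) \diamond ℂ(• \otimes Y; B),
\end{align*}
and the analogous reductions turn $𝓣ℂ(\biobj{A}{B}; (N \otimes \biobj{X}{Y}) \triangleleft (N \otimes \biobj{X'}{Y'}))$ and $𝓣ℂ(\biobj{A}{B}; N \otimes \biobj{X}{Y} \otimes \biobj{X'}{Y'})$ into precisely the claimed sequential and parallel splits, while $𝓣ℂ(\biobj{A}{B}; N) = ℂ(A; B)$ gives the units. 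Matching these isomorphisms against \Cref{def:monoidallens} completes the identification $𝓛ℂ \cong 𝓝_σ 𝓣ℂ$.

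I expect the main obstacle to be bookkeeping rather than conceptual: one must track which hom-sets are covariant and which are contravariant in each padding variable so that the co-Yoneda reductions are applied in the correct direction, and one must verify that the dinatural quotients produced by the promonad's Kleisli composition agree on the nose with the dinaturality built into the lens profunctors. It is precisely here that symmetry is essential, for the one-sided padding $N \otimes (-)$ of the symmetric normalization produces a single mediating object $M$ (a genuine lens), whereas the two-sided padding $N \otimes (-) \otimes N$ of the ordinary normalization would reconstruct the \monoidalContexts{} of \Cref{th:monoidalContextsAreANormalization}. Once the hom-set isomorphisms are checked to be natural and compatible with the Kleisli structure, naturality of the laxators, the associators, and the symmetry transports automatically from \Cref{th:symNormalizationProduoidal,th:sym:freeNormalProduoidal}, so no further coherence verification is needed.
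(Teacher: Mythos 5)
Your proposal is correct, but it takes a genuinely different route from the paper's own proof. The paper proves this proposition \emph{directly}: it constructs the sequential and parallel associators, the unitors, the symmetry, and the laxators one by one as explicit coend-calculus isomorphisms (each split profunctor is reduced by Yoneda to a canonical set of tuples of morphisms of $ℂ$, and the structure maps are read off from these reductions), and then invokes the principle that anything assembled solely from Yoneda isomorphisms and symmetries of $ℂ$ automatically satisfies the produoidal coherence equations. You instead obtain the entire structure by transport along the identification $𝓛ℂ \cong 𝓝_σ𝓣ℂ$ — which is essentially the paper's proof of the \emph{later} \Cref{th:lensesuniversal}, folded into this proposition. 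Your route is sound: the coend reductions you sketch do land on the stated profunctors (the one-sided padding $N ⊗ (-)$ gives a single residual object $M$, exactly a lens), and \Cref{th:symNormalizationProduoidal,th:sym:freeNormalProduoidal} do supply all coherence and normality for free. What it costs you is the preliminary step of upgrading $𝓣ℂ$ to a \emph{symmetric} produoidal category — constructing $σ$ from the braiding on the representable parallel tensor and checking the hexagon and symmetry axioms — which the paper's direct proof never needs, together with the verification you rightly flag that the promonad's Kleisli composition coincides with the usual dinatural composition of lenses. What it buys you is conceptual economy and the universal property arriving simultaneously, whereas the direct proof yields explicit formulas for the structure maps that the paper relies on later in the protocol-analysis examples.
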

\begin{proof}
  See Appendix, \Cref{ax:prop:monoidalLensesProduoidal}.
\end{proof}

\begin{theorem}
  \label{th:lensesuniversal}
  Monoidal lenses are the free symmetric normalization of the cofree symmetric produoidal category over a monoidal category.
\end{theorem}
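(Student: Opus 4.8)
The plan is to mirror the proof of \Cref{th:monoidalContextsAreANormalization}, replacing the two-sided normalization $\mathcal{N}$ by the symmetric normalization $\mathcal{N}_\sigma$ of \Cref{th:symNormalizationProduoidal}. First I would observe that, when $\mathbb{C}$ is symmetric monoidal, the parallel tensor of $\mathcal{T}\mathbb{C}$ inherits a symmetry directly from that of $\mathbb{C}$, so that $\mathcal{T}\mathbb{C}$ is a \emph{symmetric} produoidal category; the splice--contour adjunction of \Cref{prop:produoidalSpliceContour} then restricts to symmetric monoidal categories and symmetric produoidal categories, exhibiting $\mathcal{T}\mathbb{C}$ as the cofree symmetric produoidal category over $\mathbb{C}$. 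By \Cref{th:sym:freeNormalProduoidal}, $\mathcal{N}_\sigma(\mathcal{T}\mathbb{C})$ is then the free symmetric normalization of this cofree object, so it only remains to produce an isomorphism of normal symmetric produoidal categories $\mathcal{N}_\sigma(\mathcal{T}\mathbb{C}) \cong \mathcal{L}\mathbb{C}$.

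The core of the argument is a coend calculation identifying the two families of profunctors termwise. Using the explicit formula $\mathcal{N}_\sigma\mathbb{V}(A; B) = \mathbb{V}(A; N \otimes B)$ from \Cref{th:symNormalizationProduoidal}, each hom-, split- and unit-profunctor of $\mathcal{N}_\sigma(\mathcal{T}\mathbb{C})$ unfolds, via the nesting convention, into a profunctor composite over $\mathcal{T}\mathbb{C}$ in which a sequential unit $N$ is tensored in parallel with the remaining holes. I would then evaluate these composites by repeated application of the co-Yoneda (density) lemma. For morphisms, for instance,
\[
\mathcal{N}_\sigma(\mathcal{T}\mathbb{C})\!\left(\biobj{A}{B}; \biobj{X}{Y}\right) = \int^{P,Q,R,S} \mathbb{C}(A; P\otimes R)\times \mathbb{C}(P;Q)\times \mathbb{C}(R;X)\times \mathbb{C}(Y;S)\times \mathbb{C}(Q\otimes S; B),
\]
and eliminating $P$, $R$, $S$ against the representable factors (renaming the surviving memory object $Q = M$) collapses this to the single coend $\int^{M} \mathbb{C}(A; M\otimes X)\times \mathbb{C}(M\otimes Y; B) = \LC{\biobj{A}{B}}{\biobj{X}{Y}}$. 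The same eliminations applied to the ternary parallel split $N\otimes\biobj{X}{Y}\otimes\biobj{X'}{Y'}$ and to the nested sequential split $(N\otimes\biobj{X}{Y})\triangleleft(N\otimes\biobj{X'}{Y'})$ yield precisely the parallel and sequential split profunctors of \Cref{prop:monoidalLensesProduoidal}, while the units are preserved on the nose since $\mathcal{N}_\sigma(\mathcal{T}\mathbb{C})(\,\cdot\,; N_N) = \mathcal{T}\mathbb{C}(\,\cdot\,; N) = \mathbb{C}(A;B)$.

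Finally I would check that these termwise isomorphisms assemble into an isomorphism of normal symmetric produoidal categories, i.e. that they are natural and commute with the associators, unitors, laxators and the symmetry $\sigma$. The argument here is that every structure map of $\mathcal{N}_\sigma(\mathcal{T}\mathbb{C})$ is induced by the splice operation of \emph{filling holes} composed with the density isomorphisms above, and that the corresponding map on $\mathcal{L}\mathbb{C}$ in \Cref{prop:monoidalLensesProduoidal} is built from the same filling operation; since the co-Yoneda witnesses are themselves natural and compatible with the symmetry of $\otimes$ in $\mathbb{C}$, the transported maps coincide.

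The main obstacle I expect is twofold. The lighter difficulty is establishing the symmetric cofree statement for $\mathcal{T}\mathbb{C}$: this requires re-running the adjunction of \Cref{prop:produoidalSpliceContour} while tracking the symmetry, which is routine but must be done with care so that $\mathcal{N}_\sigma$ genuinely applies. The real obstacle is the bookkeeping in the last step: verifying that the laxator $\psi_2$ and the symmetry, after transport along the cascade of density isomorphisms, land exactly on the lens structure rather than merely on an isomorphic copy. I would manage this by choosing the co-Yoneda witnesses coherently --- always contracting the unit factor $N$ into the residual ``context memory'' object $M$ --- so that the transported structure maps become literally the hole-filling operations of $\mathcal{L}\mathbb{C}$, turning the compatibility checks into diagrammatic identities rather than explicit coend manipulations.
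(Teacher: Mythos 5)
Your proposal is correct and follows essentially the same route as the paper's proof (\Cref{ax:th:lensesuniversal}): invoke \Cref{th:sym:freeNormalProduoidal} for the free symmetric normalization, then identify $\mathcal{N}_\sigma\mathcal{T}\mathbb{C}$ with $\mathcal{L}\mathbb{C}$ profunctor by profunctor via co-Yoneda reductions (the paper carries this out for the hom-profunctor and declares the remaining cases similar), with coherence following because every structure map is assembled from Yoneda isomorphisms. The only substantive difference is that you explicitly flag the need to verify that $\mathcal{T}\mathbb{C}$ is the cofree \emph{symmetric} produoidal category over a symmetric monoidal category, i.e.\ that the adjunction of \Cref{prop:produoidalSpliceContour} restricts to the symmetric setting --- a point the paper's proof leaves implicit.
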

\begin{proof}
  See Appendix, \Cref{ax:th:lensesuniversal}.
\end{proof}

\begin{remark}[Representable parallel structure] \label{rem:lens-rep}
  \defining{linklensrep}{}
  The parallel splitting structure of monoidal lenses is representable,
  $$\LC{\biobj{A}{B}}{\biobj{X}{Y} ⊗ \biobj{X'}{Y'}} = \LC{\biobj{A}{B}}{\biobj{X ⊗ X'}{Y ⊗ Y'}}.$$
  Lenses over a symmetric monoidal category are known to be monoidal \cite{riley2018categories,hedges2017coherence}, but it remained unexplained why a similar structure was not present in non-symmetric lenses. The contradiction can be solved by noting that both symmetric and non-symmetric lenses are indeed \emph{\promonoidal{}}, even if only symmetric optics provide a representable tensor.
\end{remark}

\begin{remark}[Session notation for lenses]
  We will write $\Send{A} = \left(\biobj{A}{I}\right)$ and $\Get{B} = \left(\biobj{I}{B}\right)$ for the objects of the \produoidalCategory{} of lenses that have a monoidal unit as one of its objects. These are enough to express all objects because $\Send{A} \otimes \Get{B} = \left(\biobj{A}{B}\right)$; and,  moreover, they satisfy the following properties definitionally.
  \begin{align*}
    & ℂ(• ; \Get{A} \triangleleft \Get{B}) ≅
    ℂ(• ; \Get{A} \otimes \Get{B}); && \Send{(A \otimes B)} = \Send{A} \otimes \Send{B}; \\
    & ℂ(• ; \Send{A} \triangleleft \Send{B}) ≅
    ℂ(• ; \Send{A} \otimes \Send{B}); && \Get{(A \otimes B)} = \Get{A} \otimes \Get{B}; \\
    & ℂ(• ; \Send{A} \triangleleft \Get{B}) ≅ ℂ(• ; \Send{A} \otimes \Get{B}).
  \end{align*}
\end{remark}

\begin{proposition}
  \label{prop:sessionNotation}
  Let $(ℂ,\otimes,I)$ be a symmetric \monoidalCategory{}. There exist monoidal functors $(\Send{}) \colon ℂ \to 𝓛ℂ$ and $(\Get{}) \colon ℂ^{op} \to 𝓛ℂ$.
\end{proposition}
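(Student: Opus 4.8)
The plan is to realize both functors by lenses with \emph{trivial residual} and to reduce every check to coherence in $ℂ$ together with the coYoneda (dinaturality) identifications that govern lens composition in \Cref{prop:monoidalLensesProduoidal}. On objects I set $\Send{A} = \biobj{A}{I}$ and $\Get{B} = \biobj{I}{B}$. On a morphism $f \colon A \to A'$ of $ℂ$ I define $\Send{f} \in 𝓛ℂ(\biobj{A}{I}; \biobj{A'}{I}) = ℂ(A; • ⊗ A') \diamond ℂ(• ⊗ I; I)$ to be the class with residual $I$, forward leg $A \xrightarrow{f} A' \cong I ⊗ A'$, and backward leg the unitor $I ⊗ I \cong I$. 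Dually, a morphism $B \to B'$ of $ℂ^{op}$ is a map $g \colon B' \to B$ of $ℂ$, and I define $\Get{g} \in 𝓛ℂ(\biobj{I}{B}; \biobj{I}{B'})$ to be the class with residual $I$, forward leg the unitor $I \cong I ⊗ I$, and backward leg $I ⊗ B' \cong B' \xrightarrow{g} B$.

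First I would verify functoriality. The identity lens on $\biobj{A}{I}$ is $\im_A ⨾ \blacksquare ⨾ \im_I$, which is exactly $\Send{\mathrm{id}_A}$, and symmetrically for $\Get{}$. For composites I unfold the nesting composition of \Cref{prop:monoidalLensesProduoidal}: plugging one trivial-residual lens into another yields a lens with residual $I ⊗ I$, forward leg $f' \circ f$ framed by unitors, and backward leg a composite of unitors. Dinaturality in the residual along $I ⊗ I \cong I$ collapses this to the trivial-residual representative of $\Send{(f' \circ f)}$; the computation for $\Get{}$ is the mirror image on the backward leg, with the composition order reversed as demanded by $ℂ^{op}$.

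Next I would equip the functors with strong monoidal structure. By \Cref{rem:lens-rep} the parallel split of $𝓛ℂ$ is representable, so $𝓛ℂ$ is symmetric monoidal with $\biobj{A}{B} ⊗ \biobj{X}{Y} = \biobj{A ⊗ X}{B ⊗ Y}$ and unit $\biobj{I}{I}$, the latter because $𝓛ℂ(\biobj{A}{B}; \biobj{I}{I}) \cong ℂ(A; B)$ by coYoneda. The object comparisons $\Send{(A ⊗ B)} = \Send{A} ⊗ \Send{B}$ and $\Get{(A ⊗ B)} = \Get{A} ⊗ \Get{B}$ already hold up to $I ⊗ I \cong I$, and $\Send{I} = \Get{I} = \biobj{I}{I}$ is the unit, so I promote these to the structure isomorphisms $\mu_{A,B}$ and $\varepsilon$, each built from $I ⊗ I \cong I$ with identities on the remaining legs. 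Naturality is immediate from naturality of the unitors, and the associativity and unit coherence diagrams reduce to the corresponding coherence for $I ⊗ I \cong I$ in $ℂ$; symmetry of $ℂ$ then makes both functors compatible with the braidings.

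The main obstacle is the bookkeeping forced by the coend: each equation has to be checked on dinaturality classes rather than on chosen representatives, so the real work is confirming that the trivial-residual representatives interact correctly with the residual-collapsing isomorphism $I ⊗ I \cong I$ under both composition and the representable tensor. Once well-definedness on classes is secured, every remaining identity is an instance of the coherence theorem for the symmetric monoidal category $ℂ$.
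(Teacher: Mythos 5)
Your proposal is correct and follows essentially the same route as the paper: both define $\Send{}$ and $\Get{}$ on objects as $\biobj{A}{I}$ and $\biobj{I}{B}$ and on morphisms as trivial-residual lenses ($\Send{f} = f ⨾ \blacksquare ⨾ \im_I$, $\Get{g} = \im_I ⨾ \blacksquare ⨾ g$), with functoriality and monoidality reduced to dinaturality in the residual and coherence in $ℂ$. The only difference is that you carry the comparison $\Send{(A\otimes B)} \cong \Send{A}\otimes\Send{B}$ as a structure isomorphism built from $I\otimes I \cong I$, where the paper treats it as a definitional equality; your version is the more careful reading in the non-strict case.
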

\begin{proof}
  See Appendix, \Cref{ax:prop:sessionNotation}.
\end{proof}

\subsection{Protocol Analysis}

Let us go back to our running example (\Cref{diagram:tcp}).
We can now declare that the client and server have the following types, representing the order in which they communicate,
$$\begin{aligned}
& \ClientLogo{} \in
𝓛ℂ\left( \biobj{\mbox{Client}}{\mbox{Client}} \mathbin{;}
  \Send{\mbox{Msg}} \triangleleft
  \Get{\mbox{Msg}} \triangleleft
  \Send{\mbox{Msg}} \right); \\[-0.5em]
& \ServerLogo{} \in
𝓛ℂ\left( \biobj{\mbox{Server}}{\mbox{Server}} \mathbin{;}
  \Get{\mbox{Msg}} \triangleleft
  \Send{\mbox{Msg}} \triangleleft
  \Get{\mbox{Msg}} \right).
\end{aligned}$$
Moreover, we can use the duoidal algebra to compose them. Indeed, tensoring client and server, we get the following codomain type,
$$
(\Send{\mbox{Msg}} ◁
     \Get{\mbox{Msg}} ◁
     \Send{\mbox{Msg}}) ⊗
     (\Get{\mbox{Msg}} ◁
     \Send{\mbox{Msg}} ◁
     \Get{\mbox{Msg}}).$$
We then apply the laxators to mix inputs and outputs, obtaining
$$
(\Send{\mbox{Msg}} ⊗ \Get{\mbox{Msg}}) ◁
         (\Get{\mbox{Msg}} ⊗ \Send{\mbox{Msg}}) ◁
         (\Send{\mbox{Msg}} ⊗ \Get{\mbox{Msg}}),
$$
and we finally apply the unitors to fill the communication holes with noisy channels.
$$\begin{aligned}
  & ψ_2
      \left( \ClientLogo{} \otimes \ServerLogo{} \right) ≺^3_\lambda
      \NOISE^3
  \in
  𝓛ℂ\left( \biobj{\mbox{Client} \otimes \mbox{Server}}{\mbox{Client} \otimes \mbox{Server}} \right).
  \end{aligned}$$

We end up obtaining the protocol as a single morphism $\mbox{Client} \otimes \mbox{Server} \to \mbox{Client} \otimes \mbox{Server}$ in whatever category we are using to program.  Assuming the category of finite stochastic maps, this single morphism represents the distribution over the possible outcomes of the protocol.
Finally, by \dinaturality{}, we can reason over independent parts of the protocol.

\begin{proposition}
  Let $(\scalebox{0.8}{\ClientLogo{}}) = (\mbox{\SYN} ⨾ (\im \otimes \blacksquare) ⨾ \mbox{\ACK} ⨾ (\im \otimes \blacksquare))$.
  The equalities in \Cref{diagram:tcp} are a consequence of the dinaturality of a monoidal lens.
\end{proposition}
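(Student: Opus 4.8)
The plan is to read the two pictured descriptions of the client as the two canonical representatives of a single class in the lens coend, so that the entire chain of equalities in \Cref{diagram:tcpclient} reduces to one instance of the dinaturality relation of \Cref{def:monoidallens}. First I would fix the presentation of the client as a monoidal lens $\SYN ⨾ (\im_M ⊗ \blacksquare) ⨾ \ACK$, in which $M$ is the residual object that threads the client's memory --- in particular the stored $\SRV{}$ slot --- from the output of $\SYN$ into the input of $\ACK$, while $\blacksquare$ marks the hole carrying the transmitted message. The key observation is that the projection $\PRJ$ is a morphism $\PRJ \colon M \to M'$ on this residual wire (acting as the identity on the message wire); this is exactly what allows it to be slid across the hole.

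Next I would unfold the two definitions from the text, $\SYN^{\ast} = \SYN ⨾ \PRJ$ and $\ACK = \PRJ ⨾ \ACK^{\ast}$, turning the left- and right-hand pictures into
$$(\SYN ⨾ (\PRJ ⊗ \im)) ⨾ (\im_{M'} ⊗ \blacksquare) ⨾ \ACK^{\ast} \quad \mbox{and} \quad \SYN ⨾ (\im_M ⊗ \blacksquare) ⨾ ((\PRJ ⊗ \im) ⨾ \ACK^{\ast}).$$
These are precisely the two representatives that the coend over $M$ identifies: postcomposing the first leg with $\PRJ ⊗ \im$ versus precomposing the second leg with $\PRJ ⊗ \im$, which is the sliding of $m$ across the hole depicted in \Cref{fig:contextprotensors} (specialised to a single residual). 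A single appeal to this dinaturality relation, with witness $\PRJ$, yields the outer equality $(\SYN^{\ast} \mid \ACK^{\ast}) = (\SYN \mid \ACK)$; the intermediate equalities of the figure are then recovered by re-folding the definitions of $\SYN^{\ast}$ and $\ACK$, which are purely notational.

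The main difficulty is bookkeeping rather than conceptual. I would need to verify that $\PRJ$ really does factor as a morphism on the residual object alone, commuting with the message wire, so that the dinaturality witness is well typed; and, because the declared client type has three holes $\Send{\mbox{Msg}} ◁ \Get{\mbox{Msg}} ◁ \Send{\mbox{Msg}}$, I would confirm that the intervening $\Get{}$ and the trailing $\Send{}$ hole are inert for this argument, so that the two-hole presentation $\SYN ⨾ (\im_M ⊗ \blacksquare) ⨾ \ACK$ faithfully isolates the relevant residual. Concretely this is the statement that the wire carrying the $\SRV{}$ slot is exactly the one joining $\SYN$ to $\ACK$ --- the formal content of the remark that the stored server number ``will be overwritten as soon as the real one arrives.'' Once this identification is in place, no coherence data is required: one instance of lens dinaturality discharges the proposition.
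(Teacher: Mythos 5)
Your proposal is correct and takes essentially the same route as the paper: both unfold the definitions $\SYN^{\ast} = \SYN ⨾ \PRJ$ and $\ACK = \PRJ ⨾ \ACK^{\ast}$ and then discharge the whole chain of equalities by a single application of lens dinaturality with witness $\PRJ$, sliding $\PRJ \otimes \im$ across the hole $\blacksquare$. The additional bookkeeping you flag (that $\PRJ$ acts only on the residual wire and that the remaining holes are inert) is sensible but the paper treats it as immediate from the diagram.
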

\begin{proof}
  We recognize the diagram in \Cref{diagram:tcp} as representing the elements in the following equation.
  \begin{align*}
    & \mbox{\SYN} ⨾ (\im \otimes \blacksquare) ⨾ \mbox{\ACK} ⨾ (\im \otimes \blacksquare) & = \\
    & \mbox{\SYN}^{\ast} ⨾ (\PRJ \otimes \im) ⨾ \blacksquare ⨾ \mbox{\ACK} ⨾ (\im ⊗ \blacksquare) & = \\
    & \mbox{\SYN}^{\ast} ⨾ (\im ⊗ \blacksquare) ⨾ (\PRJ{} ⊗
    \im) ⨾ \mbox{\ACK} ⨾ (\im ⊗ \blacksquare) & = \\
    & \mbox{\SYN}^{\ast} ⨾ (\im ⊗ \blacksquare) ⨾ \mbox{\ACK}^{\ast} ⨾ (\im ⊗ \blacksquare).
  \end{align*}
  In the same way we would apply the \emph{interchange law} in completed morphisms, we have applied dinaturality over $\PRJ{}$.
\end{proof}

\subsection{Cartesian Lenses}
We have worked in full generality, but cartesian lenses are particularly important to applications in game theory \cite{ghani:compositionalgametheory2018} and functional programming \cite{kmett12:lenslibrary,pickering17:profunctoroptics}. We introduce their newly constructed \produoidal{} structure.

\begin{proposition}[Cartesian Lenses]
  \label{prop:cartesianlenses}
  Let $(ℂ,⋅,1)$ be a cartesian monoidal category.
  Its \produoidalCategory{} of lenses is given by the following \profunctors{}.
  \begin{align*}
    𝓛ℂ\left(\biobj{A}{B}; \biobj{X}{Y} \right) &≅
    ℂ(A;X) × ℂ(AY; B), \\
    𝓛ℂ\left(\biobj{A}{B}; \biobj{X}{Y} ◁ \biobj{X'}{Y'} \right) &≅
    ℂ(A;X) × ℂ(AY; X') × ℂ(AYY' ; B), \\
    𝓛ℂ\left(\biobj{A}{B}; \biobj{X}{Y} ⊗ \biobj{X'}{Y'} \right) &≅
    ℂ(A;XX') × ℂ(AYY';B), \\
    𝓛ℂ\left(\biobj{A}{B}\right) &≅
    ℂ(A;B).
  \end{align*}
\end{proposition}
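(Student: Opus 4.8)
The plan is to reduce each coend appearing in the lens profunctors (\Cref{def:monoidallens}, \Cref{prop:monoidalLensesProduoidal}) using two elementary facts about a cartesian monoidal category $(ℂ,\cdot,1)$: the universal property of the product, $ℂ(A; M \cdot X) \cong ℂ(A; M) \times ℂ(A; X)$, and the density (co-Yoneda) formula of coend calculus \cite{loregian2021}, $\int^{M} ℂ(A; M) \times G(M) \cong G(A)$ for any presheaf $G \colon ℂ^{op} \to \mathbf{Set}$. Combining them, the single reduction lemma I would prove first is
$$\int^{M} ℂ(A; M \cdot X) \times G(M) \cong ℂ(A; X) \times G(A),$$
valid because $ℂ(A;X)$ is constant in $M$ and factors out of the coend. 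Here the integrand is dinatural in $M$ precisely because $ℂ(A; M \cdot X)$ is covariant and $G$ contravariant in $M$, matching the variance required by the coend; getting this bookkeeping right is the one place to be careful.

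With this lemma, each line follows by instantiating $G$ and iterating. For morphisms, take $G(M) = ℂ(M \cdot Y; B)$, a presheaf since $M \cdot Y$ is covariant in $M$, giving $\LC{\biobj{A}{B}}{\biobj{X}{Y}} \cong ℂ(A;X) \times ℂ(A \cdot Y; B)$. For the parallel split, the same single reduction with $G(M) = ℂ(M \cdot Y \cdot Y'; B)$ and $X$ replaced by $X \cdot X'$ yields $ℂ(A; X \cdot X') \times ℂ(A \cdot Y \cdot Y'; B)$. The sequential split is a double coend over $M_1, M_2$; I would apply the lemma first in $M_1$ with $G(M_1) = ℂ(M_1 \cdot Y; M_2 \cdot X')$ (a presheaf in $M_1$, with $M_2$ held fixed), producing $ℂ(A;X) \times ℂ(A \cdot Y; M_2 \cdot X')$, and then in $M_2$ with $G(M_2) = ℂ(M_2 \cdot Y'; B)$, producing $ℂ(A \cdot Y; X') \times ℂ(A \cdot Y \cdot Y'; B)$; assembling gives the three-factor description. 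The unit case $\LC{\biobj{A}{B}}{N} = ℂ(A;B)$ holds definitionally from \Cref{prop:monoidalLensesProduoidal}.

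Finally, I would observe that every reduction is natural in the boundary objects $A, B, X, Y, X', Y'$ — the factored-out hom-sets and the density isomorphism are all natural — so these are isomorphisms of profunctors, not merely of indexed sets. Since the whole symmetric produoidal structure of \Cref{prop:monoidalLensesProduoidal} (splits, units, coherences, laxators) is transported along profunctor isomorphisms, the computed profunctors inherit that structure and the stated formulas describe the cartesian lens produoidal category exactly. The main obstacle is not any single computation but the uniform variance and naturality argument that both legitimizes the co-Yoneda step and shows the identifications respect the produoidal operations; concretely, one must check that the representative $M = A$ (with the identity as the $ℂ(A;M)$-component) witnessing the density isomorphism is compatible across the iterated coends of the sequential split.
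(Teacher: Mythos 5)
Your proposal is correct and follows essentially the same route as the paper's proof: the paper also reduces each coend by first applying the universal property of the product to split $ℂ(A; M\cdot X)$ as $ℂ(A;M)\times ℂ(A;X)$ and then eliminating the bound variable by Yoneda reduction, exactly the two steps you package into your single lemma. Your iterated one-variable application to the sequential split and the added naturality remarks are harmless reorganizations of the same argument.
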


\begin{proof}
  See Appendix, \Cref{ax:prop:cartesianlenses}.
\end{proof}

\section{Conclusions}

\MonoidalContexts{} are an algebra of incomplete processes, commonly generalizing lenses \cite{riley2018categories} and spliced arrows \cite{mellies2022parsing}.
In the same way that the $\pi$-calculus allows input/output channels of an abstract model of computation, monoidal contexts allow input/output communication on arbitrary theories of processes, such as stochastic or partial functions, quantum processes or relational queries.

\MonoidalContexts{} form a normal \produoidal{} category: a highly structured and rich categorical algebra. Moreover, they are the universal such algebra on a monoidal category.
This is good news for applications: the literature on concurrency is rich in frameworks; but the lack of \emph{canonicity} may get us confused when trying to choose, design, or compare among them, as Abramsky \cite{abramsky06:concurrency} has pointed out. %
Precisely characterizing the universal property of a model addresses this concern.
This is also good news for the category theorist: not only is this an example shedding light on a relatively obscure structure; it is a paradigmatic such one.

We rely on two mathematical ideas: \emph{monoidal} and \emph{duoidal} categories on one hand, and \emph{dinaturality} and \emph{profunctorial} structures on the other.
\emph{Monoidal categories}, which could be accidentally dismissed as a toy version of cartesian categories, show that their string diagrams can bootstrap our conceptual understanding of new fundamental process structures, while keeping an abstraction over their implementation that cartesian categories cannot afford.
Duoidal categories are such an example: starting to appear insistently in computer science \cite{shapiro22:duoidal,sigal:23duoidally},
they capture the posetal structure of process dependency and communication.
\emph{Dinaturality}, virtual structures and profunctors, even if sometimes judged arcane, show again that they can canonically capture a notion as concrete as process composition.

\subsection{Further Work}

\textbf{Dependencies.} Shapiro and Spivak \cite{shapiro22:duoidal} prove that normal symmetric duoidal categories with certain limits additionally have the structure of \emph{dependence categories}: they can not only express dependence structures generated by $(◁)$ and $(\otimes)$, but arbitrary poset-mediated dependence structures. \ProduoidalCategories{} are better behaved: the limits always exist, and we only require these are preserved by the coend.

\begin{proposition}
  \label{th:virtualdependence}
  Let $𝕍$ be a normal and $⊗$-symmetric \produoidalCategory{} with coends over $𝕍$ commuting with finite connected limits.
  Then, $[𝕍\op, \mathbf{Set}]$ is a dependence category in the sense of Shapiro and Spivak \cite{shapiro22:duoidal}.
\end{proposition}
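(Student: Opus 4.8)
The plan is to move from the \emph{virtual} (profunctorial) produoidal structure on $𝕍$ to a \emph{representable} duoidal structure on the presheaf category $[𝕍\op, \mathbf{Set}]$ via Day convolution, and then to invoke the theorem of Shapiro and Spivak \cite{shapiro22:duoidal} directly. The guiding fact is that a promonoidal structure $𝕍(•; • \otimes •)$, $𝕍(•; I)$ is exactly a pseudomonoid in the bicategory of profunctors, and that Day convolution is the action on presheaves of the pseudofunctor sending a profunctor to its cocontinuous extension. Concretely, for presheaves $F, G \in [𝕍\op, \mathbf{Set}]$ one sets
\[
(F \otimes_D G)(A) = \int^{X,Y} 𝕍(A; X \otimes Y) \times FX \times GY,
\]
and analogously for $\triangleleft$, with units the representable presheaves $𝕍(•; I)$ and $𝕍(•; N)$. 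This turns each of the two promonoidal structures of $𝕍$ into a genuine monoidal structure on $[𝕍\op, \mathbf{Set}]$.

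First I would check that the two Day tensors assemble into a duoidal category. Since Day convolution is functorial and monoidal in the profunctor variable, the produoidal laxators $\psi_2, \psi_0, \varphi_2, \varphi_0$ of \Cref{def:produoidal} extend to natural transformations
\[
(F \triangleleft_D G) \otimes_D (H \triangleleft_D K) \to (F \otimes_D H) \triangleleft_D (G \otimes_D K),
\]
together with the unit laxators, and the associativity and unit coherence conditions transport along this extension. Normality of $𝕍$, namely that $\varphi_0 \colon 𝕍(•; I) \to 𝕍(•; N)$ is an isomorphism, says precisely that the two Day units are isomorphic presheaves, so $[𝕍\op, \mathbf{Set}]$ is a \emph{normal} duoidal category; the $\otimes$-symmetry $\sigma$ of $𝕍$ extends to make $\otimes_D$ symmetric.

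Next I would supply the limit data the dependence-category theorem requires. Presheaf categories are complete, with limits computed pointwise, so $[𝕍\op, \mathbf{Set}]$ has all finite connected limits for free. The essential remaining point is that the sequential tensor $\triangleleft_D$ must \emph{preserve} these connected limits argumentwise, and this is exactly where the hypothesis enters: because $\triangleleft_D$ is a coend and pointwise limits are taken in $\mathbf{Set}$, preservation of finite connected limits by $\triangleleft_D$ reduces to the assumption that coends over $𝕍$ commute with finite connected limits. With normality, symmetry, and this preservation in hand, the hypotheses of Shapiro and Spivak's characterization are satisfied, and I would conclude that $[𝕍\op, \mathbf{Set}]$ carries the claimed dependence-category structure.

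The main obstacle I anticipate is bookkeeping rather than conceptual: verifying that the Day-convolution extension sends the produoidal coherence data to the exact duoidal axioms used in \cite{shapiro22:duoidal}, and—more delicately—matching their preservation hypothesis to ours. The subtlety is that Day convolution is always cocontinuous but only conditionally continuous, so \emph{connectedness} of the limits is doing real work: the terminal object and binary products are generally not preserved, and one must confirm that the dependence-category definition indeed only demands preservation of \emph{connected} limits, which is precisely the class controlled by the coend-commutation assumption.
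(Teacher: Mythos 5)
Your proposal is correct and follows essentially the same route as the paper's own (sketched) proof: Day convolution turns the produoidal structure on $𝕍$ into a duoidal structure on $[𝕍\op,\mathbf{Set}]$, normality and symmetry transfer, and preservation of finite connected limits by the convolution tensors reduces, via commutation of limits with products in $\mathbf{Set}$, to the stated hypothesis on coends, after which Shapiro--Spivak applies. The only nitpick is that Shapiro and Spivak require \emph{both} tensors to preserve finite connected limits, so the check you describe for $\triangleleft_D$ must be repeated verbatim for $\otimes_D$ (the paper's displayed coend calculation is in fact for $\otimes$).
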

\begin{proof}[Proof sketch]
  See Appendix, \Cref{ax:th:virtualdependence}.
\end{proof}

Weakening dependence categories in this way combines the ideas of Shapiro and Spivak \cite{shapiro22:duoidal} with those of Hefford and Kissinger \cite{hefford_spacetime}, who employ virtual objects to deal with the non-existence of tensor products in models of spacetime.

\textbf{Language theory.}
Melliès and Zeilberger \cite{mellies2022parsing} used a multicategorical form of splice-contour adjunction (\Cref{rem:mellies}) to give a novel proof of the Chomsky-Schüt\-zen\-ber\-ger representation theorem, generalized to context-free languages in categories. Our produoidal splice-contour adjunction (\Cref{sec:parallelContext}), combined with recent work on languages of morphisms in monoidal categories \cite{earnshaw22} opens the way for a vertical categorification of the Chomsky-Schüt\-zen\-ber\-ger theorem, which we plan to elaborate in future work.

\textbf{String diagrams for concurrency.}
Nester et al. \cite{nester23:processhistories,boisseaunester:corneringoptics} have recently introduced an alternative description of lenses in terms of \emph{proarrow equipments}, which have a good 2-dimensional syntax \cite{myers16} we can use for send/receive types $(\Send{}/\Get{})$.
We have shown how this structure arises universally in symmetric \monoidalCategories{}. It remains as further work to determine a good 2-dimensional syntax for concurrent programs with \emph{iteration} and \emph{internal/external choice}.

\section{Acknowledgements}

We thank Pawel Sobocinski, Fosco Loregian, Chad Nester and David Spivak for discussion.

Matt Earnshaw and Mario Román were supported by the European Social Fund Estonian IT Academy research measure (project 2014-2020.4.05.19-0001).
James Hefford is supported by University College London and the EPSRC [grant number EP/L015242/1].

\bibliographystyle{alpha}
\bibliography{bibliography.bib}
\newpage

\appendices
\newgeometry{top=1in,bottom=1.25in, left=1.25in, right = 1.25in}
\onecolumn

\ExplSyntaxOn
\clist_map_inline:nn {A,B,C,D,E,F,G,H,I,J,K,L,M,N,O,P,Q,R,S,T,U,V,W,X,Y,Z} {
  \exp_args:Nc \DeclareDocumentCommand {#1} {} {\ensuremath{\textcolor{mygrey}{#1}}}
}
\ExplSyntaxOff

\section{Introduction}

\subsection{Three Way handshake Implementation}
\label{sec:threewayhandshake}
The following can be interpreted as pseudocode using the linear type theory of symmetric monoidal categories~\cite{shulman2016categorical}.
The type theory of symmetric monoidal categories (\Cref{fig:typetheory}) uses declarations such as \verb|(x , y) <- f(a, b, c)| to represent morphisms such as  $f \colon A \otimes B \otimes C \to X \otimes Y$.

\begin{figure}[H]
  \begin{mathpar}
    \infer[\textsc{Gen}]
      {f \in \mathcal{G}(A_1,\dots,A_n;B) \and
      \Gamma_1 \entails x_1 : A_1 \dots \Gamma_n \entails x_n : A_n}
      {\Shuf(\Gamma_1,\dots,\Gamma_n) \entails f(x_1,\dots,x_n) : B} \\
    \infer[\textsc{Pair}]
    {\Gamma_1 \entails x_1 : A_1\ \dots\ \Gamma_n \entails x_n : A_n}
    {\Shuf(\Gamma_1,\dots,\Gamma_n) \entails [x_1,\mydots,x_n] : A_1 \otimes \mydots \otimes A_n}  \and
    \infer[\textsc{Var}]{ }{x : A\entails x : A} \\
    \infer[\textsc{Split}]{\Delta \entails m : A_1 \otimes \dots \otimes A_n \and \Gamma, x_1 : A_1, \dots , x_n : A_n \entails z : C}{ \Shuf(\Gamma,\Delta) \entails [x_1,\dots,x_n]\ \gets\ m\ \textsc{;}\ z : C} \\
  \end{mathpar}
  \label{fig:typetheory}
  \caption{Type theory of symmetric monoidal categories~\cite{shulman2016categorical}.}
  \end{figure}

We can interpret pseudocode as talking about the type theory of monoidal categories. Usually, we will need some extra structure: such as \verb|if-then-else| or explicit functions. 
It has been found in programming that a good level of concreteness for monoidal categories is given by the Kleisli categories of commutative monads, sometimes abstracted by Freyd categories \cite{moggi91,hughes00}, see \cite{roman22} for a comparison with plain monoidal categories and string diagrams.
For convenience, we assume this setting in the following code, but note that it is not strictly necessary, and that a type-theoretic implementation of monoidal categories would work just the same.

The following code inspired by Haskell's do-notation~\cite{hughes00} and it has been tested in the Glasgow Haskell Compiler, version 9.2.5. 
\begin{quote}
  \begin{verbatim}
    syn :: Client ~> (Client, Syn, Ack)
    syn(client) = do
      client <- random
      return (client, client, 0)
  \end{verbatim}
  \begin{verbatim}
    synack :: (Syn, Ack, Server) ~> (Syn, Ack, Server)
    synack(syn, ack, server) = do
      server <- random
      return (if syn == 0 then (0,0,0) else (server, ack+1, server))
  \end{verbatim}
  \begin{verbatim}
    noise :: Noise -> (Syn, Ack) ~> (Syn, Ack)
    noise k (syn,ack) = do
      noise <- binomial k
      return (if noise then (0,0) else (syn,ack))
  \end{verbatim}
  \begin{verbatim}
    ack :: (Client, Syn, Ack) ~> (Client, Syn, Ack)
    ack(client, syn, ack) = do
      return (if client+1 /= ack then (0,0,0) else (client+1, syn+1, client))
  \end{verbatim}
  \begin{verbatim}
    receive :: (Syn, Ack, Server) ~> Server
    receive(syn, ack, server) = do
      return (if server+1 /= ack then 0 else server)
  \end{verbatim}
\end{quote}

We can use the \produoidalCategory{} of lenses to provide a modular description of this protocol.

The programmer will not need to know about produoidal categories: they will be able to define \emph{splits} of a process; they will be able to read the type of the \emph{split} in terms of the send-receive steps of the protocol; they will be able to combine them, and the typechecker should produce an error whenever \dinaturality{} is not respected.
In fact, in the following code, naively combining client and server in a way that does not preserve dinaturality will produce a type error because GHC will not be able to match the types.
We present the description of the protocol, encoding send/receive types.

\begin{quote}
  \begin{verbatim}

  protocol :: 
    Split (Kleisli Distribution) Client Client 
      (Syn, Ack) -- ! 
      (Syn, Ack) -- ?
      (Syn, Ack) -- !
      ()         -- ? 
    -> Split (Kleisli Distribution) Server Server 
      ()         -- ! 
      (Syn, Ack) -- ?
      (Syn, Ack) -- !
      (Syn, Ack) -- ?
    -> (Client, Server) ~> (Client, Server)
  protocol 
    (Split (Kleisli client1) (Kleisli client2) (Kleisli client3)) 
    (Split (Kleisli server1) (Kleisli server2) (Kleisli server3)) 
    (client , server) = do
      (server, ())    <- server1(server)
      (client, (s,a)) <- client1(client)
      (s, a) <- noise 0.1 (s,a)
      (server, (s,a)) <- server2(server, (s,a))
      (s, a) <- noise 0.1 (s,a)
      (client, (s,a)) <- client2(client, (s,a))
      (s, a) <- noise 0.1 (s,a)
      (server)        <- server3(server, (s,a))
      (client)        <- client3(client, ())
      return (client, server)
  \end{verbatim}
\end{quote}

The following \Cref{fig:clientModuleCode} and \Cref{fig:serverModuleCode} show the separate Haskell code for the client and server modules.

\newpage
\begin{figure}
\begin{quote}
  \begin{verbatim}
  client :: Split (Kleisli Distribution) Client Client 
    (Syn, Ack) -- !
    (Syn, Ack) -- ? 
    (Syn, Ack) -- !
    ()         -- ? 
  client = Split {
       
      -- Part 1: Send a SYN message.
      part1 = Kleisli $ \client -> do
          client <- pure 10
          return (client, (client, 0))
      
      -- Part 2: Receive ACK, send ACK.
      , part2 = Kleisli $ \(client, (syn, ack)) -> do
          return (if client+1 /= syn then (0,(0,0)) else (client, (client+1, ack+1)))
  
      -- Part 3: Close protocol.
      , part3 = Kleisli $ \(client, ()) -> do
          return client
        
      }   
  \end{verbatim}
\end{quote}
\caption{Haskell code for the client module.}
\label{fig:clientModuleCode}
\end{figure}

\begin{figure}
  \begin{quote}
    \begin{verbatim}
    server :: Split (Kleisli Distribution) Server Server 
      ()         -- send    ==>
      (Syn, Ack) -- receive <== 
      (Syn, Ack) -- send    ==>
      (Syn, Ack) -- receive <==
    server = Split
    
        -- Part 1: Open protocol.
        { part1 = Kleisli $ \server -> do
            return (server, ())
    
        -- Part 2: Receive SYN and send ACK.
        , part2 = Kleisli $ \(server, (syn, ack)) -> do
            server <- pure 20
            return (if syn == 0 then (0,(0,0)) else (server, (syn+1, server)))
    
        -- Part 3: Receive ACK.
        , part3 = Kleisli $ \(server, (syn, ack)) -> do
            return (if server+1 /= ack then 0 else server)
        }
    \end{verbatim}
  \end{quote}
  \caption{Code for the server module.}
  \label{fig:serverModuleCode}
\end{figure}

\clearpage

\begin{figure}
  \begin{quote}
    \begin{verbatim}
  data Split c a b x y s t where
      Split :: { part1 :: c a       (m , x)
               , part2 :: c (m , y) (n , s)
               , part3 :: c (n , t) b 
               } -> Split c a b x y s t
  
  data Unit c a b where
      Unit :: { unit :: c a b } -> Unit c a b
  
  data Context c a b x y where
      Context :: { partA :: c a (m , x) 
                 , partB :: c (m , y) (m , b) 
                 } -> Context c a b x y

  type (a ~> b) = (a -> Distribution b) 
    \end{verbatim}
  \end{quote}
  \caption{Code describing the profunctors of monoidal lenses.}
  \label{fig:monoidalModuleCode}
\end{figure} \clearpage
\section{Profunctors and virtual structures}
\label{ax:sec:profunctors}
\begin{definition}
  A \emph{profunctor} $(P,≺,≻)$ between two categories $𝔸$ and $𝔹$, written $P(•; •) \colon 𝔸 ⧑ 𝔹$, is a family of sets $P(B; A)$ indexed by objects $𝔸$ and $𝔹$, and endowed with jointly functorial left and right actions of the morphisms of $𝔸$ and $𝔹$, respectively \cite{benabou00,loregian2021}.

  Explicitly, the types of these actions are $(≻) \colon \mathbb{B}(B',B) \times P(B,A) \to P(B',A)$, and $(≺) \colon P(B,A) \times \mathbb{A}(A,A') \to P(B,A')$. These must
  \begin{itemize}
      \item satisfy compatibility, $(f ≻ p) ≺ g = f ≻ (p ≺ g)$,
      \item preserve identities, $id ≻ p = p$, and $p ≺ id = p$,
      \item and preserve compositions, $(p ≺ f) ≺ g = p ≺  (f ⨾ g)$ and $f ≻ (g ≻ p) = (f ⨾ g) ≻ p$.
  \end{itemize}
\end{definition}

\begin{remark}
More succinctly, a \profunctor{} $P \colon 𝔸 ⧑ 𝔹$ is a functor $P \colon 𝔹\op × 𝔸 \to \mathbf{Set}$.
Analogously, a \profunctor{} $P \colon 𝔸 ⧒ 𝔹$ is a functor $P \colon 𝔸\op × 𝔹 \to \mathbf{Set}$, or a \profunctor{} $P \colon 𝔹 ⧑ 𝔸$.\footnote{Notation for profunctors conflicts in the literature. To side-step this problem, we use the symbols $(⧒)$ and $(⧑)$, where $\circ$ marks the contravariant ($\text{op}$) argument. This idea we take from Mike Shulman.} When presented as a family of sets with a pair of actions, profunctors are sometimes called bimodules.
\end{remark}

\begin{theorem}[Yoneda isomorphisms]
  \defining{linkyoneda}{}
  Let $ℂ$ be a category. There exist bijections between the following sets defined by coends. These are natural in the copresheaf $F \colon ℂ \to \mathbf{Set}$, the presheaf $G \colon ℂ\op \to \mathbf{Set}$ and $A \in ℂ$,
  $$∫^{X} ℂ(X;A) × F(X) \overset{y_1}{≅} F(A); \quad
  ∫^{X} ℂ(A;X) × G(X) \overset{y_2}{≅} G(A);$$
  and they are defined by $y_1(f｜α) = F(f)(α)$ and $y_2(g｜β) = G(g)(β)$. These are called \emph{Yoneda reductions} or \emph{Yoneda isomorphisms}, because they appear in the proof of Yoneda lemma. Moreover, any formal diagram constructed out of these reductions, products, identities and compositions commutes.
\end{theorem}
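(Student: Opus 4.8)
The plan is to unfold the coend appearing in $y_1$ as the quotient set $\left(\coprod_{X} \mathbb{C}(X;A) \times F(X)\right)/(\sim)$ described in the definition of dinatural equivalence, whose elements are classes $(f \mathbin{|} \alpha)$ of pairs $f \colon X \to A$ and $\alpha \in F(X)$, subject to the generating relation $(f \circ h \mathbin{|} \alpha) \sim (f \mathbin{|} F(h)(\alpha))$ for every $h \colon W \to X$ and $\alpha \in F(W)$. First I would define the candidate map $y_1 \colon (f \mathbin{|} \alpha) \mapsto F(f)(\alpha)$ and check that it descends to the quotient: the two sides of a generating relation are sent to $F(f \circ h)(\alpha)$ and $F(f)(F(h)(\alpha))$, which agree by functoriality of $F$. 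This is the only place the functor axioms of $F$ are used, and it is the content of ``well-definedness'' of the Yoneda reduction.

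For the bijection I would exhibit an explicit inverse $\eta \colon F(A) \to \int^{X} \mathbb{C}(X;A)\times F(X)$ sending $\beta \mapsto (\mathrm{id}_A \mathbin{|} \beta)$. That $y_1 \circ \eta = \mathrm{id}$ is immediate, since $F(\mathrm{id}_A) = \mathrm{id}$. The crucial step is $\eta \circ y_1 = \mathrm{id}$: starting from a class $(f \mathbin{|} \alpha)$ we obtain $\eta(F(f)(\alpha)) = (\mathrm{id}_A \mathbin{|} F(f)(\alpha))$, and applying the dinaturality relation with $h = f$ and hom-element $\mathrm{id}_A$ rewrites this as $(\mathrm{id}_A \circ f \mathbin{|} \alpha) = (f \mathbin{|} \alpha)$. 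Thus every class has the canonical representative with the identity in the hom-slot, which simultaneously gives injectivity and recovers surjectivity. The presheaf statement $y_2$ is formally dual: one replaces $\mathbb{C}(X;A)$ by $\mathbb{C}(A;X)$, precomposition by postcomposition, and $F$ by $G \colon \mathbb{C}^{op} \to \mathbf{Set}$, and the same three lines go through verbatim.

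Naturality in $A$, in $F$, and in $G$ I would verify by a short diagram chase on representatives: for $a \colon A \to A'$ and $\tau \colon F \Rightarrow F'$, both routes around the relevant naturality square send $(f \mathbin{|} \alpha)$ to the same element of $F'(A')$, using only functoriality and the naturality of $\tau$. The genuinely delicate part is the final clause, that every formal diagram built from these reductions, products, identities, and compositions commutes; this is a coherence statement rather than a single computation, and I expect it to be the main obstacle. I would discharge it not by enumerating diagrams but by reducing to the coherence of coend calculus \cite{loregian2021}: each reduction is a component of a natural isomorphism determined by the universal property of the relevant coend, so any two parallel formal composites are natural transformations out of the same colimit and are forced to coincide by the uniqueness of mediating maps. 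I would state this reduction precisely, isolating the formal language of coend expressions to which the cited coherence result applies, and defer the remaining bookkeeping to that reference.
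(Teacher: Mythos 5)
Your proof is correct, but note that the paper itself offers no proof of this statement at all: it is stated in the appendix as a standard fact of coend calculus, with the citation burden carried by \cite{loregian2021}. What you have reconstructed is precisely the standard argument (the co-Yoneda or ``ninja Yoneda'' lemma): unfold the coend as a quotient of the coproduct by the generated dinaturality relation, check that $y_1(f\,|\,\alpha)=F(f)(\alpha)$ respects the generators via functoriality of $F$, and exhibit the inverse $\beta\mapsto(\mathrm{id}_A\,|\,\beta)$, with the round trip $(\mathrm{id}_A\,|\,F(f)(\alpha))\sim(f\,|\,\alpha)$ following from a single application of the generating relation at $h=f$. The dual case and the naturality checks are as routine as you say. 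Since the paper leans on this theorem constantly (every associator, unitor, and laxator in Sections 3--6 is built from these reductions), making the proof explicit is a genuine service rather than redundancy.

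The one soft spot is your justification of the final coherence clause. As phrased, ``any two parallel formal composites are natural transformations out of the same colimit and are forced to coincide by the uniqueness of mediating maps'' proves too little: uniqueness of the mediating map applies only once the cocone is fixed, and two distinct formal composites determine a priori two cocones whose agreement is exactly what must be shown. A clean repair along the lines you indicate: every formal composite is natural in all variables, including the (co)presheaf variable; by density of representables it suffices to check commutativity when $F$ and $G$ are representable, where each reduction becomes composition in $\mathbb{C}$ and the diagrams commute by associativity and unitality. Alternatively, observe that every formal composite fixes the canonical representatives $(\mathrm{id}_A\,|\,\beta)$, through which every class factors. Either argument, or the explicit deferral to the coherence discussion in \cite{loregian2021}, closes the gap; your instinct to isolate this clause as the delicate part is right, but the appeal to universality alone does not discharge it.
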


\subsection{Promonads}
\begin{definition}
  \defining{linkpromonad}{}
  A \emph{promonad} $(P, ★, {}^{\circ})$ over a category $ℂ$ is a \profunctor{} $P \colon ℂ ⧒ ℂ$ together with two natural transformations representing inclusion $({}^{\circ}) \colon ℂ(X; Y) \to P(X; Y)$ and multiplication $(★) \colon P(X; Y) × P(Y ; Z) \to P(X; Z)$, and such that
  \begin{itemize}
    \item the left action is premultiplication, $f^{\circ} ★ p = f ≻ p$,
    \item the right action is postmultiplication, $p ★ f^{\circ} = p ≺ f$,
    \item multiplication is dinatural, $p ★ (f ≻ q) = (p ≺ f) ★ q$,
    \item and multiplication is associative, $(p_1 ★ p_2) ★ p_3 = p_1 ★ (p_2 ★ p_3)$.
  \end{itemize}
  
  Equivalently, promonads are monoids in the category of endoprofunctors. Every promonad induces a category, its \emph{Kleisli} category, with the same objects as the original $ℂ$, but with hom-sets given by the promonad, $P(\bullet; \bullet)$.~\cite{roman22}
\end{definition}

\subsection{Multicategories} \label{ax:sec:multicategories}

\textbf{Multicategories.} 
We can explain \promonoidal{} categories in terms of their better-known relatives: \emph{multicategories}. Multicategories can be used to describe (non-necessarily-coherent) decomposition. They contain \emph{multimorphisms}, $X \to Y_0,\dots,Y_n$ that represent a way of decomposing an object $X$ into a list of objects $Y_0, \dots , Y_n$.

\begin{definition}[Multicategory]
\label{def:multicategory}\defining{linkmulticategory}{}
A \emph{multicategory} is a category $ℂ$ endowed  with a set of multimorphisms, $ℂ(X;Y_0, \dots, Y_n)$ for each list of objects $X_0,\dots,X_n,Y$ in $\obj{ℂ}$, and a composition \Cref{fig:multicategoricalcomposition} operation
$$\begin{aligned}
  (⨾)^{n,m}_{Y_k}  \colon
  ℂ(X;Y_0,\dots,Y_n) ×
  ℂ(Y_i;Z_0, \dots, Z_m )
  → ℂ(Z; Y_0,\dots,X_0,\dots,X_m,\dots,Y_m).
\end{aligned}$$
Composition is unital, meaning $id_{X_i} ⨾ f = f ⨾ id_Y$ for any $f$ making the equation formally well-typed. Composition is also \emph{associative}, meaning
$(h ⨾ g) ⨾ f = h ⨾ (g ⨾ f); \mbox{ and } g ⨾ (h ⨾ f) = h ⨾ (g ⨾ f)$
holds whenever it is formally well-typed.
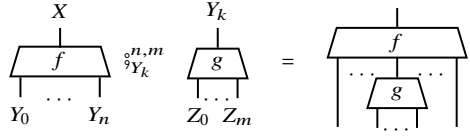
\begin{figure}[ht]
  \centering
\tikzset{every picture/.style={line width=0.75pt}} %

\begin{tikzpicture}[x=0.75pt,y=0.75pt,yscale=-1,xscale=1]
\draw   (10,49.5) -- (14.5,34.5) -- (55.5,34.5) -- (60,49.5) -- cycle ;
\draw    (35,35) -- (35,25) ;
\draw    (55,60) -- (55,50) ;
\draw    (15,60) -- (15,50) ;
\draw  [draw opacity=0] (20,70) -- (26,50) -- (44,50) -- (50,70) -- cycle ;
\draw   (99.5,50.5) -- (104,35.5) -- (125,35.5) -- (129.5,50.5) -- cycle ;
\draw    (104.5,60.5) -- (104.5,50.5) ;
\draw    (124.5,60.5) -- (124.5,50.5) ;
\draw  [draw opacity=0] (99.5,70) -- (105.5,50) -- (123.5,50) -- (129.5,70) -- cycle ;
\draw    (114.5,35) -- (114.5,25) ;
\draw  [draw opacity=0] (70,25) -- (85,25) -- (85,60) -- (70,60) -- cycle ;
\draw  [draw opacity=0] (140,25) -- (160,25) -- (160,60) -- (140,60) -- cycle ;
\draw  [draw opacity=0] (170.5,55) -- (175,40) -- (200.5,40) -- (205,55) -- cycle ;
\draw    (175,75) -- (175,40) ;
\draw   (170,40) -- (174.5,25) -- (235.5,25) -- (240,40) -- cycle ;
\draw    (205,25) -- (205,15) ;
\draw   (190,65.5) -- (194.5,50.5) -- (215.5,50.5) -- (220,65.5) -- cycle ;
\draw    (195,75.5) -- (195,65.5) ;
\draw    (215,75.5) -- (215,65.5) ;
\draw  [draw opacity=0] (205,55) -- (209.5,40) -- (235.5,40) -- (240,55) -- cycle ;
\draw    (205,50) -- (205,40) ;
\draw    (235,75) -- (235,40) ;
\draw  [draw opacity=0] (190,85) -- (196,65) -- (214,65) -- (220,85) -- cycle ;

\draw (35,42) node  [font=\footnotesize]  {$f$};
\draw (35,60) node  [font=\footnotesize,rotate=-180,xscale=-1]  {$\cdots $};
\draw (114.5,60) node  [font=\footnotesize,rotate=-180,xscale=-1]  {$\cdots $};
\draw (114.5,43) node  [font=\footnotesize]  {$g$};
\draw (77.5,42.5) node  [font=\footnotesize]  {$\comp^{n,m}_{Y_k}$};
\draw (150,42.5) node  [font=\footnotesize]  {$=$};
\draw (187.75,47.5) node  [font=\footnotesize,rotate=-180,xscale=-1]  {$\cdots $};
\draw (222.5,47.5) node  [font=\footnotesize,rotate=-180,xscale=-1]  {$\cdots $};
\draw (205,75) node  [font=\footnotesize,rotate=-180,xscale=-1]  {$\cdots $};
\draw (205,32.5) node  [font=\footnotesize]  {$f$};
\draw (205,58) node  [font=\footnotesize]  {$g$};
\draw (35,21.6) node [anchor=south] [inner sep=0.75pt]  [font=\footnotesize]  {$X$};
\draw (15,63.4) node [anchor=north] [inner sep=0.75pt]  [font=\footnotesize]  {$Y_{0}$};
\draw (55,63.4) node [anchor=north] [inner sep=0.75pt]  [font=\footnotesize]  {$Y_{n}$};
\draw (114.5,22.5) node [anchor=south] [inner sep=0.75pt]  [font=\footnotesize]  {$Y_{k}$};
\draw (104.5,63.9) node [anchor=north] [inner sep=0.75pt]  [font=\footnotesize]  {$Z_{0}$};
\draw (124.5,63.9) node [anchor=north] [inner sep=0.75pt]  [font=\footnotesize]  {$Z_{m}$};

\end{tikzpicture}
     \caption{Multicategorical composition.}
    \label{fig:multicategoricalcomposition}
\end{figure}
\end{definition}

\begin{proposition}
Multicategorical composition is dinatural on the object we are composing along. This is to say that composition, $(⨾)^{n,m}_k$, induces a well-defined and dinatural \emph{composition operation} on the coend the variable $Y_k$ we are composing along.
  $$\begin{aligned}
    (⨾)^{n,m}_{•_k}  \colon &
    \left(
    \int^{Y_k \in ℂ}ℂ(X;Y_0,\dots,Y_n) ×
    ℂ(Y_k;Z_0, \dots, Z_m )
    \right)
   → ℂ(Z; Y_0,\dots,X_0,\dots,X_m,\dots,Y_m).
  \end{aligned}$$

  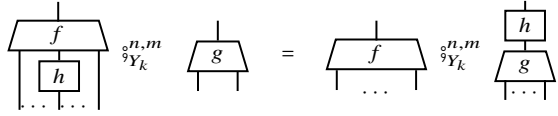
\begin{figure}[ht]
    \centering

\tikzset{every picture/.style={line width=0.75pt}} %

\begin{tikzpicture}[x=0.75pt,y=0.75pt,yscale=-1,xscale=1]
\draw   (19.5,35) -- (24,20) -- (65,20) -- (69.5,35) -- cycle ;
\draw    (44.5,20.5) -- (44.5,10.5) ;
\draw    (65,65) -- (65,35) ;
\draw    (25,65) -- (25,35) ;
\draw  [draw opacity=0] (29.5,55.5) -- (35.5,35.5) -- (53.5,35.5) -- (59.5,55.5) -- cycle ;
\draw   (110,45.5) -- (114.5,30.5) -- (135.5,30.5) -- (140,45.5) -- cycle ;
\draw    (115,55.5) -- (115,45.5) ;
\draw    (135,55.5) -- (135,45.5) ;
\draw    (125,30) -- (125,20) ;
\draw  [draw opacity=0] (79.5,20) -- (94.5,20) -- (94.5,55) -- (79.5,55) -- cycle ;
\draw   (35,40.5) -- (55,40.5) -- (55,55.5) -- (35,55.5) -- cycle ;
\draw    (45,40) -- (45,37.84) -- (45,35) ;
\draw    (45,65) -- (45,55) ;
\draw  [draw opacity=0] (15,69.5) -- (19.5,54.5) -- (45.5,54.5) -- (50,69.5) -- cycle ;
\draw  [draw opacity=0] (35,69.5) -- (39.5,54.5) -- (65.5,54.5) -- (70,69.5) -- cycle ;
\draw  [draw opacity=0] (150.5,20) -- (170.5,20) -- (170.5,55) -- (150.5,55) -- cycle ;
\draw   (180,44.5) -- (184.5,29.5) -- (225.5,29.5) -- (230,44.5) -- cycle ;
\draw    (205,30) -- (205,20) ;
\draw    (225,55) -- (225,45) ;
\draw    (185,55) -- (185,45) ;
\draw  [draw opacity=0] (190,65) -- (196,45) -- (214,45) -- (220,65) -- cycle ;
\draw   (265,50.5) -- (269.5,35.5) -- (290.5,35.5) -- (295,50.5) -- cycle ;
\draw    (270,60) -- (270,50) ;
\draw    (290,60) -- (290,50) ;
\draw  [draw opacity=0] (265,65) -- (271,45) -- (289,45) -- (295,65) -- cycle ;
\draw    (280,35) -- (280,30) ;
\draw  [draw opacity=0] (240,20) -- (255,20) -- (255,55) -- (240,55) -- cycle ;
\draw  [draw opacity=0] (310,20) -- (330,20) -- (330,55) -- (310,55) -- cycle ;
\draw   (270,15) -- (290,15) -- (290,30) -- (270,30) -- cycle ;
\draw    (280,15) -- (280,12.84) -- (280,10) ;

\draw (44.5,27.5) node  [font=\footnotesize]  {$f$};
\draw (125,38) node  [font=\footnotesize]  {$g$};
\draw (87,37.5) node  [font=\footnotesize]  {$\comp^{n,m}_{Y_k}$};
\draw (45,48) node  [font=\footnotesize]  {$h$};
\draw (32.5,62) node  [font=\footnotesize,rotate=-180,xscale=-1]  {$\cdots $};
\draw (52.5,62) node  [font=\footnotesize,rotate=-180,xscale=-1]  {$\cdots $};
\draw (160.5,37.5) node  [font=\footnotesize]  {$=$};
\draw (205,37) node  [font=\footnotesize]  {$f$};
\draw (205,55) node  [font=\footnotesize,rotate=-180,xscale=-1]  {$\cdots $};
\draw (280,55) node  [font=\footnotesize,rotate=-180,xscale=-1]  {$\cdots $};
\draw (280,43) node  [font=\footnotesize]  {$g$};
\draw (247.5,37.5) node  [font=\footnotesize]  {$\comp^{n,m}_{Y_k}$};
\draw (280,22) node  [font=\footnotesize]  {$h$};

\end{tikzpicture}
       \caption{Multicategorical composition is dinatural.}
      \label{fig:multicategoricalcompositiondinatural}
  \end{figure}
\end{proposition}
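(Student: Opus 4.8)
The plan is to define the coend-level operation by the universal property of the coend and reduce everything to a single compatibility check. Recall that
$\int^{Y_k \in ℂ} ℂ(X; Y_0,\dots,Y_n) × ℂ(Y_k; Z_0,\dots,Z_m)$
is by definition the coproduct $\sum_{Y_k} ℂ(X;Y_0,\dots,Y_n) × ℂ(Y_k;Z_0,\dots,Z_m)$ quotiented by the smallest dinatural equivalence. To specify a map out of it, it therefore suffices to give a map out of each summand --- which is exactly the multicategorical composition $(⨾)^{n,m}_{Y_k}$ of \Cref{def:multicategory} --- and then to verify that these maps agree on dinaturally related elements. Once this is done, the induced map $(⨾)^{n,m}_{•_k}$ is automatically well-defined and unique, and it is enough to check compatibility on the \emph{generating} relation rather than on its full equivalence closure.

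First I would pin down the variances so that the generating relation reads correctly. In the first factor $ℂ(X;Y_0,\dots,Y_k,\dots,Y_n)$ the object $Y_k$ is a covariant output leg, so it carries a right action $(≺_k)$ given by postcomposition on the $k$-th leg; in the second factor $ℂ(Y_k;Z_0,\dots,Z_m)$ the object $Y_k$ is the contravariant domain, so it carries a left action $(≻)$ given by precomposition. Hence, for any morphism $h \colon Y_k \to Y_k'$ viewed as a unary multimorphism, the dinaturality relation of the coend identifies the two representatives
$$(f ≺_k h) \mathbin{|} g \ \sim\ f \mathbin{|} (h ≻ g),$$
where on the left $f ≺_k h \in ℂ(X;\dots,Y_k',\dots)$ is composed with $g \in ℂ(Y_k';\dots)$, and on the right $f \in ℂ(X;\dots,Y_k,\dots)$ is composed with $h ≻ g \in ℂ(Y_k;\dots)$.

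The core step is to show that $(⨾)^{n,m}$ sends both representatives to the same multimorphism. Applying composition on the left plugs $g$ into the $k$-th leg of $f$ after that leg has been postcomposed with $h$, yielding $f$ followed by $h$ followed by $g$ along the $k$-th leg. Applying composition on the right plugs the precomposite $h ≻ g$ into the $k$-th leg of $f$, yielding $f$ followed by $(h \text{ then } g)$ along the $k$-th leg. These two multimorphisms coincide precisely by the associativity axiom of \Cref{def:multicategory}, in the form $(h ⨾ g) ⨾ f = h ⨾ (g ⨾ f)$ with $h$ unary. This is exactly the geometric content of \Cref{fig:multicategoricalcompositiondinatural}: sliding the box $h$ across the contraction boundary leaves the composite diagram unchanged.

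Finally I would record that the induced operation is (di)natural in the remaining variables $X$, the $Z_j$, and the unaffected $Y_i$, which is immediate from the bifunctoriality of the two hom-profunctors together with the unit and associativity axioms, and introduces no new idea. The main obstacle is therefore not conceptual but purely a matter of bookkeeping: one must keep the two variances of $Y_k$ straight and confirm that the single generating relation of the coend matches exactly one instance of associativity; all the remaining naturality verifications are routine.
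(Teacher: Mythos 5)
Your proposal is correct and follows essentially the same route as the paper: the paper's own proof is the single sentence that the claim ``is a direct consequence of the associativity of composition for multicategories,'' and your argument simply makes explicit that the generating dinaturality relation $(f ≺_k h)\mathbin{|}g \sim f\mathbin{|}(h ≻ g)$ is matched by one instance of the associativity axiom with $h$ unary, which is exactly the intended content. The only substantive elaboration you add --- reducing the check to the generating relation via the universal property of the coend --- is the standard bookkeeping the paper leaves implicit, so no further comment is needed.
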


\begin{proof}
    This is a direct consequence of the associativity of composition for multicategories, inducing an isomorphism.
\end{proof}

\begin{remark}
  A promonoidal category is a multicategory where \emph{dinatural composition} is invertible.
\end{remark}

\textbf{Duomulticategories} describe the interaction between two kinds of decomposition: a \emph{sequential} one and a \emph{parallel} one. We can mix this two ways of decomposing: for instance, we can decompose $X$ sequentially and then decompose each one of its factors in parallel, finally decompose the last one of these sequentially again.
$$ℂ(X; (Y_0 \cdot Y_1),  (Y_2 \cdot (Y_3, Y_4))).$$

\begin{definition}[Duomulticategory]
  A \emph{duomulticategory} is a category $ℂ$ endowed with a set of multimorphisms,
  $ℂ(X ; E(Y_0,\dots,Y_n)),$ for each list of objects $Y_0, \dots, Y_n$ in $\obj{ℂ}$ and each expression $E$ on two monoids. Moreover, it is endowed with a dinatural composition operation
  $$\begin{aligned}
    (⨾)^{n,m}_{Y_k}  \colon &
    \int^{Y_i} ℂ(X;E_1[Y_0,\dots,Y_n]) ×
    ℂ(Y_i; E_2[Z_0, \dots, Z_m] )
    \longrightarrow ℂ(X; E_1[Y_0,\dots,E_2[X_0,\dots,X_m],\dots,Y_m),
  \end{aligned}$$
  and laxators relating sequential and parallel composition,
  $$\begin{aligned}
    ℂ(X; & E_1[Y_0,\dots,((Z_0,Z_1) \cdot (Z_2,Z_3)),\dots,Y_n]) \longrightarrow 
    ℂ(X;  E_1[Y_0,\dots,((Z_0 \cdot Z_2), (Z_1 \cdot Z_3)),\dots,Y_n]).\end{aligned}$$
\end{definition}

\begin{remark}
  In the same sense that a \promonoidalCategory{} is a category where dinatural composition is invertible in a specific sense, a \produoidalCategory{} can be conjectured to be a duomulticategory where dinatural composition is invertible, inducing an isomorphism.
\end{remark}
\section{Sequential Context}
\label{sec:ax:sequentialContext}
\begin{proposition}[From \Cref{prop:contourFunctor}] \label{ax:prop:contourFunctor}
  Contour gives a functor $𝓒 : \pMon → \Cat$.
\end{proposition}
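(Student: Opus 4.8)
The plan is to use the explicit presentation of the contour given in \Cref{defn:contour}: for a fixed promonoidal category $\mathbb{V}$, the category $\mathcal{C}\mathbb{V}$ is freely generated by the edges $a_0$; $b_0, b_1$; $c_0, c_1, c_2$ attached respectively to the units, morphisms and splits of $\mathbb{V}$, modulo the congruence generated by the equations read off from $\alpha$, $\lambda$ and $\rho$ (\Cref{fig:promonoidalContourCompose}). The object assignment $X \mapsto \{X^L, X^R\}$ is already fixed by the definition, so it remains to define the action on arrows for a promonoidal functor and to verify the two functor laws.

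First I would define $\mathcal{C}F$ for a promonoidal functor $F \colon \mathbb{V} \to \mathbb{W}$ with components $(F, F_{\triangleleft}, F_N)$. On objects, set $\mathcal{C}F(X^L) = (FX)^L$ and $\mathcal{C}F(X^R) = (FX)^R$. On generators, send each generator attached to a unit, morphism or split of $\mathbb{V}$ to the generator attached to its image under the corresponding component: $a_0 \mapsto (F_N a)_0$, the pair $(b_0,b_1) \mapsto ((Fb)_0,(Fb)_1)$, and the triple $(c_0,c_1,c_2) \mapsto ((F_{\triangleleft} c)_0,(F_{\triangleleft} c)_1,(F_{\triangleleft} c)_2)$. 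These assignments are type-correct: for instance $Fb \in \mathbb{W}(FB;FX)$, so $(Fb)_0 \in \mathcal{C}\mathbb{W}((FB)^L;(FX)^L) = \mathcal{C}\mathbb{W}(\mathcal{C}F(B^L); \mathcal{C}F(X^L))$. By the universal property of a category presented by generators and relations, this data extends to a unique functor as soon as the images satisfy the defining relations.

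The main step, and what I expect to be the only delicate part, is checking that $\mathcal{C}F$ respects the relations. Each relation of $\mathcal{C}\mathbb{V}$ is indexed by a coherence equation of $\mathbb{V}$, e.g. $\alpha(a \mid b) = (c \mid d)$, which imposes $a_0 = c_0 ⨾ d_0$, $a_1 ⨾ b_0 = d_1$, $b_1 = d_2 ⨾ c_1$ and $a_2 ⨾ b_2 = c_2$. The promonoidal functor axioms of \Cref{def:promonoidalfunctor} state precisely that $F_{\triangleleft}, F_N$ commute with $\alpha, \lambda, \rho$; since these components are natural, they descend to the coend composites, and so they transport the equation above to the matching coherence equation $\alpha(F_{\triangleleft}a \mid F_{\triangleleft}b) = (F_{\triangleleft}c \mid F_{\triangleleft}d)$ in $\mathbb{W}$. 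The relation this equation imposes in $\mathcal{C}\mathbb{W}$ is syntactically the same formula in the indexed generators, hence it is exactly the image under $\mathcal{C}F$ of the original $\mathcal{C}\mathbb{V}$-relation; therefore the images of both sides coincide. The same argument, now using the $\lambda$ and $\rho$ axioms, handles the unitor relations coming from $\rho(a\mid b) = c = \lambda(d\mid e)$. The bookkeeping of matching coend representatives on each side is where care is needed, but it is forced by the naturality of the functor components.

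Finally I would verify the two functoriality laws, which are immediate from the construction. The identity promonoidal functor has identity components, so $\mathcal{C}(\mathrm{id}_\mathbb{V})$ fixes every object and generator and thus equals $\mathrm{id}_{\mathcal{C}\mathbb{V}}$. For a composite $G \circ F$, the components are the composites $(G\circ F,\ G_{\triangleleft}F_{\triangleleft},\ G_N F_N)$, so $\mathcal{C}(G \circ F)$ and $\mathcal{C}G \circ \mathcal{C}F$ agree on objects and on every generator; since a functor out of a presented category is determined by its values there, the two functors are equal. This establishes functoriality and completes the construction of $\mathcal{C} \colon \pMon \to \Cat$.
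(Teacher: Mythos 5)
Your proposal is correct and follows essentially the same route as the paper's proof: define $\mathcal{C}F$ on the presentation by sending each generator attached to a unit, morphism, or split to the generator attached to its image under $F_N$, $F$, or $F_{\triangleleft}$, check that the contour relations are preserved because a promonoidal functor commutes with $\alpha$, $\lambda$, $\rho$, and observe that identities and composites are preserved since a functor out of a presented category is determined on generators. Your treatment of the relation-preservation step is in fact more explicit than the paper's one-line justification, but the argument is the same.
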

\begin{proof}
  \Cref{defn:contour} defines the action on \promonoidalCategories{}. We define the action on \promonoidalFunctors{}. Given a \promonoidalFunctor{} $F : 𝕍 → 𝕎$, define the functor $𝓒F : 𝓒𝕍 → 𝓒𝕎$ by the following morphism of presentations:
  $$\begin{aligned}
    X^{L} ↦ F(X)^{L};& \ X^{R} ↦ F(X)^{R} \\
    \text{for each $a \in 𝕍(A;N)$,\ }& a_0 : A^L \to A^R  ↦ F_N(a)_0 \\
    \text{for each $b \in 𝕍(X;B)$,\ }& b_0 : B^L \to X^L ↦ F(b)_0; \enspace b_1 : X^R \to B^R ↦ F(b)_1 \\
    \text{for each $c \in 𝕍(C; Y ◁ Z)$,\ }& c_0 : C^L \to Y^L ↦ F_{◁}(c)_0; \enspace c_1 : Y^R \to Z^L ↦ F_{◁}(c)_1; \enspace c_2 : Z^R \to C^R ↦ F_{◁}(c)_2.
    \end{aligned}$$

  It follows from $F: 𝕍 \to 𝕎$ being a \promonoidalFunctor{} that the contour equations of \Cref{defn:contour} hold between the images of generators, so this defines a functor. In particular when $\idFun{_𝕍} : 𝕍 \to 𝕍$ is an identity, it is an identity functor. Let $G : 𝕌 → 𝕍$ be another \promonoidalFunctor{}, then $𝓒(G⨾F) = 𝓒(G)⨾𝓒(F)$ follows from the composition of \promonoidalFunctors{}.
\end{proof}

\begin{proposition}[From \Cref{prop:spliceIsPromonoidal}]
    \label{ax:prop:spliceIsPromonoidal}
    Spliced arrows form a \promonoidalCategory{} with their sequential splits, units, and suitable coherence morphisms.
\end{proposition}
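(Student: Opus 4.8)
The plan is to verify directly each clause of Definition \ref{def:promonoidal}, letting coend calculus absorb the structural work. I would begin with the underlying category. A morphism $f ⨾ \square ⨾ g \in \Splice{ℂ}(\biobj{A}{B}; \biobj{X}{Y})$ is a pair $(f \colon A \to X,\ g \colon Y \to B)$, so the morphism profunctor is literally the hom-set of $ℂ × ℂ\op$; composition is hole-filling, $(f ⨾ \square ⨾ g) ⨾ (f' ⨾ \square ⨾ g') = (f ⨾ f') ⨾ \square ⨾ (g' ⨾ g)$, the identity is $\mathrm{id}_A ⨾ \square ⨾ \mathrm{id}_B$, and associativity and unitality are inherited from $ℂ$.

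Next I would install the two promonoidal profunctors, $\Splice{ℂ}(•; • ◁ •)$ and $\Splice{ℂ}(•; N)$. Their left and right actions are the hole-filling operations indicated in the main text: an incoming $\Splice{ℂ}$-morphism precomposes on the $A$-side and postcomposes on the $B$-side, while an outgoing morphism in either hole acts by pre/postcomposition at the corresponding boundary. Checking joint functoriality — compatibility of the two actions, and preservation of identities and composites — reduces to associativity and unitality in $ℂ$, once one keeps the contravariance of the $Y$/$B$ side (the $ℂ\op$ factor) aligned.

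The core computation is the coend-composition of splits. For the associator I would expand $\Splice{ℂ}(\biobj{A}{B}; \biobj{X_1}{Y_1} ◁ •) \diamond \Splice{ℂ}(•; \biobj{X_2}{Y_2} ◁ \biobj{X_3}{Y_3})$ as a coend over a pair $\biobj{M}{N}$ and apply two Yoneda reductions, one for each variance, collapsing it to $ℂ(A;X_1) × ℂ(Y_1;X_2) × ℂ(Y_2;X_3) × ℂ(Y_3;B)$; the other bracketing collapses to the same set, which defines $α$ and shows both equal the ternary split $\Splice{ℂ}(\biobj{A}{B}; \biobj{X_1}{Y_1} ◁ \biobj{X_2}{Y_2} ◁ \biobj{X_3}{Y_3})$. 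The same reductions applied to $\Splice{ℂ}(\biobj{A}{B}; \biobj{X}{Y} ◁ •) \diamond \Splice{ℂ}(•; N)$ and its mirror land on $ℂ(A;X) × ℂ(Y;B)$, giving $ρ$ and $λ$; in each case the isomorphism is exactly the canonical hole-filling map, hence natural.

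Finally, the pentagon and triangle equations come for free. Both sides of each are formal diagrams assembled solely from Yoneda reductions, products, identities and composites, and coend calculus guarantees that every such diagram commutes; so the two composites of coherence isomorphisms agree. I expect the main obstacle to be bookkeeping rather than mathematics: tracking the covariant $X$-side against the contravariant $Y$-side through the coends, and confirming that the Yoneda-produced isomorphisms coincide on the nose with the intended hole-filling associator and unitors, so that coherence can legitimately be deferred to the commutativity of Yoneda diagrams.
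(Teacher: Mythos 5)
Your proposal is correct and follows essentially the same route as the paper: the associator and both unitors are obtained by coend calculus, collapsing each bracketed composite of splits via Yoneda reductions to the same set of "flattened" tuples of arrows (e.g.\ $ℂ(A;X_1) × ℂ(Y_1;X_2) × ℂ(Y_2;X_3) × ℂ(Y_3;B)$ for the associator), after which the pentagon and triangle equations hold automatically because every structure map is assembled from Yoneda isomorphisms. The only difference is that you spell out the underlying category and the profunctor actions explicitly, which the paper folds into the definition of spliced arrows.
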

\begin{proof}
    In \Cref{lemma:promonoidalAssociator}, we construct the associator out of \YonedaIsomorphisms{}. In \Cref{lemma:spliceLeftUnitor,lemma:spliceRightUnitor}, we construct both unitors. As they are all constructed with Yoneda isomorphisms, they must satisfy the coherence equations.
\end{proof}

\begin{lemma}[Promonoidal splice associator]
  \label{lemma:promonoidalAssociator}
  We can construct a natural isomorphism,
  $$α \colon \int^{\biobj{U}{V} \in \mathbb{SC}}
    \SCtensor{A}{B}{X}{Y}{U}{V} ×
    \SCtensor{U}{V}{X'}{Y'}{X''}{Y''} \cong
    \int^{\biobj{U}{V} \in \mathbb{SC}}
    \SCtensor{A}{B}{U}{V}{X''}{Y''} ×
    \SCtensor{U}{V}{X}{Y}{X'}{Y'},$$
  exclusively from \YonedaIsomorphisms{}. This isomorphism is defined by stating that $α(\trisplice{f_0}{f_1}{f_2}|\trisplice{g_0}{g_1}{g_2}) = (\trisplice{h_0}{h_1}{h_2}|\trisplice{k_0}{k_1}{k_2})$ if and only if
  $$\quasplice{f_0 ⨾ g_0}{g_1}{g_2 ⨾ f_1}{g_2} = \quasplice{h_0}{h_1 ⨾ k_0}{k_1}{k_2 ⨾ h_2}.$$
\end{lemma}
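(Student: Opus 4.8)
The plan is to prove the isomorphism by expanding both coends with \Cref{defn:splicedArrows} and then collapsing each to the \emph{same} three\hyp{}hole splice profunctor via Yoneda reductions. Writing out the left\hyp{}hand coend, its integrand is $ℂ(A;X) × ℂ(Y;U) × ℂ(V;B) × ℂ(U;X') × ℂ(Y';X'') × ℂ(Y'';V)$, and the right\hyp{}hand integrand is $ℂ(A;U) × ℂ(V;X'') × ℂ(Y'';B) × ℂ(U;X) × ℂ(Y;X') × ℂ(Y';V)$. Since the underlying category of $\mathbb{SC}$ is $ℂ × ℂ\op$, the coend over $\biobj{U}{V}$ is, by Fubini, a pair of independent coends over $U$ and over $V$, and I would perform these two reductions separately.

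The key observation is that in each integrand the variable $U$ occurs exactly once covariantly and once contravariantly, and likewise for $V$, so each collapses by a single \YonedaIsomorphism{} $y_1$ (the ``composition'' reduction). On the left, $\int^U ℂ(Y;U) × ℂ(U;X') ≅ ℂ(Y;X')$ glues $f_1$ and $g_0$ into $f_1 ⨾ g_0$, while $\int^V ℂ(V;B) × ℂ(Y'';V) ≅ ℂ(Y'';B)$ glues $g_2$ and $f_2$ into $g_2 ⨾ f_2$; what remains is the tuple $(f_0,\ f_1 ⨾ g_0,\ g_1,\ g_2 ⨾ f_2)$, an element of $ℂ(A;X) × ℂ(Y;X') × ℂ(Y';X'') × ℂ(Y'';B)$, which is exactly the three\hyp{}hole split $\Splice{ℂ}(\biobj{A}{B}; \biobj{X}{Y} ◁ \biobj{X'}{Y'} ◁ \biobj{X''}{Y''})$. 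Performing the two analogous reductions on the right glues $h_0$ with $k_0$ and $k_2$ with $h_1$, landing in the same set as $(h_0 ⨾ k_0,\ k_1,\ k_2 ⨾ h_1,\ h_2)$; equating the two reducts is precisely the equation recorded in the statement.

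I would then \emph{define} $α$ to be the composite of the left reduction with the inverse of the right one. Because it is assembled purely from the isomorphisms $y_1$, their inverses, products and composition, it is automatically a natural isomorphism in all eight boundary objects, and — invoking the final clause of the Yoneda theorem, that every formal diagram of such reductions commutes — it will satisfy the pentagon and triangle laws for free, which is what \Cref{ax:prop:spliceIsPromonoidal} ultimately requires. The explicit characterization in the statement is just read off by chasing a representative pair through these gluings. I expect the main obstacle to be purely bookkeeping: one must verify the variance of every factor so that each of $U$ and $V$ genuinely appears once in each slot (this is what licenses $y_1$), keep the two Fubini reductions independent (they touch disjoint factors), and track the composites carefully, since it is exactly here — in deciding which arrow is pre\hyp{} versus post\hyp{}composed — that indexing errors creep in.
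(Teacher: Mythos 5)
Your proposal is correct and follows essentially the same route as the paper: both sides are collapsed by Yoneda reductions to the canonical quadruple set $ℂ(A;X) × ℂ(Y;X') × ℂ(Y';X'') × ℂ(Y'';B)$, and $α$ is defined as the composite of one reduction with the inverse of the other; your Fubini split of the coend over $\biobj{U}{V}$ into separate coends over $U$ and $V$ is just a finer-grained presentation of the paper's single reduction in $\Splice{ℂ} \cong ℂ × ℂ\op$. Your canonical representatives $(f_0,\ f_1 ⨾ g_0,\ g_1,\ g_2 ⨾ f_2)$ and $(h_0 ⨾ k_0,\ k_1,\ k_2 ⨾ h_1,\ h_2)$ agree with those derived in the body of the paper's proof (the displayed equation in the lemma statement itself appears to contain index typos).
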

\begin{proof}
  We will show that both sides of the equation are isomorphic to $ℂ(A;X) × ℂ(Y;X') × ℂ(Y';X'') × ℂ(X'';B)$; that is, the set of quadruples of morphisms $\quasplice{p_0}{p_1}{p_2}{p_3}$ where $p_0 \colon A \to X$, $p_1 \colon Y \to X'$, $p_2 \colon Y' \to X'$ and $p_3 \colon Y'' \to B$.

  Indeed, the following coend calculus computation constructs an isomorphism,
    $$\begin{aligned}
    & \int^{\biobj{U}{V} \in 𝓢{ℂ}}
    \SCtensor{A}{B}{X}{Y}{U}{V} ×
    \SCtensor{U}{V}{X'}{Y'}{X''}{Y''} &\quad
    = & \quad\mbox{(by definition)} \\
    & \int^{\biobj{U}{V} \in 𝓢{ℂ}}
    ℂ(A; X) × ℂ(Y ; U) × ℂ(V; B) × ℂ(U;X') × ℂ(Y' ; X'') × ℂ(Y''; V) &\quad
    = & \quad\mbox{(by definition)} \\
    & \int^{\biobj{U}{V} \in 𝓢{ℂ}}
    ℂ(A; X) × \SChom{Y}{B}{U}{V} × ℂ(U;X') × ℂ(Y' ; X'') × ℂ(Y''; V) &\quad
    \cong & \quad\mbox{(\byYonedaReduction{})} \\
    & ℂ(A; X) × ℂ(Y;X') × ℂ(Y' ; X'') × ℂ(Y''; B),\phantom{\int}
    \end{aligned}$$
  that sends a pair $\trisplice{f_0}{f_1}{f_2}|\trisplice{g_0}{g_1}{g_2}$, quotiented by the equivalence relation generated by $\trisplice{f_0}{f_1 ⨾ n}{m ⨾ f_2}|\trisplice{g_0}{g_1}{g_2} = \trisplice{f_0}{f_1}{f_2}|\trisplice{n ⨾ g_0}{g_1}{m ⨾ g_2}$, to the canonical form $\quasplice{f_0}{f_1 ⨾ g_0}{g_1}{g_2 ⨾ f_2}$.

  In the same way, the following coend calculus computation constructs the second isomorphism,
    \begin{align*}
    & \int^{\biobj{U}{V} \in 𝓢{ℂ}}
    \SCtensor{A}{B}{U}{V}{X''}{Y''} ×
    \SCtensor{U}{V}{X}{Y}{X'}{Y'} 
    & \bydef \\
    & \int^{\biobj{U}{V} \in 𝓢{ℂ}}
    ℂ(A; U) × ℂ(V ; X'') × ℂ(Y''; B) × ℂ(U;X) × ℂ(Y ; X') × ℂ(Y'; V) 
    & \bydef \\
    & \int^{\biobj{U}{V} \in 𝓢{ℂ}}
    \SChom{A}{X''}{U}{V} × ℂ(Y''; B) × ℂ(U;X) × ℂ(Y ; X') × ℂ(Y'; V) 
    & \yo1 \\
    & ℂ(A; X) × ℂ(Y;X') × ℂ(Y' ; X'') × ℂ(Y''; B),\phantom{\int}
    \end{align*}
  that sends a pair $\trisplice{h_0}{h_1}{h_2}|\trisplice{k_0}{k_1}{k_2}$, quotiented by the equivalence relation generated by $\trisplice{h_0 ⨾ n}{m ⨾ h_1}{h_2}|\trisplice{k_0}{k_1}{k_2} = \trisplice{h_0}{h_1}{h_2}|\trisplice{n ⨾ k_0}{k_1}{k_2 ⨾ m}$, to the canonical form $\quasplice{h_0 ⨾ k_0}{k_1}{k_2 ⨾ h_1}{h_2}$.

  In summary, we have that $α(\trisplice{f_0}{f_1}{f_2}|\trisplice{g_0}{g_1}{g_2}) = (\trisplice{h_0}{h_1}{h_2}|\trisplice{k_0}{k_1}{k_2})$ if and only if $$\quasplice{f_0 ⨾ g_0}{g_1}{g_2 ⨾ f_1}{g_2} = \quasplice{h_0}{h_1 ⨾ k_0}{k_1}{k_2 ⨾ h_2},$$
  which is what we wanted to prove.
\end{proof}

\begin{lemma}[Promonoidal splice left unitor]
  \label{lemma:spliceLeftUnitor}
  We can construct a natural isomorphism,
  $$λ \colon \int^{\biobj{U}{V} \in \mathbb{SC}}
  \SCtensor{A}{B}{U}{V}{X}{Y} × \SCunit{U}{V} \cong \SChom{A}{B}{X}{Y},$$
  exclusively from \YonedaIsomorphisms{}. This isomorphism is defined by $λ(\trisplice{f_0}{f_1}{f_2}|g) = \bisplice{f_0 ⨾ g ⨾ f_1}{f_2}$.
\end{lemma}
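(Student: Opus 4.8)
The plan is to prove this exactly as the associator in \Cref{lemma:promonoidalAssociator}: unfold the two profunctors into a coend of plain hom-sets and then collapse that coend with the Yoneda isomorphisms $y_1$ and $y_2$, so that naturality and coherence come for free rather than by hand.

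First I would unfold the definitions. A sequential split in $\Splice{ℂ}(\biobj{A}{B}; \biobj{U}{V} ◁ \biobj{X}{Y})$ is a triple $(f_0 \colon A \to U,\; f_1 \colon V \to X,\; f_2 \colon Y \to B)$, and a unit in $\Splice{ℂ}(\biobj{U}{V}; N)$ is a single arrow $g \colon U \to V$. Hence the left-hand side is the coend
$$\int^{\biobj{U}{V} \in \Splice{ℂ}} ℂ(A;U) × ℂ(V;X) × ℂ(Y;B) × ℂ(U;V),$$
and the essential bookkeeping is the variance of each factor as a functor of $\biobj{U}{V}$: the factor $ℂ(A;U)$ is covariant in $U$, $ℂ(V;X)$ is contravariant in $V$, and the unit factor $ℂ(U;V)$ is contravariant in $U$ and covariant in $V$, reflecting that $V$ is the right-handed (contravariant) component of a spliced object and $U$ the left-handed one.

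Then I would reduce in two steps, tracking the element simultaneously. The $U$-dependent factors are $ℂ(A;U) × ℂ(U;V)$; since $ℂ(U;V)$ is a presheaf in $U$, the Yoneda isomorphism $y_2$ gives $\int^{U} ℂ(A;U) × ℂ(U;V) \cong ℂ(A;V)$, sending $(f_0 \mathbin{|} g)$ to $f_0 ⨾ g$. This leaves $\int^{V} ℂ(A;V) × ℂ(V;X) × ℂ(Y;B)$, where now $ℂ(A;V)$ is a copresheaf in $V$ and $ℂ(V;X)$ is the representable presheaf, so $y_1$ gives $\int^{V} ℂ(A;V) × ℂ(V;X) \cong ℂ(A;X)$, sending $(f_0 ⨾ g \mathbin{|} f_1)$ to $f_0 ⨾ g ⨾ f_1$. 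The remaining factor $ℂ(Y;B)$ is untouched, so the result is $ℂ(A;X) × ℂ(Y;B) = \SChom{A}{B}{X}{Y}$, and composing the two bijections carries $(\trisplice{f_0}{f_1}{f_2} \mathbin{|} g)$ to $\bisplice{f_0 ⨾ g ⨾ f_1}{f_2}$, which is exactly the claimed description of $λ$.

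Finally, because $λ$ is assembled solely from $y_1$ and $y_2$ together with products and identities, it is automatically natural in every boundary object, and the closing clause of the Yoneda theorem (that any formal diagram built from such reductions commutes) is what will later let me check the triangle coherence law against $α$ and $ρ$ with no extra computation. I expect the only real hazard to be the variance tracking in the first paragraph: one must confirm that the coend runs over the underlying category $\Splice{ℂ} \simeq ℂ × ℂ\op$ and that the dinaturality relation generated by the $(m,n)$-actions on the unit factor is precisely the relation eliminated by $y_2$ and then $y_1$. That identification, rather than any genuine calculation, is the single place where a sign-of-variance slip could invalidate the argument.
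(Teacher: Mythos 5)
Your proof is correct and takes essentially the same route as the paper's: unfold the two profunctors into a coend of hom-sets and collapse it by Yoneda reduction, recovering the element-level formula $(\trisplice{f_0}{f_1}{f_2}\mathbin{|}g) \mapsto \bisplice{f_0 ⨾ g ⨾ f_1}{f_2}$. The only immaterial difference is that the paper groups $ℂ(A;U) × ℂ(V;X)$ into the single hom-set $\Splice{ℂ}\left(\biobj{A}{X};\biobj{U}{V}\right)$ and performs one Yoneda reduction in $\Splice{ℂ}$, whereas you perform two successive reductions over $U$ and $V$ separately (justified by Fubini and the identification $\Splice{ℂ} \simeq ℂ × ℂ^{op}$, which you correctly flag as the one point requiring care).
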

\begin{proof}
  Indeed, the following coend calculus derivation constructs the isomorphism.
  $$\begin{aligned}
    & \int^{\biobj{U}{V} \in 𝓢{ℂ}} \SCtensor{A}{B}{U}{V}{X}{Y} × \SCunit{U}{V} &\quad
    = & \quad\mbox{(by definition)} \\
    & \int^{\biobj{U}{V} \in 𝓢{ℂ}}
    ℂ(A; U) × ℂ(V ; X) × ℂ(Y ; B) × ℂ(U ; V) &\quad
    = & \quad\mbox{(by definition)} \\
    & \int^{\biobj{U}{V} \in 𝓢{ℂ}}
    \SChom{A}{X}{U}{V} × ℂ(Y ; B) × ℂ(U ; V) &\quad
    \cong & \quad\mbox{(\byYonedaReduction{})} \\
    & ℂ(A; X) × ℂ(Y; B).\phantom{\int}
    \end{aligned}$$
    Thus, it is constructed by a \YonedaIsomorphism{}.
\end{proof}

\begin{lemma}[Promonoidal splice right unitor]
    \label{lemma:spliceRightUnitor}
    We can construct a natural isomorphism,
    $$ρ \colon \int^{\biobj{U}{V} \in \mathbb{SC}}
    \SCtensor{A}{B}{X}{Y}{U}{V} × \SCunit{U}{V} \cong \SChom{A}{B}{X}{Y},$$
    exclusively from \YonedaIsomorphisms{}. This isomorphism is defined by $ρ(\trisplice{f_0}{f_1}{f_2}|g) = \bisplice{f_0}{f_1 ⨾ g ⨾ f_2}$.
\end{lemma}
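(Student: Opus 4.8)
The plan is to mirror the left-unitor argument of \Cref{lemma:spliceLeftUnitor} almost verbatim, with the sole change that the unit now fills the \emph{second} hole of the split rather than the first. First I would unfold both profunctors by their definitions: the split $\SCtensor{A}{B}{X}{Y}{U}{V}$ becomes $ℂ(A;X) × ℂ(Y;U) × ℂ(V;B)$ and the unit $\SCunit{U}{V}$ becomes $ℂ(U;V)$, so the coend reads
$$\int^{\biobj{U}{V} \in 𝓢{ℂ}} ℂ(A;X) × ℂ(Y;U) × ℂ(V;B) × ℂ(U;V).$$

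Next I would observe that the factor $ℂ(A;X)$ is constant in the coend variable $\biobj{U}{V}$ and pull it out, then regroup the remaining three factors. The point is that $ℂ(Y;U) × ℂ(V;B)$ is, by definition, the representable copresheaf $\SChom{Y}{B}{U}{V} = 𝓢{ℂ}(\biobj{Y}{B}; \biobj{U}{V})$, covariant in $\biobj{U}{V}$, whereas $ℂ(U;V) = \SCunit{U}{V}$ is the presheaf $\SCunit{}$ evaluated at $\biobj{U}{V}$. The remaining coend therefore has exactly the shape to which the Yoneda reduction $y_2$ applies inside the category $𝓢{ℂ}$ of spliced arrows, contracting $\SChom{Y}{B}{•}{•} × \SCunit{•}{•}$ to $\SCunit{Y}{B} = ℂ(Y;B)$. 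Reassembling, the whole coend reduces to $ℂ(A;X) × ℂ(Y;B) = \SChom{A}{B}{X}{Y}$. Because every step is built solely from \YonedaIsomorphisms{}, the composite is an isomorphism that automatically respects the promonoidal coherence equations, so no separate coherence check is required.

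Finally I would read the explicit formula off the reduction. An element $\trisplice{f_0}{f_1}{f_2} \mathbin{|} g$, with $f_1 \colon Y \to U$, $g \colon U \to V$ and $f_2 \colon V \to B$, is identified by the dinatural quotient with its canonical representative in which $g$ has been absorbed into the connecting arrows, yielding $\bisplice{f_0}{f_1 ⨾ g ⨾ f_2}$, which is precisely the claimed definition of $ρ$.

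The computation is routine; the one subtlety to get right is the variance bookkeeping, and this is exactly where the right unitor differs from the left. Before invoking $y_2$ I must confirm that $\SCunit{}$ is genuinely contravariant in $\biobj{U}{V}$, so that the pairing with the covariant representable $\SChom{Y}{B}{•}{•}$ is a well-formed coend, and that the dinaturality relation composes $g$ with $f_1$ on the left and $f_2$ on the right \emph{in the correct order}. Fixing this ordering is all that distinguishes the present lemma from \Cref{lemma:spliceLeftUnitor}, where instead $g$ is threaded against $f_0$ and $f_1$.
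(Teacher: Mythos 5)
Your proposal is correct and follows essentially the same route as the paper's proof: unfold the definitions, recognize $ℂ(Y;U) × ℂ(V;B)$ as the representable $\SChom{Y}{B}{U}{V}$, and contract against the presheaf $\SCunit{}$ by a single Yoneda reduction to obtain $ℂ(A;X) × ℂ(Y;B)$. Your variance check (that $\SCunit{}$ is contravariant on $𝓢ℂ$, so the coend has the $y_2$ shape) is correct and is exactly the point the paper leaves implicit.
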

\begin{proof}
    Indeed, the following coend calculus derivation constructs the isomorphism.
    $$\begin{aligned}
      & \int^{\biobj{U}{V} \in 𝓢{ℂ}}
      \SCtensor{A}{B}{X}{Y}{U}{V} × \SCunit{U}{V} &\quad
      = & \quad\mbox{(by definition)} \\
      & \int^{\biobj{U}{V} \in 𝓢{ℂ}}
      ℂ(A; X) × ℂ(Y ; U) × ℂ(V ; B) × ℂ(U ; V) &\quad
      = & \quad\mbox{(by definition)} \\
      & \int^{\biobj{U}{V} \in 𝓢{ℂ}}
      ℂ(A; X) × \SChom{Y}{B}{U}{V} × ℂ(U ; V) &\quad
      \cong & \quad\mbox{(\byYonedaReduction{})} \\
      & ℂ(A; X) × ℂ(Y; B).\phantom{\int}
      \end{aligned}$$
    Thus, it is constructed by a \YonedaIsomorphism{}.
  \end{proof}

  \begin{proposition}[From \Cref{prop:spliceFunctor}] \label{ax:prop:spliceFunctor}
    Splice gives a functor $\Splice{} : \Cat → \pMon$.
  \end{proposition}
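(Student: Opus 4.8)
Splice gives a functor $\Splice{} : \Cat \to \pMon$.

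The plan is to show that the splice construction, already defined on objects (categories) by Definition~\ref{defn:splicedArrows} and shown to land in $\pMon$ by Proposition~\ref{ax:prop:spliceIsPromonoidal}, extends functorially to morphisms of $\Cat$, namely ordinary functors. First I would define the action on an arrow: given a functor $F \colon ℂ \to 𝔻$, I would define $\Splice{F} \colon \Splice{ℂ} \to \Splice{𝔻}$ on objects by $\binom{A}{B} \mapsto \binom{FA}{FB}$, and on the underlying category by postcomposition with $F$, sending a morphism $f ⨾ \square ⨾ g$ (a pair $f \colon A \to X$, $g \colon Y \to B$) to $Ff ⨾ \square ⨾ Fg$. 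The key observation is that because a functor preserves composition and identities, applying $F$ componentwise to all the ``filling'' arrows commutes with the hole-filling profunctor actions $(≺_i, ≻_j)$ of Definition~\ref{defn:splicedArrows}.

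The main content is checking that $\Splice{F}$ is a \emph{promonoidal} functor in the sense of Definition~\ref{def:promonoidalfunctor}. I would supply the comparison natural transformations $(\Splice{F})_{◁}$ and $(\Splice{F})_N$ by the evident componentwise action of $F$: a split $f ⨾ \square ⨾ g ⨾ \square ⨾ h$ maps to $Ff ⨾ \square ⨾ Fg ⨾ \square ⨾ Fh$, and a unit $f \colon A \to B$ maps to $Ff \colon FA \to FB$. These are manifestly natural, since $F$ respects the profunctor actions. It then remains to verify the three coherence conditions $λ ⨾ (\Splice{F})_\map = ((\Splice{F})_{◁} × (\Splice{F})_N) ⨾ λ$, the analogous one for $ρ$, and the associativity compatibility for $α$. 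Here I would invoke the explicit descriptions of $α$, $λ$, $ρ$ from Lemmas~\ref{lemma:promonoidalAssociator}, \ref{lemma:spliceLeftUnitor}, and \ref{lemma:spliceRightUnitor}: since each of these isomorphisms is built purely from Yoneda reductions and is computed by hole-filling (e.g. $λ(\trisplice{f_0}{f_1}{f_2}|g) = \bisplice{f_0 ⨾ g ⨾ f_1}{f_2}$), applying $F$ and then filling equals filling and then applying $F$, precisely because $F$ preserves composition. Each coherence square thus reduces to the identity $F(p ⨾ q) = Fp ⨾ Fq$ on the component arrows.

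Finally I would check the two functoriality axioms for $\Splice{}$ itself: that $\Splice{\mathrm{id}_ℂ}$ is the identity promonoidal functor on $\Splice{ℂ}$, and that $\Splice{G ⨾ F} = \Splice{G} ⨾ \Splice{F}$ for composable $F, G$. Both follow immediately from the componentwise definition and the corresponding identities $\mathrm{id}_ℂ(f) = f$ and $(G ⨾ F)(f) = F(G f)$ in $\Cat$. The one step requiring genuine care — the main obstacle, such as it is — is the coherence verification for the associator $α$, because $α$ is defined only implicitly by the canonical-form equation in Lemma~\ref{lemma:promonoidalAssociator}; I would handle it by transporting the equation along $F$, noting that the canonical quadruple $\quasplice{f_0 ⨾ g_0}{g_1}{g_2 ⨾ f_1}{g_2}$ is sent by $F$ to the canonical quadruple of the $F$-images, so that the defining biconditional for $α$ in $\Splice{𝔻}$ holds iff it held in $\Splice{ℂ}$. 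This is routine but is the only place where one must unwind the implicit definition rather than read off a formula.
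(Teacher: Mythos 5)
Your proposal is correct and follows essentially the same route as the paper's proof: define $\Splice{F}$ componentwise by applying $F$ to each filling arrow, note that the coherence maps are preserved because they are built from hole-filling and Yoneda reductions (the paper simply cites Proposition~\ref{ax:prop:spliceIsPromonoidal} for this), and verify functoriality on identities and composites. Your extra care with the implicitly-defined associator is a more explicit version of what the paper leaves to that citation.
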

  \begin{proof}
    \Cref{defn:splicedArrows} defines the action on categories. We define the action on functors. Given a functor $F : ℂ → 𝔻$, define the \promonoidalFunctor{} $\Splice{F} : \Splice{ℂ} → \Splice{𝔻}$ by
    \begin{align*}
      &\biobj{A}{B} ↦ \biobj{FA}{FB}, \\
      &\Splice{F}  := F_{A,X} × F_{Y,B} : \Splice{ℂ}(\biobj{A}{B}, \biobj{X}{Y}) → \Splice{𝔻}(\biobj{FA}{FB}, \biobj{FX}{FY}), \\
      &\Splice{F}_{◁} := F_{A,X} × F_{Y,X'} × F_{Y',B} : \Splice{ℂ}(\biobj{A}{B}, \biobj{X}{Y} ◁ \biobj{X'}{Y'}) → \Splice{𝔻}(\biobj{FA}{FB}, \biobj{FX}{FY} ◁ \biobj{FX'}{FY'}), \\
      &\Splice{F}_{N} := F_{A,B} : \Splice{ℂ}(\biobj{A}{B}, N) → \Splice{𝔻}(\biobj{FA}{FB}, N).
    \end{align*}
    It follows from the promonoidal structure on spliced arrows (\Cref{ax:prop:spliceIsPromonoidal}) that this preserves coherence maps. If $\idFun{_ℂ} : ℂ → ℂ$ is an identity functor, then it defines the identity $\idFun{_{\Splice{ℂ}}}$, which has underlying functor the identity and identity natural transformations. If $G : 𝔹 → ℂ$ is another functor, then $\Splice{(G ⨾ F)} = \Splice{G} ⨾ \Splice{F}$ follows from composition of functors.
  \end{proof}

  \begin{theorem}[From \Cref{th:catpromadj}] \label{ax:th:catpromadj}
    There exists an adjunction between categories and \promonoidalCategories{}, where the contour of a promonoidal is the left adjoint, and the splice category is the right adjoint.
  \end{theorem}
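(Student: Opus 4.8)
The plan is to establish the adjunction $\Contour{} \dashv \Splice{}$ by exhibiting a natural bijection between hom-sets
$$\Cat(\Contour{𝕍}, ℂ) \cong \pMon(𝕍, \Splice{ℂ}),$$
natural in the promonoidal category $𝕍$ and the category $ℂ$. Both directions should be read off directly from the generators-and-relations presentation of $\Contour{}$ (\Cref{defn:contour}) and the profunctor description of $\Splice{}$ (\Cref{defn:splicedArrows}). The key observation is that the data of a functor out of $\Contour{𝕍}$ is exactly the data of a promonoidal functor into $\Splice{ℂ}$, because both amount to a coherent assignment of arrows of $ℂ$ to the units, morphisms, and splits of $𝕍$.

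First I would unfold what a functor $G \colon \Contour{𝕍} \to ℂ$ consists of. Since $\Contour{𝕍}$ is freely presented, $G$ is determined by a choice of object $GX^L$ and $GX^R$ for each $X \in \obj{𝕍}$, together with: an arrow $ℂ(GA^L; GA^R)$ for each unit $a \in 𝕍(A;N)$; a pair of arrows $ℂ(GB^L; GX^L) \times ℂ(GX^R; GB^R)$ for each morphism $b \in 𝕍(B;X)$; and a triple of arrows for each split $c \in 𝕍(C; Y \triangleleft Z)$ — all subject to the contour equations arising from $\alpha, \lambda, \rho$ (\Cref{fig:promonoidalContourCompose}). Comparing this with \Cref{defn:splicedArrows}, setting $F\binom{A}{B} = \binom{GA^L}{GB^R}$ turns the unit data into an element of $\Splice{ℂ}(\binom{A}{B}; N) = ℂ(A;B)$, the morphism data into $\Splice{ℂ}(\binom{A}{B}; \binom{X}{Y}) = ℂ(A;X) \times ℂ(Y,B)$, and the split data into $\Splice{ℂ}(\ldots; \binom{X}{Y} \triangleleft \binom{X'}{Y'})$. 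The contour relations correspond precisely to the naturality and coherence-preservation conditions of a promonoidal functor (\Cref{def:promonoidalfunctor}). I would verify this correspondence term by term, checking that the equations $a_0 = c_0 ⨾ d_0$, $a_1 ⨾ b_0 = d_1$, etc., translate into the equalities $\alpha ⨾ (F_{◁} \times F_{◁}) ⨾ i = (F_{◁} \times F_{◁}) ⨾ i ⨾ \alpha$ and the two unitor conditions.

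Then I would check the two naturality squares and the triangle identities, or equivalently verify that the bijection is natural in both arguments. Naturality in $ℂ$ follows since postcomposition with a functor $H \colon ℂ \to 𝔻$ transports splice data along $H$, matching $\Splice{H}$ as defined in \Cref{ax:prop:spliceFunctor}; naturality in $𝕍$ follows since precomposition with a promonoidal functor corresponds to precomposing the generators, matching $\Contour{}$ on morphisms as in \Cref{ax:prop:contourFunctor}. Using $\Contour{}$ and $\Splice{}$ are already functors (\Cref{prop:contourFunctor,prop:spliceFunctor}), it suffices to produce the hom-set bijection and confirm naturality, which is cleaner than chasing unit and counit separately.

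\textbf{The main obstacle} I anticipate is bookkeeping the coherence conditions: I must be careful that the contour relations for $\alpha$, $\rho$ and $\lambda$ exactly exhaust the coherence-preservation axioms for promonoidal functors, with nothing missing and nothing extra. In particular the left/right unitor relations in \Cref{defn:contour} (the $\rho$/$\lambda$ equations $a_0 = c_0 = d_0 ⨾ e_0 ⨾ d_1$ and $a_1 ⨾ b_0 ⨾ a_2 = c_1 = d_2$) are where the identification of generators is most delicate, and I would treat these with the most care. The fact that the promonoidal contour involves fewer equations than the multicategorical version (\Cref{rem:mellies}) should make this verification more tractable than in Melliès and Zeilberger's setting, since the extra coherence of promonoidal categories collapses the redundant relations. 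I expect the remaining verifications to be routine diagram-chasing once the dictionary between contour generators and splice profunctor elements is fixed.
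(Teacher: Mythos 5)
Your proposal is correct and follows essentially the same route as the paper's proof: both exploit the free presentation of $\mathcal{C}\mathbb{V}$ to unfold a functor $\mathcal{C}\mathbb{V} \to \mathbb{C}$ into a choice of objects and generating arrows subject to the contour equations, and then observe that this data coincides item-by-item with the data of a promonoidal functor $\mathbb{V} \to \mathcal{S}\mathbb{C}$. The only slip is notational: the promonoidal functor sends a single object $X \in \mathbb{V}$ to the pair $(GX^L, GX^R)$, rather than acting on pairs as your expression $F\binom{A}{B}$ suggests.
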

  \begin{proof}
    Let $ℂ$ be a category and let $𝔹$ be a \promonoidalCategory{}. We will show that the \promonoidalFunctors{} $𝔹 \to 𝓢{ℂ}$ are in natural correspondence with the functors $𝓒𝔹 \to ℂ$.
    We first observe that the category $𝓒𝔹$ is freely presented; thus, a functor $𝓒𝔹 \to ℂ$ amounts to a choice of some objects and some morphisms in $ℂ$ satisfying some equations. Explicitly, by the definition of \contour{}, a functor $𝓒𝔹 \to ℂ$ amounts to
    \begin{itemize}
        \item for each $X \in \obj{𝔹}$, a choice of objects $X^L, X^R \in \obj{ℂ}$;
        \item for each element $a \in 𝔹(X)$, a choice of morphisms $a_0 \in ℂ(X^L,X^R)$;
        \item for each morphism $a \in 𝔹(A;X)$, a choice of morphisms $a_0 \in ℂ(A^L;X^L)$ and $a_1 \in ℂ(X^R;A^R)$;
        \item for each split $a \in ℂ(A; X ◁ Y)$, a choice of morphisms $a_0 \in ℂ(A^L;X^L)$, $a_1 \in ℂ(X^R;Y^L)$ and $a_2 \in ℂ(Y^R;A^R)$;
        \item the choice must be such that $α(a \mathbin{|} b) = (c \mathbin{|} d)$ implies $a_0 = c_0 ⨾ d_0$; $a_1 ⨾ b_0 = d_1$ and $b_1 = d_2 ⨾ c_1$; $a_2 ⨾ b_2 = c_2$;
        \item the choice must be such that $ρ(a \mathbin{|} b) = c = \lambda(d \mathbin{|} e)$ implies $a_0 = c_0 = d_0 ⨾ e_0 ⨾ d_1$ and $a_1 ⨾ b_0 ⨾ a_2 = c_1 = d_2$.
    \end{itemize}

    On the other hand, a promonoidal functor $𝔹 \to 𝓢{ℂ}$, also amounts to
    \begin{itemize}
        \item for each $X \in \obj{𝔹}$, an object $FX = (X^L, X^R) \in \obj{𝓢{ℂ}}$, which is a pair of objects of $\obj{ℂ}$;
        \item for each element $a \in 𝔹(X)$, a morphism $F(a) = a_0 \in 𝓢{ℂ}(FX)$;
        \item for each element $a \in 𝔹(A;X)$, a splice $F(a) = \bisplice{a_0}{a_1} \in 𝓢{ℂ}(FA;FX)$;
        \item for each element $a \in 𝔹(A;X ◁ Y)$, a splice $F(a) = \trisplice{a_0}{a_1}{a_2} \in 𝓢{ℂ}(FA;FX ◁ FY)$;
        \item preserving associativity, with $α(a \mid b) = (c \mid d)$ implying $α(F(a) \mid F(b)) = F(c) \mid F(d)$;
        \item preserving unitality, with $ρ(a \mid b) = c = \lambda(d \mid e)$ implying $ρ(F(a) \mid F(b)) = F(c) = \lambda(F(d) \mid F(e))$;
    \end{itemize}
    by the definition of \splice{}, its associativity and unitality, the structure on each one of these points is exactly equal.
  \end{proof}

\subsection{Spliced arrow multicategory}
As a consequence of the previous discussion, the n-morphisms are the sequences of arrows in $ℂ$ separated by $n$ gaps; the sequence of arrows goes from $A$ to $B$, with holes typed by $\{ X_i, Y_i \}_{i \in [1,\dots,n]}$. In other words,

$$\begin{aligned}
  \Splice{ℂ} &\left(  \biobj{A}{B}\mathrel{;}
\biobj{X_1}{Y_1}⊗ \dots ⊗ \biobj{X_n}{Y_n} \right) =
ℂ(A;X_1) × \left( \prod_{k=1}^{n-1} ℂ(Y_k, X_{k+1})\right) × ℂ(Y_n,B).
\end{aligned}$$

Composition in the multicategory is defined by substitution of a spliced arrow into one of the gaps of the second; the identity is just $\id{A} - \id{B}$, the spliced arrow with a single gap typed by $(A,B)$.

\begin{proposition}
The multicategory of spliced arrows, $\Splice{ℂ}$, is precisely the promonoidal category induced by the duality $ℂ\op \dashv ℂ$ in the monoidal bicategory of profunctors: a promonoidal category over $ℂ \times ℂ\op$.
\end{proposition}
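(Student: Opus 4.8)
The plan is to exhibit the promonoidal structure on $\Splice{\mathbb{C}}$ as the one transported along the dual adjunction $\mathbb{C}^{\mathrm{op}} \dashv \mathbb{C}$ that lives in the bicategory $\mathbf{Prof}$ of categories and profunctors. First I would recall the general mechanism: any object $\mathbb{C}$ of $\mathbf{Prof}$ equipped with a dual $\mathbb{C}^{\mathrm{op}}$ (with unit and counit given by the hom-profunctor $\mathbb{C}(\bullet;\bullet)$ viewed covariantly and contravariantly) induces, on the tensor $\mathbb{C} \times \mathbb{C}^{\mathrm{op}}$, the structure of a pseudomonoid in $\mathbf{Prof}$ — that is, a promonoidal category. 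Concretely, the promonoidal multiplication profunctor $(\mathbb{C}\times\mathbb{C}^{\mathrm{op}}) \times (\mathbb{C}\times\mathbb{C}^{\mathrm{op}}) \xslashedrightarrow{} (\mathbb{C}\times\mathbb{C}^{\mathrm{op}})$ is obtained by contracting one copy of $\mathbb{C}$ against one copy of $\mathbb{C}^{\mathrm{op}}$ using the counit $\mathbb{C}(\bullet;\bullet)$, and the unit is $\mathbb{C}(\bullet;\bullet)$ itself.

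Next I would compute this induced structure explicitly and match it term by term with Definition~\ref{defn:splicedArrows}. Writing an object of $\mathbb{C}\times\mathbb{C}^{\mathrm{op}}$ as a pair $\smash{\biobj{A}{B}}$, the contraction defining the binary split is
$$\int^{U} \mathbb{C}\!\left(\biobj{A}{B}; \biobj{X}{U}\right) \times \mathbb{C}\!\left(\biobj{U}{Y}; \ldots\right),$$
which after applying the counit and the Yoneda reductions of the appendix collapses exactly to $\mathbb{C}(A;X)\times\mathbb{C}(Y;X')\times\mathbb{C}(Y';B)$, the set of triples that Definition~\ref{defn:splicedArrows} assigns to a split; likewise the induced unit profunctor is $\mathbb{C}(A;B)$, matching the spliced-arrow unit. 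I would verify that the associator and unitors produced by the pseudomonoid axioms coincide with those constructed from Yoneda isomorphisms in Lemmas~\ref{lemma:promonoidalAssociator}, \ref{lemma:spliceLeftUnitor} and \ref{lemma:spliceRightUnitor}. Since both the transported coherence data and the spliced-arrow coherence data are built purely from Yoneda isomorphisms, the final clause of the Yoneda theorem (every formal diagram of such reductions commutes) guarantees they agree, so no separate coherence check is needed.

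The main obstacle I anticipate is bookkeeping rather than conceptual: one must be careful about the variance conventions, since the contravariant factor $\mathbb{C}^{\mathrm{op}}$ means the ``output'' port of a splice is contracted using a hom in the opposite direction, and it is easy to contract the wrong pair of ports and obtain a transposed or reversed version of $\Splice{\mathbb{C}}$. I would pin down precisely which copy of $\mathbb{C}$ pairs with which copy of $\mathbb{C}^{\mathrm{op}}$ under the counit before computing the coend, and sanity-check against the explicit profunctor action ``filling the holes'' described after Definition~\ref{defn:splicedArrows}. A secondary subtlety is confirming that the duality genuinely lives in $\mathbf{Prof}$ — that the triangle identities for $\mathbb{C}^{\mathrm{op}}\dashv\mathbb{C}$ hold up to the canonical isomorphisms — but this is the standard fact that every category is dualizable in $\mathbf{Prof}$ with dual its opposite, so I would cite it rather than reprove it. Once the identification of data is in place, the equality of the two promonoidal categories is immediate, and the statement that $\Splice{\mathbb{C}}$ is a promonoidal category over $\mathbb{C}\times\mathbb{C}^{\mathrm{op}}$ follows directly.
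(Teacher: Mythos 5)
The paper states this proposition without any proof (it appears as an unproved observation at the end of the ``Spliced arrow multicategory'' subsection of the appendix), so there is no argument of the authors' to compare yours against. Your strategy is the standard one and is essentially correct: every category is dualizable in $\mathbf{Prof}$ with dual its opposite, the object $\mathbb{C}\times\mathbb{C}^{\mathrm{op}}$ therefore carries the canonical ``endomorphism'' pseudomonoid structure with multiplication $1_{\mathbb{C}}\times\varepsilon\times 1_{\mathbb{C}^{\mathrm{op}}}$ and unit the hom-profunctor, and unwinding this gives exactly $\mathbb{C}(A;X)\times\mathbb{C}(Y;X')\times\mathbb{C}(Y';B)$ for the split and $\mathbb{C}(A;B)$ for the unit, matching Definition~\ref{defn:splicedArrows}. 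Your appeal to the coherence clause of the Yoneda theorem to identify the transported associator and unitors with those of Lemmas~\ref{lemma:promonoidalAssociator}--\ref{lemma:spliceRightUnitor} is the right way to discharge the coherence comparison without chasing diagrams.

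Two small points to tighten. First, your displayed formula for the contraction is garbled (it contains an ellipsis and mixes hom-sets of $\mathbb{C}\times\mathbb{C}^{\mathrm{op}}$ with those of $\mathbb{C}$); note also that the multiplication $1_{\mathbb{C}}\times\varepsilon\times 1_{\mathbb{C}^{\mathrm{op}}}$ is an \emph{external} product of three profunctors, so evaluated pointwise it is already the product of sets $\mathbb{C}(A;X)\times\mathbb{C}(Y;X')\times\mathbb{C}(Y';B)$ with no coend to compute — the coends and Yoneda reductions only enter if you insist on presenting it as a composite of 1-cells through $\mathbb{C}\times(\mathbb{C}^{\mathrm{op}}\times\mathbb{C})\times\mathbb{C}^{\mathrm{op}}$, in which case they collapse as you say. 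Second, the statement also asserts agreement with the \emph{multicategory} structure (substitution of a spliced arrow into a gap); you should add one line checking that the multicategorical composition described just above the proposition is the one induced by the promonoidal splits via the dinatural composition of Appendix~\ref{ax:sec:multicategories}, which is immediate from the ``filling the holes'' description of the profunctor actions. With those repairs the proof is complete.
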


\clearpage %
\section{Parallel-Sequential Context}
\label{sec:ax:parallelContext}

\subsection{Monoidal Contour}
\label{ax:sec:monoidalContour}

\begin{definition}[{{Monoidal contour, from \Cref{def:monoidalContour}}}]
  \label{ax:def:monoidalContour}
    The \emph{contour} of a \produoidalCategory{} $𝔹$ is the monoidal category $𝓓𝔹$ that has two objects, $X^L$ (left-handed) and $X^R$ (right-handed), for each object $X \in 𝔹_{obj}$; and has arrows those that arise from \emph{contouring} both sequential and parallel decompositions of the promonoidal category.
    \begin{figure}[ht]
      \centering

\tikzset{every picture/.style={line width=0.75pt}} %

\begin{tikzpicture}[x=0.75pt,y=0.75pt,yscale=-1,xscale=1]
\draw  [fill={rgb, 255:red, 191; green, 97; blue, 106 }  ,fill opacity=0.25 ] (190,35) -- (194.5,20) -- (205.5,20) -- (210,35) -- cycle ;
\draw    (200,20) -- (200,10) ;
\draw  [draw opacity=0] (190,35) -- (210,35) -- (210,55) -- (190,55) -- cycle ;
\draw    (100,20) -- (100,10) ;
\draw   (90,20) -- (110,20) -- (110,35) -- (90,35) -- cycle ;
\draw    (100,45) -- (100,35) ;
\draw [color={rgb, 255:red, 191; green, 97; blue, 106 }  ,draw opacity=1 ]   (92.96,39.93) .. controls (87.06,39.68) and (86.32,39.14) .. (85,35) .. controls (83.54,30.42) and (83.57,25.58) .. (85,20) .. controls (86.43,14.42) and (89.67,15.06) .. (95,15) ;
\draw [shift={(95,40)}, rotate = 181.8] [color={rgb, 255:red, 191; green, 97; blue, 106 }  ,draw opacity=1 ][line width=0.75]    (4.37,-1.32) .. controls (2.78,-0.56) and (1.32,-0.12) .. (0,0) .. controls (1.32,0.12) and (2.78,0.56) .. (4.37,1.32)   ;
\draw [color={rgb, 255:red, 191; green, 97; blue, 106 }  ,draw opacity=1 ]   (107.04,14.86) .. controls (111.32,14.74) and (113.73,16.02) .. (115,20) .. controls (116.46,24.58) and (116.43,29.42) .. (115,35) .. controls (113.57,40.58) and (110.33,39.94) .. (105,40) ;
\draw [shift={(105,15)}, rotate = 354.07] [color={rgb, 255:red, 191; green, 97; blue, 106 }  ,draw opacity=1 ][line width=0.75]    (4.37,-1.32) .. controls (2.78,-0.56) and (1.32,-0.12) .. (0,0) .. controls (1.32,0.12) and (2.78,0.56) .. (4.37,1.32)   ;
\draw  [color={rgb, 255:red, 0; green, 0; blue, 0 }  ,draw opacity=1 ][fill={rgb, 255:red, 191; green, 97; blue, 106 }  ,fill opacity=0.25 ] (320,35) -- (324.5,20) -- (345.5,20) -- (350,35) -- cycle ;
\draw    (325,50) -- (325,35) ;
\draw    (335,20) -- (335,10) ;
\draw [color={rgb, 255:red, 191; green, 97; blue, 106 }  ,draw opacity=1 ]   (319.17,39.65) .. controls (317.5,39.04) and (316.26,38.95) .. (315,35) .. controls (313.54,30.42) and (313.57,25.58) .. (315,20) .. controls (316.43,14.42) and (324.67,15.06) .. (330,15) ;
\draw [shift={(320,40)}, rotate = 205.96] [color={rgb, 255:red, 191; green, 97; blue, 106 }  ,draw opacity=1 ][line width=0.75]    (4.37,-1.32) .. controls (2.78,-0.56) and (1.32,-0.12) .. (0,0) .. controls (1.32,0.12) and (2.78,0.56) .. (4.37,1.32)   ;
\draw [color={rgb, 255:red, 191; green, 97; blue, 106 }  ,draw opacity=1 ]   (350,45) .. controls (352.43,44.87) and (353.54,39.58) .. (355,35) .. controls (356.46,30.42) and (356.43,25.58) .. (355,20) .. controls (353.73,15.06) and (347.13,15) .. (341.94,15) ;
\draw [shift={(340,15)}, rotate = 0.66] [color={rgb, 255:red, 191; green, 97; blue, 106 }  ,draw opacity=1 ][line width=0.75]    (4.37,-1.32) .. controls (2.78,-0.56) and (1.32,-0.12) .. (0,0) .. controls (1.32,0.12) and (2.78,0.56) .. (4.37,1.32)   ;
\draw [color={rgb, 255:red, 191; green, 97; blue, 106 }  ,draw opacity=1 ]   (337.82,39.94) .. controls (318.45,39.26) and (316.4,34.42) .. (315,30) .. controls (313.54,25.42) and (313.57,25.58) .. (315,20) .. controls (316.43,14.42) and (324.67,15.06) .. (330,15) ;
\draw [shift={(340,40)}, rotate = 181.03] [color={rgb, 255:red, 191; green, 97; blue, 106 }  ,draw opacity=1 ][line width=0.75]    (4.37,-1.32) .. controls (2.78,-0.56) and (1.32,-0.12) .. (0,0) .. controls (1.32,0.12) and (2.78,0.56) .. (4.37,1.32)   ;
\draw [color={rgb, 255:red, 191; green, 97; blue, 106 }  ,draw opacity=1 ]   (330,45) .. controls (341.21,45.37) and (353.54,34.58) .. (355,30) .. controls (356.46,25.42) and (356.43,25.58) .. (355,20) .. controls (353.73,15.06) and (347.13,15) .. (341.94,15) ;
\draw [shift={(340,15)}, rotate = 0.66] [color={rgb, 255:red, 191; green, 97; blue, 106 }  ,draw opacity=1 ][line width=0.75]    (4.37,-1.32) .. controls (2.78,-0.56) and (1.32,-0.12) .. (0,0) .. controls (1.32,0.12) and (2.78,0.56) .. (4.37,1.32)   ;
\draw  [draw opacity=0][fill={rgb, 255:red, 255; green, 255; blue, 255 }  ,fill opacity=1 ] (342.5,40) .. controls (342.5,38.62) and (343.62,37.5) .. (345,37.5) .. controls (346.38,37.5) and (347.5,38.62) .. (347.5,40) .. controls (347.5,41.38) and (346.38,42.5) .. (345,42.5) .. controls (343.62,42.5) and (342.5,41.38) .. (342.5,40) -- cycle ;
\draw [color={rgb, 255:red, 191; green, 97; blue, 106 }  ,draw opacity=1 ]   (192.96,39.93) .. controls (187.06,39.68) and (186.32,39.14) .. (185,35) .. controls (183.54,30.42) and (183.57,25.58) .. (185,20) .. controls (186.43,14.42) and (189.67,15.06) .. (195,15) ;
\draw [shift={(195,40)}, rotate = 181.8] [color={rgb, 255:red, 191; green, 97; blue, 106 }  ,draw opacity=1 ][line width=0.75]    (4.37,-1.32) .. controls (2.78,-0.56) and (1.32,-0.12) .. (0,0) .. controls (1.32,0.12) and (2.78,0.56) .. (4.37,1.32)   ;
\draw [color={rgb, 255:red, 191; green, 97; blue, 106 }  ,draw opacity=1 ]   (207.04,14.86) .. controls (211.32,14.74) and (213.73,16.02) .. (215,20) .. controls (216.46,24.58) and (216.43,29.42) .. (215,35) .. controls (213.57,40.58) and (210.33,39.94) .. (205,40) ;
\draw [shift={(205,15)}, rotate = 354.07] [color={rgb, 255:red, 191; green, 97; blue, 106 }  ,draw opacity=1 ][line width=0.75]    (4.37,-1.32) .. controls (2.78,-0.56) and (1.32,-0.12) .. (0,0) .. controls (1.32,0.12) and (2.78,0.56) .. (4.37,1.32)   ;
\draw  [draw opacity=0][fill={rgb, 255:red, 191; green, 97; blue, 106 }  ,fill opacity=0.25 ] (197.5,40) .. controls (197.5,38.62) and (198.62,37.5) .. (200,37.5) .. controls (201.38,37.5) and (202.5,38.62) .. (202.5,40) .. controls (202.5,41.38) and (201.38,42.5) .. (200,42.5) .. controls (198.62,42.5) and (197.5,41.38) .. (197.5,40) -- cycle ;
\draw    (345,50) -- (345,35) ;
\draw   (140,35) -- (144.5,20) -- (155.5,20) -- (160,35) -- cycle ;
\draw    (150,20) -- (150,10) ;
\draw [color={rgb, 255:red, 191; green, 97; blue, 106 }  ,draw opacity=1 ]   (140,20) .. controls (136.32,28.79) and (134.61,39.97) .. (140,40) .. controls (145.39,40.03) and (154.61,39.97) .. (160,40) .. controls (165.06,40.02) and (165.31,30.69) .. (160.9,21.71) ;
\draw [shift={(160,20)}, rotate = 60.63] [color={rgb, 255:red, 191; green, 97; blue, 106 }  ,draw opacity=1 ][line width=0.75]    (4.37,-1.32) .. controls (2.78,-0.56) and (1.32,-0.12) .. (0,0) .. controls (1.32,0.12) and (2.78,0.56) .. (4.37,1.32)   ;
\draw  [draw opacity=0] (140,35) -- (160,35) -- (160,55) -- (140,55) -- cycle ;
\draw   (250,35) -- (254.5,20) -- (275.5,20) -- (280,35) -- cycle ;
\draw    (255,45) -- (255,35) ;
\draw    (275,45) -- (275,35) ;
\draw    (265,20) -- (265,10) ;
\draw [color={rgb, 255:red, 191; green, 97; blue, 106 }  ,draw opacity=1 ]   (248.13,39.28) .. controls (246.96,38.82) and (245.98,38.09) .. (245,35) .. controls (243.54,30.42) and (243.57,25.58) .. (245,20) .. controls (246.43,14.42) and (254.67,15.06) .. (260,15) ;
\draw [shift={(250,40)}, rotate = 205.96] [color={rgb, 255:red, 191; green, 97; blue, 106 }  ,draw opacity=1 ][line width=0.75]    (4.37,-1.32) .. controls (2.78,-0.56) and (1.32,-0.12) .. (0,0) .. controls (1.32,0.12) and (2.78,0.56) .. (4.37,1.32)   ;
\draw [color={rgb, 255:red, 191; green, 97; blue, 106 }  ,draw opacity=1 ]   (260,40) -- (268,40) ;
\draw [shift={(270,40)}, rotate = 180] [color={rgb, 255:red, 191; green, 97; blue, 106 }  ,draw opacity=1 ][line width=0.75]    (4.37,-1.32) .. controls (2.78,-0.56) and (1.32,-0.12) .. (0,0) .. controls (1.32,0.12) and (2.78,0.56) .. (4.37,1.32)   ;
\draw [color={rgb, 255:red, 191; green, 97; blue, 106 }  ,draw opacity=1 ]   (280,40) .. controls (282.43,39.87) and (283.54,39.58) .. (285,35) .. controls (286.46,30.42) and (286.43,25.58) .. (285,20) .. controls (283.73,15.06) and (277.13,15) .. (271.94,15) ;
\draw [shift={(270,15)}, rotate = 0.66] [color={rgb, 255:red, 191; green, 97; blue, 106 }  ,draw opacity=1 ][line width=0.75]    (4.37,-1.32) .. controls (2.78,-0.56) and (1.32,-0.12) .. (0,0) .. controls (1.32,0.12) and (2.78,0.56) .. (4.37,1.32)   ;

\draw (200,27.5) node  [font=\footnotesize]  {$a$};
\draw (100,27.5) node  [font=\footnotesize]  {$a$};
\draw (77.5,27.5) node  [font=\tiny,color={rgb, 255:red, 191; green, 97; blue, 106 }  ,opacity=1 ]  {$a_{0}$};
\draw (122.5,27.5) node  [font=\tiny,color={rgb, 255:red, 191; green, 97; blue, 106 }  ,opacity=1 ]  {$a_{1}$};
\draw (335,27.5) node  [font=\footnotesize]  {$a$};
\draw (307.5,27.5) node  [font=\tiny,color={rgb, 255:red, 191; green, 97; blue, 106 }  ,opacity=1 ]  {$a_{0}$};
\draw (362.5,27.5) node  [font=\tiny,color={rgb, 255:red, 191; green, 97; blue, 106 }  ,opacity=1 ]  {$a_{1}$};
\draw (222.5,27.5) node  [font=\tiny,color={rgb, 255:red, 191; green, 97; blue, 106 }  ,opacity=1 ]  {$a_{1}$};
\draw (177.5,27.5) node  [font=\tiny,color={rgb, 255:red, 191; green, 97; blue, 106 }  ,opacity=1 ]  {$a_{0}$};
\draw (150,27.5) node  [font=\footnotesize]  {$a$};
\draw (150,45) node  [font=\tiny,color={rgb, 255:red, 191; green, 97; blue, 106 }  ,opacity=1 ]  {$a_{0}$};
\draw (265,27.5) node  [font=\footnotesize]  {$a$};
\draw (237.5,27.5) node  [font=\tiny,color={rgb, 255:red, 191; green, 97; blue, 106 }  ,opacity=1 ]  {$a_{0}$};
\draw (265,47.5) node  [font=\tiny,color={rgb, 255:red, 191; green, 97; blue, 106 }  ,opacity=1 ]  {$a_{1}$};
\draw (292.5,27.5) node  [font=\tiny,color={rgb, 255:red, 191; green, 97; blue, 106 }  ,opacity=1 ]  {$a_{2}$};

\end{tikzpicture}
       \caption{Generators of the monoidal category of contours.}
      \label{ax:fig:monoidal-contour}
    \end{figure}

    Specifically, it is freely presented by
    \emph{(i)} a pair of morphisms $a_0 \in 𝓓𝔹(A^L; X^L)$, $a_1 \in 𝓓𝔹(X^R;A^R)$ for each morphism $a \in 𝔹(A;X)$; 
    \emph{(ii)} a morphism $a_0 \in 𝓓𝔹(A^L; A^R)$, for each \sequentialUnit{} $a \in ℂ(A;N)$; 
    \emph{(iii)} a pair of morphisms $a_0 \in 𝓓𝔹(A^L; I)$ and $a_0 \in 𝓓𝔹(I; A^R)$, for each \parallelUnit{} $a \in 𝔹(A;I)$;
    \emph{(iv)} a triple of morphisms 
    $a_0 \in 𝓓𝔹(A^L; X^L)$, $a_1 \in 𝓓𝔹(X^R; Y^L)$, $a_2 \in 𝓓𝔹(Y^R; A^R)$ for each \sequentialSplit{} $a \in 𝔹(A;X ◁ Y)$; and 
    \emph{(v)} a pair of morphisms $a_0 \in 𝓓𝔹(A^L; X^L ⊗ Y^L)$ and $a_1 \in 𝓓𝔹(X^R ⊗ Y^R; A^R)$ for each \parallelSplit{} $a \in 𝔹(A; X ⊗ Y)$, see \Cref{ax:fig:monoidal-contour}.   

    For each equality $a ⨾^2 b = c ⨾^1 d$, we impose the equations $a_0 = c_0 ⨾ d_0$; $a_1 ⨾ b_0 = d_1$ and $b_1 = d_2 ⨾ c_1$; $a_2 ⨾ b_2 = c_2$. For each equality $a ⨾^2 b = c = d ⨾^1 e$, we impose $a_0 = c_0 = d_0 ⨾ e_0 ⨾ d_1$ and $a_1 ⨾ b_0 ⨾ a_2 = c_1 = d_2$.
    These follow from \Cref{fig:promonoidalContourCompose}.

    For each application of associativity, $α(a ⨾_1 b) = c ⨾_2 d$, we impose the equations $a_0 ⨾ (b_0 ⊗ \im) = c_0 ⨾ (\im ⊗ d_0)$ and $(b_1 ⊗ \im) ⨾ a_1 = (\im ⊗ d_1) ⨾ c_1$. These follow from \Cref{ax:fig:monoidalContourAssociativity}.

    \begin{figure}[ht]
      \centering

\tikzset{every picture/.style={line width=0.75pt}} %

\begin{tikzpicture}[x=0.75pt,y=0.75pt,yscale=-1,xscale=1]
\draw  [color={rgb, 255:red, 0; green, 0; blue, 0 }  ,draw opacity=1 ][fill={rgb, 255:red, 191; green, 97; blue, 106 }  ,fill opacity=0.25 ] (35,75) -- (39.5,60) -- (60.5,60) -- (65,75) -- cycle ;
\draw    (40,90) -- (40,75) ;
\draw [color={rgb, 255:red, 191; green, 97; blue, 106 }  ,draw opacity=1 ]   (34.17,79.65) .. controls (32.5,79.04) and (31.26,78.95) .. (30,75) .. controls (28.54,70.42) and (28.57,65.58) .. (30,60) .. controls (31.43,54.42) and (39.67,55.06) .. (45,55) ;
\draw [shift={(35,80)}, rotate = 205.96] [color={rgb, 255:red, 191; green, 97; blue, 106 }  ,draw opacity=1 ][line width=0.75]    (4.37,-1.32) .. controls (2.78,-0.56) and (1.32,-0.12) .. (0,0) .. controls (1.32,0.12) and (2.78,0.56) .. (4.37,1.32)   ;
\draw [color={rgb, 255:red, 191; green, 97; blue, 106 }  ,draw opacity=1 ]   (65,85) .. controls (67.43,84.87) and (68.54,79.58) .. (70,75) .. controls (71.46,70.42) and (71.43,65.58) .. (70,60) .. controls (68.73,55.06) and (62.13,55) .. (56.94,55) ;
\draw [shift={(55,55)}, rotate = 0.66] [color={rgb, 255:red, 191; green, 97; blue, 106 }  ,draw opacity=1 ][line width=0.75]    (4.37,-1.32) .. controls (2.78,-0.56) and (1.32,-0.12) .. (0,0) .. controls (1.32,0.12) and (2.78,0.56) .. (4.37,1.32)   ;
\draw [color={rgb, 255:red, 191; green, 97; blue, 106 }  ,draw opacity=1 ]   (45,85) .. controls (56.21,85.37) and (68.54,74.58) .. (70,70) .. controls (71.46,65.42) and (71.43,65.58) .. (70,60) .. controls (68.73,55.06) and (62.13,55) .. (56.94,55) ;
\draw [shift={(55,55)}, rotate = 0.66] [color={rgb, 255:red, 191; green, 97; blue, 106 }  ,draw opacity=1 ][line width=0.75]    (4.37,-1.32) .. controls (2.78,-0.56) and (1.32,-0.12) .. (0,0) .. controls (1.32,0.12) and (2.78,0.56) .. (4.37,1.32)   ;
\draw  [draw opacity=0][fill={rgb, 255:red, 255; green, 255; blue, 255 }  ,fill opacity=1 ] (57.5,80) .. controls (57.5,78.62) and (58.62,77.5) .. (60,77.5) .. controls (61.38,77.5) and (62.5,78.62) .. (62.5,80) .. controls (62.5,81.38) and (61.38,82.5) .. (60,82.5) .. controls (58.62,82.5) and (57.5,81.38) .. (57.5,80) -- cycle ;
\draw    (60,90) -- (60,75) ;
\draw  [draw opacity=0][fill={rgb, 255:red, 255; green, 255; blue, 255 }  ,fill opacity=1 ] (37,78) .. controls (37,76.9) and (38.34,76) .. (40,76) .. controls (41.66,76) and (43,76.9) .. (43,78) .. controls (43,79.1) and (41.66,80) .. (40,80) .. controls (38.34,80) and (37,79.1) .. (37,78) -- cycle ;
\draw [color={rgb, 255:red, 191; green, 97; blue, 106 }  ,draw opacity=1 ]   (52.82,79.94) .. controls (33.45,79.26) and (31.4,74.42) .. (30,70) .. controls (28.54,65.42) and (28.57,65.58) .. (30,60) .. controls (31.43,54.42) and (39.67,55.06) .. (45,55) ;
\draw [shift={(55,80)}, rotate = 181.03] [color={rgb, 255:red, 191; green, 97; blue, 106 }  ,draw opacity=1 ][line width=0.75]    (4.37,-1.32) .. controls (2.78,-0.56) and (1.32,-0.12) .. (0,0) .. controls (1.32,0.12) and (2.78,0.56) .. (4.37,1.32)   ;
\draw [color={rgb, 255:red, 191; green, 97; blue, 106 }  ,draw opacity=1 ]   (67.82,39.94) .. controls (48.45,39.26) and (46.4,34.42) .. (45,30) .. controls (43.54,25.42) and (43.57,25.58) .. (45,20) .. controls (46.43,14.42) and (54.67,15.06) .. (60,15) ;
\draw [shift={(70,40)}, rotate = 181.03] [color={rgb, 255:red, 191; green, 97; blue, 106 }  ,draw opacity=1 ][line width=0.75]    (4.37,-1.32) .. controls (2.78,-0.56) and (1.32,-0.12) .. (0,0) .. controls (1.32,0.12) and (2.78,0.56) .. (4.37,1.32)   ;
\draw    (55,36) .. controls (55.03,50.73) and (50.03,44.73) .. (50,60) ;
\draw  [color={rgb, 255:red, 0; green, 0; blue, 0 }  ,draw opacity=1 ][fill={rgb, 255:red, 191; green, 97; blue, 106 }  ,fill opacity=0.25 ] (50,35) -- (54.5,20) -- (75.5,20) -- (80,35) -- cycle ;
\draw [color={rgb, 255:red, 191; green, 97; blue, 106 }  ,draw opacity=1 ]   (49.17,39.65) .. controls (47.5,39.04) and (46.26,38.95) .. (45,35) .. controls (43.54,30.42) and (43.57,25.58) .. (45,20) .. controls (46.43,14.42) and (54.67,15.06) .. (60,15) ;
\draw [shift={(50,40)}, rotate = 205.96] [color={rgb, 255:red, 191; green, 97; blue, 106 }  ,draw opacity=1 ][line width=0.75]    (4.37,-1.32) .. controls (2.78,-0.56) and (1.32,-0.12) .. (0,0) .. controls (1.32,0.12) and (2.78,0.56) .. (4.37,1.32)   ;
\draw [color={rgb, 255:red, 191; green, 97; blue, 106 }  ,draw opacity=1 ]   (80,45) .. controls (82.43,44.87) and (83.54,39.58) .. (85,35) .. controls (86.46,30.42) and (86.43,25.58) .. (85,20) .. controls (83.73,15.06) and (77.13,15) .. (71.94,15) ;
\draw [shift={(70,15)}, rotate = 0.66] [color={rgb, 255:red, 191; green, 97; blue, 106 }  ,draw opacity=1 ][line width=0.75]    (4.37,-1.32) .. controls (2.78,-0.56) and (1.32,-0.12) .. (0,0) .. controls (1.32,0.12) and (2.78,0.56) .. (4.37,1.32)   ;
\draw [color={rgb, 255:red, 191; green, 97; blue, 106 }  ,draw opacity=1 ]   (55,45) .. controls (75.36,44.69) and (83.54,34.58) .. (85,30) .. controls (86.46,25.42) and (86.43,25.58) .. (85,20) .. controls (83.73,15.06) and (77.13,15) .. (71.94,15) ;
\draw [shift={(70,15)}, rotate = 0.66] [color={rgb, 255:red, 191; green, 97; blue, 106 }  ,draw opacity=1 ][line width=0.75]    (4.37,-1.32) .. controls (2.78,-0.56) and (1.32,-0.12) .. (0,0) .. controls (1.32,0.12) and (2.78,0.56) .. (4.37,1.32)   ;
\draw    (65,20) -- (65,10) ;
\draw  [draw opacity=0][fill={rgb, 255:red, 255; green, 255; blue, 255 }  ,fill opacity=1 ] (52,38) .. controls (52,36.9) and (53.34,36) .. (55,36) .. controls (56.66,36) and (58,36.9) .. (58,38) .. controls (58,39.1) and (56.66,40) .. (55,40) .. controls (53.34,40) and (52,39.1) .. (52,38) -- cycle ;
\draw [color={rgb, 255:red, 191; green, 97; blue, 106 }  ,draw opacity=1 ]   (67.82,39.94) .. controls (48.45,39.26) and (46.4,34.42) .. (45,30) .. controls (43.54,25.42) and (43.57,25.58) .. (45,20) .. controls (46.43,14.42) and (54.67,15.06) .. (60,15) ;
\draw [shift={(70,40)}, rotate = 181.03] [color={rgb, 255:red, 191; green, 97; blue, 106 }  ,draw opacity=1 ][line width=0.75]    (4.37,-1.32) .. controls (2.78,-0.56) and (1.32,-0.12) .. (0,0) .. controls (1.32,0.12) and (2.78,0.56) .. (4.37,1.32)   ;
\draw    (75,35) .. controls (75.03,49.73) and (80.03,44.73) .. (80,60) ;
\draw    (80,90) -- (80,60) ;
\draw  [color={rgb, 255:red, 0; green, 0; blue, 0 }  ,draw opacity=1 ][fill={rgb, 255:red, 191; green, 97; blue, 106 }  ,fill opacity=0.25 ] (140,75) -- (144.5,60) -- (165.5,60) -- (170,75) -- cycle ;
\draw    (145,90) -- (145,75) ;
\draw [color={rgb, 255:red, 191; green, 97; blue, 106 }  ,draw opacity=1 ]   (139.17,79.65) .. controls (137.5,79.04) and (136.26,78.95) .. (135,75) .. controls (133.54,70.42) and (133.57,65.58) .. (135,60) .. controls (136.43,54.42) and (144.67,55.06) .. (150,55) ;
\draw [shift={(140,80)}, rotate = 205.96] [color={rgb, 255:red, 191; green, 97; blue, 106 }  ,draw opacity=1 ][line width=0.75]    (4.37,-1.32) .. controls (2.78,-0.56) and (1.32,-0.12) .. (0,0) .. controls (1.32,0.12) and (2.78,0.56) .. (4.37,1.32)   ;
\draw [color={rgb, 255:red, 191; green, 97; blue, 106 }  ,draw opacity=1 ]   (170,85) .. controls (172.43,84.87) and (173.54,79.58) .. (175,75) .. controls (176.46,70.42) and (176.43,65.58) .. (175,60) .. controls (173.73,55.06) and (167.13,55) .. (161.94,55) ;
\draw [shift={(160,55)}, rotate = 0.66] [color={rgb, 255:red, 191; green, 97; blue, 106 }  ,draw opacity=1 ][line width=0.75]    (4.37,-1.32) .. controls (2.78,-0.56) and (1.32,-0.12) .. (0,0) .. controls (1.32,0.12) and (2.78,0.56) .. (4.37,1.32)   ;
\draw [color={rgb, 255:red, 191; green, 97; blue, 106 }  ,draw opacity=1 ]   (150,85) .. controls (161.21,85.37) and (173.54,74.58) .. (175,70) .. controls (176.46,65.42) and (176.43,65.58) .. (175,60) .. controls (173.73,55.06) and (167.13,55) .. (161.94,55) ;
\draw [shift={(160,55)}, rotate = 0.66] [color={rgb, 255:red, 191; green, 97; blue, 106 }  ,draw opacity=1 ][line width=0.75]    (4.37,-1.32) .. controls (2.78,-0.56) and (1.32,-0.12) .. (0,0) .. controls (1.32,0.12) and (2.78,0.56) .. (4.37,1.32)   ;
\draw  [draw opacity=0][fill={rgb, 255:red, 255; green, 255; blue, 255 }  ,fill opacity=1 ] (162.5,80) .. controls (162.5,78.62) and (163.62,77.5) .. (165,77.5) .. controls (166.38,77.5) and (167.5,78.62) .. (167.5,80) .. controls (167.5,81.38) and (166.38,82.5) .. (165,82.5) .. controls (163.62,82.5) and (162.5,81.38) .. (162.5,80) -- cycle ;
\draw    (165,90) -- (165,75) ;
\draw  [draw opacity=0][fill={rgb, 255:red, 255; green, 255; blue, 255 }  ,fill opacity=1 ] (142,78) .. controls (142,76.9) and (143.34,76) .. (145,76) .. controls (146.66,76) and (148,76.9) .. (148,78) .. controls (148,79.1) and (146.66,80) .. (145,80) .. controls (143.34,80) and (142,79.1) .. (142,78) -- cycle ;
\draw [color={rgb, 255:red, 191; green, 97; blue, 106 }  ,draw opacity=1 ]   (157.82,79.94) .. controls (138.45,79.26) and (136.4,74.42) .. (135,70) .. controls (133.54,65.42) and (133.57,65.58) .. (135,60) .. controls (136.43,54.42) and (144.67,55.06) .. (150,55) ;
\draw [shift={(160,80)}, rotate = 181.03] [color={rgb, 255:red, 191; green, 97; blue, 106 }  ,draw opacity=1 ][line width=0.75]    (4.37,-1.32) .. controls (2.78,-0.56) and (1.32,-0.12) .. (0,0) .. controls (1.32,0.12) and (2.78,0.56) .. (4.37,1.32)   ;
\draw [color={rgb, 255:red, 191; green, 97; blue, 106 }  ,draw opacity=1 ]   (142.82,39.94) .. controls (123.45,39.26) and (121.4,34.42) .. (120,30) .. controls (118.54,25.42) and (118.57,25.58) .. (120,20) .. controls (121.43,14.42) and (129.67,15.06) .. (135,15) ;
\draw [shift={(145,40)}, rotate = 181.03] [color={rgb, 255:red, 191; green, 97; blue, 106 }  ,draw opacity=1 ][line width=0.75]    (4.37,-1.32) .. controls (2.78,-0.56) and (1.32,-0.12) .. (0,0) .. controls (1.32,0.12) and (2.78,0.56) .. (4.37,1.32)   ;
\draw    (130,36) .. controls (130.03,50.73) and (125.03,44.73) .. (125,60) ;
\draw  [color={rgb, 255:red, 0; green, 0; blue, 0 }  ,draw opacity=1 ][fill={rgb, 255:red, 191; green, 97; blue, 106 }  ,fill opacity=0.25 ] (125,35) -- (129.5,20) -- (150.5,20) -- (155,35) -- cycle ;
\draw [color={rgb, 255:red, 191; green, 97; blue, 106 }  ,draw opacity=1 ]   (124.17,39.65) .. controls (122.5,39.04) and (121.26,38.95) .. (120,35) .. controls (118.54,30.42) and (118.57,25.58) .. (120,20) .. controls (121.43,14.42) and (129.67,15.06) .. (135,15) ;
\draw [shift={(125,40)}, rotate = 205.96] [color={rgb, 255:red, 191; green, 97; blue, 106 }  ,draw opacity=1 ][line width=0.75]    (4.37,-1.32) .. controls (2.78,-0.56) and (1.32,-0.12) .. (0,0) .. controls (1.32,0.12) and (2.78,0.56) .. (4.37,1.32)   ;
\draw [color={rgb, 255:red, 191; green, 97; blue, 106 }  ,draw opacity=1 ]   (155,45) .. controls (157.43,44.87) and (158.54,39.58) .. (160,35) .. controls (161.46,30.42) and (161.43,25.58) .. (160,20) .. controls (158.73,15.06) and (152.13,15) .. (146.94,15) ;
\draw [shift={(145,15)}, rotate = 0.66] [color={rgb, 255:red, 191; green, 97; blue, 106 }  ,draw opacity=1 ][line width=0.75]    (4.37,-1.32) .. controls (2.78,-0.56) and (1.32,-0.12) .. (0,0) .. controls (1.32,0.12) and (2.78,0.56) .. (4.37,1.32)   ;
\draw [color={rgb, 255:red, 191; green, 97; blue, 106 }  ,draw opacity=1 ]   (130,45) .. controls (150.36,44.69) and (158.54,34.58) .. (160,30) .. controls (161.46,25.42) and (161.43,25.58) .. (160,20) .. controls (158.73,15.06) and (152.13,15) .. (146.94,15) ;
\draw [shift={(145,15)}, rotate = 0.66] [color={rgb, 255:red, 191; green, 97; blue, 106 }  ,draw opacity=1 ][line width=0.75]    (4.37,-1.32) .. controls (2.78,-0.56) and (1.32,-0.12) .. (0,0) .. controls (1.32,0.12) and (2.78,0.56) .. (4.37,1.32)   ;
\draw    (140,20) -- (140,10) ;
\draw  [draw opacity=0][fill={rgb, 255:red, 255; green, 255; blue, 255 }  ,fill opacity=1 ] (127,38) .. controls (127,36.9) and (128.34,36) .. (130,36) .. controls (131.66,36) and (133,36.9) .. (133,38) .. controls (133,39.1) and (131.66,40) .. (130,40) .. controls (128.34,40) and (127,39.1) .. (127,38) -- cycle ;
\draw [color={rgb, 255:red, 191; green, 97; blue, 106 }  ,draw opacity=1 ]   (142.82,39.94) .. controls (123.45,39.26) and (121.4,34.42) .. (120,30) .. controls (118.54,25.42) and (118.57,25.58) .. (120,20) .. controls (121.43,14.42) and (129.67,15.06) .. (135,15) ;
\draw [shift={(145,40)}, rotate = 181.03] [color={rgb, 255:red, 191; green, 97; blue, 106 }  ,draw opacity=1 ][line width=0.75]    (4.37,-1.32) .. controls (2.78,-0.56) and (1.32,-0.12) .. (0,0) .. controls (1.32,0.12) and (2.78,0.56) .. (4.37,1.32)   ;
\draw    (150,35) .. controls (150.03,49.73) and (155.03,44.73) .. (155,60) ;
\draw    (125,90) -- (125,60) ;
\draw  [draw opacity=0] (90,40) -- (115,40) -- (115,60) -- (90,60) -- cycle ;

\draw (50,67.5) node  [font=\footnotesize]  {$b$};
\draw (22.5,67.5) node  [font=\tiny,color={rgb, 255:red, 191; green, 97; blue, 106 }  ,opacity=1 ]  {$b_{0}$};
\draw (72.5,54.5) node  [font=\tiny,color={rgb, 255:red, 191; green, 97; blue, 106 }  ,opacity=1 ]  {$b_{1}$};
\draw (65,27.5) node  [font=\footnotesize]  {$a$};
\draw (37.5,27.5) node  [font=\tiny,color={rgb, 255:red, 191; green, 97; blue, 106 }  ,opacity=1 ]  {$a_{0}$};
\draw (92.5,25.5) node  [font=\tiny,color={rgb, 255:red, 191; green, 97; blue, 106 }  ,opacity=1 ]  {$a_{1}$};
\draw (155,67.5) node  [font=\footnotesize]  {$d$};
\draw (130.5,55.5) node  [font=\tiny,color={rgb, 255:red, 191; green, 97; blue, 106 }  ,opacity=1 ]  {$d_{0}$};
\draw (177.5,54.5) node  [font=\tiny,color={rgb, 255:red, 191; green, 97; blue, 106 }  ,opacity=1 ]  {$d_{1}$};
\draw (140,27.5) node  [font=\footnotesize]  {$c$};
\draw (112.5,27.5) node  [font=\tiny,color={rgb, 255:red, 191; green, 97; blue, 106 }  ,opacity=1 ]  {$c_{0}$};
\draw (167.5,25.5) node  [font=\tiny,color={rgb, 255:red, 191; green, 97; blue, 106 }  ,opacity=1 ]  {$c_{1}$};
\draw (102.5,50) node  [font=\footnotesize]  {$=$};

\end{tikzpicture}
       \caption{Equation from associativity.}
      \label{ax:fig:monoidalContourAssociativity}
    \end{figure}
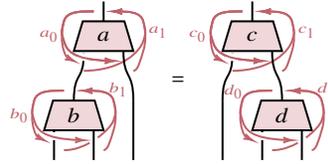

    For each application of unitality, $λ(a ⨾_1 b) = c = \rho(d ⨾_2 e)$, we impose the equations $a_0 ⨾ (b_0 ⊗ \im) = c_0 = d_0 ⨾ (\im ⊗ e_0)$ and $(b_1 ⊗ \im) ⨾ a_1 = c_1 = (\im ⊗ e_1) ⨾ d_1$. These follow from \Cref{ax:fig:monoidalContourUnitality}.

    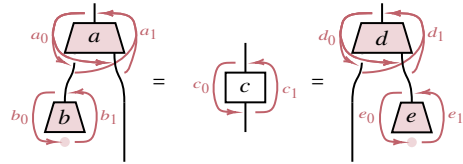
\begin{figure}[ht]
      \centering

\tikzset{every picture/.style={line width=0.75pt}} %

\begin{tikzpicture}[x=0.75pt,y=0.75pt,yscale=-1,xscale=1]
\draw [color={rgb, 255:red, 191; green, 97; blue, 106 }  ,draw opacity=1 ]   (67.82,39.94) .. controls (48.45,39.26) and (46.4,34.42) .. (45,30) .. controls (43.54,25.42) and (43.57,25.58) .. (45,20) .. controls (46.43,14.42) and (54.67,15.06) .. (60,15) ;
\draw [shift={(70,40)}, rotate = 181.03] [color={rgb, 255:red, 191; green, 97; blue, 106 }  ,draw opacity=1 ][line width=0.75]    (4.37,-1.32) .. controls (2.78,-0.56) and (1.32,-0.12) .. (0,0) .. controls (1.32,0.12) and (2.78,0.56) .. (4.37,1.32)   ;
\draw    (55,36) .. controls (55.03,50.73) and (50.03,44.73) .. (50,60) ;
\draw  [color={rgb, 255:red, 0; green, 0; blue, 0 }  ,draw opacity=1 ][fill={rgb, 255:red, 191; green, 97; blue, 106 }  ,fill opacity=0.25 ] (50,35) -- (54.5,20) -- (75.5,20) -- (80,35) -- cycle ;
\draw [color={rgb, 255:red, 191; green, 97; blue, 106 }  ,draw opacity=1 ]   (49.17,39.65) .. controls (47.5,39.04) and (46.26,38.95) .. (45,35) .. controls (43.54,30.42) and (43.57,25.58) .. (45,20) .. controls (46.43,14.42) and (54.67,15.06) .. (60,15) ;
\draw [shift={(50,40)}, rotate = 205.96] [color={rgb, 255:red, 191; green, 97; blue, 106 }  ,draw opacity=1 ][line width=0.75]    (4.37,-1.32) .. controls (2.78,-0.56) and (1.32,-0.12) .. (0,0) .. controls (1.32,0.12) and (2.78,0.56) .. (4.37,1.32)   ;
\draw [color={rgb, 255:red, 191; green, 97; blue, 106 }  ,draw opacity=1 ]   (80,45) .. controls (82.43,44.87) and (83.54,39.58) .. (85,35) .. controls (86.46,30.42) and (86.43,25.58) .. (85,20) .. controls (83.73,15.06) and (77.13,15) .. (71.94,15) ;
\draw [shift={(70,15)}, rotate = 0.66] [color={rgb, 255:red, 191; green, 97; blue, 106 }  ,draw opacity=1 ][line width=0.75]    (4.37,-1.32) .. controls (2.78,-0.56) and (1.32,-0.12) .. (0,0) .. controls (1.32,0.12) and (2.78,0.56) .. (4.37,1.32)   ;
\draw [color={rgb, 255:red, 191; green, 97; blue, 106 }  ,draw opacity=1 ]   (55,45) .. controls (75.36,44.69) and (83.54,34.58) .. (85,30) .. controls (86.46,25.42) and (86.43,25.58) .. (85,20) .. controls (83.73,15.06) and (77.13,15) .. (71.94,15) ;
\draw [shift={(70,15)}, rotate = 0.66] [color={rgb, 255:red, 191; green, 97; blue, 106 }  ,draw opacity=1 ][line width=0.75]    (4.37,-1.32) .. controls (2.78,-0.56) and (1.32,-0.12) .. (0,0) .. controls (1.32,0.12) and (2.78,0.56) .. (4.37,1.32)   ;
\draw    (65,20) -- (65,10) ;
\draw  [draw opacity=0][fill={rgb, 255:red, 255; green, 255; blue, 255 }  ,fill opacity=1 ] (52,38) .. controls (52,36.9) and (53.34,36) .. (55,36) .. controls (56.66,36) and (58,36.9) .. (58,38) .. controls (58,39.1) and (56.66,40) .. (55,40) .. controls (53.34,40) and (52,39.1) .. (52,38) -- cycle ;
\draw [color={rgb, 255:red, 191; green, 97; blue, 106 }  ,draw opacity=1 ]   (67.82,39.94) .. controls (48.45,39.26) and (46.4,34.42) .. (45,30) .. controls (43.54,25.42) and (43.57,25.58) .. (45,20) .. controls (46.43,14.42) and (54.67,15.06) .. (60,15) ;
\draw [shift={(70,40)}, rotate = 181.03] [color={rgb, 255:red, 191; green, 97; blue, 106 }  ,draw opacity=1 ][line width=0.75]    (4.37,-1.32) .. controls (2.78,-0.56) and (1.32,-0.12) .. (0,0) .. controls (1.32,0.12) and (2.78,0.56) .. (4.37,1.32)   ;
\draw    (75,35) .. controls (75.03,49.73) and (80.03,44.73) .. (80,60) ;
\draw    (80,90) -- (80,60) ;
\draw [color={rgb, 255:red, 191; green, 97; blue, 106 }  ,draw opacity=1 ]   (212.82,39.94) .. controls (193.45,39.26) and (191.4,34.42) .. (190,30) .. controls (188.54,25.42) and (188.57,25.58) .. (190,20) .. controls (191.43,14.42) and (199.67,15.06) .. (205,15) ;
\draw [shift={(215,40)}, rotate = 181.03] [color={rgb, 255:red, 191; green, 97; blue, 106 }  ,draw opacity=1 ][line width=0.75]    (4.37,-1.32) .. controls (2.78,-0.56) and (1.32,-0.12) .. (0,0) .. controls (1.32,0.12) and (2.78,0.56) .. (4.37,1.32)   ;
\draw    (200,36) .. controls (200.03,50.73) and (195.03,44.73) .. (195,60) ;
\draw  [color={rgb, 255:red, 0; green, 0; blue, 0 }  ,draw opacity=1 ][fill={rgb, 255:red, 191; green, 97; blue, 106 }  ,fill opacity=0.25 ] (195,35) -- (199.5,20) -- (220.5,20) -- (225,35) -- cycle ;
\draw [color={rgb, 255:red, 191; green, 97; blue, 106 }  ,draw opacity=1 ]   (194.17,39.65) .. controls (192.5,39.04) and (191.26,38.95) .. (190,35) .. controls (188.54,30.42) and (188.57,25.58) .. (190,20) .. controls (191.43,14.42) and (199.67,15.06) .. (205,15) ;
\draw [shift={(195,40)}, rotate = 205.96] [color={rgb, 255:red, 191; green, 97; blue, 106 }  ,draw opacity=1 ][line width=0.75]    (4.37,-1.32) .. controls (2.78,-0.56) and (1.32,-0.12) .. (0,0) .. controls (1.32,0.12) and (2.78,0.56) .. (4.37,1.32)   ;
\draw [color={rgb, 255:red, 191; green, 97; blue, 106 }  ,draw opacity=1 ]   (225,45) .. controls (227.43,44.87) and (228.54,39.58) .. (230,35) .. controls (231.46,30.42) and (231.43,25.58) .. (230,20) .. controls (228.73,15.06) and (222.13,15) .. (216.94,15) ;
\draw [shift={(215,15)}, rotate = 0.66] [color={rgb, 255:red, 191; green, 97; blue, 106 }  ,draw opacity=1 ][line width=0.75]    (4.37,-1.32) .. controls (2.78,-0.56) and (1.32,-0.12) .. (0,0) .. controls (1.32,0.12) and (2.78,0.56) .. (4.37,1.32)   ;
\draw [color={rgb, 255:red, 191; green, 97; blue, 106 }  ,draw opacity=1 ]   (200,45) .. controls (220.36,44.69) and (228.54,34.58) .. (230,30) .. controls (231.46,25.42) and (231.43,25.58) .. (230,20) .. controls (228.73,15.06) and (222.13,15) .. (216.94,15) ;
\draw [shift={(215,15)}, rotate = 0.66] [color={rgb, 255:red, 191; green, 97; blue, 106 }  ,draw opacity=1 ][line width=0.75]    (4.37,-1.32) .. controls (2.78,-0.56) and (1.32,-0.12) .. (0,0) .. controls (1.32,0.12) and (2.78,0.56) .. (4.37,1.32)   ;
\draw    (210,20) -- (210,10) ;
\draw  [draw opacity=0][fill={rgb, 255:red, 255; green, 255; blue, 255 }  ,fill opacity=1 ] (197,38) .. controls (197,36.9) and (198.34,36) .. (200,36) .. controls (201.66,36) and (203,36.9) .. (203,38) .. controls (203,39.1) and (201.66,40) .. (200,40) .. controls (198.34,40) and (197,39.1) .. (197,38) -- cycle ;
\draw [color={rgb, 255:red, 191; green, 97; blue, 106 }  ,draw opacity=1 ]   (212.82,39.94) .. controls (193.45,39.26) and (191.4,34.42) .. (190,30) .. controls (188.54,25.42) and (188.57,25.58) .. (190,20) .. controls (191.43,14.42) and (199.67,15.06) .. (205,15) ;
\draw [shift={(215,40)}, rotate = 181.03] [color={rgb, 255:red, 191; green, 97; blue, 106 }  ,draw opacity=1 ][line width=0.75]    (4.37,-1.32) .. controls (2.78,-0.56) and (1.32,-0.12) .. (0,0) .. controls (1.32,0.12) and (2.78,0.56) .. (4.37,1.32)   ;
\draw    (220,35) .. controls (220.03,49.73) and (225.03,44.73) .. (225,60) ;
\draw    (195,90) -- (195,60) ;
\draw  [draw opacity=0] (165,40) -- (190,40) -- (190,60) -- (165,60) -- cycle ;
\draw  [fill={rgb, 255:red, 191; green, 97; blue, 106 }  ,fill opacity=0.25 ] (40,75) -- (44.5,60) -- (55.5,60) -- (60,75) -- cycle ;
\draw  [draw opacity=0] (40,75) -- (60,75) -- (60,95) -- (40,95) -- cycle ;
\draw [color={rgb, 255:red, 191; green, 97; blue, 106 }  ,draw opacity=1 ]   (42.96,79.93) .. controls (37.06,79.68) and (36.32,79.14) .. (35,75) .. controls (33.54,70.42) and (33.57,65.58) .. (35,60) .. controls (36.43,54.42) and (39.67,55.06) .. (45,55) ;
\draw [shift={(45,80)}, rotate = 181.8] [color={rgb, 255:red, 191; green, 97; blue, 106 }  ,draw opacity=1 ][line width=0.75]    (4.37,-1.32) .. controls (2.78,-0.56) and (1.32,-0.12) .. (0,0) .. controls (1.32,0.12) and (2.78,0.56) .. (4.37,1.32)   ;
\draw [color={rgb, 255:red, 191; green, 97; blue, 106 }  ,draw opacity=1 ]   (57.04,54.86) .. controls (61.32,54.74) and (63.73,56.02) .. (65,60) .. controls (66.46,64.58) and (66.43,69.42) .. (65,75) .. controls (63.57,80.58) and (60.33,79.94) .. (55,80) ;
\draw [shift={(55,55)}, rotate = 354.07] [color={rgb, 255:red, 191; green, 97; blue, 106 }  ,draw opacity=1 ][line width=0.75]    (4.37,-1.32) .. controls (2.78,-0.56) and (1.32,-0.12) .. (0,0) .. controls (1.32,0.12) and (2.78,0.56) .. (4.37,1.32)   ;
\draw  [draw opacity=0][fill={rgb, 255:red, 191; green, 97; blue, 106 }  ,fill opacity=0.25 ] (47.5,80) .. controls (47.5,78.62) and (48.62,77.5) .. (50,77.5) .. controls (51.38,77.5) and (52.5,78.62) .. (52.5,80) .. controls (52.5,81.38) and (51.38,82.5) .. (50,82.5) .. controls (48.62,82.5) and (47.5,81.38) .. (47.5,80) -- cycle ;
\draw  [fill={rgb, 255:red, 191; green, 97; blue, 106 }  ,fill opacity=0.25 ] (215,75.15) -- (219.5,60.15) -- (230.5,60.15) -- (235,75.15) -- cycle ;
\draw  [draw opacity=0] (215,75.15) -- (235,75.15) -- (235,95.15) -- (215,95.15) -- cycle ;
\draw [color={rgb, 255:red, 191; green, 97; blue, 106 }  ,draw opacity=1 ]   (217.96,80.08) .. controls (212.06,79.83) and (211.32,79.29) .. (210,75.15) .. controls (208.54,70.57) and (208.57,65.73) .. (210,60.15) .. controls (211.43,54.57) and (214.67,55.21) .. (220,55.15) ;
\draw [shift={(220,80.15)}, rotate = 181.8] [color={rgb, 255:red, 191; green, 97; blue, 106 }  ,draw opacity=1 ][line width=0.75]    (4.37,-1.32) .. controls (2.78,-0.56) and (1.32,-0.12) .. (0,0) .. controls (1.32,0.12) and (2.78,0.56) .. (4.37,1.32)   ;
\draw [color={rgb, 255:red, 191; green, 97; blue, 106 }  ,draw opacity=1 ]   (232.04,55.01) .. controls (236.32,54.89) and (238.73,56.17) .. (240,60.15) .. controls (241.46,64.73) and (241.43,69.57) .. (240,75.15) .. controls (238.57,80.73) and (235.33,80.09) .. (230,80.15) ;
\draw [shift={(230,55.15)}, rotate = 354.07] [color={rgb, 255:red, 191; green, 97; blue, 106 }  ,draw opacity=1 ][line width=0.75]    (4.37,-1.32) .. controls (2.78,-0.56) and (1.32,-0.12) .. (0,0) .. controls (1.32,0.12) and (2.78,0.56) .. (4.37,1.32)   ;
\draw  [draw opacity=0][fill={rgb, 255:red, 191; green, 97; blue, 106 }  ,fill opacity=0.25 ] (222.5,80.15) .. controls (222.5,78.77) and (223.62,77.65) .. (225,77.65) .. controls (226.38,77.65) and (227.5,78.77) .. (227.5,80.15) .. controls (227.5,81.53) and (226.38,82.65) .. (225,82.65) .. controls (223.62,82.65) and (222.5,81.53) .. (222.5,80.15) -- cycle ;
\draw    (141.08,45) -- (141.08,30) ;
\draw   (131.08,45) -- (151.08,45) -- (151.08,60) -- (131.08,60) -- cycle ;
\draw    (141.08,75) -- (141.08,60) ;
\draw [color={rgb, 255:red, 191; green, 97; blue, 106 }  ,draw opacity=1 ]   (134.05,64.93) .. controls (128.14,64.68) and (127.4,64.14) .. (126.08,60) .. controls (124.63,55.42) and (124.65,50.58) .. (126.08,45) .. controls (127.52,39.42) and (130.76,40.06) .. (136.08,40) ;
\draw [shift={(136.08,65)}, rotate = 181.8] [color={rgb, 255:red, 191; green, 97; blue, 106 }  ,draw opacity=1 ][line width=0.75]    (4.37,-1.32) .. controls (2.78,-0.56) and (1.32,-0.12) .. (0,0) .. controls (1.32,0.12) and (2.78,0.56) .. (4.37,1.32)   ;
\draw [color={rgb, 255:red, 191; green, 97; blue, 106 }  ,draw opacity=1 ]   (148.13,39.86) .. controls (152.4,39.74) and (154.82,41.02) .. (156.08,45) .. controls (157.54,49.58) and (157.52,54.42) .. (156.08,60) .. controls (154.65,65.58) and (151.41,64.94) .. (146.08,65) ;
\draw [shift={(146.08,40)}, rotate = 354.07] [color={rgb, 255:red, 191; green, 97; blue, 106 }  ,draw opacity=1 ][line width=0.75]    (4.37,-1.32) .. controls (2.78,-0.56) and (1.32,-0.12) .. (0,0) .. controls (1.32,0.12) and (2.78,0.56) .. (4.37,1.32)   ;
\draw  [draw opacity=0] (85,40) -- (110,40) -- (110,60) -- (85,60) -- cycle ;

\draw (65,27.5) node  [font=\footnotesize]  {$a$};
\draw (37.5,27.5) node  [font=\tiny,color={rgb, 255:red, 191; green, 97; blue, 106 }  ,opacity=1 ]  {$a_{0}$};
\draw (92.5,25.5) node  [font=\tiny,color={rgb, 255:red, 191; green, 97; blue, 106 }  ,opacity=1 ]  {$a_{1}$};
\draw (210,27.5) node  [font=\footnotesize]  {$d$};
\draw (182.5,27.5) node  [font=\tiny,color={rgb, 255:red, 191; green, 97; blue, 106 }  ,opacity=1 ]  {$d_{0}$};
\draw (237.5,25.5) node  [font=\tiny,color={rgb, 255:red, 191; green, 97; blue, 106 }  ,opacity=1 ]  {$d_{1}$};
\draw (177.5,50) node  [font=\footnotesize]  {$=$};
\draw (50,67.5) node  [font=\footnotesize]  {$b$};
\draw (72.5,67.5) node  [font=\tiny,color={rgb, 255:red, 191; green, 97; blue, 106 }  ,opacity=1 ]  {$b_{1}$};
\draw (27.5,67.5) node  [font=\tiny,color={rgb, 255:red, 191; green, 97; blue, 106 }  ,opacity=1 ]  {$b_{0}$};
\draw (225,67.65) node  [font=\footnotesize]  {$e$};
\draw (247.5,67.65) node  [font=\tiny,color={rgb, 255:red, 191; green, 97; blue, 106 }  ,opacity=1 ]  {$e_{1}$};
\draw (202.5,67.65) node  [font=\tiny,color={rgb, 255:red, 191; green, 97; blue, 106 }  ,opacity=1 ]  {$e_{0}$};
\draw (141.08,52.5) node  [font=\footnotesize]  {$c$};
\draw (119,52.5) node  [font=\tiny,color={rgb, 255:red, 191; green, 97; blue, 106 }  ,opacity=1 ]  {$c_{0}$};
\draw (164,55) node  [font=\tiny,color={rgb, 255:red, 191; green, 97; blue, 106 }  ,opacity=1 ]  {$c_{1}$};
\draw (97.5,50) node  [font=\footnotesize]  {$=$};

\end{tikzpicture}
       \caption{Equations from unitality.}
      \label{ax:fig:monoidalContourUnitality}
    \end{figure}

    For each application of the laxator, $ψ_2(a \mathbin{|} b \mathbin{|} c) = (d \mathbin{|} e \mathbin{|} f)$, we impose the equation $a_0 ⨾ (b_0 ⊗ c_0) = d_0 ⨾ e_0$, the middle equation $b_1 ⊗ c_1 = e_1 ⨾ d_1 ⨾ f_0$, and $(b_2 ⊗ c_2) ⨾ a_1 = f_1 ⨾ d_2$. These follow \Cref{ax:fig:monoidal-contour-equation1}.

    For each application of the laxator, $ψ_0(a) = (b \mathbin{|}_1 c \mathbin{|}_2 d)$, we impose an equation $a_0 = b_0 ⨾ c_0$, an equation $\im = c_1 ⨾ b_1 ⨾ d_0$, and an equation $a_1 = d_1 ⨾ b_2$. This follows \Cref{ax:fig:monoidal-contour-equation2}.

    For each application of the laxator, $\varphi_2(a \mathbin{|}_1 b \mathbin{|}_2 c) = d$, we impose an equation $a_0 ⨾ (b_0 ⊗ c_0) ⨾ a_1 = d_0$. This follows \Cref{ax:fig:monoidal-contour-equation3}.

    For each application of the laxator, $\varphi_0(a) = b$, we impose an equation $a_0 ⨾ a_1 = b_0$. This follows \Cref{ax:fig:monoidal-contour-equation4}.

    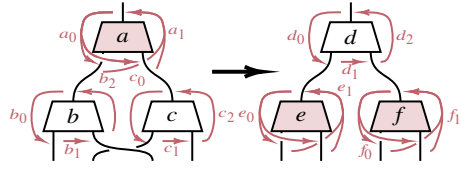
\begin{figure}[ht!]
      \centering

\tikzset{every picture/.style={line width=0.75pt}} %

\begin{tikzpicture}[x=0.75pt,y=0.75pt,yscale=-1,xscale=1]
\draw [color={rgb, 255:red, 191; green, 97; blue, 106 }  ,draw opacity=1 ]   (187.82,39.94) .. controls (168.45,39.26) and (166.4,34.42) .. (165,30) .. controls (163.54,25.42) and (163.57,25.58) .. (165,20) .. controls (166.43,14.42) and (174.67,15.06) .. (180,15) ;
\draw [shift={(190,40)}, rotate = 181.03] [color={rgb, 255:red, 191; green, 97; blue, 106 }  ,draw opacity=1 ][line width=0.75]    (4.37,-1.32) .. controls (2.78,-0.56) and (1.32,-0.12) .. (0,0) .. controls (1.32,0.12) and (2.78,0.56) .. (4.37,1.32)   ;
\draw    (175,35) .. controls (175.03,49.73) and (160.03,44.73) .. (160,60) ;
\draw  [color={rgb, 255:red, 0; green, 0; blue, 0 }  ,draw opacity=1 ][fill={rgb, 255:red, 191; green, 97; blue, 106 }  ,fill opacity=0.25 ] (170,35) -- (174.5,20) -- (195.5,20) -- (200,35) -- cycle ;
\draw [color={rgb, 255:red, 191; green, 97; blue, 106 }  ,draw opacity=1 ]   (169.17,39.65) .. controls (167.5,39.04) and (166.26,38.95) .. (165,35) .. controls (163.54,30.42) and (163.57,25.58) .. (165,20) .. controls (166.43,14.42) and (174.67,15.06) .. (180,15) ;
\draw [shift={(170,40)}, rotate = 205.96] [color={rgb, 255:red, 191; green, 97; blue, 106 }  ,draw opacity=1 ][line width=0.75]    (4.37,-1.32) .. controls (2.78,-0.56) and (1.32,-0.12) .. (0,0) .. controls (1.32,0.12) and (2.78,0.56) .. (4.37,1.32)   ;
\draw [color={rgb, 255:red, 191; green, 97; blue, 106 }  ,draw opacity=1 ]   (200,45) .. controls (202.43,44.87) and (203.54,39.58) .. (205,35) .. controls (206.46,30.42) and (206.43,25.58) .. (205,20) .. controls (203.73,15.06) and (197.13,15) .. (191.94,15) ;
\draw [shift={(190,15)}, rotate = 0.66] [color={rgb, 255:red, 191; green, 97; blue, 106 }  ,draw opacity=1 ][line width=0.75]    (4.37,-1.32) .. controls (2.78,-0.56) and (1.32,-0.12) .. (0,0) .. controls (1.32,0.12) and (2.78,0.56) .. (4.37,1.32)   ;
\draw [color={rgb, 255:red, 191; green, 97; blue, 106 }  ,draw opacity=1 ]   (175,45) .. controls (195.36,44.69) and (203.54,34.58) .. (205,30) .. controls (206.46,25.42) and (206.43,25.58) .. (205,20) .. controls (203.73,15.06) and (197.13,15) .. (191.94,15) ;
\draw [shift={(190,15)}, rotate = 0.66] [color={rgb, 255:red, 191; green, 97; blue, 106 }  ,draw opacity=1 ][line width=0.75]    (4.37,-1.32) .. controls (2.78,-0.56) and (1.32,-0.12) .. (0,0) .. controls (1.32,0.12) and (2.78,0.56) .. (4.37,1.32)   ;
\draw  [draw opacity=0][fill={rgb, 255:red, 255; green, 255; blue, 255 }  ,fill opacity=1 ] (192.5,40) .. controls (192.5,38.62) and (193.62,37.5) .. (195,37.5) .. controls (196.38,37.5) and (197.5,38.62) .. (197.5,40) .. controls (197.5,41.38) and (196.38,42.5) .. (195,42.5) .. controls (193.62,42.5) and (192.5,41.38) .. (192.5,40) -- cycle ;
\draw    (185,20) -- (185,10) ;
\draw   (195,75) -- (199.5,60) -- (220.5,60) -- (225,75) -- cycle ;
\draw [color={rgb, 255:red, 191; green, 97; blue, 106 }  ,draw opacity=1 ]   (193.13,79.28) .. controls (191.96,78.82) and (190.98,78.09) .. (190,75) .. controls (188.54,70.42) and (188.57,65.58) .. (190,60) .. controls (191.43,54.42) and (199.67,55.06) .. (205,55) ;
\draw [shift={(195,80)}, rotate = 205.96] [color={rgb, 255:red, 191; green, 97; blue, 106 }  ,draw opacity=1 ][line width=0.75]    (4.37,-1.32) .. controls (2.78,-0.56) and (1.32,-0.12) .. (0,0) .. controls (1.32,0.12) and (2.78,0.56) .. (4.37,1.32)   ;
\draw [color={rgb, 255:red, 191; green, 97; blue, 106 }  ,draw opacity=1 ]   (205,80) -- (213,80) ;
\draw [shift={(215,80)}, rotate = 180] [color={rgb, 255:red, 191; green, 97; blue, 106 }  ,draw opacity=1 ][line width=0.75]    (4.37,-1.32) .. controls (2.78,-0.56) and (1.32,-0.12) .. (0,0) .. controls (1.32,0.12) and (2.78,0.56) .. (4.37,1.32)   ;
\draw [color={rgb, 255:red, 191; green, 97; blue, 106 }  ,draw opacity=1 ]   (225,80) .. controls (227.43,79.87) and (228.54,79.58) .. (230,75) .. controls (231.46,70.42) and (231.43,65.58) .. (230,60) .. controls (228.73,55.06) and (222.13,55) .. (216.94,55) ;
\draw [shift={(215,55)}, rotate = 0.66] [color={rgb, 255:red, 191; green, 97; blue, 106 }  ,draw opacity=1 ][line width=0.75]    (4.37,-1.32) .. controls (2.78,-0.56) and (1.32,-0.12) .. (0,0) .. controls (1.32,0.12) and (2.78,0.56) .. (4.37,1.32)   ;
\draw    (220,90) -- (220,75) ;
\draw   (145,75) -- (149.5,60) -- (170.5,60) -- (175,75) -- cycle ;
\draw [color={rgb, 255:red, 191; green, 97; blue, 106 }  ,draw opacity=1 ]   (143.13,79.28) .. controls (141.96,78.82) and (140.98,78.09) .. (140,75) .. controls (138.54,70.42) and (138.57,65.58) .. (140,60) .. controls (141.43,54.42) and (149.67,55.06) .. (155,55) ;
\draw [shift={(145,80)}, rotate = 205.96] [color={rgb, 255:red, 191; green, 97; blue, 106 }  ,draw opacity=1 ][line width=0.75]    (4.37,-1.32) .. controls (2.78,-0.56) and (1.32,-0.12) .. (0,0) .. controls (1.32,0.12) and (2.78,0.56) .. (4.37,1.32)   ;
\draw [color={rgb, 255:red, 191; green, 97; blue, 106 }  ,draw opacity=1 ]   (155,80) -- (163,80) ;
\draw [shift={(165,80)}, rotate = 180] [color={rgb, 255:red, 191; green, 97; blue, 106 }  ,draw opacity=1 ][line width=0.75]    (4.37,-1.32) .. controls (2.78,-0.56) and (1.32,-0.12) .. (0,0) .. controls (1.32,0.12) and (2.78,0.56) .. (4.37,1.32)   ;
\draw [color={rgb, 255:red, 191; green, 97; blue, 106 }  ,draw opacity=1 ]   (180,75) .. controls (181.46,70.42) and (181.43,65.58) .. (180,60) .. controls (178.73,55.06) and (172.13,55) .. (166.94,55) ;
\draw [shift={(165,55)}, rotate = 0.66] [color={rgb, 255:red, 191; green, 97; blue, 106 }  ,draw opacity=1 ][line width=0.75]    (4.37,-1.32) .. controls (2.78,-0.56) and (1.32,-0.12) .. (0,0) .. controls (1.32,0.12) and (2.78,0.56) .. (4.37,1.32)   ;
\draw    (150,90) -- (150,75) ;
\draw  [draw opacity=0][fill={rgb, 255:red, 255; green, 255; blue, 255 }  ,fill opacity=1 ] (172,38) .. controls (172,36.9) and (173.34,36) .. (175,36) .. controls (176.66,36) and (178,36.9) .. (178,38) .. controls (178,39.1) and (176.66,40) .. (175,40) .. controls (173.34,40) and (172,39.1) .. (172,38) -- cycle ;
\draw  [draw opacity=0][fill={rgb, 255:red, 255; green, 255; blue, 255 }  ,fill opacity=1 ] (197,45) .. controls (197,43.9) and (198.34,43) .. (200,43) .. controls (201.66,43) and (203,43.9) .. (203,45) .. controls (203,46.1) and (201.66,47) .. (200,47) .. controls (198.34,47) and (197,46.1) .. (197,45) -- cycle ;
\draw [color={rgb, 255:red, 191; green, 97; blue, 106 }  ,draw opacity=1 ]   (187.82,39.94) .. controls (168.45,39.26) and (166.4,34.42) .. (165,30) .. controls (163.54,25.42) and (163.57,25.58) .. (165,20) .. controls (166.43,14.42) and (174.67,15.06) .. (180,15) ;
\draw [shift={(190,40)}, rotate = 181.03] [color={rgb, 255:red, 191; green, 97; blue, 106 }  ,draw opacity=1 ][line width=0.75]    (4.37,-1.32) .. controls (2.78,-0.56) and (1.32,-0.12) .. (0,0) .. controls (1.32,0.12) and (2.78,0.56) .. (4.37,1.32)   ;
\draw    (200,75) .. controls (199.96,89.32) and (170.14,79.77) .. (170,90) ;
\draw    (195,35) .. controls (195.03,49.73) and (210.03,44.73) .. (210,60) ;
\draw  [draw opacity=0][fill={rgb, 255:red, 255; green, 255; blue, 255 }  ,fill opacity=1 ] (178,84) .. controls (178,81.79) and (181.13,80) .. (185,80) .. controls (188.87,80) and (192,81.79) .. (192,84) .. controls (192,86.21) and (188.87,88) .. (185,88) .. controls (181.13,88) and (178,86.21) .. (178,84) -- cycle ;
\draw    (170,75) .. controls (170.14,89.16) and (200.14,80.08) .. (200,90) ;
\draw [color={rgb, 255:red, 0; green, 0; blue, 0 }  ,draw opacity=1 ][line width=1.5]    (230,45) -- (252,45) ;
\draw [shift={(255,45)}, rotate = 180] [color={rgb, 255:red, 0; green, 0; blue, 0 }  ,draw opacity=1 ][line width=1.5]    (8.53,-2.57) .. controls (5.42,-1.09) and (2.58,-0.23) .. (0,0) .. controls (2.58,0.23) and (5.42,1.09) .. (8.53,2.57)   ;
\draw  [color={rgb, 255:red, 0; green, 0; blue, 0 }  ,draw opacity=1 ][fill={rgb, 255:red, 191; green, 97; blue, 106 }  ,fill opacity=0.25 ] (260,75) -- (264.5,60) -- (285.5,60) -- (290,75) -- cycle ;
\draw    (265,90) -- (265,75) ;
\draw [color={rgb, 255:red, 191; green, 97; blue, 106 }  ,draw opacity=1 ]   (259.17,79.65) .. controls (257.5,79.04) and (256.26,78.95) .. (255,75) .. controls (253.54,70.42) and (253.57,65.58) .. (255,60) .. controls (256.43,54.42) and (264.67,55.06) .. (270,55) ;
\draw [shift={(260,80)}, rotate = 205.96] [color={rgb, 255:red, 191; green, 97; blue, 106 }  ,draw opacity=1 ][line width=0.75]    (4.37,-1.32) .. controls (2.78,-0.56) and (1.32,-0.12) .. (0,0) .. controls (1.32,0.12) and (2.78,0.56) .. (4.37,1.32)   ;
\draw [color={rgb, 255:red, 191; green, 97; blue, 106 }  ,draw opacity=1 ]   (290,85) .. controls (292.43,84.87) and (293.54,79.58) .. (295,75) .. controls (296.46,70.42) and (296.43,65.58) .. (295,60) .. controls (293.73,55.06) and (287.13,55) .. (281.94,55) ;
\draw [shift={(280,55)}, rotate = 0.66] [color={rgb, 255:red, 191; green, 97; blue, 106 }  ,draw opacity=1 ][line width=0.75]    (4.37,-1.32) .. controls (2.78,-0.56) and (1.32,-0.12) .. (0,0) .. controls (1.32,0.12) and (2.78,0.56) .. (4.37,1.32)   ;
\draw [color={rgb, 255:red, 191; green, 97; blue, 106 }  ,draw opacity=1 ]   (270,85) .. controls (281.21,85.37) and (293.54,74.58) .. (295,70) .. controls (296.46,65.42) and (296.43,65.58) .. (295,60) .. controls (293.73,55.06) and (287.13,55) .. (281.94,55) ;
\draw [shift={(280,55)}, rotate = 0.66] [color={rgb, 255:red, 191; green, 97; blue, 106 }  ,draw opacity=1 ][line width=0.75]    (4.37,-1.32) .. controls (2.78,-0.56) and (1.32,-0.12) .. (0,0) .. controls (1.32,0.12) and (2.78,0.56) .. (4.37,1.32)   ;
\draw  [draw opacity=0][fill={rgb, 255:red, 255; green, 255; blue, 255 }  ,fill opacity=1 ] (282.5,80) .. controls (282.5,78.62) and (283.62,77.5) .. (285,77.5) .. controls (286.38,77.5) and (287.5,78.62) .. (287.5,80) .. controls (287.5,81.38) and (286.38,82.5) .. (285,82.5) .. controls (283.62,82.5) and (282.5,81.38) .. (282.5,80) -- cycle ;
\draw    (285,90) -- (285,75) ;
\draw   (285,35) -- (289.5,20) -- (310.5,20) -- (315,35) -- cycle ;
\draw    (300,20) -- (300,10) ;
\draw [color={rgb, 255:red, 191; green, 97; blue, 106 }  ,draw opacity=1 ]   (283.13,39.28) .. controls (281.96,38.82) and (280.98,38.09) .. (280,35) .. controls (278.54,30.42) and (278.57,25.58) .. (280,20) .. controls (281.43,14.42) and (289.67,15.06) .. (295,15) ;
\draw [shift={(285,40)}, rotate = 205.96] [color={rgb, 255:red, 191; green, 97; blue, 106 }  ,draw opacity=1 ][line width=0.75]    (4.37,-1.32) .. controls (2.78,-0.56) and (1.32,-0.12) .. (0,0) .. controls (1.32,0.12) and (2.78,0.56) .. (4.37,1.32)   ;
\draw [color={rgb, 255:red, 191; green, 97; blue, 106 }  ,draw opacity=1 ]   (295,40) -- (303,40) ;
\draw [shift={(305,40)}, rotate = 180] [color={rgb, 255:red, 191; green, 97; blue, 106 }  ,draw opacity=1 ][line width=0.75]    (4.37,-1.32) .. controls (2.78,-0.56) and (1.32,-0.12) .. (0,0) .. controls (1.32,0.12) and (2.78,0.56) .. (4.37,1.32)   ;
\draw [color={rgb, 255:red, 191; green, 97; blue, 106 }  ,draw opacity=1 ]   (315,40) .. controls (317.43,39.87) and (318.54,39.58) .. (320,35) .. controls (321.46,30.42) and (321.43,25.58) .. (320,20) .. controls (318.73,15.06) and (312.13,15) .. (306.94,15) ;
\draw [shift={(305,15)}, rotate = 0.66] [color={rgb, 255:red, 191; green, 97; blue, 106 }  ,draw opacity=1 ][line width=0.75]    (4.37,-1.32) .. controls (2.78,-0.56) and (1.32,-0.12) .. (0,0) .. controls (1.32,0.12) and (2.78,0.56) .. (4.37,1.32)   ;
\draw    (290,35) .. controls (290.03,49.73) and (275.03,44.73) .. (275,60) ;
\draw  [color={rgb, 255:red, 0; green, 0; blue, 0 }  ,draw opacity=1 ][fill={rgb, 255:red, 191; green, 97; blue, 106 }  ,fill opacity=0.25 ] (310,75) -- (314.5,60) -- (335.5,60) -- (340,75) -- cycle ;
\draw    (315,90) -- (315,75) ;
\draw [color={rgb, 255:red, 191; green, 97; blue, 106 }  ,draw opacity=1 ]   (309.17,79.65) .. controls (307.5,79.04) and (306.26,78.95) .. (305,75) .. controls (303.54,70.42) and (303.57,65.58) .. (305,60) .. controls (306.43,54.42) and (314.67,55.06) .. (320,55) ;
\draw [shift={(310,80)}, rotate = 205.96] [color={rgb, 255:red, 191; green, 97; blue, 106 }  ,draw opacity=1 ][line width=0.75]    (4.37,-1.32) .. controls (2.78,-0.56) and (1.32,-0.12) .. (0,0) .. controls (1.32,0.12) and (2.78,0.56) .. (4.37,1.32)   ;
\draw [color={rgb, 255:red, 191; green, 97; blue, 106 }  ,draw opacity=1 ]   (340,85) .. controls (342.43,84.87) and (343.54,79.58) .. (345,75) .. controls (346.46,70.42) and (346.43,65.58) .. (345,60) .. controls (343.73,55.06) and (337.13,55) .. (331.94,55) ;
\draw [shift={(330,55)}, rotate = 0.66] [color={rgb, 255:red, 191; green, 97; blue, 106 }  ,draw opacity=1 ][line width=0.75]    (4.37,-1.32) .. controls (2.78,-0.56) and (1.32,-0.12) .. (0,0) .. controls (1.32,0.12) and (2.78,0.56) .. (4.37,1.32)   ;
\draw [color={rgb, 255:red, 191; green, 97; blue, 106 }  ,draw opacity=1 ]   (320,85) .. controls (331.21,85.37) and (343.54,74.58) .. (345,70) .. controls (346.46,65.42) and (346.43,65.58) .. (345,60) .. controls (343.73,55.06) and (337.13,55) .. (331.94,55) ;
\draw [shift={(330,55)}, rotate = 0.66] [color={rgb, 255:red, 191; green, 97; blue, 106 }  ,draw opacity=1 ][line width=0.75]    (4.37,-1.32) .. controls (2.78,-0.56) and (1.32,-0.12) .. (0,0) .. controls (1.32,0.12) and (2.78,0.56) .. (4.37,1.32)   ;
\draw  [draw opacity=0][fill={rgb, 255:red, 255; green, 255; blue, 255 }  ,fill opacity=1 ] (332.5,80) .. controls (332.5,78.62) and (333.62,77.5) .. (335,77.5) .. controls (336.38,77.5) and (337.5,78.62) .. (337.5,80) .. controls (337.5,81.38) and (336.38,82.5) .. (335,82.5) .. controls (333.62,82.5) and (332.5,81.38) .. (332.5,80) -- cycle ;
\draw    (310,35) .. controls (310.03,49.73) and (325.03,44.73) .. (325,60) ;
\draw    (335,90) -- (335,75) ;
\draw  [draw opacity=0][fill={rgb, 255:red, 255; green, 255; blue, 255 }  ,fill opacity=1 ] (262,78) .. controls (262,76.9) and (263.34,76) .. (265,76) .. controls (266.66,76) and (268,76.9) .. (268,78) .. controls (268,79.1) and (266.66,80) .. (265,80) .. controls (263.34,80) and (262,79.1) .. (262,78) -- cycle ;
\draw  [draw opacity=0][fill={rgb, 255:red, 255; green, 255; blue, 255 }  ,fill opacity=1 ] (312,78) .. controls (312,76.9) and (313.34,76) .. (315,76) .. controls (316.66,76) and (318,76.9) .. (318,78) .. controls (318,79.1) and (316.66,80) .. (315,80) .. controls (313.34,80) and (312,79.1) .. (312,78) -- cycle ;
\draw [color={rgb, 255:red, 191; green, 97; blue, 106 }  ,draw opacity=1 ]   (277.82,79.94) .. controls (258.45,79.26) and (256.4,74.42) .. (255,70) .. controls (253.54,65.42) and (253.57,65.58) .. (255,60) .. controls (256.43,54.42) and (264.67,55.06) .. (270,55) ;
\draw [shift={(280,80)}, rotate = 181.03] [color={rgb, 255:red, 191; green, 97; blue, 106 }  ,draw opacity=1 ][line width=0.75]    (4.37,-1.32) .. controls (2.78,-0.56) and (1.32,-0.12) .. (0,0) .. controls (1.32,0.12) and (2.78,0.56) .. (4.37,1.32)   ;
\draw [color={rgb, 255:red, 191; green, 97; blue, 106 }  ,draw opacity=1 ]   (327.82,79.94) .. controls (308.45,79.26) and (306.4,74.42) .. (305,70) .. controls (303.54,65.42) and (303.57,65.58) .. (305,60) .. controls (306.43,54.42) and (314.67,55.06) .. (320,55) ;
\draw [shift={(330,80)}, rotate = 181.03] [color={rgb, 255:red, 191; green, 97; blue, 106 }  ,draw opacity=1 ][line width=0.75]    (4.37,-1.32) .. controls (2.78,-0.56) and (1.32,-0.12) .. (0,0) .. controls (1.32,0.12) and (2.78,0.56) .. (4.37,1.32)   ;

\draw (185,27.5) node  [font=\footnotesize]  {$a$};
\draw (157.5,27.5) node  [font=\tiny,color={rgb, 255:red, 191; green, 97; blue, 106 }  ,opacity=1 ]  {$a_{0}$};
\draw (212.5,25.5) node  [font=\tiny,color={rgb, 255:red, 191; green, 97; blue, 106 }  ,opacity=1 ]  {$a_{1}$};
\draw (210,67.5) node  [font=\footnotesize]  {$c$};
\draw (192.5,49.5) node  [font=\tiny,color={rgb, 255:red, 191; green, 97; blue, 106 }  ,opacity=1 ]  {$c_{0}$};
\draw (210,85) node  [font=\tiny,color={rgb, 255:red, 191; green, 97; blue, 106 }  ,opacity=1 ]  {$c_{1}$};
\draw (237.5,67.5) node  [font=\tiny,color={rgb, 255:red, 191; green, 97; blue, 106 }  ,opacity=1 ]  {$c_{2}$};
\draw (160,67.5) node  [font=\footnotesize]  {$b$};
\draw (132.5,67.5) node  [font=\tiny,color={rgb, 255:red, 191; green, 97; blue, 106 }  ,opacity=1 ]  {$b_{0}$};
\draw (160,85) node  [font=\tiny,color={rgb, 255:red, 191; green, 97; blue, 106 }  ,opacity=1 ]  {$b_{1}$};
\draw (177.5,49.5) node  [font=\tiny,color={rgb, 255:red, 191; green, 97; blue, 106 }  ,opacity=1 ]  {$b_{2}$};
\draw (275,67.5) node  [font=\footnotesize]  {$e$};
\draw (247.5,67.5) node  [font=\tiny,color={rgb, 255:red, 191; green, 97; blue, 106 }  ,opacity=1 ]  {$e_{0}$};
\draw (297.5,54.5) node  [font=\tiny,color={rgb, 255:red, 191; green, 97; blue, 106 }  ,opacity=1 ]  {$e_{1}$};
\draw (300,27.5) node  [font=\footnotesize]  {$d$};
\draw (272.5,27.5) node  [font=\tiny,color={rgb, 255:red, 191; green, 97; blue, 106 }  ,opacity=1 ]  {$d_{0}$};
\draw (300,45) node  [font=\tiny,color={rgb, 255:red, 191; green, 97; blue, 106 }  ,opacity=1 ]  {$d_{1}$};
\draw (327.5,27.5) node  [font=\tiny,color={rgb, 255:red, 191; green, 97; blue, 106 }  ,opacity=1 ]  {$d_{2}$};
\draw (325,67.5) node  [font=\footnotesize]  {$f$};
\draw (307.5,85.5) node  [font=\tiny,color={rgb, 255:red, 191; green, 97; blue, 106 }  ,opacity=1 ]  {$f_{0}$};
\draw (352.5,67.5) node  [font=\tiny,color={rgb, 255:red, 191; green, 97; blue, 106 }  ,opacity=1 ]  {$f_{1}$};

\end{tikzpicture}
       \caption{Equations for the first laxator.}
      \label{ax:fig:monoidal-contour-equation1}
    \end{figure}
    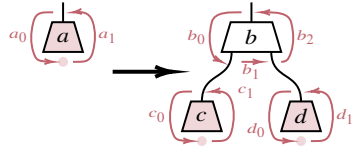
\begin{figure}[ht!]
      \centering

\tikzset{every picture/.style={line width=0.75pt}} %

\begin{tikzpicture}[x=0.75pt,y=0.75pt,yscale=-1,xscale=1]
\draw  [fill={rgb, 255:red, 191; green, 97; blue, 106 }  ,fill opacity=0.25 ] (265,75) -- (269.5,60) -- (280.5,60) -- (285,75) -- cycle ;
\draw  [draw opacity=0] (265,50) -- (285,50) -- (285,70) -- (265,70) -- cycle ;
\draw [color={rgb, 255:red, 191; green, 97; blue, 106 }  ,draw opacity=1 ]   (267.96,79.93) .. controls (262.06,79.68) and (261.32,79.14) .. (260,75) .. controls (258.54,70.42) and (258.57,65.58) .. (260,60) .. controls (261.43,54.42) and (264.67,55.06) .. (270,55) ;
\draw [shift={(270,80)}, rotate = 181.8] [color={rgb, 255:red, 191; green, 97; blue, 106 }  ,draw opacity=1 ][line width=0.75]    (4.37,-1.32) .. controls (2.78,-0.56) and (1.32,-0.12) .. (0,0) .. controls (1.32,0.12) and (2.78,0.56) .. (4.37,1.32)   ;
\draw [color={rgb, 255:red, 191; green, 97; blue, 106 }  ,draw opacity=1 ]   (282.04,54.86) .. controls (286.32,54.74) and (288.73,56.02) .. (290,60) .. controls (291.46,64.58) and (291.43,69.42) .. (290,75) .. controls (288.57,80.58) and (285.33,79.94) .. (280,80) ;
\draw [shift={(280,55)}, rotate = 354.07] [color={rgb, 255:red, 191; green, 97; blue, 106 }  ,draw opacity=1 ][line width=0.75]    (4.37,-1.32) .. controls (2.78,-0.56) and (1.32,-0.12) .. (0,0) .. controls (1.32,0.12) and (2.78,0.56) .. (4.37,1.32)   ;
\draw  [draw opacity=0][fill={rgb, 255:red, 191; green, 97; blue, 106 }  ,fill opacity=0.25 ] (272.5,80) .. controls (272.5,78.62) and (273.62,77.5) .. (275,77.5) .. controls (276.38,77.5) and (277.5,78.62) .. (277.5,80) .. controls (277.5,81.38) and (276.38,82.5) .. (275,82.5) .. controls (273.62,82.5) and (272.5,81.38) .. (272.5,80) -- cycle ;
\draw [color={rgb, 255:red, 0; green, 0; blue, 0 }  ,draw opacity=1 ][line width=1.5]    (230,45) -- (257,45) ;
\draw [shift={(260,45)}, rotate = 180] [color={rgb, 255:red, 0; green, 0; blue, 0 }  ,draw opacity=1 ][line width=1.5]    (8.53,-2.57) .. controls (5.42,-1.09) and (2.58,-0.23) .. (0,0) .. controls (2.58,0.23) and (5.42,1.09) .. (8.53,2.57)   ;
\draw   (285,35) -- (289.5,20) -- (310.5,20) -- (315,35) -- cycle ;
\draw    (300,20) -- (300,10) ;
\draw [color={rgb, 255:red, 191; green, 97; blue, 106 }  ,draw opacity=1 ]   (283.13,39.28) .. controls (281.96,38.82) and (280.98,38.09) .. (280,35) .. controls (278.54,30.42) and (278.57,25.58) .. (280,20) .. controls (281.43,14.42) and (289.67,15.06) .. (295,15) ;
\draw [shift={(285,40)}, rotate = 205.96] [color={rgb, 255:red, 191; green, 97; blue, 106 }  ,draw opacity=1 ][line width=0.75]    (4.37,-1.32) .. controls (2.78,-0.56) and (1.32,-0.12) .. (0,0) .. controls (1.32,0.12) and (2.78,0.56) .. (4.37,1.32)   ;
\draw [color={rgb, 255:red, 191; green, 97; blue, 106 }  ,draw opacity=1 ]   (295,40) -- (303,40) ;
\draw [shift={(305,40)}, rotate = 180] [color={rgb, 255:red, 191; green, 97; blue, 106 }  ,draw opacity=1 ][line width=0.75]    (4.37,-1.32) .. controls (2.78,-0.56) and (1.32,-0.12) .. (0,0) .. controls (1.32,0.12) and (2.78,0.56) .. (4.37,1.32)   ;
\draw [color={rgb, 255:red, 191; green, 97; blue, 106 }  ,draw opacity=1 ]   (315,40) .. controls (317.43,39.87) and (318.54,39.58) .. (320,35) .. controls (321.46,30.42) and (321.43,25.58) .. (320,20) .. controls (318.73,15.06) and (312.13,15) .. (306.94,15) ;
\draw [shift={(305,15)}, rotate = 0.66] [color={rgb, 255:red, 191; green, 97; blue, 106 }  ,draw opacity=1 ][line width=0.75]    (4.37,-1.32) .. controls (2.78,-0.56) and (1.32,-0.12) .. (0,0) .. controls (1.32,0.12) and (2.78,0.56) .. (4.37,1.32)   ;
\draw    (290,35) .. controls (290.03,49.73) and (275.03,44.73) .. (275,60) ;
\draw    (310,35) .. controls (310.03,49.73) and (325.03,44.73) .. (325,60) ;
\draw  [fill={rgb, 255:red, 191; green, 97; blue, 106 }  ,fill opacity=0.25 ] (315,75) -- (319.5,60) -- (330.5,60) -- (335,75) -- cycle ;
\draw  [draw opacity=0] (315,50) -- (335,50) -- (335,70) -- (315,70) -- cycle ;
\draw [color={rgb, 255:red, 191; green, 97; blue, 106 }  ,draw opacity=1 ]   (317.96,79.93) .. controls (312.06,79.68) and (311.32,79.14) .. (310,75) .. controls (308.54,70.42) and (308.57,65.58) .. (310,60) .. controls (311.43,54.42) and (314.67,55.06) .. (320,55) ;
\draw [shift={(320,80)}, rotate = 181.8] [color={rgb, 255:red, 191; green, 97; blue, 106 }  ,draw opacity=1 ][line width=0.75]    (4.37,-1.32) .. controls (2.78,-0.56) and (1.32,-0.12) .. (0,0) .. controls (1.32,0.12) and (2.78,0.56) .. (4.37,1.32)   ;
\draw [color={rgb, 255:red, 191; green, 97; blue, 106 }  ,draw opacity=1 ]   (332.04,54.86) .. controls (336.32,54.74) and (338.73,56.02) .. (340,60) .. controls (341.46,64.58) and (341.43,69.42) .. (340,75) .. controls (338.57,80.58) and (335.33,79.94) .. (330,80) ;
\draw [shift={(330,55)}, rotate = 354.07] [color={rgb, 255:red, 191; green, 97; blue, 106 }  ,draw opacity=1 ][line width=0.75]    (4.37,-1.32) .. controls (2.78,-0.56) and (1.32,-0.12) .. (0,0) .. controls (1.32,0.12) and (2.78,0.56) .. (4.37,1.32)   ;
\draw  [draw opacity=0][fill={rgb, 255:red, 191; green, 97; blue, 106 }  ,fill opacity=0.25 ] (322.5,80) .. controls (322.5,78.62) and (323.62,77.5) .. (325,77.5) .. controls (326.38,77.5) and (327.5,78.62) .. (327.5,80) .. controls (327.5,81.38) and (326.38,82.5) .. (325,82.5) .. controls (323.62,82.5) and (322.5,81.38) .. (322.5,80) -- cycle ;
\draw  [fill={rgb, 255:red, 191; green, 97; blue, 106 }  ,fill opacity=0.25 ] (195,35) -- (199.5,20) -- (210.5,20) -- (215,35) -- cycle ;
\draw    (205,20) -- (205,10) ;
\draw  [draw opacity=0] (195,35) -- (215,35) -- (215,55) -- (195,55) -- cycle ;
\draw [color={rgb, 255:red, 191; green, 97; blue, 106 }  ,draw opacity=1 ]   (197.96,39.93) .. controls (192.06,39.68) and (191.32,39.14) .. (190,35) .. controls (188.54,30.42) and (188.57,25.58) .. (190,20) .. controls (191.43,14.42) and (194.67,15.06) .. (200,15) ;
\draw [shift={(200,40)}, rotate = 181.8] [color={rgb, 255:red, 191; green, 97; blue, 106 }  ,draw opacity=1 ][line width=0.75]    (4.37,-1.32) .. controls (2.78,-0.56) and (1.32,-0.12) .. (0,0) .. controls (1.32,0.12) and (2.78,0.56) .. (4.37,1.32)   ;
\draw [color={rgb, 255:red, 191; green, 97; blue, 106 }  ,draw opacity=1 ]   (212.04,14.86) .. controls (216.32,14.74) and (218.73,16.02) .. (220,20) .. controls (221.46,24.58) and (221.43,29.42) .. (220,35) .. controls (218.57,40.58) and (215.33,39.94) .. (210,40) ;
\draw [shift={(210,15)}, rotate = 354.07] [color={rgb, 255:red, 191; green, 97; blue, 106 }  ,draw opacity=1 ][line width=0.75]    (4.37,-1.32) .. controls (2.78,-0.56) and (1.32,-0.12) .. (0,0) .. controls (1.32,0.12) and (2.78,0.56) .. (4.37,1.32)   ;
\draw  [draw opacity=0][fill={rgb, 255:red, 191; green, 97; blue, 106 }  ,fill opacity=0.25 ] (202.5,40) .. controls (202.5,38.62) and (203.62,37.5) .. (205,37.5) .. controls (206.38,37.5) and (207.5,38.62) .. (207.5,40) .. controls (207.5,41.38) and (206.38,42.5) .. (205,42.5) .. controls (203.62,42.5) and (202.5,41.38) .. (202.5,40) -- cycle ;

\draw (275,67.5) node  [font=\footnotesize]  {$c$};
\draw (297.5,55.5) node  [font=\tiny,color={rgb, 255:red, 191; green, 97; blue, 106 }  ,opacity=1 ]  {$c_{1}$};
\draw (252.5,67.5) node  [font=\tiny,color={rgb, 255:red, 191; green, 97; blue, 106 }  ,opacity=1 ]  {$c_{0}$};
\draw (300,27.5) node  [font=\footnotesize]  {$b$};
\draw (272.5,27.5) node  [font=\tiny,color={rgb, 255:red, 191; green, 97; blue, 106 }  ,opacity=1 ]  {$b_{0}$};
\draw (300,45) node  [font=\tiny,color={rgb, 255:red, 191; green, 97; blue, 106 }  ,opacity=1 ]  {$b_{1}$};
\draw (327.5,27.5) node  [font=\tiny,color={rgb, 255:red, 191; green, 97; blue, 106 }  ,opacity=1 ]  {$b_{2}$};
\draw (325,67.5) node  [font=\footnotesize]  {$d$};
\draw (347.5,67.5) node  [font=\tiny,color={rgb, 255:red, 191; green, 97; blue, 106 }  ,opacity=1 ]  {$d_{1}$};
\draw (302.5,74.5) node  [font=\tiny,color={rgb, 255:red, 191; green, 97; blue, 106 }  ,opacity=1 ]  {$d_{0}$};
\draw (205,27.5) node  [font=\footnotesize]  {$a$};
\draw (227.5,27.5) node  [font=\tiny,color={rgb, 255:red, 191; green, 97; blue, 106 }  ,opacity=1 ]  {$a_{1}$};
\draw (182.5,27.5) node  [font=\tiny,color={rgb, 255:red, 191; green, 97; blue, 106 }  ,opacity=1 ]  {$a_{0}$};

\end{tikzpicture}
       \caption{Equations for the second laxator.}
      \label{ax:fig:monoidal-contour-equation2}
    \end{figure}
    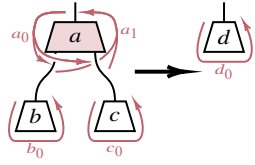
\begin{figure}[ht!]
      \centering

\tikzset{every picture/.style={line width=0.75pt}} %

\begin{tikzpicture}[x=0.75pt,y=0.75pt,yscale=-1,xscale=1]
\draw [color={rgb, 255:red, 191; green, 97; blue, 106 }  ,draw opacity=1 ]   (187.82,39.94) .. controls (168.45,39.26) and (166.4,34.42) .. (165,30) .. controls (163.54,25.42) and (163.57,25.58) .. (165,20) .. controls (166.43,14.42) and (174.67,15.06) .. (180,15) ;
\draw [shift={(190,40)}, rotate = 181.03] [color={rgb, 255:red, 191; green, 97; blue, 106 }  ,draw opacity=1 ][line width=0.75]    (4.37,-1.32) .. controls (2.78,-0.56) and (1.32,-0.12) .. (0,0) .. controls (1.32,0.12) and (2.78,0.56) .. (4.37,1.32)   ;
\draw    (175,35) .. controls (175.03,49.73) and (165.03,44.73) .. (165,60) ;
\draw  [color={rgb, 255:red, 0; green, 0; blue, 0 }  ,draw opacity=1 ][fill={rgb, 255:red, 191; green, 97; blue, 106 }  ,fill opacity=0.25 ] (170,35) -- (174.5,20) -- (195.5,20) -- (200,35) -- cycle ;
\draw [color={rgb, 255:red, 191; green, 97; blue, 106 }  ,draw opacity=1 ]   (169.17,39.65) .. controls (167.5,39.04) and (166.26,38.95) .. (165,35) .. controls (163.54,30.42) and (163.57,25.58) .. (165,20) .. controls (166.43,14.42) and (174.67,15.06) .. (180,15) ;
\draw [shift={(170,40)}, rotate = 205.96] [color={rgb, 255:red, 191; green, 97; blue, 106 }  ,draw opacity=1 ][line width=0.75]    (4.37,-1.32) .. controls (2.78,-0.56) and (1.32,-0.12) .. (0,0) .. controls (1.32,0.12) and (2.78,0.56) .. (4.37,1.32)   ;
\draw [color={rgb, 255:red, 191; green, 97; blue, 106 }  ,draw opacity=1 ]   (200,45) .. controls (202.43,44.87) and (203.54,39.58) .. (205,35) .. controls (206.46,30.42) and (206.43,25.58) .. (205,20) .. controls (203.73,15.06) and (197.13,15) .. (191.94,15) ;
\draw [shift={(190,15)}, rotate = 0.66] [color={rgb, 255:red, 191; green, 97; blue, 106 }  ,draw opacity=1 ][line width=0.75]    (4.37,-1.32) .. controls (2.78,-0.56) and (1.32,-0.12) .. (0,0) .. controls (1.32,0.12) and (2.78,0.56) .. (4.37,1.32)   ;
\draw [color={rgb, 255:red, 191; green, 97; blue, 106 }  ,draw opacity=1 ]   (175,45) .. controls (195.36,44.69) and (203.54,34.58) .. (205,30) .. controls (206.46,25.42) and (206.43,25.58) .. (205,20) .. controls (203.73,15.06) and (197.13,15) .. (191.94,15) ;
\draw [shift={(190,15)}, rotate = 0.66] [color={rgb, 255:red, 191; green, 97; blue, 106 }  ,draw opacity=1 ][line width=0.75]    (4.37,-1.32) .. controls (2.78,-0.56) and (1.32,-0.12) .. (0,0) .. controls (1.32,0.12) and (2.78,0.56) .. (4.37,1.32)   ;
\draw  [draw opacity=0][fill={rgb, 255:red, 255; green, 255; blue, 255 }  ,fill opacity=1 ] (192.5,40) .. controls (192.5,38.62) and (193.62,37.5) .. (195,37.5) .. controls (196.38,37.5) and (197.5,38.62) .. (197.5,40) .. controls (197.5,41.38) and (196.38,42.5) .. (195,42.5) .. controls (193.62,42.5) and (192.5,41.38) .. (192.5,40) -- cycle ;
\draw    (185,20) -- (185,10) ;
\draw  [draw opacity=0][fill={rgb, 255:red, 255; green, 255; blue, 255 }  ,fill opacity=1 ] (172,38) .. controls (172,36.9) and (173.34,36) .. (175,36) .. controls (176.66,36) and (178,36.9) .. (178,38) .. controls (178,39.1) and (176.66,40) .. (175,40) .. controls (173.34,40) and (172,39.1) .. (172,38) -- cycle ;
\draw  [draw opacity=0][fill={rgb, 255:red, 255; green, 255; blue, 255 }  ,fill opacity=1 ] (197,45) .. controls (197,43.9) and (198.34,43) .. (200,43) .. controls (201.66,43) and (203,43.9) .. (203,45) .. controls (203,46.1) and (201.66,47) .. (200,47) .. controls (198.34,47) and (197,46.1) .. (197,45) -- cycle ;
\draw [color={rgb, 255:red, 191; green, 97; blue, 106 }  ,draw opacity=1 ]   (187.82,39.94) .. controls (168.45,39.26) and (166.4,34.42) .. (165,30) .. controls (163.54,25.42) and (163.57,25.58) .. (165,20) .. controls (166.43,14.42) and (174.67,15.06) .. (180,15) ;
\draw [shift={(190,40)}, rotate = 181.03] [color={rgb, 255:red, 191; green, 97; blue, 106 }  ,draw opacity=1 ][line width=0.75]    (4.37,-1.32) .. controls (2.78,-0.56) and (1.32,-0.12) .. (0,0) .. controls (1.32,0.12) and (2.78,0.56) .. (4.37,1.32)   ;
\draw    (195,35) .. controls (195.03,49.73) and (205.03,44.73) .. (205,60) ;
\draw [color={rgb, 255:red, 0; green, 0; blue, 0 }  ,draw opacity=1 ][line width=1.5]    (215,45) -- (237,45) ;
\draw [shift={(240,45)}, rotate = 180] [color={rgb, 255:red, 0; green, 0; blue, 0 }  ,draw opacity=1 ][line width=1.5]    (8.53,-2.57) .. controls (5.42,-1.09) and (2.58,-0.23) .. (0,0) .. controls (2.58,0.23) and (5.42,1.09) .. (8.53,2.57)   ;
\draw   (155,75) -- (159.5,60) -- (170.5,60) -- (175,75) -- cycle ;
\draw [color={rgb, 255:red, 191; green, 97; blue, 106 }  ,draw opacity=1 ]   (155,60) .. controls (151.32,68.79) and (149.61,79.97) .. (155,80) .. controls (160.39,80.03) and (169.61,79.97) .. (175,80) .. controls (180.06,80.02) and (180.31,70.69) .. (175.9,61.71) ;
\draw [shift={(175,60)}, rotate = 60.63] [color={rgb, 255:red, 191; green, 97; blue, 106 }  ,draw opacity=1 ][line width=0.75]    (4.37,-1.32) .. controls (2.78,-0.56) and (1.32,-0.12) .. (0,0) .. controls (1.32,0.12) and (2.78,0.56) .. (4.37,1.32)   ;
\draw  [draw opacity=0] (155,75) -- (175,75) -- (175,95) -- (155,95) -- cycle ;
\draw   (195,75) -- (199.5,60) -- (210.5,60) -- (215,75) -- cycle ;
\draw [color={rgb, 255:red, 191; green, 97; blue, 106 }  ,draw opacity=1 ]   (195,60) .. controls (191.32,68.79) and (189.61,79.97) .. (195,80) .. controls (200.39,80.03) and (209.61,79.97) .. (215,80) .. controls (220.06,80.02) and (220.31,70.69) .. (215.9,61.71) ;
\draw [shift={(215,60)}, rotate = 60.63] [color={rgb, 255:red, 191; green, 97; blue, 106 }  ,draw opacity=1 ][line width=0.75]    (4.37,-1.32) .. controls (2.78,-0.56) and (1.32,-0.12) .. (0,0) .. controls (1.32,0.12) and (2.78,0.56) .. (4.37,1.32)   ;
\draw  [draw opacity=0] (195,75) -- (215,75) -- (215,95) -- (195,95) -- cycle ;
\draw   (250,35) -- (254.5,20) -- (265.5,20) -- (270,35) -- cycle ;
\draw    (260,20) -- (260,10) ;
\draw [color={rgb, 255:red, 191; green, 97; blue, 106 }  ,draw opacity=1 ]   (250,20) .. controls (246.32,28.79) and (244.61,39.97) .. (250,40) .. controls (255.39,40.03) and (264.61,39.97) .. (270,40) .. controls (275.06,40.02) and (275.31,30.69) .. (270.9,21.71) ;
\draw [shift={(270,20)}, rotate = 60.63] [color={rgb, 255:red, 191; green, 97; blue, 106 }  ,draw opacity=1 ][line width=0.75]    (4.37,-1.32) .. controls (2.78,-0.56) and (1.32,-0.12) .. (0,0) .. controls (1.32,0.12) and (2.78,0.56) .. (4.37,1.32)   ;
\draw  [draw opacity=0] (250,35) -- (270,35) -- (270,55) -- (250,55) -- cycle ;

\draw (185,27.5) node  [font=\footnotesize]  {$a$};
\draw (157.5,27.5) node  [font=\tiny,color={rgb, 255:red, 191; green, 97; blue, 106 }  ,opacity=1 ]  {$a_{0}$};
\draw (212.5,25.5) node  [font=\tiny,color={rgb, 255:red, 191; green, 97; blue, 106 }  ,opacity=1 ]  {$a_{1}$};
\draw (165,67.5) node  [font=\footnotesize]  {$b$};
\draw (165,85) node  [font=\tiny,color={rgb, 255:red, 191; green, 97; blue, 106 }  ,opacity=1 ]  {$b_{0}$};
\draw (205,67.5) node  [font=\footnotesize]  {$c$};
\draw (205,85) node  [font=\tiny,color={rgb, 255:red, 191; green, 97; blue, 106 }  ,opacity=1 ]  {$c_{0}$};
\draw (260,27.5) node  [font=\footnotesize]  {$d$};
\draw (260,45) node  [font=\tiny,color={rgb, 255:red, 191; green, 97; blue, 106 }  ,opacity=1 ]  {$d_{0}$};

\end{tikzpicture}
       \caption{Equations for the third laxator.}
      \label{ax:fig:monoidal-contour-equation3}
    \end{figure}
    \begin{figure}[ht!] 
      \centering

\tikzset{every picture/.style={line width=0.75pt}} %

\begin{tikzpicture}[x=0.75pt,y=0.75pt,yscale=-1,xscale=1]
\draw [color={rgb, 255:red, 0; green, 0; blue, 0 }  ,draw opacity=1 ][line width=1.5]    (215,25) -- (242,25) ;
\draw [shift={(245,25)}, rotate = 180] [color={rgb, 255:red, 0; green, 0; blue, 0 }  ,draw opacity=1 ][line width=1.5]    (8.53,-2.57) .. controls (5.42,-1.09) and (2.58,-0.23) .. (0,0) .. controls (2.58,0.23) and (5.42,1.09) .. (8.53,2.57)   ;
\draw   (250,35) -- (254.5,20) -- (265.5,20) -- (270,35) -- cycle ;
\draw    (260,20) -- (260,10) ;
\draw [color={rgb, 255:red, 191; green, 97; blue, 106 }  ,draw opacity=1 ]   (250,20) .. controls (246.32,28.79) and (244.61,39.97) .. (250,40) .. controls (255.39,40.03) and (264.61,39.97) .. (270,40) .. controls (275.06,40.02) and (275.31,30.69) .. (270.9,21.71) ;
\draw [shift={(270,20)}, rotate = 60.63] [color={rgb, 255:red, 191; green, 97; blue, 106 }  ,draw opacity=1 ][line width=0.75]    (4.37,-1.32) .. controls (2.78,-0.56) and (1.32,-0.12) .. (0,0) .. controls (1.32,0.12) and (2.78,0.56) .. (4.37,1.32)   ;
\draw  [draw opacity=0] (250,35) -- (270,35) -- (270,55) -- (250,55) -- cycle ;
\draw  [fill={rgb, 255:red, 191; green, 97; blue, 106 }  ,fill opacity=0.25 ] (175,35) -- (179.5,20) -- (190.5,20) -- (195,35) -- cycle ;
\draw    (185,20) -- (185,10) ;
\draw  [draw opacity=0] (175,35) -- (195,35) -- (195,55) -- (175,55) -- cycle ;
\draw [color={rgb, 255:red, 191; green, 97; blue, 106 }  ,draw opacity=1 ]   (177.96,39.93) .. controls (172.06,39.68) and (171.32,39.14) .. (170,35) .. controls (168.54,30.42) and (168.57,25.58) .. (170,20) .. controls (171.43,14.42) and (174.67,15.06) .. (180,15) ;
\draw [shift={(180,40)}, rotate = 181.8] [color={rgb, 255:red, 191; green, 97; blue, 106 }  ,draw opacity=1 ][line width=0.75]    (4.37,-1.32) .. controls (2.78,-0.56) and (1.32,-0.12) .. (0,0) .. controls (1.32,0.12) and (2.78,0.56) .. (4.37,1.32)   ;
\draw [color={rgb, 255:red, 191; green, 97; blue, 106 }  ,draw opacity=1 ]   (192.04,14.86) .. controls (196.32,14.74) and (198.73,16.02) .. (200,20) .. controls (201.46,24.58) and (201.43,29.42) .. (200,35) .. controls (198.57,40.58) and (195.33,39.94) .. (190,40) ;
\draw [shift={(190,15)}, rotate = 354.07] [color={rgb, 255:red, 191; green, 97; blue, 106 }  ,draw opacity=1 ][line width=0.75]    (4.37,-1.32) .. controls (2.78,-0.56) and (1.32,-0.12) .. (0,0) .. controls (1.32,0.12) and (2.78,0.56) .. (4.37,1.32)   ;
\draw  [draw opacity=0][fill={rgb, 255:red, 191; green, 97; blue, 106 }  ,fill opacity=0.25 ] (182.5,40) .. controls (182.5,38.62) and (183.62,37.5) .. (185,37.5) .. controls (186.38,37.5) and (187.5,38.62) .. (187.5,40) .. controls (187.5,41.38) and (186.38,42.5) .. (185,42.5) .. controls (183.62,42.5) and (182.5,41.38) .. (182.5,40) -- cycle ;

\draw (260,27.5) node  [font=\footnotesize]  {$b$};
\draw (260,45) node  [font=\tiny,color={rgb, 255:red, 191; green, 97; blue, 106 }  ,opacity=1 ]  {$b_{0}$};
\draw (185,27.5) node  [font=\footnotesize]  {$a$};
\draw (207.5,27.5) node  [font=\tiny,color={rgb, 255:red, 191; green, 97; blue, 106 }  ,opacity=1 ]  {$a_{1}$};
\draw (162.5,27.5) node  [font=\tiny,color={rgb, 255:red, 191; green, 97; blue, 106 }  ,opacity=1 ]  {$a_{0}$};

\end{tikzpicture}
       \caption{Equations for the fourth laxator.}
      \label{ax:fig:monoidal-contour-equation4}
    \end{figure}
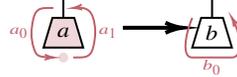  
  \end{definition}

  \begin{proposition}[From \Cref{prop:monoidalContourFunctor}] \label{ax:prop:monoidalContourFunctor}
    \MonoidalContour{} gives a functor $𝓓 : \pDuo → \Mon$.
  \end{proposition}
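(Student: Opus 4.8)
The plan is to follow exactly the template of the categorical case, \Cref{ax:prop:contourFunctor}, now carrying the extra parallel structure. The action on objects is already supplied by \Cref{def:monoidalContour}, so the real content is to define $𝓓$ on produoidal functors and to verify functoriality. The key structural observation, as in the categorical case, is that $𝓓𝔹$ is \emph{freely presented} as a monoidal category by the generators of \Cref{fig:monoidal-contour} subject to the equations coming from the associator, unitors, and laxators of $𝔹$. Consequently, to specify a strict monoidal functor $𝓓F \colon 𝓓𝔹 \to 𝓓𝕎$ it suffices to give a morphism of presentations: an assignment of objects to objects and of generating morphisms to morphisms that is compatible with the tensor and respects every presenting equation.

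Concretely, given a produoidal functor $F \colon 𝔹 \to 𝕎$, I would define $𝓓F$ on objects strictly by $X^L \mapsto (FX)^L$ and $X^R \mapsto (FX)^R$, extended by the monoidal product. On generators, each contour piece of a generator of $𝔹$ is sent to the corresponding contour piece of its image under $F$: a morphism $a \in 𝔹(A;X)$ contributes $a_0, a_1$ which map to $F(a)_0, F(a)_1$; a \sequentialUnit{} $a \in 𝔹(A;N)$ maps via $F_N$; a \parallelUnit{} $a \in 𝔹(A;I)$ maps via $F_I$; a \sequentialSplit{} $a \in 𝔹(A;X ◁ Y)$ maps via $F_{◁}$, split into $F_{◁}(a)_0, F_{◁}(a)_1, F_{◁}(a)_2$; and a \parallelSplit{} $a \in 𝔹(A; X ⊗ Y)$ maps via $F_{⊗}$, split into $F_{⊗}(a)_0, F_{⊗}(a)_1$. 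This is the evident produoidal analogue of the assignment in \Cref{ax:prop:contourFunctor}.

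The verification that this respects the presenting equations is where the produoidal functor axioms (\Cref{def:produoidalfunctor}) are used, and it decomposes cleanly along the families of equations in \Cref{ax:def:monoidalContour}. The associativity and unitality equations for the $◁$-structure are preserved exactly as in the promonoidal case, since $F$ preserves the coherence isomorphisms of the $◁$-promonoidal structure. The new equations coming from the $⊗$-associator and $⊗$-unitor (\Cref{fig:monoidalContourAssociativity,fig:monoidalContourUnitality}) are preserved because $F$ preserves the coherence isomorphisms of the $⊗$-promonoidal structure. Finally, the four laxator equations (\Cref{ax:fig:monoidal-contour-equation1,ax:fig:monoidal-contour-equation2,ax:fig:monoidal-contour-equation3,ax:fig:monoidal-contour-equation4}) are preserved precisely because $F$ preserves the four laxators $ψ_2, ψ_0, φ_2, φ_0$. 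In each case, applying $F$ to an instance of the relevant structure in $𝔹$ produces the matching instance in $𝕎$, and contouring both sides yields the image of the presenting equation; so the assignment descends to a well-defined strict monoidal functor $𝓓F$. Functoriality, $𝓓(\idFun{_𝔹}) = \idFun{_{𝓓𝔹}}$ and $𝓓(G ⨾ F) = 𝓓G ⨾ 𝓓F$, then follows from the strict compositionality of produoidal functors, just as for $𝓒$.

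I expect the main obstacle to be purely bookkeeping rather than conceptual: matching each presenting equation of the contour against the correct produoidal functor axiom, and in particular handling the laxator equations, where one must confirm that the three-part equational data attached to, say, $ψ_2$ in \Cref{ax:fig:monoidal-contour-equation1} is carried to the corresponding three-part data in $𝕎$ under $F_{⊗}, F_{◁}, F_N$ simultaneously. A secondary point to state carefully is the universal property being invoked — that a strict monoidal functor out of a monoidal category presented by generators and relations is determined by, and exists for, any relation-respecting morphism of presentations — after which the whole argument reduces to the equation-checking above. As with the other contour results, the detailed diagram-chase is deferred to the appendix.
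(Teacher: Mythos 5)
Your proposal is correct and follows essentially the same route as the paper's own proof: both treat $𝓓𝔹$ as freely presented, define $𝓓F$ as a morphism of presentations sending each contour piece of a generator to the corresponding contour piece of its image under the appropriate component of $F$, and then observe that the presenting equations (associativity, unitality, and the four laxator families) are preserved precisely because $F$ preserves the coherence isomorphisms and laxators. If anything, your handling of parallel splits is slightly more careful than the appendix's, which contains a small slip there; no changes needed.
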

  \begin{proof}
    \Cref{def:monoidalContour} defines the action on \produoidalCategories{}. We define the action on \produoidalFunctors{}. Given a \produoidalFunctor{} $F: 𝕍 → 𝕎$, define the strict monoidal functor $𝓓F : 𝓓𝕍 → 𝓓𝕎$ by the following morphism of presentations:
    \begin{itemize}
      \item the objects $X^{L}$ and $X^{R}$ are mapped to $F(X)^{L}$ and $F(X)^{R}$;
      \item for each $a \in 𝕍(A;N)$, the morphism $a_0 : A^L \to A$ is mapped to $F_N(a)_0$;
      \item for each $b \in 𝕍(A;I)$, both $b_0 : A^L \to I$ and $b_1 : I → A^R$ are mapped to $F_I(b)_0$ and $F_I(b)_1$;
      \item for each $c \in 𝕍(X;B)$, the morphisms $c_0 : B^L \to X^L, c_1 : X^R \to B^R$ are mapped to $F(c)_0$ and $F(c)_1$;
      \item for each $d \in 𝕍(C; Y ⊗ Z)$, the morphisms $d_0 : C^L \to Y^L$, $d_1 : Y^R \to Z^L$ and $d_2 : Z^R \to C^R$ are mapped to $F_{◁}(d)_0$, $F_{◁}(d)_1$ and $F_{◁}(d)_2$;
      \item for each $e \in 𝕍(C; Y ◁ Z)$, the morphisms $e_0 : C^L \to Y^L$, $e_1 : Y^R \to Z^L$ and $e_2 : Z^R \to C^R$ are mapped to $F_{◁}(e)_0$, $F_{◁}(e)_1$ and $F_{◁}(e)$.
    \end{itemize}
  It follows from $F: 𝕍 \to 𝕎$ being a \produoidalFunctor{} that the contour equations of \Cref{defn:contour} hold between the images of generators, so this assignment extends freely to a strict monoidal functor. In particular when $\idFun{_𝕍} : 𝕍 \to 𝕍$ is an identity, it is an identity functor. Let $G : 𝕌 → 𝕍$ be another \produoidalFunctor{}, then $𝓒(G⨾F) = 𝓒(G)⨾𝓒(F)$ follows from the composition of \produoidalFunctors{}.
  \end{proof}

\subsection{Spliced Monoidal Arrows} \label{ax:sec:produoidalSplice}

  \begin{proposition}[From \Cref{prop:spliceIsProduoidal}]
    \label{ax:prop:spliceIsProduoidal}
    \SplicedMonoidalArrows{} form a \produoidalCategory{} with their sequential and parallel splits, units, and suitable coherence morphisms and laxators.
  \end{proposition}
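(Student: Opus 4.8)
The plan is to assemble the produoidal structure of $\mathcal{T}\mathbb{C}$ from three ingredients — the sequential promonoidal structure, the parallel promonoidal structure, and the four laxators linking them — reusing as much of the spliced-arrow machinery as possible and deferring to coherence of $(\mathbb{C},\otimes,I)$ at the end. First I would observe that the sequential data of $\mathcal{T}\mathbb{C}$ (its morphisms, sequential splits, and sequential unit $N$) coincide verbatim with those of the spliced-arrow promonoidal category $\Splice{\mathbb{C}}$ of \Cref{defn:splicedArrows}. Hence the entire $(\triangleleft,N)$ structure, together with its associator and unitors, is precisely the one already produced in \Cref{prop:spliceIsPromonoidal}, built from Yoneda isomorphisms in \Cref{lemma:promonoidalAssociator,lemma:spliceLeftUnitor,lemma:spliceRightUnitor}; nothing new is required here.

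Second, I would treat the parallel structure $(\otimes,I)$ as \emph{representable}. The underlying category of $\mathcal{T}\mathbb{C}$ is $\mathbb{C}\times\mathbb{C}^{\mathrm{op}}$, and the pointwise assignment $\biobj{A}{B}\otimes\biobj{A'}{B'}=\biobj{A\otimes A'}{B\otimes B'}$ with unit $\biobj{I}{I}$ is a genuine monoidal structure, functorial because $\otimes$ is a functor on $\mathbb{C}$ and on $\mathbb{C}^{\mathrm{op}}$. Comparing with \Cref{def:monoidalSplice}, the parallel splits and units are exactly the representable profunctors of this tensor,
$$\mathcal{T}\mathbb{C}\!\left(\biobj{A}{B};\,\biobj{X}{Y}\otimes\biobj{X'}{Y'}\right)=\mathbb{C}(A;X\otimes X')\times\mathbb{C}(Y\otimes Y';B),\qquad \mathcal{T}\mathbb{C}\!\left(\biobj{A}{B};\,I\right)=\mathbb{C}(A;I)\times\mathbb{C}(I;B).$$
Thus the $(\otimes,I)$ promonoidal associator and unitors are inherited from the monoidal coherence of $(\mathbb{C},\otimes,I)$ and satisfy the pentagon and triangle automatically, so the second promonoidal structure demanded by \Cref{def:produoidal} comes essentially for free.

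Third, I would define the four laxators explicitly. The principal one, $\psi_2$, is the ``parallel sequences of holes into sequences of parallel holes'' map of \Cref{def:monoidalSplice}: given a parallel split of two sequential splits, it tuples their hole-data and applies $\otimes$ of $\mathbb{C}$ componentwise, sending $h_0,h_1,h_2$ and $k_0,k_1,k_2$ to $h_0\otimes k_0$, $h_1\otimes k_1$, $h_2\otimes k_2$. The remaining three are degenerate instances of the same recipe: $\varphi_0$ sends a parallel unit $f\mathbin{\Vert}g$ to the composite $f⨾g\in\mathbb{C}(A;B)$; $\varphi_2$ composes a parallel split $N\otimes N$ together with its two filling units into a single arrow $A\to B$; and $\psi_0$ splits a parallel unit into $I\triangleleft I$ by inserting $\mathrm{id}_I$. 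Well-definedness on the coends (dinaturality) and naturality of all four follow by the same Yoneda-reduction technique used throughout \Cref{sec:sequentialContext}.

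The hard part will be the final step: verifying the produoidal coherence axioms of \Cref{ax:def:produoidal} relating these laxators to the two associators and unitors. The saving grace is that every structure map in sight is assembled solely from Yoneda isomorphisms, functoriality of $\otimes$, and the coherence cells of $(\mathbb{C},\otimes,I)$; consequently — exactly as in \Cref{prop:spliceIsPromonoidal}, where any formal diagram of Yoneda reductions commutes — each coherence diagram should reduce, after eliminating its coend variables, to an equation between parallel composites of $\mathbb{C}$-arrows that holds by Mac Lane coherence. The genuine labor is therefore bookkeeping rather than conceptual: tracking the coend variables through the nested profunctor compositions $\diamond$ and checking that the reductions on the two sides align. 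I would organise this by first normalising each side of every axiom to a canonical tuple of $\mathbb{C}$-arrows, in the manner of \Cref{lemma:promonoidalAssociator}, and then comparing the normal forms directly.
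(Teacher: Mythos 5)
Your proposal is correct and follows essentially the same route as the paper: the appendix proof constructs the four laxators $\psi_2,\psi_0,\varphi_2,\varphi_0$ explicitly by exactly the formulas you give (Yoneda reductions plus componentwise $\otimes$ and composition in $\mathbb{C}$), and then dispatches all produoidal coherence axioms by observing that every structure map is assembled from Yoneda isomorphisms and compositions, so the formal diagrams commute automatically. Your additional observations — that the sequential promonoidal structure is inherited verbatim from $\Splice{\mathbb{C}}$ and that the parallel one is representable — are also made by the paper (the latter in the ``Representable Parallel Structure'' subsection), so nothing is genuinely different.
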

  \begin{proof}
    We use the laxators constructed in \Cref{lemma:produoidalFirstlaxator,lemma:produoidalSecondlaxator,lemma:produoidalThirdlaxator,lemma:produoidalFourthlaxator}. Because these laxators are constructed out of compositions and Yoneda lemma, they do satisfy all formal coherence equations.
  \end{proof}

  \begin{lemma}[Produoidal splice, first laxator]
    \label{lemma:produoidalFirstlaxator}
    We can construct a natural transformation,
    $$\psi_2 \colon
    𝓣{ℂ} \left(
        \biobj{A}{B} ;
        \left( \biobj{X}{Y} ◁ \biobj{X'}{Y'} \right) ⊗
        \left( \biobj{U}{V} ◁ \biobj{U'}{V'} \right)
    \right) \to
    𝓣{ℂ} \left(
        \biobj{A}{B} ;
        \left( \biobj{X}{Y}   ⊗ \biobj{U}{V}   \right) ◁
        \left( \biobj{X'}{Y'} ⊗ \biobj{U'}{V'} \right)
    \right),$$
    exclusively from compositions and Yoneda isomorphisms. This laxator is defined by $$ψ_2(\bisplice{f_0}{f_1} \mathbin{|} \trisplice{h_0}{h_1}{h_2} \mathbin{|} \trisplice{k_0}{k_1}{k_2}) = \trisplice{g_0}{g_1}{g_2}|\bisplice{p_0}{p_1}|\bisplice{q_0}{q_1}$$ if and only if
    $$\trisplice{f_0 ⨾ (h_0 ⊗ k_0)}{h_1 ⊗ k_1}{(h_2 ⊗ k_2) ⨾ f_1} = \trisplice{g_0 ⨾ p_0}{p_1 ⨾ g_1 ⨾ q_0}{q_1 ⨾ g_2}.$$
  \end{lemma}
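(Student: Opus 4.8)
The plan is to reduce both the source and target profunctors, under the Yoneda isomorphisms, to almost identical canonical sets, and then to exhibit $\psi_2$ as the composite of these reductions with a single application of the tensor of morphisms. First I would unfold the nesting convention of \Cref{def:produoidal} on each side. The source becomes the coend
$$\int^{\biobj{P}{Q},\biobj{R}{S}} 𝓣{ℂ}\!\left(\biobj{A}{B}; \biobj{P}{Q} ⊗ \biobj{R}{S}\right) × 𝓣{ℂ}\!\left(\biobj{P}{Q}; \biobj{X}{Y} ◁ \biobj{X'}{Y'}\right) × 𝓣{ℂ}\!\left(\biobj{R}{S}; \biobj{U}{V} ◁ \biobj{U'}{V'}\right),$$
whose elements are exactly the data $\bisplice{f_0}{f_1} \mathbin{|} \trisplice{h_0}{h_1}{h_2} \mathbin{|} \trisplice{k_0}{k_1}{k_2}$; the target unfolds analogously, with the outer split being $◁$ rather than $⊗$ and intermediate pairs $\biobj{P'}{Q'}, \biobj{R'}{S'}$.

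Second, I would carry out the coend calculus on the source. Expanding each factor by \Cref{def:monoidalSplice}, the variable $P$ occurs covariantly in $ℂ(A; P ⊗ R)$ and contravariantly in $ℂ(P;X)$, so $y_1$ substitutes $P := X$; likewise $R := U$, collapsing the first factor to $f_0 ⨾ (h_0 ⊗ k_0) \in ℂ(A; X ⊗ U)$. Dually, reducing $Q$ and $S$ against $h_2 \colon Y' \to Q$ and $k_2 \colon V' \to S$ collapses the cap to $(h_2 ⊗ k_2) ⨾ f_1 \in ℂ(Y' ⊗ V'; B)$. No Yoneda variable touches $h_1 \in ℂ(Y;X')$ or $k_1 \in ℂ(V;U')$, so the source reduces to
$$ℂ(A; X ⊗ U) × ℂ(Y;X') × ℂ(V;U') × ℂ(Y' ⊗ V'; B).$$
Running the same reductions on the target (substituting $P' := X ⊗ U$, $Q' := Y ⊗ V$, $R' := X' ⊗ U'$, $S' := Y' ⊗ V'$) yields
$$ℂ(A; X ⊗ U) × ℂ(Y ⊗ V; X' ⊗ U') × ℂ(Y' ⊗ V'; B),$$
with representatives $g_0 ⨾ p_0$, $p_1 ⨾ g_1 ⨾ q_0$, and $q_1 ⨾ g_2$.

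Third, I would read off $\psi_2$ as the composite of the source reduction, the map that is the identity on the two outer legs and the tensor of morphisms $(h_1, k_1) \mapsto h_1 ⊗ k_1$ on the middle legs, and the inverse of the target reduction. Comparing canonical representatives then gives precisely the stated equation $\trisplice{f_0 ⨾ (h_0 ⊗ k_0)}{h_1 ⊗ k_1}{(h_2 ⊗ k_2) ⨾ f_1} = \trisplice{g_0 ⨾ p_0}{p_1 ⨾ g_1 ⨾ q_0}{q_1 ⨾ g_2}$, confirming that the formula in the statement well-defines $\psi_2$. Naturality is then automatic: the Yoneda isomorphisms are natural by the Yoneda lemma, and the middle map is the morphism-action of the tensor functor $⊗ \colon ℂ × ℂ \to ℂ$; writing $h_1 ⊗ k_1 = (h_1 ⊗ \im) ⨾ (\im ⊗ k_1)$ exhibits even this step as built from composition and whiskering, as the ``exclusively from compositions and Yoneda isomorphisms'' clause demands.

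Finally, the main obstacle — and the conceptual heart of the lemma — is not a hard computation but the recognition that, unlike the associator and unitors of \Cref{lemma:promonoidalAssociator,lemma:spliceLeftUnitor,lemma:spliceRightUnitor}, this transformation is genuinely non-invertible: the only step that is not a Yoneda isomorphism is the tensoring $ℂ(Y;X') × ℂ(V;U') \to ℂ(Y ⊗ V; X' ⊗ U')$, which has no inverse in general and is exactly where laxity enters. Keeping the eight interleaved coend variables' variances straight, reducing them in a consistent order, and verifying that precisely the two middle legs survive untouched to feed this tensor map, is where care is required.
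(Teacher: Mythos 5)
Your proof is correct and takes essentially the same route as the paper's: both Yoneda-reduce the target $𝓣ℂ\bigl(\biobj{A}{B}; (\biobj{X}{Y}⊗\biobj{U}{V})◁(\biobj{X'}{Y'}⊗\biobj{U'}{V'})\bigr)$ to the canonical set of triples and then check that the explicit formula lands there. The only difference is that you also canonicalize the source side and isolate the single non-invertible step as the tensoring map $ℂ(Y;X')×ℂ(V;U')\to ℂ(Y⊗V;X'⊗U')$, whereas the paper simply writes the map out of the source directly by the formula; this is a presentational refinement, not a different argument.
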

  \begin{proof}
    We will show that the right hand side is isomorphic to the following set.
    Then, we construct a map from the left hand to this same set,
    $$ℂ(A ; X ⊗ X') × ℂ(Y ⊗ Y' ; U ⊗ U') × ℂ(V ⊗ V' ; B).$$
    Indeed the following coend derivation constructs an isomorphism.
    \begin{align*}
        & 𝓣{ℂ} \left(
            \biobj{A}{B} ;
            \left( \biobj{X}{Y}   ⊗ \biobj{X'}{Y'}   \right) ◁
            \left( \biobj{U}{V} ⊗ \biobj{U'}{V'} \right)
        \right) \phantom{\int} &\bydef \\
        & \int^{\biobj{Z}{W}, \biobj{Z'}{W'} \in 𝓣{ℂ}}
        𝓣{ℂ} \left( \biobj{A}{B} ; \biobj{Z}{W} ◁ \biobj{Z'}{W'} \right) ×
        𝓣{ℂ} \left( \biobj{Z}{W} ; \biobj{X}{Y} ⊗ \biobj{X'}{Y'} \right) ×
        𝓣{ℂ} \left( \biobj{Z'}{W'} ;  \biobj{U}{V}⊗ \biobj{U'}{V'} \right) &\bydef \\
        & \int^{\biobj{Z}{W}, \biobj{Z'}{W'} \in 𝓣{ℂ}}
        𝓣{ℂ} \left( \biobj{A}{B} ; \biobj{Z}{W} ◁ \biobj{Z'}{W'} \right) ×
        ℂ(Z ; X ⊗ X') × ℂ(Y ⊗ Y' ; W) × ℂ(Z' ; U ⊗ U') × ℂ(V ⊗ V' ; W') &\bydef \\
        & \int^{\biobj{Z}{W}, \biobj{Z'}{W'} \in 𝓣{ℂ}}
        𝓣{ℂ} \left( \biobj{A}{B} ; \biobj{Z}{W} ◁ \biobj{Z'}{W'} \right) ×
        𝓣{ℂ}(\biobj{Z}{W} ; \biobj{X ⊗ X'}{Y ⊗ Y'}) × 𝓣{ℂ}(\biobj{Z'}{W'} ; \biobj{U ⊗ U'}{V ⊗ V'}) & \yo1\\
        &
        𝓣{ℂ} \left( \biobj{A}{B} ; \biobj{X ⊗ X'}{Y ⊗ Y'} ◁ \biobj{U ⊗ U'}{V ⊗ V'} \right) \phantom{∫} & \bydef\\
        &  ℂ(A ; X ⊗ X') × ℂ(Y ⊗ Y'; U ⊗ U') × ℂ(V ⊗ V' ; B).\phantom{\int}
    \end{align*}
    The isomorphism sends the triple $(\trisplice{g_0}{g_1}{g_2}|\bisplice{p_0}{p_1}|\bisplice{q_0}{q_1})$ to $\trisplice{g_0 ⨾ p_0}{p_1 ⨾ g_1 ⨾ q_0}{q_1 ⨾ g_2}$.
    On the other hand, we define a map from the left hand side of the equation to this set, given by
    $$\bisplice{f_0}{f_1} \mathbin{|} \trisplice{h_0}{h_1}{h_2} \mathbin{|} \trisplice{k_0}{k_1}{k_2}
    \mapsto
    \trisplice{f_0 ⨾ (h_0 ⊗ k_0)}{h_1 ⊗ k_1}{(h_2 ⊗ k_2) ⨾ f_1}.$$
    In conclusion, composing both the isomorphism and the map, $ψ_2(\bisplice{f_0}{f_1} \mathbin{|} \trisplice{h_0}{h_1}{h_2} \mathbin{|} \trisplice{k_0}{k_1}{k_2}) = \trisplice{g_0}{g_1}{g_2}|\bisplice{p_0}{p_1}|\bisplice{q_0}{q_1}$ if and only if
    $$\trisplice{f_0 ⨾ (h_0 ⊗ k_0)}{(h_1 ⊗ k_1)}{(h_2 ⊗ k_2) ⨾ f_1} = \trisplice{g_0 ⨾ p_0}{p_1 ⨾ g_1 ⨾ q_0}{q_1 ⨾ g_2},$$
    which is what we wanted to prove.
  \end{proof}

  \begin{lemma}[Produoidal splice, second laxator]
    \label{lemma:produoidalSecondlaxator}
    We can construct a natural transformation,
    $$\psi_0 \colon
    𝓣{ℂ} \left(\biobj{A}{B} ; I \right) \to
    𝓣{ℂ} \left(\biobj{A}{B} ; I ◁ I\right),$$
    exclusively from compositions and \YonedaIsomorphisms{}. 
    This laxator is defined by $ψ_0(\bisplice{f_0}{f_1} \mathbin{|} \trisplice{h_0}{h_1}{h_2} \mathbin{|} \trisplice{k_0}{k_1}{k_2}) = \trisplice{g_0}{g_1}{g_2}|\bisplice{p_0}{p_1}|\bisplice{q_0}{q_1}$ if and only if
    $$\trisplice{f_0 ⨾ (h_0 ⊗ k_0)}{(h_1 ⊗ k_1)}{(h_2 ⊗ k_2) ⨾ f_1} = \trisplice{g_0 ⨾ p_0}{p_1 ⨾ g_1 ⨾ q_0}{q_1 ⨾ g_2}.$$
  \end{lemma}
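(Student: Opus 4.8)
The plan is to mirror the template of \Cref{lemma:produoidalFirstlaxator} exactly: first reduce the codomain profunctor $𝓣{ℂ}(\biobj{A}{B};\, I \triangleleft I)$ to a bare product of hom-sets using only the defining formulas of \Cref{def:monoidalSplice} and \YonedaIsomorphisms{}, then exhibit a natural map out of the domain $𝓣{ℂ}(\biobj{A}{B};\, I) = ℂ(A;I) \times ℂ(I;B)$ landing in that same product, and finally read off $\psi_0$ as the composite of the two. Because every component will be assembled from identities, compositions and Yoneda isomorphisms, the formal coherence equations required of a laxator hold automatically, just as they do for $\psi_2$.

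Concretely, I would unfold the nested codomain, according to the convention that nesting virtual structures means profunctor composition, as the coend
$$𝓣{ℂ}\!\left(\biobj{A}{B};\, I \triangleleft I\right) = \int^{\biobj{Z}{W},\,\biobj{Z'}{W'} \in 𝓣{ℂ}} 𝓣{ℂ}\!\left(\biobj{A}{B};\, \biobj{Z}{W} \triangleleft \biobj{Z'}{W'}\right) \times 𝓣{ℂ}\!\left(\biobj{Z}{W};\, I\right) \times 𝓣{ℂ}\!\left(\biobj{Z'}{W'};\, I\right),$$
and substitute the defining formulas for the \sequentialSplit{} ($ℂ(A;Z) \times ℂ(W;Z') \times ℂ(W';B)$) and for the two \parallelUnits{} ($ℂ(Z;I) \times ℂ(I;W)$ and $ℂ(Z';I) \times ℂ(I;W')$). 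This presents the codomain as a coend over $Z, W, Z', W'$ of a sevenfold product of hom-sets, which four successive Yoneda reductions collapse: reduce $Z$ against $ℂ(Z;I)$ and $Z'$ against $ℂ(Z';I)$, then reduce the glued objects $W$ against $ℂ(I;W)$ and $W'$ against $ℂ(I;W')$. The result is $ℂ(A;I) \times ℂ(I;I) \times ℂ(I;B)$.

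Since the domain is $ℂ(A;I) \times ℂ(I;B)$, the natural map into this product is the one that inserts the identity on the unit into the middle slot opened up by the split, $(f_0 \mathbin{\Vert} f_1) \mapsto (f_0,\, \im_I,\, f_1)$; translating back through the isomorphism, $\psi_0$ takes a \parallelUnit{} and cuts it into two \parallelUnits{} joined by $\im_I$. (The displayed ``if and only if'' clause in the statement is carried over verbatim from \Cref{lemma:produoidalFirstlaxator} and should read as this insertion of $\im_I$.) Naturality in $A$ and $B$ is immediate: the only Yoneda reductions touching $A$ and $B$ are natural isomorphisms, and the inserted identity is a constant, so the composite is dinatural on the nose.

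The one step I would treat with care — and the likely source of a slip — is the appearance of the central $ℂ(I;I)$ factor, which comes from the trace-like reduction $\int^{W} ℂ(W;I) \times ℂ(I;W) \cong ℂ(I;I)$ over the object $W$ that the split glues together. Here one must check that the variances genuinely line up, so that this is a legitimate co-Yoneda reduction against the presheaf $ℂ(-;I)$ rather than a coend with no closed form; getting the handedness of $W$ wrong is exactly the kind of error that would either destroy the isomorphism or hide the middle factor. Once that factor is correctly identified, $\im_I$ is forced as its canonical lift from the domain, and the remaining bookkeeping is identical to the first laxator.
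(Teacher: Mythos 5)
Your proposal is correct and matches the paper's proof: both reduce $𝓣ℂ\left(\biobj{A}{B}; I ◁ I\right)$ to $ℂ(A;I) × ℂ(I;I) × ℂ(I;B)$ by Yoneda reductions and then define $ψ_0$ by the insertion $(f_0 \mathbin{\Vert} f_1) \mapsto (f_0, \im_I, f_1)$; the only cosmetic difference is that the paper performs a single Yoneda reduction at the level of $𝓣ℂ$ after rewriting $𝓣ℂ\left(\biobj{Z}{W}; I\right)$ as the representable $𝓣ℂ\left(\biobj{Z}{W}; \biobj{I}{I}\right)$, whereas you unfold to four reductions in $ℂ$ (your variance check on the middle $ℂ(I;I)$ factor is the same content). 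Your reading of the displayed ``if and only if'' clause as a carry-over from \Cref{lemma:produoidalFirstlaxator} is also confirmed by the paper, whose proof concludes that $ψ_0\bisplice{a_0}{a_1}$ corresponds to $\trisplice{a_0}{\im_I}{a_1}$.
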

  \begin{proof}
    We will show that the right hand side is isomorphic to the following set.
    Then, we construct a map from the left hand to this same set,
    $ℂ(A;I) × ℂ(I;I) × ℂ(I;B)$. Indeed the following coend derivation constructs an isomorphism.
    \begin{align*}
        & 𝓣{ℂ} \left(
            \biobj{A}{B} ;
            I ◁ I
        \right) \phantom{\int} & \bydef \\
        & \int^{\biobj{Z}{W}, \biobj{Z'}{W'} \in 𝓣{ℂ}}
        𝓣{ℂ} \left( \biobj{A}{B} ; \biobj{Z}{W} ◁ \biobj{Z'}{W'} \right) ×
        𝓣{ℂ} \left( \biobj{Z}{W} ; I \right) ×
        𝓣{ℂ} \left( \biobj{Z'}{W'} ;  I \right) & \bydef \\
        & \int^{\biobj{Z}{W}, \biobj{Z'}{W'} \in 𝓣{ℂ}}
        𝓣{ℂ} \left( \biobj{A}{B} ; \biobj{Z}{W} ◁ \biobj{Z'}{W'} \right) ×
        𝓣{ℂ} \left( \biobj{Z}{W} ; \biobj{I}{I} \right) ×
        𝓣{ℂ} \left( \biobj{Z'}{W'} ; \biobj{I}{I} \right) & \yo1 \\
        &
        𝓣{ℂ} \left( \biobj{A}{B} ; \biobj{I}{I} ◁ \biobj{I}{I} \right) \phantom{\int} & \bydef \\
        &
        ℂ(A;I) × ℂ(I;I) × ℂ(I; B).  \phantom{\int}\\
    \end{align*}
    This isomorphism sends the triple $\trisplice{b_0}{b_1}{b_2} \mathbin{|} \bisplice{c_0}{c_1} \mathbin{|} \bisplice{d_0}{d_1}$ to $\trisplice{b_0 ⨾ c_0}{c_1 ⨾ b_1 ⨾ d_0}{d_1 ⨾ b_2}$. On the other hand, we define a map from the left hand side of the equation to this set, given by
    $$\bisplice{a_0}{a_1} \mapsto \trisplice{a_0}{\im_I}{a_1}.$$
    In conclusion, composing both the isomorphism and this function, we get that $\psi_0\bisplice{a_0}{a_1} = \trisplice{b_0}{b_1}{b_2} \mathbin{|} \bisplice{c_0}{c_1} \mathbin{|} \bisplice{d_0}{d_1}$ if and only if $\trisplice{a_0}{\im_I}{a_1} = \trisplice{b_0 ⨾ c_0}{c_1 ⨾ b_1 ⨾ d_0}{d_1 ⨾ b_2}$.
  \end{proof}

  \begin{lemma}[Produoidal splice, third laxator]
    \label{lemma:produoidalThirdlaxator}
    We can construct a natural transformation,
    $$\varphi_2 \colon
    𝓣{ℂ} \left(\biobj{A}{B} ; N ⊗ N \right) \to
    𝓣{ℂ} \left(\biobj{A}{B} ; N \right),$$
    exclusively from compositions and Yoneda isomorphisms. This laxator is defined by $\varphi_2(\bisplice{f_0}{f_1} \mathbin{|} h_0 \mathbin{|} h_1) = f_0 ⨾ (h_0 ⊗ h_1 ⨾ f_1)$.
  \end{lemma}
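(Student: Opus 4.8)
The plan is to follow the same coend-calculus template as the preceding laxators (\Cref{lemma:produoidalFirstlaxator,lemma:produoidalSecondlaxator}), but with one structural difference: the codomain $𝓣{ℂ}(\biobj{A}{B}; N)$ is \emph{definitionally} the hom-set $ℂ(A;B)$, so there is no coend to reduce on the target side, and the transformation I build will be a genuine (non-invertible) composition rather than an isomorphism onto a canonical set. First I would unfold the domain. By the nesting convention for virtual structures, $𝓣{ℂ}(\biobj{A}{B}; N ⊗ N)$ means
\begin{align*}
  & 𝓣{ℂ}\left(\biobj{A}{B}; N ⊗ N\right) = \\
  & \int^{\biobj{Z}{W}, \biobj{Z'}{W'} \in 𝓣{ℂ}}
  𝓣{ℂ}\left(\biobj{A}{B}; \biobj{Z}{W} ⊗ \biobj{Z'}{W'}\right) ×
  𝓣{ℂ}\left(\biobj{Z}{W}; N\right) ×
  𝓣{ℂ}\left(\biobj{Z'}{W'}; N\right),
\end{align*}
which, expanding \Cref{def:monoidalSplice}, is the coend
$$\int^{Z,W,Z',W'} ℂ(A; Z ⊗ Z') × ℂ(W ⊗ W'; B) × ℂ(Z; W) × ℂ(Z'; W').$$

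Next I would eliminate the two ``output'' variables $W$ and $W'$ by Yoneda. Reading $ℂ(W ⊗ W'; B)$ as a presheaf in $W$ and pairing it with $ℂ(Z;W)$, the isomorphism $y_2$ contracts $\int^{W} ℂ(Z;W) × ℂ(W ⊗ W'; B) ≅ ℂ(Z ⊗ W'; B)$, postcomposing with $h_0 ⊗ \im_{W'}$; a second application of $y_2$ in $W'$ then gives $\int^{W'} ℂ(Z';W') × ℂ(Z ⊗ W'; B) ≅ ℂ(Z ⊗ Z'; B)$, postcomposing with $\im_Z ⊗ h_1$. After these reductions the domain becomes
$$\int^{Z,Z'} ℂ(A; Z ⊗ Z') × ℂ(Z ⊗ Z'; B).$$
Finally, the composition cocone $(p \mid q) \mapsto p ⨾ q$ provides the map into $ℂ(A;B) = 𝓣{ℂ}(\biobj{A}{B}; N)$; equivalently, I would substitute along the tensor functor $⊗ \colon ℂ × ℂ → ℂ$ to land in $\int^{M} ℂ(A;M) × ℂ(M;B)$ and then apply $y_2$ once more, which identifies this coend with $ℂ(A;B)$.

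The only point needing care — and the main obstacle relative to the earlier laxators — is that this last step is \emph{not} an isomorphism: the composite is built from genuine composition, so I must check it descends to the coend, i.e. that the cocone $(p \mid q) \mapsto p ⨾ q$ respects the dinatural identifications over $Z$ and $Z'$. This is exactly the dinaturality of composition in $ℂ$ (equivalently, it is the functoriality of substitution along $⊗$ followed by the Yoneda contraction $\int^{M} ℂ(A;M) × ℂ(M;B) ≅ ℂ(A;B)$), so it holds. Naturality of $\varphi_2$ in $A$ and $B$ is then automatic, since every constituent step — the two Yoneda isomorphisms and the composition map — is natural. Tracking a representative $(\bisplice{f_0}{f_1} \mid h_0 \mid h_1)$ through the construction yields $f_0 ⨾ (h_0 ⊗ \im) ⨾ (\im ⊗ h_1) ⨾ f_1 = f_0 ⨾ (h_0 ⊗ h_1) ⨾ f_1$, matching the claimed formula.
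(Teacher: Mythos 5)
Your proof is correct and follows exactly the coend-calculus template the paper uses for the sibling laxators (\Cref{lemma:produoidalFirstlaxator,lemma:produoidalSecondlaxator}); the paper itself states this lemma without writing out the argument, and your two Yoneda contractions in $W,W'$ followed by the (dinatural, non-invertible) composition cocone is precisely the intended construction. The element-tracking gives $f_0 ⨾ (h_0 ⊗ h_1) ⨾ f_1$, which agrees with the paper's formula once its parenthesization is read as $f_0 ⨾ ((h_0 ⊗ h_1) ⨾ f_1)$ — the alternative reading would not even typecheck.
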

  \begin{lemma}[Produoidal splice, fourth laxator]
    \label{lemma:produoidalFourthlaxator}
    We can construct a natural transformation,
    $$\varphi_0 \colon
    𝓣{ℂ} \left(\biobj{A}{B} ; I \right) \to
    𝓣{ℂ} \left(\biobj{A}{B} ; N \right),$$
    exclusively from compositions and Yoneda isomorphisms. This laxator is defined by $\varphi_0\bisplice{a_0}{a_1} = a_0 ⨾ a_1$.
  \end{lemma}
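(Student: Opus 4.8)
The plan is to recognize $\varphi_0$ as nothing more than the composition operation of $ℂ$, packaged as a natural transformation. By \Cref{def:monoidalSplice}, both $𝓣{ℂ}(\biobj{A}{B}; I) = ℂ(A;I) × ℂ(I;B)$ and $𝓣{ℂ}(\biobj{A}{B}; N) = ℂ(A;B)$ are plain products of hom-sets: neither is a coend, so there is no dinatural quotient to respect. I would therefore define $\varphi_0(a_0 \mathbin{\Vert} a_1) = a_0 ⨾ a_1$ directly, and well-definedness is immediate, since the parallel unit $I$ binds no object variable.

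To match the statement that $\varphi_0$ is assembled exclusively from compositions and Yoneda isomorphisms, I would present it as a two-step composite. First, take the coprojection of the $Z = I$ summand into the coend $\int^{Z \in ℂ} ℂ(A;Z) × ℂ(Z;B)$; this is a composition in the profunctorial sense. Then apply the co-Yoneda (density) isomorphism $\int^{Z \in ℂ} ℂ(A;Z) × ℂ(Z;B) \cong ℂ(A;B)$, which is the Yoneda reduction sending $(f \colon A \to Z \mathbin{|} g \colon Z \to B)$ to $f ⨾ g$; here $ℂ(A;Z)$ is covariant and $ℂ(Z;B)$ contravariant in $Z$, so the reduction is valid. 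The composite of coprojection and reduction is exactly $(a_0, a_1) \mapsto a_0 ⨾ a_1$, which is the claimed formula.

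Finally I would verify naturality. Both $𝓣{ℂ}(•; I)$ and $𝓣{ℂ}(•; N)$ are presheaves on the underlying category of $𝓣{ℂ}$, which is $ℂ × ℂ\op$; a morphism $(u \colon A' \to A, v \colon B \to B')$ acts by $(a_0, a_1) \mapsto (u ⨾ a_0, a_1 ⨾ v)$ on the domain and by $c \mapsto u ⨾ c ⨾ v$ on the codomain. Naturality then amounts to $\varphi_0(u ⨾ a_0, a_1 ⨾ v) = (u ⨾ a_0) ⨾ (a_1 ⨾ v) = u ⨾ (a_0 ⨾ a_1) ⨾ v$, which is just associativity of composition in $ℂ$. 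I expect no genuine obstacle at the level of this lemma: the only care required is tracking the variances above, while the laxator coherence equations are not checked here but follow uniformly in \Cref{prop:spliceIsProduoidal} from the observation that every structure map is built from composition and Yoneda isomorphisms.
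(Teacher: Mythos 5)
Your proof is correct and takes essentially the approach the paper intends: the paper states this lemma without a proof body, treating it as immediate, and your decomposition into the coprojection at $Z = I$ followed by the co-Yoneda collapse $\int^{Z} ℂ(A;Z) × ℂ(Z;B) \cong ℂ(A;B)$ is exactly the construction "from compositions and Yoneda isomorphisms" that the statement advertises. The variance bookkeeping and the naturality check via associativity are both right, and deferring the coherence equations to \Cref{prop:spliceIsProduoidal} matches how the paper organizes the argument.
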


  \begin{proposition}[From \Cref{prop:monoidalSpliceFunctor}] \label{ax:prop:monoidalSpliceFunctor}
    Monoidal splice gives a functor $𝓣{} : \Mon → \pDuo$. %
  \end{proposition}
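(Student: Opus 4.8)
The plan is to mirror the proof of the promonoidal splice functor (\Cref{ax:prop:spliceFunctor}), adding the treatment of the parallel tensor and its unit, which is the genuinely new ingredient. \Cref{def:monoidalSplice} together with \Cref{ax:prop:spliceIsProduoidal} already supplies the action of $𝓣$ on a monoidal category, so what remains is to define $𝓣F$ on a (strong) monoidal functor $F \colon ℂ \to 𝔻$, with structure isomorphisms $\mu_{X,X'} \colon FX ⊗ FX' \cong F(X ⊗ X')$ and $\epsilon \colon J \cong FI$, and to check that it is a \produoidalFunctor{} respecting identities and composition.

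First I would set $𝓣F$ on objects by $\biobj{A}{B} \mapsto \biobj{FA}{FB}$, and define its components on the five defining profunctors. On morphisms, \sequentialSplits{} and \sequentialUnits{} the component is just the product of the hom-set actions of $F$, exactly as in the promonoidal case, since those sets never mention $⊗$. On \parallelSplits{}, where $𝓣ℂ(\biobj{A}{B}; \biobj{X}{Y} ⊗ \biobj{X'}{Y'}) = ℂ(A; X ⊗ X') × ℂ(Y ⊗ Y';B)$, I would send $f$ to $Ff ⨾ \mu^{-1}_{X,X'}$ and $g$ to $\mu_{Y,Y'} ⨾ Fg$, landing in $𝓣𝔻(\biobj{FA}{FB}; \biobj{FX}{FY} ⊗ \biobj{FX'}{FY'})$; on \parallelUnits{} $ℂ(A;I) × ℂ(I;B)$ I would use $\epsilon$ in the same way. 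Naturality of each component against the profunctor actions that "fill holes" then follows from functoriality of $F$ together with the naturality of $\mu$ and $\epsilon$.

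Next I would verify that $𝓣F$ preserves all the coherence data. For the $◁$-promonoidal structure the associator and unitors are the very isomorphisms computed in \Cref{ax:prop:spliceFunctor}, so their preservation is identical. For the $⊗$-promonoidal structure, representability (\Cref{prop:spliceIsProduoidal}) means its associator and unitors are built from the monoidal associator and unitors of the base category, so their preservation is exactly the coherence axioms of the monoidal functor $F$ relating $\mu, \epsilon$ to $α, λ, ρ$. The heart of the argument is preservation of the four laxators $\psi_2, \psi_0, \varphi_2, \varphi_0$ of \Cref{lemma:produoidalFirstlaxator,lemma:produoidalSecondlaxator,lemma:produoidalThirdlaxator,lemma:produoidalFourthlaxator}: since each is assembled from compositions, \YonedaIsomorphisms{}, and applications of $⊗$ to pairs of holes, checking the laxator-preservation squares for $𝓣F$ reduces to sliding $F$ past the tensor through $\mu$, which is precisely what monoidal-functor coherence guarantees. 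I expect this laxator bookkeeping to be the main obstacle, since it is where the produoidal (as opposed to merely promonoidal) content lives.

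Finally, functoriality of $𝓣$ itself is routine: $𝓣(\idFun{_ℂ})$ has underlying identity functor and identity components because $\mu$ and $\epsilon$ are identities for the identity monoidal functor, and $𝓣(G ⨾ F) = 𝓣(G) ⨾ 𝓣(F)$ follows from functoriality of $F$ and $G$ together with the standard formula for the structure isomorphisms of a composite strong monoidal functor.
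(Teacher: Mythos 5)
Your proof is correct, but it proves a slightly different (more general) statement than the paper does, and this changes where the work lies. In this paper $\Mon$ is the category of monoidal categories and \emph{strict} monoidal functors --- this is forced by the companion construction $𝓓$, whose contour is freely presented and therefore classifies strict monoidal functors in the adjunction of \Cref{prop:produoidalSpliceContour}. Accordingly, the paper's proof simply sets each component of $𝓣F$ to be a product of hom-actions of $F$ (e.g.\ $𝓣F_{⊗} = F_{A,X⊗X'} × F_{Y⊗Y',B}$, with $F(X⊗X') = FX ⊗ FX'$ holding on the nose), and preservation of associators, unitors and laxators is immediate because every coherence map of $𝓣ℂ$ is built from composition and Yoneda isomorphisms, all of which commute with applying $F$ to hom-sets; functoriality of $𝓣$ is then just functoriality of $F$. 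You instead take $F$ strong monoidal and insert $\mu^{-1}$ and $\epsilon$ into the parallel-split and parallel-unit components, which is a genuine extension: it requires the naturality of $\mu,\epsilon$ for the profunctor actions and the monoidal-functor coherence axioms for the laxator squares, exactly as you identify, and all of those checks go through. What your route buys is functoriality of $𝓣$ on the larger category of strong monoidal functors; what the paper's route buys is a one-line proof sufficient for the adjunction it actually needs. If you adopt your version you should flag the mismatch with the paper's choice of morphisms in $\Mon$, since the left adjoint $𝓓$ as defined only produces strict monoidal functors.
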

  \begin{proof}
    \Cref{def:monoidalSplice} defines the action on \monoidalCategories{}. We define the action on monoidal functors. Given a monoidal functor $F : ℂ \to 𝔻$, define the \produoidalFunctor{} $𝓣F : 𝓣ℂ → 𝓣𝔻$ by
    \begin{align*}
      &\biobj{A}{B} ↦ \biobj{FA}{FB} \\
      &𝓣F := F_{A,X} × F_{Y,B} : 𝓣{ℂ}(\biobj{A}{B}, \biobj{X}{Y}) → 𝓣{𝔻}(\biobj{FA}{FB}, \biobj{FX}{FY}) \\
      &𝓣{F}_{◁} := F_{A,X} × F_{Y,X'} × F_{Y',B} : 𝓣{ℂ}(\biobj{A}{B}, \biobj{X}{Y} ◁ \biobj{X'}{Y'}) → 𝓣{𝔻}(\biobj{FA}{FB}, \biobj{FX}{FY} ◁ \biobj{FX'}{FY'}) \\
      &𝓣{F}_{⊗} := F_{A,X⊗Y} × F_{X'⊗Y',B} : 𝓣{ℂ}(\biobj{A}{B}, \biobj{X}{Y} ⊗ \biobj{X'}{Y'}) → 𝓣{𝔻}(\biobj{FA}{FB}, \biobj{FX}{FY} ⊗ \biobj{FX'}{FY'}) \\
      &𝓣{F}_{N} := F_{A,B} : 𝓣{ℂ}(\biobj{A}{B}, N) → 𝓣{𝔻}(\biobj{FA}{FB}, N) \\
      &𝓣{F}_{I} := F_{A,I} × F_{I,B} : 𝓣{ℂ}(\biobj{A}{B}, I) → 𝓣{𝔻}(\biobj{FA}{FB}, I). \\
    \end{align*}

    It follows from the produoidal structure on \splicedMonoidalArrows{} (\Cref{ax:prop:spliceIsProduoidal}) that this preserves coherence maps. If $\idFun{_ℂ} : ℂ → ℂ$ is an identity functor, then it defines the identity $\idFun{_{𝓣{ℂ}}}$, which has underlying functor the identity and identity natural transformations. If $G : 𝔹 → ℂ$ is another monoidal functor, then $\Splice{(G ⨾ F)} = \Splice{G} ⨾ \Splice{F}$ follows from composition of monoidal functors.
  \end{proof}

\begin{theorem}[From \Cref{prop:produoidalSpliceContour}]
  \label{ax:prop:produoidalSpliceContour}
  There exists an adjunction between monoidal categories (and strict monoidal functors) and \produoidalCategories{} (and produoidal functors), where the \monoidalContour{} is the left adjoint, and the \produoidalSplice{} category is the right adjoint.
\end{theorem}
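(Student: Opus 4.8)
The plan is to prove the adjunction exactly as in the categorical case (\Cref{th:catpromadj}), now upgraded from $(\Cat,\pMon)$ to $(\Mon,\pDuo)$. For a fixed produoidal category $\mathbb{B}$ and monoidal category $\mathbb{C}$, I would exhibit a bijection between strict monoidal functors $\mathcal{D}\mathbb{B}\to\mathbb{C}$ and produoidal functors $\mathbb{B}\to\mathcal{T}\mathbb{C}$, natural in both arguments. The key leverage point on the left is that the \monoidalContour{} $\mathcal{D}\mathbb{B}$ is \emph{freely presented} as a strict monoidal category by the generators and relations of \Cref{def:monoidalContour} (detailed in \Cref{ax:def:monoidalContour}); hence a strict monoidal functor $G\colon \mathcal{D}\mathbb{B}\to\mathbb{C}$ is precisely a model of that presentation in $\mathbb{C}$, i.e.\ a choice of image object or morphism for each generator subject to the imposed equations.

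Next I would unpack both sides and line up the data. A strict monoidal $G$ assigns objects $G(X^L),G(X^R)$ and, by strictness, satisfies $G(I)=I$ and $G(-\otimes-)=G(-)\otimes G(-)$; it then picks images of the five generator families (i)--(v). Setting $F(X)=(G(X^L),G(X^R))$ turns these into exactly the components $F_\otimes,F_I,F_\triangleleft,F_N$ required of a produoidal functor into $\mathcal{T}\mathbb{C}$ (\Cref{def:produoidalfunctor}), read against the profunctors of \Cref{def:monoidalSplice}. The crucial matching is for the parallel structure: the parallel splits of $\mathcal{T}\mathbb{C}$ are $\mathbb{C}(A;X\otimes X')\times\mathbb{C}(Y\otimes Y';B)$ and the parallel unit is $\mathbb{C}(A;I)\times\mathbb{C}(I;B)$, which are \emph{representable}, and these are precisely the codomains $X^L\otimes Y^L$ and $I$ occurring in generators (v) and (iii)---a correspondence that works only because $G$ is strict monoidal and so commutes with $\otimes$ and $I$ on the nose.

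I would then verify that the contour relations coincide with the produoidal-functor axioms. The $\triangleleft$-associator and unitor equations are literally those of the promonoidal case and yield preservation of the sequential structure, exactly as in the proof of \Cref{th:catpromadj}; the $\otimes$-associator and unitor equations (\Cref{fig:monoidalContourAssociativity,fig:monoidalContourUnitality}) yield preservation of the parallel structure; and the four laxator equations (\Cref{fig:monoidal-contour-equation1} together with \Cref{ax:fig:monoidal-contour-equation2,ax:fig:monoidal-contour-equation3,ax:fig:monoidal-contour-equation4}) translate into preservation of $\psi_2,\psi_0,\varphi_2,\varphi_0$. This shows models of the presentation are exactly produoidal functors $\mathbb{B}\to\mathcal{T}\mathbb{C}$, giving the hom-set bijection; naturality then follows from functoriality of $\mathcal{D}$ (\Cref{prop:monoidalContourFunctor}) and $\mathcal{T}$ (\Cref{prop:monoidalSpliceFunctor}) and the evident compatibility of the correspondence with pre- and post-composition.

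The main obstacle I expect is the laxator bookkeeping: unlike the single-tensor situation of \Cref{th:catpromadj}, there are now two interacting monoidal structures and four laxators, so I would treat each laxator separately, reading its contour equation as an identity among composites of generator-images and checking that it is exactly the naturality condition for the corresponding laxator component of $\mathcal{T}\mathbb{C}$ evaluated on $F$-images. As a sanity check on the transposition I would also describe the unit and counit explicitly: the counit $\mathcal{D}\mathcal{T}\mathbb{C}\to\mathbb{C}$ is the monoidal functor that contours a spliced monoidal arrow and then contracts its boundary fragments back into a single morphism of $\mathbb{C}$, while the unit $\mathbb{B}\to\mathcal{T}\mathcal{D}\mathbb{B}$ records a decomposition as the tuple of its contour pieces; confirming the triangle identities on generators completes the argument.
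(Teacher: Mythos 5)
Your proposal is correct and follows essentially the same route as the paper's own proof: both exploit the free presentation of the monoidal contour to reduce a strict monoidal functor $\mathcal{D}\mathbb{B}\to\mathbb{C}$ to a choice of generator images satisfying the contour equations, and then observe that this data coincides, point by point, with the data of a produoidal functor $\mathbb{B}\to\mathcal{T}\mathbb{C}$ (including the matching of the $\otimes$-generators against the representable parallel structure of $\mathcal{T}\mathbb{C}$ and of the four laxator equations against preservation of $\psi_2,\psi_0,\varphi_2,\varphi_0$). Your additional description of the unit and counit is a harmless elaboration beyond what the paper records.
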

\begin{proof}
  As in \Cref{ax:th:catpromadj}, we again have that $𝓓{𝔹}$ is presented by generators and equations; so, to specify a strict monoidal functor $𝓓{𝔹} → 𝕄$, it is enough to specify images of the generators satisfying the equations. Let $(𝕄,⊗_M,I_M)$ be a monoidal category. Then a strict monoidal functor $𝓓{𝔹} \to 𝕄$ amounts to the following data.
  \begin{itemize}
  \item For each object $X \in \obj{𝔹}$, a pair of objects $X^{L}, X^{R} \in \obj{𝕄}$;
  \item for each element $f \in 𝔹(X;N)$, a morphism $f_0 \in 𝕄(X^L;X^R)$;
  \item for each unit $f \in 𝔹(X;I)$, a choice of morphisms $f_0 \in 𝕄(X^L;I_M)$, $g_0 \in 𝕄(I_M;X^R)$;
  \item for each morphism $f \in 𝔹(A;X)$, a choice of morphisms $f_0 \in 𝕄(A^L; X^L)$ and $f_1 \in 𝕄(X^R; A^R)$;
  \item for each sequential split $f \in 𝔹(A; X ◁ Y)$, a choice of morphisms $f_0 \in 𝕄(A^L;X^L), f_1 \in 𝕄(X^L;X^R),\mbox{ and } f_2 \in 𝕄(X^R,A^R)$;
  \item for each parallel split $f \in 𝔹(A; X ⊗ Y)$, a choice of morphisms $f_0 \in 𝕄(A^L;X^L ⊗ Y^L)\mbox{ and }f_1 \in 𝕄(X^R ⊗ Y^R; A^R)$.
  \end{itemize}

  \noindent Such that for each promonoidal structure
  \begin{itemize} 
  \item $α(a ⨾_1 b) = (c ⨾_2 d)$ in $𝔹$ $\implies$ $a_0 ⨾ (b_0 ⊗ \im) = c_0 ⨾ (\im ⊗ d_0)$ and $(b_1 ⊗ \im) ⨾ a_1 = (\im ⊗ d_1) ⨾ c_1$ in $𝕄$;
  \item $λ(a ⨾_1 b) = c = ρ(d ⨾_2 e)$ in $𝔹$ $\implies$ $a_0 ⨾ (b_0 ⊗ \im) = c_0 = d_0 ⨾ (\im ⊗ e_0)$ and  $(b_1 ⊗ \im) ⨾ a_1 = c_1 = (\im ⊗ e_1) ⨾ d_1$ in $𝕄$;
  \end{itemize}
  \noindent and such that
  \begin{itemize}
  \item $ψ_2(a ｜ b ｜ c) = (d ｜ e ｜ f)$ in $𝔹$ $\implies$ $a_0 ⨾ (b_0 ⊗ c_0)  = d_0 ⨾ e_0, b_1 ⊗ c_1 = e_1 ⨾ d_1 ⨾ f_0$ and $(b_2 ⊗ c_2) ⨾ a_1 = f_1 ⨾ d_2$ in $𝕄$;
  \item $ψ_0(a) = (b ｜ c ｜ d)$ in $𝔹$ $\implies$ $a_0 = b_0 ⨾ c_0$, $\im = c_1 ⨾ b_1 ⨾ d_0$, and $a_1 = d_1 ⨾ b_2$ in $𝕄$;
  \item $φ_2(a ｜ b ｜ c) = d$ in $𝔹$ $\implies$ $a_0 ⨾ (b_0 ⊗ c_0) ⨾ a_1 = d_0$ in $𝕄$;
  \item $φ_0(a) = b$ in $𝔹$ $\implies$ $a_0 ⨾ a_1 = b_0$ in $𝕄$.
  \end{itemize}

  \noindent On the other hand, a \produoidalFunctor{} $F : 𝔹 \to 𝓣{𝕄}$, also amounts to the following data. For each
  \begin{itemize}
  \item $X \in \obj{𝔹}$ an object $F(X) = (X^L, X^R) \in \obj{𝓣{𝕄}}$;
  \item $f \in 𝔹(X;N)$, an element $F(f) = f_0 \in 𝓣{𝕄}(\biobj{X^L}{X^R};N)$;
  \item $f \in 𝔹(X;I)$, a unit $F(f) = ⟨ f \parallel g ⟩ \in 𝓣{𝕄}(\biobj{X^L}{X^R};I_M)$
  \item $f \in 𝔹(A;X)$, a spliced arrow $F(f) = \bisplice{f_0}{f_1} \in 𝓣{𝕄}(\biobj{A}{B}, \biobj{X}{Y})$;
  \item $f \in 𝔹(A; X ◁ Y)$, a spliced arrow $F(f) = \trisplice{f_0}{f_1}{f_2} \in 𝓣{𝕄}(\biobj{A^L}{A^R}, \biobj{X^L}{X^R} ◁ \biobj{Y^L}{Y^R})$;
  \item $f \in 𝔹(A; X ⊗ Y)$, a spliced monoidal arrow $F(f) = ⟨ f_0 ⨾ □ ⊗ □ ⨾ f_1 ⟩ \in 𝓣{𝕄}(\biobj{A^L}{A^R}, \biobj{X^L}{X^R} ⊗ \biobj{Y^L}{Y^R})$;
\end{itemize}
Such that for each promonoidal structure
\begin{itemize}
  \item $α(a ｜ b) = (c ｜d)$ in $𝔹$ $\implies$ $α(Fa ｜Fb) = (Fc ｜ Fd)$ in $𝓣{𝕄}$;
  \item $λ(a ｜ b) = c = ρ(d ｜ e)$ in $𝔹$ $\implies$ $λ(Fa ｜ Fb) = Fc = ρ(Fd ｜ Fe)$ in $𝓣{𝕄}$;
  \item[] and such that
  \item $ψ_2(a ｜ b ｜ c) = (d ｜ e ｜ f)$ in $𝔹$ $\implies$ $ψ_2(Fa ｜ Fb ｜ Fc) = (Fd ｜Fe ｜Ff)$ in $𝓣{𝕄}$;
  \item $ψ_0(a) = (b ｜ c ｜ d)$ in $𝔹$ $\implies$ $ψ_0(Fa) = (Fa ｜ Fc ｜ Fd)$ in $𝓣{𝕄}$;
  \item $φ_2(a ｜ b ｜ c) = d$ in $𝔹$ $\implies$ $φ_2(Fa ｜Fb ｜ Fc) = Fd$ in $𝓣{𝕄}$;
  \item $φ_0(a) = b$ in $𝔹$ $\implies$ $φ_0(Fa) = Fb$ in $𝓣{𝕄}$.
  \end{itemize}
\noindent Each of these points is exactly equal by definition, which establishes the desired adjunction.
\end{proof}

\clearpage %
\section{Normalization}

\subsection{Normalization}

  \begin{theorem}[From \Cref{th:normalizationProduoidal}]
    \label{ax:th:normalizationProduoidal}
    Let $𝕍_{⊗,I,◁,N}$ be a \produoidal{} category. The profunctor
    $𝓝𝕍(•; •) =
    𝕍(•; N ⊗ • ⊗ N)$
    forms a \promonad{}. Moreover, the Kleisli category of this promonad is a normal \produoidalCategory{} with the following sequential and parallel splits and units.
    $$\begin{gathered}
        𝓝𝕍(A;B) = 𝕍(A ; N ⊗ B ⊗ N); \\
        𝓝𝕍(A;B ⊗ C) = 𝕍(A ; N ⊗ B ⊗ N ⊗ C ⊗ N); \\
        𝓝𝕍(A;B ◁ C) = 𝕍(A ; (N ⊗ B ⊗ N) ◁ (N ⊗ C ⊗ N)); \\
        𝓝𝕍(A;I) = 𝕍(A;N); \\
        𝓝𝕍(A;N) = 𝕍(A;N).
    \end{gathered}$$
\end{theorem}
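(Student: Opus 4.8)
The plan is to proceed in two stages: first establish the promonad, then transport the produoidal structure to its Kleisli category.

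The key observation for the first stage is that the sequential unit $N$ is canonically a monoid for the parallel tensor $\otimes$: the laxator $\varphi_2 \colon \mathbb{V}(\bullet; N \otimes N) \to \mathbb{V}(\bullet; N)$ supplies the multiplication and $\varphi_0 \colon \mathbb{V}(\bullet; I) \to \mathbb{V}(\bullet; N)$ the unit (composed with the $\otimes$-unitor identifying $I$ as the $\otimes$-unit). The coherence conditions of \Cref{def:produoidal} are exactly what make $(N, \varphi_2, \varphi_0)$ satisfy the monoid associativity and unit laws. I would then invoke the two-sided construction: for any $\otimes$-monoid $M$, the endoprofunctor $X \mapsto M \otimes X \otimes M$ carries a promonad structure. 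Concretely, the inclusion ${}^{\circ} \colon \mathbb{V}(A;B) \to \mathbb{V}(A; N \otimes B \otimes N)$ is obtained by rewriting $B \cong I \otimes B \otimes I$ with the parallel unitors and applying $\varphi_0$ on each side; the multiplication takes $p \in \mathbb{V}(A; N \otimes B \otimes N)$ and $q \in \mathbb{V}(B; N \otimes C \otimes N)$, substitutes $q$ into the middle slot of $p$ along the profunctor composition $\diamond$, reassociates via the $\otimes$-associator to reach $\mathbb{V}(A; N \otimes N \otimes C \otimes N \otimes N)$, and contracts the two outer pairs with $\varphi_2$. The promonad axioms then reduce termwise to the monoid axioms for $N$ together with promonoidal coherence, which I would discharge by coend calculus.

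For the second stage, I would take the splits and units as given by the displayed formulas and check, first, that each is a well-defined profunctor on the Kleisli category, i.e. that the $\mathbb{V}$-actions descend along the promonad, which holds because every slot has been wrapped in $N \otimes (-) \otimes N$ compatibly with the inclusion and multiplication. The sequential promonoidal structure $(\triangleleft_N, N_N)$ is inherited almost verbatim from $\triangleleft$ in $\mathbb{V}$, since $\mathcal{N}\mathbb{V}(A; B \triangleleft_N C) = \mathbb{V}(A; (N \otimes B \otimes N) \triangleleft (N \otimes C \otimes N))$ simply applies the normalization to each factor; its associator and unitors are transported from those of $\mathbb{V}$ by Yoneda reductions, exactly as in the splice lemmas of \Cref{sec:sequentialContext}. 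The parallel structure $(\otimes_N, I_N)$ uses $\mathcal{N}\mathbb{V}(A; B \otimes_N C) = \mathbb{V}(A; N \otimes B \otimes N \otimes C \otimes N)$, where the interior $N$ between $B$ and $C$ is the feature that distinguishes $\otimes_N$ from the ambient $\otimes$; its coherence morphisms come from those of $\otimes$ after merging adjacent copies of $N$ with $\varphi_2$.

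The laxators for $\mathcal{N}\mathbb{V}$ are then derived from those of $\mathbb{V}$ by the same merging: the new $\psi_2$ is built from $\psi_2$ followed by contractions of the extra copies of $N$ produced when the two tensors interact, and similarly for the rest. Normality is immediate from the formulas, since $\mathcal{N}\mathbb{V}(A; I_N) = \mathbb{V}(A;N) = \mathcal{N}\mathbb{V}(A; N_N)$ and one checks that the induced $\varphi_0$ laxator is literally the identity on $\mathbb{V}(\bullet; N)$. I expect the main obstacle to be the verification of the produoidal coherence axioms for the transported laxators: each becomes an equation between two coend expressions involving several copies of $N$ glued by $\varphi_2$, and one must show the two ways of contracting agree. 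The strategy here is to keep everything phrased via \YonedaIsomorphisms{} and the monoid laws of $N$, so that each coherence equation of $\mathcal{N}\mathbb{V}$ follows by pasting the corresponding coherence of $\mathbb{V}$ with the associativity and unitality of $\varphi_2$ and $\varphi_0$; the idempotence established in \Cref{th:normalizationIdempotent} is a useful consistency check that this merging is performed uniformly.
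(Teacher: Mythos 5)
Your proposal is correct and follows essentially the same route as the paper's proof: the unit and multiplication of the promonad are obtained exactly as you describe (unitors plus $\varphi_0$, and Yoneda reduction plus associativity plus $\varphi_2$), the promonoidal coherences of $\mathcal{N}\mathbb{V}$ are transported by the same coend/Yoneda reductions, and the new $\psi_2$ is built by applying the old $\psi_2$ and contracting adjacent copies of $N$ with $\varphi_2$, with the remaining laxators degenerating to unitors or identities. Your packaging of the first stage via the observation that $(N,\varphi_2,\varphi_0)$ is a $\otimes$-monoid, so that $N \otimes (-) \otimes N$ is the associated two-sided promonad, is a tidy organizing principle the paper leaves implicit, but it yields the same computations.
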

\begin{proof}
We define the following multiplication and unit for the \promonad{}, $𝓝𝕍$. They are constructed out of laxators of the produoidal category $𝕍$ and Yoneda isomorphisms; thus, they must be associative and unital by coherence. The unit is defined by
$$\begin{aligned}
& 𝕍(A; B) &\quad
\cong & \quad\mbox{(by unitality of $𝕍$)} \\
& 𝕍(A; I ⊗ B ⊗ I) &\quad
\to & \quad\mbox{(by the laxators of $𝕍$)} \\
& 𝕍(A; N ⊗ B ⊗ N) &\quad
= & \quad\mbox{(by definition)} \\
& 𝓝𝕍(A; B).
\end{aligned}$$

The multiplication is defined by,
$$\begin{aligned}
& ∫^{B ∈ 𝕍} 𝓝𝕍(A; B) × 𝓝𝕍(B; C) &\quad
= & \quad\mbox{(by definition)} \\
& ∫^{B ∈ 𝕍} 𝕍(A; N ⊗ B ⊗ N) × 𝕍(B; N ⊗ C ⊗ N) &\quad
≅ & \quad\mbox{(\byYonedaReduction{})} \\
& 𝕍(A; N ⊗ N ⊗ C ⊗ N ⊗ N) \phantom{\int} &\quad
→ & \quad\mbox{(by laxators of $𝕍$)} \\
& 𝕍(A; N ⊗ C ⊗ N) \phantom{\int} &\quad
= & \quad\mbox{(by definition)} \\
& 𝓝𝕍(A;C). \phantom{\int}
\end{aligned}$$

Let us now construct the unitors and the associators. Again, they are constructed out of laxators of the \produoidalCategory{} $𝕍$, the associators and unitors of $𝕍$, and Yoneda isomorphisms. We first consider the right unitor.
\begin{align*}
& ∫^{X ∈ 𝓝𝕍} 𝓝𝕍(A; B ⊗ X) × 𝓝𝕍(X; N) &\quad
= & \quad\mbox{(by definition)} \\
& ∫^{X ∈ 𝓝𝕍} 𝕍(A; N ⊗ B ⊗ N ⊗ X ⊗ N) × 𝓝𝕍(X; N) &\quad
≅ & \quad\mbox{(by associativity of $𝕍$)} \\
& ∫^{X ∈ 𝓝𝕍, P ∈ 𝕍} 𝕍(A; N ⊗ B ⊗ P) × 𝕍(P; N ⊗ X ⊗ N) × 𝓝𝕍(X; N) &\quad
= & \quad\mbox{(by definition)} \\
& ∫^{X ∈ 𝓝𝕍, P ∈ 𝕍} 𝕍(A; N ⊗ B ⊗ P) × 𝓝𝕍(P;X) × 𝓝𝕍(X; N) &\quad
≅ & \quad\mbox{(\byYonedaReduction{})} \\
& ∫^{P ∈ 𝕍} 𝕍(A; N ⊗ B ⊗ P) × 𝓝𝕍(P; N) &\quad
= & \quad\mbox{(by definition)} \\
& ∫^{P ∈ 𝕍} 𝕍(A; N ⊗ B ⊗ P) × 𝕍(P; N) &\quad
≅ & \quad\mbox{(by unitality)} \\
& ∫^{P ∈ 𝕍} 𝕍(A; N ⊗ B ⊗ N).
\end{align*}
We now consider the left unitor.
\begin{align*}
& ∫^{X ∈ 𝓝𝕍} 𝓝𝕍(A; X ⊗ B) × 𝓝𝕍(X; N) &\quad
= & \quad\mbox{(by definition)} \\
& ∫^{X ∈ 𝓝𝕍} 𝕍(A; N ⊗ X ⊗ N ⊗ B ⊗ N) × 𝓝𝕍(X; N) &\quad
≅ & \quad\mbox{(by associativity of $𝕍$)} \\
& ∫^{X ∈ 𝓝𝕍, P ∈ 𝕍} 𝕍(A; P ⊗ B ⊗ N) × 𝕍(P; N ⊗ X ⊗ N) × 𝓝𝕍(X; N) &\quad
= & \quad\mbox{(by definition)} \\
& ∫^{X ∈ 𝓝𝕍, P ∈ 𝕍} 𝕍(A; P ⊗ B ⊗ N) × 𝓝𝕍(P;X) × 𝓝𝕍(X; N) &\quad
≅ & \quad\mbox{(\byYonedaReduction{})} \\
& ∫^{P ∈ 𝕍} 𝕍(A; P ⊗ B ⊗ N) × 𝓝𝕍(P; N) &\quad
= & \quad\mbox{(by definition)} \\
& ∫^{P ∈ 𝕍} 𝕍(A; P ⊗ B ⊗ N) × 𝕍(P; N) &\quad
≅ & \quad\mbox{(by unitality)} \\
& ∫^{P ∈ 𝕍} 𝕍(A; N ⊗ B ⊗ N).
\end{align*}
Finally, we consider the associator. We can do so in two steps, showing that both sides of the equation
$$∫^{X ∈ 𝓝𝕍} 𝓝𝕍(A; B ⊗ X) × 𝓝𝕍(X; C ⊗ D) ≅ ∫^{Y ∈ 𝓝𝕍} 𝓝𝕍(A; Y ⊗ D) × 𝓝𝕍(Y; B ⊗ C)$$
are isomorphic to $𝕍(A; N ⊗ B ⊗ N ⊗ C ⊗ N ⊗ D ⊗ N)$. The first side by
\begin{align*}
& ∫^{X ∈ 𝓝𝕍} 𝓝𝕍(A; B ⊗ X) × 𝓝𝕍(X; C ⊗ D) &\quad
= & \quad\mbox{(by definition)} \\
& ∫^{X ∈ 𝓝𝕍} 𝕍(A; N ⊗ B ⊗ N ⊗ X ⊗ N) × 𝓝𝕍(X; C ⊗ D) &\quad
≅ & \quad\mbox{(by associativity)} \\
& ∫^{X ∈ 𝓝𝕍, P ∈ 𝕍} 𝕍(A; N ⊗ B ⊗ P) × 𝕍(P; N ⊗ X ⊗ N) × 𝓝𝕍(X; C ⊗ D) &\quad
= & \quad\mbox{(by definition)} \\
& ∫^{X ∈ 𝓝𝕍, P ∈ 𝕍} 𝕍(A; N ⊗ B ⊗ P) × 𝓝𝕍(P; X) × 𝓝𝕍(X; C ⊗ D) &\quad
≅ & \quad\mbox{(\byYonedaReduction{})} \\
& ∫^{P ∈ 𝕍} 𝕍(A; N ⊗ B ⊗ P) × 𝓝𝕍(P; C ⊗ D) &\quad
= & \quad\mbox{(by definition)} \\
& ∫^{P ∈ 𝕍} 𝕍(A; N ⊗ B ⊗ P) × 𝕍(P; N ⊗ C ⊗ N ⊗ D ⊗ N) &\quad
≅ & \quad\mbox{(by associativity)} \\
& 𝕍(A; N ⊗ B ⊗ N ⊗ C ⊗ N ⊗ D ⊗ N),\phantom{\int}
\end{align*}
and the second side by
\begin{align*}
& ∫^{Y ∈ 𝓝𝕍} 𝓝𝕍(A; Y ⊗ D) × 𝓝𝕍(Y; B ⊗ C) &\quad
= & \quad\mbox{(by definition)} \\
& ∫^{Y ∈ 𝓝𝕍} 𝕍(A; N ⊗ Y ⊗ N ⊗ D ⊗ N) × 𝓝𝕍(Y; B ⊗ C) &\quad
≅ & \quad\mbox{(by associativity)} \\
& ∫^{Y ∈ 𝓝𝕍, P ∈ 𝕍} 𝕍(A; P ⊗ D ⊗ N) × 𝕍(P; N ⊗ Y ⊗ N) × 𝓝𝕍(Y; B ⊗ C) &\quad
= & \quad\mbox{(by definition)} \\
& ∫^{Y ∈ 𝓝𝕍, P ∈ 𝕍} 𝕍(A; P ⊗ D ⊗ N) × 𝓝𝕍(P; X) × 𝓝𝕍(X; B ⊗ C) &\quad
≅ & \quad\mbox{(\byYonedaReduction{})} \\
& ∫^{P ∈ 𝕍} 𝕍(A; P ⊗ D ⊗ N) × 𝓝𝕍(P; B ⊗ C) &\quad
= & \quad\mbox{(by definition)} \\
& ∫^{P ∈ 𝕍} 𝕍(A; P ⊗ D ⊗ N) × 𝕍(P; N ⊗ B ⊗ N ⊗ C ⊗ N) &\quad
≅ & \quad\mbox{(by associativity)} \\
& 𝕍(A; N ⊗ B ⊗ N ⊗ C ⊗ N ⊗ D ⊗ N).\phantom{\int}
\end{align*}
Precisely because they are constructed out of coherence morphisms for the base produoidal category $𝕍$, we know that these satisfy the pentagon and triangle equations and define a promonoidal category. The unitors and associators for the sequential promonoidal structure are defined similarly. Finally, we define the laxators of $𝓝𝕍$, making it into a \produoidalCategory{}.

The first laxator,
$$
ψ_2 : 𝓝𝕍(A; (B_1 ◁ C_1) ⊗ (B_2 ◁ C_2)) \longrightarrow 𝓝𝕍(A; (B_1 ⊗ B_2) ◁ (C_1 ⊗ C_2)),
$$
is defined by the following reasoning.
\begin{align*}
& 𝓝𝕍(A; (B_1 ◁ C_1) ⊗ (B_2 ◁ C_2)) &
\\ & \quad= \quad\mbox{(by definition)} \\
& 𝕍(A; N ⊗ ((N ⊗ B_1 ⊗ N) ◁ (N ⊗ C_1 ⊗ N)) ⊗ N ⊗ ((N ⊗ B_2 ⊗ N) ◁ (N ⊗ C_2 ⊗ N)) ⊗ N) &
\\ & \quad \to \quad\mbox{(by $ψ_2$ of $𝕍$)} \\
& 𝕍(A; ((N ⊗ N ⊗ B_1 ⊗ N) ◁ (N ⊗ N ⊗ B_2 ⊗ N)) ⊗ ((N ⊗ N ⊗ B_2 ⊗ N ⊗ N) ◁ (N ⊗ C_2 ⊗ N ⊗ N))) &
\\ & \quad \to \quad\mbox{(by $ψ_2$ of $𝕍$)} \\
& 𝕍(A; (N ⊗ N ⊗ B_1 ⊗ N ⊗ N ⊗ N ⊗ B_2 ⊗ N ⊗ N) ◁ (N ⊗ N ⊗ C_1 ⊗ N ⊗ N ⊗ N ⊗ C_2 ⊗ N ⊗ N)) &
\\ & \quad \to \quad\mbox{(by $φ_2$ of $𝕍$)} \\
& 𝕍(A; (N ⊗ N ⊗ B_1 ⊗ N ⊗ B_2 ⊗ N ⊗ N) ◁ (N ⊗ N ⊗ C_1 ⊗ N ⊗ C_2 ⊗ N ⊗ N)) &
\\ & \quad= \quad\mbox{(by definition)} \\
& 𝓝𝕍(A; (B_1 ⊗ B_2) ◁ (C_1 ⊗ C_2)).
\end{align*}
The remaining laxators are isomorphisms that arise from applications of unitality or just as identities.
\begin{align*}
  & ψ_0 : 𝓝𝕍(A, I) \overset{≅}\longrightarrow 𝓝𝕍(A; I ◁ I) \\
  & φ_2 : 𝓝𝕍(A ; N ⊗ N) \overset{≅}\longrightarrow 𝓝𝕍(A ; N) \\
  & φ_0 : 𝓝𝕍(A ; I) \overset{id}\longrightarrow 𝓝𝕍(A ; N)
\end{align*}

This has shown that the resulting category is also a \emph{normal} produoidal category.
\end{proof}

\begin{proposition} \label{ax:prop:normalizationFunctor}
  Normalization extends to a endofunctor of \produoidalCategories{} $𝓝 : \Produo \to \Produo$.
\end{proposition}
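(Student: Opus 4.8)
The action of $\mathcal{N}$ on objects is supplied by \Cref{th:normalizationProduoidal}, so the plan is to define its action on a produoidal functor $F \colon \mathbb{V}_{\otimes,I,\triangleleft,N} \to \mathbb{W}_{\oslash,J,\blacktriangleleft,M}$ and then to verify functoriality. Since the Kleisli category of the normalization promonad has the same objects as its base, I would let $\mathcal{N}F$ agree with $F$ on objects. On morphisms, recall that $\mathcal{N}\mathbb{V}(A;B) = \mathbb{V}(A; N \otimes B \otimes N)$ is a profunctor composite built from the parallel tensor $\mathbb{V}(\bullet;\bullet\otimes\bullet)$ and the sequential unit $\mathbb{V}(\bullet;N)$. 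The structure natural transformations $F_\otimes$ and $F_N$ act precisely on these constituents, so applying $F$ to the profunctor composite — iterating $F_\otimes$ across the triple tensor and inserting $F_N$ at the two unit slots — yields a map $\mathbb{V}(A; N \otimes B \otimes N) \to \mathbb{W}(FA; M \oslash FB \oslash M) = \mathcal{N}\mathbb{W}(FA;FB)$. This defines $\mathcal{N}F$ on hom-sets.

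Next I would check that $\mathcal{N}F$ is a morphism of promonads, i.e. that it commutes with the unit and multiplication constructed in the proof of \Cref{th:normalizationProduoidal}. Those are assembled from the unitors, associators and the laxators $\varphi_2, \varphi_0, \psi_2, \psi_0$ of $\mathbb{V}$, together with Yoneda isomorphisms; since a produoidal functor preserves all of these by definition, and the Yoneda isomorphisms are natural, the relevant squares commute. A promonad morphism of this shape immediately induces a functor between Kleisli categories, which serves as the underlying functor of $\mathcal{N}F$.

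It then remains to exhibit $\mathcal{N}F$ as a produoidal functor. Using the explicit formulas for the splits and units of $\mathcal{N}\mathbb{V}$ — for instance $\mathcal{N}\mathbb{V}(A; B \triangleleft C) = \mathbb{V}(A; (N\otimes B\otimes N) \triangleleft (N\otimes C\otimes N))$ and $\mathcal{N}\mathbb{V}(A;I) = \mathcal{N}\mathbb{V}(A;N) = \mathbb{V}(A;N)$ — I would define the structure maps $(\mathcal{N}F)_\otimes, (\mathcal{N}F)_I, (\mathcal{N}F)_\triangleleft, (\mathcal{N}F)_N$ by applying the corresponding components of $F$ to these composites. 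Preservation of the coherence isomorphisms and of the four laxators of $\mathcal{N}\mathbb{V}$ reduces, line by line, to $F$ preserving the coherence isomorphisms and laxators of $\mathbb{V}$, because every piece of the $\mathcal{N}$-level structure was constructed solely out of $\mathbb{V}$-level structure and Yoneda reductions. Finally, $\mathcal{N}(\mathrm{id}_{\mathbb{V}}) = \mathrm{id}_{\mathcal{N}\mathbb{V}}$ and $\mathcal{N}(G \circ F) = \mathcal{N}G \circ \mathcal{N}F$ hold because each component of $\mathcal{N}F$ is defined componentwise from the corresponding component of $F$, and these compose strictly.

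The main obstacle I anticipate is purely bookkeeping the virtual structure: making rigorous the phrase ``apply $F$ to a profunctor composite'', so that $F_\otimes$ and $F_N$ genuinely assemble into a well-defined, dinatural map on the coend defining $\mathbb{V}(A; N\otimes B\otimes N)$, and then checking that $F$'s preservation of $\varphi_2$ (which realizes the promonad multiplication) and of $\psi_2$ (which realizes the laxator $\psi_2$ of $\mathcal{N}\mathbb{V}$) really does yield the required commuting squares. None of these steps is conceptually hard once the nesting convention is fixed, but there are a number of coherence diagrams to discharge.
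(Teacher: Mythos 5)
Your proposal matches the paper's proof: both define $\mathcal{N}F$ to agree with $F$ on objects, define it on morphisms by applying $F_\otimes$ and $F_N$ to the components of the coend $\int^{X,Y} \mathbb{V}(A; X \otimes B \otimes Y) \times \mathbb{V}(X;N) \times \mathbb{V}(Y;N)$ and then invoking the universal property of the coend to land in $\mathbb{W}(FA; M \oslash FB \oslash M)$, define the remaining structure maps analogously with $(\mathcal{N}F)_N = F_N$, and deduce preservation of identities and composition from naturality of the components. Your explicit remark that compatibility with the promonad unit and multiplication reduces to $F$ preserving $\varphi_0$ and $\varphi_2$ is a detail the paper leaves implicit, but it is the same argument, not a different route.
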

\begin{proof}
  Let $𝕍_{⊗,I,◁,N}$ and $𝕎_{⊘,J,◀,M}$ be produoidal categories. $𝓝$ sends $𝕍$ to its normalization $𝓝𝕍$. Let $(F, F_⊗, F_I, F_{◁}, F_N) : 𝕍 \to 𝕎$ be a produoidal functor. Then $𝓝F : 𝓝𝕍 \to 𝓝𝕎$ has underlying functor defined by $F$ on objects and on morphisms by

  $$\begin{aligned}
    & 𝕍(A; N ⊗ B ⊗ N) &\quad
    = & \quad\mbox{(by definition)} \\
    & ∫^{X,Y \in 𝕍} 𝕍(A; X ⊗ B ⊗ Y) × 𝕍(X;N) × 𝕍(Y;N) &\quad
    \to & \quad\mbox{(induced by $F_⊗, F_N$)} \\
    & ∫^{X,Y \in 𝕍} 𝕎(FA; FX ⊘ FB ⊘ FY) × 𝕎(FX; M) × 𝕎(FY; M) &\quad
    \to & \quad\mbox{(inclusion, universal prop. of coend)} \\
    & ∫^{P,Q \in 𝕎} 𝕎(FA; P ⊘ FB ⊘ Q) × 𝕎(P; M) × 𝕎(Q; M) &\quad
    = & \quad\mbox{(by definition)} \\
    & 𝕎(FA; M ⊘ FB ⊘ M).
  \end{aligned}$$

  $𝓝F_⊗$ and $𝓝F_{◁}$ are defined similarly, and $𝓝F_N$ is $F_N$. We have $𝓝\idFun{_𝕍} = \idFun{_{𝓝𝕍}}$, since all the data of the left hand side is given by identity maps on $𝓝𝕍$, and if $G : 𝕌 → 𝕍$ is another produoidal functor, then $𝓝(G⨾F) = 𝓝G ⨾ 𝓝F$ follows from the naturality of the components of $F$ and $G$.
\end{proof}

\begin{theorem}[From \Cref{th:normalizationIdempotent}]
  \label{ax:th:normalizationIdempotent}
  The functor $𝓝 : \Produo \to \Produo$ from \Cref{ax:prop:normalizationFunctor} is an idempotent monad.
\end{theorem}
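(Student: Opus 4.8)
The plan is to promote the promonad structure of \Cref{ax:th:normalizationProduoidal} to a monad on $\Produo$ and then exploit the fact, also from that theorem, that $𝓝𝕍$ is always \emph{normal}. First I would define the unit $\eta \colon \mathrm{Id} \Rightarrow 𝓝$ componentwise as the Kleisli inclusion $\eta_𝕍 \colon 𝕍 → 𝓝𝕍$: the identity on objects, acting on morphisms, splits and units through the promonad inclusion ${}^{\circ}$, i.e. by the composite $𝕍(A;B) ≅ 𝕍(A; I ⊗ B ⊗ I) → 𝕍(A; N ⊗ B ⊗ N)$ assembled from the unitors of $𝕍$ and two copies of the laxator $φ_0$. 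The multiplication $\mu \colon 𝓝𝓝 \Rightarrow 𝓝$ is defined componentwise from the promonad multiplication $(★)$, which by \Cref{ax:th:normalizationProduoidal} is the coend collapse $𝕍(A; N ⊗ N ⊗ C ⊗ N ⊗ N) → 𝕍(A; N ⊗ C ⊗ N)$ obtained from the laxator $φ_2$ and a Yoneda reduction. The first task is to check that $\eta_𝕍$ and $\mu_𝕍$ are genuine \produoidalFunctors{}, i.e. that they commute with both promonoidal structures and all four laxators $(ψ_2, ψ_0, φ_2, φ_0)$; since every component is built from unitors, associators, $φ_0$, $φ_2$, $ψ_2$ and Yoneda isomorphisms, this follows by coherence exactly as in the construction of the normalized structure, and naturality in $𝕍$ is verified as in \Cref{ax:prop:normalizationFunctor}.

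Second, I would verify the monad laws. The two unit laws $(\eta 𝓝) ⨾ \mu = \mathrm{id}_{𝓝} = (𝓝 \eta) ⨾ \mu$ and associativity $(\mu 𝓝) ⨾ \mu = (𝓝 \mu) ⨾ \mu$ are precisely the unitality and associativity of the promonad $(𝓝𝕍, ★, {}^{\circ})$ from \Cref{ax:th:normalizationProduoidal}, now read as equalities of \produoidalFunctors{}. Because all the relevant data is built from the same coherence morphisms, these equations already hold at the level of the hom-profunctors and hence lift to identities of functors; this step is routine given the promonad axioms.

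The heart of the argument is idempotency. I would invoke the standard characterization that a monad $(T,\eta,\mu)$ is idempotent precisely when $\eta T$ is a natural isomorphism, in which case $\mu = (\eta T)^{-1}$. Here this means showing that $\eta_{𝓝𝕍} \colon 𝓝𝕍 → 𝓝𝓝𝕍$ is an isomorphism for every $𝕍$. By \Cref{ax:th:normalizationProduoidal} the object $𝓝𝕍$ is a \emph{normal} produoidal category, so it suffices to prove that $\eta_𝕎 \colon 𝕎 → 𝓝𝕎$ is an isomorphism whenever $𝕎$ is normal. Normality says that $φ_0 \colon 𝕎(•;I) → 𝕎(•;N)$ is invertible; hence each component of $\eta_𝕎$ — the passage $𝕎(A; I ⊗ B ⊗ I) → 𝕎(A; N ⊗ B ⊗ N)$ on morphisms, and its analogues on sequential splits, parallel splits and units — is a composite of unitors, Yoneda isomorphisms and copies of $φ_0$, and is therefore itself invertible. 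Unwinding the description of $𝓝𝕎$ with $N ≅ I$ collapses every codomain back to the corresponding hom-set of $𝕎$, so $\eta_𝕎$ is an isomorphism of \produoidalCategories{}. Taking $𝕎 = 𝓝𝕍$ yields that $\eta 𝓝$ is a natural isomorphism, and idempotency follows.

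The main obstacle is not the idempotency argument itself but the bookkeeping in the first step: establishing that $\eta$ and $\mu$ preserve \emph{all} of the laxators together with both associators and unitors, while respecting the nesting convention for the virtual tensors (so that, for instance, $N ⊗ B ⊗ N$ is read as an iterated profunctor composition). This is the point at which the coherence of the base produoidal category $𝕍$ must be used carefully, component by component, rather than packaged into a single abstract appeal.
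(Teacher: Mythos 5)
Your proposal is correct, and the skeleton (unit from the Kleisli inclusion built out of unitors and $φ_0$, multiplication collapsing the extra copies of $N$, monad laws by coherence) matches the paper's proof. The one place where you genuinely diverge is the idempotency step: the paper exhibits $\mu_{𝕍} \colon 𝓝𝓝𝕍 \to 𝓝𝕍$ directly as an isomorphism, namely the left and right unitors of the parallel promonoidal structure of $𝓝𝕍$ applied to $N ⊗_N B ⊗_N N$ (which works because normality identifies the sequential unit with the parallel unit of $𝓝𝕍$), whereas you invoke the equivalent characterization via $\eta𝓝$ and prove that $\eta_{𝕎}$ is invertible for every \emph{normal} $𝕎$ using invertibility of $φ_0$. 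The two are interchangeable, and your lemma about $\eta_{𝕎}$ for normal $𝕎$ is essentially the content the paper isolates separately in \Cref{lemma:ax:exactlyOneAlgebra} to prove freeness; so your route front-loads that observation into the idempotency proof, which is a perfectly good economy. One small imprecision worth fixing: the map you write for $\mu$, from $𝕍(A; N ⊗ N ⊗ C ⊗ N ⊗ N)$ to $𝕍(A; N ⊗ C ⊗ N)$, is the \emph{promonad} multiplication (Kleisli composition in $𝓝𝕍$); the monad multiplication $𝓝𝓝𝕍 \to 𝓝𝕍$ acts on the larger set $𝓝𝕍(A; N ⊗_N B ⊗_N N)$, which unwinds to three copies of $N$ on each side of $B$ before collapsing. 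The coherence axiom $(φ_0 ⊗ \mathrm{id}) ⨾ φ_2 = λ$ guarantees your $φ_2$-based collapse agrees with the paper's unitor-based isomorphism, but the domains should be stated correctly.
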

\begin{proof}
  Let $𝕍_{⊗,I,◁,N}$ be a \produoidal{} category and let $◁_{N}, ⊗_N, N$ denote the sequential splits, parallel splits, and unit in its normalization $𝓝𝕍$.
  
  The monad has unit $\eta$ with component at $𝕍$ the following \produoidalFunctor{} $\eta_{𝕍} : 𝕍 \to 𝓝𝕍$. The underlying functor is the functor induced by the \promonad{} \cite[Lemma 3.8]{roman22}: it is identity on objects, and acts on morphisms by the unit of the promonad. The following components of the produoidal functor preserve laxators and coherence maps since they are constructed only from laxators and coherence maps.
  $$\begin{aligned}
  & η_{⊗} : 𝕍(A; B ⊗ C) \overset{\lambda\,,\,\rho}{\to} 𝕍(A; I ⊗ B ⊗ I ⊗ C ⊗ I) \xrightarrow[]{φ_0} 𝕍(A; N ⊗ B ⊗ N ⊗ C ⊗ N), \\
  & η_{I} : 𝕍(A;I) \xrightarrow[]{φ_0} 𝕍(A;N), \\
  & η_{◁} : 𝕍(A; B ◁ C) \overset{\lambda\,,\,\rho}{\to} 𝕍(A; (I ⊗ B ⊗ I) ◁ (I ⊗ C ⊗ I)) \xrightarrow[]{φ_0} 𝕍(A; (N ⊗ B ⊗ N) ◁ (N ⊗ C ⊗ N)), \\
  & η_{N} : 𝕍(A;N) \xrightarrow[]{\mathrm{id}} 𝕍(A;N).
  \end{aligned}$$

  The monad has multiplication $\mu$ with component at $𝕍$ the following \emph{isomorphism} $\mu_{𝕍} : 𝓝𝓝𝕍 \cong 𝓝𝕍$ of produoidal categories (witnessing that the monad is idempotent). The underlying functor is identity on objects, and acts on morphisms by $$𝓝𝓝𝕍(A;B) = 𝓝𝕍(A; N ⊗_{N} B ⊗_{N} N) \overset{\lambda\,,\,\rho}{\cong} 𝓝𝕍(A;B).$$
  The following natural transformations make this a produoidal functor:
  $$\begin{aligned}
  & \mu_{⊗} : 𝓝𝓝𝕍(A; B ⊗_{NN} C) = 𝓝𝕍(A; N ⊗_N B ⊗_N N ⊗_N C ⊗_N N) \overset{\lambda\,,\,\rho}{\cong} 𝓝𝕍(A; B ⊗_N C), \\
  & \mu_{N} = \mu_{I} : 𝓝𝓝𝕍(A;N) = 𝓝𝕍(A;N), \\
  & \mu_{◁} : 𝓝𝓝𝕍(A; B ◁_{NN} C) =  𝓝𝕍(A ; (N ⊗_N B ⊗_N N)  ◁_N (N ⊗_N C ⊗_N N)) \overset{\lambda\,,\,\rho}{\cong} 𝓝𝕍(A ; B ◁_N C).
\end{aligned}$$.

Finally we verify the monad laws. $η_{𝓝𝕍} ⨾ \mu_{𝕍}$ is identity on objects and on morphisms applies left and right unitors followed by their inverses, thus has underlying functor equal to the identity. The components of the natural transformations are also identities, since the laxator $φ_0$ is an identity for $𝓝𝕍$, and they are otherwise composed of unitors followed by their inverses, and similarly for the other unit law (using the unitality coherence equations of \Cref{cd:duoidal-coherence-nandi}). $\mu_{𝓝𝕍} ⨾ \mu_𝕍$ and $𝓝\mu_{𝕍} ⨾ \mu_𝕍$ are identity on objects and amount to applying left and right unitors twice on morphisms, and similarly for their components.
\end{proof}

\begin{lemma}
  \label{lemma:ax:exactlyOneAlgebra}
  A produoidal category $𝕍$ has exactly one algebra structure for the normalization monad when it is normal, and none otherwise.
\end{lemma}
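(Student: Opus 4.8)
The plan is to reduce the claim to a standard fact about idempotent monads and then to determine, concretely, when the unit of $\mathcal{N}$ is invertible. By \Cref{th:normalizationIdempotent} (and its construction in \Cref{ax:th:normalizationIdempotent}) the endofunctor $\mathcal{N}$ underlies an idempotent monad $(\mathcal{N}, \eta, \mu)$ on $\Produo$. For any idempotent monad, an object $X$ carries an algebra structure if and only if $\eta_X$ is invertible, and in that case the structure is unique and equal to $\eta_X^{-1}$. First I would record a proof of this fact. If $(X,a)$ is an algebra, the unit law gives $a \circ \eta_X = \mathrm{id}_X$, while idempotency yields $\eta_{\mathcal{N}X} = \mathcal{N}\eta_X$; naturality of $\eta$ at $a$ then gives
\[
\eta_X \circ a = \mathcal{N}a \circ \eta_{\mathcal{N}X} = \mathcal{N}a \circ \mathcal{N}\eta_X = \mathcal{N}(a \circ \eta_X) = \mathrm{id}_{\mathcal{N}X},
\]
so $\eta_X$ is invertible with inverse $a$, forcing $a = \eta_X^{-1}$. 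Conversely, if $\eta_X$ is invertible then $(X, \eta_X^{-1})$ is an algebra because $\mu$ is invertible and $\mathcal{N}(\eta_X^{-1}) = (\mathcal{N}\eta_X)^{-1} = \mu_X$. This one fact delivers existence, uniqueness, and non-existence simultaneously.

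It then remains to show that $\eta_{\mathbb{V}} \colon \mathbb{V} \to \mathcal{N}\mathbb{V}$ is an isomorphism in $\Produo$ exactly when $\mathbb{V}$ is normal. Since $\eta_{\mathbb{V}}$ is the identity on objects, being an isomorphism of produoidal categories is equivalent to its underlying functor being bijective on hom-sets together with each structure transformation $\eta_\otimes, \eta_I, \eta_\triangleleft, \eta_N$ being a natural isomorphism; I would first note this routine characterization of isomorphisms in $\Produo$. For the forward implication, I read off from \Cref{ax:th:normalizationIdempotent} that the unit laxator component is exactly $\eta_I = \varphi_0 \colon \mathbb{V}(\bullet; I) \to \mathbb{V}(\bullet; N)$. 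Hence if $\eta_{\mathbb{V}}$ is invertible then $\varphi_0$ is invertible, which is precisely the defining condition for $\mathbb{V}$ to be normal.

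For the converse, suppose $\mathbb{V}$ is normal, so $\varphi_0$ is invertible. Inspecting the construction in \Cref{ax:th:normalizationIdempotent}, every component of $\eta_{\mathbb{V}}$ is a composite of unitors $\lambda, \rho$ (always invertible) with copies of $\varphi_0$ inserted into unit slots: the action on morphisms factors as $\mathbb{V}(A;B) \cong \mathbb{V}(A; I \otimes B \otimes I) \to \mathbb{V}(A; N \otimes B \otimes N)$, the maps $\eta_\otimes$ and $\eta_\triangleleft$ apply $\varphi_0$ in each of their unit positions, $\eta_I = \varphi_0$, and $\eta_N = \mathrm{id}$. Because $\varphi_0$ is a natural isomorphism and coends preserve isomorphisms, applying it in a nested slot $N \otimes (-) \otimes N$ is the whiskering of an invertible natural transformation along profunctor composition, so each component is invertible and $\eta_{\mathbb{V}}$ is an isomorphism. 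Combining both directions with the idempotent-monad fact yields exactly the stated dichotomy. The one place needing genuine care, and the main obstacle, is this converse bookkeeping: verifying that "apply $\varphi_0$ in a nested unit position" really is the whiskering of a natural isomorphism through the profunctor composites defining $\mathcal{N}\mathbb{V}$, so that invertibility of $\varphi_0$ propagates uniformly through every component; the remaining steps are formal.
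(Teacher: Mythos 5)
Your proof is correct and follows essentially the same route as the paper's: both arguments come down to observing that the unit's $I$-component is exactly the laxator $\varphi_0$ (forcing normality and $f_I = \varphi_0^{-1}$) and that all remaining components of the unit are built from unitors, associators, and $\varphi_0$, hence invertible precisely in the normal case, which pins down the rest of the algebra structure. The only difference is packaging: you first isolate the standard fact that an algebra for an idempotent monad exists iff the unit is invertible (and then equals its inverse), whereas the paper obtains the same conclusions by a direct diagram chase on the algebra laws; your version is a slightly cleaner modularization of the same argument.
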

\begin{proof}
  Let $(f_\map, f_{⊗}, f_I, f_{◁}, f_N) \colon 𝓝𝕍 \to 𝕍$ be an algebra. This means that the following commutative diagrams with the unit and multiplication of the normalization monad must commute.
  \begin{center}
    \begin{tikzcd}
      𝕍 \arrow[r, "η"] \arrow[rd, "\mathrm{id}"', no head] & 𝓝𝕍 \arrow[d, "f"] &  & 𝓝𝓝𝕍 \arrow[r, "μ"] \arrow[d, "𝓝f"'] & 𝓝𝕍 \arrow[d, "f"] \\
                                                   & 𝕍                  &  & 𝓝𝕍 \arrow[r, "f"']                    & 𝕍                 
      \end{tikzcd}  
  \end{center}
  Now, consider how the laxator $ψ_0 \colon 𝕍(•; I) \to 𝕍(•; N)$ is transported by these maps.
  \begin{center}
    \begin{tikzcd}
        & 𝕍(•;N) \arrow[rd, "f_I"] \arrow[d, "\mathrm{id}"] &                          \\
𝕍(•;I) \arrow[ru, "η_I"] \arrow[d, "ψ_0"'] \arrow[rr, "\mathrm{id}"', bend right] & 𝕍(•;N) \arrow[rd, "f_N"']                         & 𝕍(•;I) \arrow[d, "ψ_0"] \\
𝕍(•;N) \arrow[ru, "\mathrm{id}"'] \arrow[rr, "\mathrm{id}"', bend right]                   &                                                    & 𝕍(•;N)
    \end{tikzcd}
  \end{center}
  We conclude that $η_I = ψ_0$, but also that $f_N = \mathrm{id}$. As a consequence, $ψ_0$ is invertible and $f_I$ must be its inverse. We have shown that the produoidal category $𝕍$ must be normal.

  We will now show that this already determines all of the functor $f$.  We know that $η_{⊗}, η_{◁}, η_\map$ are isomorphisms because they are constructed from the unitors, associators, and the laxator $ψ_0$, which is an isomorphism in this case. This determines that $f_{⊗}, f_{◁}, f_\map$ must be their inverses. By construction, these satisfy all coherence morphisms.
\end{proof}

\begin{theorem}[From \Cref{th:freeNormalProduoidal}]
  \label{ax:th:freeNormalProduoidal}
  Normalization determines an adjunction between \produoidalCategories{} and normal produoidal categories,
  $$𝓝 \colon \Produo \rightleftharpoons \npDuo \colon 𝓤$$
  That is, $𝓝𝕍$ is the free normal produoidal category over $𝕍$.
\end{theorem}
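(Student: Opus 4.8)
The plan is to obtain the adjunction formally, as the reflection onto a reflective subcategory, rather than by reconstructing universal maps by hand. Two facts already in place do almost all the work: that $𝓝$ is an \emph{idempotent} monad on $\pDuo$ (\Cref{ax:th:normalizationIdempotent}), and that its algebras are \emph{exactly} the normal \produoidalCategories{} (\Cref{lemma:ax:exactlyOneAlgebra}). I would first recall the general fact that, for an idempotent monad $(T,η,μ)$, an object carries a (necessarily unique) algebra structure iff its unit component is invertible; the Eilenberg--Moore category $\mathcal{C}^{T}$ is then the full reflective subcategory on such objects, and the free-algebra functor is left adjoint to the fully faithful forgetful functor. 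I would instantiate this with $T = 𝓝$.

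Next I would identify $\pDuo^{𝓝}$ with $\npDuo$ over $\pDuo$. On objects this is \Cref{lemma:ax:exactlyOneAlgebra}: an $𝓝$-algebra is a \produoidalCategory{} whose unit $η_𝕍$ is invertible, i.e. a \normalProduoidalCategory{}, with structure map forced to be $η_𝕍^{-1}$. On morphisms, an algebra morphism $𝕎_1 \to 𝕎_2$ is a \produoidalFunctor{} $F$ satisfying $𝓝F ⨾ η_{𝕎_2}^{-1} = η_{𝕎_1}^{-1} ⨾ F$; but this equation is exactly the naturality square $η_{𝕎_1} ⨾ 𝓝F = F ⨾ η_{𝕎_2}$ for $η$ (using functoriality of $𝓝$, \Cref{ax:prop:normalizationFunctor}, and that $η$ is the monad unit of \Cref{ax:th:normalizationIdempotent}). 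Hence every \produoidalFunctor{} between normal categories is automatically an algebra morphism, so the hom-sets of $\pDuo^{𝓝}$ and $\npDuo$ coincide --- which is precisely the statement that $𝓤$ is fully faithful. Under this identification the Eilenberg--Moore forgetful functor becomes $𝓤$ and the free functor becomes $𝓝$.

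The monad's free/forgetful adjunction then reads $𝓝 \dashv 𝓤$, with unit the monad unit $η_𝕍 \colon 𝕍 \to 𝓤𝓝𝕍$ (built in \Cref{ax:th:normalizationIdempotent}) and counit an isomorphism, since the subcategory is reflective. Spelled out, this is the required universal property: for any \normalProduoidalCategory{} $𝕎$ and \produoidalFunctor{} $F \colon 𝕍 \to 𝓤𝕎$, the assignment $F \mapsto \widehat{F} = 𝓝F ⨾ η_𝕎^{-1}$ and its inverse $G \mapsto η_𝕍 ⨾ 𝓤G$ are mutually inverse bijections, natural in $𝕍$ and $𝕎$. Here $η_𝕎$ is invertible precisely because $𝕎$ is normal, and $η_𝕍 ⨾ \widehat{F} = F$ follows immediately from naturality of $η$; uniqueness of the extension and naturality of the bijection are consequences of the triangle identities and idempotency.

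The main obstacle is conceptual rather than computational. Everything reduces to checking that $\widehat{F}$ is a bona fide \produoidalFunctor{} --- respecting both \promonoidal{} structures and all four laxators $ψ_2, ψ_0, φ_2, φ_0$ --- but I would not verify this directly: $\widehat{F}$ is a composite of $𝓝F$ and $η_𝕎^{-1}$, which are produoidal by \Cref{ax:prop:normalizationFunctor} and \Cref{ax:th:normalizationIdempotent} respectively, so the property is inherited from composition. The one genuinely load-bearing step is the coincidence of algebra morphisms with plain \produoidalFunctors{} in the second paragraph; this is where the uniqueness clause of \Cref{lemma:ax:exactlyOneAlgebra} and the naturality of $η$ are essential, and getting the direction of that argument right is the part deserving care.
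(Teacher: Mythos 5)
Your proposal is correct and follows essentially the same route as the paper: both deduce the adjunction from the idempotency of $𝓝$ (\Cref{th:normalizationIdempotent}) together with the identification of the $𝓝$-algebras with normal produoidal categories (\Cref{lemma:ax:exactlyOneAlgebra}), so that the Eilenberg--Moore adjunction becomes the reflection of $\pDuo$ onto $\npDuo$. You merely spell out more explicitly the coincidence of algebra morphisms with plain produoidal functors and the resulting universal property, which the paper leaves implicit.
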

\begin{proof}
  We know that the algebras for the normalization monad are exactly the normal produoidal categories (\Cref{lemma:ax:exactlyOneAlgebra}). We also know that the normalization monad is idempotent (\Cref{th:normalizationIdempotent}). 
  This implies that the forgetful functor from its category of algebras is fully faithful, and thus, the algebra morphisms are exactly the produoidal functors. As a consequence, the canonical adjunction to the category of algebras of the monad is exactly an adjunction to the category of normal produoidal categories. 
\end{proof}

\subsection{Symmetric Normalization}

\begin{theorem}[From \Cref{th:symNormalizationProduoidal}]
  \label{ax:th:symNormalizationProduoidal}
  Let $𝕍_{⊗,I,◁,N}$ be a \symmetricProduoidal{} category. The profunctor
    $𝓝_{σ}𝕍(•; •) =
    𝕍(•; N ⊗ •)$
    forms a \promonad{}. Moreover, the Kleisli category of this promonad is a normal symmetric produoidal category with the following sequential and parallel splits and units.
    \begin{align*}
        𝓝_{σ}𝕍(A;B) & = 𝕍(A ; N ⊗ B); \\
        𝓝_{σ}𝕍(A;B ⊗_N C) & = 𝕍(A ; N ⊗ B ⊗ C); \\
        𝓝_{σ}𝕍(A;B ◁_N C) & = 𝕍(A ; (N ⊗ B) ◁ (N ⊗ C)); \\
        𝓝_{σ}𝕍(A;N) & = 𝕍(A ; N); \\
        𝓝_{σ}𝕍(A;I) & = 𝕍(A ; N).
    \end{align*}
\end{theorem}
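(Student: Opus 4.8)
The plan is to mirror the proof of the asymmetric normalization (\Cref{th:normalizationProduoidal}, carried out in \Cref{ax:th:normalizationProduoidal}), replacing the two-sided normalizing object $N\otimes\bullet\otimes N$ by the one-sided $N\otimes\bullet$ and inserting the symmetry $\sigma$ of \Cref{def:symmetricProduoidal} wherever the asymmetric argument relied on having an $N$ available on both sides. First I would check that $\mathcal{N}_\sigma\mathbb{V}(\bullet;\bullet)=\mathbb{V}(\bullet;N\otimes\bullet)$ is a promonad. Its unit $\mathbb{V}(A;B)\to\mathbb{V}(A;N\otimes B)$ is the left unitor of $\mathbb{V}$ followed by the laxator $\varphi_0\colon I\to N$; its multiplication is the Yoneda reduction $\int^{B}\mathbb{V}(A;N\otimes B)\times\mathbb{V}(B;N\otimes C)\cong\mathbb{V}(A;N\otimes N\otimes C)$ followed by $\varphi_2\colon N\otimes N\to N$. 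As in the asymmetric case, both maps are assembled only from laxators, unitors and Yoneda isomorphisms of $\mathbb{V}$, so the promonad associativity and unitality equations hold by coherence of $\mathbb{V}$.

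Next I would equip the Kleisli category with its two promonoidal structures by exhibiting the unitors and associators as coend-calculus isomorphisms. The one genuinely new ingredient compared with \Cref{ax:th:normalizationProduoidal} is the use of $\sigma$ to relocate the single left-hand $N$ so that the Kleisli (promonad) Yoneda reduction applies. For instance, the right unitor of the parallel structure is computed by
\begin{align*}
\int^{Z}\mathcal{N}_\sigma\mathbb{V}(A;B\otimes_N Z)\times\mathcal{N}_\sigma\mathbb{V}(Z;I_N)
&= \int^{Z}\mathbb{V}(A;N\otimes B\otimes Z)\times\mathbb{V}(Z;N) \\
&\cong \int^{Z,P}\mathbb{V}(A;B\otimes P)\times\mathcal{N}_\sigma\mathbb{V}(P;Z)\times\mathcal{N}_\sigma\mathbb{V}(Z;I_N) \\
&\cong \int^{P}\mathbb{V}(A;B\otimes P)\times\mathbb{V}(P;N) \\
&\cong \mathbb{V}(A;B\otimes N)\cong\mathbb{V}(A;N\otimes B)=\mathcal{N}_\sigma\mathbb{V}(A;B),
\end{align*}
where the first isomorphism applies $\sigma$ to pass $N$ past $B$ and then regroups $N\otimes Z$ by associativity of $\mathbb{V}$, the second is the Kleisli Yoneda reduction, the third is an ordinary Yoneda reduction, and the final step is $\sigma$ again. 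The left unitor and the associators of both structures follow the same pattern, both sides of the associator reducing to $\mathbb{V}(A;N\otimes B\otimes C\otimes D)$. Since every isomorphism in these chains is built from coherence data of $\mathbb{V}$ and from $\sigma$, the promonoidal pentagon and triangle equations hold automatically.

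With the promonoidal structures in place, I would define the Kleisli symmetry $\sigma_N\colon\mathcal{N}_\sigma\mathbb{V}(A;B\otimes_N C)\cong\mathcal{N}_\sigma\mathbb{V}(A;C\otimes_N B)$ and the four laxators $\psi_2,\psi_0,\varphi_2,\varphi_0$ of $\mathcal{N}_\sigma\mathbb{V}$ directly from $\sigma$ and from the laxators of $\mathbb{V}$, once more inserting $\sigma$ to keep the normalizing $N$ on the left. The decisive point is that $\mathcal{N}_\sigma\mathbb{V}(A;I_N)=\mathbb{V}(A;N)=\mathcal{N}_\sigma\mathbb{V}(A;N_N)$ holds as an equality of sets, so the induced $\varphi_0$ of the Kleisli category is literally the identity and hence an isomorphism; this is exactly the normality condition, so the Kleisli category is a \emph{normal} symmetric produoidal category.

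The main obstacle I anticipate is coherential bookkeeping: one must check that the symmetry maps are inserted consistently throughout the unitor, associator and laxator derivations so that the Kleisli structure really is a \emph{symmetric} produoidal category, i.e. that the hexagon for $\sigma_N$ and the produoidal laxator-coherence diagrams commute. Because every constituent morphism is an image of a coherence isomorphism, laxator or symmetry of $\mathbb{V}$, this verification reduces to the corresponding coherence theorems for $\mathbb{V}$ together with the naturality and hexagon equations for $\sigma$; the real work is confirming that the one-sided placement of $N$ together with $\sigma$ never produces an ambiguity, exactly as the two-sided placement guaranteed in the asymmetric proof.
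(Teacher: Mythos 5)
Your proposal is correct and follows essentially the same route as the paper: the paper's proof likewise states that the promonad structure, unitors, associators and laxators are obtained exactly as in the asymmetric case (\Cref{ax:th:normalizationProduoidal}), using the symmetry of $𝕍$ where the asymmetric argument used the right-hand copy of $N$, and then constructs the Kleisli symmetry by a coend derivation from associativity and $\sigma$, with coherence following because only coherence data of $𝕍$ is used. Your worked example of the parallel right unitor and your observation that $\varphi_0$ is literally an identity of sets are consistent with, and slightly more explicit than, the paper's argument.
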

\begin{proof}
  The unit and multiplication of the \promonad{} are given in essentially the same way as in the proof of \Cref{ax:th:normalizationProduoidal}.
  Likewise the associators, unitors and laxators of $𝓝_{σ}𝕍$ are given in essentially the same way, though one must use the fact that $𝕍$ is symmetric. We need additionally a symmetry natural isomorphism for $𝓝_{σ}𝕍$. Its components are defined by,
  \begin{align*}
    & 𝓝_{σ}𝕍(A; B ⊗ C) &\quad
    = &\quad\mbox{(by definition)}\phantom{∫} \\
    & 𝕍(A; N ⊗ B ⊗ C) &\quad
    ≅ &\quad\mbox{(by associativity)} \phantom{∫} \\
    & ∫^{X ∈ 𝕍} 𝕍(A; N ⊗ X) × 𝕍(X; B ⊗ C) &\quad
    ≅ & \quad\mbox{(by symmetry of 𝕍)} \phantom{∫} \\
    & ∫^{X ∈ 𝕍} 𝕍(A; N ⊗ X) × 𝕍(X; C ⊗ B) &\quad
    ≅ & \quad\mbox{(by associativity)} \phantom{∫} \\
    & 𝕍(A; N ⊗ C ⊗ B) &\quad
    = & \quad\mbox{(by definition)} \phantom{∫} \\
    & 𝓝_{σ}𝕍(A;C ⊗ B). \phantom{\int}
  \end{align*}

  These satisfy hexagon and symmetry identities because these are satisfied by $𝕍$, and we only use symmetries and coherences of $𝕍$. Thus we have a normal symmetric produoidal category $𝓝_{σ}𝕍$.
\end{proof}

\begin{definition}[Symmetric produoidal functor]
  A \emph{symmetric produoidal functor} is a \produoidalFunctor{} $F \colon 𝕍 \to 𝕎$ that moreover preserves the symmetry, in that $F_\otimes ⨾ σ_{𝕍} = σ_{𝕎} ⨾ F_\otimes$.
  We denote by $\symProduo$ the category of \symmetricProduoidal{} categories and symmetric produoidal functors.
\end{definition}

\begin{proposition} \label{ax:prop:symNormalizationFunctor}
  Symmetric normalization extends to a endofunctor of \symmetricProduoidal{} categories $𝓝_σ : \symProduo \to \symProduo$.
\end{proposition}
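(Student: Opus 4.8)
The plan is to mirror the proof of \Cref{ax:prop:normalizationFunctor} for the ordinary normalization functor, adding a single extra verification: that the resulting functor preserves the symmetry. The object part is already settled by \Cref{ax:th:symNormalizationProduoidal}, which sends each \symmetricProduoidal{} category $𝕍_{⊗,I,◁,N}$ to the normal \symmetricProduoidal{} category $𝓝_σ𝕍$. What remains is to define the action on morphisms and to check functoriality together with symmetry preservation.

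Given a symmetric produoidal functor $F \colon 𝕍 \to 𝕎$, I would define $𝓝_σF \colon 𝓝_σ𝕍 \to 𝓝_σ𝕎$ to have underlying functor $F$ on objects, acting on morphisms by the coend manipulation
$$\begin{aligned}
& 𝓝_σ𝕍(A; B) = 𝕍(A; N ⊗ B) &\quad = & \quad\mbox{(by definition)} \\
& ∫^{X \in 𝕍} 𝕍(A; X ⊗ B) × 𝕍(X; N) &\quad \to & \quad\mbox{(induced by $F_{⊗}, F_{N}$)} \\
& ∫^{X \in 𝕍} 𝕎(FA; FX ⊘ FB) × 𝕎(FX; M) &\quad \to & \quad\mbox{(coend inclusion, $P = FX$)} \\
& ∫^{P \in 𝕎} 𝕎(FA; P ⊘ FB) × 𝕎(P; M) &\quad = & \quad\mbox{(by definition)} \\
& 𝕎(FA; M ⊘ FB) = 𝓝_σ𝕎(FA; FB),
\end{aligned}$$
which is the non-symmetric construction carried out with a single copy of the unit on the left. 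The split components $(𝓝_σF)_{⊗}$ and $(𝓝_σF)_{◁}$ are assembled the same way from $F_{⊗}$ and $F_{◁}$, while $(𝓝_σF)_{N} = (𝓝_σF)_{I} = F_{N}$, since $𝓝_σ𝕍(A;I) = 𝓝_σ𝕍(A;N) = 𝕍(A;N)$. Because every component is built solely from the components of $F$ and \YonedaIsomorphisms{}, preservation of the associators, unitors and laxators of $𝓝_σ𝕍$ follows exactly as in the non-symmetric case.

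The one genuinely new obligation is that $𝓝_σF$ is a \emph{symmetric} produoidal functor, i.e. that $(𝓝_σF)_{⊗} ⨾ σ_{𝓝_σ𝕎} = σ_{𝓝_σ𝕍} ⨾ (𝓝_σF)_{⊗}$. Here I would unfold $σ_{𝓝_σ𝕍}$ from the proof of \Cref{ax:th:symNormalizationProduoidal}: it is the composite of an associativity isomorphism, the base symmetry $σ_{𝕍} \colon 𝕍(X; B ⊗ C) ≅ 𝕍(X; C ⊗ B)$, and a second associativity isomorphism. Since $F$ is symmetric we have $F_{⊗} ⨾ σ_{𝕎} = σ_{𝕍} ⨾ F_{⊗}$, and since $F$ is a produoidal functor it commutes with the associativity isomorphisms; pasting these compatibilities makes the required square commute. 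I expect this to be the main obstacle, though it is essentially bookkeeping: the real content is that $σ_{𝓝_σ(-)}$ is natural in the functor argument precisely because it is manufactured from $σ_{(-)}$ and coherence data that $F$ already respects.

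Finally, functoriality is inherited from the non-symmetric argument. We have $𝓝_σ\idFun{_𝕍} = \idFun{_{𝓝_σ𝕍}}$ because every component of the left-hand side is an identity on $𝓝_σ𝕍$, and for a further symmetric produoidal functor $G \colon 𝕌 \to 𝕍$ the equality $𝓝_σ(G ⨾ F) = 𝓝_σG ⨾ 𝓝_σF$ follows from the naturality of the components of $F$ and $G$, exactly as in \Cref{ax:prop:normalizationFunctor}. By the symmetry-preservation check the assignment lands in $\symProduo$, so $𝓝_σ$ is a well-defined endofunctor of \symmetricProduoidal{} categories.
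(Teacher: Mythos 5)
Your proof is correct and follows essentially the same route as the paper: reuse the construction of $𝓝F$ from the non-symmetric case (with one unit copy instead of two), and observe that symmetry is preserved because $σ_{𝓝_σ𝕍}$ is assembled from associators and the base symmetry $σ_𝕍$, both of which a symmetric produoidal functor already respects. The paper's own proof is just a terser statement of exactly this argument, so there is nothing to add.
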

\begin{proof}
  The construction is essentially the same as in \Cref{ax:prop:normalizationFunctor}.
  The only thing left to check is that $𝓝_σF$ there constructed preserves symmetries whenever $F$ does (see \Cref{ax:th:symNormalizationProduoidal}).
  This is because the symmetry of $𝓝_σ𝕍$ is constructed out of associativity and symmetries of $𝕍$, which $𝓝_σF_\otimes$, constructed itself out of $F_\otimes$, associativity, and symmetries of $𝕍$, must preserve. 
\end{proof}

\begin{theorem}
  \label{ax:th:symm:normalizationIdempotent}
  The functor $𝓝_σ : \symProduo \to \symProduo$ from \Cref{ax:prop:normalizationFunctor} is an idempotent monad.
\end{theorem}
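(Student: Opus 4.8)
The plan is to mirror the proof of \Cref{ax:th:normalizationIdempotent} verbatim, upgrading each piece so that it lives in $\symProduo$ rather than $\Produo$. Concretely, I must exhibit a unit $\eta^σ$ and a multiplication $\mu^σ$ for the endofunctor $𝓝_σ$ of \Cref{ax:prop:symNormalizationFunctor}, check the three monad laws, and---this being the genuinely new content relative to the non-symmetric case---verify that every functor in sight is a \emph{symmetric} produoidal functor, i.e. commutes with the symmetry $σ$ of $𝓝_σ𝕍$ constructed in \Cref{ax:th:symNormalizationProduoidal}. Idempotency will then fall out exactly as before, from the fact that $𝓝_σ𝕍$ is normal so that the laxator $φ_0$ is an identity on it.

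First I would define the unit. Its component $\eta^σ_{𝕍} \colon 𝕍 \to 𝓝_σ𝕍$ has the identity-on-objects underlying functor induced by the \promonad{} of \Cref{th:symNormalizationProduoidal}~\cite{roman22}, with comparison natural transformations $\eta^σ_{⊗}, \eta^σ_{I}, \eta^σ_{◁}, \eta^σ_{N}$ built solely from the unitors $λ, ρ$ and the laxator $φ_0$ of $𝕍$, exactly as in \Cref{ax:th:normalizationIdempotent} but now inserting a single $N$ rather than flanking copies of $N$. Being assembled from laxators and coherence maps, these automatically preserve all coherence data and laxators. The one extra check is that $\eta^σ_{⊗}$ intertwines the symmetries; this holds because, by \Cref{ax:th:symNormalizationProduoidal}, $σ_{𝓝_σ𝕍}$ is constructed out of $σ_{𝕍}$ and the associators of $𝕍$, the very same ingredients from which $\eta^σ_{⊗}$ is built, so the required square commutes by coherence. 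This is precisely the argument already used in \Cref{ax:prop:symNormalizationFunctor} for $𝓝_σF$.

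Next I would define the multiplication $\mu^σ_{𝕍} \colon 𝓝_σ𝓝_σ𝕍 \cong 𝓝_σ𝕍$ as an \emph{isomorphism}: on objects it is the identity, and on morphisms it is the left unitor $𝓝_σ𝓝_σ𝕍(A;B) = 𝓝_σ𝕍(A; N ⊗_N B) \cong 𝓝_σ𝕍(A;B)$, with components $\mu^σ_{⊗}, \mu^σ_{◁}$ and $\mu^σ_{N} = \mu^σ_{I}$ likewise given by unitor isomorphisms. The same coend computations as in \Cref{ax:th:symNormalizationProduoidal} show these are well defined, now invoking the symmetry of $𝕍$ wherever the non-symmetric computation of \Cref{ax:th:normalizationProduoidal} used its second copy of $N$, and symmetry preservation for $\mu^σ$ is automatic for the same reason as above. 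The three monad laws are then verified exactly as in \Cref{ax:th:normalizationIdempotent}: both unit laws reduce to a unitor composed with its inverse together with the normality equation $φ_0 = \mathrm{id}$ on $𝓝_σ𝕍$, and associativity, $\mu^σ_{𝓝_σ𝕍} ⨾ \mu^σ_{𝕍} = 𝓝_σ\mu^σ_{𝕍} ⨾ \mu^σ_{𝕍}$, amounts to applying left unitors twice.

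I expect the main obstacle to be purely bookkeeping: confirming that the single-$N$ convention of symmetric normalization does not break any of the unitor identities that, in the two-sided version, were balanced on the left and on the right. Because $𝓝_σ𝕍$ is normal and symmetric, every place where the proof of \Cref{ax:th:normalizationIdempotent} relied on the symmetric placement of the two $N$'s can instead be discharged by one application of $σ$; once this dictionary is fixed, the remaining diagrams commute by produoidal coherence, and idempotency follows because $\mu^σ$ is manifestly invertible. Alternatively one could argue abstractly, as in \Cref{lemma:ax:exactlyOneAlgebra} and \Cref{ax:th:freeNormalProduoidal}, by showing that a symmetric produoidal category carries a unique $𝓝_σ$-algebra structure precisely when it is normal; I would keep the direct construction above as the primary route.
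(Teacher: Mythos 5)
Your proposal is correct and matches the paper's own proof, which simply says the construction is the same as in the non-symmetric case (\Cref{ax:th:normalizationIdempotent}) with the single additional check that the unit and multiplication preserve symmetries, argued exactly as you do: the comparison maps and the symmetry of $𝓝_σ𝕍$ are both assembled from the associators, unitors and laxators of $𝕍$, so they commute by coherence. You spell out more detail than the paper does, but the route is identical.
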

\begin{proof}
  The construction is again essentially the same as in
  \Cref{ax:th:normalizationIdempotent}. 
  It is left to check that the unit and multiplication constructed in this way preserve the symmetries. Indeed, $η_σ \colon 𝕍 \to 𝓝_σ𝕍$ is \symmetricProduoidal{} because $η_\otimes$ is constructed out of natural associators and laxators that commute with the symmetry.
\end{proof}

\begin{lemma}
  \label{lemma:sym:exactlyOneAlgebra}
  A \symmetricProduoidal{} category $𝕍$ has exactly one algebra structure for the symmetric normalization monad when it is normal, and none otherwise.
\end{lemma}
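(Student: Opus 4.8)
The plan is to mirror the proof of \Cref{lemma:ax:exactlyOneAlgebra} almost verbatim, substituting the symmetric normalization monad $𝓝_σ$ of \Cref{ax:th:symNormalizationProduoidal} for $𝓝$ and invoking its idempotency (\Cref{ax:th:symm:normalizationIdempotent}). First I would suppose given an algebra $(f_\map, f_{⊗}, f_I, f_{◁}, f_N) \colon 𝓝_σ𝕍 \to 𝕍$ and write out the unit law $η_σ ⨾ f = \mathrm{id}$ and the associativity law as commuting diagrams of \symmetricProduoidal{} functors, exactly as in the non-symmetric case. As for an idempotent monad, it is the unit triangle that will carry the weight of the argument.

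The key step is to transport the laxator $ψ_0 \colon 𝕍(•;I) \to 𝕍(•;N)$ through that triangle. Because symmetric normalization identifies both units, $𝓝_σ𝕍(A;I) = 𝓝_σ𝕍(A;N) = 𝕍(A;N)$, and makes $φ_0$ into an identity on $𝓝_σ𝕍$, the same diagram chase that yielded $η_I = ψ_0$ and $f_N = \mathrm{id}$ in \Cref{lemma:ax:exactlyOneAlgebra} goes through unchanged in the symmetric setting. This forces $ψ_0$ to be invertible, so that $𝕍$ is necessarily \emph{normal} --- which is the ``none otherwise'' half of the statement --- and pins down $f_I = ψ_0^{-1}$. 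Conversely, when $𝕍$ is normal the remaining unit components $η_{⊗}, η_{◁}, η_\map$ are isomorphisms, since each is assembled from unitors, associators, symmetries, and the now-invertible $ψ_0$; hence $f_{⊗}, f_{◁}, f_\map$ are forced to equal their respective inverses, and the whole algebra $f = η_σ^{-1}$ is uniquely determined.

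The only ingredient genuinely new relative to \Cref{lemma:ax:exactlyOneAlgebra} is that here $f$ must be a \emph{symmetric} produoidal functor, i.e. it must additionally preserve the symmetry $σ$. I expect this to need a one-line remark rather than real work: since $f$ is forced to equal the inverse of $η_σ$, which \Cref{ax:th:symm:normalizationIdempotent} already exhibits as a symmetric produoidal functor (invertible precisely when $𝕍$ is normal), and since the inverse of a symmetry-preserving isomorphism again preserves the symmetry, the extra coherence holds automatically. The main --- and essentially the only --- obstacle is therefore bookkeeping: confirming that every isomorphism used in the symmetric normalization commutes with $σ$, which was already checked in the course of proving \Cref{ax:th:symNormalizationProduoidal,ax:th:symm:normalizationIdempotent}.
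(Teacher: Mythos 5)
Your proposal is correct and takes essentially the same route as the paper, whose own proof of this lemma is simply the remark that the argument of \Cref{lemma:ax:exactlyOneAlgebra} carries over verbatim to the symmetric setting using \Cref{ax:th:symNormalizationProduoidal,ax:th:symm:normalizationIdempotent}. Your additional one-line check that the forced algebra map preserves the symmetry (being the inverse of the symmetry-preserving unit $η_σ$) is exactly the right amount of extra care, and if anything supplies more detail than the paper does.
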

\begin{proof}
  The proof essentially follows the same reasoning as \Cref{lemma:ax:exactlyOneAlgebra}, replacing the construction with the symmetric version and the previous lemmas.
\end{proof}

\begin{theorem}[From \Cref{th:sym:freeNormalProduoidal}]
  \label{ax:sym:th:freeNormalProduoidal} 
  Symmetric normalization determines an adjunction between \symmetricProduoidal{} categories and normal symmetric \produoidal{} categories,
  $$𝓝_σ \colon \symProduo \rightleftharpoons \nSymProduo \colon 𝓤$$
  Where we define the category of normal symmetric \produoidal{} categories, $\nSymProduo$, to use as functors the \symmetricProduoidal{} functors, adquiring a full forgetful functor $𝓤$.
  
  That is, $𝓝_σ𝕍$ is the free symmetric normal produoidal category over the \symmetricProduoidal{} category $𝕍$.
\end{theorem}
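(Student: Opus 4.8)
The plan is to follow the proof of \Cref{ax:th:freeNormalProduoidal} essentially verbatim, replacing each ingredient by its symmetric counterpart, all of which have already been established: symmetric normalization is an endofunctor (\Cref{ax:prop:symNormalizationFunctor}), it carries the structure of an idempotent monad on $\symProduo$ (\Cref{ax:th:symm:normalizationIdempotent}), and a symmetric produoidal category admits an algebra structure for this monad exactly when it is normal, in which case that structure is unique (\Cref{lemma:sym:exactlyOneAlgebra}).

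First I would invoke the general theory of idempotent monads. For any idempotent monad $T$ on a category $\mathcal{C}$, the forgetful functor $U^T \colon \mathcal{C}^T \to \mathcal{C}$ out of the Eilenberg--Moore category is fully faithful; equivalently, $\mathcal{C}^T$ is isomorphic to the full reflective subcategory of $T$-local objects, those $A$ for which $\eta_A$ is invertible. Applying this to $T = 𝓝_σ$, the objects of $(\symProduo)^{𝓝_σ}$ are precisely the normal symmetric produoidal categories by \Cref{lemma:sym:exactlyOneAlgebra}, and full faithfulness of $U^{𝓝_σ}$ says that the algebra homomorphisms between two such algebras coincide with the symmetric produoidal functors between their underlying categories.

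This identification is exactly what matches the Eilenberg--Moore category with the category $\nSymProduo$ as defined in the statement: $\nSymProduo$ has normal symmetric produoidal categories as objects, symmetric produoidal functors as morphisms, and a \emph{full} forgetful functor $𝓤$ into $\symProduo$. Transporting the canonical free--forgetful adjunction $F^{𝓝_σ} \dashv U^{𝓝_σ}$ across this identification yields the desired adjunction $𝓝_σ \dashv 𝓤$, whose unit at $𝕍$ is the monad unit $η_σ$ and whose universal property expresses $𝓝_σ𝕍$ as the free normal symmetric produoidal category over $𝕍$.

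The main obstacle is the verification underlying the second step, namely that algebra homomorphisms are exactly symmetric produoidal functors, equivalently that $𝓤$ is full. This is precisely where idempotency is essential: for a general monad an algebra map carries strictly more data than a morphism of underlying objects, and only idempotency collapses this extra structure, forcing $\eta_σ$ to be an isomorphism on algebras and making the comparison to $\nSymProduo$ an honest identification rather than a mere functor. Everything else is a mechanical transcription of the non-symmetric argument, with the single additional point --- already discharged in \Cref{ax:prop:symNormalizationFunctor} and \Cref{ax:th:symm:normalizationIdempotent} --- that the unit, multiplication, and functorial action of $𝓝_σ$ all preserve the symmetry, so that the whole construction stays inside $\symProduo$.
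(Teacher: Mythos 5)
Your proposal is correct and follows essentially the same route as the paper: both reduce the statement to the fact that $𝓝_σ$ is an idempotent monad on $\symProduo$ (\Cref{ax:th:symm:normalizationIdempotent}) whose algebras are exactly the normal symmetric produoidal categories with unique structure (\Cref{lemma:sym:exactlyOneAlgebra}), so that the Eilenberg--Moore forgetful functor is fully faithful and the canonical free--forgetful adjunction identifies with $𝓝_σ \dashv 𝓤$. The paper's own proof is a one-line reference to the non-symmetric argument with the symmetric lemmas substituted, which is precisely what you have spelled out.
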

\begin{proof}
  The proof essentially follows
  \Cref{ax:th:freeNormalProduoidal}, now using the previous lemmas and \Cref{lemma:sym:exactlyOneAlgebra}.
\end{proof}

\subsection{Normalization of duoidals and normalization of produoidals} \label{ax:sec:garnernorm}
We conjecture that the normalization of a \produoidalCategory{} could still be seen to arise from the normalization procedure for duoidal categories outlined by Garner and López Franco \cite{garner16}.
Every \produoidalCategory{} $\mathbb{V}$ induces a closed duoidal structure on its presheaf category $\hat{\mathbb{V}}:=[\mathbb{V}^\text{op},\mathbf{Set}]$: indeed, by a result of Day, any promonoidal structure induces a closed monoidal structure on the presheaf category \cite{day,day_thesis}; furthermore, one can confirm that the two closed monoidal structures on $\hat{\mathbb{V}}$ interact in such a way as to make the category duoidal (\Cref{th:produoidalInduceDuoidal}).

Normalizing the duoidal $\hat{\mathbb{V}}$ yields the category of algebras $\text{EM}(𝓝\mathbb{V})$ for the \promonad{} $𝓝\mathbb{V}$ -- or, equivalently, the category of algebras for the cocontinuous monad induced by $𝓝\mathbb{V}$ on $\hat{\mathbb{V}}$.
$\text{EM}(𝓝\mathbb{V})$ is now normal duoidal, and furthermore the closure of the tensors on $\hat{\mathbb{V}}$ carries across to make $\text{EM}(𝓝\mathbb{V})$ also closed.
Now, one notes that we have the following isomorphism $\text{EM}(𝓝\mathbb{V})\cong[𝓝\mathbb{V}^\text{op},\mathbf{Set}]$, that is, the category of algebras is the presheaf category of the Kleisli object $𝓝\mathbb{V}$ of the \promonad{} in $\mathbf{Prof}$.
Therefore, the closed monoidal structures of $\text{EM}(𝓝\mathbb{V})$ must correspond to promonoidal structures of $𝓝\mathbb{V}$ and these interact so as to make $𝓝\mathbb{V}$ produoidal.

\clearpage

\section{Monoidal Context}
\label{sec:ax:monoidalContext}

\begin{figure}[ht]
  \centering

\tikzset{every picture/.style={line width=0.75pt}} %

\begin{tikzpicture}[x=0.75pt,y=0.75pt,yscale=-1,xscale=1]
\draw [color={rgb, 255:red, 0; green, 0; blue, 0 }  ,draw opacity=1 ]   (49.99,110) -- (49.99,120) ;
\draw  [color={rgb, 255:red, 191; green, 97; blue, 106 }  ,draw opacity=1 ][fill={rgb, 255:red, 191; green, 97; blue, 106 }  ,fill opacity=0.2 ][dash pattern={on 4.5pt off 4.5pt}] (40,70) -- (60,70) -- (60,85) -- (40,85) -- cycle ;
\draw  [color={rgb, 255:red, 0; green, 0; blue, 0 }  ,draw opacity=1 ][fill={rgb, 255:red, 255; green, 255; blue, 255 }  ,fill opacity=1 ] (35,45) -- (65,45) -- (65,60) -- (35,60) -- cycle ;
\draw [color={rgb, 255:red, 0; green, 0; blue, 0 }  ,draw opacity=1 ]   (49.99,35) -- (49.99,45) ;
\draw [color={rgb, 255:red, 0; green, 0; blue, 0 }  ,draw opacity=1 ]   (30,75) -- (30,80) ;
\draw  [color={rgb, 255:red, 0; green, 0; blue, 0 }  ,draw opacity=1 ][fill={rgb, 255:red, 255; green, 255; blue, 255 }  ,fill opacity=1 ] (35,95) -- (65,95) -- (65,110) -- (35,110) -- cycle ;
\draw [color={rgb, 255:red, 0; green, 0; blue, 0 }  ,draw opacity=1 ]   (49.99,60) -- (50,70) ;
\draw [color={rgb, 255:red, 0; green, 0; blue, 0 }  ,draw opacity=1 ]   (50,85) -- (50,95) ;
\draw [color={rgb, 255:red, 0; green, 0; blue, 0 }  ,draw opacity=1 ]   (39.99,60) .. controls (39.59,69.2) and (30.2,65.8) .. (30,75) ;
\draw [color={rgb, 255:red, 0; green, 0; blue, 0 }  ,draw opacity=1 ]   (60,60) .. controls (59.6,69.2) and (70.2,65.8) .. (70,75) ;
\draw [color={rgb, 255:red, 0; green, 0; blue, 0 }  ,draw opacity=1 ]   (60,95) .. controls (59.6,85.8) and (70.2,89.2) .. (70,80) ;
\draw [color={rgb, 255:red, 0; green, 0; blue, 0 }  ,draw opacity=1 ]   (39.99,95) .. controls (39.59,85.8) and (30.2,89.2) .. (30,80) ;
\draw [color={rgb, 255:red, 0; green, 0; blue, 0 }  ,draw opacity=1 ]   (70,75) -- (70,80) ;
\draw [color={rgb, 255:red, 0; green, 0; blue, 0 }  ,draw opacity=1 ]   (294.99,110) -- (294.99,120) ;
\draw  [color={rgb, 255:red, 0; green, 0; blue, 0 }  ,draw opacity=1 ][fill={rgb, 255:red, 255; green, 255; blue, 255 }  ,fill opacity=1 ] (279.99,20) -- (309.99,20) -- (309.99,35) -- (279.99,35) -- cycle ;
\draw [color={rgb, 255:red, 0; green, 0; blue, 0 }  ,draw opacity=1 ]   (294.99,10) -- (294.99,20) ;
\draw [color={rgb, 255:red, 0; green, 0; blue, 0 }  ,draw opacity=1 ]   (270,50) -- (270,105) ;
\draw  [color={rgb, 255:red, 0; green, 0; blue, 0 }  ,draw opacity=1 ][fill={rgb, 255:red, 255; green, 255; blue, 255 }  ,fill opacity=1 ] (280,95) -- (310,95) -- (310,110) -- (280,110) -- cycle ;
\draw [color={rgb, 255:red, 0; green, 0; blue, 0 }  ,draw opacity=1 ]   (294.99,35) -- (295,45) ;
\draw [color={rgb, 255:red, 0; green, 0; blue, 0 }  ,draw opacity=1 ]   (284.99,35) .. controls (284.59,44.2) and (270.2,40.8) .. (270,50) ;
\draw [color={rgb, 255:red, 0; green, 0; blue, 0 }  ,draw opacity=1 ]   (304.99,35) .. controls (304.59,44.2) and (320.2,40.8) .. (320,50) ;
\draw  [color={rgb, 255:red, 0; green, 0; blue, 0 }  ,draw opacity=1 ][fill={rgb, 255:red, 255; green, 255; blue, 255 }  ,fill opacity=1 ] (280,44.5) -- (310,44.5) -- (310,59.5) -- (280,59.5) -- cycle ;
\draw  [color={rgb, 255:red, 191; green, 97; blue, 106 }  ,draw opacity=1 ][fill={rgb, 255:red, 191; green, 97; blue, 106 }  ,fill opacity=0.2 ][dash pattern={on 4.5pt off 4.5pt}] (284.99,70) -- (304.99,70) -- (304.99,85) -- (284.99,85) -- cycle ;
\draw [color={rgb, 255:red, 0; green, 0; blue, 0 }  ,draw opacity=1 ]   (295,60) -- (295,70) ;
\draw [color={rgb, 255:red, 0; green, 0; blue, 0 }  ,draw opacity=1 ]   (315,75) -- (315,80) ;
\draw [color={rgb, 255:red, 0; green, 0; blue, 0 }  ,draw opacity=1 ]   (294.99,135) -- (294.99,145) ;
\draw  [color={rgb, 255:red, 0; green, 0; blue, 0 }  ,draw opacity=1 ][fill={rgb, 255:red, 255; green, 255; blue, 255 }  ,fill opacity=1 ] (280,120) -- (310,120) -- (310,135) -- (280,135) -- cycle ;
\draw [color={rgb, 255:red, 0; green, 0; blue, 0 }  ,draw opacity=1 ]   (305,120) .. controls (304.6,110.8) and (320.2,114.2) .. (320,105) ;
\draw [color={rgb, 255:red, 0; green, 0; blue, 0 }  ,draw opacity=1 ]   (285,120) .. controls (284.6,110.8) and (270.2,114.2) .. (270,105) ;
\draw [color={rgb, 255:red, 0; green, 0; blue, 0 }  ,draw opacity=1 ]   (320,50) -- (320,105) ;
\draw [color={rgb, 255:red, 0; green, 0; blue, 0 }  ,draw opacity=1 ]   (275,75) -- (275,80) ;
\draw [color={rgb, 255:red, 0; green, 0; blue, 0 }  ,draw opacity=1 ]   (294.99,85) -- (294.99,95) ;
\draw [color={rgb, 255:red, 0; green, 0; blue, 0 }  ,draw opacity=1 ]   (305,95) .. controls (304.6,85.8) and (315.2,89.2) .. (315,80) ;
\draw [color={rgb, 255:red, 0; green, 0; blue, 0 }  ,draw opacity=1 ]   (285,95) .. controls (284.6,85.8) and (275.2,89.2) .. (275,80) ;
\draw [color={rgb, 255:red, 0; green, 0; blue, 0 }  ,draw opacity=1 ]   (305,60) .. controls (304.6,69.2) and (315.2,65.8) .. (315,75) ;
\draw [color={rgb, 255:red, 0; green, 0; blue, 0 }  ,draw opacity=1 ]   (285,60) .. controls (284.6,69.2) and (275.2,65.8) .. (275,75) ;
\draw  [draw opacity=0] (120,65) -- (140,65) -- (140,90) -- (120,90) -- cycle ;
\draw  [draw opacity=0] (320,65) -- (340,65) -- (340,90) -- (320,90) -- cycle ;
\draw [color={rgb, 255:red, 0; green, 0; blue, 0 }  ,draw opacity=1 ]   (169.99,110) -- (169.99,120) ;
\draw  [color={rgb, 255:red, 191; green, 97; blue, 106 }  ,draw opacity=1 ][fill={rgb, 255:red, 191; green, 97; blue, 106 }  ,fill opacity=0.2 ][dash pattern={on 4.5pt off 4.5pt}] (160,70) -- (180,70) -- (180,85) -- (160,85) -- cycle ;
\draw  [color={rgb, 255:red, 0; green, 0; blue, 0 }  ,draw opacity=1 ][fill={rgb, 255:red, 255; green, 255; blue, 255 }  ,fill opacity=1 ] (154.99,45) -- (184.99,45) -- (184.99,60) -- (154.99,60) -- cycle ;
\draw [color={rgb, 255:red, 0; green, 0; blue, 0 }  ,draw opacity=1 ]   (169.99,35) -- (169.99,45) ;
\draw  [color={rgb, 255:red, 0; green, 0; blue, 0 }  ,draw opacity=1 ][fill={rgb, 255:red, 255; green, 255; blue, 255 }  ,fill opacity=1 ] (154.99,95) -- (184.99,95) -- (184.99,110) -- (154.99,110) -- cycle ;
\draw [color={rgb, 255:red, 0; green, 0; blue, 0 }  ,draw opacity=1 ]   (169.99,60) -- (170,70) ;
\draw [color={rgb, 255:red, 0; green, 0; blue, 0 }  ,draw opacity=1 ]   (170,85) -- (170,95) ;
\draw [color={rgb, 255:red, 0; green, 0; blue, 0 }  ,draw opacity=1 ]   (159.99,60) .. controls (159.59,69.2) and (150.19,65.8) .. (149.99,75) ;
\draw [color={rgb, 255:red, 0; green, 0; blue, 0 }  ,draw opacity=1 ]   (179.99,60) .. controls (179.59,69.2) and (190.2,65.8) .. (190,75) ;
\draw [color={rgb, 255:red, 0; green, 0; blue, 0 }  ,draw opacity=1 ]   (179.99,95) .. controls (179.59,85.8) and (190.2,89.2) .. (190,80) ;
\draw [color={rgb, 255:red, 0; green, 0; blue, 0 }  ,draw opacity=1 ]   (159.99,95) .. controls (159.59,85.8) and (150.19,89.2) .. (149.99,80) ;
\draw [color={rgb, 255:red, 0; green, 0; blue, 0 }  ,draw opacity=1 ]   (189.99,75) -- (190,80) ;
\draw  [draw opacity=0] (190,65) -- (210,65) -- (210,90) -- (190,90) -- cycle ;
\draw [color={rgb, 255:red, 0; green, 0; blue, 0 }  ,draw opacity=1 ]   (149.99,75) -- (150,80) ;
\draw [color={rgb, 255:red, 0; green, 0; blue, 0 }  ,draw opacity=1 ]   (229.99,110) -- (229.99,120) ;
\draw  [color={rgb, 255:red, 191; green, 97; blue, 106 }  ,draw opacity=1 ][fill={rgb, 255:red, 191; green, 97; blue, 106 }  ,fill opacity=0.2 ][dash pattern={on 4.5pt off 4.5pt}] (220,70) -- (240,70) -- (240,85) -- (220,85) -- cycle ;
\draw  [color={rgb, 255:red, 0; green, 0; blue, 0 }  ,draw opacity=1 ][fill={rgb, 255:red, 255; green, 255; blue, 255 }  ,fill opacity=1 ] (214.99,45) -- (244.99,45) -- (244.99,60) -- (214.99,60) -- cycle ;
\draw [color={rgb, 255:red, 0; green, 0; blue, 0 }  ,draw opacity=1 ]   (229.99,35) -- (229.99,45) ;
\draw  [color={rgb, 255:red, 0; green, 0; blue, 0 }  ,draw opacity=1 ][fill={rgb, 255:red, 255; green, 255; blue, 255 }  ,fill opacity=1 ] (215,95) -- (245,95) -- (245,110) -- (215,110) -- cycle ;
\draw [color={rgb, 255:red, 0; green, 0; blue, 0 }  ,draw opacity=1 ]   (229.99,60) -- (230,70) ;
\draw [color={rgb, 255:red, 0; green, 0; blue, 0 }  ,draw opacity=1 ]   (230,85) -- (230,95) ;
\draw [color={rgb, 255:red, 0; green, 0; blue, 0 }  ,draw opacity=1 ]   (219.99,60) .. controls (219.59,69.2) and (210.19,65.8) .. (209.99,75) ;
\draw [color={rgb, 255:red, 0; green, 0; blue, 0 }  ,draw opacity=1 ]   (239.99,60) .. controls (239.59,69.2) and (250.2,65.8) .. (250,75) ;
\draw [color={rgb, 255:red, 0; green, 0; blue, 0 }  ,draw opacity=1 ]   (239.99,95) .. controls (239.59,85.8) and (250.2,89.2) .. (250,80) ;
\draw [color={rgb, 255:red, 0; green, 0; blue, 0 }  ,draw opacity=1 ]   (219.99,95) .. controls (219.59,85.8) and (210.19,89.2) .. (209.99,80) ;
\draw [color={rgb, 255:red, 0; green, 0; blue, 0 }  ,draw opacity=1 ]   (209.99,75) -- (210,80) ;
\draw [color={rgb, 255:red, 0; green, 0; blue, 0 }  ,draw opacity=1 ]   (249.99,75) -- (250,80) ;
\draw  [draw opacity=0] (250,65) -- (270,65) -- (270,90) -- (250,90) -- cycle ;
\draw [color={rgb, 255:red, 0; green, 0; blue, 0 }  ,draw opacity=1 ]   (110,60) -- (110,70) ;
\draw  [color={rgb, 255:red, 191; green, 97; blue, 106 }  ,draw opacity=1 ][fill={rgb, 255:red, 191; green, 97; blue, 106 }  ,fill opacity=0.2 ][dash pattern={on 4.5pt off 4.5pt}] (100,70) -- (120,70) -- (120,85) -- (100,85) -- cycle ;
\draw [color={rgb, 255:red, 0; green, 0; blue, 0 }  ,draw opacity=1 ]   (110,85) -- (110,95) ;
\draw  [draw opacity=0] (70,65) -- (90,65) -- (90,90) -- (70,90) -- cycle ;
\draw  [draw opacity=0] (30,120) -- (70,120) -- (70,145) -- (30,145) -- cycle ;
\draw  [draw opacity=0] (210,120) -- (250,120) -- (250,145) -- (210,145) -- cycle ;
\draw  [draw opacity=0] (90,120) -- (130,120) -- (130,145) -- (90,145) -- cycle ;

\draw (50,52.5) node  [font=\footnotesize]  {$f$};
\draw (50,102.5) node  [font=\footnotesize]  {$g$};
\draw (294.99,27.5) node  [font=\footnotesize]  {$f$};
\draw (295,102.5) node  [font=\footnotesize]  {$k$};
\draw (295,52) node  [font=\footnotesize]  {$h$};
\draw (295,127.5) node  [font=\footnotesize]  {$g$};
\draw (130,77.5) node    {$;$};
\draw (330,77.5) node    {$;$};
\draw (169.99,52.5) node  [font=\footnotesize]  {$f$};
\draw (169.99,102.5) node  [font=\footnotesize]  {$g$};
\draw (200,77.5) node    {$≺$};
\draw (229.99,52.5) node  [font=\footnotesize]  {$h$};
\draw (229.99,102.5) node  [font=\footnotesize]  {$k$};
\draw (260,77.5) node    {$=$};
\draw (80,77.5) node    {$;$};
\draw (50,132.5) node    {$( i)$};
\draw (230,132.5) node    {$( iii)$};
\draw (110,132.5) node    {$( ii)$};

\end{tikzpicture}

   \caption{Generic monoidal context (i), identity (ii) and composition (iii).}
  \label{fig:monoidalcontexts}
\end{figure}
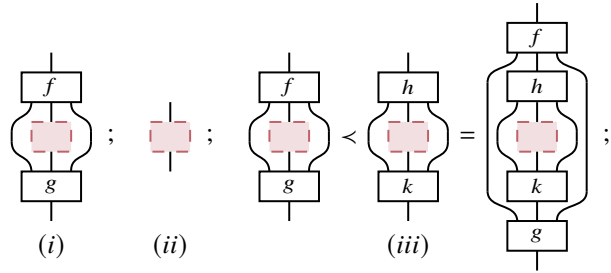

\begin{figure}[t]
  \centering

\tikzset{every picture/.style={line width=0.75pt}} %

\begin{tikzpicture}[x=0.75pt,y=0.75pt,yscale=-1,xscale=1]
\draw  [color={rgb, 255:red, 0; green, 0; blue, 0 }  ,draw opacity=1 ][fill={rgb, 255:red, 255; green, 255; blue, 255 }  ,fill opacity=1 ] (270,15) -- (310,15) -- (310,30) -- (270,30) -- cycle ;
\draw [color={rgb, 255:red, 0; green, 0; blue, 0 }  ,draw opacity=1 ]   (290,5) -- (290,15) ;
\draw  [color={rgb, 255:red, 0; green, 0; blue, 0 }  ,draw opacity=1 ][fill={rgb, 255:red, 255; green, 255; blue, 255 }  ,fill opacity=1 ] (270,75) -- (310,75) -- (310,90) -- (270,90) -- cycle ;
\draw  [color={rgb, 255:red, 0; green, 0; blue, 0 }  ,draw opacity=1 ][fill={rgb, 255:red, 255; green, 255; blue, 255 }  ,fill opacity=1 ] (270,134.5) -- (315,134.5) -- (315,149.5) -- (270,149.5) -- cycle ;
\draw [color={rgb, 255:red, 0; green, 0; blue, 0 }  ,draw opacity=1 ]   (290,150) -- (290,160) ;
\draw  [color={rgb, 255:red, 191; green, 97; blue, 106 }  ,draw opacity=1 ][fill={rgb, 255:red, 191; green, 97; blue, 106 }  ,fill opacity=0.2 ][dash pattern={on 4.5pt off 4.5pt}] (205,55) -- (225,55) -- (225,70) -- (205,70) -- cycle ;
\draw  [color={rgb, 255:red, 0; green, 0; blue, 0 }  ,draw opacity=1 ][fill={rgb, 255:red, 255; green, 255; blue, 255 }  ,fill opacity=1 ] (195,15) -- (235,15) -- (235,30) -- (195,30) -- cycle ;
\draw [color={rgb, 255:red, 0; green, 0; blue, 0 }  ,draw opacity=1 ]   (215,5) -- (215,15) ;
\draw [color={rgb, 255:red, 0; green, 0; blue, 0 }  ,draw opacity=1 ]   (200,50) -- (200,75) ;
\draw  [color={rgb, 255:red, 0; green, 0; blue, 0 }  ,draw opacity=1 ][fill={rgb, 255:red, 255; green, 255; blue, 255 }  ,fill opacity=1 ] (195,75) -- (235,75) -- (235,90) -- (195,90) -- cycle ;
\draw [color={rgb, 255:red, 0; green, 0; blue, 0 }  ,draw opacity=1 ]   (215,30) -- (215.01,55) ;
\draw [color={rgb, 255:red, 0; green, 0; blue, 0 }  ,draw opacity=1 ]   (215,70) -- (215,75) ;
\draw  [color={rgb, 255:red, 0; green, 0; blue, 0 }  ,draw opacity=1 ][fill={rgb, 255:red, 255; green, 255; blue, 255 }  ,fill opacity=1 ] (195,134.5) -- (235,134.5) -- (235,149.5) -- (195,149.5) -- cycle ;
\draw [color={rgb, 255:red, 0; green, 0; blue, 0 }  ,draw opacity=1 ]   (215,150) -- (215,160) ;
\draw  [color={rgb, 255:red, 0; green, 0; blue, 0 }  ,draw opacity=1 ][fill={rgb, 255:red, 255; green, 255; blue, 255 }  ,fill opacity=1 ] (190,35) -- (210,35) -- (210,50) -- (190,50) -- cycle ;
\draw  [color={rgb, 255:red, 0; green, 0; blue, 0 }  ,draw opacity=1 ][fill={rgb, 255:red, 255; green, 255; blue, 255 }  ,fill opacity=1 ] (220,35) -- (240,35) -- (240,50) -- (220,50) -- cycle ;
\draw [color={rgb, 255:red, 0; green, 0; blue, 0 }  ,draw opacity=1 ]   (230,50) -- (230,75) ;
\draw  [color={rgb, 255:red, 191; green, 97; blue, 106 }  ,draw opacity=1 ][fill={rgb, 255:red, 191; green, 97; blue, 106 }  ,fill opacity=0.2 ][dash pattern={on 4.5pt off 4.5pt}] (205,115) -- (225,115) -- (225,130) -- (205,130) -- cycle ;
\draw [color={rgb, 255:red, 0; green, 0; blue, 0 }  ,draw opacity=1 ]   (200,110) -- (200,135) ;
\draw [color={rgb, 255:red, 0; green, 0; blue, 0 }  ,draw opacity=1 ]   (215,90) -- (215.01,115) ;
\draw [color={rgb, 255:red, 0; green, 0; blue, 0 }  ,draw opacity=1 ]   (215,130) -- (215,135) ;
\draw  [color={rgb, 255:red, 0; green, 0; blue, 0 }  ,draw opacity=1 ][fill={rgb, 255:red, 255; green, 255; blue, 255 }  ,fill opacity=1 ] (190,95) -- (210,95) -- (210,110) -- (190,110) -- cycle ;
\draw  [color={rgb, 255:red, 0; green, 0; blue, 0 }  ,draw opacity=1 ][fill={rgb, 255:red, 255; green, 255; blue, 255 }  ,fill opacity=1 ] (220,95) -- (240,95) -- (240,110) -- (220,110) -- cycle ;
\draw [color={rgb, 255:red, 0; green, 0; blue, 0 }  ,draw opacity=1 ]   (230,110) -- (230,135) ;
\draw  [color={rgb, 255:red, 191; green, 97; blue, 106 }  ,draw opacity=1 ][fill={rgb, 255:red, 191; green, 97; blue, 106 }  ,fill opacity=0.2 ][dash pattern={on 4.5pt off 4.5pt}] (280,50) -- (300,50) -- (300,35) -- (280,35) -- cycle ;
\draw [color={rgb, 255:red, 0; green, 0; blue, 0 }  ,draw opacity=1 ]   (290,75) -- (290.01,50) ;
\draw [color={rgb, 255:red, 0; green, 0; blue, 0 }  ,draw opacity=1 ]   (290,35) -- (290,30) ;
\draw  [color={rgb, 255:red, 0; green, 0; blue, 0 }  ,draw opacity=1 ][fill={rgb, 255:red, 255; green, 255; blue, 255 }  ,fill opacity=1 ] (265,70) -- (285,70) -- (285,55) -- (265,55) -- cycle ;
\draw  [color={rgb, 255:red, 0; green, 0; blue, 0 }  ,draw opacity=1 ][fill={rgb, 255:red, 255; green, 255; blue, 255 }  ,fill opacity=1 ] (295,70) -- (315,70) -- (315,55) -- (295,55) -- cycle ;
\draw  [color={rgb, 255:red, 191; green, 97; blue, 106 }  ,draw opacity=1 ][fill={rgb, 255:red, 191; green, 97; blue, 106 }  ,fill opacity=0.2 ][dash pattern={on 4.5pt off 4.5pt}] (280,110) -- (300,110) -- (300,95) -- (280,95) -- cycle ;
\draw [color={rgb, 255:red, 0; green, 0; blue, 0 }  ,draw opacity=1 ]   (275,115) -- (275,90) ;
\draw [color={rgb, 255:red, 0; green, 0; blue, 0 }  ,draw opacity=1 ]   (290,135) -- (290.01,110) ;
\draw [color={rgb, 255:red, 0; green, 0; blue, 0 }  ,draw opacity=1 ]   (290,95) -- (290,90) ;
\draw  [color={rgb, 255:red, 0; green, 0; blue, 0 }  ,draw opacity=1 ][fill={rgb, 255:red, 255; green, 255; blue, 255 }  ,fill opacity=1 ] (265,130) -- (285,130) -- (285,115) -- (265,115) -- cycle ;
\draw  [color={rgb, 255:red, 0; green, 0; blue, 0 }  ,draw opacity=1 ][fill={rgb, 255:red, 255; green, 255; blue, 255 }  ,fill opacity=1 ] (295,130) -- (315,130) -- (315,115) -- (295,115) -- cycle ;
\draw [color={rgb, 255:red, 0; green, 0; blue, 0 }  ,draw opacity=1 ]   (305,115) -- (305,90) ;
\draw    (200,30) -- (200,35) ;
\draw    (230,30) -- (230,35) ;
\draw [color={rgb, 255:red, 0; green, 0; blue, 0 }  ,draw opacity=1 ]   (275,30) -- (275,55) ;
\draw [color={rgb, 255:red, 0; green, 0; blue, 0 }  ,draw opacity=1 ]   (305,30) -- (305,55) ;
\draw [color={rgb, 255:red, 0; green, 0; blue, 0 }  ,draw opacity=1 ]   (275,70) -- (275,75) ;
\draw [color={rgb, 255:red, 0; green, 0; blue, 0 }  ,draw opacity=1 ]   (305,70) -- (305,75) ;
\draw [color={rgb, 255:red, 0; green, 0; blue, 0 }  ,draw opacity=1 ]   (230,90) -- (230,95) ;
\draw [color={rgb, 255:red, 0; green, 0; blue, 0 }  ,draw opacity=1 ]   (275,130) -- (275,135) ;
\draw [color={rgb, 255:red, 0; green, 0; blue, 0 }  ,draw opacity=1 ]   (305,130) -- (305,135) ;
\draw [color={rgb, 255:red, 0; green, 0; blue, 0 }  ,draw opacity=1 ]   (200,90) -- (200,95) ;
\draw  [draw opacity=0] (235,70) -- (270,70) -- (270,95) -- (235,95) -- cycle ;

\draw (290,22.5) node  [font=\footnotesize]  {$f$};
\draw (290,82.5) node  [font=\footnotesize]  {$g$};
\draw (290,142) node  [font=\footnotesize]  {$h$};
\draw (215,22.5) node  [font=\footnotesize]  {$f$};
\draw (215,82.5) node  [font=\footnotesize]  {$g$};
\draw (215,142) node  [font=\footnotesize]  {$h$};
\draw (200,42.5) node  [font=\footnotesize]  {$m$};
\draw (200,102.5) node  [font=\footnotesize]  {$m'$};
\draw (230,42.5) node  [font=\footnotesize]  {$n$};
\draw (230,102.5) node  [font=\footnotesize]  {$n'$};
\draw (275,62.5) node  [font=\footnotesize]  {$m$};
\draw (305,62.5) node  [font=\footnotesize]  {$n$};
\draw (275,122.5) node  [font=\footnotesize]  {$m'$};
\draw (305,122.5) node  [font=\footnotesize]  {$n'$};
\draw (252.5,82.5) node    {$=$};

\end{tikzpicture}
   \caption{Dinaturality of sequential splits of monoidal contexts.}
  \label{fig:sequentialprotensor}
\end{figure}
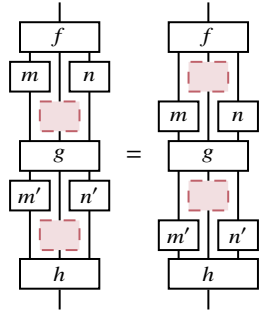
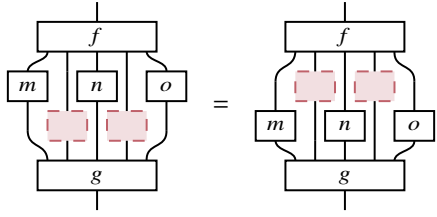
\begin{figure}[h]
  \centering

\tikzset{every picture/.style={line width=0.75pt}} %

\begin{tikzpicture}[x=0.75pt,y=0.75pt,yscale=-1,xscale=1]
\draw [color={rgb, 255:red, 0; green, 0; blue, 0 }  ,draw opacity=1 ]   (55,105) -- (55,115) ;
\draw  [color={rgb, 255:red, 191; green, 97; blue, 106 }  ,draw opacity=1 ][fill={rgb, 255:red, 191; green, 97; blue, 106 }  ,fill opacity=0.2 ][dash pattern={on 4.5pt off 4.5pt}] (30,65) -- (50,65) -- (50,80) -- (30,80) -- cycle ;
\draw  [color={rgb, 255:red, 0; green, 0; blue, 0 }  ,draw opacity=1 ][fill={rgb, 255:red, 255; green, 255; blue, 255 }  ,fill opacity=1 ] (24.99,20) -- (85,20) -- (85,35) -- (24.99,35) -- cycle ;
\draw [color={rgb, 255:red, 0; green, 0; blue, 0 }  ,draw opacity=1 ]   (55,10) -- (55,20) ;
\draw [color={rgb, 255:red, 0; green, 0; blue, 0 }  ,draw opacity=1 ]   (20,60) -- (20,75) ;
\draw  [color={rgb, 255:red, 0; green, 0; blue, 0 }  ,draw opacity=1 ][fill={rgb, 255:red, 255; green, 255; blue, 255 }  ,fill opacity=1 ] (25,90) -- (85,90) -- (85,105) -- (25,105) -- cycle ;
\draw [color={rgb, 255:red, 0; green, 0; blue, 0 }  ,draw opacity=1 ]   (39.99,35) -- (40,45) ;
\draw [color={rgb, 255:red, 0; green, 0; blue, 0 }  ,draw opacity=1 ]   (40,80) -- (40,90) ;
\draw [color={rgb, 255:red, 0; green, 0; blue, 0 }  ,draw opacity=1 ]   (29.99,35) .. controls (30.27,40.09) and (20.27,39.52) .. (20,45) ;
\draw [color={rgb, 255:red, 0; green, 0; blue, 0 }  ,draw opacity=1 ]   (80,90) .. controls (79.6,80.8) and (90.2,84.2) .. (90,75) ;
\draw [color={rgb, 255:red, 0; green, 0; blue, 0 }  ,draw opacity=1 ]   (30,90) .. controls (29.6,80.8) and (20.2,84.2) .. (20,75) ;
\draw [color={rgb, 255:red, 0; green, 0; blue, 0 }  ,draw opacity=1 ]   (40,45) -- (40,65) ;
\draw  [color={rgb, 255:red, 191; green, 97; blue, 106 }  ,draw opacity=1 ][fill={rgb, 255:red, 191; green, 97; blue, 106 }  ,fill opacity=0.2 ][dash pattern={on 4.5pt off 4.5pt}] (60,65) -- (80,65) -- (80,80) -- (60,80) -- cycle ;
\draw [color={rgb, 255:red, 0; green, 0; blue, 0 }  ,draw opacity=1 ]   (69.99,35) -- (70,65) ;
\draw [color={rgb, 255:red, 0; green, 0; blue, 0 }  ,draw opacity=1 ]   (70,80) -- (70,90) ;
\draw [color={rgb, 255:red, 0; green, 0; blue, 0 }  ,draw opacity=1 ]   (90,60) -- (90,75) ;
\draw  [color={rgb, 255:red, 0; green, 0; blue, 0 }  ,draw opacity=1 ][fill={rgb, 255:red, 255; green, 255; blue, 255 }  ,fill opacity=1 ] (45,45) -- (65,45) -- (65,60) -- (45,60) -- cycle ;
\draw [color={rgb, 255:red, 0; green, 0; blue, 0 }  ,draw opacity=1 ]   (55,35) -- (55,45) ;
\draw  [color={rgb, 255:red, 0; green, 0; blue, 0 }  ,draw opacity=1 ][fill={rgb, 255:red, 255; green, 255; blue, 255 }  ,fill opacity=1 ] (10,45) -- (30,45) -- (30,60) -- (10,60) -- cycle ;
\draw  [color={rgb, 255:red, 0; green, 0; blue, 0 }  ,draw opacity=1 ][fill={rgb, 255:red, 255; green, 255; blue, 255 }  ,fill opacity=1 ] (80,45) -- (100,45) -- (100,60) -- (80,60) -- cycle ;
\draw [color={rgb, 255:red, 0; green, 0; blue, 0 }  ,draw opacity=1 ]   (55,60) -- (55,90) ;
\draw [color={rgb, 255:red, 0; green, 0; blue, 0 }  ,draw opacity=1 ]   (79.99,35) .. controls (80.27,40.09) and (90.27,39.52) .. (90,45) ;
\draw [color={rgb, 255:red, 0; green, 0; blue, 0 }  ,draw opacity=1 ]   (180,105) -- (180,115) ;
\draw  [color={rgb, 255:red, 191; green, 97; blue, 106 }  ,draw opacity=1 ][fill={rgb, 255:red, 191; green, 97; blue, 106 }  ,fill opacity=0.2 ][dash pattern={on 4.5pt off 4.5pt}] (155,60) -- (175,60) -- (175,45) -- (155,45) -- cycle ;
\draw  [color={rgb, 255:red, 0; green, 0; blue, 0 }  ,draw opacity=1 ][fill={rgb, 255:red, 255; green, 255; blue, 255 }  ,fill opacity=1 ] (149.99,20) -- (210,20) -- (210,35) -- (149.99,35) -- cycle ;
\draw [color={rgb, 255:red, 0; green, 0; blue, 0 }  ,draw opacity=1 ]   (180,10) -- (180,20) ;
\draw [color={rgb, 255:red, 0; green, 0; blue, 0 }  ,draw opacity=1 ]   (145,65) -- (145,50) ;
\draw  [color={rgb, 255:red, 0; green, 0; blue, 0 }  ,draw opacity=1 ][fill={rgb, 255:red, 255; green, 255; blue, 255 }  ,fill opacity=1 ] (150,90) -- (210,90) -- (210,105) -- (150,105) -- cycle ;
\draw [color={rgb, 255:red, 0; green, 0; blue, 0 }  ,draw opacity=1 ]   (164.99,90) -- (165,80) ;
\draw [color={rgb, 255:red, 0; green, 0; blue, 0 }  ,draw opacity=1 ]   (165,45) -- (165,35) ;
\draw [color={rgb, 255:red, 0; green, 0; blue, 0 }  ,draw opacity=1 ]   (154.99,90) .. controls (155.27,84.91) and (145.27,85.48) .. (145,80) ;
\draw [color={rgb, 255:red, 0; green, 0; blue, 0 }  ,draw opacity=1 ]   (205,35) .. controls (204.6,44.2) and (215.2,40.8) .. (215,50) ;
\draw [color={rgb, 255:red, 0; green, 0; blue, 0 }  ,draw opacity=1 ]   (155,35) .. controls (154.6,44.2) and (145.2,40.8) .. (145,50) ;
\draw [color={rgb, 255:red, 0; green, 0; blue, 0 }  ,draw opacity=1 ]   (165,80) -- (165,60) ;
\draw  [color={rgb, 255:red, 191; green, 97; blue, 106 }  ,draw opacity=1 ][fill={rgb, 255:red, 191; green, 97; blue, 106 }  ,fill opacity=0.2 ][dash pattern={on 4.5pt off 4.5pt}] (185,60) -- (205,60) -- (205,45) -- (185,45) -- cycle ;
\draw [color={rgb, 255:red, 0; green, 0; blue, 0 }  ,draw opacity=1 ]   (194.99,90) -- (195,60) ;
\draw [color={rgb, 255:red, 0; green, 0; blue, 0 }  ,draw opacity=1 ]   (195,45) -- (195,35) ;
\draw [color={rgb, 255:red, 0; green, 0; blue, 0 }  ,draw opacity=1 ]   (215,65) -- (215,50) ;
\draw  [color={rgb, 255:red, 0; green, 0; blue, 0 }  ,draw opacity=1 ][fill={rgb, 255:red, 255; green, 255; blue, 255 }  ,fill opacity=1 ] (170,80) -- (190,80) -- (190,65) -- (170,65) -- cycle ;
\draw [color={rgb, 255:red, 0; green, 0; blue, 0 }  ,draw opacity=1 ]   (180,90) -- (180,80) ;
\draw  [color={rgb, 255:red, 0; green, 0; blue, 0 }  ,draw opacity=1 ][fill={rgb, 255:red, 255; green, 255; blue, 255 }  ,fill opacity=1 ] (135,80) -- (155,80) -- (155,65) -- (135,65) -- cycle ;
\draw  [color={rgb, 255:red, 0; green, 0; blue, 0 }  ,draw opacity=1 ][fill={rgb, 255:red, 255; green, 255; blue, 255 }  ,fill opacity=1 ] (205,80) -- (225,80) -- (225,65) -- (205,65) -- cycle ;
\draw [color={rgb, 255:red, 0; green, 0; blue, 0 }  ,draw opacity=1 ]   (180,65) -- (180,35) ;
\draw [color={rgb, 255:red, 0; green, 0; blue, 0 }  ,draw opacity=1 ]   (204.99,90) .. controls (205.27,84.91) and (215.27,85.48) .. (215,80) ;
\draw  [draw opacity=0] (100,50) -- (135,50) -- (135,75) -- (100,75) -- cycle ;

\draw (54.99,27.5) node  [font=\footnotesize]  {$f$};
\draw (55,98.5) node  [font=\footnotesize]  {$g$};
\draw (179.99,27.5) node  [font=\footnotesize]  {$f$};
\draw (180,98.5) node  [font=\footnotesize]  {$g$};
\draw (20,52.5) node  [font=\footnotesize]  {$m$};
\draw (55,52.5) node  [font=\footnotesize]  {$n$};
\draw (90,52.5) node  [font=\footnotesize]  {$o$};
\draw (215,72.5) node  [font=\footnotesize]  {$o$};
\draw (180,72.5) node  [font=\footnotesize]  {$n$};
\draw (145,72.5) node  [font=\footnotesize]  {$m$};
\draw (117.5,62.5) node    {$=$};

\end{tikzpicture}
   \caption{Parallel splits for monoidal contexts.}
  \label{fig:parallelprotensor}
\end{figure}

\begin{remark}[Algebra of monoidal contexts]
  We explicitly state all the operations that form the normal produoidal algebra of \monoidalContexts{}. We do so using 1-dimensional notation for compactness, but we do believe the conceptual picture is clearer when they are translated into 2-dimensional string diagrams.
  \begin{align*}
    & \quad \mbox{(Identity)} \\
    & (\im_A ⨾ ■ ⨾ \im_B)  \\
    \\
    & \quad \mbox{(Composition)}\\
    & \nmc{f}{M}{N}{g} ≺ \nmc{h}{P}{Q}{k} = \\
    & \qquad 
    \nmc{f ⨾ (\im_M ⊗ h ⊗ \im_N)}
    {M ⊗ P}{Q ⊗ N}
    {(\im_M ⊗ k ⊗ \im_N) ⨾ g}, \\
    \\
    & \quad \mbox{(Unit action)} \\
    & \nmc{f}{M}{N}{g} ≺ h = f ⨾ (\im_M ⊗ h ⊗ \im_N) ⨾ g,
    \\
    \\
    & \quad \mbox{(Seq. split first action)}  \\
    & \nmcs{f}{M}{N}{g}{K}{L}{h} ≺_1 
    (u ⨾ (\im_P ⊗ ■ ⊗ \im_Q) ⨾ v) = 
    \\ 
    & \qquad f ⨾ (\im_M ⊗ u ⊗ \im_N) ⨾ (\im_{M ⊗ P} ⊗ ■ ⊗ \im_{Q ⊗ N}) 
    ⨾ (\im_M ⊗ v ⊗ \im_N) ⨾ g ⨾ (\im_K ⊗ ■ ⊗ \im_L) ⨾ h, 
    \\
    \\
    & \quad \mbox{(Seq. split second action)}\\
    & \nmcs{f}{M}{N}{g}{K}{L}{h} ≺_2 \nmc{u}{P}{Q}{v} = \\
    & \qquad f ⨾ (\im_M ⊗ ■ ⊗ \im_N) ⨾ g ⨾ (\im_K ⊗ u ⊗ \im_L) 
    ⨾ (\im_{K ⊗ P} ⊗ ■ ⊗ \im_{Q ⊗ L}) ⨾ (\im_K ⊗ v ⊗ \im_L) ⨾ h, 
    \\
    \\
    & \quad \mbox{(Seq. split third action)}  \\
    & \nmc{f}{M}{N}{g} ≺ (u ⨾ (\im_P ⊗ ■ ⊗ \im_Q) ⨾ v ⨾
     (\im_R ⊗ ■ ⊗ \im_S) ⨾ w) = \\
    & \qquad f ⨾ (\im_M ⊗ u ⊗ \im_N) ⨾ (\im_{M ⊗ P} ⊗ ■ ⊗ \im_{Q ⊗ N}) ⨾ (\im_M ⊗ v ⊗ \im_N) ⨾ (\im_{M ⊗ R} ⊗ ■ ⊗ \im_{S ⊗ N}) 
    \\ & \qquad\quad⨾ (\im_M ⊗ w ⊗ \im_N) ⨾ g, \\
    \\
    & \quad \mbox{(Seq. left associativity)} \\
    & \nmcs{f}{M}{N}{g}{K}{L}{h} ≺^α_1 
    \nmcs{u}{P}{Q}{v}{R}{S}{w} = \\
    & \qquad f ⨾ (\im_M ⊗ u ⊗ \im_N) ⨾ (\im_{M ⊗ P} ⊗ ■ ⊗ \im_{Q ⊗ N}) ⨾ (\im_M ⊗ v ⊗ \im_N) ⨾ (\im_{M ⊗ R} ⊗ ■ ⊗ \im_{S ⊗ N}) \\
    & \qquad\quad ⨾ (\im_M ⊗ w ⊗ \im_N) ⨾ g ⨾ (\im_K ⊗ ■ ⊗ \im_L) ⨾ h, 
    \\
    \\
    & \quad \mbox{(Seq. right associativity)}\\
    & \nmcs{f}{M}{N}{g}{K}{L}{h} ≺^α_2 \nmcs{u}{P}{Q}{v}{R}{S}{w} = \\
    & \qquad  f ⨾ (\im_M ⊗ ■ ⊗ \im_N) ⨾ g ⨾(\im_K ⊗ u ⊗ \im_L) ⨾ (\im_{K ⊗ P} ⊗ ■ ⊗ \im_{R ⊗ L}) ⨾ (\im_K ⊗ v ⊗ \im_L) \\
    & \qquad\quad⨾ (\im_{K ⊗ R} ⊗ ■ ⊗ \im_{S ⊗ L}) ⨾ 
    (\im_L ⊗ w ⊗ \im_L) ⨾ h, 
    \\
    \\
    & \quad \mbox{(Seq. left unitor)} \\
    & \nmcs{f}{M}{N}{g}{K}{L}{h} ≺^\lambda u = \\
    & \qquad f ⨾ (\im_M ⊗ u ⊗ \im_N) ⨾ g ⨾ (\im_K ⊗ ■ ⊗ \im_L) ⨾ h, 
    \\
    \\
    & \quad \mbox{(Seq. right unitor)} \\ 
    & \nmcs{f}{M}{N}{g}{K}{L}{h} ≺^\rho u = \\
    & \qquad f ⨾ (\im_M ⊗ ■ ⊗ \im_N) ⨾ g ⨾ (\im_K ⊗ u ⊗ \im_L) ⨾ h, \\
    \\
    & \quad \mbox{(Par. split first action)} \\ 
    & \nmcp{f}{M}{N}{O}{g} ≺₁ \nmc{u}{P}{Q}{v} = \\
    & \quad f ⨾ (\im_M ⊗ u ⊗ \im_{N ⊗ X' ⊗ O}) 
    ⨾ (\im_{M ⊗ P} ⊗ ■ ⊗ \im_{Q ⊗ N} ⊗ ■ ⊗ \im_O) 
    ⨾ (\im_M ⊗ v ⊗ \im_{N ⊗ Y' ⊗ O}) ⨾ g, \\
    \\
    & \quad \mbox{(Par. split second action)}\\ 
    & \nmcp{f}{M}{N}{O}{g} ≺₂ \nmc{u}{P}{Q}{v} =  \\
    & \qquad f ⨾ (\im_{M ⊗ X ⊗ N} ⊗ u ⊗ \im_O) ⨾ (\im_M ⊗ ■ ⊗ \im_{N ⊗ P} ⊗ ■ ⊗ \im_{O ⊗ Q}) ⨾ (\im_{M ⊗ Y ⊗ N} ⊗ v ⊗ \im_O) ⨾ g, \\
    \\
    & \quad \mbox{(Par. split third action)} \\
    & \nmc{f}{M}{N}{g} ≺ 
    \nmcp{u}{P}{Q}{R}{v} = \\
    & \qquad f ⨾ (\im_M ⊗ u ⊗ \im_N) ⨾ (\im_{M ⊗ P} ⊗ ■ ⊗ \im_Q ⊗ ■ ⊗ \im_{R ⊗ N}) ⨾ (\im_M ⊗ w ⊗ \im_N) ⨾ g, \\
    \\
    & \mbox{(Par. left associativity)} \\
    & \nmcp{f}{M}{N}{O}{g} ≺^α_1 \nmcp{u}{P}{Q}{R}{v} = \\
    & \qquad f ⨾ (\im_M ⊗ u ⊗ \im_{N ⊗ X ⊗ O}) 
    ⨾ (\im_{M ⊗ P} ⊗ ■ ⊗ \im_{Q} ⊗ ■ ⊗ \im_{R ⊗ N} ⊗ ■ ⊗ \im_{O}) \\
    & \qquad\quad ⨾ (\im_M ⊗ v ⊗ \im_{N ⊗ Y ⊗ O}) ⨾ g, \\
    \\
    & \mbox{(Par. right associativity)} \\
    & \nmcp{f}{M}{N}{O}{g} ≺^α_2 \nmcp{u}{P}{Q}{R}{v} = \\
    & \qquad f ⨾ (\im_{M ⊗ X ⊗ N} ⊗ u ⊗ \im_O) 
    ⨾ (\im_{M} ⊗ ■ ⊗ \im_{N ⊗ P} ⊗ ■ ⊗ \im_{Q} ⊗ ■ ⊗ \im_{R ⊗ O}) \\ 
    & \qquad\quad ⨾ (\im_{M ⊗ Y ⊗ N} ⊗ v ⊗ \im_O) ⨾ g, \\
    \\
    & \quad \mbox{(Par. left unitor)} \\
    & \nmcp{f}{M}{N}{O}{g} ≺^\lambda u = \\
    &\qquad f ⨾ (\im_M ⊗ u ⊗ \im_{N ⊗ X' ⊗ O}) 
    ⨾ (\im_{M ⊗ Y ⊗ N} ⊗ ■ ⊗ \im_O) ⨾ g = \\
    &\qquad f  ⨾ (\im_{M ⊗ Y ⊗ N} ⊗ ■ ⊗ \im_O)  
    ⨾ (\im_M ⊗ u ⊗ \im_{N ⊗ X' ⊗ O}) ⨾  g,\\
    \\
    &\quad  \mbox{(Par. right unitor)} \\  
    & \nmcp{f}{M}{N}{O}{g} ≺^\rho v = \\
    & \qquad f ⨾ (\im_{M ⊗ X ⊗ N} ⊗ v ⊗ \im_O) 
    ⨾  (\im_{M} ⊗ ■ ⊗ \im_{N ⊗ Y' ⊗ O}) ⨾ g = \\
    & \qquad  f ⨾ (\im_{M} ⊗ ■ ⊗ \im_{N ⊗ Y' ⊗ O}) 
    ⨾  (\im_{M ⊗ X ⊗ N} ⊗ v ⊗ \im_O) ⨾ g, \\
    \\
    & \quad \mbox{(Laxator, left side)}  \\
    & \nmcp{f}{M}{N}{O}{g} & \\
    & \qquad    ≺₁^{ψ} \nmcs{j₀}{U}{V}{j₁}{U'}{V'}{j₂} \\
    & \qquad    ≺₂^{ψ} \nmcs{k₀}{W}{T}{k₁}{W'}{T'}{k₂} = \\ 
    & \quad f ⨾ 
    (\im_M ⊗ j₀ ⊗ \im_N ⊗ k₀ ⊗ \im_O) 
    ⨾ (\im_{M ⊗ U} ⊗ ■ ⊗ \im_{V ⊗ N ⊗ U'} ⊗ ■ ⊗ \im_{V' ⊗ O}) \\
    & \qquad\quad
    ⨾ (\im_M ⊗ j₁ ⊗ \im_N ⊗ k₁ ⊗ \im_O) 
    ⨾ (\im_{M ⊗ W} ⊗ ■ ⊗ \im_{T ⊗ N ⊗ W'} ⊗ ■ ⊗ \im_{T' ⊗ O}) \\
    & \qquad\quad
    ⨾ (\im_M ⊗ j₂ ⊗ \im_N ⊗ k₂ ⊗ \im_O) ⨾ g, \\
    \\
    & \quad \mbox{(Laxator, right side)} \\
    & \nmcs{f}{M}{N}{g}{K}{L}{h} \\
    & \qquad   ≺₁^{ψ} \nmcp{j₀}{P}{Q}{R}{j₁} \\ 
    & \qquad   ≺₂^{ψ} \nmcp{k₀}{P'}{Q'}{R'}{k₁} = & \\ 
    & \quad f ⨾ (\im_{M} ⊗ j₀ ⊗ \im_N) 
    ⨾ (\im_{M ⊗ P} ⊗ ■ ⊗ \im_Q ⊗ ■ ⊗ \im_{R ⊗ N}) \\
    & \qquad\quad
    ⨾ (\im_M ⊗ j₁ ⊗ \im_N) ⨾ g ⨾ (\im_K ⊗ k₀ ⊗ \im_L) 
    ⨾ (\im_{K ⊗ P'} ⊗ ■ ⊗ \im_{Q'} ⊗ ■ ⊗ \im_{R' ⊗ L}) \\
    & \qquad\quad
    ⨾ (\im_K ⊗ k₁ ⊗ \im_L) ⨾ h.
    \\
  \end{align*}
\end{remark}

\begin{remark}
  In the following derivations, we understand that an isolated $(\blacksquare)$ actually means $(\im_I \otimes \blacksquare \otimes \im_I)$.
\end{remark}

\begin{proposition}[From \Cref{prop:contextCategory}]
    \label{ax:prop:contextCategory}
    \MonoidalContexts{} form a category. Composition of \monoidalContexts{} is associative and unital.
\end{proposition}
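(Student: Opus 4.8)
The plan is to verify the three category axioms—well\hyp definedness of the composition formula, existence of identities, and associativity/unitality—directly from the coend definition of $\MC{\biobj{A}{B}}{\biobj{X}{Y}}$, treating each \monoidalContext{} as a \dinaturality{} class of a pair $(f,g)$ over a pair of witnessing objects $M,N$. Composition is given by the explicit formula of \Cref{prop:contextCategory}, and identities by $\im_A ⨾ \blacksquare ⨾ \im_B$, so the work is entirely in checking that these data are coherent.

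First I would establish well\hyp definedness, which I expect to be the main obstacle. The composition formula produces, from representatives $\nmc{f}{M}{N}{g}$ and $\nmc{h}{M'}{N'}{k}$, a pair witnessed by the objects $M\otimes M'$ and $N\otimes N'$, and I must check that this descends to \dinaturality{} classes in each argument. Replacing the first context by a dinaturally equal representative amounts to sliding a pair $(m,n)$ from $f$ across to $g$; the key computation is that the layer $(m \otimes \im_X \otimes n)$ commutes past the inserted layers $(\im_M \otimes h \otimes \im_N)$, the hole, and $(\im_M \otimes k \otimes \im_N)$ by the interchange law, i.e. bifunctoriality of $\otimes$, so that $m$ and $n$ reappear on the far side of the hole and are absorbed as a single \dinaturality{} move on the composite's witnessing objects $M\otimes M'$ and $N\otimes N'$. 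The analogous check for the second argument uses the commuting law recorded in the Remark following \Cref{prop:contextCategory}, namely that pure morphisms tensored beside the hole slide freely across it.

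Next I would treat identities and unitality. The identity on $\biobj{A}{B}$ is the representative of $\im_A ⨾ \blacksquare ⨾ \im_B$ with $M=N=I$. For the left and right unit laws I substitute this into the composition formula: the inserted factors of the shape $\im_I \otimes (-) \otimes \im_I$ collapse via the unitors of $ℂ$, and the witnessing objects $I\otimes M$ and $N\otimes I$ reduce to $M$ and $N$ by the same unitors, returning the original context. Because these reductions are exactly unit (Yoneda) isomorphisms, they are automatically compatible with the \dinaturality{} quotient, so no separate representative\hyp independence argument is needed here.

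Finally, for associativity I would expand $(p ≺ q) ≺ r$ and $p ≺ (q ≺ r)$ for three composable contexts and compare. Both expansions nest the three inner contexts into a single string $f ⨾ \dots ⨾ g$ whose witnessing objects are built from the three pairs of $M$'s and $N$'s, and they differ only by a reassociation of $\otimes$, handled by Mac Lane coherence in $ℂ$ together with naturality of the associator past the hole. As a consistency check, all of this is an instance of the general fact that Kleisli composition of a \promonad{} is associative and unital: the category constructed here is precisely $𝓝𝓣ℂ$ of \Cref{th:monoidalContextsAreANormalization}, so one could alternatively deduce the statement abstractly from the promonad structure of \Cref{th:normalizationProduoidal}. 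I would nonetheless present the explicit computation, since \Cref{prop:contextCategory} is logically prior to, and used in setting up, that identification.
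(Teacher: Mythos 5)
Your proposal is correct and follows essentially the same route as the paper's proof: a direct expansion of the composition formula on representatives to verify associativity, together with the collapse of the $\im_I \otimes (-) \otimes \im_I$ layers for the two unit laws. The one genuine addition is your explicit check that composition descends to dinaturality classes via the interchange law --- a step the paper's proof leaves implicit --- and your argument for it is the right one.
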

\begin{proof}
  We first check that the composition of \monoidalContexts{} is associative.
  \begin{align*}
    & (\nmc{f}{M}{N}{g} ≺ \nmc{f'}{M'}{N'}{g'}) ≺  \nmc{f''}{M''}{N''}{g''}
    & \quad = \\
    & \nmc{f ⨾ (\im_M ⊗ f' ⊗ \im_N)}{M ⊗ M'}{N' ⊗ N}{(\im_M ⊗ g' ⊗ \im_N) ⨾ g} ≺ 
    \\ & \qquad \nmc{f''}{M''}{N''}{g''}
    & \quad = \\
    & f ⨾ (\im_M ⊗ f' ⊗ \im_N) ⨾ (\im_{M ⊗ M'} ⊗ f'' ⊗ \im_{N ⊗ N'}) \\
    & \qquad ⨾ (\im_{M ⊗ M' ⊗ M''}⊗ \blacksquare ⊗ \im_{N'' ⊗ N' ⊗ N}) 
    ⨾ (\im_{M ⊗ M'} ⊗ g'' ⊗ \im_{N' ⊗ N}) ⨾ (\im_N ⊗ g' ⊗ \im_N) ⨾ g
    & \quad = \\
    & f ⨾ (\im_{M} ⊗ (f' ⨾ (\im_{M'} ⊗ f'' ⊗ \im_{N'})) ⊗ \im_{N}) \\
    & \qquad ⨾ (\im_{M ⊗ M' ⊗ M''} ⊗ \blacksquare ⊗ \im_{N'' ⊗ N' ⊗ N}) ⨾ (\im_{M} ⊗ ((\im_{M'} ⊗ g'' ⊗ \im_{N'}) ⨾ g') ⊗ \im_{N}) ⨾ g
    & \quad = \\
    & \nmc{f}{M}{N}{g} ≺ 
      (f' ⨾ (\im_{M'} ⊗ f'' ⊗ \im_{N'}) ⨾ \\ 
    & \qquad (\im_{M' ⊗ M''} \otimes \blacksquare \otimes \im_{N'' ⊗ N'}) ⨾ (\im_{M'} ⊗ g'' ⊗ \im_{N'}) ⨾ g')
    & \quad = \\
    & \nmc{f}{M}{N}{g} ≺ (\nmc{f'}{M'}{N'}{g'} ≺ \nmc{f''}{M''}{N''}{g''})
  \end{align*}
  We now check left unitality of the identities,
  \begin{align*}
    & \nmc{f}{M}{N}{g} ≺ \nmc{\im_X}{I}{I}{\im_Y} & = \\
    & \qquad \nmc{f ⨾ (\im_M ⊗ \im_X ⊗ \im_N)}{M}{N}{(\im_M ⊗ \im_X ⊗ \im_N) ⨾ g} & = \\
    & \qquad \nmc{f}{M}{N}{g},
  \end{align*}
  and right unitality,
  \begin{align*}
    & \nmc{\im_A}{I}{I}{\im_B} ≺ \nmc{f}{M}{N}{g} & = \\
    & \qquad \nmc{\im_A ⨾ (\im_I ⊗ f ⊗ \im_I)}{M}{N}{(\im_I ⊗ g ⊗ \im_I) ⨾ \im_B} & = \\
    & \qquad \nmc{f}{M}{N}{g}.
  \end{align*}
  This concludes the proof.
\end{proof}

\begin{proposition}[From \Cref{prop:MonoidalContextProtensor}]
  \label{ax:prop:MonoidalContextProtensor}
  The category of \monoidalContexts{} forms a normal \produoidalCategory{} with its units, sequential and parallel splits.
\end{proposition}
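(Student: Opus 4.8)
The plan is to avoid checking the produoidal axioms by hand and instead to identify $\Mctx{ℂ}$ with the normalization $𝓝𝓣ℂ$ of the cofree produoidal category of \splicedMonoidalArrows{}, which is already known to be a \emph{normal} produoidal category by \Cref{th:normalizationProduoidal} (applied to the produoidal category $𝓣ℂ$ of \Cref{prop:spliceIsProduoidal}). Concretely, I would first establish natural isomorphisms between the splits and units of \Cref{defn:contextSeqProtensor} and the profunctors obtained by substituting $𝕍 = 𝓣ℂ$ into the normalization formulas of \Cref{th:normalizationProduoidal}, and then transport the normal produoidal structure along these isomorphisms. A fully direct construction of associators, unitors and laxators from Yoneda isomorphisms (in the style of the splice case) is also possible, but the normalization route is cleaner and has the bonus of furnishing exactly the profunctor identification later needed for \Cref{th:monoidalContextsAreANormalization}.

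First I would compute each of the four normalization profunctors. Using $𝓝𝕍(A;B) = 𝕍(A; N ⊗ B ⊗ N)$ together with the definition of the parallel split and sequential unit of $𝓣ℂ$ (so that each $N$-factor contributes a hom-set $ℂ(P;Q)$), the morphism profunctor expands, by the nesting convention, into the coend
$$𝓝𝓣ℂ\left(\biobj{A}{B};\biobj{X}{Y}\right) \cong \int^{P,Q,R,S} ℂ(A; P ⊗ X ⊗ R) × ℂ(P;Q) × ℂ(R;S) × ℂ(Q ⊗ Y ⊗ S; B).$$
Two Yoneda reductions — eliminating $Q$ against $ℂ(P;Q)$ and $S$ against $ℂ(R;S)$ — collapse this to $\int^{M,N} ℂ(A; M ⊗ X ⊗ N) × ℂ(M ⊗ Y ⊗ N; B)$, which is precisely $\MC{\biobj{A}{B}}{\biobj{X}{Y}}$ of \Cref{def:monoidalcontext}. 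The analogous computations for the sequential split $B ◁_N C$, the parallel split $B ⊗_N C$, and the units $I_N = N_N = 𝕍(A;N)$ recover, after the same pattern of Yoneda reductions that sew away the interstitial $N$'s, the three-morphism sequential splits, the pair-of-morphisms parallel splits, and the bare arrows $ℂ(A;B)$ listed in \Cref{defn:contextSeqProtensor}.

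Having matched the underlying profunctors, I would invoke \Cref{th:normalizationProduoidal}: since $𝓣ℂ$ is produoidal, $𝓝𝓣ℂ$ carries a normal produoidal structure, and the natural isomorphisms above transport its associators, unitors, laxators, and the normality isomorphism $\varphi_0$ onto $\Mctx{ℂ}$ (here $\varphi_0$ is literally the identity, since both unit profunctors reduce to $ℂ(A;B)$). Because every one of these structure maps is assembled from Yoneda isomorphisms, compositions, and the laxators of $𝓣ℂ$, all the coherence conditions (pentagon, triangle, and the duoidal laxator equations) are inherited automatically, so no separate coherence verification is required.

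The main obstacle will be bookkeeping rather than conceptual difficulty: I must confirm that the isomorphisms are natural and that they identify the split and unit profunctors — not merely the hom-profunctor — so that the transported operations are exactly those named in \Cref{defn:contextSeqProtensor}. In particular, for the parallel split one must track how normalization inserts an $N$ between the two holes and how its Yoneda reduction yields the single shared triple of sewing objects $(M,N,O)$, then check that this agrees with the dinaturality already imposed on monoidal contexts (\Cref{fig:parallelprotensor}). Keeping the coend variables and their variances straight through the nested profunctor compositions is the one place where genuine care is needed.
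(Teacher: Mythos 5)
Your proposal is correct, but it takes a genuinely different route from the paper's own proof. The paper proves this proposition \emph{directly}: it constructs the sequential and parallel associators and unitors and the laxators of $\Mctx{ℂ}$ one by one as explicit coend-calculus isomorphisms (each assembled from Yoneda reductions and composition in $ℂ$, so that coherence is automatic), without invoking normalization at all. You instead derive the structure by transporting it along the identification $\Mctx{ℂ} \cong 𝓝𝓣ℂ$ — which is exactly the content the paper defers to \Cref{th:monoidalContextsAreANormalization}, proved only \emph{after} this proposition. There is no circularity in your ordering, since \Cref{th:normalizationProduoidal} and \Cref{prop:spliceIsProduoidal} are independent of the present statement, and your sample computation of the morphism profunctor (eliminating the interstitial $N$-factors by two Yoneda reductions) matches the paper's own calculation in the proof of \Cref{th:monoidalContextsAreANormalization}. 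What each approach buys: yours avoids a half-dozen explicit isomorphism constructions and any coherence verification, at the price of having to check carefully that the transported splits and units — not just the hom-profunctor — coincide with those of \Cref{defn:contextSeqProtensor} (you correctly flag this, and the parallel split with its three shared sewing objects is indeed the delicate case); the paper's direct route yields explicit formulas for every structure map, which it then uses to write down the full operational algebra of monoidal contexts, and keeps the proposition logically independent of the normalization machinery.
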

\begin{proof}
  \Cref{lemma:mc-assocseq,lemma:mc-leftunitseq,lemma:mc-rightunitseq} construct the associators and unitors for the sequential promonoidal structure, and \Cref{lemma:mc-assocpar,lemma:mc-leftunitpar,lemma:mc-rightunitpar} define the associators and unitors for the parallel promonoidal structure. As they are all constructed with \YonedaIsomorphisms{}, they must satisfy the coherence equations. \Cref{lemma:mc-laxators} defines the laxators, again using only \YonedaIsomorphisms{} and composition in $ℂ$. For concision, our proofs freely elide the tensor product of objects, writing $XY$ for $X ⊗ Y$.
\end{proof}

\begin{lemma}[Monoidal contexts sequential associator] \label{lemma:mc-assocseq}
  We construct a natural isomorphism
  $$(≺^α_2) : \intr{{\biobj{U}{V} \in \Mctx{ℂ}}} \MC{\biobj{A}{B}}{\biobj{X}{Y} ◁ \biobj{U}{V}} × \MC{\biobj{U}{V}}{\biobj{X'}{Y'} ◁ \biobj{X''}{Y''}} ≅ \intr{{\biobj{U}{V} \in \Mctx{ℂ}}} \MC{\biobj{A}{B}}{\biobj{U}{V} ◁ \biobj{X''}{Y''}} × \MC{\biobj{U}{V}}{\biobj{X}{Y} ◁ \biobj{X'}{Y'}} : (≺^α_1),$$
  satisfying the coherence equations of \produoidalCategories{}. This isomorphism is defined on representatives of the equivalence class as
  $$\begin{aligned}
    &\nmcs{f_0}{}{}{f_1}{}{}{f_2} ≺^α_2 \nmcs{g_0}{}{}{g_1}{}{}{g_2} = \\
    &(\nmcs{h_0}{}{}{h_1}{}{}{h_2} ｜ \nmcs{k_0}{}{}{k_1}{}{}{k_2})
    \end{aligned}$$
  if and only if
  $$\begin{aligned}
    &f_0 ⨾ (\im ⊗ ■ ⊗ \im) ⨾ f_1 ⨾ (\im ⊗ g_0 ⊗ \im) ⨾ (\im ⊗ ■ ⊗ \im) ⨾ (\im ⊗ g_1 ⊗ \im) ⨾ (\im ⊗ ■ ⊗ \im) ⨾ (\im ⊗ g_2 ⊗ \im) ⨾ f_2 = \\
    &h_0 ⨾ (\im ⊗ k_0 ⊗ \im) ⨾ (\im ⊗ ■ ⊗ \im) ⨾ (\im ⊗ k_1 ⊗ \im) ⨾ (\im ⊗ ■ ⊗ \im) ⨾ (\im ⊗ k_2 ⊗ \im) ⨾ h_1 ⨾ (\im ⊗  ■ ⊗ \im) ⨾ h_2.
  \end{aligned}$$
\end{lemma}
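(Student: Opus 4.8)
The plan is to reduce both sides of the claimed isomorphism to a single canonical ``three‑hole'' set, exactly in the style of the associator for spliced arrows (\Cref{lemma:promonoidalAssociator}), and then to read off the representative formula by tracking elements through the reductions. Concretely, I will show that each side is naturally isomorphic to
\[
\int^{M,N,M',N',M'',N''} ℂ(A;MXN) × ℂ(MYN;M'X'N') × ℂ(M'Y'N';M''X''N'') × ℂ(M''Y''N'';B),
\]
the set of four connecting morphisms presenting a split into three factors (eliding $X ⊗ Y$ to $XY$ as in the parent proof). The whole argument is coend calculus over the category $\Mctx{ℂ}$, which is legitimate since monoidal contexts do form a category (\Cref{prop:contextCategory}).

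For the left‑hand side I first expand $\MC{\biobj{A}{B}}{\biobj{X}{Y} ◁ \biobj{U}{V}}$ by definition and recognise the portion after the first hole as a monoidal‑context hom, namely $\int^{M',N'} ℂ(MYN;M'UN') × ℂ(M'VN';B) = \Mctx{ℂ}(\biobj{MYN}{B};\biobj{U}{V})$. Applying Fubini to commute this with the outer coend over $\biobj{U}{V} \in \Mctx{ℂ}$, the factor $\Mctx{ℂ}(\biobj{MYN}{B};\biobj{U}{V})$ sits against $\MC{\biobj{U}{V}}{\biobj{X'}{Y'} ◁ \biobj{X''}{Y''}}$; since the split is a copresheaf on $\Mctx{ℂ}$, the Yoneda isomorphism collapses the $\biobj{U}{V}$‑coend to $\MC{\biobj{MYN}{B}}{\biobj{X'}{Y'} ◁ \biobj{X''}{Y''}}$, and expanding this last split by definition yields the canonical four‑fold set.

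For the right‑hand side the same strategy applies, but the cut must be made at the other end. I insert a coYoneda resolution $ℂ(MVN;M''X''N'') \cong \int^{P} ℂ(MVN;P) × ℂ(P;M''X''N'')$ inside $\MC{\biobj{A}{B}}{\biobj{U}{V} ◁ \biobj{X''}{Y''}}$ so as to expose the hom $\int^{M,N} ℂ(A;MUN) × ℂ(MVN;P) = \Mctx{ℂ}(\biobj{A}{P};\biobj{U}{V})$. Commuting the $\biobj{U}{V}$‑coend inward and applying the Yoneda isomorphism over $\Mctx{ℂ}$ collapses it to $\MC{\biobj{A}{P}}{\biobj{X}{Y} ◁ \biobj{X'}{Y'}}$, leaving a coend over the intermediate $P$ that a final Yoneda reduction, $\int^{P} ℂ(M'Y'N';P) × ℂ(P;M''X''N'') \cong ℂ(M'Y'N';M''X''N'')$, sends to the same canonical four‑fold set. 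Composing the two chains gives $(≺^α_2)$ together with its inverse $(≺^α_1)$; and because every step is built from Yoneda isomorphisms and composition in $ℂ$, the resulting associator automatically satisfies the pentagon and triangle coherence equations required by the parent proposition.

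The main obstacle I anticipate is bookkeeping rather than conceptual: confirming at each stage that the factor being collapsed really is the hom‑profunctor of $\Mctx{ℂ}$ with the correct variance, so that the Yoneda isomorphism genuinely applies — this amounts to checking that the implicit dinaturality quotients of the monoidal‑context homs agree with those of the three‑fold composite. The explicit representative equation in the statement is then obtained by feeding a representative $\nmcs{f_0}{}{}{f_1}{}{}{f_2} ｜ \nmcs{g_0}{}{}{g_1}{}{}{g_2}$ through the left chain and a representative $\nmcs{h_0}{}{}{h_1}{}{}{h_2} ｜ \nmcs{k_0}{}{}{k_1}{}{}{k_2}$ through the right chain, and equating their images in the canonical form; the displayed equality is precisely the assertion that these two images coincide.
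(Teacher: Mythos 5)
Your proposal is correct and follows essentially the same route as the paper's proof: both sides are reduced by coend calculus to the canonical quadruple set $\intr{\M,\N,\O,\P,\Q,\R} ℂ(A;\M X\N)×ℂ(\M Y\N;\O X'\P)×ℂ(\O Y'\P;\Q X''\R)×ℂ(\Q Y''\R;B)$ by exposing a hom of $\Mctx{ℂ}$ and collapsing the $\biobj{U}{V}$-coend with a Yoneda reduction, with coherence following because every step is a Yoneda isomorphism. The only (harmless) deviation is your extra coYoneda resolution through an intermediate $P$ on the right-hand side, which the paper avoids by directly recognising $ℂ(A;\M U\N)×ℂ(\M V\N;\O X''\P)$ as $\MC{\biobj{A}{\O X''\P}}{\biobj{U}{V}}$.
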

\begin{proof}
  Firstly, we construct an isomorphism between the left hand side and a set of quadruples of morphisms. This isomorphism sends the pair
  \begin{align*}
    &\nmcs{f_0}{}{}{f_1}{}{}{f_2} ｜ \nmcs{g_0}{}{}{g_1}{}{}{g_2} \\
    \text{to}\quad &(f_0 ⨾ (\im ⊗ ■ ⊗ \im) ⨾ f_1 ⨾ (\im ⊗ g_0 ⊗ \im) ⨾ (\im ⊗ ■ ⊗ \im) ⨾ g_1 ⨾ (\im ⊗ ■ ⊗ \im) ⨾ (\im ⊗ g_2 ⊗ \im) ⨾ f_2).
  \end{align*}
  The isomorphism is constructed by the following coend derivation.
  \begin{align*}
    & \intr{\biobj{U}{V} \in \Mctx{ℂ}} \MC{\biobj{A}{B}}{\biobj{X}{Y} ◁ \biobj{U}{V}} × \MC{\biobj{U}{V}}{\biobj{X'}{Y'} ◁ \biobj{X''}{Y''}} &\quad
    \bydef \\ %
    & \intr{\biobj{U}{V} \in \Mctx{ℂ}, \M,\N,\O,\P \in ℂ}
    ℂ(A; \M X \N) × ℂ(\M  Y \N; \O U \P)  × ℂ(\O V \P; B) × \MC{\biobj{U}{V}}{\biobj{X'}{Y'} ◁ \biobj{X''}{Y''}} &\quad
    \bydef \\%
    & \intr{\biobj{U}{V} \in \Mctx{ℂ}, \M,\N,\O,\P,\Q,\R \in ℂ}
    ℂ(A; \M X \N) × \MC{\biobj{\M Y \N}{B}}{\biobj{U}{V}}  × ℂ(U;  \O X' \P) × ℂ(\O Y' \P; \Q  X''  \R) × ℂ(\Q  Y'' \R; V) &\quad
    \yo2 \\%
    & \intr{\M,\N,\O,\P,\Q,\R \in ℂ}
    ℂ(A; \M  X \N) × ℂ(\M  Y \N; \O  X' \P) × ℂ(\O  Y'  \P;  \Q  X''  \R) × ℂ(\Q  Y''  \R; B).
  \end{align*}

  Now we construct an isomorphism between the right hand side and the same set of quadruples of morphisms.
  This isomorphism sends the pair
  \begin{align*}
    &\nmcs{h_0}{}{}{h_1}{}{}{h_2} ｜ \nmcs{k_0}{}{}{k_1}{}{}{k_2}) \\
    \text{to}\quad &(h_0 ⨾ (\im ⊗ k_0 ⊗ \im) ⨾ (\im ⊗ ■ ⊗ \im) ⨾ k_1 ⨾ (\im ⊗ ■ ⊗ \im) ⨾ (\im ⊗ k_2 ⊗ \im) ⨾ h_1 ⨾ (\im ⊗ ■ ⊗ \im) ⨾ h_2).
  \end{align*}
  \begin{align*}
    & \intr{\biobj{U}{V} \in \Mctx{ℂ}} \MC{\biobj{A}{B}}{\biobj{U}{V} ◁ \biobj{X''}{Y''}} × \MC{\biobj{U}{V}}{\biobj{X}{Y} ◁ \biobj{X'}{Y'}} &\quad
    \bydef \\ %
    & \intr{\biobj{U}{V} \in \Mctx{ℂ}, \M,\N,\O,\P \in ℂ} ℂ(A; \M U \N) × ℂ(\M  V \N; \O X'' \P)  × ℂ(\O Y'' \P; B) ×  \MC{\biobj{U}{V}}{\biobj{X}{Y} ◁ \biobj{X'}{Y'}} &\quad
    \bydef \\%
    & \intr{\biobj{U}{V} \in \Mctx{ℂ}, \M,\N,\O,\P,\Q,\R \in ℂ} \MC{\biobj{A}{\O X'' \P}}{\biobj{U}{V}}  × ℂ(\O  Y''  \P; B) × ℂ(U; \M  X  \N) × ℂ(\M  Y  \N; \Q  X'  \R) ×  ℂ(\Q  Y'  \R; V) &\quad
    \yo2 \\%
    & \intr{\M,\N,\O,\P,\Q,\R \in ℂ} ℂ(A; \M  X \N) × ℂ(\M  Y \N; \Q  X' \R) × ℂ(\Q  Y'  \R;  \O  X''  \P) × ℂ(\O  Y''  \P; B).
  \end{align*}
  Composing both isomorphisms, we obtain the desired associator. Since it is composed exclusively from \YonedaIsomorphisms{}, it must satisfy the coherence equations of \produoidalCategories{} (\Cref{ax:def:produoidal}).
\end{proof}

\begin{lemma}[Monoidal contexts sequential left unitor] \label{lemma:mc-leftunitseq}
  We construct a natural isomorphism
  $$(≺^{λ}) : ∫^{\biobj{U}{V} \in \Mctx{ℂ}} \MC{\biobj{A}{B}}{\biobj{U}{V} ◁ \biobj{X}{Y}} × \MC{\biobj{U}{V}}{N} ≅ \MC{\biobj{A}{B}}{\biobj{X}{Y}},$$
  satisfying the coherence equations of \produoidalCategories{}. This isomorphism is defined on representatives of the equivalence class as
  \begin{align*}
    & \nmcs{f_0}{M}{N}{f_1}{K}{L}{f_2} ≺^\lambda g & = \\ 
    & f_0 ⨾ (\im_M ⊗ g ⊗ \im_N) ⨾ f_1 ⨾ (\im_K ⊗ ■ ⊗ \im_L) ⨾ f_2.
  \end{align*}
\end{lemma}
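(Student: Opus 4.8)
The plan is to construct $(≺^\lambda)$ by the same coend-calculus strategy used for the promonoidal splice left unitor (\Cref{lemma:spliceLeftUnitor}), now carrying the hidden objects introduced by the \dinaturality{} of \monoidalContexts{}. First I would unfold both factors into their $ℂ$-hom definitions (\Cref{defn:contextSeqProtensor}): the sequential split becomes
$$\MC{\biobj{A}{B}}{\biobj{U}{V} ◁ \biobj{X}{Y}} = \int^{M,N,K,L \in ℂ} ℂ(A; M⊗U⊗N) × ℂ(M⊗V⊗N; K⊗X⊗L) × ℂ(K⊗Y⊗L; B),$$
and the unit becomes $\MC{\biobj{U}{V}}{N} = ℂ(A;B)$-style, namely $ℂ(U;V)$. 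After a Fubini interchange pulling the hidden-object coends $\int^{M,N,K,L}$ outside, the outer coend $\int^{\biobj{U}{V} \in \Mctx{ℂ}}$ is left sitting next to the three factors mentioning $U$ and $V$, and the factor $ℂ(K⊗Y⊗L; B)$ can be set aside.

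The core of the argument is the reduction
$$\int^{M,N \in ℂ} \int^{\biobj{U}{V} \in \Mctx{ℂ}} ℂ(A; M⊗U⊗N) × ℂ(U;V) × ℂ(M⊗V⊗N; K⊗X⊗L) ≅ ℂ(A; K⊗X⊗L),$$
a coYoneda-type collapse: the \dinaturality{} defining \monoidalContexts{} lets me absorb the unit $g \colon U \to V$ into the composite $f_0 ⨾ (\im_M ⊗ g ⊗ \im_N) ⨾ f_1$, so that the resulting morphism $A \to K⊗X⊗L$ is the complete invariant of the quotient. Recombining with the remaining factor gives $\int^{K,L \in ℂ} ℂ(A; K⊗X⊗L) × ℂ(K⊗Y⊗L; B) = \MC{\biobj{A}{B}}{\biobj{X}{Y}}$, and reading off representatives yields exactly the stated formula $f_0 ⨾ (\im_M ⊗ g ⊗ \im_N) ⨾ f_1 ⨾ (\im_K ⊗ \blacksquare ⊗ \im_L) ⨾ f_2$. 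Since the isomorphism is assembled entirely from \YonedaIsomorphisms{} and composition in $ℂ$, it automatically satisfies the coherence equations required of a \produoidalCategory{}, exactly as for \Cref{lemma:spliceLeftUnitor}.

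The main obstacle I expect is justifying this middle reduction rigorously, because the coend $\int^{\biobj{U}{V}}$ runs over the category $\Mctx{ℂ}$ rather than over $ℂ × ℂ\op$. A naive Yoneda reduction in $U$ and $V$ separately fails, since $V$ occurs inside the tensor $M⊗V⊗N$ and so is not a representing object. The correct route is to use the full \dinaturality{} with respect to \monoidalContext{} morphisms $(r,s) \colon \biobj{U}{V} \to \biobj{U'}{V'}$: I would first use these to reduce every class to a diagonal representative in which the unit is an identity and $g$ has been pushed into $f_0$, and then check that the residual identifications are precisely those making $f_0 ⨾ (\im_M ⊗ g ⊗ \im_N) ⨾ f_1$ well-defined and complete. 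In other words, the crux is verifying that the unit profunctor $\MC{-}{N}$ acts as a two-sided identity for this composition, which is the monoidal-context analogue of the coYoneda lemma.
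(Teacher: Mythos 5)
Your proposal is correct and takes essentially the same route as the paper's proof: unfold the sequential split, regroup the two factors $ℂ(A; M\otimes U\otimes N)\times ℂ(M\otimes V\otimes N; K\otimes X\otimes L)$ (coended over the hidden objects $M,N$) into the hom-set $\MC{\biobj{A}{K\otimes X\otimes L}}{\biobj{U}{V}}$ of $\Mctx{ℂ}$, and then collapse against the unit. The obstacle you flag at the end is dissolved by exactly that regrouping: since the outer coend is taken over $\Mctx{ℂ}$ and $\MC{-}{N}$ is a presheaf on $\Mctx{ℂ}$, the collapse is the ordinary Yoneda reduction $\int^{c}\Mctx{ℂ}(d;c)\times G(c)\cong G(d)$ applied in $\Mctx{ℂ}$ rather than in $ℂ$ (this is the paper's single $y_2$ step), so no separate diagonal-representative argument is required.
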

\begin{proof}
  We need to prove that this function is well-defined and does indeed induce an isomorphism after quotienting. We show this by constructing the isomorphism using coend calculus.
  \begin{align*}
    & \intr{\biobj{U}{V} \in \Mctx{ℂ}} \MC{\biobj{A}{B}}{\biobj{U}{V} ◁ \biobj{X}{Y}} × \MC{\biobj{U}{V}}{N} &\quad
    \bydef \\
    & \intr{\biobj{U}{V} \in \Mctx{ℂ}, \P,\Q,\R,\S \in ℂ}
    ℂ(A; \P  U  \Q) × ℂ(\P  V  \Q; \R  X \S) × ℂ(\R Y \S; B) × \MC{\biobj{U}{V}}{N} &\quad
    \bydef \\
    & \intr{\biobj{U}{V} \in \Mctx{ℂ}, \R,\S \in ℂ}
    \MC{\biobj{A}{\R  X  \S}}{\biobj{U}{V}} × ℂ(\R  Y  \S; B) × \MC{\biobj{U}{V}}{N} &\quad
    \yo2 \\
    & \intr{\R, \S \in ℂ} 
    ℂ(A; \R X \S) × ℂ(\R  Y  \S; B) &\quad
    \bydef \\
    & \MC{\biobj{A}{B}}{\biobj{X}{Y}}. \phantom{\int}
  \end{align*}

  Since it is composed exclusively from \YonedaIsomorphisms{}, it must satisfy the coherence equations of \produoidalCategories{} (\Cref{ax:def:produoidal}).
\end{proof}

\begin{lemma}[Monoidal contexts sequential right unitor] \label{lemma:mc-rightunitseq}
  We construct a natural isomorphism
  $$(≺^ρ) : ∫^{\biobj{U}{V} \in \Mctx{ℂ}} \MC{\biobj{A}{B}}{\biobj{X}{Y} ◁ \biobj{U}{V}} × \MC{\biobj{U}{V}}{N} ≅ \MC{\biobj{A}{B}}{\biobj{X}{Y}}$$
  satisfying the coherence equations of \produoidalCategories{}. This isomorphism is defined on representatives of the equivalence class as
  \begin{align*}
    & \nmcs{f_0}{M}{N}{f_1}{K}{L}{f_2} ≺^\rho g & = \\ 
    & f_0 ⨾ (\im_M ⊗ ■ ⊗ \im_N) ⨾ f_1 ⨾ (\im_K ⊗ g ⊗ \im_L) ⨾ f_2.
  \end{align*}
\end{lemma}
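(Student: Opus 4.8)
The plan is to build $(≺^\rho)$ as the exact mirror of the left unitor from \Cref{lemma:mc-leftunitseq}: a chain of definitional rewrites and \YonedaIsomorphisms{}, so that — as with the other unitors and associators invoked in \Cref{prop:MonoidalContextProtensor} — the triangle and pentagon coherence equations hold automatically and require no separate argument. The only structural difference from the left unitor is that here the hole $\biobj{U}{V}$ to be contracted occupies the \emph{second} factor of the sequential split $\biobj{X}{Y} ◁ \biobj{U}{V}$, so the intermediate monoidal-context profunctor is recognised downstream of $f_1$ rather than upstream of it.

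First I would unfold the definitions of $\MC{-}{- ◁ -}$ and of the unit $\MC{-}{N}$, presenting the left-hand side as a coend over $\biobj{U}{V} \in \Mctx{ℂ}$ and over $\P,\Q,\R,\S \in ℂ$. The key step is to contract the $\R,\S$ coends so as to recognise $ℂ(\P Y \Q; \R U \S) × ℂ(\R V \S; B)$ as the profunctor $\MC{\biobj{\P Y \Q}{B}}{\biobj{U}{V}}$; this exposes a profunctor composition along $\Mctx{ℂ}$ to which $\yo2$ applies, with the unit $\MC{-}{N}$ in the presheaf slot and $\MC{\biobj{\P Y \Q}{B}}{-}$ playing the role of the covariant hom:
\begin{align*}
& \intr{\biobj{U}{V} \in \Mctx{ℂ}} \MC{\biobj{A}{B}}{\biobj{X}{Y} ◁ \biobj{U}{V}} × \MC{\biobj{U}{V}}{N} & \bydef \\
& \intr{\biobj{U}{V} \in \Mctx{ℂ}, \P,\Q,\R,\S \in ℂ} ℂ(A; \P X \Q) × ℂ(\P Y \Q; \R U \S) × ℂ(\R V \S; B) × \MC{\biobj{U}{V}}{N} & \bydef \\
& \intr{\biobj{U}{V} \in \Mctx{ℂ}, \P,\Q \in ℂ} ℂ(A; \P X \Q) × \MC{\biobj{\P Y \Q}{B}}{\biobj{U}{V}} × \MC{\biobj{U}{V}}{N} & \yo2 \\
& \intr{\P,\Q \in ℂ} ℂ(A; \P X \Q) × \MC{\biobj{\P Y \Q}{B}}{N} & \bydef \\
& \intr{\P,\Q \in ℂ} ℂ(A; \P X \Q) × ℂ(\P Y \Q; B) & \bydef \\
& \MC{\biobj{A}{B}}{\biobj{X}{Y}}.
\end{align*}
Refolding the bottom line against the definition of $\MC{-}{-}$ then yields the desired codomain.

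Finally I would read the composite isomorphism on representatives to confirm the stated formula. The leg $f_0$ survives untouched, the $\yo2$ reduction substitutes the unit $g$ into the $U \to V$ gap of the trailing context $f_1 ⨾ (\im_K ⊗ \blacksquare ⊗ \im_L) ⨾ f_2$, collapsing it to the single arrow $f_1 ⨾ (\im_K ⊗ g ⊗ \im_L) ⨾ f_2 \in ℂ(\M Y \N; B)$, and reassembling gives $f_0 ⨾ (\im_M ⊗ \blacksquare ⊗ \im_N) ⨾ f_1 ⨾ (\im_K ⊗ g ⊗ \im_L) ⨾ f_2$, exactly as claimed. I expect the only real obstacle to be bookkeeping: keeping the variances straight so that $\yo2$ genuinely applies — the well-definedness with respect to dinaturality in $\R,\S$ and in $\biobj{U}{V}$ is precisely what the two coend identifications encode — and making sure $g$ is filled into the correct gap rather than the one carrying $\blacksquare$. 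Since \Cref{lemma:mc-assocseq,lemma:mc-leftunitseq} already run this same template, no deeper difficulty should arise.
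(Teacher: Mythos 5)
Your proposal is correct and follows essentially the same route as the paper's proof: unfold the sequential split, recognise the $\R,\S$-coend together with the two hom-sets as the profunctor $\MC{\biobj{\P Y \Q}{B}}{\biobj{U}{V}}$, collapse the $\biobj{U}{V}$-coend against the unit by a \YonedaIsomorphism{}, and refold. Your extra intermediate line $\MC{\biobj{\P Y \Q}{B}}{N} = ℂ(\P Y \Q; B)$ is just slightly more explicit bookkeeping than the paper's, and the formula you read off on representatives matches the statement.
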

\begin{proof} As above, we do this by coend calculus:
  \begin{align*}
    & \intr{\biobj{U}{V} \in \Mctx{ℂ}} \MC{\biobj{A}{B}}{\biobj{X}{Y} ◁ \biobj{U}{V}} × \MC{\biobj{U}{V}}{N} &\quad
    \bydef \\
    & \intr{\biobj{U}{V} \in \Mctx{ℂ}, \P,\Q,\R,\S \in ℂ} ℂ(A; \P  X \Q) × ℂ(\P Y \Q; \R U \S) × ℂ(\R V  \S; B) × \MC{\biobj{U}{V}}{N} &\quad
    \bydef \\
    & \intr{\biobj{U}{V} \in \Mctx{ℂ}, \P,\Q,\R,\S \in ℂ} ℂ(A; \P X \Q) × \MC{\biobj{\P Y \Q}{B}}{\biobj{U}{V}} × \MC{\biobj{U}{V}}{N} &\quad
    \yo2 \\
    & \intr{\R,\S \in ℂ} ℂ(A; \R  X \S) × ℂ(\R Y \S; B) &\quad
    \bydef \\
    & \MC{\biobj{A}{B}}{\biobj{X}{Y}}. \phantom{\int}
  \end{align*}

  Since it is composed exclusively from \YonedaIsomorphisms{}, it must satisfy the coherence equations of \produoidalCategories{} (\Cref{ax:def:produoidal}).
\end{proof}

\begin{lemma}[Monoidal contexts parallel associator] \label{lemma:mc-assocpar}
  We construct a natural isomorphism
  $$(≺^α_2) : ∫^{\biobj{U}{V} \in \Mctx{ℂ}} \MC{\biobj{A}{B}}{\biobj{X}{Y} ⊗ \biobj{U}{V}} × \MC{\biobj{U}{V}}{\biobj{X'}{Y'} ⊗ \biobj{X''}{Y''}} ≅ ∫^{\biobj{U}{V} \in \Mctx{ℂ}} \MC{\biobj{A}{B}}{\biobj{U}{V} ⊗ \biobj{X''}{Y''}} × \MC{\biobj{U}{V}}{\biobj{X}{Y} ⊗ \biobj{X'}{Y'}} : (≺^α_1) $$
  exclusively from \YonedaIsomorphisms{}. This isomorphism is defined on representatives of the equivalence class as
  \begin{align*}
    & \nmcp{f_0}{}{}{}{f_1} ≺^α_1 \nmcp{g_0}{}{}{}{g_1} & =  \\
    & \nmcp{h_0}{}{}{}{h_1} \mathbin{|} \nmcp{j_0}{}{}{}{j_1}
  \end{align*}
  if and only if
  $$\begin{aligned}
    &f_0 ⨾ (\im_M ⊗ g_0 ⊗ \im_{N ⊗ X ⊗ O}) ⨾ (\im_{M ⊗ P} ⊗ ■ ⊗ \im_{Q} ⊗ ■ ⊗ \im_{R ⊗ N} ⊗ ■ ⊗ \im_{O}) ⨾ (\im_M ⊗ g_1 ⊗ \im_{N ⊗ Y ⊗ O}) ⨾ f_1 = \\
    &h_0 ⨾ (\im_{M ⊗ X ⊗ N} ⊗ j_0 ⊗ \im_O) ⨾ (\im_{M} ⊗ ■ ⊗ \im_{N ⊗ P} ⊗ ■ ⊗ \im_{Q} ⊗ ■ ⊗ \im_{R ⊗ O}) ⨾ (\im_{M ⊗ Y ⊗ N} ⊗ j_1 ⊗ \im_O) ⨾ h_1,
  \end{aligned}$$
\end{lemma}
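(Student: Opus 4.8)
The plan is to prove the parallel associator exactly as its sequential counterpart (\Cref{lemma:mc-assocseq}) was proven: I will exhibit both the left- and the right-hand coend as naturally isomorphic to one common set, namely the three-fold parallel split $\MC{\biobj{A}{B}}{\biobj{X}{Y} ⊗ \biobj{X'}{Y'} ⊗ \biobj{X''}{Y''}}$, i.e.\ the set of pairs $(f,g)$ with $f \in ℂ(A; M_0 X M_1 X' M_2 X'' M_3)$ and $g \in ℂ(M_0 Y M_1 Y' M_2 Y'' M_3; B)$, quotiented by dinaturality of $M_0,M_1,M_2,M_3$. Composing the two isomorphisms yields $≺^\alpha$; since every map in sight is assembled from the Yoneda isomorphisms $y_1, y_2$ and from composition in $ℂ$, the associator automatically satisfies the pentagon and the remaining produoidal coherence equations (\Cref{ax:def:produoidal}), so no separate coherence check is needed.

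For the left-hand side I would first expand both parallel splits by their defining profunctors (\Cref{defn:contextSeqProtensor}): the outer split becomes $ℂ(A; M X N U O) × ℂ(M Y N V O; B)$, dinatural in $M,N,O$, and the inner split of $\biobj{U}{V}$ becomes $ℂ(U; P X' Q X'' R) × ℂ(P Y' Q Y'' R; V)$, dinatural in $P,Q,R$. I then eliminate the intermediate object $\biobj{U}{V}$ strand by strand: on the $U$-strand, $y_1$ contracts $ℂ(A; M X N U O) × ℂ(U; P X' Q X'' R)$ to $ℂ(A; M X N P X' Q X'' R O)$, and on the $V$-strand, $y_2$ contracts $ℂ(M Y N V O; B) × ℂ(P Y' Q Y'' R; V)$ to $ℂ(M Y N P Y' Q Y'' R O; B)$. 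Relabelling the surviving dinatural objects as $M_0 := M$, $M_1 := N P$, $M_2 := Q$, $M_3 := R O$ produces exactly the common three-fold split. The right-hand side is treated by the mirror-image derivation: expanding $\MC{\biobj{A}{B}}{\biobj{U}{V} ⊗ \biobj{X''}{Y''}}$ and $\MC{\biobj{U}{V}}{\biobj{X}{Y} ⊗ \biobj{X'}{Y'}}$, contracting the $U$- and $V$-strands by $y_1,y_2$, and regrouping (now $M_0 := M P$, $M_1 := Q$, $M_2 := R N$, $M_3 := O$) lands in the same set. Chasing a representative through both composites reproduces the displayed equation relating $(f_0 \mathbin{|} f_1) ≺^\alpha (g_0 \mathbin{|} g_1)$ to $(h_0 \mathbin{|} h_1 \mathbin{|} j_0 \mathbin{|} j_1)$.

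The main obstacle --- and the point where the parallel case genuinely diverges from \Cref{lemma:mc-assocseq} --- is the elimination of the intermediate $\biobj{U}{V}$. In the sequential associator the hole $\biobj{U}{V}$ is flanked by the \emph{same} fresh objects on both strands, so $ℂ(M Y N; O U P) × ℂ(O V P; B)$ reassembles into a genuine monoidal-context corepresentable $\MC{\biobj{M Y N}{B}}{\biobj{U}{V}}$ that is collapsed by a single Yoneda reduction in $\Mctx{ℂ}$. In a parallel split the two holes interleave: $U$ is wrapped by $M X N$ while $V$ is wrapped by $M Y N$, and since the fixed objects $X$ and $Y$ differ there is no standalone $\biobj{U}{V}$-corepresentable to exploit. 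I therefore reduce strand-wise at the level of $ℂ$ instead, which forces the verification that the coend $\int^{\biobj{U}{V} \in \Mctx{ℂ}}$ of the fully expanded integrand really does compute as the iterated $ℂ$-coends over the underlying objects $U$ and $V$; this is legitimate precisely because $U$ and $V$ occupy disjoint tensor strands of the interchanged hom-sets, so the monoidal-context dinaturality linking them is already subsumed by the $ℂ$-dinaturality of the shared wrapping objects. The remaining work is careful bookkeeping of which wrapping object borders which hole when the coends are merged, which is delicate but routine.
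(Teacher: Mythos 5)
Your overall strategy---reduce both sides to the common three-fold parallel split and let coherence follow from everything being a Yoneda isomorphism---is the same as the paper's, and your target set agrees (up to fusing adjacent wrapping objects) with the paper's $\int^{\M,\M',\N',\O'} ℂ(A; \M ⊗ X ⊗ \M' ⊗ X' ⊗ \N' ⊗ X'' ⊗ \O') × ℂ(\M ⊗ Y ⊗ \M' ⊗ Y' ⊗ \N' ⊗ Y'' ⊗ \O'; B)$. The divergence, and the problem, is at the one step that actually carries the proof: the elimination of the coend over $\biobj{U}{V} \in \Mctx{ℂ}$. Your stated reason for abandoning the sequential-case method---that because $U$ is wrapped by $M ⊗ X ⊗ N$ while $V$ is wrapped by $M ⊗ Y ⊗ N$ ``there is no standalone $\biobj{U}{V}$-corepresentable to exploit''---is not correct. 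The paper manufactures exactly such a corepresentable by first Yoneda-\emph{expanding} in $ℂ$: it rewrites $ℂ(A; M ⊗ X ⊗ N ⊗ U ⊗ O)$ as $\int^{P} ℂ(A; M ⊗ X ⊗ P) × ℂ(P; N ⊗ U ⊗ O)$ and dually $ℂ(M ⊗ Y ⊗ N ⊗ V ⊗ O; B)$ as $\int^{Q} ℂ(N ⊗ V ⊗ O; Q) × ℂ(M ⊗ Y ⊗ Q; B)$, after which the factors $ℂ(P; N ⊗ U ⊗ O) × ℂ(N ⊗ V ⊗ O; Q)$, coended over $N$ and $O$, are by definition $\MC{\biobj{P}{Q}}{\biobj{U}{V}}$. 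The coend over $\biobj{U}{V}$ is then killed by the Yoneda reduction \emph{in $\Mctx{ℂ}$}, which needs no further justification, and a final Yoneda reduction removes $P$ and $Q$.

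Your replacement step---computing the $\Mctx{ℂ}$-coend as iterated $ℂ$-coends over the underlying objects $U$ and $V$ and contracting strand-wise---is where the gap sits. The coend over $\biobj{U}{V} \in \Mctx{ℂ}$ quotients by dinaturality with respect to \emph{all} monoidal contexts $u ⨾ (\im_{M'} ⊗ \blacksquare ⊗ \im_{N'}) ⨾ v$, a strictly larger class of morphisms than the pairs of plain $ℂ$-morphisms your strand-wise coends account for, so the two quotients do not agree for free. The claim that the extra identifications are ``already subsumed by the $ℂ$-dinaturality of the shared wrapping objects'' is plausible (one can decompose the action of a general context into the actions of $u$ and $v$ plus a re-bracketing of $M'$ and $N'$ into the neighbouring coend variables), but it is precisely the content of the lemma you would need to prove, and your write-up asserts it rather than establishing it. As written, the proof is incomplete at its crux; the cleanest repair is to adopt the paper's Yoneda-expansion trick, which sidesteps the comparison of coends entirely.
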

\begin{proof}
  The left hand side is isomorphic to the following set,
  \begin{align*}
    & \intr{\biobj{U}{V} \in \Mctx{ℂ}} \MC{\biobj{A}{B}}{\biobj{X}{Y} ⊗ \biobj{U}{V}} × \MC{\biobj{U}{V}}{\biobj{X'}{Y'} ⊗ \biobj{X''}{Y''}} & \bydef \\
    & \intr{\biobj{U}{V} \in \Mctx{ℂ}, \M, \N, \O \in ℂ} 
        ℂ(A; \M ⊗ X ⊗ \N ⊗ U ⊗ \O) × 
        ℂ(\M ⊗ Y ⊗ \N ⊗ V ⊗ \O; B) × 
        \MC{\biobj{U}{V}}{\biobj{X'}{Y'} ⊗ \biobj{X''}{Y''}} & \yo2 \\
    & \intr{\biobj{U}{V} \in \Mctx{ℂ},\M,\N,\O,\P,\Q \in ℂ} 
        ℂ(A; \M ⊗ X ⊗ \P) × ℂ(\P; \N ⊗ U ⊗ \O) × ℂ(\M ⊗ Y ⊗ \Q; B) \\
    & \qquad\qquad × ℂ(\N ⊗ V ⊗ \O; \Q) × \MC{\biobj{U}{V}}{\biobj{X'}{Y'} ⊗ \biobj{X''}{Y''}} & \bydef \\
    & \intr{\biobj{U}{V} \in \Mctx{ℂ}, \M,\M',\N',\O',\P,\Q\in ℂ} 
        ℂ(A; \M ⊗ X ⊗ \P) × \MC{\biobj{\P}{\Q}}{\biobj{U}{V}} × ℂ(\M ⊗ Y ⊗ \Q; B)
     × ℂ(U ; \M' ⊗ X' ⊗ \N' ⊗ X'' ⊗ \O') \\
    & \qquad\qquad × ℂ(\M' ⊗ Y' ⊗ \N' ⊗ Y'' ⊗ \O'; V) & \yo2\\
    & \intr{\M,\M',\N',\O' \in ℂ} 
        ℂ(A; \M ⊗ X ⊗ \M' ⊗ X' ⊗ \N' ⊗ X'' ⊗ \O') × 
        ℂ(\M ⊗ Y ⊗ \M' ⊗ Y' ⊗ \N' ⊗ Y'' ⊗ \O' ; B).
  \end{align*}

  \noindent In the same way, the right hand side is isomorphic to the following set,
  \begin{align*}
    & \intr{\biobj{U}{V} \in \Mctx{ℂ}} \MC{\biobj{A}{B}}{\biobj{U}{V} ⊗ \biobj{X''}{Y''}} × \MC{\biobj{U}{V}}{\biobj{X}{Y} ⊗ \biobj{X'}{Y'}} & \bydef \\
    & \intr{\biobj{U}{V} \in \Mctx{ℂ}, \M,\N,\O \in ℂ} 
        ℂ(A; \M ⊗ U ⊗ \N ⊗ X'' ⊗ \O) × 
        ℂ(\M ⊗ V ⊗ \N ⊗ Y'' ⊗ \O; B) × \MC{\biobj{U}{V}}{\biobj{X}{Y} ⊗ \biobj{X'}{Y'}} & \yo1 \\
    & \intr{\biobj{U}{V} \in \Mctx{ℂ}, \M,\N,\O,\P,\Q \in ℂ} 
        ℂ(\P; \M ⊗ U ⊗ \N) × ℂ(A; \P ⊗ X'' ⊗ \O) × ℂ(\M ⊗ V ⊗ \N ; \Q) & \bydef \\
    & \qquad × ℂ(\Q ⊗ Y'' ⊗ \O; B) × \MC{\biobj{U}{V}}{\biobj{X}{Y} ⊗ \biobj{X'}{Y'}} & \yo1 \\
    & \intr{\biobj{U}{V} \in \Mctx{ℂ},\M',\N',\O',\O,\P,\Q \in ℂ} 
        \MC{\biobj{P}{Q}}{\biobj{U}{V}} × ℂ(A; \P ⊗ X'' ⊗ \O) × ℂ(\Q ⊗ Y'' ⊗ \O; B) & 
        \bydef \\
    &\qquad × ℂ(U ; \M' ⊗ X ⊗ \N' ⊗ X' ⊗ \O') × ℂ(\M' ⊗ Y ⊗ \N' ⊗ Y' ⊗ \O'; V) & \yo1 \\
    & \intr{\M',\N',\O',\O,\P,\Q \in ℂ} 
        ℂ(A; \P ⊗ X'' ⊗ \O) × ℂ(\Q ⊗ Y'' ⊗ \O; B) \\
    &\qquad × ℂ(\P ; \M' ⊗ X ⊗ \N' ⊗ X' ⊗ \O') × 
        ℂ(\M' ⊗ Y ⊗ \N' ⊗ Y' ⊗ \O'; \Q) & \yo1 \\
    & \intr{\M',\N',\O',\O \in ℂ} 
        ℂ(A; \M' ⊗ X ⊗ \N' ⊗ X' ⊗ \O' ⊗ X'' ⊗ \O) × 
        ℂ(\M' ⊗ Y ⊗ \N' ⊗ Y' ⊗ \O' ⊗ Y'' ⊗ \O ; B). &\quad
  \end{align*}
  Composing both isomorphisms, we obtain the desired associator.
\end{proof}

\begin{lemma}[Monoidal contexts parallel left unitor] \label{lemma:mc-leftunitpar}
  We construct a natural isomorphism
  $$(≺^{\lambda}) : ∫^{\biobj{U}{V} \in \Mctx{ℂ}} \MC{\biobj{A}{B}}{\biobj{U}{V} ⊗ \biobj{X}{Y}} × \MC{\biobj{U}{V}}{N} ≅ \MC{\biobj{A}{B}}{\biobj{X}{Y}}$$
  exclusively from \YonedaIsomorphisms{}. This isomorphism is defined by
  $$\nmcp{f_0}{M}{N}{H}{f_1} ≺^{λ} g = f_0 ⨾ (\im_M ⊗ g ⊗ \im_{N⊗X'⊗O}) ⨾ (\im_{M⊗Y⊗N} ⊗ ■ ⊗ \im_{O}) ⨾ f_1.$$
\end{lemma}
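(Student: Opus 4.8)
The plan is to imitate the sequential unitor lemmas (\Cref{lemma:mc-leftunitseq,lemma:mc-rightunitseq}) and the parallel associator (\Cref{lemma:mc-assocpar}): expand both profunctors by \Cref{defn:contextSeqProtensor} and then collapse the coend over $\biobj{U}{V}$ using \YonedaIsomorphisms{}, reading the displayed formula off the reduction. Writing the parallel split out and recalling that the unit is $\MC{\biobj{U}{V}}{N} = ℂ(U;V)$, the left-hand side becomes
\[ ∫^{\biobj{U}{V} ∈ \Mctx{ℂ},\ •_1,•_2,•_3 ∈ ℂ} ℂ(A; •_1 ⊗ U ⊗ •_2 ⊗ X ⊗ •_3) × ℂ(•_1 ⊗ V ⊗ •_2 ⊗ Y ⊗ •_3; B) × ℂ(U;V). \]
The first move is to eliminate $U$ against $ℂ(U;V)$ by a \YonedaIsomorphism{}: $∫^{U} ℂ(U;V) × ℂ(A; •_1 ⊗ U ⊗ •_2 ⊗ X ⊗ •_3) ≅ ℂ(A; •_1 ⊗ V ⊗ •_2 ⊗ X ⊗ •_3)$. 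This postcomposition by $\im_{•_1} ⊗ g ⊗ \im_{•_2 ⊗ X ⊗ •_3}$ is precisely ``filling the unit hole with $g$'', matching the stated formula.

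What remains is $∫^{\biobj{U}{V},\ •_1,•_2,•_3} ℂ(A; •_1 ⊗ V ⊗ •_2 ⊗ X ⊗ •_3) × ℂ(•_1 ⊗ V ⊗ •_2 ⊗ Y ⊗ •_3; B)$, which I want to identify with $\MC{\biobj{A}{B}}{\biobj{X}{Y}}$ by taking $•_1 ⊗ V ⊗ •_2$ as the new left frame and $•_3$ as the new right frame, thereby obtaining the natural isomorphism claimed.

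I expect the main obstacle to be exactly this last identification. In the sequential case (\Cref{lemma:mc-leftunitseq}) the unit hole $\biobj{U}{V}$ carries a clean frame $(•_1,•_2)$ disjoint from the surviving hole, so the integrand literally contains a sub-context $\MC{\biobj{A}{\cdots}}{\biobj{U}{V}}$ and a single Yoneda reduction in $\Mctx{ℂ}$ collapses $\biobj{U}{V}$. For the parallel tensor this is impossible: the two holes share the inter-hole frame object $•_2$, so the frame around $U$ on the input, namely $(•_1, •_2 ⊗ X ⊗ •_3)$, differs from the frame around $V$ on the output, $(•_1, •_2 ⊗ Y ⊗ •_3)$, by the intervening $X$ versus $Y$, and no sub-context $\MC{\cdots}{\biobj{U}{V}}$ can be extracted. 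Consequently the collapse of $•_1 ⊗ V ⊗ •_2$ into a single frame object is \emph{not} a plain $ℂ$-level \YonedaIsomorphism{}: over $ℂ$ alone the coend $∫^{•_1,V,•_2}$ of a functor of $•_1 ⊗ V ⊗ •_2$ need not agree with the corresponding single-variable coend (it is in general strictly larger). The identification is valid only because the outer coend runs over $\biobj{U}{V} ∈ \Mctx{ℂ}$: the dinaturality of $\Mctx{ℂ}$ slides frame objects through the hole (a $\Mctx{ℂ}$-morphism may tensor with arbitrary objects on either side of it), and this is exactly the relation equating the various $(•_1,V,•_2)$ decompositions with the single frame object $•_1 ⊗ V ⊗ •_2$. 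I will therefore carry out the final step as a \YonedaIsomorphism{} in $\Mctx{ℂ}$ rather than in $ℂ$, verify that the composite is the displayed formula, and conclude --- since every step is a \YonedaIsomorphism{} --- that the unitor satisfies the \produoidal{} coherence equations of \Cref{ax:def:produoidal}.
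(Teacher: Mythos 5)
Your target and your diagnosis of the crux are both right, but the argument as proposed has a genuine gap, and it sits exactly where you flagged it: the two key steps are performed in an order that leaves neither of them licensed. The opening move, collapsing $U$ against $ℂ(U;V)$ by a \YonedaIsomorphism{} in $ℂ$, presupposes that the coend over $\biobj{U}{V} \in \Mctx{ℂ}$ can be evaluated as an iterated coend over $U \in ℂ$ and $V \in ℂ$ separately. It cannot: $\Mctx{ℂ}$ has the same objects as $ℂ \times ℂ^{op}$ but strictly more morphisms (a morphism is a whole monoidal context, not a pair of arrows), so the $\Mctx{ℂ}$-coend is a further quotient of the $ℂ \times ℂ^{op}$-coend and Fubini does not apply; at best your first step computes the coarser coend and silently drops the extra dinaturality relations. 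Symmetrically, the closing move cannot be ``a \YonedaIsomorphism{} in $\Mctx{ℂ}$'': a Yoneda reduction needs a representable factor $\Mctx{ℂ}(-;-)$ in the integrand, and after your first step none is present. What remains is only the canonical comparison map from $\int^{V,P,Q,R} ℂ(A; P ⊗ V ⊗ Q ⊗ X ⊗ R) \times ℂ(P ⊗ V ⊗ Q ⊗ Y ⊗ R; B)$ to $\MC{\biobj{A}{B}}{\biobj{X}{Y}}$, which, as you correctly observe, is not injective over $ℂ$ alone; proving that the discarded $\Mctx{ℂ}$-relations are exactly what makes it invertible is the substance of the lemma, and your proposal defers this rather than establishing it. This also undercuts your last sentence: since your chain contains a step that is not a \YonedaIsomorphism{}, the coherence conclusion does not follow as stated.

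The paper's own proof reverses the order and thereby avoids both problems. It first applies a Yoneda \emph{expansion} at the level of $ℂ$, rewriting $ℂ(A; P ⊗ U ⊗ Q ⊗ X ⊗ R) \cong \int^{S} ℂ(A; S ⊗ X ⊗ R) \times ℂ(S; P ⊗ U ⊗ Q)$ and dually $ℂ(P ⊗ V ⊗ Q ⊗ Y ⊗ R; B) \cong \int^{T} ℂ(P ⊗ V ⊗ Q; T) \times ℂ(T ⊗ Y ⊗ R; B)$, so that the integrand visibly contains the representable $\int^{P,Q} ℂ(S; P ⊗ U ⊗ Q) \times ℂ(P ⊗ V ⊗ Q; T) = \MC{\biobj{S}{T}}{\biobj{U}{V}}$. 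Only then does it collapse the coend over $\biobj{U}{V}$, now by a genuine Yoneda reduction in $\Mctx{ℂ}$ against the presheaf $\MC{\biobj{U}{V}}{N}$ in its contravariant argument, yielding $ℂ(S;T)$; one final $ℂ$-level reduction in $T$ gives $\MC{\biobj{A}{B}}{\biobj{X}{Y}}$. If you restructure your argument along these lines, every step really is a \YonedaIsomorphism{}, the displayed formula falls out of the composite, and your concluding coherence claim goes through.
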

\begin{proof}
  \begin{align*}
    & \intr{\biobj{U}{V} \in \Mctx{ℂ}} 
    \MC{\biobj{A}{B}}{\biobj{U}{V} ⊗ \biobj{X}{Y}} × \MC{\biobj{U}{V}}{N} & \bydef \\
      & \intr{\biobj{U}{V} \in \Mctx{ℂ}, \P,\Q,\R \in ℂ} 
      ℂ(A; \P ⊗ U ⊗ \Q ⊗ X ⊗ \R) × ℂ(\P ⊗ V ⊗ \Q ⊗ Y ⊗ \R; B) × ℂ(U; V) & \yo1 \\
    & \intr{\biobj{U}{V} \in \Mctx{ℂ}, \P,\Q,\R,\S,\T \in ℂ} 
        ℂ(A; \S ⊗ X ⊗ \R) × ℂ(\S; \P ⊗ U ⊗ \Q) × ℂ(\P ⊗ V ⊗ \Q; \T) \\
    &\qquad × ℂ(\T ⊗ Y ⊗ \R; B) × ℂ(U; V) & \bydef \\
    & \intr{\biobj{U}{V} \in \Mctx{ℂ}, \R,\S,\T \in ℂ} 
      ℂ(A; \S ⊗ X ⊗ \R) × \MC{\biobj{S}{T}}{\biobj{U}{V}} × ℂ(\T ⊗ Y ⊗ \R; B) ×
      ℂ(U; V) &\yo1 \\
    & \intr{\S,\R \in ℂ} ℂ(A; \S ⊗ X ⊗ \R) × ℂ(\S ⊗ Y ⊗ \R; B) & \bydef \\
        & \MC{\biobj{A}{B}}{\biobj{X}{Y}}. \phantom{\int}
  \end{align*}
\end{proof}

\begin{lemma}[Monoidal contexts parallel right unitor] \label{lemma:mc-rightunitpar}
  We construct a natural isomorphism
  $$(≺^ρ) : ∫^{\biobj{U}{V} \in \Mctx{ℂ}} \MC{\biobj{A}{B}}{\biobj{X}{Y} ⊗ \biobj{U}{V}} × \MC{\biobj{U}{V}}{N} ≅ \MC{\biobj{A}{B}}{\biobj{X}{Y}}$$
  exclusively from \YonedaIsomorphisms{}. This isomorphism is defined by
  $$\nmcp{f_0}{M}{N}{H}{f_1} ≺^{ρ} g = f_0 ⨾ (\im_M ⊗ ■ ⊗ \im_{N⊗Y'⊗O})  ⨾ (\im_{M⊗X⊗N} ⊗ g ⊗ \im_O) ⨾ f_1.$$
\end{lemma}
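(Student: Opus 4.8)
The plan is to prove this isomorphism by exactly the same coend (Yoneda) calculus used for the parallel left unitor in \Cref{lemma:mc-leftunitpar}, only with the unit-filled hole sitting on the \emph{right} of the parallel split rather than on the left. First I would unfold the left-hand side with the definition of parallel splits, writing the integrand as $\MC{\biobj{A}{B}}{\biobj{X}{Y} ⊗ \biobj{X'}{Y'}} \cong ∫^{M,N,O} ℂ(A; M ⊗ X ⊗ N ⊗ X' ⊗ O) × ℂ(M ⊗ Y ⊗ N ⊗ Y' ⊗ O; B)$ and pairing it with $\MC{\biobj{X'}{Y'}}{N} = ℂ(X';Y')$ under the coend $∫^{\biobj{X'}{Y'} \in \Mctx{ℂ}}$.

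Next I would re-split the two boundary morphisms of $ℂ$ by Yoneda reduction: introduce a fresh object $S$ to factor $ℂ(A; M ⊗ X ⊗ N ⊗ X' ⊗ O) \cong ∫^{S} ℂ(A; M ⊗ X ⊗ S) × ℂ(S; N ⊗ X' ⊗ O)$, and a fresh object $T$ to factor the codomain morphism as $∫^{T} ℂ(N ⊗ Y' ⊗ O; T) × ℂ(M ⊗ Y ⊗ T; B)$. Binding the frame objects $N,O$ under their coend, the two inner factors $ℂ(S; N ⊗ X' ⊗ O) × ℂ(N ⊗ Y' ⊗ O; T)$ are by definition precisely the monoidal context $\MC{\biobj{S}{T}}{\biobj{X'}{Y'}}$, so the whole expression becomes $∫^{M,S,T, \biobj{X'}{Y'}} ℂ(A; M ⊗ X ⊗ S) × \MC{\biobj{S}{T}}{\biobj{X'}{Y'}} × \MC{\biobj{X'}{Y'}}{N} × ℂ(M ⊗ Y ⊗ T; B)$.

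The contraction then runs in two Yoneda steps, which I would bundle into a single reduction as in \Cref{lemma:mc-leftunitpar}: the coend over $\biobj{X'}{Y'}$ collapses $\MC{\biobj{S}{T}}{\biobj{X'}{Y'}} × \MC{\biobj{X'}{Y'}}{N}$ to $\MC{\biobj{S}{T}}{N} = ℂ(S;T)$, and the coend over $T$ then collapses $ℂ(S;T) × ℂ(M ⊗ Y ⊗ T; B)$ to $ℂ(M ⊗ Y ⊗ S; B)$, leaving $∫^{M,S} ℂ(A; M ⊗ X ⊗ S) × ℂ(M ⊗ Y ⊗ S; B) = \MC{\biobj{A}{B}}{\biobj{X}{Y}}$. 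Since every map in the chain is a Yoneda isomorphism, the composite automatically respects the triangle and pentagon coherences of a \produoidalCategory{}, so no separate coherence verification is needed beyond noting that the construction uses only \YonedaIsomorphisms{}.

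The step I expect to be the main obstacle is justifying the reduction over $\biobj{X'}{Y'}$: this is not an ordinary Yoneda reduction in $ℂ$ but a density (co-Yoneda) reduction in the category $\Mctx{ℂ}$ itself, legitimate precisely because $\MC{\biobj{•}{•}}{N}$ is the unit copresheaf on $\Mctx{ℂ}$ and $\MC{\biobj{S}{T}}{\biobj{•}{•}}$ is its hom-profunctor. The only remaining bookkeeping is to check that transporting a representative $\nmcp{f_0}{M}{N}{O}{f_1}$ together with a unit $g \in ℂ(X';Y')$ through these isomorphisms yields exactly $f_0 ⨾ (\im_M ⊗ ■ ⊗ \im_{N ⊗ Y' ⊗ O}) ⨾ (\im_{M ⊗ X ⊗ N} ⊗ g ⊗ \im_O) ⨾ f_1$; this simply records ``plug $g$ into the right hole'' and is forced up to \dinaturality{} by the Yoneda lemma, exactly dually to the left-unitor computation of \Cref{lemma:mc-leftunitpar}.
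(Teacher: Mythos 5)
Your proposal is correct and follows essentially the same route as the paper's proof: unfold the parallel split, Yoneda-factor the two boundary morphisms through fresh objects $S$ and $T$, recognize the inner pair as the hom $\MC{\biobj{S}{T}}{\biobj{U}{V}}$ of $\Mctx{ℂ}$, and collapse it against the unit copresheaf by a Yoneda reduction in $\Mctx{ℂ}$. Your remark that this last step is a co-Yoneda reduction in $\Mctx{ℂ}$ rather than in $ℂ$ is exactly the point the paper's ``$\yo1$'' step silently relies on, so no gap remains.
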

\begin{proof}
  We construct the isomorphism by the following coend derivation,
  \begin{align*}
    & \intr{\biobj{U}{V} \in \Mctx{ℂ}} \MC{\biobj{A}{B}}{\biobj{X}{Y} ⊗ \biobj{U}{V}} × \MC{\biobj{U}{V}}{N} &\bydef \\
    & \intr{\biobj{U}{V} \in \Mctx{ℂ}, \P,\Q,\R \in ℂ} ℂ(A; \P ⊗ X ⊗ \Q ⊗ U ⊗ \R) × ℂ(\P ⊗ Y ⊗ \Q ⊗ V ⊗ \R; B) × ℂ(U; V) &\yo1 \\
    & \intr{\biobj{U}{V} \in \Mctx{ℂ}, \P,\Q,\R,\S,\T \in ℂ} 
      ℂ(A; \P ⊗ X ⊗ \S) × ℂ(\S; \Q ⊗ U ⊗ \R) × ℂ(\Q ⊗ V ⊗ \R; \T) \\
    &\qquad × ℂ(\P ⊗ Y ⊗ \T;  B) × ℂ(U; V) &\bydef \\
    & \intr{\biobj{U}{V} \in \Mctx{ℂ}, \P,\S,\T \in ℂ} 
        ℂ(A; \P ⊗ X ⊗ \S) × \MC{\biobj{S}{T}}{\biobj{U}{V}} × ℂ(\P ⊗ Y ⊗ \T;  B) × ℂ(U; V) & \yo1 \\
    & \intr{\P,\T \in ℂ} ℂ(A; \P ⊗ X ⊗ \T) × ℂ(\P ⊗ Y ⊗ \T; B) &\bydef \\
    & \MC{\biobj{A}{B}}{\biobj{X}{Y}}. \phantom{\int}
  \end{align*}
  This concludes the proof.
\end{proof}

\begin{lemma}[Monoidal contexts laxators] \label{lemma:mc-laxators}
  We construct the following morphisms
  $$\begin{aligned}
    & ψ_2 : \MC{\biobj{A}{B}}{\left(\biobj{X}{Y} ◁ \biobj{X'}{Y'}\right) ⊗ \left(\biobj{U}{V} ◁ \biobj{U'}{V'}\right)} → \MC{\biobj{A}{B}}{\left(\biobj{X}{Y} ⊗ \biobj{U}{V}\right) ◁ \left(\biobj{X'}{Y'} ⊗ \biobj{U'}{V'}\right)} \\
    & ψ_0 : \MC{\biobj{A}{B}}{I} → \MC{\biobj{A}{B}}{I ◁ I} \\
    & φ_2 : \MC{\biobj{A}{B}}{N ⊗ N} → \MC{\biobj{A}{B}}{N} \\
    & φ_0 : \MC{\biobj{A}{B}}{I} → \MC{\biobj{A}{B}}{N}.
  \end{aligned}$$
  exclusively from composition in $ℂ$ and \YonedaIsomorphisms{}. The laxator $ψ_2$ is defined by stating that the following equation holds
  $$\begin{aligned}
    &\nmcp{f_0}{}{}{}{f_1} \\
    &\qquad ≺₁^{ψ} \nmcs{g_0}{}{}{g_1}{}{}{g_2}  \\
    &\qquad ≺₂^{ψ} \nmcs{h₀}{}{}{h₁}{}{}{h₂} = \\
    &\nmcs{j_0}{}{}{j_1}{}{}{j_2} ｜ \\
    &\qquad \nmcp{k_0}{}{}{}{k_1} ｜ \\
    &\qquad \nmcp{l_0}{}{}{}{l_1}
  \end{aligned}$$
  if and only if
  $$\begin{aligned}
    & f_0 ⨾ (\im ⊗ g_0 ⊗ \im ⊗ h_0 ⊗ \im) ⨾ (\im ⊗ ■ ⊗ \im ⊗ ■ ⊗ \im) ⨾ (\im ⊗ g_1 ⊗ \im ⊗ h_1 ⊗ \im) ⨾ \\
    &\qquad (\im ⊗ ■ ⊗ \im ⊗ ■ ⊗ \im) ⨾ (\im ⊗ g_2 ⊗ \im ⊗ h_2 ⊗ \im) ⨾ f_1 = \\
    & j_0 ⨾ (\im ⊗ k_0 ⊗ \im) ⨾ (\im ⊗ ■ ⊗ \im ⊗ ■ ⊗ \im) ⨾ (\im ⊗ k_1 ⊗ \im)⨾ j_1 ⨾ (\im ⊗ l_0 ⊗ \im) ⨾ \\
    &\qquad (\im ⊗ ■ ⊗ \im ⊗ ■ ⊗ \im) ⨾ (\im ⊗ l_1 ⊗ \im) ⨾ j_2.
  \end{aligned}$$

  Furthermore, since $\Mctx{ℂ}$ is normal, $ψ_0, φ_2$, and $φ_0$ are isomorphisms.
\end{lemma}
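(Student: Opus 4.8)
The plan is to construct all four laxators by coend calculus, in exactly the style of the associators and unitors of the preceding lemmas (\Cref{lemma:mc-assocseq,lemma:mc-rightunitpar}) and of the produoidal splice laxator (\Cref{lemma:produoidalFirstlaxator}). In each case I would unfold the nested virtual tensors on source and target using the convention that nesting means profunctor composition, and then collapse the intermediate $\Mctx{ℂ}$-valued integration variables by Yoneda reduction, landing both sides on an explicit set of tuples of $ℂ$-morphisms quotiented by dinaturality. Because every resulting map is assembled solely from Yoneda isomorphisms and composition in $ℂ$, the coherence equations hold automatically.

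For $\psi_2$, the substantive case, I would first reduce the target $\MC{\biobj{A}{B}}{(\biobj{X}{Y} ◁ \biobj{X'}{Y'}) ⊗ (\biobj{U}{V} ◁ \biobj{U'}{V'})}$ — wait, rather the target $\MC{\biobj{A}{B}}{(\biobj{X}{Y} ⊗ \biobj{U}{V}) ◁ (\biobj{X'}{Y'} ⊗ \biobj{U'}{V'})}$: expanding the outer sequential split and the two inner parallel splits and Yoneda-reducing the two intermediate objects yields a canonical \emph{grid} set consisting of three morphisms — an opening map out of $A$, a \emph{single shared} middle map, and a closing map into $B$ — dinatural in the surrounding frame objects. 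I would then reduce the source, where the outer parallel split sits over two inner sequential splits; its canonical form instead carries \emph{two independent} middle maps, one per parallel branch. The laxator $\psi_2$ is the map that fuses these two middle maps into a single one by tensoring them within the shared frame, post-composed with the inverse of the target reduction; reading this composite on representatives reproduces exactly the stated defining equation. I would stress that $\psi_2$ is only a morphism, not an isomorphism: the target's shared middle boundary permits information to cross between the two branches, whereas the source keeps them separate, so the fusion map is not surjective.

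For $\psi_0$, $\varphi_2$ and $\varphi_0$ the point is that the unit hom-sets collapse. By definition $\MC{\biobj{A}{B}}{N} = ℂ(A;B)$, and repeated Yoneda reduction shows that the source and target of each of these three laxators likewise reduce to $ℂ(A;B)$ — the unit holes carry no content and integrate away, exactly as in the normalization theorem (\Cref{ax:th:normalizationProduoidal}). Under these identifications each of $\psi_0$, $\varphi_2$, $\varphi_0$ is the induced map between copies of $ℂ(A;B)$, hence an isomorphism; normality of $\Mctx{ℂ}$ is precisely the statement that $\varphi_0$ is such an isomorphism.

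The main obstacle I anticipate is purely the bookkeeping in the $\psi_2$ derivation: the source and target each involve many dinaturality variables (the outer frame objects together with the inner split objects of both branches), and one must sequence the Yoneda reductions so that each integrated object appears in representable position, while checking that the fusion map descends to the dinatural quotient and matches the prescribed formula. Once the two canonical grid forms are laid out correctly, the remaining verifications are routine.
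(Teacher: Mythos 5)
Your proposal is correct and follows essentially the same route as the paper: Yoneda-reduce the target of $\psi_2$ to a canonical set of three composable $ℂ$-morphisms, define $\psi_2$ as the fusion-by-tensoring map into that set composed with the inverse reduction, and dispose of $\psi_0$, $\varphi_2$, $\varphi_0$ as unitality/identity isomorphisms forced by normality. Your extra step of first reducing the source of $\psi_2$ to its own canonical form (two independent middle maps) is a harmless refinement of the paper's argument, which instead defines the fusion map directly on representatives; your version makes descent to the dinaturality quotient automatic.
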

\begin{proof}
  Consider the right hand side of $ψ_2$. It is isomorphic to the following
  \begin{align*}
    & \intr{\biobj{P}{Q}, \biobj{P'}{Q'} \in \Mctx{ℂ}} \MC{\biobj{A}{B}}{\biobj{P}{Q} ◁ \biobj{P'}{Q'}} × \MC{\biobj{P}{Q}}{\biobj{X}{Y} ⊗ \biobj{U}{V}} × \MC{\biobj{P'}{Q'}}{\biobj{X'}{Y'} ⊗ \biobj{U'}{V'}} & \bydef \\
    & \intr{\biobj{P}{Q}, \biobj{P'}{Q'} \in \Mctx{ℂ}, M,N,O ∈ ℂ} 
      ℂ(A; M⊗P⊗N) × ℂ(M⊗Q⊗N; M'⊗P'⊗N') × ℂ(M'⊗Q'⊗N';B) \\
    & \qquad × \MC{\biobj{P}{Q}}{\biobj{X}{Y} ⊗ \biobj{U}{V}} × \MC{\biobj{P'}{Q'}}{\biobj{X'}{Y'} ⊗ \biobj{U'}{V'}} & \bydef \\
    & \intr{\biobj{P}{Q}, \biobj{P'}{Q'} \in \Mctx{ℂ}, M,N,O ∈ ℂ} ℂ(A; M⊗P⊗N) × \MC{\biobj{M⊗Q⊗N}{B}}{\biobj{P'}{Q'}} × \MC{\biobj{P}{Q}}{\biobj{X}{Y} ⊗ \biobj{U}{V}} × \MC{\biobj{P'}{Q'}}{\biobj{X'}{Y'} ⊗ \biobj{U'}{V'}} & \yo1 \\
    & \intr{\biobj{P}{Q} \in \Mctx{ℂ}, M,N,O,C,D,E,F,G,H ∈ ℂ} ℂ(A; M⊗P⊗N) × ℂ(P;C⊗X⊗D⊗U⊗E) × ℂ(C⊗Y⊗D⊗V⊗E;Q) × \\ & \qquad 
    ℂ(M⊗Q⊗N; F⊗X'⊗G⊗U'⊗H) × ℂ(F⊗Y'⊗G⊗V'⊗H;B) & \bydef \\
    & \intr{\biobj{P}{Q}, \in \Mctx{ℂ}, C,D,E,F,G,H ∈ ℂ} \MC{\biobj{A}{F⊗X'⊗G⊗U''⊗H}}{\biobj{P}{Q}} × ℂ(P;C⊗X⊗D⊗U⊗E) \\
    & \qquad × ℂ(C⊗Y⊗D⊗V⊗E;Q) × ℂ(F⊗Y'⊗G⊗V'⊗H;B) & \yo1 \\
    & \intr{C,D,E,F,G,H ∈ ℂ} ℂ(A;C⊗X⊗D⊗U⊗E) × ℂ(C⊗Y⊗D⊗V⊗E;F⊗X'⊗G⊗U'⊗H) \\
    & \qquad × ℂ(F⊗Y'⊗G⊗V'⊗H;B). &\quad
  \end{align*}

  This isomorphism sends an element $(j_0 ⨾ (\im ⊗ ■ ⊗ \im) ⨾ j_1 ⨾ (\im ⊗ ■ ⊗ \im) ⨾ j_2 ｜ k_0 ⨾ (\im ⊗ ■ ⊗ \im ⊗ ■ ⊗ \im) ⨾ k_1 ｜l_0 ⨾ (\im ⊗ ■ ⊗ \im ⊗ ■ ⊗ \im) ⨾ l_1)$ to $⟨ j_0 ⨾ (\im ⊗ k_0 ⊗ \im) ⨾  (\im ⊗ ■ ⊗ \im ⊗ ■ ⊗ \im) ⨾ (\im ⊗ k_1 ⊗ \im) ⨾ j_1 ⨾ (\im ⊗ l_0 ⊗ \im) ⨾ (\im ⊗ ■ ⊗ \im ⊗ ■ ⊗ \im) ⨾ (\im ⊗ l_1 ⊗ \im) ⨾ j_2 ⟩$. Define a map from the left hand side of $ψ_2$ to this set, sending a triple
  $$\begin{aligned}
    & (f_0 ⨾ (\im ⊗ ■ ⊗ \im ⊗ ■ ⊗ \im) ⨾ f_1｜ \\
    & g_0 ⨾ (\im ⊗ ■ ⊗ \im) ⨾ g_1 ⨾ (\im ⊗ ■ ⊗ \im) ⨾ g_2｜\\
    & h_0 ⨾ (\im ⊗ ■ ⊗ \im) ⨾ h_1 ⨾ (\im ⊗ ■ ⊗ \im) ⨾ h_2) \\ 
    &\qquad ↦ \\
    & f_0 ⨾ (\im ⊗ g_0 ⊗ \im ⊗ h_0 ⊗ \im) ⨾ (\im ⊗ ■ ⊗ \im ⊗ ■ ⊗ \im) ⨾ (\im ⊗ g_1 ⊗ \im ⊗ h_1 ⊗ \im) ⨾ \\
    &\qquad (\im ⊗ ■ ⊗ \im ⊗ ■ ⊗ \im) ⨾ (\im ⊗ g_2 ⊗ \im ⊗ h_2 ⊗ \im) ⨾ f_1.
  \end{aligned}$$
  Now composing this map with the isomorphism yields the desired morphism $ψ_2$. The remaining laxators $ψ_0, φ_2,$ and $φ_0$ are isomorphisms that arise from applications of unitality or just as identities.
\end{proof}

\begin{theorem}[From \Cref{th:monoidalContextsAreANormalization}]
  \label{ax:th:monoidalContextsAreANormalization}
  \MonoidalContexts{} are the free normalization of the cofree \produoidalCategory{} over a category.
\end{theorem}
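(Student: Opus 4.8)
The plan is to compute the normalization of the spliced monoidal arrows explicitly, using the formula of \Cref{th:normalizationProduoidal}, and to recognize the result term by term as the monoidal contexts of \Cref{defn:contextSeqProtensor}. Since $𝓣ℂ$ is a produoidal category (\Cref{prop:spliceIsProduoidal}), \Cref{th:normalizationProduoidal} applies and presents $𝓝𝓣ℂ$ as the Kleisli category of the promonad $𝓣ℂ(•; N ⊗ • ⊗ N)$, with its splits and units given by the displayed formulas. Two features of $𝓣ℂ$ make the computation tractable: its sequential unit satisfies $𝓣ℂ(\biobj{A}{B}; N) = ℂ(A;B)$, and its parallel splitting is representable, so each profunctor appearing in the normalization formula is a nested parallel tensor that can be expanded by the nesting convention and then contracted by coend calculus.

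First I would treat the morphism profunctor. Expanding $𝓝𝓣ℂ(\biobj{A}{B};\biobj{X}{Y}) = 𝓣ℂ(\biobj{A}{B}; N ⊗ \biobj{X}{Y} ⊗ N)$ via the nesting convention as a threefold parallel split and substituting the representable parallel structure yields a coend over three pairs of objects of a product of hom-sets of $ℂ$; the two sequential-unit factors of the form $ℂ(P;Q)$ together with the middle factor $ℂ(P;X) × ℂ(Y;Q)$ then permit a sequence of Yoneda reductions that contract all but two wildcard objects, leaving exactly $ℂ(A; •_1 ⊗ X ⊗ •_2) ⋄ ℂ(•_1 ⊗ Y ⊗ •_2; B)$, which is the defining profunctor of $\MC{\biobj{A}{B}}{\biobj{X}{Y}}$ from \Cref{def:monoidalcontext}. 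I would then repeat this for the sequential splits $𝓣ℂ(\biobj{A}{B}; (N ⊗ \biobj{X}{Y} ⊗ N) ◁ (N ⊗ \biobj{X'}{Y'} ⊗ N))$, the parallel splits $𝓣ℂ(\biobj{A}{B}; N ⊗ \biobj{X}{Y} ⊗ N ⊗ \biobj{X'}{Y'} ⊗ N)$, and the units $𝓣ℂ(\biobj{A}{B}; N) = ℂ(A;B)$, in each case recovering the corresponding set of \Cref{defn:contextSeqProtensor}. These derivations are mechanical once the variances in each coend are tracked, so they belong in the appendix.

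It then remains to see that these component bijections assemble into an isomorphism of \emph{normal produoidal categories}, not merely a family of set bijections. Here I would use that every structure map on both sides is built solely from Yoneda reductions, products, and composition in $ℂ$: the associators, unitors, and laxators of $𝓝𝓣ℂ$ are constructed in \Cref{th:normalizationProduoidal} out of the laxators and coherences of $𝓣ℂ$, which are themselves Yoneda-built, while those of $\Mctx{ℂ}$ are constructed in \Cref{prop:MonoidalContextProtensor} out of Yoneda isomorphisms and composition. By the uniqueness clause of the Yoneda lemma, any two formal diagrams of Yoneda reductions with the same boundary agree, so the identifying bijections automatically commute with all the structure, yielding a produoidal isomorphism. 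Combined with the characterizations of $𝓣ℂ$ as the cofree produoidal over $ℂ$ (\Cref{prop:produoidalSpliceContour}) and of $𝓝$ as the free normalization (\Cref{th:freeNormalProduoidal}), this establishes the slogan that $\Mctx{ℂ}$ is the free normalization of the cofree produoidal.

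The main obstacle I expect is bookkeeping rather than conceptual difficulty: keeping the dinaturality and coend variables straight through the nesting convention for virtual structures, and in particular applying the contractions of the various $N$-factors in a consistent order and with the correct variance, so that the surviving wildcard objects $•_i$ land in precisely the positions prescribed by \Cref{defn:contextSeqProtensor}. The parallel and sequential split cases are the most delicate, since they involve five and six boundary objects respectively before reduction, and one must verify that the reductions commute past one another so that the final profunctor is independent of the order chosen.
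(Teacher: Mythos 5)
Your proposal is correct and follows essentially the same route as the paper's own proof: the appendix proof of \Cref{ax:th:monoidalContextsAreANormalization} likewise instantiates the normalization formula of \Cref{th:normalizationProduoidal} at $𝓣ℂ$, expands $𝓣ℂ(\biobj{A}{B}; N ⊗ \biobj{X}{Y} ⊗ N)$ by definition, and contracts the sequential-unit factors by Yoneda reduction to land on the defining profunctor of $\Mctx{ℂ}$, noting that the remaining profunctors "follow a similar reasoning." Your additional remarks on why the bijections respect the produoidal structure (everything being Yoneda-built) make explicit a point the paper leaves implicit, but do not change the argument.
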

\begin{proof} 
  We already know that the normalization procedure yields the free normalization over a produoidal category. It is only left to note that this is exactly the category we have explicitly constructed in this section.

  This amounts to proving that the \produoidalCategory{} of \monoidalContexts{} is precisely the normalization of the \hyperlink{linkProduoidalSplice}{produoidal category of spliced arrows}.
  We do so for morphisms, the rest of the proof is similar.
  \begin{align*}
    & 𝓝𝓢{ℂ} \left( {\biobj{A}{B}};{\biobj{X}{Y}} \right) \phantom{\int}
    & \bydef \\
    & 𝓢{ℂ} \left( {\biobj{A}{B}}; N \otimes {\biobj{X}{Y} \otimes N} \right)  
    & \bydef \\
    & ∫^{\biobj{U}{V}, \biobj{U'}{V;}\in 𝓢{ℂ}} 
      𝓢ℂ\left({\biobj{A}{B}};{\biobj{U}{V} \otimes \biobj{X}{Y} \otimes \biobj{U'}{V'}}\right) × 
      𝓢ℂ\left({\biobj{U}{V}}; N \right) ×
      𝓢ℂ\left({\biobj{U'}{V'}}; N \right)
    & \bydef \\
    & ∫^{\biobj{U}{V}, \biobj{U'}{V'} \in 𝓢{ℂ}} 
      ℂ(A ; U \otimes X \otimes U') ×
      ℂ(V \otimes Y \otimes V' ; B) × 
      ℂ\left(U ; V \right) ×
      ℂ\left(U' ; V' \right)
    & \bydef \\
    & ∫^{\U, \V, \U', \V' \in ℂ} 
      ℂ(A ; \U \otimes X \otimes \U') ×
      ℂ(\V \otimes Y \otimes \V' ; B) × 
      ℂ\left(\U ; \V \right) ×
      ℂ\left(\U' ; \V' \right)
    & \yo1 \\
    & ∫^{\U,\U' \in ℂ} 
      ℂ\left( A ; \U \otimes X \otimes \U'\right) × 
      ℂ(\U \otimes Y \otimes \U' ; B)
    & \bydef \\
    & 𝓜ℂ\left( {\biobj{A}{B}};{\biobj{X}{Y}} \right)
  \end{align*}
  The rest of the profunctors follow a similar reasoning.
\end{proof}

\clearpage %
\section{Monoidal Lenses}
\begin{proposition}[From \Cref{prop:monoidalLensesProduoidal}] \label{ax:prop:monoidalLensesProduoidal}
  \MonoidalLenses{} form a normal symmetric produoidal category with the following morphisms, units, sequential and parallel splits.
  $$\begin{aligned}
    \LC{\biobj{A}{B}}{\biobj{X}{Y}} =\ &ℂ(A ; • ⊗ X) \diamond ℂ(• ⊗ Y ; B); \\
    \LC{\biobj{A}{B}}{N} =\ &ℂ(A ; B); \\
    \LC{\biobj{A}{B}}{\biobj{X}{Y} \triangleleft \biobj{X'}{Y'}} =\ &
      ℂ(A ; •^1 ⊗ X)\ \diamond 
      ℂ(•^1 ⊗ Y; •^2 ⊗ X') \diamond 
      ℂ(•^2 ⊗ Y' ; B); \\
    \LC{\biobj{A}{B}}{\biobj{X}{Y} ⊗ \biobj{X'}{Y'}} =\ &
      ℂ(A ; •^1 ⊗ X ⊗ X')\diamond
      ℂ(•^1 ⊗ Y ⊗ Y'; B).
  \end{aligned}$$
\end{proposition}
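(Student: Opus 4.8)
The plan is to realize $\mathcal{L}\mathbb{C}$ as the symmetric normalization of the cofree produoidal category $\mathcal{T}\mathbb{C}$ of spliced monoidal arrows, and to transport the normal symmetric produoidal structure guaranteed by \Cref{th:symNormalizationProduoidal} along an isomorphism of profunctors. First I would observe that $\mathcal{T}\mathbb{C}$ is not merely produoidal (\Cref{prop:spliceIsProduoidal}) but \emph{symmetric} produoidal whenever $\mathbb{C}$ is symmetric monoidal: since $\mathcal{T}\mathbb{C}(\biobj{A}{B};\biobj{X}{Y}\otimes\biobj{X'}{Y'}) = \mathbb{C}(A;X\otimes X')\times\mathbb{C}(Y\otimes Y';B)$, the symmetry of $\mathbb{C}$ induces $\sigma\colon \mathcal{T}\mathbb{C}(\biobj{A}{B};\biobj{X}{Y}\otimes\biobj{X'}{Y'}) \cong \mathcal{T}\mathbb{C}(\biobj{A}{B};\biobj{X'}{Y'}\otimes\biobj{X}{Y})$ componentwise, and the symmetry and hexagon axioms of \Cref{def:symmetricProduoidal} follow directly from those of $\mathbb{C}$. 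By \Cref{th:symNormalizationProduoidal}, the Kleisli category of the promonad $\mathcal{N}_\sigma\mathcal{T}\mathbb{C}(\bullet;\bullet) = \mathcal{T}\mathbb{C}(\bullet; N\otimes\bullet)$ is then automatically a normal symmetric produoidal category with the splits and units displayed in that theorem.

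The remaining, routine work is a coend computation showing $\mathcal{N}_\sigma\mathcal{T}\mathbb{C} \cong \mathcal{L}\mathbb{C}$ on each of the four profunctors, exactly paralleling \Cref{ax:th:monoidalContextsAreANormalization}. For morphisms:
\begin{align*}
& \mathcal{N}_\sigma\mathcal{T}\mathbb{C}\left(\biobj{A}{B};\biobj{X}{Y}\right) \\
&\quad = \int^{\biobj{U}{V}\in\mathcal{T}\mathbb{C}} \mathcal{T}\mathbb{C}\left(\biobj{A}{B};\biobj{U}{V}\otimes\biobj{X}{Y}\right) \times \mathcal{T}\mathbb{C}\left(\biobj{U}{V};N\right) \\
&\quad = \int^{U,V\in\mathbb{C}} \mathbb{C}(A;U\otimes X)\times\mathbb{C}(V\otimes Y;B)\times\mathbb{C}(U;V) \\
&\quad \cong \int^{U\in\mathbb{C}} \mathbb{C}(A;U\otimes X)\times\mathbb{C}(U\otimes Y;B) = \LC{\biobj{A}{B}}{\biobj{X}{Y}},
\end{align*}
the last step being a Yoneda reduction over $V$. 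The sequential-split case first unfolds the nested virtual structure $(N\otimes\biobj{X}{Y})\triangleleft(N\otimes\biobj{X'}{Y'})$ as a profunctor composition, substitutes the morphism reduction just obtained into each factor, and then eliminates the mediating objects by Yoneda; the parallel-split case reduces just like the morphism case but carries $X\otimes X'$ and $Y\otimes Y'$ through; and both units collapse to $\mathbb{C}(A;B)$ since $\mathcal{T}\mathbb{C}(\biobj{A}{B};N) = \mathbb{C}(A;B)$.

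The main obstacle is bookkeeping rather than conceptual: correctly unfolding the sequential split, which by the nesting convention is a coend over the four mediating objects $P,Q,P',Q'$ of a triple product of profunctors, and matching residuals so the coend telescopes to the two-residual form $\int^{U,U'}\mathbb{C}(A;U\otimes X)\times\mathbb{C}(U\otimes Y;U'\otimes X')\times\mathbb{C}(U'\otimes Y';B)$. The one conceptual point worth flagging is why lenses carry a \emph{single} residual $\bullet\otimes X$ rather than the two-sided $\bullet_1\otimes X\otimes\bullet_2$ of monoidal contexts: this is precisely the effect of using the symmetric normalization $\mathcal{N}_\sigma\mathbb{V}(\bullet;\bullet) = \mathbb{V}(\bullet;N\otimes\bullet)$ in place of the two-sided $\mathbb{V}(\bullet;N\otimes\bullet\otimes N)$, the symmetry of $\mathbb{C}$ being what allows the two copies of $N$ to be merged into one. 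Since every isomorphism above is assembled solely from Yoneda reductions and the symmetry of $\mathbb{C}$, the transported associators, unitors, laxators, and symmetry satisfy all coherence equations automatically.
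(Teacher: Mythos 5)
Your proposal is correct, but it proves the proposition by a genuinely different route than the paper does. The paper's own proof is a direct verification: it explicitly constructs the sequential and parallel associators, the unitors, the symmetry, and the laxators for the lens profunctors by coend calculus (its Lemmas on the sequential/parallel associators and unitors and the symmetry), and then invokes the fact that anything assembled from Yoneda reductions and symmetries of $ℂ$ automatically satisfies the produoidal coherence equations. You instead obtain the entire structure for free from \Cref{th:symNormalizationProduoidal} applied to the (symmetric) produoidal category $𝓣ℂ$ of spliced monoidal arrows, and transport it along the profunctor isomorphism $𝓝_σ𝓣ℂ ≅ 𝓛ℂ$. That identification is exactly the computational content of the paper's \Cref{th:lensesuniversal} (proved separately, after this proposition), so in effect you fold the proposition and the universal characterization into a single argument. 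Your route requires one small piece of groundwork the paper leaves implicit — that $𝓣ℂ$ is \emph{symmetric} produoidal when $ℂ$ is symmetric monoidal, which follows componentwise from representability of its parallel splits, as you note — and it buys economy plus the universal property as an immediate corollary. What it gives up is the explicit closed-form description of the structure maps (e.g.\ the concrete action of the associators and laxators on representatives), which the paper's hands-on construction produces and which is what gets used in the protocol-analysis examples; to recover those you would still have to chase elements through the Yoneda reductions. Both arguments are sound; there is no circularity in yours, since \Cref{th:symNormalizationProduoidal} is established independently of anything about lenses.
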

\begin{proof}
  \Cref{lemma:mlens-assocseq,lemma:mlens-assocpar} construct the associators, and \Cref{lemma:mlens-unitseq,lemma:mlens-unitpar} define the unitors. \Cref{lemma:mlens-sym} constructs the symmetry. As they are all constructed with \YonedaIsomorphisms{} and symmetries, they must satisfy the coherence equations. Finally, the laxators are constructed in much the same way as in \Cref{lemma:mc-laxators}. %
\end{proof}

\begin{lemma}[Monoidal lenses sequential associator] \label{lemma:mlens-assocseq}
  We construct a natural isomorphism
  $$(≺^α_2) : ∫^{\biobj{U}{V} \in 𝓛{ℂ}} \LC{\biobj{A}{B}}{\biobj{X}{Y} ◁ \biobj{U}{V}} × \LC{\biobj{U}{V}}{\biobj{X'}{Y'} ◁ \biobj{X''}{Y''}} ≅ 
  ∫^{\biobj{U}{V} \in \Mctx{ℂ}} \LC{\biobj{A}{B}}{\biobj{U}{V} ◁ \biobj{X''}{Y''}} × \LC{\biobj{U}{V}}{\biobj{X}{Y} ◁ \biobj{X'}{Y'}} : (≺^α_1)$$
  exclusively from \YonedaIsomorphisms{}. %
\end{lemma}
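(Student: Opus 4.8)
The plan is to follow exactly the strategy of \Cref{lemma:mc-assocseq}, adapted to the simpler one-sided shape of lenses: I will exhibit both sides of the claimed isomorphism as canonically isomorphic to a single set — the set of \emph{three-hole lenses} from $A$ to $B$ — and then compose the two isomorphisms. Concretely, the common set is the coend
$$∫^{M_0, M_1, M_2 \in ℂ} ℂ(A; M_0 ⊗ X) × ℂ(M_0 ⊗ Y; M_1 ⊗ X') × ℂ(M_1 ⊗ Y'; M_2 ⊗ X'') × ℂ(M_2 ⊗ Y''; B),$$
which is the evident lens with holes typed by $\biobj{X}{Y}$, $\biobj{X'}{Y'}$ and $\biobj{X''}{Y''}$ in sequence. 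Since every map in sight is assembled from Yoneda reductions and composition in $ℂ$, the resulting associator will automatically satisfy the pentagon and triangle coherence equations of \produoidalCategories{}, just as in \Cref{lemma:mc-assocseq}.

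First I would unfold the left-hand side. By definition, $\LC{\biobj{A}{B}}{\biobj{X}{Y} ◁ \biobj{U}{V}}$ decomposes as a coend over $M_0$ of $ℂ(A; M_0 ⊗ X)$ with $ℂ(M_0 ⊗ Y; \bullet ⊗ U) \diamond ℂ(\bullet ⊗ V; B)$, and the second of these two factors is precisely the lens-hom $\LC{\biobj{M_0 ⊗ Y}{B}}{\biobj{U}{V}}$. Applying Yoneda reduction against the bound variable $\biobj{U}{V}$ then collapses this hom against the nested split $\LC{\biobj{U}{V}}{\biobj{X'}{Y'} ◁ \biobj{X''}{Y''}}$, leaving $∫^{M_0} ℂ(A; M_0 ⊗ X) × \LC{\biobj{M_0 ⊗ Y}{B}}{\biobj{X'}{Y'} ◁ \biobj{X''}{Y''}}$, and one final unfolding of this sequential split yields the common three-hole form above.

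Symmetrically, I would unfold the right-hand side. Here $\LC{\biobj{A}{B}}{\biobj{U}{V} ◁ \biobj{X''}{Y''}}$ exposes the prefix $ℂ(A; \bullet ⊗ U) \diamond ℂ(\bullet ⊗ V; M_2 ⊗ X'')$, which is the lens-hom $\LC{\biobj{A}{M_2 ⊗ X''}}{\biobj{U}{V}}$; Yoneda reduction against $\biobj{U}{V}$ now contracts it against $\LC{\biobj{U}{V}}{\biobj{X}{Y} ◁ \biobj{X'}{Y'}}$, and unfolding the remaining split again produces the same three-hole lens. Composing the left-to-common isomorphism with the inverse of the right-to-common isomorphism gives the desired associator $(≺^α_2)$, together with its inverse $(≺^α_1)$; reading off the action on representatives then matches the informal plugging-in description.

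The main obstacle, as in the monoidal context case, is bookkeeping rather than conceptual difficulty: one must track which residual object ($M_0$, $M_1$, $M_2$) is accumulated by each half of the lens and ensure that the pattern $ℂ(\,\cdot\,; \bullet ⊗ U) \diamond ℂ(\bullet ⊗ V; \,\cdot\,)$ is correctly recognised as a lens-hom \emph{before} each Yoneda reduction is applied — the asymmetry between the forward pass (covariant residual, accumulating the context on the left) and the backward pass (contravariant residual, consuming it) is where a transposition mistake is easiest to make. Once both sides are massaged into the canonical three-hole form the identification is forced, and naturality in $A$, $B$ and the hole-types follows because every step is itself natural.
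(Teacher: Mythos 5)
Your proposal is correct and follows essentially the same route as the paper's proof: both sides are reduced by Yoneda isomorphisms to the common set $\int^{P,Q,R} ℂ(A; P ⊗ X) × ℂ(P ⊗ Y; Q ⊗ X') × ℂ(Q ⊗ Y'; R ⊗ X'') × ℂ(R ⊗ Y''; B)$, with exactly the same recognition of the suffix $ℂ(P ⊗ Y; •⊗U) \diamond ℂ(•⊗V; B)$ as a lens-hom on the left and the prefix $ℂ(A; •⊗U) \diamond ℂ(•⊗V; R ⊗ X'')$ as a lens-hom on the right before each reduction. The coherence argument via construction-from-Yoneda is also the one the paper uses.
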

\begin{proof}
  Out of Yoneda reductions, we construct an isomorphism between the left hand side and a set of quadruples of morphisms. %
    \begin{align*}
    & \intr{\biobj{U}{V} \in 𝓛ℂ} \LC{\biobj{A}{B}}{\biobj{X}{Y} ◁ \biobj{U}{V}} × \LC{\biobj{U}{V}}{\biobj{X'}{Y'} ◁ \biobj{X''}{Y''}} 
    & \bydef \\
    & \intr{\biobj{U}{V} \in 𝓛ℂ, \P, \Q \in ℂ} 
      ℂ(A; \P ⊗ X) × ℂ(\P ⊗ Y; \Q ⊗ U) × ℂ(\Q ⊗ V; B) × 
      \LC{\biobj{U}{V}}{\biobj{X'}{Y'} ◁ \biobj{X''}{Y''}}
    & \bydef \\
    & \intr{\biobj{U}{V} \in 𝓛ℂ, \P, \Q, \R \in ℂ} 
      ℂ(A; \P ⊗ X) × \LC{\biobj{\P ⊗ Y}{B}}{\biobj{U}{V}} ×
      ℂ(U; \Q ⊗ X') × ℂ(\Q ⊗ Y'; \R ⊗ X'') × ℂ(\R ⊗ Y''; V)
    & \yo2 \\
    & \intr{\P,\Q,\R \in ℂ} 
      ℂ(A; \P ⊗ X) × ℂ(\P ⊗ Y; \Q ⊗ X') × 
      ℂ(\Q ⊗ Y'; \R ⊗ X'') × ℂ(\R ⊗ Y''; B).
  \end{align*}

    Out of Yoneda reductions, we construct an isomorphism between the right hand side and the same set of quadruples of morphisms. 
  \begin{align*}
    & \intr{\biobj{U}{V} \in 𝓛ℂ} \LC{\biobj{A}{B}}{\biobj{U}{V} ◁ \biobj{X''}{Y''}} × \LC{\biobj{U}{V}}{\biobj{X}{Y} ◁ \biobj{X'}{Y'}}
    & \bydef \\
    & \intr{\biobj{U}{V} \in 𝓛ℂ, \P, \Q \in ℂ} 
      ℂ(A; \Q ⊗ U) × ℂ(\Q ⊗ V; \P ⊗ X'') × ℂ(\P ⊗ Y''; B) × 
      \LC{\biobj{U}{V}}{\biobj{X}{Y} ◁ \biobj{X'}{Y'}}
    & \bydef \\
    & \intr{\biobj{U}{V} \in 𝓛ℂ, \P,\Q,\R \in ℂ} 
      \LC{\biobj{A}{\P ⊗ X''}}{\biobj{U}{V}} × ℂ(\P ⊗ Y''; B) × 
      ℂ(U; \Q ⊗ X) × ℂ(\Q ⊗ Y; \R ⊗ X') × ℂ(\R ⊗ Y'; V)
    & \yo2 \\
    & \intr{\P, \Q, \R \in ℂ} 
      ℂ(A; \Q ⊗ X) × ℂ(\Q ⊗ Y; \R ⊗ X') × ℂ(\R ⊗ Y'; \P ⊗ X'') × 
      ℂ(\P ⊗ Y''; B).
  \end{align*}
  Composing both isomorphisms, we obtain the desired associator. It gets defined by the following operations,
  \begin{align*}
    & (f₀ ⨾ (\im_M ⊗ ▪) ⨾ f₁ ⨾ (\im_N ⊗ ▪) ⨾ f₂) ≺^α_1 
    (g₀ ⨾ (\im_P ⊗ ▪) ⨾ g₁ ⨾ (\im_Q ⊗ ▪) ⨾ g₂) &= \\
    & f₀ ⨾ (\im_M ⊗ g₀) ⨾ (\im_{M ⊗ P} ⊗ ▪) ⨾ 
    (\im_M ⊗ g₁) ⨾ (\im_{M ⊗ Q} ⊗ ▪) ⨾ (\im_M ⊗ g₂) ⨾ f₁ ⨾
    (\im_N ⊗ ▪) ⨾ f₂. \\
    & (f₀ ⨾ (\im_M ⊗ ▪) ⨾ f₁ ⨾ (\im_N ⊗ ▪) ⨾ f₂) ≺^α_2
    (h₀ ⨾ (\im_P ⊗ ▪) ⨾ h₁ ⨾ (\im_Q ⊗ ▪) ⨾ h₂) &= \\
    & f₀ ⨾ (\im_{M} ⊗ ▪) ⨾ f₁ ⨾
    (\im_{N} ⊗ h₀) ⨾ (\im_{N ⊗ P} ⨾ ▪) ⨾ (\im_N ⊗ h₁) ⨾
    (\im_{N ⊗ Q} ⊗ ▪) ⨾ (\im_N ⊗ h₂) ⨾ f₂. \\
  \end{align*}
\end{proof}

\begin{lemma}[Monoidal lenses parallel associator] \label{lemma:mlens-assocpar}
   We construct a natural isomorphism
  $$(≺^α_2) : ∫^{\biobj{U}{V} \in \Mctx{ℂ}} \LC{\biobj{A}{B}}{\biobj{X}{Y} ⊗ \biobj{U}{V}} × \LC{\biobj{U}{V}}{\biobj{X'}{Y'} ⊗ \biobj{X''}{Y''}} ≅ ∫^{\biobj{U}{V} \in \Mctx{ℂ}} \LC{\biobj{A}{B}}{\biobj{U}{V} ⊗ \biobj{X''}{Y''}} × \LC{\biobj{U}{V}}{\biobj{X}{Y} ⊗ \biobj{X'}{Y'}} : (≺^α_1)$$
  exclusively from \YonedaIsomorphisms{}. %
\end{lemma}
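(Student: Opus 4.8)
The plan is to prove Lemma~\ref{lemma:mlens-assocpar} in exactly the same style as the sequential associator of the previous lemma (Lemma~\ref{lemma:mlens-assocseq}): construct the natural isomorphism purely by coend calculus, showing that both the left-hand and right-hand coends are isomorphic to a common set of triples of morphisms, and then compose the two isomorphisms. Since everything is assembled from Yoneda reductions, the resulting associator will automatically satisfy the coherence (pentagon and triangle) equations of a produoidal category, so no separate coherence verification is needed.

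First I would expand the left-hand side by definition. The outer profunctor $\LC{\biobj{A}{B}}{\biobj{X}{Y} \otimes \biobj{U}{V}}$ unfolds to $ℂ(A ; \bullet^1 \otimes X \otimes U) \diamond ℂ(\bullet^1 \otimes Y \otimes V ; B)$, introducing a single coend variable in $ℂ$; the inner lens $\LC{\biobj{U}{V}}{\biobj{X'}{Y'} \otimes \biobj{X''}{Y''}}$ similarly unfolds to $ℂ(U ; \bullet^2 \otimes X' \otimes X'') \diamond ℂ(\bullet^2 \otimes Y' \otimes Y'' ; V)$. I would then apply a Yoneda reduction to eliminate the coend over $\biobj{U}{V} \in 𝓛ℂ$ — since the lens hom-set $\LC{\biobj{U}{V}}{-}$ is corepresentable in the hole variable, the integral over $\biobj{U}{V}$ collapses by $\yo{}$, pasting the two halves together. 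After reduction the left-hand side is isomorphic to
\begin{align*}
& \intr{\P, \Q \in ℂ}
  ℂ(A ; \P \otimes X \otimes \Q \otimes X' \otimes X'') \times
  ℂ(\P \otimes Y \otimes \Q \otimes Y' \otimes Y'' ; B).
\end{align*}
Symmetrically, I would unfold the right-hand side $\LC{\biobj{A}{B}}{\biobj{U}{V} \otimes \biobj{X''}{Y''}} \times \LC{\biobj{U}{V}}{\biobj{X}{Y} \otimes \biobj{X'}{Y'}}$, again introducing coend variables and eliminating the $\biobj{U}{V}$ integral by Yoneda, landing on the same set of pairs of morphisms indexed over two residual objects of $ℂ$. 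Composing the forward isomorphism from the left with the inverse isomorphism to the right gives $(\alpha_2)$, and I would record its explicit action on representatives in the $f₀ ⨾ (\im \otimes \blacksquare) ⨾ f₁$ notation, paralleling the sequential case.

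The main obstacle I anticipate is purely bookkeeping rather than conceptual: because the parallel tensor of lenses uses a \emph{single} shared residual object (the parallel split is representable, per Remark~\ref{rem:lens-rep}), I must be careful that the two residuals introduced on each side — one from the outer split and one from the inner split — contract correctly under Yoneda, and that the monoidal rearrangement of the factors $\P \otimes X \otimes \Q \otimes X' \otimes X''$ is handled by the symmetry and associativity of $⊗$ in $ℂ$ consistently on both sides. Since $ℂ$ is symmetric, the reshuffling of wires needed to identify the two common forms is available, but I would keep the object-ordering conventions uniform (writing $XX'$ for $X \otimes X'$, as elsewhere) to ensure the two derived isomorphisms genuinely target the identical set before composing them.
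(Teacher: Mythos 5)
Your overall strategy --- reduce both coends to a common normal form by Yoneda reductions and then compose the two isomorphisms --- is the right one and matches the paper's general method in this section. However, the intermediate normal form you assert for the left-hand side,
$$\int^{P,Q \in ℂ} ℂ(A ; P \otimes X \otimes Q \otimes X' \otimes X'') \times ℂ(P \otimes Y \otimes Q \otimes Y' \otimes Y'' ; B),$$
is not what the reduction produces: this is the shape of the ternary parallel split of \emph{monoidal contexts} (two residuals, one of them wedged between $X$ and $X'$), whereas for lenses the parallel split keeps the hole types adjacent with a single residual on the left. Concretely, once you unfold the inner factor you should recognize it as the hom-set $\LC{\biobj{U}{V}}{\biobj{X'\otimes X''}{Y'\otimes Y''}}$ of the category $𝓛ℂ$ --- this is exactly the representability of $\otimes$ for lenses (\Cref{rem:lens-rep}), which you mention in passing but do not actually use --- and then a single co-Yoneda reduction over $\biobj{U}{V} \in 𝓛ℂ$ substitutes $\biobj{U}{V} := \biobj{X'\otimes X''}{Y'\otimes Y''}$ into the outer factor, yielding $\LC{\biobj{A}{B}}{\biobj{X\otimes X'\otimes X''}{Y\otimes Y'\otimes Y''}}$, with one residual and nothing interleaved between $X$ and $X'$. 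Splitting the coend over $𝓛ℂ$ into two independent coends over objects of $ℂ$ and contracting them separately --- the only way to arrive at your two-residual form --- is not a Yoneda reduction over $𝓛ℂ$ and discards the lens dinaturality that absorbs the inner residual into the outer one.

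This is also the source of the difficulty you anticipate at the end: if both sides are reduced correctly they land on the \emph{identical} expression $\LC{\biobj{A}{B}}{\biobj{X\otimes X'\otimes X''}{Y\otimes Y'\otimes Y''}}$, so no symmetry-mediated reshuffling is needed to identify the two normal forms (symmetry of $ℂ$ does enter for lenses, but in the parallel unitors, in the symmetry isomorphism itself, and in writing the explicit action of the associator on representatives --- not here). The paper's proof is precisely the three-step chain ``representability, Yoneda reduction over $𝓛ℂ$, representability'' applied to each side; your proposal is repaired by replacing your claimed normal form with this one.
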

\begin{proof}
  The left hand side is isomorphic to:
  \begin{align*} %
    &∫^{\biobj{U}{V} \in 𝓛ℂ} \LC{\biobj{A}{B}}{\biobj{X}{Y} ⊗ \biobj{U}{V}} × \LC{\biobj{U}{V}}{\biobj{X'}{Y'} ⊗ \biobj{X''}{Y''}} &\quad
    = & \quad\mbox{(by \hyperlink{linklensrep}{representability})} \\
    &∫^{\biobj{U}{V} \in 𝓛ℂ}  \LC{\biobj{A}{B}}{\biobj{X}{Y} ⊗ \biobj{U}{V}} × \LC{\biobj{U}{V}}{\biobj{X' ⊗ X''}{Y' ⊗ Y''}} &\quad
    ≅ & \quad\mbox{(\byYonedaReduction{})} \\
    &\LC{\biobj{A}{B}}{\biobj{X}{Y} ⊗ \biobj{X' ⊗ X''}{Y' ⊗ Y''}} &\phantom{\int}\quad
    ≅ & \quad\mbox{(by \hyperlink{linklensrep}{representability})} \\
    &\LC{\biobj{A}{B}}{\biobj{X ⊗ X' ⊗ X''}{Y ⊗ Y' ⊗ Y''}},
  \end{align*}

  and the right hand side is isomorphic to the same:
  \begin{align*} %
    &∫^{\biobj{U}{V} \in 𝓛ℂ} \LC{\biobj{A}{B}}{\biobj{U}{V} ⊗ \biobj{X''}{Y''}} × \LC{\biobj{U}{V}}{\biobj{X}{Y} ⊗ \biobj{X'}{Y'}} &\quad
    = & \quad\mbox{(by \hyperlink{linklensrep}{representability})} \\
    &∫^{\biobj{U}{V} \in 𝓛ℂ}  \LC{\biobj{A}{B}}{\biobj{U}{V} ⊗ \biobj{X''}{Y''}} × \LC{\biobj{U}{V}}{\biobj{X ⊗ X'}{Y ⊗ Y'}} &\quad
    ≅ & \quad\mbox{(\byYonedaReduction{})} \\
    &\LC{\biobj{A}{B}}{\biobj{X ⊗ X'}{Y ⊗ Y'} ⊗ \biobj{X''}{Y''}} & \phantom{\int} \quad
    ≅ & \quad\mbox{(by \hyperlink{linklensrep}{representability})} \\
    &\LC{\biobj{A}{B}}{\biobj{X ⊗ X' ⊗ X''}{Y ⊗ Y' ⊗ Y''}}.
  \end{align*}
  Composing both isomorphisms, we obtain the desired associator,
  \begin{align*}
    & (f₀ ⨾ (\im_M ⊗ ▪ ⊗ ▪) ⨾ f₁) ≺^α_1 
    (g₀ ⨾ (\im_P ⊗ ▪ ⊗ ▪) ⨾ g₁) &= \\
    & f₀ ⨾ (\im_M ⊗ g₀ ⊗ \im_{X''}) ⨾ 
    (\im_{M ⊗ P} ⊗ ▪ ⊗ ▪ ⊗ ▪) ⨾ 
    (\im_M ⊗ g₁ ⊗ \im_{Y''}) ⨾ f₁. \\
    & (f₀ ⨾ (\im_M ⊗ ▪ ⊗ ▪) ⨾ f₁) ≺^α_2
    (h₀ ⨾ (\im_Q ⊗ ▪ ⊗ ▪) ⨾ h₁) &= \\
    & f₀ ⨾ σ ⨾ (\im_M ⊗ h₀ ⊗ \im_{X}) ⨾  σ ⨾
    (\im_{M ⊗ P} ⊗ ▪ ⊗ ▪ ⊗ ▪) ⨾ σ ⨾ 
    (\im_M ⊗ h₁ ⊗ \im_{Y}) ⨾ σ ⨾ f₁.
  \end{align*}
  This concludes the proof.
  \end{proof}

\begin{lemma}[Monoidal lenses sequential right unitor] \label{lemma:mlens-unitseq}
  We construct a natural isomorphism
  $$(≺^{ρ}) : ∫^{\biobj{X'}{Y'} \in 𝓛{ℂ}} \LC{\biobj{A}{B}}{\biobj{X}{Y} ◁ \biobj{X'}{Y'}} × \LC{\biobj{X'}{Y'}}{N} ≅ \LC{\biobj{A}{B}}{\biobj{X}{Y}}$$
  exclusively from \YonedaIsomorphisms{}. %
\end{lemma}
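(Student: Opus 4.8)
The plan is to construct the isomorphism by coend calculus, reusing verbatim the template of the sequential right unitor for monoidal contexts (\Cref{lemma:mc-rightunitseq}): unfold every profunctor into its defining $ℂ$-homs and coends, regroup a sub-expression as a lens hom-set, and then collapse the outer coend over $𝓛ℂ$ by a Yoneda reduction. So it is not really a new argument, just the lens analogue of a derivation already in hand.

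Concretely, I would first unfold the left-hand side. A sequential split $\LC{\biobj{A}{B}}{\biobj{X}{Y} ◁ \biobj{X'}{Y'}}$ is by definition the coend over $\M,\N \in ℂ$ of $ℂ(A;\M ⊗ X) × ℂ(\M ⊗ Y; \N ⊗ X') × ℂ(\N ⊗ Y'; B)$, while the unit $\LC{\biobj{X'}{Y'}}{N}$ is simply $ℂ(X';Y')$. The crucial observation is that the last two factors, integrated over $\N$, reassemble into a lens hom-set: $\int^{\N} ℂ(\M ⊗ Y; \N ⊗ X') × ℂ(\N ⊗ Y'; B) = \LC{\biobj{\M ⊗ Y}{B}}{\biobj{X'}{Y'}}$. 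Since monoidal lenses form a category whose hom-sets are exactly this morphism profunctor, this is the hom-functor $𝓛ℂ(\biobj{\M ⊗ Y}{B}; -)$, so the residual coend over $\biobj{X'}{Y'} \in 𝓛ℂ$ has precisely the shape of the Yoneda reduction $y_2$ applied to the presheaf $\LC{-}{N}$.

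Carrying out this reduction and unfolding the resulting unit yields
\begin{align*}
  & \intr{\biobj{X'}{Y'} \in 𝓛ℂ} \LC{\biobj{A}{B}}{\biobj{X}{Y} ◁ \biobj{X'}{Y'}} × \LC{\biobj{X'}{Y'}}{N} & \bydef \\
  & \intr{\biobj{X'}{Y'} \in 𝓛ℂ,\ \M \in ℂ} ℂ(A; \M ⊗ X) × \LC{\biobj{\M ⊗ Y}{B}}{\biobj{X'}{Y'}} × \LC{\biobj{X'}{Y'}}{N} & \yo2 \\
  & \intr{\M \in ℂ} ℂ(A; \M ⊗ X) × ℂ(\M ⊗ Y; B) & \bydef \\
  & \LC{\biobj{A}{B}}{\biobj{X}{Y}},
\end{align*}
which tracks $(f_0 ⨾ (\im_M ⊗ \blacksquare) ⨾ f_1 ⨾ (\im_N ⊗ \blacksquare) ⨾ f_2) ≺^ρ g$ to $f_0 ⨾ (\im_M ⊗ \blacksquare) ⨾ f_1 ⨾ (\im_N ⊗ g) ⨾ f_2$; that is, the unit $g \in ℂ(X';Y')$ merely fills the trailing hole. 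Since the whole chain is built from \YonedaIsomorphisms{} and unfoldings of definitions, compatibility with the remaining produoidal coherence data is automatic, as asserted in \Cref{ax:prop:monoidalLensesProduoidal}.

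The step I would be most careful about is the middle $\yo2$ reduction: I must confirm that $\LC{\biobj{\M ⊗ Y}{B}}{-}$ really is the hom-functor of $𝓛ℂ$ and that $\LC{-}{N}$ is genuinely a presheaf on it, with the two variances in $\biobj{X'}{Y'}$ matching so that no spurious coend survives. This is where the lens case differs mildly from the context case: for contexts the split carries two independent side-objects $\P,\Q$, whereas here a single wire $\N$ threads the split, so the bookkeeping of how $\N$ and $\biobj{X'}{Y'}$ co-vary is slightly tighter. Once that is checked, the rest is purely formal.
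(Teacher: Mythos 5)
Your proposal is correct and follows essentially the same route as the paper's own proof: unfold the sequential split, reassemble the coend over $\N$ of the last two factors into the lens hom-set $\LC{\biobj{\M ⊗ Y}{B}}{\biobj{X'}{Y'}}$, and collapse the outer coend over $\biobj{X'}{Y'}$ by the Yoneda reduction $y_2$ against the unit presheaf $\LC{-}{N}$. The resulting explicit formula $f_0 ⨾ (\im_M ⊗ \blacksquare) ⨾ f_1 ⨾ (\im_N ⊗ g) ⨾ f_2$ also matches the paper's, and your variance check on the middle step is exactly the point that makes the reduction legitimate.
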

\begin{proof}
  We construct the isomorphism with the following coend calculus derivation.
  \begin{align*} %
    & \intr{\biobj{X'}{Y'} \in 𝓛ℂ} \LC{\biobj{A}{B}}{\biobj{X}{Y} ◁ \biobj{X'}{Y'}} × \LC{\biobj{X'}{Y'}}{N} & \bydef \\
    & \intr{\biobj{X'}{Y'} \in 𝓛ℂ, P,Q \in ℂ} ℂ(A; P ⊗ X) × ℂ(P ⊗ Y; Q ⊗ X') × ℂ(Q ⊗ Y'; B) × ℂ(X'; Y') &\bydef \\
    & \intr{\biobj{X'}{Y'} \in 𝓛ℂ, P \in ℂ} ℂ(A; P ⊗ X) × \LC{\biobj{P ⊗ Y}{B}}{\biobj{X'}{Y'}} × ℂ(X'; Y') &\yo2 \\
    & \intr{P \in ℂ} ℂ(A; P ⊗ X) × ℂ(P ⊗ Y; B).
  \end{align*}
  We obtain the following right unitor.
  \begin{align*}
    & (f₀ ⨾ (\im_M ⊗ ▪) ⨾ f₁ ⨾ (\im_N ⊗ ▪) ⨾ f₂) ≺^{ρ} g 
    &= \\
    & f₀ ⨾ (\im_M ⊗ ▪) ⨾ f₁ ⨾ (\im_N ⊗ g) ⨾ f₂.
  \end{align*}
  The left unitor is defined similarly.
\end{proof}

\begin{lemma}[Monoidal lenses parallel right unitor] \label{lemma:mlens-unitpar}
    We construct a natural isomorphism
  $$(≺^{ρ}) : ∫^{\biobj{X'}{Y'} \in 𝓛{ℂ}} 
      \LC{\biobj{A}{B}}{\biobj{X}{Y} ⊗ \biobj{X'}{Y'}} × \LC{\biobj{X'}{Y'}}{N} ≅ 
    \LC{\biobj{A}{B}}{\biobj{X}{Y}}$$
  exclusively from \YonedaIsomorphisms{} and symmetry of $ℂ$.
\end{lemma}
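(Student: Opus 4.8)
The plan is to mirror the sequential right-unitor derivation of \Cref{lemma:mlens-unitseq}, inserting the two symmetry-mediated coYoneda cuts needed to expose an inner lens hom. Writing out the definitions, the left-hand side is $\intr{\biobj{X'}{Y'} \in 𝓛ℂ, M \in ℂ} ℂ(A; M ⊗ X ⊗ X') × ℂ(M ⊗ Y ⊗ Y'; B) × ℂ(X'; Y')$, where the last factor is $\LC{\biobj{X'}{Y'}}{N}$. The two remaining legs do not yet form a lens hom in $\biobj{X'}{Y'}$, because their would-be contexts $M ⊗ X$ and $M ⊗ Y$ differ; so the first task is to separate the $X$ and $Y$ wires so that a common inner context survives.

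First I would apply a coYoneda reduction to each leg, using the symmetry of $ℂ$ to bring $X'$ (resp.\ $Y'$) next to $A$ (resp.\ $B$): $ℂ(A; M ⊗ X ⊗ X') ≅ \intr{S} ℂ(A; S ⊗ X) × ℂ(S; M ⊗ X')$ and $ℂ(M ⊗ Y ⊗ Y'; B) ≅ \intr{T} ℂ(M ⊗ Y'; T) × ℂ(T ⊗ Y; B)$. After these cuts, the pair $ℂ(S; M ⊗ X') × ℂ(M ⊗ Y'; T)$, coended over $M$, is by definition the lens hom $\LC{\biobj{S}{T}}{\biobj{X'}{Y'}}$, so the expression becomes $\intr{\biobj{X'}{Y'} \in 𝓛ℂ, S, T \in ℂ} ℂ(A; S ⊗ X) × \LC{\biobj{S}{T}}{\biobj{X'}{Y'}} × ℂ(T ⊗ Y; B) × ℂ(X'; Y')$.

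Next I would run the same reduction used in the sequential case: coending $\LC{\biobj{S}{T}}{\biobj{X'}{Y'}} × \LC{\biobj{X'}{Y'}}{N}$ over $\biobj{X'}{Y'} \in 𝓛ℂ$ collapses, by $\yo2$, to $\LC{\biobj{S}{T}}{N} = ℂ(S; T)$, since $\LC{\biobj{•}{•}}{N}$ is the representable presheaf of the lens category. A final Yoneda reduction over $S$ against $ℂ(S; T)$ rewrites $ℂ(A; S ⊗ X) × ℂ(S; T)$ as $ℂ(A; T ⊗ X)$, leaving $\intr{T \in ℂ} ℂ(A; T ⊗ X) × ℂ(T ⊗ Y; B) = \LC{\biobj{A}{B}}{\biobj{X}{Y}}$, as required. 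Chasing an element through the chain gives the explicit unitor: filling the second hole of $f_0 ⨾ (\im_M ⊗ ▪ ⊗ ▪) ⨾ f_1$ with $g \in ℂ(X'; Y')$ yields the lens on $\biobj{X}{Y}$ with context $M ⊗ Y'$ whose input leg is $f_0 ⨾ (\im_{M ⊗ X} ⊗ g) ⨾ σ$ and whose output leg is $σ ⨾ f_1$.

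Since every step is a composite of Yoneda isomorphisms and symmetries of $ℂ$, the resulting natural isomorphism automatically satisfies the triangle and pentagon coherence conditions, exactly as in the other lens lemmas, so no separate coherence check is needed; the left parallel unitor follows by the symmetric argument, cutting so that the inner context sits on the opposite side. The main obstacle I anticipate is purely the bookkeeping of the symmetry braidings in the two coYoneda cuts: one must keep track of which permutation of $M$, $X$, $X'$ (resp.\ $M$, $Y$, $Y'$) is applied so that the two legs genuinely present a \emph{common} inner context $M$, and so that the extracted formula is the intended fill-with-unit operation rather than a braided variant of it.
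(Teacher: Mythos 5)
Your derivation is correct and follows essentially the same route as the paper's proof: expand by definition, use the symmetry of $ℂ$ together with coYoneda cuts to expose an inner lens hom $\LC{\biobj{S}{T}}{\biobj{X'}{Y'}}$ with a common residual, collapse against the unit presheaf $\LC{-}{N}$ by Yoneda in $𝓛ℂ$, and absorb the leftover hom by one more Yoneda reduction. The extracted unitor formula agrees (up to dinaturality) with the one recorded in the paper, and the coherence argument by construction-from-Yoneda-and-symmetry is exactly the one the paper invokes.
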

\begin{proof}
  We construct the isomorphism with the following coend calculus derivations.
  \begin{align*} %
    & \intr{\biobj{X'}{Y'} \in 𝓛ℂ} \LC{\biobj{A}{B}}{\biobj{X}{Y} ⊗ \biobj{X'}{Y'}} × \LC{\biobj{X'}{Y'}}{N} \\
    & \quad = \qquad \quad\mbox{(by definition)} \\
    & \intr{\biobj{X'}{Y'} \in 𝓛ℂ, P \in ℂ} ℂ(A; P ⊗ X ⊗ X') × ℂ(P ⊗ Y ⊗ Y'; B) × ℂ(X'; Y') \\
    & \quad ≅ \qquad \quad\mbox{(by symmetry of ℂ)} \\
    & \intr{\biobj{X'}{Y'} \in 𝓛ℂ, P \in ℂ} ℂ(A; P ⊗ X' ⊗ X) × ℂ(P ⊗ Y' ⊗ Y; B) × ℂ(X'; Y') \\
    & \quad ≅ \qquad \quad\mbox{(\byYonedaReduction{})} \\
    & \intr{\biobj{X'}{Y'} \in 𝓛ℂ, P,Q,R \in ℂ} ℂ(A; Q ⊗ X) × ℂ(Q; P ⊗ X') × ℂ(P ⊗ Y'; R) × ℂ(R ⊗ Y; B) × C(X'; Y') \\
    & \quad = \qquad \quad\mbox{(by definition)} \\
    & \intr{\biobj{X'}{Y'} \in 𝓛ℂ, P,Q,R \in ℂ} ℂ(A; Q ⊗ X) × \LC{\biobj{Q}{R}}{\biobj{X'}{Y'}} × ℂ(R ⊗ Y; B) × C(X'; Y') \\
    & \quad ≅ \qquad \quad\mbox{(\byYonedaReduction{})} \\
    & \intr{Q \in ℂ} ℂ(A; Q ⊗ X) × ℂ(Q ⊗ Y; B). 
  \end{align*}
  We obtain the following right unitor.
  \begin{align*}
    & (f₀ ⨾ (\im_M ⊗ ▪ ⊗ ▪) ⨾ f₁) ≺^{ρ} g  
    & = \\
    & f₀ ⨾ (\im_M ⊗ ▪ ⊗ ▪) ⨾ (\im_M ⊗ (σ ⨾ (g ⊗ \im_X) ⨾ σ))⨾ f₁ 
    &= \\
    & f₀ ⨾ (\im_M ⊗ (σ ⨾ (g ⊗ \im_X) ⨾ σ)) ⨾ (\im_M ⊗ ▪ ⊗ ▪) ⨾ f₁.
  \end{align*}
  The left unitor is defined similarly.
\end{proof}

\begin{lemma}[Monoidal lenses symmetry] \label{lemma:mlens-sym}
  We construct the symmetries $\LC{\biobj{A}{B}}{\biobj{X}{Y} ⊗ \biobj{X'}{Y'}} ≅ \LC{\biobj{A}{B}}{\biobj{X'}{Y'} ⊗ \biobj{X}{Y}}$.
\end{lemma}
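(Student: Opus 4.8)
The plan is to build the symmetry directly from the symmetry $σ$ of the underlying symmetric \monoidalCategory{} $ℂ$, in the same spirit as the parallel unitor of \Cref{lemma:mlens-unitpar}. Recall that an element of $\LC{\biobj{A}{B}}{\biobj{X}{Y} ⊗ \biobj{X'}{Y'}}$ is a \dinaturality{} class given by an object $M \in ℂ$ together with a pair $f \in ℂ(A; M ⊗ X ⊗ X')$ and $g \in ℂ(M ⊗ Y ⊗ Y'; B)$, written $f ⨾ (\im_M ⊗ \blacksquare ⊗ \blacksquare) ⨾ g$. First I would define the candidate symmetry on representatives by
$$σ(f ⨾ (\im_M ⊗ \blacksquare ⊗ \blacksquare) ⨾ g) = f ⨾ (\im_M ⊗ σ_{X,X'}) ⨾ (\im_M ⊗ \blacksquare ⊗ \blacksquare) ⨾ (\im_M ⊗ σ_{Y',Y}) ⨾ g,$$
using the symmetries $σ_{X,X'} \colon X ⊗ X' \to X' ⊗ X$ and $σ_{Y',Y} \colon Y' ⊗ Y \to Y ⊗ Y'$ of $ℂ$. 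A quick type check shows the right-hand side is a pair $A \to M ⊗ X' ⊗ X$ and $M ⊗ Y' ⊗ Y \to B$, hence an element of $\LC{\biobj{A}{B}}{\biobj{X'}{Y'} ⊗ \biobj{X}{Y}}$.

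Since the two inserted morphisms act only on the hole legs $X ⊗ X'$ and $Y ⊗ Y'$ and leave the $M$-wire untouched, they commute with the dinaturality action of $M$, so the assignment descends to dinaturality classes and is well defined. The inverse is given by the same construction with $σ_{X',X}$ and $σ_{Y,Y'}$; that the two composites are identities follows from $σ_{X,X'} ⨾ σ_{X',X} = \im$ in $ℂ$. Naturality in all six variables $A, B, X, Y, X', Y'$ is immediate from the naturality of $σ$ in $ℂ$. Equivalently, I could phrase the whole argument through representability (\Cref{rem:lens-rep}): the statement reduces to exhibiting the isomorphism $\LC{\biobj{A}{B}}{\biobj{X ⊗ X'}{Y ⊗ Y'}} ≅ \LC{\biobj{A}{B}}{\biobj{X' ⊗ X}{Y' ⊗ Y}}$ obtained by post- and pre-composing the representable legs with $σ_{X,X'}$ and $σ_{Y',Y}$.

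The main obstacle is the last step: verifying that this $σ$ satisfies the symmetry and hexagon equations demanded by \Cref{def:symmetricProduoidal}. The symmetry equation reduces directly to $σ_{X,X'} ⨾ σ_{X',X} = \im$ in $ℂ$, as noted above. The hexagon is the delicate one, since it couples $σ$ with the parallel associator constructed in \Cref{lemma:mlens-assocpar}; however, both that associator and this symmetry are assembled purely from the associators and symmetries of $ℂ$ together with \YonedaIsomorphisms{}, so the produoidal hexagon unwinds to an instance of the hexagon coherence of the symmetric \monoidalCategory{} $ℂ$, which holds by assumption. This is exactly the pattern used for the symmetry of $𝓝_σ𝕍$ in \Cref{ax:th:symNormalizationProduoidal}, and I expect the bookkeeping of the inserted symmetries as they are transported across the associator to be the only genuinely fiddly part.
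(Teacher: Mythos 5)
Your proposal is correct and matches the paper's argument: the paper proves this lemma precisely by the chain of isomorphisms $\LC{\biobj{A}{B}}{\biobj{X}{Y} ⊗ \biobj{X'}{Y'}} ≅ \LC{\biobj{A}{B}}{\biobj{X ⊗ X'}{Y ⊗ Y'}} ≅ \LC{\biobj{A}{B}}{\biobj{X' ⊗ X}{Y' ⊗ Y}} ≅ \LC{\biobj{A}{B}}{\biobj{X'}{Y'} ⊗ \biobj{X}{Y}}$ using representability and the symmetry of $ℂ$, which is exactly your ``equivalently'' phrasing, and your explicit formula on representatives is just the unwinding of that chain. The coherence (hexagon) conditions are likewise discharged in the paper by the same meta-argument you give, namely that everything is assembled from Yoneda isomorphisms and the coherences of $ℂ$.
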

\begin{proof}
  These follow from the symmetries of $ℂ$ and \hyperlink{linklensrep}{representability of $⊗$ for monoidal lenses}.
  $$\LC{\biobj{A}{B}}{\biobj{X}{Y} ⊗ \biobj{X'}{Y'}} ≅ \LC{\biobj{A}{B}}{\biobj{X ⊗ X'}{Y ⊗ Y'}} ≅ \LC{\biobj{A}{B}}{\biobj{X' ⊗ X}{Y' ⊗ Y}} ≅ \LC{\biobj{A}{B}}{\biobj{X'}{Y'} ⊗ \biobj{X}{Y}}.$$
  This concludes the proof.
\end{proof}

\begin{proposition}[From \Cref{prop:sessionNotation}]
    \label{ax:prop:sessionNotation}
    Let $(ℂ,\otimes,I)$ be a symmetric \monoidalCategory{}. There exist monoidal functors $(\Send{}) \colon ℂ \to 𝓛ℂ$ and $(\Get{}) \colon ℂ^{op} \to 𝓛ℂ$. 
\end{proposition}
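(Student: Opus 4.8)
The plan is to write both functors down explicitly and reduce every verification to the composition formula for monoidal lenses together with coherence in $ℂ$. First I would invoke \Cref{rem:lens-rep}: since the parallel splitting structure of monoidal lenses is representable, $𝓛ℂ$ is a genuine symmetric \monoidalCategory{} with $\biobj{A}{B} ⊗ \biobj{A'}{B'} = \biobj{A ⊗ A'}{B ⊗ B'}$ and unit $\biobj{I}{I}$. I would also record the composition of lenses (as in \Cref{prop:monoidalLensesProduoidal}): filling the hole of $f₀ ⨾ (\im_M ⊗ \blacksquare) ⨾ f₁$ with $g₀ ⨾ (\im_{M'} ⊗ \blacksquare) ⨾ g₁$ yields the lens with residual $M ⊗ M'$, forward leg $f₀ ⨾ (\im_M ⊗ g₀)$ and backward leg $(\im_M ⊗ g₁) ⨾ f₁$.

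Next I would define the two functors. On objects, $\Send{A} = \biobj{A}{I}$ and $\Get{B} = \biobj{I}{B}$. On a morphism $f \colon A → A'$ of $ℂ$, I set $\Send{f}$ to be the lens with residual $I$, forward leg $A \xrightarrow{f} A' ≅ I ⊗ A'$ and backward leg the unitor $I ⊗ I ≅ I$; informally, this is the lens that performs $f$ and passes the hole through untouched. On a morphism $g \colon B' → B$ of $ℂ$, viewed as an arrow $B → B'$ of $ℂ^{op}$, I set $\Get{g}$ to be the lens $\biobj{I}{B} → \biobj{I}{B'}$ with residual $I$, trivial forward leg and backward leg $I ⊗ B' ≅ B' \xrightarrow{g} B$; contravariance is exactly that $g$ appears on the backward side.

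For functoriality I would apply the composition formula: $\Send{f} ≺ \Send{f'}$ has residual $I ⊗ I$, forward leg $f ⨾ f'$, and a unitor backward leg, and using \dinaturality{} of the residual to reduce $I ⊗ I$ to $I$ along $λ_I \colon I ⊗ I ≅ I$ identifies this with $\Send{f' ⨾ f}$; identities are sent to identities after the same reduction, and the dual computation handles $\Get{}$. For monoidality I would use the definitional equalities $\Send{(A ⊗ B)} = \Send{A} ⊗ \Send{B}$ and $\Get{(A ⊗ B)} = \Get{A} ⊗ \Get{B}$ together with $\Send{I} = \biobj{I}{I}$, so that the object-level structure maps are the coherence isomorphisms induced by $I ⊗ I ≅ I$; compatibility on morphisms amounts to checking, through representability, that the parallel tensor $\Send{f} ⊗ \Send{g}$ has forward leg $f ⊗ g$ and a unitor backward leg, matching $\Send{(f ⊗ g)}$, and dually for $\Get{}$. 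The required monoidal-functor coherence squares then collapse to coherence equations of the symmetric \monoidalCategory{} $ℂ$.

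The main obstacle will be the bookkeeping of the \dinaturality{} quotient against the nonstrict unitors $I ⊗ I ≅ I$: everything is \emph{morally} strict, but because $ℂ$ is only symmetric monoidal and not strict, I must check that each residual-reduction along $λ_I$ is compatible with the coherence data, exactly as in the proof that lenses form a category. This is the one place where care is genuinely needed; the remaining coherence verifications are routine consequences of \Cref{prop:monoidalLensesProduoidal} and of coherence in $ℂ$.
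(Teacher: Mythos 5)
Your proposal is correct and takes essentially the same route as the paper: the appendix proof defines $\Send{f} = (f ⨾ \blacksquare ⨾ \im_I)$ and $\Get{g} = (\im_I ⨾ \blacksquare ⨾ g)$, notes that $\Send{(A\otimes B)} = \Send{A}\otimes\Send{B}$ and $\Get{(A\otimes B)} = \Get{A}\otimes\Get{B}$ hold definitionally, and declares the remaining compatibility checks straightforward (citing Riley for the combined functor $ℂ\times ℂ^{op}\to 𝓛ℂ$). Your version simply spells out the residual-reduction along $\lambda_I$ and the unitor bookkeeping that the paper leaves implicit.
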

\begin{proof}
  This proof appears with a different language in the work of Riley \cite[Proposition 2.0.14]{riley2018categories}. In fact, there, the combined identity-on-objects functor $(\Send{} × \Get{}) \colon ℂ × ℂ^{op} \to 𝓛ℂ$ is shown to be monoidal.
  In our case, we can define $\Send{f} = (f ⨾ \blacksquare ⨾ \im_I)$ and $\Get{g} = (\im_I ⨾ \blacksquare ⨾ g)$, and then check that compositions and tensoring of morphisms are compatible with composition and tensoring of \monoidalLenses{}, this is straightforward.
  Moreover, as we comment in the text, we can see that, by definition, $\Send{(A \otimes B)} = \left(\biobj{A \otimes B}{I}\right) = \left(\biobj{A}{I}\right) \otimes \left(\biobj{B}{I}\right) = \Send{A} \otimes \Send{B}$ and $\Get{(A \otimes B)} = \left(\biobj{I}{A \otimes B}\right) = \left(\biobj{I}{A}\right) \otimes \left(\biobj{I}{B}\right) = \Get{A} \otimes \Get{B}$.
\end{proof}

\begin{proposition}[From \Cref{prop:cartesianlenses}]
  \label{ax:prop:cartesianlenses}
  Let $(ℂ,×,1)$ be a cartesian monoidal category.
  Its \produoidalCategory{} of lenses is given by the following \profunctors{}.
  $$\begin{aligned}
    \mathbf{Lens}\left(\biobj{A}{B}; \biobj{X}{Y} \right) &=
    ℂ(A;X) × ℂ(A × Y; B). \\
    \mathbf{Lens}\left(\biobj{A}{B}; \biobj{X}{Y} ◁ \biobj{X'}{Y'} \right) &=
    ℂ(A;X) × ℂ(A × Y; X') × ℂ(A × Y × Y' ; B). \\
    \mathbf{Lens}\left(\biobj{A}{B}; \biobj{X}{Y} ⊗ \biobj{X'}{Y'} \right) &=
    ℂ(A;X × X') × ℂ(A × Y × Y'; B). \\
    \mathbf{Lens}\left(\biobj{A}{B}\right) &=
    ℂ(A;B). \\
  \end{aligned}$$
\end{proposition}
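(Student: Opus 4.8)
The plan is to compute each of the four profunctors of Proposition \ref{prop:monoidalLensesProduoidal} directly, specializing the coend definitions to the case where the symmetric monoidal tensor $\otimes = \times$ is the categorical product. The two tools are \emph{(i)} the universal property of the cartesian product, which gives natural isomorphisms $ℂ(A; M \times Z) \cong ℂ(A; M) \times ℂ(A; Z)$, and \emph{(ii)} the Yoneda reduction $y_2$ of the excerpt, $\int^{M} ℂ(A; M) \times G(M) \cong G(A)$ for any presheaf $G$. The unit case is immediate, since $𝓛ℂ(\biobj{A}{B}; N) = ℂ(A;B)$ already by definition, with no coend to resolve. In every other case the recipe is the same: split the ``forward'' hom-sets along the product to isolate a representable factor $ℂ(\bullet; M)$ in the coend variable $M$, pull the factors not mentioning $M$ out of the coend, and collapse the coend with $y_2$ by substituting $M := A$ (possibly after the preceding factors have accumulated an extra $\times Y$ on the input side).

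For the morphism profunctor, $\int^{M} ℂ(A; M \times X) \times ℂ(M \times Y; B)$ becomes $ℂ(A;X) \times \int^{M} ℂ(A; M) \times ℂ(M \times Y; B)$ after splitting and pulling out the constant factor $ℂ(A;X)$; the remaining coend is an instance of $y_2$ with $G(M) = ℂ(M \times Y; B)$, reducing to $ℂ(A \times Y; B)$ and yielding $ℂ(A;X) \times ℂ(AY;B)$. The parallel split is handled identically, with $X$ replaced by $X \times X'$ and $Y$ by $Y \times Y'$: splitting $ℂ(A; M \times X \times X')$ isolates $ℂ(A; M)$ and leaves $ℂ(A; X \times X') \cong ℂ(A;XX')$ outside the coend, after which $y_2$ over $M$ gives $ℂ(AYY';B)$.

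The sequential split is the only case with a nested coend, $\int^{M_1, M_2} ℂ(A; M_1 \times X) \times ℂ(M_1 \times Y; M_2 \times X') \times ℂ(M_2 \times Y'; B)$, and here the order of elimination matters. I would first split both forward legs to extract $ℂ(A;M_1)$ and $ℂ(M_1 \times Y; M_2)$, then reduce over $M_2$ (whose factors are $ℂ(M_1 \times Y; M_2)$ and $ℂ(M_2 \times Y'; B)$) via $y_2$ to obtain $ℂ(M_1 \times Y \times Y'; B)$, and finally reduce over $M_1$ against the presheaf $ℂ(M_1 \times Y; X') \times ℂ(M_1 \times Y \times Y'; B)$, substituting $M_1 := A$ to produce $ℂ(A;X) \times ℂ(AY;X') \times ℂ(AYY';B)$. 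The main obstacle is really a bookkeeping subtlety rather than a conceptual one: tracking the variances and the order of the two Yoneda reductions so that the inputs $Y$ and $Y'$ migrate correctly onto the domains of the backward morphisms, since reducing $M_1$ before $M_2$ would strand a factor $\times X'$ in the wrong place. Finally, because the universal property of the product and the Yoneda isomorphisms are all natural, the resulting bijections are natural in $\biobj{A}{B}$, $\biobj{X}{Y}$ (and $\biobj{X'}{Y'}$), hence genuine isomorphisms of profunctors; the produoidal operations of Proposition \ref{prop:monoidalLensesProduoidal} then transport across these isomorphisms into the concrete cartesian form.
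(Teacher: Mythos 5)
Your proposal is correct and follows essentially the same route as the paper's proof: split the forward hom-sets using the universal property of the cartesian product, then collapse the coends by Yoneda reduction. One small correction: by Fubini for coends the order of the two reductions in the sequential case is immaterial — reducing $M_1$ first yields $ℂ(A × Y; M_2) × ℂ(A × Y; X')$ and then reducing $M_2$ still gives $ℂ(A × Y × Y'; B)$ — so no factor is stranded either way.
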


\begin{proof}
  We employ coend calculus. The derivation of the morphisms of cartesian lenses is very well-known \cite{riley2018categories,ClarkeRoman20:ProfunctorOptics}; we derive the sequential and parallel splits. Indeed, the sequential split reduces as
  \begin{align*}
    & \int^{M,N} ℂ(A; M × X) × ℂ(M × Y; N × X') × ℂ(N × Y'; B) \\
    ≅ & \quad \mbox{ (Universal property of the product) }\\
    & \int^{M,N} ℂ(A; M) × ℂ(A; X) × ℂ(M × Y; N) × ℂ(M × Y; X') × ℂ(N × Y'; B) \\
    ≅ & \quad \mbox{ (\byYonedaReduction{}) }\\
    & ℂ(A; X) × ℂ(A × Y; X') × ℂ(A × Y × Y'; B).
  \end{align*}
  And the parallel split reduces as
  \begin{align*}
    & \int^{M} ℂ(A; M × X × X') × ℂ(M × Y × Y'; B)  \\
    ≅ & \quad \mbox{ (Universal property of the product) }\\
    & \int^{M} ℂ(A; M) × ℂ(A; X × X') × ℂ(M × Y × Y'; B)  \\
    ≅ & \quad \mbox{ (\byYonedaReduction{}) }\\
    & ℂ(A; X × X') × ℂ(A × Y × Y'; B).
  \end{align*}
  The unit is just the same as in the general monoidal case.
\end{proof}

\begin{theorem}[From \Cref{th:lensesuniversal}]
  \label{ax:th:lensesuniversal}
  Monoidal lenses are the free symmetric normalization of the cofree \symmetricProduoidal{} category over a monoidal category.
\end{theorem}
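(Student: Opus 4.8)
The plan is to run the argument of Theorem~\ref{th:monoidalContextsAreANormalization} verbatim, but with the full normalization $\mathcal{N}$ replaced by the symmetric normalization $\mathcal{N}_\sigma$ of Theorem~\ref{th:symNormalizationProduoidal}. Two structural ingredients set this up. First, since $ℂ$ is symmetric (as required for Definition~\ref{def:monoidallens}) and the parallel tensor of $\mathcal{T}ℂ$ is \emph{representable} (Remark~\ref{rem:lens-rep} applies already at the level of $\mathcal{T}ℂ$), the spliced monoidal arrows $\mathcal{T}ℂ$ carry a symmetry inherited from that of $ℂ$, and the cofree adjunction of Theorem~\ref{prop:produoidalSpliceContour} lifts to say that $\mathcal{T}ℂ$ is the \emph{cofree symmetric produoidal category} over $ℂ$. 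Second, Theorem~\ref{th:sym:freeNormalProduoidal} states that $\mathcal{N}_\sigma$ is the \emph{free symmetric normalization}. Composing these two adjunctions, $\mathcal{N}_\sigma\mathcal{T}ℂ$ is by construction ``the free symmetric normalization of the cofree symmetric produoidal category over $ℂ$''. It therefore only remains to identify $\mathcal{N}_\sigma\mathcal{T}ℂ$ with the normal symmetric produoidal category of monoidal lenses $𝓛ℂ$ of Proposition~\ref{prop:monoidalLensesProduoidal}.

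First I would establish this identification on morphisms by coend calculus, using the formula $\mathcal{N}_\sigma\mathbb{V}(A;B)=\mathbb{V}(A;N\otimes B)$ together with the definitions of the parallel split and sequential unit of $\mathcal{T}ℂ$:
$$\begin{aligned}
\mathcal{N}_\sigma\mathcal{T}ℂ\left(\biobj{A}{B};\biobj{X}{Y}\right)
  &= \mathcal{T}ℂ\left(\biobj{A}{B}; N \otimes \biobj{X}{Y}\right) \\
  &= \int^{\biobj{U}{V}\in\mathcal{T}ℂ} \mathcal{T}ℂ\left(\biobj{A}{B}; \biobj{U}{V}\otimes\biobj{X}{Y}\right) \times \mathcal{T}ℂ\left(\biobj{U}{V}; N\right) \\
  &= \int^{U,V\in ℂ} ℂ(A; U\otimes X) \times ℂ(V\otimes Y; B) \times ℂ(U;V) \\
  &\cong \int^{U\in ℂ} ℂ(A; U\otimes X) \times ℂ(U\otimes Y; B)
  = \LC{\biobj{A}{B}}{\biobj{X}{Y}},
\end{aligned}$$
where the last step is a Yoneda reduction in $V$. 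The identical recipe---unfold the $\mathcal{N}_\sigma$-formula, replace each split and unit by its $ℂ$-homset definition, and reduce every unit variable by Yoneda---handles the remaining profunctors: the sequential split collapses to $ℂ(A;\bullet^1\otimes X)\diamond ℂ(\bullet^1\otimes Y;\bullet^2\otimes X')\diamond ℂ(\bullet^2\otimes Y';B)$, the parallel split to $ℂ(A;\bullet\otimes X\otimes X')\diamond ℂ(\bullet\otimes Y\otimes Y';B)$, and both units to $ℂ(A;B)$, matching Proposition~\ref{prop:monoidalLensesProduoidal} exactly.

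Because every component of this identification is assembled purely from Yoneda isomorphisms (and, for the symmetry, from the symmetry of $ℂ$), it automatically commutes with the produoidal coherence morphisms, the laxators, and the symmetry, exactly as the constructions in Theorems~\ref{th:symNormalizationProduoidal} and~\ref{ax:th:monoidalContextsAreANormalization}; hence it is an isomorphism of normal symmetric produoidal categories, not merely of underlying profunctors. The hard part will be the first ingredient rather than the reductions: I must verify that the splice--contour adjunction of Theorem~\ref{prop:produoidalSpliceContour} genuinely restricts to the symmetric setting, i.e.\ that symmetric produoidal functors out of $\mathcal{T}ℂ$ correspond bijectively and naturally to monoidal functors into $ℂ$, so that the representable symmetry on $\mathcal{T}ℂ$ is the \emph{unique} one compatible with contour. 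Once that symmetric cofree adjunction is pinned down, composing it with Theorem~\ref{th:sym:freeNormalProduoidal} and invoking the computation above completes the proof.
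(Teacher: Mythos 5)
Your proposal is correct and follows essentially the same route as the paper: cite Theorem~\ref{th:sym:freeNormalProduoidal} for the free symmetric normalization, then identify $\mathcal{N}_\sigma\mathcal{T}ℂ$ with $𝓛ℂ$ by exactly the coend reduction you write down (the paper performs the same calculation for morphisms and declares the remaining profunctors similar). The one point you flag as ``the hard part''---verifying that the splice--contour adjunction restricts to a cofree \emph{symmetric} produoidal adjunction---is in fact left implicit in the paper's own proof as well, so raising it is a fair observation about the argument rather than a deviation from it.
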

\begin{proof}
  We have already proven that the symmetric normalization procedure yields the free symmetric normalization over a symmetric produoidal category (\Cref{th:sym:freeNormalProduoidal}).

  The rest of the proof amounts to show that the normal \symmetricProduoidal{}{} category of \monoidalLenses{} is precisely the symmetric normalization of the \produoidalCategory{} of \splicedArrows{}. We do so for morphisms, the rest of the proof is similar.
  \begin{align*}
    & 𝓝_σ𝓢{ℂ} \left( {\biobj{A}{B}};{\biobj{X}{Y}} \right) \phantom{\int}
    & \bydef \\
    & 𝓢{ℂ} \left( {\biobj{A}{B}}; N \otimes \biobj{X}{Y} \right)  
    & \bydef \\
    & ∫^{\biobj{U}{V} \in 𝓢{ℂ}} 
      𝓢ℂ\left({\biobj{A}{B}};\biobj{U}{V} \otimes \biobj{X}{Y} \right) × 
      𝓢ℂ\left({\biobj{U}{V}}; N \right)
    & \bydef \\
    & ∫^{\U, \V \in ℂ} 
      ℂ(A ; \U \otimes X) ×
      ℂ(\V \otimes Y ; B) × 
      ℂ\left(\U ; \V\right)
    & \yo1 \\
    & ∫^{\U \in ℂ} 
      ℂ\left( A ; \U \otimes X\right) × ℂ(\U \otimes Y ; B)
    & \bydef \\
    & 𝓛ℂ\left( {\biobj{A}{B}};{\biobj{X}{Y}} \right)
  \end{align*}
  The rest of the profunctors follow a similar reasoning.
\end{proof}

\clearpage %
\section{Further Work}

\begin{theorem}[From \Cref{th:virtualdependence}]
\label{ax:th:virtualdependence}
Let $𝕍$ be a normal and $⊗$-symmetric \produoidalCategory{} with coends over $𝕍$ commuting with finite connected limits.
Then, $[𝕍\op, \mathbf{Set}]$ is a dependence category in the sense of Shapiro and Spivak \cite{shapiro22:duoidal}.
\end{theorem}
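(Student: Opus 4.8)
The plan is to prove the statement by transporting all the structure of $𝕍$ along Day convolution into the presheaf category $\hat{𝕍} := [𝕍\op, \mathbf{Set}]$, and then matching the resulting structure against Shapiro and Spivak's axiomatization \cite{shapiro22:duoidal}. The starting point is Day's theorem \cite{day,day_thesis}: each of the two promonoidal structures underlying the produoidal category $𝕍$ — the sequential one $𝕍(•;•◁•)$ with unit $𝕍(•;N)$, and the parallel one $𝕍(•;•⊗•)$ with unit $𝕍(•;I)$ — induces a closed monoidal structure on $\hat{𝕍}$ by convolution, for instance
$$(F ◁ G)(a) = \int^{b,c \in 𝕍} 𝕍(a; b ◁ c) × F(b) × G(c),$$
and dually for $⊗$. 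First I would invoke this to equip $\hat{𝕍}$ with two closed monoidal structures, and then use the presheaf counterpart of the produoidal laxators (\Cref{th:produoidalInduceDuoidal}), namely that $ψ_2, ψ_0, φ_2, φ_0$ lift along Day convolution, to witness that these two structures interact so as to make $\hat{𝕍}$ a closed duoidal category.

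The next step is to verify that this duoidal structure is normal and symmetric. Normality is immediate from the hypothesis that $𝕍$ is normal: the laxator $φ_0 \colon 𝕍(•;I) \to 𝕍(•;N)$ is an isomorphism of profunctors, so the two Day units $𝕍(•;I)$ and $𝕍(•;N)$ are isomorphic presheaves, and one checks that this isomorphism is exactly the normalizing comparison of the induced duoidal category. Symmetry is equally direct: Day convolution of a symmetric promonoidal structure is symmetric monoidal, so the $⊗$-symmetry $σ$ of $𝕍$ transports to a symmetry of the convolution tensor $⊗$ on $\hat{𝕍}$, satisfying the hexagon and involutivity equations precisely because $σ$ does.

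It then remains to supply the limits required by Shapiro and Spivak and to check that the dependent tensor distributes over them; this is where the standing hypothesis enters and is the crux of the argument. Since $\hat{𝕍}$ is complete, all finite connected limits exist and are computed pointwise, so it suffices that $◁$ (and, where needed, $⊗$) preserve them in each variable. Reducing $F ◁ (-)$ pointwise to the coend $\int^{b,c} 𝕍(a;b◁c) × F(b) × (-)(c)$, preservation factors through two observations: the copower $F(b) × (-)$ preserves connected limits in $\mathbf{Set}$ (connected limits commute with copowers — this is exactly why the \emph{connected} restriction, rather than all finite limits, is the correct hypothesis, since products and the unit must be treated separately), and the coend commutes with finite connected limits by assumption. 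Composing these gives the distributivity of dependent composition over connected limits that is the defining feature of a dependence category.

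With $\hat{𝕍}$ established as a complete, normal, $⊗$-symmetric duoidal category in which the dependent tensor preserves finite connected limits, I would finally appeal to Shapiro and Spivak's characterization \cite{shapiro22:duoidal} to conclude that $\hat{𝕍}$ carries a dependence-category structure, with arbitrary poset-mediated dependence structures assembled by iterating the laxators and gluing along connected-limit cones. The main obstacle I anticipate is not the Day-theoretic bookkeeping but aligning the hypotheses we derive with the exact axioms of a dependence category: confirming that \emph{finite connected} limits is precisely the class Shapiro and Spivak demand (with products and the shared unit handled separately), and checking that the coherence of the poset-indexed dependence operations follows formally from duoidal coherence together with limit-preservation, rather than requiring additional hand-imposed conditions.
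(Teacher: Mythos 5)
Your proposal follows essentially the same route as the paper: transport the produoidal structure to $[𝕍\op,\mathbf{Set}]$ by Day convolution to get a (normal, symmetric, closed) duoidal presheaf category, then verify Shapiro--Spivak's limit-preservation condition by the two-step commutation — products with a fixed set commute with connected limits in $\mathbf{Set}$, and the coend commutes with finite connected limits by hypothesis. This matches the paper's proof sketch, including the correct identification of where the \emph{connected} restriction and the coend hypothesis each enter.
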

\begin{proof}
  Whenever $𝕍$ is produoidal, $[𝕍\op, \mathbf{Set}]$, its category of presheaves is duoidal, with the structure given by convolution
  (\Cref{th:produoidalInduceDuoidal}).

  At the same time,  $[𝕍\op, \mathbf{Set}]$ is a locally cartesian closed category will all limits because it is a presheaf category. Whenever finite connected limits are preserved by $⊗, \triangleleft$, we obtain a dependence category \cite[Theorem 4.8]{shapiro22:duoidal}.
  This means we only need the following isomorphism,

  \[\begin{aligned}
    & \int^{U,V} 𝕍(X; U ⊗ V) × \lim\nolimits_i P_i(U) × \lim\nolimits_j Q_j(V) \\
    ≅ & \quad \mbox{ (Commutation of limits) }\\
    & \int^{U,V}  \lim\nolimits_{i,j} 𝕍(X; U ⊗ V) × P_i(U) ×  Q_j(V)\\
    ≅ &  \quad \mbox{ (Coends commute with finite connected limits) } \\
    & \lim\nolimits_{i,j} \int^{U,V}  𝕍(X; U ⊗ V) × P_i(U) ×  Q_j(V)
  \end{aligned}
    \]

  Where we use our hypothesis on the last step. We conjecture this can be extended to an arbitrary $𝕍$ with minor constraints.
\end{proof}

\clearpage %
\section{Duoidal and Produoidal Categories}

  By the Eckmann-Hilton argument, each time we have two monoids $(\ast,\circ)$ such that one is a monoid homomorphism over the other, $(a ∘ b) \ast (c ∘ d) = (a \ast c) ∘ (b \ast d)$, we know that both monoids coincide into a single commutative monoid.
  
  However, an extra dimension helps us side-step the Eckmann-Hilton argument. If, instead of equalities or isomorphisms, we use directed morphisms, both monoids (which now may become 2-monoids) do not necessarily coincide, and the resulting structure is that of a duoidal category.
  
  \begin{definition}[Duoidal category]
    \defining{linkduoidal}{}
      A \emph{duoidal category} \cite{aguiar10:monoidal} is a category $ℂ$ with two monoidal structures, $(ℂ,⊗,I,α,λ,ρ)$ and $(ℂ,◁,N, β, κ, ν)$ such that the latter distribute over the former.  In other words, it is endowed with a duoidal tensor, $(◁) \colon ℂ × ℂ → ℂ$, together with natural distributors
      $$ψ_2﹕ (X ◁ Z) ⊗ (Y ◁ W) → (X ⊗ Y) ◁ (Z ⊗ W), \qquad
      ψ_0﹕ I → I ◁ I, \qquad
      φ_2﹕ N ⊗ N → N, \quad\mbox{and}\quad
      φ_0 ﹕ I → N,$$
      satisfying the following coherence equations (\Cref{cd:duoidal-coherence-assoc,cd:duoidal-coherence-unit,cd:duoidal-coherence-unit2,cd:duoidal-coherence-bimonoids,cd:duoidal-coherence-nandi}).
      \begin{figure}[ht]
        \centering
        \begin{tikzcd}
          ((A ◁ B) ⊗ (C ◁ D)) ⊗ (E ◁ F) 
          \rar{\alpha} \dar[swap]{ψ₂ ⊗ id} &
          (A ◁ B) ⊗ ((C ◁ D) ⊗ (E ◁ F))
          \dar{id ⊗ ψ₂} \\
          ((A ⊗ C) ◁ (B ⊗ D)) ⊗ (E ◁ F)
          \dar[swap]{ψ₂} &
          (A ◁ B) ⊗ ((C ⊗ E) ◁ (D ⊗ F))
          \dar{ψ₂} \\
          ((A ⊗ C) ⊗ E) ◁ ((B ⊗ D) ⊗ F)
          \rar{α ◁ α} &
          (A ⊗ (C ⊗ E)) ◁ (B ⊗ (D ⊗ F))
        \end{tikzcd}
        \begin{tikzcd}
          ((A ◁ B) ◁ C) ⊗ ((D ◁ E) ◁ F) 
          \rar{\beta ⊗ \beta} \dar[swap]{ψ₂} &
          (A ◁ (B ◁ C)) ⊗ (D ◁ (E ◁ F))
          \dar{ψ₂} \\
          ((A ◁ B) ⊗ (D ◁ E)) ◁ (C ⊗ F)
          \dar[swap]{ψ₂ ⊗ id} &
          (A ⊗ D) ◁ ((B ◁ C) ⊗ (E ◁ F))
          \dar{id ⊗ ψ₂} \\
          ((A ⊗ D) ◁ (B ⊗ E)) ◁ (C ⊗ F)
          \rar{\beta} &
          (A ⊗ D) ◁ ((B ⊗ E) ◁ (C ⊗ F))
        \end{tikzcd}
        \caption{Coherence diagrams for associativity of a duoidal category.}
        \vspace*{5mm}
        \label{cd:duoidal-coherence-assoc}
        \begin{tikzcd}
          I ⊗ (A ◁ B) 
          \rar{ψ₀ ⊗ id} \dar[swap]{\lambda} &
          (I ◁ I) ⊗ (A ◁ B) 
          \dar{ψ₂} \\
          A ◁ B
          &
          (I ⊗ A) ◁ (I ⊗ B)
          \lar{\lambda ◁ \lambda}
        \end{tikzcd}
        \begin{tikzcd}
          (A ◁ B) ⊗ I
          \rar{ψ₀ ⊗ id} \dar[swap]{\rho} &
          (A ◁ B) ⊗ (I ◁ I)
          \dar{ψ₂} \\
          A ◁ B
          &
          (A ⊗ I) ◁ (B ⊗ I)
          \lar{\rho ◁ \rho}
        \end{tikzcd}
        \caption{Coherence diagrams for $⊗$-unitality of a duoidal category.}
        \vspace*{5mm}
        \label{cd:duoidal-coherence-unit}
        \begin{tikzcd}
          N ◁ (A ⊗ B)
          \dar[swap]{κ} &
          (N ⊗ N) ◁ (A ⊗ B)
          \lar[swap]{\varphi_2 ◁ id} \dar{ψ₂} \\
          A ⊗ B
          &
          (N ◁ A) ⊗ (N ◁ B)
          \lar{κ ⊗ κ}
        \end{tikzcd}
        \begin{tikzcd}
          (A ⊗ B) ◁ N
          \dar[swap]{ν} &
          (A ⊗ B)  ◁ (N ⊗ N)
          \lar[swap]{id ◁ \varphi_2} \dar{ψ₂} \\
          A ⊗ B
          &
          (A ◁ N) ⊗ (B ◁ N)
          \lar{ν ⊗ ν}
        \end{tikzcd}
        \caption{Coherence diagrams for $◁$-unitality of a duoidal category.}\label{cd:duoidal-coherence-unit2}
        \vspace*{5mm}
        \begin{tikzcd}
          (N ⊗ N) ⊗ N
          \ar{rr}{\alpha} \dar[swap]{\varphi_2 ⊗ id}&&
          N ⊗ (N ⊗ N)
          \dar{id ⊗ \varphi_2}\\
          N ⊗ N
          \rar[swap]{\varphi_2} &
          N
          &
          N ⊗ N
          \lar{\varphi_2}
        \end{tikzcd}
        \begin{tikzcd}
          I ◁ I
          \dar[swap]{\psi_0 ⊗ id} &
          I 
          \lar[swap]{\psi_0} \rar{\psi_0} &
          I ◁ I
          \dar{id ⊗ \psi_0} \\
          (I ◁ I) ◁ I
          \ar{rr}[swap]{\beta} &&
          I ◁ (I ◁ I)
        \end{tikzcd}
        \caption{Associativity and coassociativity for $N$ and $I$ in a duoidal category.}\label{cd:duoidal-coherence-bimonoids}
        \vspace*{5mm}
        \begin{tikzcd}
          N ⊗ I
          \rar{ρ} \dar[swap]{id ⊗ \varphi_0} &
          N
          \\
          N ⊗ N 
          \urar[swap]{\varphi_2} &
        \end{tikzcd}
        \begin{tikzcd}
          I ⊗ N
          \rar{\lambda} \dar[swap]{\varphi_0 ⊗ id} &
          N
          \\
          N ⊗ N 
          \urar[swap]{\varphi_2} &
        \end{tikzcd}
        \begin{tikzcd}
          I ◁ N
          \rar{id ⊗ \varphi_0} \dar[swap]{ν} &
          I ◁ I
          \\
          I 
          \urar[swap]{\psi_0} &
        \end{tikzcd}
        \begin{tikzcd}
          N ◁ I
          \rar{id ⊗ \varphi_0} \dar[swap]{κ} &
          I ◁ I
          \\
          I 
          \urar[swap]{\psi_0} &
        \end{tikzcd}
        \caption{Unitality and counitality for $N$ and $I$ in a duoidal category.}\label{cd:duoidal-coherence-nandi}
      \end{figure}
  \end{definition}

  \begin{remark}
      In other words, the duoidal tensor and unit are lax monoidal functors for the first monoidal structure, which means that the laxators must satisfy the following equations.
      \begin{enumerate}
      \item $(ψ_2 ⊗ id) ⨾ ψ_2 ⨾ (α ◁ α) = α ⨾ (id ⊗ ψ_2) ⨾ ψ_2$, for the associator;
      \item $(ψ_0 ⊗ id) ⨾ ψ_2 ⨾ (λ ◁ λ) = λ$, for the left unitor; and
      \item $(id ⊗ ψ_0) ⨾ ψ_2 ⨾ (ρ ◁ ρ) = ρ$, for the right unitor;
      \item $α ⨾ (id ⊗ φ_2) ⨾ φ_2 = (φ_2 ⊗ id) ⨾ φ_2$, for the associator;
      \item $(φ_0 ⊗ id) ⨾ φ_2 = λ$, for the left unitor; and
      \item $(id ⊗ φ_0) ⨾ φ_2 = ρ$, for the right unitor.
      \end{enumerate}
  \end{remark}

  \begin{theorem}[{{Coherence, \cite{aguiar10:monoidal}}}]
    Any two parallel morphisms constructed out of the coherence isomorphisms and laxators of a \duoidalCategory{} coincide. 
  \end{theorem}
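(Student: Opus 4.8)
The plan is to reduce the statement to a coherence property of the \emph{free} duoidal category and then establish that property by a terminating, confluent rewriting argument. First I would fix the set $G$ of generating objects (for the purely formal case a single generator suffices, but allowing a set loses nothing) and form the free duoidal category $\mathcal{F}(G)$: its objects are formal binary expressions in the two tensors $\otimes,\triangleleft$ and the two units $I,N$ over $G$, and its morphisms are generated by the associators, unitors, the four laxators $\psi_2,\psi_0,\varphi_2,\varphi_0$, and the inverses of the invertible data, under composition and both tensors, modulo the axioms of \Cref{cd:duoidal-coherence-assoc,cd:duoidal-coherence-unit,cd:duoidal-coherence-unit2,cd:duoidal-coherence-bimonoids,cd:duoidal-coherence-nandi}. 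By the universal property, for any duoidal $ℂ$ and any assignment of $G$ to objects of $ℂ$ there is a unique structure-preserving functor $\mathcal{F}(G)\to ℂ$ carrying each formal coherence morphism to the corresponding one in $ℂ$; hence it suffices to prove that $\mathcal{F}(G)$ is \emph{posetal}, i.e. has at most one morphism between any two objects. That is the genuine content here, since the laxators are not invertible and so the claim does \emph{not} follow merely from applying Mac Lane's theorem to the two monoidal structures separately.

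Next I would set up a rewriting system on $\mathcal{F}(G)$. The invertible data — the two associators and four unitors — I would dispose of first by invoking Mac Lane coherence for each monoidal structure in isolation, strictifying $(\otimes,I)$ and $(\triangleleft,N)$ so that objects become lists-of-lists and the only remaining nontrivial morphisms are built from the four laxators. I would then orient the laxators as rewrites: $\psi_2$ pushing $\triangleleft$ to the outside of $\otimes$, $\varphi_2$ and $\varphi_0$ collapsing $N$ and $I$ toward a single $N$, and $\psi_0$ duplicating $I$. A normal form is then an expression in which every $\otimes$ lies inside every $\triangleleft$ and all units have been absorbed. A well-founded measure — counting the nested $\otimes$-over-$\triangleleft$ inversions together with the number of unit symbols, ordered lexicographically — decreases under each oriented laxator, which gives termination.

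The crux is local confluence: whenever two laxator rewrites apply to overlapping redexes I must show the two results can be joined. Enumerating the critical pairs, each overlap is resolved by exactly one of the commuting diagrams assumed in the definition: the associativity pentagons of \Cref{cd:duoidal-coherence-assoc} close overlaps of two $\psi_2$'s, the unit squares of \Cref{cd:duoidal-coherence-unit,cd:duoidal-coherence-unit2} close overlaps of $\psi_2$ with $\psi_0$ or $\varphi_2$, and \Cref{cd:duoidal-coherence-bimonoids,cd:duoidal-coherence-nandi} close the overlaps among $\varphi_2,\varphi_0,\psi_0$. Newman's lemma then upgrades local confluence plus termination to confluence, so normal forms are unique; since every object rewrites to its normal form and every coherence morphism factors through these rewrites, any two parallel coherence morphisms must agree.

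The main obstacle I expect is organizing the interaction between the invertible coherence and the directed laxators: termination is genuinely at risk if associators are allowed to fire in both directions simultaneously with laxators, so the key technical move is to strictify both monoidal structures \emph{before} running the oriented laxator rewriting, keeping the measure monotone. A secondary difficulty is the bookkeeping of the unit critical pairs, where $\varphi_0 \colon I \to N$ mediates between the two unit objects; closing every such overlap requires the unit/counit compatibilities of \Cref{cd:duoidal-coherence-nandi} used in combination rather than individually. Since the statement is the classical coherence theorem of Aguiar and Mahajan, I would ultimately either cite their combinatorial proof or, for self-containedness, present the strictification-plus-confluence argument above, whose only inputs are the listed axioms.
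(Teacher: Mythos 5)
The paper does not actually prove this statement — it is quoted with a citation to Aguiar and Mahajan — so there is no in-paper argument to measure yours against; the relevant comparison is whether your sketch would stand on its own. Your overall strategy (pass to the free duoidal category, strictify both tensors, orient the laxators as rewrites, conclude by convergence) is a legitimate route, and your remark that Mac Lane coherence applied to each monoidal structure separately does not suffice is exactly right. But there are two genuine gaps. First, termination fails as you have set it up: $\psi_0\colon I\to I\triangleleft I$ strictly \emph{increases} your measure (it adds a unit symbol and removes no $\otimes$-over-$\triangleleft$ inversion), and it can be iterated forever, $I\to I\triangleleft I\to (I\triangleleft I)\triangleleft I\to\cdots$, so the oriented system is not terminating and Newman's lemma never gets off the ground. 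Since $\psi_0$ is a generating morphism of the free duoidal category it cannot be dropped or reversed; the unit fragment needs either a genuinely different measure or a separate reduction (e.g.\ treating $(\triangleleft,N)$ as a lax monoidal functor for $\otimes$ and invoking coherence for lax monoidal functors, which is essentially the Aguiar--Mahajan route).

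Second, even granting convergence, unique normal forms for \emph{objects} do not yield uniqueness of \emph{parallel morphisms} when the generating cells are not invertible. For $f,g\colon X\to Y$ with $Y$ not in normal form, a Squier-type induction identifies the composites of $f$ and of $g$ with a normalization path $n\colon Y\to \hat{X}$, but to conclude $f=g$ you must cancel $n$ on the right, which is not automatic in a non-groupoidal free category. The correct tool is the coherence theorem for convergent presentations (Squier; Guiraud--Malbos), whose statement you should invoke in place of ``normal forms are unique, hence parallel coherence morphisms agree.'' (A smaller point: strictified duoidal expressions are arbitrary two-coloured trees with alternating node colours, not lists of lists.) As it stands, citing Aguiar and Mahajan, as the paper does, remains the safest course.
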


  \clearpage

  \subsection{Normalization of duoidal categories}

  Garner and López Franco \cite{garner16} introduce a procedure for normalizing a sufficiently well-behaved duoidal category, based in the construction of a new duoidal category of \emph{bimodules}.
  In this text, we introduce a normalization procedure for an arbitrary \produoidal{} category. For completeness, let us recall first the original procedure \cite{garner16}.

  Let $M$ be a bimonoid in the \duoidalCategory{} $(𝕍,⊗,I,◁,N)$, with maps $e ﹕ I → M$ and $m ﹕ M ⊗ M → M$; and with maps $u ﹕ M → N$ and $d ﹕ M → M ◁ M$. Consider now the category of $M^⊗$-bimodules. This category has a monoidal structure lifted from $(𝕍,◁,N)$:
  \begin{enumerate}
      \item the unit, $N$, has a bimodule structure with $$M⊗N⊗M \overset{u ⊗ \im ⊗ u}\longrightarrow  N⊗N⊗N \longrightarrow N;$$
      \item the sequencing of two $M^⊗$-bimodules is a $M^⊗$-bimodule with
      $$\begin{aligned}
          M&⊗(A◁B)⊗M \\
          &→ (M◁M)⊗(A◁B)⊗(M◁M) \\
          &→ (M⊗A⊗M)◁(M⊗B⊗M) → A◁B.
      \end{aligned}$$
  \end{enumerate}
  Moreover, whenever $𝕍$ admits reflexive coequalizers preserved by $(⊗)$, the category of $M^{⊗}$-bimodules is monoidal with the tensor of bimodules: the coequalizer
  $$A ⊗ M ⊗ B \rightrightarrows A ⊗ B \twoheadrightarrow A ⊗_M B.$$
  In this case $(\mathbf{Bimod}^{⊗}_M, ⊗_M, M, ◁, N)$ is a duoidal category.
  
  \begin{theorem}[Normalization of a duoidal category]
    \label{thm:normalizationDuoidal}
      Let $(𝕍,⊗,I,◁,N)$ be a \duoidalCategory{} with reflexive coequalizers preserved by $(⊗)$. The category of $N$-bimodules is then a normal duoidal category,
      $$\mathcal{N}(𝕍) = (\mathbf{Bimod}^{⊗}_N, ⊗_N, N, ◁, N).$$
      We call this category the \emph{normalization} \cite{garner16} of the duoidal category $𝕍$.
  \end{theorem}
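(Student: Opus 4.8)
The plan is to exhibit the $◁$-unit $N$ as a bimonoid in $(𝕍, ⊗, I, ◁, N)$ and then invoke, verbatim, the bimodule construction recalled immediately above the statement, specialized to the bimonoid $M = N$. First I would equip $N$ with its canonical bimonoid structure. As the object through which the $⊗$-unit laxators factor, $N$ is a $⊗$-monoid with multiplication $m = φ_2 \colon N ⊗ N → N$ and unit $e = φ_0 \colon I → N$; the associativity and unitality of this monoid are exactly the commuting diagrams of \Cref{cd:duoidal-coherence-bimonoids} and \Cref{cd:duoidal-coherence-nandi}. Dually, since $N$ is the unit of $(𝕍, ◁, N)$, it is a $◁$-comonoid, with counit $u = \mathrm{id}_N \colon N → N$ and comultiplication $d \colon N → N ◁ N$ given by the inverse of the $◁$-unitor $κ$ (equivalently $ν$). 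The remaining bimonoid compatibility axioms, which relate $m, e$ to $u, d$ through the distributors $ψ_2, ψ_0$, are equations between parallel morphisms assembled entirely from coherence isomorphisms and laxators, so they hold automatically by the duoidal coherence theorem rather than requiring separate verification.

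With $M = N$ identified as a bimonoid, I would then apply the recalled recipe directly. The lifted $◁$-structure on $N^{⊗}$-bimodules has $N$ as its unit, with bimodule action the iterated multiplication $N ⊗ N ⊗ N → N$ obtained from $φ_2$ (here the counits $u$ collapse to identities since $M = N$), and the sequential tensor of two bimodules is assembled through $ψ_2$ and the comultiplication $d$, exactly as written. The hypothesis that $𝕍$ admits reflexive coequalizers preserved by $(⊗)$ is precisely what licenses the parallel tensor $⊗_N$ of bimodules, defined as the coequalizer $A ⊗ N ⊗ B \rightrightarrows A ⊗ B \twoheadrightarrow A ⊗_N B$, and guarantees that $⊗_N$ is associative and unital with unit the regular bimodule $M = N$. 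Granting the stated fact that $(\mathbf{Bimod}^{⊗}_M, ⊗_M, M, ◁, N)$ is duoidal for any such bimonoid $M$, the instance $M = N$ gives at once that $(\mathbf{Bimod}^{⊗}_N, ⊗_N, N, ◁, N)$ is a duoidal category.

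Finally I would verify normality. Both monoidal units of the resulting duoidal category are now the same object: the $⊗_N$-unit is the bimonoid $M = N$ and the $◁$-unit is $N$. The comparison laxator $φ_0$ of the new structure is therefore the canonical bimodule morphism $N → N$, which is the identity and in particular an isomorphism; hence $\mathcal{N}(𝕍)$ is normal, completing the proof. The main obstacle I anticipate lies not in these formal steps but in the bookkeeping hidden inside the recalled construction: checking that the distributor $ψ_2$ descends along the reflexive coequalizers defining $⊗_N$, i.e. that it is compatible with the quotient in both its domain and its codomain, and that the lifted $◁$ genuinely sends pairs of bimodules to bimodules. These are exactly the points where preservation of reflexive coequalizers by $(⊗)$ is indispensable, and where one must track the interchange carefully enough to land in $\mathbf{Bimod}^{⊗}_N$ rather than merely in $𝕍$; once the construction is taken as given, however, the specialization to $M = N$ and the normality conclusion are immediate.
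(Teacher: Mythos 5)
Your proposal is correct and follows exactly the route the paper intends: the paper offers no proof of its own for this theorem, recalling it verbatim from Garner and López Franco \cite{garner16}, and your argument — equipping $N$ with its canonical bimonoid structure via $\varphi_2$, $\varphi_0$ and the $◁$-unitor, specializing the recalled bimodule construction to $M = N$, and observing that both units of the resulting duoidal category coincide so that the new comparison laxator is an isomorphism — is precisely the content delegated to that citation. Nothing further is needed.
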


\subsection{Produoidal Categories}

\begin{definition}[Produoidal category, from \Cref{def:produoidal}]
  \label{ax:def:produoidal}
  A \emph{produoidal category} is a category $𝕍$ endowed with two \promonoidal{} structures,
  $$\begin{gathered}
    𝕍(• ; • ⊗ •) \colon 𝕍 × 𝕍 ⧑ 𝕍, \mbox{ and } 𝕍(•; I) \colon 1 ⧑ 𝕍, \\
    𝕍(• ;• ◁ •) \colon 𝕍 × 𝕍 ⧑ 𝕍, \mbox{ and } 𝕍(•; N) \colon 1 ⧑ 𝕍,
  \end{gathered}$$
  such that one laxly distributes over the other. 
  This is to say that it is endowed with the following natural \emph{laxators},
  \begin{align*}
  ψ_2 \colon 𝕍(•;(X◁Y)⊗(Z◁W)) &→ 𝕍(•;(X⊗Z)◁(Y⊗W)),\\
  ψ_0 \colon 𝕍(•;I) &→ 𝕍(•;I◁I),\\
  φ_2 \colon 𝕍(•;N⊗N) &→ 𝕍(•;N),\\
  φ_0 \colon 𝕍(•;I) &→ 𝕍(•;N).
  \end{align*}
  Laxators, together with unitors and associators must satisfy the coherence conditions in the following diagrams (\Cref{pcd:duoidal-coherence-assoc,pcd:duoidal-coherence-unit,pcd:duoidal-coherence-unit2,pcd:duoidal-coherence-bimonoids,pcd:duoidal-coherence-nandi}).
  \begin{figure}[ht]
    \centering
    \begin{tikzcd}
      𝕍(\bullet, ((A ◁ B) ⊗ (C ◁ D)) ⊗ (E ◁ F))
      \rar{\alpha} \dar[swap]{ψ₂ ⊗ id} &
      𝕍(\bullet, (A ◁ B) ⊗ ((C ◁ D) ⊗ (E ◁ F)))
      \dar{id ⊗ ψ₂} \\
      𝕍(\bullet, ((A ⊗ C) ◁ (B ⊗ D)) ⊗ (E ◁ F))
      \dar[swap]{ψ₂} &
      𝕍(\bullet, (A ◁ B) ⊗ ((C ⊗ E) ◁ (D ⊗ F)))
      \dar{ψ₂} \\
      𝕍(\bullet, ((A ⊗ C) ⊗ E) ◁ ((B ⊗ D) ⊗ F))
      \rar{α ◁ α} &
      𝕍(\bullet, (A ⊗ (C ⊗ E)) ◁ (B ⊗ (D ⊗ F)))
    \end{tikzcd}
    \begin{tikzcd}
      𝕍(\bullet, ((A ◁ B) ◁ C) ⊗ ((D ◁ E) ◁ F) )
      \rar{\beta ⊗ \beta} \dar[swap]{ψ₂} &
      𝕍(\bullet, (A ◁ (B ◁ C)) ⊗ (D ◁ (E ◁ F)))
      \dar{ψ₂} \\
      𝕍(\bullet, ((A ◁ B) ⊗ (D ◁ E)) ◁ (C ⊗ F))
      \dar[swap]{ψ₂ ⊗ id} &
      𝕍(\bullet, (A ⊗ D) ◁ ((B ◁ C) ⊗ (E ◁ F)))
      \dar{id ⊗ ψ₂} \\
      𝕍(\bullet, ((A ⊗ D) ◁ (B ⊗ E)) ◁ (C ⊗ F))
      \rar{\beta} &
      𝕍(\bullet, (A ⊗ D) ◁ ((B ⊗ E) ◁ (C ⊗ F)))
    \end{tikzcd}
    \caption{Coherence diagrams for associativity of a produoidal category.}
    \vspace*{5mm}
    \label{pcd:duoidal-coherence-assoc}
    \begin{tikzcd}
      𝕍(\bullet, I ⊗ (A ◁ B) )
      \rar{ψ₀ ⊗ id} \dar[swap]{\lambda} &
      𝕍(\bullet, (I ◁ I) ⊗ (A ◁ B) )
      \dar{ψ₂} \\
      𝕍(\bullet, A ◁ B)
      &
      𝕍(\bullet, (I ⊗ A) ◁ (I ⊗ B))
      \lar{\lambda ◁ \lambda}
    \end{tikzcd}
    \begin{tikzcd}
      𝕍(\bullet, (A ◁ B) ⊗ I)
      \rar{ψ₀ ⊗ id} \dar[swap]{\rho} &
      𝕍(\bullet, (A ◁ B) ⊗ (I ◁ I))
      \dar{ψ₂} \\
      𝕍(\bullet, A ◁ B)
      &
      𝕍(\bullet, (A ⊗ I) ◁ (B ⊗ I))
      \lar{\rho ◁ \rho}
    \end{tikzcd}
    \caption{Coherence diagrams for $⊗$-unitality of a produoidal category.}
    \vspace*{5mm}
    \label{pcd:duoidal-coherence-unit}
    \begin{tikzcd}
      𝕍(\bullet, N ◁ (A ⊗ B))
      \dar[swap]{κ} &
      𝕍(\bullet, (N ⊗ N) ◁ (A ⊗ B))
      \lar[swap]{\varphi_2 ◁ id} \dar{ψ₂} \\
      𝕍(\bullet, A ⊗ B)
      &
      𝕍(\bullet, (N ◁ A) ⊗ (N ◁ B))
      \lar{κ ⊗ κ}
    \end{tikzcd}
    \begin{tikzcd}
      𝕍(\bullet, (A ⊗ B) ◁ N)
      \dar[swap]{ν} &
      𝕍(\bullet, (A ⊗ B)  ◁ (N ⊗ N))
      \lar[swap]{id ◁ \varphi_2} \dar{ψ₂} \\
      𝕍(\bullet, A ⊗ B)
      &
      𝕍(\bullet, (A ◁ N) ⊗ (B ◁ N))
      \lar{ν ⊗ ν}
    \end{tikzcd}
    \caption{Coherence diagrams for $◁$-unitality of a produoidal category.}\label{pcd:duoidal-coherence-unit2}
    \vspace*{5mm}
    \begin{tikzcd}
      𝕍(\bullet, (N ⊗ N) ⊗ N)
      \ar{rr}{\alpha} \dar[swap]{\varphi_2 ⊗ id}&&
      𝕍(\bullet, N ⊗ (N ⊗ N))
      \dar{id ⊗ \varphi_2}\\
      𝕍(\bullet, N ⊗ N)
      \rar[swap]{\varphi_2} &
      𝕍(\bullet, N)
      &
      𝕍(\bullet, N ⊗ N)
      \lar{\varphi_2}
    \end{tikzcd}
    \begin{tikzcd}
      𝕍(\bullet, I ◁ I)
      \dar[swap]{\psi_0 ⊗ id} &
      𝕍(\bullet, I )
      \lar[swap]{\psi_0} \rar{\psi_0} &
      𝕍(\bullet, I ◁ I)
      \dar{id ⊗ \psi_0} \\
      𝕍(\bullet, (I ◁ I) ◁ I)
      \ar{rr}[swap]{\beta} &&
      𝕍(\bullet, I ◁ (I ◁ I))
    \end{tikzcd}
    \caption{Associativity and coassociativity for $N$ and $I$ in a produoidal category.}\label{pcd:duoidal-coherence-bimonoids}
    \vspace*{5mm}
    \begin{tikzcd}
      𝕍(\bullet, N ⊗ I)
      \rar{ρ} \dar[swap]{id ⊗ \varphi_0} &
      𝕍(\bullet, N)
      \\
      𝕍(\bullet, N ⊗ N )
      \urar[swap]{\varphi_2} &
    \end{tikzcd}
    \begin{tikzcd}
      𝕍(\bullet, I ⊗ N)
      \rar{\lambda} \dar[swap]{\varphi_0 ⊗ id} &
      𝕍(\bullet, N)
      \\
      𝕍(\bullet, N ⊗ N )
      \urar[swap]{\varphi_2} &
    \end{tikzcd}
    \begin{tikzcd}
      𝕍(\bullet, I ◁ N)
      \rar{id ⊗ \varphi_0} \dar[swap]{ν} &
      𝕍(\bullet, I ◁ I)
      \\
      𝕍(\bullet, I )
      \urar[swap]{\psi_0} &
    \end{tikzcd}
    \begin{tikzcd}
      𝕍(\bullet, N ◁ I)
      \rar{id ⊗ \varphi_0} \dar[swap]{κ} &
      𝕍(\bullet, I ◁ I)
      \\
      𝕍(\bullet, I )
      \urar[swap]{\psi_0} &
    \end{tikzcd}
    \caption{Unitality and counitality for $N$ and $I$ in a produoidal category.}\label{pcd:duoidal-coherence-nandi}
  \end{figure}
\end{definition}

\clearpage

\subsection{Produoidals induce duoidals} \label{ax:sec:pduo-to-duo}

\begin{theorem}
  \label{th:produoidalInduceDuoidal}
  Let $𝕍$ be a \produoidalCategory{}, then its category of presheaves, $[𝕍\op, \mathbf{Set}]$, is duoidal with the structure given by convolution
  \cite{bookerstreet13}.
\end{theorem}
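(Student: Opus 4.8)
The plan is to build the duoidal structure on $[𝕍\op, \mathbf{Set}]$ from the two promonoidal structures of $𝕍$ by \emph{Day convolution}, and then transport the produoidal laxators and coherence axioms across this construction. First I would invoke Day's theorem \cite{day,day_thesis}: each promonoidal structure, $(𝕍(\bullet;\bullet\otimes\bullet), 𝕍(\bullet;I))$ and $(𝕍(\bullet;\bullet◁\bullet), 𝕍(\bullet;N))$, induces a (closed) monoidal structure on the presheaf category, with tensors and units given on objects by the coends
\begin{align*}
(P \ast_\otimes Q)(X) &= \int^{A,B} 𝕍(X; A \otimes B) × P(A) × Q(B), & J_I &= 𝕍(\bullet; I), \\
(P \ast_◁ Q)(X) &= \int^{A,B} 𝕍(X; A ◁ B) × P(A) × Q(B), & J_N &= 𝕍(\bullet; N).
\end{align*}
This yields two monoidal structures $(\ast_\otimes, J_I)$ and $(\ast_◁, J_N)$ on $[𝕍\op, \mathbf{Set}]$, whose associators and unitors descend directly from those of the two promonoidal structures of $𝕍$.

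The second step is to construct the four duoidal distributors $\psi_2, \psi_0, \varphi_2, \varphi_0$ by lifting the corresponding produoidal laxators through the convolution coends. For $\psi_2$ I would unfold $(P \ast_◁ Q) \ast_\otimes (R \ast_◁ S)$ by coend calculus and the nesting convention for virtual structures, so that the iterated coend over $𝕍(X; U\otimes V) \diamond 𝕍(U; A◁B) \diamond 𝕍(V; C◁D)$ is exactly $𝕍(X; (A◁B)\otimes(C◁D))$; then I apply the produoidal laxator $\psi_2 \colon 𝕍(\bullet; (A◁B)\otimes(C◁D)) \to 𝕍(\bullet; (A\otimes C)◁(B\otimes D))$ pointwise under the coend, and recognize the result as $(P \ast_\otimes R) \ast_◁ (Q \ast_\otimes S)$. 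The remaining distributors follow identically: $\psi_0$ from the produoidal $\psi_0$ with a Yoneda reduction of $J_I$, $\varphi_2$ from the produoidal $\varphi_2$, and $\varphi_0 \colon J_I \to J_N$ directly from the produoidal $\varphi_0 \colon 𝕍(\bullet;I) \to 𝕍(\bullet;N)$. Each is natural because it is assembled from a natural transformation of profunctors together with the functoriality of the coend.

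Finally I would verify the duoidal coherence axioms (\Cref{cd:duoidal-coherence-assoc,cd:duoidal-coherence-unit,cd:duoidal-coherence-unit2,cd:duoidal-coherence-bimonoids,cd:duoidal-coherence-nandi}). Here the argument is essentially bookkeeping: the produoidal coherence diagrams (\Cref{pcd:duoidal-coherence-assoc,pcd:duoidal-coherence-unit,pcd:duoidal-coherence-unit2,pcd:duoidal-coherence-bimonoids,pcd:duoidal-coherence-nandi}) are stated precisely as commuting diagrams of natural transformations between the profunctors $𝕍(\bullet; \cdots)$, and these are exactly the object-level instances of the duoidal diagrams for the convolution structure. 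Since Day convolution is defined by a coend and is functorial, applying it to each produoidal coherence square yields the corresponding duoidal coherence square, so coherence transports for free. The main obstacle I anticipate is not the coherence itself but the careful justification that the convolution coends and the pointwise application of the laxators interact correctly: one must check that the produoidal laxators, being dinatural in their internal variables, descend to well-defined maps between the convolution coends, and that the associativity and unitality isomorphisms furnished by Day's theorem are compatible with this descent. This amounts to a disciplined sequence of Yoneda reductions and Fubini rearrangements of coends, of the same flavour as those already carried out in \Cref{lemma:produoidalFirstlaxator} and \Cref{ax:th:normalizationProduoidal}.
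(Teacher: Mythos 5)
Your proposal is correct and follows essentially the same route as the paper: the paper's proof simply defines the two Day convolution tensors $(P ⊗ Q)(A) = \int^{U,V} \hom(A, U ⊗ V) × P(U) × Q(V)$ and $(P ◁ Q)(A) = \int^{U,V} \hom(A, U ◁ V) × P(U) × Q(V)$ and asserts that the laxators and coherence are inherited from $𝕍$ "in a straightforward way." What you have written is a faithful expansion of exactly that sketch — the Fubini/Yoneda unfolding of the iterated coends and the pointwise application of the produoidal laxators is precisely the omitted "straightforward" content.
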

\begin{proof}
  Let $P$ and $Q$ be presheaves in $𝕍$. We define the following tensor products on presheaves by convolution of the tensor products in $𝕍$.
  $$(P ⊗ Q)(A) = \int^{U,V} \hom(A, U ⊗ V) × P(U) × Q(V),$$
  $$(P ◁ Q)(A) = \int^{U,V} \hom(A, U ◁ V) × P(U) × Q(V).$$
  These tensor products can be shown in a straightforward way to form a duoidal category, inheriting the laxators from those of $𝕍$.
\end{proof}
\section{Tambara modules}

\begin{definition}[Tambara module, \cite{pastro07}]
  \label{def:tambaramodule}
  \defining{linkTambara}{}
  Let $(𝔸, ⊗, I)$ be a strict monoidal category. A \emph{Tambara module} is a \profunctor{} $T ﹕ 𝔸\op × 𝔸 → \mathbf{Set}$ endowed with natural transformations
  $$t_l^M \colon T(X;Y) →  T(M ⊗ X, M ⊗ Y),$$
  $$t_r^M \colon T(X;Y) →  T(X ⊗ M, Y ⊗ M),$$
  that are natural in both $X$ and $Y$, but also dinatural on $M$. These must moreover satisfy the following axioms:
  \begin{itemize}
      \item $t_l^I = id$ and $t^I_r = id$, unitality;
      \item $t_l^M ⨾ t_l^N = t_l^{N ⊗ M}$ and $t_r^M ⨾ t_r^N = t_l^{M ⊗ N}$, multiplicativity;
      \item $t_l^M ⨾ t_r^N = t_r^N ⨾ t_l^M$, and compatibility.
  \end{itemize}
\end{definition}

Tambara modules are the algebras of a monad. We start by noting that the $\hom$ profunctor is a monoid with respect to Day convolution. This makes the following functor a monad on endoprofunctors, the so-called Pastro-Street monad \cite{pastro07},
  $$Φ(P) = hom \circledast P \circledast hom;$$
  where $Φ﹕[ℂ\op × ℂ,\Set] → [ℂ\op × ℂ,\Set]$.
  
  \begin{theorem}
      The algebras of the Pastro-Street monad, the {$Φ$-algebras}, are precisely \tambaraModules{} \cite{pastro07}. As a consequence,
      the \emph{free Tambara module} over a profunctor $H ﹕ ℂ\op × ℂ → \Set$ is $Φ(H)$.
  \end{theorem}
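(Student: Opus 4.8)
The plan is to recognize the Pastro--Street monad $Φ(P) = \hom \circledast P \circledast \hom$ as the canonical \emph{two-sided action monad} attached to the monoid $\hom$ in the monoidal category $([ℂ\op × ℂ, \mathbf{Set}], \circledast, J)$ of endoprofunctors under Day convolution. For any monoid $T$ in a monoidal category $(\mathcal{M}, \circledast, J)$, the endofunctor $P \mapsto T \circledast P \circledast T$ carries a canonical monad structure --- unit induced by the monoid unit $J \to T$ applied on both sides, multiplication induced by the monoid multiplication $T \circledast T \to T$ --- whose Eilenberg--Moore algebras are exactly the $T$-\emph{bimodules}. The whole statement then reduces to two facts: \emph{(i)} $\hom$-bimodules under $\circledast$ coincide with Tambara modules; and \emph{(ii)} free algebras for any monad $Φ$ on an object $H$ are computed as $Φ(H)$, which delivers the ``as a consequence'' clause immediately once \emph{(i)} holds.

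First I would record the monoid structure on $\hom$ under Day convolution. Unfolding the Day tensor of endoprofunctors,
\[
(P \circledast Q)(X; Y) = \int^{A,B,C,D} ℂ(X; A ⊗ C) × P(A;B) × Q(C;D) × ℂ(B ⊗ D; Y),
\]
the multiplication $\hom \circledast \hom \to \hom$ is induced by the functoriality of $⊗$ together with composition in $ℂ$ --- from $ℂ(A;B) × ℂ(C;D)$ one forms $ℂ(A ⊗ C; B ⊗ D)$ and composes with the two bracketing homs --- while the unit $J \to \hom$ comes from the monoidal unit $I$ of $ℂ$. Associativity and the unit laws follow from coherence of $(ℂ,⊗,I)$, so $\hom$ is a monoid; functoriality of $\circledast$ then promotes $P \mapsto \hom \circledast P \circledast \hom$ to a monad, which is $Φ$.

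The heart of the argument is the identification \emph{(i)}. Computing a one-sided action, a single Yoneda reduction (collapsing $ℂ(A;B)$ against $ℂ(B ⊗ D; Y)$) gives
\[
(\hom \circledast P)(X; Y) \cong \int^{M,C,D} ℂ(X; M ⊗ C) × P(C;D) × ℂ(M ⊗ D; Y).
\]
By the universal property of the coend together with the Yoneda lemma in $X$ and $Y$, a natural transformation $\hom \circledast P \to P$ is precisely a family $P(C;D) \to P(M ⊗ C; M ⊗ D)$ natural in $C,D$ and dinatural in $M$ --- that is, a left strength $t_l^M$; the symmetric computation on the right produces $t_r^M$. I would then check that the monoid-action axioms translate term by term into the Tambara axioms: the unit law gives $t_l^I = \mathrm{id}$ and $t_r^I = \mathrm{id}$; associativity of each one-sided action gives multiplicativity, with the left case reading $t_l^M ⨾ t_l^N = t_l^{N ⊗ M}$; and the commuting of the left and right $\hom$-actions in a bimodule gives the compatibility $t_l^M ⨾ t_r^N = t_r^N ⨾ t_l^M$. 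Conversely, any Tambara module assembles into a $\hom$-bimodule, and the two passages are mutually inverse. With \emph{(i)} in hand, \emph{(ii)} is the general statement that the free $Φ$-algebra on $H$ is $Φ(H)$ with structure map the monad multiplication, so the free Tambara module on $H$ is $Φ(H)$.

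The main obstacle is bookkeeping rather than conceptual: keeping the variances straight through the Day-convolution coends and verifying that each monoid/bimodule axiom lands \emph{exactly} on its Tambara counterpart. The delicate points are the order reversal in the multiplicativity law (which must be read off from how $M$ and $N$ nest when two left actions are composed) and the dinaturality of the strengths in $M$, which is \emph{forced} by the coend quotient rather than assumed. Since every isomorphism used is a Yoneda reduction, all coherence is automatic, so no separate coherence check for the monad or the bimodule laws is required.
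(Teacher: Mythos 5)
Your argument is correct and is essentially the standard Pastro--Street proof: exhibit $\hom$ as a monoid under Day convolution, identify algebras of the two-sided action monad $T \circledast - \circledast T$ with $T$-bimodules, and unwind $\hom$-bimodule structure via Yoneda reduction into the left/right strengths and the Tambara axioms. The paper itself offers no proof of this theorem (it defers entirely to the citation of Pastro and Street), but its surrounding discussion frames Tambara modules precisely as $\hom^{\circledast}$-bimodules, so your route is exactly the intended one; the only care needed is the bookkeeping you already flag (the order reversal $t_l^M ⨾ t_l^N = t_l^{N ⊗ M}$ and dinaturality in $M$ being imposed by the coend).
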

  
  \begin{example}
      Consider the \profunctor{} $よ(A;B) ﹕ 𝔸\op × 𝔸 → \Set$ that produes a hole of types $A$ and $B$. That is, let $よ(A;B) = \hom(•,A) × \hom(B,•)$. The free Tambara module over it is the \monoidalContext{} with a hole of type $A$ and $B$,
      $$Φ(よ^A_B) = ∫^{M,N} \hom(•,M⊗A⊗N) × \hom(M⊗B⊗N,•).$$
  \end{example}

  \subsection{Normalization of profunctors}

  Let $(ℂ,⊗,I)$ be a monoidal category. The category of endoprofunctors $ℂ\op × ℂ → \mathbf{Set}$ is then duoidal with composition $(◁)$ and Day convolution $(⊛)$.
  $$(ℂ\op × ℂ,\mathbf{Set}, ⊛, I, ◁, \hom).$$
  Moreover, we can also construct its normalization: the category of endoprofunctors, $[ℂ\op × ℂ,\mathbf{Set}]$, has reflexive coequalisers; thus, we are in the conditions of \Cref{thm:normalizationDuoidal}.
  The normal duoidal category of $\hom^⊛$-bimodules has been traditionally called the category of \emph{Tambara modules}.
  $$\mathcal{N}(ℂ\op × ℂ,\mathbf{Set}, ⊛, I, ◁, \hom) = (\mathbf{Tamb}, ⊛_{\hom}, \hom, ◁, \hom).$$
  
  \begin{theorem}
    The category of Tambara modules is a normal duoidal category and, in fact, it is the normalization of the duoidal category of endoprofunctors.
  \end{theorem}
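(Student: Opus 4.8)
The plan is to obtain this statement as a direct instance of the normalization procedure for duoidal categories (\Cref{thm:normalizationDuoidal}), specialised to the duoidal category of endoprofunctors $(ℂ\op × ℂ, \mathbf{Set}, ⊛, I, ◁, \hom)$, and then to identify the resulting category of $\hom$-bimodules with the category of \tambaraModules{}. First I would check the hypotheses of \Cref{thm:normalizationDuoidal}: the category $[ℂ\op × ℂ, \mathbf{Set}]$ is a presheaf category, hence cocomplete, so in particular it admits reflexive coequalizers; and Day convolution $⊛$ is computed by a coend, so it is cocontinuous in each variable and therefore preserves these coequalizers. I would also note that the $◁$-unit $\hom$ is canonically a bimonoid (a $⊛$-monoid, as recorded in the text, and a $◁$-comonoid via the unitors). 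Thus the normalization applies with $N = \hom$ and $⊗ = ⊛$, producing the normal duoidal category $(\mathbf{Bimod}^{⊛}_{\hom}, ⊛_{\hom}, \hom, ◁, \hom)$.

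The central step is to identify $\hom^{⊛}$-bimodules with \tambaraModules{}. Here the key observation is that the Pastro--Street monad $Φ(P) = \hom ⊛ P ⊛ \hom$ is precisely the free-$\hom$-bimodule monad on the monoidal category $([ℂ\op × ℂ, \mathbf{Set}], ⊛, I)$: its unit is induced by the monoid unit $I → \hom$ applied on both sides, and its multiplication by the monoid multiplication $\hom ⊛ \hom → \hom$. By the general fact that, for a monoid $M$ in a monoidal category, algebras for the monad $M ⊛ (-) ⊛ M$ are exactly $M$-$M$-bimodules, the $Φ$-algebras coincide with the $\hom^{⊛}$-bimodules. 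Since $Φ$-algebras are \tambaraModules{} by the cited Pastro--Street theorem, this yields an isomorphism of categories $\mathbf{Bimod}^{⊛}_{\hom} \cong \mathbf{Tamb}$.

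It then remains to transport the duoidal structure along this isomorphism and confirm that it coincides with the structure $(\mathbf{Tamb}, ⊛_{\hom}, \hom, ◁, \hom)$. The $◁$ tensor lifted by the normalization is profunctor composition; I would check that the composite of two \tambaraModules{} inherits the Tambara strengths and that $\hom$ is its unit. The $⊛_{\hom}$ tensor is the reflexive coequalizer presentation of the bimodule tensor, which recovers the usual tensor of \tambaraModules{}; its unit is again $\hom$, so both units agree. Since the normalization theorem already guarantees normality of its output, this simultaneously establishes that $\mathbf{Tamb}$ is a normal duoidal category.

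The hardest part will be this last matching step: verifying that the strengths $t_l^M$ and $t_r^M$ extracted from the left and right $\hom$-actions of a bimodule are exactly those of \Cref{def:tambaramodule}, satisfying unitality, multiplicativity, and compatibility, and that the coequalizer computing $⊛_{\hom}$ agrees with the concrete tensor of \tambaraModules{}. The abstract normalization construction produces these structures only through the coherence of the Garner--López Franco machinery, so the care lies in unwinding the bimodule actions into the explicit Tambara strengths and checking the coherence conditions match.
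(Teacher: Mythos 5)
Your proposal is correct and follows essentially the same route as the paper: the paper likewise obtains the result by noting that $[ℂ\op × ℂ,\mathbf{Set}]$ has reflexive coequalizers (placing it in the hypotheses of the duoidal normalization theorem) and then identifying the resulting $\hom^{⊛}$-bimodules with \tambaraModules{} via the Pastro--Street monad $Φ(P) = \hom ⊛ P ⊛ \hom$. Your write-up is in fact more explicit than the paper's, which leaves the verification of the hypotheses and the matching of the bimodule actions with the Tambara strengths implicit.
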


\section{Monoidal Categories}

\subsection{Monoidal categories.}
Endowed with the notion of isomorphism, we can now relax our definition of theory of processes by substituting strict equalities by isomorphism.

\begin{definition}
    \defining{linkmonoidalcategory}{}
    A \{symmetric\} monoidal category \cite{maclane78} $(ℂ,\otimes,I)$ is a tuple
    \[(ℂ_\mathrm{obj}, ℂ_\mathrm{mor}, (\comp), \im, (\tensor)_\mathrm{obj}, (\tensor)_\mathrm{mor}, I, \alpha, λ, \rho, 
    \{\sigma\}),\]
    specifying a set of objects, or resource types, $ℂ_\mathrm{obj}$; a set of morphisms, or processes, $ℂ_\mathrm{mor}$; a composition operation; a family of identity morphisms; a tensor operation on objects and morphisms; a unit object and families of associator, left unitor, right unitor \{and swapping morphisms\}.

    The families of associator, left unitor and right unitor morphisms have the following types.
    \[\begin{aligned}
        \alpha_{A,B,C} \colon & (A \tensor B) \tensor C \to A \tensor (B \tensor C), \\
        λ_A \colon & I \tensor A \to A, \\
        \rho_A \colon & A \tensor I \to A.
    \end{aligned}\]

    They must satisfy the following non-strict versions of the axioms.
    \begin{align}\setcounter{equation}{0}
        A \tensor (B \tensor C) &\cong (A \tensor B) \tensor C, \\
        A \tensor I &\cong A \cong I \tensor A, \\
        (f \comp g) \comp h &= f \comp (g \comp h), \\
        \im_B \comp f &= f = f \comp \im_B, \\
        (f \tensor (g \tensor h)) \comp \alpha &= \alpha \comp ((f \tensor g) \tensor h), \\
        (f \tensor \im_I) \comp \rho &= \rho \comp f, \\
        (f \tensor g) \comp (h \tensor k) &= (f \comp h) \tensor (g \comp k), \\
        \sigma_{A,B \tensor C} \comp \alpha &= \alpha \comp (\sigma_{A,B} \comp \im_C) \comp (\im_B \tensor \sigma_{A,C}), \\
        \sigma_{A,B \tensor C} \comp \alpha &= \alpha \comp (\sigma_{A,B} \comp \im_C) \comp (\im_B \tensor \sigma_{A,C}), \\
        \sigma_{A,A'} \comp (g \tensor f) &= (f \tensor g) \comp \sigma_{B,B'}, \\
        \sigma_{A,B} \comp \sigma_{B,A} &= \im_{A \tensor B}.
    \end{align}
    \{Additionally\}, they must satisfy the following axioms, whenever they are formally well-typed.
    \begin{align}
        \alpha \comp \alpha &= (\alpha \tensor \im) \comp \alpha \comp (\im \tensor \alpha), \\
        \rho &= \alpha \comp (\im \tensor λ), \\
        \alpha \comp \sigma \comp \alpha &= (\sigma \tensor \im) \comp \alpha \comp (\im \tensor \sigma).
    \end{align}
\end{definition}

String diagrams \cite{joyal91} are a sound and complete syntax for monoidal categories.

\begin{construction}
    Let $ℂ$ be a monoidal category.
    Its strictification, $\Strict(ℂ)$, is a monoidal category where
    \begin{itemize}
        \item objects are cliques: for each list of objects of $ℂ$, say, $[A_0, \dots, A_n] \in \List(ℂ)$, we form the clique containing all possible parenthesizations and coherence isomorphisms between them;
        \item morphisms are clique morphisms: a morphism between any two components of the clique, which determines a morphism between all of them.
    \end{itemize}
    The tensor product is concatenation, which makes it a strict monoidal category.
\end{construction}

\begin{remark}
    There is a strong monoidal functor $ℂ \to \Strict(ℂ)$, this makes an object $A$ into an object $[A]$; this is fully-faithful but, moreover, it is essentially surjective, giving a monoidal equivalence.
\end{remark}
\begin{theorem}
    Every monoidal category is monoidally equivalent to its strictification.
\end{theorem}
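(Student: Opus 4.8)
The plan is to verify the three ingredients already announced in the preceding Remark: that the assignment $A \mapsto [A]$ extends to a \emph{strong monoidal functor} $E \colon ℂ \to \Strict(ℂ)$, that this functor is fully faithful and essentially surjective, and that a strong monoidal functor which is an equivalence of underlying categories is automatically a monoidal equivalence. First I would pin down the clique construction and confirm it is well-defined. The object of $\Strict(ℂ)$ attached to a list $[A_0, \dots, A_n]$ is the clique of all parenthesized tensors of $A_0, \dots, A_n$ together with the coherence isomorphisms between them; the fact that this is a genuine clique — a contractible groupoid in which any two components are joined by a \emph{unique} coherence isomorphism — is exactly Mac Lane's coherence theorem for $ℂ$. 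This is the one place where coherence is invoked, and it is the conceptual heart of the argument.

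From this I would extract two immediate consequences to be reused throughout. First, the singleton list $[A]$ has a one-object clique, so a clique morphism $[A] \to [B]$ is nothing but a morphism $A \to B$, giving a bijection $\Strict(ℂ)([A],[B]) \cong ℂ(A,B)$. Second, since clique morphisms are families compatible with the (unique) coherence isos, such a family is determined by its value on any single component; this is what makes maps out of multi-object cliques tractable. Concatenation of lists then furnishes a strictly associative, strictly unital tensor on $\Strict(ℂ)$, with unit the empty list $[\,]$.

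Next I would equip $E$ with its monoidal structure. On morphisms, $E(f \colon A \to B)$ is the clique morphism $[A] \to [B]$ with single component $f$, and functoriality is immediate. For the strength, $E(A) \otimes E(B) = [A,B]$ and $E(A \otimes B) = [A \otimes B]$ are cliques of the same underlying data differing by one parenthesization, so there is a canonical isomorphism $\varepsilon_{A,B}$ between them in $\Strict(ℂ)$, and likewise $E(I) = [I] \cong [\,]$ via a unit coherence iso. Naturality of $\varepsilon$ together with the associativity and unit coherence hexagons for a strong monoidal functor are all equalities between morphisms built solely from coherence isomorphisms, hence hold automatically by coherence. I expect this to be the main obstacle in practice — not because it is deep, but because it requires the discipline of phrasing every structure map as a coherence morphism so that Mac Lane's theorem discharges all the diagrams at once rather than checking each by hand.

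Finally I would assemble the equivalence. Full faithfulness is precisely the hom-set bijection $\Strict(ℂ)([A],[B]) \cong ℂ(A,B)$ recorded above. For essential surjectivity, any clique $[A_0, \dots, A_n]$ contains the fully tensored object $A_0 \otimes \cdots \otimes A_n$ as a component, hence is isomorphic in $\Strict(ℂ)$ to the singleton clique $[A_0 \otimes \cdots \otimes A_n] = E(A_0 \otimes \cdots \otimes A_n)$. Thus $E$ is a strong monoidal functor that is an equivalence of underlying categories, and by the standard fact that such a functor admits a strong monoidal pseudo-inverse together with monoidal unit and counit isomorphisms, $E$ is a monoidal equivalence. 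If an explicit inverse is wanted, I would take $S \colon \Strict(ℂ) \to ℂ$ sending a clique to its fully tensored representative and acting on morphisms through the unique determining component; the triangle isomorphisms witnessing $S \dashv E$ as an adjoint equivalence are once more coherence isomorphisms, and their monoidality follows by the same coherence argument.
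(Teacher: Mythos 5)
Your proof is correct and follows the same route the paper itself sketches: the paper offers no written proof beyond the preceding Remark, which asserts exactly your three ingredients (the embedding $A \mapsto [A]$ is strong monoidal, fully faithful, and essentially surjective). Your filling-in of the details — Mac Lane coherence making each clique contractible, the hom-set bijection on singleton cliques, essential surjectivity via the fully tensored representative, and the standard upgrade from equivalence-plus-strong-monoidal to monoidal equivalence — is the intended argument.
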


\end{document}